\newcommand{\RemoveComments}{}
\numberwithin{equation}{section}
\newtheorem{theorem}{Theorem}[section]
\newtheorem{lemma}[theorem]{Lemma}
\newtheorem{proposition}[theorem]{Proposition}
\newtheorem{claim}[theorem]{Claim}
\newtheorem{definition}[theorem]{Definition}
\tikzstyle{startstop} = [rectangle, rounded corners, minimum width=3cm, minimum height=1cm,text centered, draw=black, fill=green!10]
\tikzstyle{arrow} = [thick,->,>=stealth]
\Crefname{lem}{Lemma}{Lemmas}
\Crefname{section}{Section}{Sections}
\Crefname{lemma}{Lemma}{Lemmas}
\Crefname{thm}{Theorem}{Theorems}
\Crefname{corollary}{Corollary}{Corollaries}
\Crefname{theorem}{Theorem}{Theorems}
\Crefname{defn}{Definition}{Definitions}
\Crefname{definition}{Definition}{Definitions}
\Crefname{fact}{Fact}{Facts}
\Crefname{figure}{Fig.}{Figures}
\Crefname{clm}{Claim}{Claims}
\Crefname{claim}{Claim}{Claims}
\Crefname{prop}{Proposition}{Propositions}
\Crefname{proposition}{Proposition}{Propositions}
\Crefname{algocf}{Algorithm}{Algorithms}
\newcommand{\cC}{{\mathcal{C}}}
\newcommand{\cE}{{\mathcal{E}}}
\newcommand{\cF}{{\mathcal{F}}}
\newcommand{\cI}{{\mathcal{I}}}
\newcommand{\cJ}{{\mathcal{J}}}
\newcommand{\cL}{{\mathcal{L}}}
\newcommand{\cM}{{\mathcal{M}}}
\newcommand{\cN}{{\mathcal{N}}}
\newcommand{\cO}{{\mathcal{O}}}
\newcommand{\cP}{{\mathcal{P}}}
\newcommand{\cR}{{\mathcal{R}}}
\newcommand{\cW}{{\mathcal{W}}}
\DeclareSymbolFont{EuclidLetter}{U}{eur}{m}{n}
\DeclareMathSymbol{\UpA}{\mathord}{EuclidLetter}{65}
\DeclareMathSymbol{\Upa}{\mathord}{EuclidLetter}{97}
\DeclareMathSymbol{\UpB}{\mathord}{EuclidLetter}{66}
\DeclareMathSymbol{\Upb}{\mathord}{EuclidLetter}{98}
\DeclareMathSymbol{\UpC}{\mathord}{EuclidLetter}{67}
\DeclareMathSymbol{\upc}{\mathord}{EuclidLetter}{99}
\DeclareMathSymbol{\UpD}{\mathord}{EuclidLetter}{68}
\DeclareMathSymbol{\upd}{\mathord}{EuclidLetter}{100}
\DeclareMathSymbol{\UpE}{\mathord}{EuclidLetter}{69}
\DeclareMathSymbol{\upe}{\mathord}{EuclidLetter}{101}
\DeclareMathSymbol{\UpF}{\mathord}{EuclidLetter}{70}
\DeclareMathSymbol{\upf}{\mathord}{EuclidLetter}{102}
\DeclareMathSymbol{\UpG}{\mathord}{EuclidLetter}{71}
\DeclareMathSymbol{\upg}{\mathord}{EuclidLetter}{103}
\DeclareMathSymbol{\UpH}{\mathord}{EuclidLetter}{72}
\DeclareMathSymbol{\uph}{\mathord}{EuclidLetter}{104}
\DeclareMathSymbol{\UpI}{\mathord}{EuclidLetter}{73}
\DeclareMathSymbol{\UpJ}{\mathord}{EuclidLetter}{74}
\DeclareMathSymbol{\UpK}{\mathord}{EuclidLetter}{75}
\DeclareMathSymbol{\upk}{\mathord}{EuclidLetter}{107}
\DeclareMathSymbol{\UpL}{\mathord}{EuclidLetter}{76}
\DeclareMathSymbol{\UpM}{\mathord}{EuclidLetter}{77}
\DeclareMathSymbol{\UpN}{\mathord}{EuclidLetter}{78}
\DeclareMathSymbol{\UpP}{\mathord}{EuclidLetter}{80}
\DeclareMathSymbol{\UpQ}{\mathord}{EuclidLetter}{81}
\DeclareMathSymbol{\UpR}{\mathord}{EuclidLetter}{82}
\DeclareMathSymbol{\UpS}{\mathord}{EuclidLetter}{83}
\DeclareMathSymbol{\ups}{\mathord}{EuclidLetter}{115}
\DeclareMathSymbol{\UpT}{\mathord}{EuclidLetter}{84}
\DeclareMathSymbol{\upt}{\mathord}{EuclidLetter}{116}
\DeclareMathSymbol{\UpW}{\mathord}{EuclidLetter}{87}
\DeclareMathSymbol{\upw}{\mathord}{EuclidLetter}{119}
\DeclareMathSymbol{\UpX}{\mathord}{EuclidLetter}{88}
\DeclareMathSymbol{\UpY}{\mathord}{EuclidLetter}{89}
\DeclareMathSymbol{\UpZ}{\mathord}{EuclidLetter}{90}
\DeclareMathSymbol{\upz}{\mathord}{EuclidLetter}{122}
\DeclareSymbolFont{EuclidLetter}{U}{eur}{m}{n}
\definecolor{ao(english)}{rgb}{0.0, 0.5, 0.0}
\definecolor{airforceblue}{rgb}{0.36, 0.54, 0.66}
\definecolor{amber}{rgb}{1.0, 0.75, 0.0}
\definecolor{amber(sae/ece)}{rgb}{1.0, 0.49, 0.0}
\definecolor{amethyst}{rgb}{0.6, 0.4, 0.8}
\definecolor{applegreen}{rgb}{0.55, 0.71, 0.0}
\definecolor{azure(colorwheel)}{rgb}{0.0, 0.5, 1.0}
\definecolor{arylideyellow}{rgb}{0.91, 0.84, 0.42}
\definecolor{bostonuniversityred}{rgb}{0.8, 0.0, 0.0}
\definecolor{bittersweet}{rgb}{1.0, 0.44, 0.37}
\definecolor{bleudefrance}{rgb}{0.19, 0.55, 0.91}
\definecolor{blue(pigment)}{rgb}{0.2, 0.2, 0.6}
\definecolor{blue-violet}{rgb}{0.54, 0.17, 0.89}
\definecolor{britishracinggreen}{rgb}{0.0, 0.26, 0.15}
\definecolor{brilliantrose}{rgb}{1.0, 0.33, 0.64}
\definecolor{byzantine}{rgb}{0.74, 0.2, 0.64}
\definecolor{byzantium}{rgb}{0.44, 0.16, 0.39}
\definecolor{charcoal}{rgb}{0.21, 0.27, 0.31}
\definecolor{cadmiumgreen}{rgb}{0.0, 0.42, 0.24}
\definecolor{cadmiumorange}{rgb}{0.93, 0.53, 0.18}
\definecolor{coquelicot}{rgb}{1.0, 0.22, 0.0}
\definecolor{capri}{rgb}{0.0, 0.75, 1.0}
\definecolor{deeppink}{rgb}{1.0, 0.08, 0.58}
\definecolor{dollarbill}{rgb}{0.52, 0.73, 0.4}
\definecolor{darkmagenta}{rgb}{0.55, 0.0, 0.55}
\definecolor{darkmidnightblue}{rgb}{0.0, 0.2, 0.4}
\definecolor{darkpastelpurple}{rgb}{0.59, 0.44, 0.84}
\definecolor{darkspringgreen}{rgb}{0.09, 0.45, 0.27}
\definecolor{darktangerine}{rgb}{1.0, 0.66, 0.07}
\definecolor{darkgoldenrod}{rgb}{0.72, 0.53, 0.04}
\definecolor{eggplant}{rgb}{0.38, 0.25, 0.32}
\definecolor{electricviolet}{rgb}{0.56, 0.0, 1.0}
\definecolor{ferrarired}{rgb}{1.0, 0.11, 0.0}
\definecolor{forestgreen(traditional)}{rgb}{0.0, 0.27, 0.13}
\definecolor{green(pigment)}{rgb}{0.0, 0.65, 0.31}
\definecolor{glaucous}{rgb}{0.38, 0.51, 0.71} 
\definecolor{goldenbrown}{rgb}{0.6, 0.4, 0.08}
\definecolor{gold(metallic)}{rgb}{0.83, 0.69, 0.22}
\definecolor{gold(web)(golden)}{rgb}{1.0, 0.84, 0.0}
\definecolor{goldenpoppy}{rgb}{0.99, 0.76, 0.0}
\definecolor{goldenyellow}{rgb}{1.0, 0.87, 0.0}
\definecolor{goldenrod}{rgb}{0.85, 0.65, 0.13}
\definecolor{harvestgold}{rgb}{0.85, 0.57, 0.0}
\definecolor{heartgold}{rgb}{0.5, 0.5, 0.0}
\definecolor{heliotrope}{rgb}{0.87, 0.45, 1.0}
\definecolor{iris}{rgb}{0.35, 0.31, 0.81}
\definecolor{oldgold}{rgb}{0.81, 0.71, 0.23}
\definecolor{palegold}{rgb}{0.9, 0.75, 0.54}
\definecolor{rosegold}{rgb}{0.72, 0.43, 0.47}
\definecolor{red(pigment)}{rgb}{0.93, 0.11, 0.14}
\definecolor{sapphire}{rgb}{0.03, 0.15, 0.4}
\definecolor{satinsheengold}{rgb}{0.8, 0.63, 0.21}
\definecolor{uclagold}{rgb}{1.0, 0.7, 0.0}
\definecolor{vegasgold}{rgb}{0.77, 0.7, 0.35}
\newcommand{\applegreen}[1]{ {\color{applegreen}#1}}
\newcommand{\blue}[1]{{\color{blue}#1}}
\newcommand{\bluepigment}[1]{{\color{blue(pigment)}#1}}
\newcommand{\byzantine}[1]{{\color{byzantine}#1}}
\newcommand{\capri}[1]{{\color{capri}#1}} 
\newcommand{\cadmiumgreen}[1]{{\color{cadmiumgreen}#1}} 
\newcommand{\cadmiumorange}[1]{{\color{cadmiumorange}#1}}
\newcommand{\darkmagenta}[1]{{\color{darkmagenta}#1}}
\newcommand{\deeppink}[1]{{\color{deeppink}#1}}
\newcommand{\goldenbrown}[1]{{\color{goldenbrown}#1}}
\newcommand{\magenta}[1]{{\color{magenta}#1}}
\newcommand{\iris}[1]{{\color{iris}#1}}
\newcommand{\racinggreen}[1]{{\color{britishracinggreen}#1}} 
\newcommand{\red}[1]{{\color{red}#1}}
\newcommand{\rosegold}[1]{{\color{rosegold}#1}}
\newcommand{\Ind}{\mathds{1}}
\newcommand{\dist}{{\rm dist}}
\newcommand{\maxDeg}{{\Delta}}
\newcommand{\ExtV}{\iris{\cE}}
\newcommand{\cconnective}{\goldenbrown{\upd}}
\newcommand{\PartFun}{\mathcal{Z}}
\newcommand{\dcritical}{{\Delta_c}}
\newcommand{\lcritical}{{\lambda_c}}
\newcommand{\UnIsing}{{\mathbb{U}_{\rm Ising}}}
\newcommand{\potF}{{\Psi}}
\newcommand{\dpotF}{{\psi}}
\newcommand{\xdpotF}{\chi}
\newcommand{\pfs}{\darkmagenta{s}}
\newcommand{\gratio}{{R}}
\newcommand{\trecur}{F}
\newcommand{\logtrecur}{H}
\newcommand{\dlogtrecur}{h}
\newcommand{\ratiorange}{J}
\newcommand{\Glauber}{{\{X_{\kt}\}_{\kt\geq 0}}}
\newcommand{\Tsaw}{T_{\rm SAW}}
\newcommand{\cp}{{\tt A}}
\newcommand{\scp}{\widehat{\tt A}}
\newcommand{\infmatrix}{\cI}
\newcommand{\spradius}{{\rho}}
\newcommand{\eigenval}{\upxi}
\newcommand{\enorm}[1]{ \goldenbrown{\|} #1 \goldenbrown{\|}}
\newcommand{\norm}[2]{\goldenbrown{\|} #1 \goldenbrown{\|}_{#2}}
\newcommand{\nnorm}[3]{\goldenbrown{\|} #1\goldenbrown{\|}^{#2}_{#3}}
\newcommand{\tv}{tv}
\newcommand{\abs}[1]{\blue{|}#1\blue{|}}
\newcommand{\spectrum}{{\Upphi}}
\newcommand{\PTV}[1]{\check{#1}}
\newcommand{\MTR}[1]{\overline{#1}}
\newcommand{\Adjacency}{{\UpA}}
\newcommand{\powadj}{{{\UpB}}}
\newcommand{\aspradius}{{\varrho}}
\newcommand{\maxeigenv}{\upphi_1} 
\newcommand{\eigenv}{\upphi}
\newcommand{\NBMatrix}{\cadmiumgreen{\UpH}_{G,\kk}}
\newcommand{\NBMatrixE}{\byzantine{\UpH}}
\newcommand{\VToEdge}{\UpD}
\newcommand{\EdgeToV}{\UpC}
\newcommand{\Invol}{\UpR}
\newcommand{\spreadpoint}{\newpage}
\newcommand{\LastReview}[1]{\rosegold{\hspace{.2cm} [Rev: \textrm{#1}] \hspace{.2cm}}}
\newcommand{\LastReviewG}[1]{{\color{magenta}\hspace{.2cm} [Rev: \textrm{#1} - \bluepigment{G} ] \hspace{.2cm}}}
\newcommand{\charis}[1]{--\bluepigment{\rosegold{Charis:} #1}}
\renewcommand{\LastReview}[1]{}
\renewcommand{\LastReviewG}[1]{}
\renewcommand{\charis}[1]{}
\renewcommand{\spreadpoint}{}
\renewcommand{\applegreen}[1]{#1}
\renewcommand{\blue}[1]{#1}
\renewcommand{\bluepigment}[1]{#1}
\renewcommand{\byzantine}[1]{#1}
\renewcommand{\capri}[1]{#1}
\renewcommand{\cadmiumgreen}[1]{#1}
\renewcommand{\cadmiumorange}[1]{#1}
\renewcommand{\darkmagenta}[1]{#1}
\renewcommand{\deeppink}[1]{#1}
\renewcommand{\goldenbrown}[1]{#1}
\renewcommand{\magenta}[1]{#1}
\renewcommand{\iris}[1]{#1}
\renewcommand{\racinggreen}[1]{#1}
\renewcommand{\red}[1]{#1}
\renewcommand{\rosegold}[1]{#1}
\newcommand{\DBounded}{\mathbold{q}}
\newcommand{\SpGMatrix}{\Upxi}
\newcommand{\saw}{{\rm SAW}}
\newcommand{\ka}{\electricviolet{a}}
\newcommand{\kb}{\deeppink{b}}
\newcommand{\kC}{\deeppink{C}}
\newcommand{\kc}{\deeppink{c}}
\newcommand{\kd}{\goldenbrown{d}}
\newcommand{\ke}{\blue{e}}
\newcommand{\kf}{\deeppink{f}}
\newcommand{\kh}{\cadmiumorange{h}}
\newcommand{\ki}{\capri{i}}
\newcommand{\kj}{\coquelicot{j}}
\newcommand{\kK}{\bluepigment{K}}
\newcommand{\kk}{\bluepigment{k}}
\newcommand{\kL}{\cadmiumorange{L}}
\newcommand{\kell}{{\color{green(pigment)}\ell}}
\newcommand{\kM}{\cadmiumorange{M}}
\newcommand{\kP}{\goldenbrown{P}}
\newcommand{\kQ}{\amethyst{Q}}
\newcommand{\kq}{\amethyst{q}}
\newcommand{\kR}{\bostonuniversityred{R}}
\newcommand{\kr}{\bostonuniversityred{r}}
\newcommand{\ks}{\cadmiumgreen{s}}
\newcommand{\kt}{\cadmiumorange{t}}
\newcommand{\ku}{\red{u}}
\newcommand{\kv}{\electricviolet{v}}
\newcommand{\kW}{\blue{W}}
\newcommand{\kw}{\blue{w}}
\newcommand{\kx}{\heliotrope{x}}
\newcommand{\ky}{\cadmiumorange{y}}
\newcommand{\kz}{\cadmiumgreen{z}}
\renewcommand{\ka}{a}
\renewcommand{\kb}{b}
\renewcommand{\kC}{C}
\renewcommand{\kc}{x}
\renewcommand{\kd}{d}
\renewcommand{\ke}{e}
\renewcommand{\kf}{f}
\renewcommand{\kh}{h}
\renewcommand{\ki}{i}
\renewcommand{\kj}{j}
\renewcommand{\kK}{K}
\renewcommand{\kk}{k}
\renewcommand{\kL}{L}
\renewcommand{\kell}{\ell}
\renewcommand{\kM}{M}
\renewcommand{\kP}{P}
\renewcommand{\kQ}{Q}
\renewcommand{\kq}{q}
\renewcommand{\kR}{R}
\renewcommand{\kr}{r}
\renewcommand{\ks}{s}
\renewcommand{\kt}{t}
\renewcommand{\ku}{u}
\renewcommand{\kv}{v}
\renewcommand{\kW}{W}
\renewcommand{\kw}{w}
\renewcommand{\kx}{x}
\renewcommand{\ky}{\cadmiumgreen{y}}
\renewcommand{\kz}{z}
\date{\today}
\newcommand{\infmatrixB}{\mathcal{J}}
\newcommand{\ExtdInfMatrix}{\rosegold{\cL}}
\newcommand{\ExtdInfMatrixF}{\applegreen{\cL}^{\Lambda,\tau}_{G,\kk}}
\newcommand{\CovMatrix}{\Upsigma}
\newcommand{\SemExtdInfMatrix}{\cF}
\newcommand{\spset}{\ExtV}
\newcommand{\SymWeightM}{\cN}
\newcommand{\SymWeightMA}{\cN_A}
\newcommand{\SymWeightMB}{\cN_B}
\newcommand{\infweight}{\upalpha}
\newcommand{\hsingular}{\upsigma}
\newcommand{\SingBound}{\uprho}
\begin{document}

\title{On sampling two spin models using \\the local connective constant}
\author{
Charilaos Efthymiou
}
 
\thanks{ 
{\em Funding:} Research supported by EPSRC New Investigator Award, grant EP/V050842/1, and 
Centre of Discrete Mathematics and Applications (DIMAP), University of Warwick, UK. \\ 
\hspace*{.725cm}{\em Acknowledgement}: The author would like to thank Daniel \v Stefankovi\v c, Eric Vigoda and Kostas Zampetakis for the comments and the fruitful discussions. 
}
\address{Charilaos Efthymiou, {\tt charilaos.efthymiou@warwick.ac.uk}, University of Warwick, Coventry, CV4 7AL, UK.}

\maketitle

\thispagestyle{empty}

\begin{abstract}
This work establishes {\em novel} optimum mixing bounds for the Glauber dynamics  on the Hard-core and 
Ising models.  These bounds are expressed in terms of the {\em local connective constant}  of  the underlying 
graph $G$. This is a notion of effective degree for $G$.

Our results have some interesting consequences for bounded degree graphs:
\begin{enumerate}[(a)]
\item They include the max-degree bounds as a special case
\item They improve on the running time of the FPTAS considered in [Sinclair, Srivastava, \v Stefankoni\v c and Yin: PTRF 2017] for general graphs
\item They allow us to obtain  mixing bounds in terms of the spectral radius of the adjacency matrix and
 improve on  [Hayes: FOCS 2006].
\end{enumerate}
We obtain our results  using  tools from the theory of high-dimensional expanders and, in particular, the 
Spectral Independence method  [Anari, Liu, Oveis-Gharan: FOCS 2020]. 
We explore a new direction by utilising  the notion of the  {\em $k$-non-backtracking matrix}  $\NBMatrix$ in our analysis with 
the Spectral Independence.  
The results with $\NBMatrix$ are interesting in their own right. 
\end{abstract}

\spreadpoint

\thispagestyle{empty}

\spreadpoint

%

\newpage
\setcounter{page}{1}

\newcommand{\SingleEdgeM}{\UpC}

\newcommand{\chiofh}{\uppsi}

\newcommand{\gext}[2]{\racinggreen{ {#1}_{#2}} }
\newcommand{\sbsplit}{\byzantine{\partial \UpS}}
\newcommand{\ssplit}{\cadmiumgreen{\UpS}}

\newcommand{\subcont}{{\cR}^{(\kell)}}

\newcommand{\inflPotB}{\mathcal{H}}

\section{Introduction \LastReviewG{2025-03-28} }\label{sec:Introduction}
This work focuses on counting and sampling problems that arise from the study of Gibbs distributions.
We  typically consider  Gibbs distributions $\mu$ which are defined with respect to an underlying 
graph $G=(V,E)$. Unless otherwise specified,  graph $G$ is finite, undirected and connected.

We parametrise $\mu$ by using the numbers $\beta \geq 0$ and $\gamma, \lambda >0$, while each configuration $\sigma\in \{\pm 1\}^V$ 
gets probability mass 
\begin{align}\label{def:GibbDistr}
\mu(\sigma) &=\frac{1}{\PartFun} \times \lambda^{\# \textrm{assignments ``1" in $\sigma$ } }
\times \beta^{\# \textrm{edges with both ends ``1" in $\sigma$ } }
\times \gamma^{\# \textrm{edges with both ends ``-1" in $\sigma$} } \enspace,
\end{align}
where $\PartFun=\PartFun(G,\beta,\gamma, \lambda)$ is the normalising quantity called the {\em partition function}. 

A natural problem in the study of Gibbs distributions  is  computing  the partition function $\PartFun$.  
In many cases,  this is a  computationally hard problem, e.g., see \cite{Valiant79,JerSincIsing93}. Typically,  
the focus  is on efficiently  obtaining good approximations of $\PartFun$.

To keep our discussion concise in this introduction, we restrict  our  attention to the technically most interesting 
case,  the  {\em Hard-core model}.  The reader can find further results in the subsequent sections. 
The Hard-core  model is a probability distribution over  the  {\em independent sets} $\sigma$ of an underlying graph $G$ such that each 
 $\sigma$  gets probability mass $\mu(\sigma)$,  which is proportional to $\lambda^{\abs{\sigma}}$, where $\abs{\sigma}$ 
is the cardinality of the independent  set $\sigma$. 
Using the  formulation in \eqref{def:GibbDistr}, this distribution corresponds to  having $\beta=0$, $\gamma=1$ and $\lambda>0$. 
Typically,  the parameter $\lambda>0$ is called  {\em fugacity}.

The seminal works of Weitz in \cite{Weitz} and Sly in \cite{Sly10} (together with the improvements on \cite{Sly10} obtained in
\cite{SS14,GSV16Hardness}) establish a beautiful connection between {\em phase transitions} and the performance 
of counting algorithms for the Hard-core  model.    This connection implies that for graphs $G$ of maximum degree $\maxDeg$ 
and the  Hard-core model  with fugacity $\lambda$ on such graphs,  
there  exists a critical value $\lcritical(\maxDeg-1)$ such that following is true: for any fugacity $\lambda<\lcritical(\maxDeg-1)$, 
there exists an FPTAS for estimating the partition function, whereas, for 
$\lambda>\lcritical(\maxDeg-1)$ it is NP-hard to estimate the  partition function even within an exponentially large factor.  
Notably,  the critical value depends on the {\em maximum degree} $\maxDeg$ of  $G$.

Recall that, for integer $\kk>1$,   the critical value $\lcritical(\kk)=\frac{\kk^{\kk}}{(\kk-1)^{(\kk+1)}}$  signifies the  uniqueness/non-uniqueness
phase transition for the Hard-core model on  the $\kk$-ary  tree, established by Kelly in   \cite{Kelly85}.
Also, recall that FPTAS stands for  Fully Polynomial   Time Approximation Scheme.
 An FPTAS  computes an  estimation     which is within a factor  $(1\pm \upepsilon)$ from $\PartFun$
in time  which is polynomial   $n$ and $\log \frac{1}{\upepsilon}$.

An important improvement on the aforementioned connection between phase transitions and hardness comes with the seminal 
works of Sinclair, Srivastava and  Yin in \cite{ConnectiveConstFirst} and the subsequent sharper analysis  provided by Sinclair, Srivastava, \v Stefankovi\v c 
and Yin in \cite{ConnectiveConst}. These works  propose an FPTAS for estimating the partition function  of the Hard-core 
model for any fugacity  $\lambda<\lcritical(\cconnective)$, where $\cconnective$ is the {\em connective constant} of the 
underlying graph $G$. The connective constant is a natural measure of  ``effective degree" for   $G$.  Recall that 
it  is defined by $\cconnective = \lim_{\kk\to \infty}\sup_{\kv\in V} \pi(\kv,\kk)^{1/\kk}$, where 
$\pi(\kv,\kk)$ is the number of self-avoiding walks of length $\kk$ emanating from vertex $\kv$.  The connective constant is  
a well-known quantity, especially  studied in mathematics and physics (with the breakthrough result in \cite{HoneycombCC}).
\blue{For many  families of graphs, the connective constant $\cconnective$ is significantly smaller than 
the maximum degree $\maxDeg$. For such families,  \cite{ConnectiveConstFirst,ConnectiveConst} 
provide a tighter analysis than  \cite{Weitz}. }

Interestingly, the results in \cite{ConnectiveConstFirst,ConnectiveConst}  refine the aforementioned connection between phase transitions 
and hardness,  implying  that the dependence of the critical value should be  on $\cconnective$ instead of  $\maxDeg$. 
In that respect, the hardness results in \cite{SS14,GSV16Hardness} are sufficient to imply that it is NP-hard to estimate the partition function for 
$\lambda>\lcritical(\cconnective)$ (see also discussion later in this introduction).

The  bound $\cconnective\leq \maxDeg-1$ implies that the results with respect to the maximum degree  
correspond to considering the worst-case graph instances with regard to the relation between $\maxDeg$ and 
$\cconnective$, i.e., assume $\cconnective=\maxDeg-1$. On the other hand, the use of the connective constant $\cconnective$ 
provides a {\em finer} classification of the hard instances of the problem, i.e.,  since  graphs with the same maximum degree 
can have different connective constants.

Impressive as they may be,  the aforementioned algorithmic results are primarily of theoretical importance rather than 
practical. Their running times, even though polynomial in $n$, have a rather {\em heavy} dependence on the maximum 
degree $\maxDeg$. Typically, the exponent of the polynomial bound is an increasing function of $\maxDeg$, e.g., we  
have $O(n^{C \sqrt{\maxDeg}})$  running time.  
This motivates the  use of the Markov Chain Monte Carlo (MCMC) approach to the problem which, typically, 
yields {\em significantly} faster  algorithms.
For the distributions we consider here, the aim is to use the MCMC approach and get a Fully Polynomial  Randomized Time 
Approximation Scheme  (FPRAS)   for the partition function  $\PartFun$.   An FPRAS  computes with probability $1-\updelta$ an 
estimation   which is within a factor  $(1\pm \upepsilon)$ from $\PartFun$, in time which is polynomial 
in  $n$, $\log \frac{1}{\upepsilon}$  and $\log\frac{1}{\updelta}$.

To this end,  we use {\em Glauber dynamics}. This  is a popular Markov chain utilised for sampling from high-dimensional 
Gibbs distributions  such as the Hard-core model. In this setting,  the measure of the efficiency  is the {\em mixing time}, 
i.e., the rate at which the Markov chain converges to the equilibrium.

In recent years, there have been significant developments with regard to analysing the mixing time of
Glauber dynamics using ideas from the theory of {\em high dimensional expanders}  e.g., see Alev and Lau in \cite{FirstSpInd}. 
Advances in the area, such as the work of Anari, Liu and Oveis-Gharan in \cite{OptMCMCIS}, which 
introduces the well-known {\em Spectral Independence} method, as well as the subsequent 
work of Chen, Liu and Vigoda in \cite{VigodaSpectralIndB}, establish {\em optimum} mixing 
for Glauber  dynamics on the Hard-core model for any $\lambda<\lcritical(\maxDeg-1)$.  Recall that  Glauber 
dynamics exhibits  optimum mixing when the mixing time attains its (asymptotically) minimum value, which 
is $O(n\log n)$; see  Hayes and Sinclair \cite{OptMixingTime}.   
Optimum mixing for Glauber dynamics  implies an $O^*(n^2)$ time FPRAS for the  partition function (where $O^*(\cdot)$ 
omits the poly-logarithmic terms).

It is also worth mentioning the further improvements in \cite{WeimingCoUbboundedDelta,FastMCMCLocalisation},  
which obtain mixing bounds for graphs of unbounded  maximum degree.

In light of all the aforementioned  results which establish mixing bounds  expressed in terms of the maximum 
degree $\maxDeg$, the  works in \cite{ConnectiveConstFirst,ConnectiveConst}  motivate us to ask  whether 
it is possible to obtain  similar bounds expressed in terms of the connective 
constant $\cconnective$, i.e., establish optimum mixing for Glauber dynamics  for $\lambda<\lcritical(\cconnective)$. 
Our aim is to obtain such results using  tools from the theory of high-dimensional expanders and, in particular, the 
Spectral Independence method. 
To the extent that we are aware, such an approach has not been considered before for {\em general graphs}.

Obtaining   mixing bounds of Glauber dynamics in terms of  the connective constant is a well-studied problem in the area
and  results  are already known in the  literature for  {\em special}  families of graphs, e.g. see the seminal works in 
\cite{HCZ2PaperA,HCZ2PaperB} for the Hard-core model on the integer lattice $\mathbb{Z}^2$.
Using standard arguments like those in   \cite{CesiAmenable,MixingTimeSpaceRSA}, one can show that 
\cite{ConnectiveConstFirst,ConnectiveConst} imply $O(n\log n)$ mixing for Glauber dynamics  on 
the Hard-core model with  $\lambda<\lcritical(\cconnective)$ on  {\em amenable graphs}
\footnote{Roughly speaking,  an infinite graph $G=(V,E)$ is amenable if $\inf\frac{\abs{ \partial W}}{\abs{W}}=0$, 
where the infimum ranges over  all finite $W\subset  V$, while $\partial W$ is the boundary of $W$. If $G$ is finite, 
we typically have $\abs{ \partial W}\ll \abs{W}$, for sufficiently large blocks $W\subset V$.}.

Another well-studied case where the connective constant arises naturally in the mixing bounds is that
of the Hard-core model  on  the sparse random  graph $G(n,d/n)$. The state-of-the-art  bound for 
this distribution  is from  the work of Efthymiou and Feng in  \cite{EfthFeng23} which obtains $O(n^{1+\frac{C}{\log\log n}})$ 
mixing of Glauber dynamics for any $\lambda<\lcritical(d)$, improving on the works in \cite{SIGnp,Efth19}.  

As opposed to the previous works  which focus on {\em special} families of graphs, our endeavour  here is to obtain 
optimum mixing bounds    {\em without any assumptions} about the underlying graph, i.e., we don't necessarily assume that
the  graph is amenable or a typical instance of $G(n,d/n)$. 

For our algorithmic applications, it makes sense to use  a  finite version of the connective constant, i.e., 
rather than the limiting object. For integer $\kk\geq 1$,  we introduce the notion of the 
{\em radius-$\kk$  connective constant}  $\cconnective_{\kk}$ defined by  $\cconnective_{\kk}=\sup_{\kv\in V}\pi(\kv,\kk)^{1/\kk}$.
The connective constant $\cconnective$ corresponds to   $\cconnective_{\kk}$ for $\kk\to\infty$.

One of the main results in this paper is the following theorem.

\begin{theorem}\label{thrm:HC4CCK}
For any $\varepsilon \in (0,1)$, $\maxDeg>1$, $\kk\geq 1$ and $\cconnective_{\kk}>1$ consider graph $G=(V,E)$ 
of maximum degree $\maxDeg$ such that the radius-$k$ connective constant is $\cconnective_{\kk}$.
Also, let $\mu$ be the Hard-core model on $G$ with fugacity $\lambda\leq (1-\varepsilon)\lcritical(\cconnective_{\kk})$.

There is a constant $C=C(\maxDeg, \kk,  \cconnective_{\kk}, \varepsilon)$ such that the mixing time of Glauber dynamics on $\mu$ 
is at most $C n\log n$.
\end{theorem}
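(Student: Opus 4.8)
The plan is to establish \emph{spectral independence} for the Hard-core distribution $\mu$ with a constant $\eta=\eta(\maxDeg,\kk,\cconnective_{\kk},\varepsilon)$ that does not depend on $n$, and then invoke the local-to-global machinery. Since both $\maxDeg$ and $\lambda$ are bounded, every marginal of $\mu$ is bounded away from $0$ and $1$, so $\eta$-spectral independence together with this marginal boundedness yields $O(n\log n)$ mixing of Glauber dynamics through the entropy/variance factorization results built on \cite{OptMCMCIS,VigodaSpectralIndB}. Hence the whole problem reduces to a uniform-in-$n$ bound on the spectral norm of the pairwise influence matrix $\infmatrix$ of $\mu$ and of each of its pinned versions.

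For the influence bound I would pass to the self-avoiding walk tree $\Tsaw(G,\kv)$ rooted at $\kv$, using Weitz's identity \cite{Weitz}: the influence of a pinning at $\ku$ on the marginal at $\kv$ is controlled by the product of the derivatives of the tree recursion taken along the unique root-to-$\ku$ path in $\Tsaw(G,\kv)$, summed over all occurrences of $\ku$ in the tree. Working with the log-ratio form of the recursion together with a suitable potential function $\potF$, in the style of \cite{ConnectiveConstFirst}, one shows that the $\potF$-contracted derivative across one step through a vertex $\kw$ is at most a factor $\delta_{\kw}$ that depends only on $\deg(\kw)$ and is strictly below $1$ precisely when $\lambda$ is subcritical \emph{relative to} $\deg(\kw)$. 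Consequently $\sum_{\ku}|\infmatrix_{\kv}(\ku)|$ is dominated by $\sum_{P}\prod_{\kw\in P}\delta_{\kw}$, where $P$ ranges over the self-avoiding walks emanating from $\kv$.

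The new ingredient is to organise this sum over self-avoiding walks using the $\kk$-non-backtracking matrix $\NBMatrix$. Cutting each walk into consecutive length-$\kk$ segments, the sum factors through powers of a weighted variant $\widetilde{\NBMatrix}$ of $\NBMatrix$ whose entries carry the accumulated contraction $\prod_{\kw}\delta_{\kw}$ over one length-$\kk$ segment. Since the number of length-$\kk$ self-avoiding walks leaving any vertex is at most $\cconnective_{\kk}^{\kk}$, and since $\lambda\le(1-\varepsilon)\lcritical(\cconnective_{\kk})$ forces the per-segment contraction to beat this branching by a factor of order $(1-\varepsilon)^{\Theta(\kk)}$ (combining monotonicity of $\lcritical$ with the bounds on $\delta_{\kw}$, averaged over the segment), the spectral radius $\spradius(\widetilde{\NBMatrix})$ is at most $1-\varepsilon'$ for some $\varepsilon'=\varepsilon'(\kk,\cconnective_{\kk},\varepsilon)>0$. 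Summing the resulting geometric series over segments, and absorbing the at-most-$\maxDeg^{\kk}$ cost of the final segment of length $<\kk$, gives $\|\infmatrix\|\le\eta$ uniformly in $n$. A pinning only deletes vertices from $\Tsaw(G,\kv)$ and therefore only shrinks all the sums above, so the same bound holds for every pinned distribution, and spectral independence follows.

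The technical heart, and the step I expect to be the main obstacle, is the estimate $\spradius(\widetilde{\NBMatrix})\le 1-\varepsilon'$. The contraction factors $\delta_{\kw}$ are \emph{not} uniform --- they worsen with degree --- so one cannot pull a scalar out of the sum over self-avoiding walks; one must instead control a genuinely matrix-weighted walk count in a setting where only the average branching over length-$\kk$ windows, namely the local connective constant $\cconnective_{\kk}$, is bounded, rather than the degree of each individual vertex. This calls for (i) a potential $\potF$ whose contraction bound degrades gracefully in $\deg(\kw)$, so that a length-$\kk$ segment admitting many self-avoiding extensions necessarily has a small product $\prod_{\kw}\delta_{\kw}$, and (ii) a Perron--Frobenius / submultiplicativity argument bounding $\spradius(\widetilde{\NBMatrix})$ by the worst-case segment weight $\cconnective_{\kk}^{\kk}\cdot\sup_{|P|=\kk}\prod_{\kw\in P}\delta_{\kw}$. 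Reconciling the non-uniform $\delta_{\kw}$'s with the non-backtracking constraint is exactly where $\NBMatrix$ earns its place in the analysis, and checking that the relevant product remains below $1$ throughout the range $\lambda\le(1-\varepsilon)\lcritical(\cconnective_{\kk})$ is the crux of the argument.
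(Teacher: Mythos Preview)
Your high-level skeleton --- spectral independence via Weitz's tree, a potential for contraction, organisation through $\NBMatrix$, then \cite{VigodaSpectralIndB} for $O(n\log n)$ mixing --- matches the paper. The gap is the per-vertex contraction factor $\delta_{\kw}$. You stipulate $\delta_{\kw}<1$ ``precisely when $\lambda$ is subcritical relative to $\deg(\kw)$''; but the new content of the theorem is exactly the regime $\lcritical(\maxDeg-1)<\lambda<\lcritical(\cconnective_{\kk})$, where $\lambda$ is \emph{supercritical} at every vertex of degree $\maxDeg$ and hence $\delta_{\kw}\geq 1$ there. Your proposed rescue --- that a segment with many self-avoiding extensions should have a small product $\prod_{\kw}\delta_{\kw}$ --- does not follow from any property of $\cconnective_{\kk}$: a length-$\kk$ walk through high-degree vertices has many extensions \emph{and} a large product, and there is no coupling between the two factors in your bound $\cconnective_{\kk}^{\kk}\cdot\sup_{|P|=\kk}\prod_{\kw}\delta_{\kw}$. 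More broadly, the contraction in \cite{ConnectiveConst} is not a pointwise per-edge bound but an $\ell_{\pfs}$-norm inequality across all children at once; it cannot be decomposed into per-vertex factors without losing the sharp threshold.

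The paper therefore discards per-vertex factors. The $(\pfs,\delta,c)$-potential of \cite{ConnectiveConst} has a single global $\delta=1/\dcritical(\lambda)$ and gives $(\cR_{\kv})^{\pfs}\le \delta\sum_{i}(\cR_{\kv_i})^{\pfs}$ uniformly in the branching. To marry the resulting $\delta^{\kell}$ decay with norms of $\NBMatrix$ the paper introduces the \emph{extended influence matrix} $\ExtdInfMatrixF$, indexed by length-$\kk$ walks, proves $\norm{\infmatrix^{\Lambda,\tau}_G}{2}\lesssim \norm{\,\abs{\ExtdInfMatrixF}\,}{2}$, and then shows $\norm{\,\abs{\ExtdInfMatrixF}\,}{2\pfs}\le \widehat{c}\sum_{\kell}\bigl(\delta^{\kell}\norm{(\NBMatrix)^{\kell}}{2}\bigr)^{1/\pfs}$. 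The particular statement you are proving is finally the one-line corollary $\nnorm{(\NBMatrix)^N}{1/N}{2}\le (\cconnective_{\kk})^{(N+\kk)/N}\to\cconnective_{\kk}$ for large $N$, combined with monotonicity of $\lcritical$.
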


Handling directly $\cconnective_{\kk}$ for a general graph $G$ is an extremely challenging task 
(at least for the authors of this work). We obtain our results  indirectly by utilising  the notion of the
$\kk$-non-backtracking  matrix $\NBMatrix$. We establish rapid mixing  bounds  in terms of 
$\nnorm{ (\NBMatrix)^N}{1/N}{2}$,  where $N$ is a large number.

\subsubsection*{The $k$-non-backtracking matrix $\NBMatrix$:}
Perhaps, the best known ``incarnation" of the $k$-non-backtracking matrix $\NBMatrix$ is  the  
{\em Hashimoto non-backtracking  matrix} $\NBMatrixE_G$,  introduced in \cite{Hash89}.   Matrix 
$\NBMatrixE_G$ is typically  studied in (mathematical) physics. Recently it has found  application 
in   spectral algorithms for network 
inference problems, e.g. see \cite{CKMZNBM,AbbeSandon14} (with the breakthrough result in \cite{NBMGnp}).

Recall that $\NBMatrixE_G$ is  a 0/1 matrix indexed by the {\em ordered pairs} of the adjacent vertices
in $G$  such that for any $\ke=\ku\kw$ and $\kf=\kz \ky$, we have 
\begin{align}\nonumber 
\NBMatrixE_G(\ke, \kf) =\Ind\{\kw=\kz \}\times \Ind\{\ku\neq \ky\} \enspace.
\end{align}
That is, $\NBMatrixE_G(\ke, \kf)$ is equal to $1$, if $\kf$ follows the edge $\ke$ without creating a loop; 
otherwise,  it  is equal to zero.

 $\NBMatrix$ is a generalisation of $\NBMatrixE_G$.  It is a    0/1 matrix indexed by the $\kk$-non-backtracking walks 
 in $G$. For    $\ke$ and $\kf$, two $\kk$-non-backtracking walks, 
we have $\NBMatrix(\ke,\kf)=1$ if walk $\kf$ extends walk $\ke$ 
by one vertex without creating a loop. Otherwise, the entry is zero.   
For a formal definition of $\NBMatrix$, see 
\Cref{sec:MatricesOnPaths}. 
Note that $\NBMatrixE_G$ corresponds to $\NBMatrix$ for $\kk=1$.

We obtain the following optimum mixing results using $\NBMatrix$.
 
\begin{theorem}\label{thrm:HC4SPRadiusHash}
For $\varepsilon \in (0,1)$, for $\SingBound>1$ and integers $k,N>0$, $\maxDeg>1$, consider graph 
$G=(V,E)$ of maximum degree $\maxDeg$ such that $\nnorm{ (\NBMatrix)^N}{1/N}{2}=\SingBound$.
Also, let $\mu$ be the Hard-core model on $G$ with fugacity $\lambda\leq (1-\varepsilon)\lcritical(\SingBound)$.

There is a constant $C={C(\maxDeg,\SingBound,\kk,\varepsilon)}$ such that the mixing time of Glauber dynamics on $\mu$ 
is at most $C n\log n$.
\end{theorem}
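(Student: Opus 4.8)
The plan is to prove \Cref{thrm:HC4SPRadiusHash} via the Spectral Independence method of \cite{OptMCMCIS}, together with the machinery that promotes spectral independence at all pinnings to optimal $O(n\log n)$ mixing \cite{VigodaSpectralIndB}. Thus it suffices to produce constants $\eta=\eta(\maxDeg,\SingBound,\kk,\varepsilon)<\infty$ and $b=b(\maxDeg,\SingBound,\kk,\varepsilon)\in(0,1)$ such that, for every feasible pinning $(\Lambda,\tau)$ of $\mu$: (i) every conditional single-site marginal of $\mu^{\Lambda,\tau}$ is either forced or lies in $[b,1-b]$; and (ii) the top eigenvalue of the pairwise influence matrix $\infmatrix^{\Lambda,\tau}$ of $\mu^{\Lambda,\tau}$ is at most $\eta$. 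Item (i) is routine for the Hard-core model once $\lambda\le(1-\varepsilon)\lcritical(\SingBound)$ and $\maxDeg$ are fixed: a vertex that is not forced unoccupied by a pinned neighbour has occupation probability between $\lambda(1+\lambda)^{-(\maxDeg+1)}$ and $\lambda(1+\lambda)^{-1}$. So the entire content is item (ii). Here I use that the top eigenvalue of $\infmatrix^{\Lambda,\tau}$ is at most $\enorm{\infmatrix^{\Lambda,\tau}}$, and that $\enorm{\cdot}$ is bounded by the spectral norm of any entrywise nonnegative matrix dominating $\abs{\infmatrix^{\Lambda,\tau}}$; thus it is enough to produce such a dominating matrix.

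Fix a pinning, write $\nu=\mu^{\Lambda,\tau}$, and bound $\infmatrix_\nu$ through Weitz's self-avoiding-walk tree \cite{Weitz}: each conditional marginal of $\nu$ at a vertex $v$ equals the root marginal of the Hard-core instance induced on $\Tsaw(G,v)$, which is governed by the one-variable recursion $\trecur$, all of whose ratios lie in a bounded interval $\ratiorange=\ratiorange(\maxDeg,\lambda)$. Unfolding the influence of $u$ on $v$ along the tree yields the classical estimate $\abs{\infmatrix_\nu(u,v)}\le\sum_{P}\prod_{i\ge1}\abs{\dlogtrecur_{P,i}}$, the sum over self-avoiding walks $P$ from $u$ to $v$ in $G$ and $\dlogtrecur_{P,i}$ the $i$-th per-step (log-)derivative of $\trecur$ at the true ratios along $P$. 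The main step is the potential-function analysis of \cite{ConnectiveConst}: introducing the Hard-core potential $\potF$ (with derivative $\dpotF$) and passing to the amortised factors $\dpotF(\gratio_{i-1})\,\dlogtrecur_{P,i}/\dpotF(\gratio_i)$ changes each path-product only by a telescoping boundary factor bounded by a constant $c_0(\maxDeg,\lambda)$; and when $\lambda\le(1-\varepsilon)\lcritical(\SingBound)$ there is a constant $\beta=\beta(\maxDeg,\SingBound,\lambda,\varepsilon)$ with $\beta\SingBound<1$ such that the product of amortised factors over \emph{any $\kk$ consecutive steps} of any walk, at any ratios in $\ratiorange$, is at most $\beta^{\kk}$. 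A single step need not be below $\SingBound^{-1}$; it is precisely the blocking into windows of $\kk$ steps — exploiting the telescoping of consecutive ratios — that makes $\lcritical(\SingBound)$ the correct threshold and the $\kk$-non-backtracking matrix the natural bookkeeping object. Consequently the contribution of a length-$\ell$ walk to (the amortised) $\abs{\infmatrix_\nu(u,v)}$ is at most $c_1\,\beta^{\ell}$ with $c_1=c_1(\maxDeg,\lambda,\kk,\SingBound)$.

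Now I would package the walk sum into $\NBMatrix$: every self-avoiding walk is a $\kk$-non-backtracking walk and all summands are nonnegative, so enlarging the index set to $\kk$-non-backtracking walks only increases the sum; and a $\kk$-non-backtracking walk of length $\ell\ge\kk$ is exactly a walk of length $\ell-\kk$ in the digraph of $\NBMatrix$. Using the lift/projection operators $\VToEdge,\EdgeToV$ between $V$ and the set of $\kk$-non-backtracking walks (and the reversal involution $\Invol$ to orient the two endpoints consistently), the previous steps produce a $V\times V$ matrix dominating $\abs{\infmatrix_\nu}$ entrywise of the form $c_0c_1\,\EdgeToV\big(\sum_{m\ge0}\beta^{\,m+\kk}(\NBMatrix)^{m}\big)\VToEdge$, up to bounded diagonal reweightings. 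Since $\enorm{\VToEdge},\enorm{\EdgeToV}$ are bounded by constants depending only on $\maxDeg,\kk$, and since the hypothesis $\nnorm{(\NBMatrix)^N}{1/N}{2}=\SingBound$ (i.e. $\enorm{(\NBMatrix)^N}=\SingBound^N$), together with $\enorm{\NBMatrix}\le\maxDeg-1$ and submultiplicativity, gives $\enorm{(\NBMatrix)^{m}}\le(\maxDeg-1)^{N}\SingBound^{m}$ for every $m\ge0$, taking spectral norms yields
\begin{align*}
\enorm{\infmatrix_\nu}\ \le\ C_2\sum_{m\ge0}(\beta\,\SingBound)^{m}\ =\ \frac{C_2}{1-\beta\SingBound}\ =:\ \eta\ <\ \infty ,
\end{align*}
where $C_2$ collects $c_0,c_1,\beta^{\kk},(\maxDeg-1)^N$ and the operator-norm constants. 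All of $\ratiorange,c_0,c_1,\beta,b$ and these constants depend only on $\maxDeg,\lambda,\kk,N,\SingBound$; $\NBMatrix$ is a property of $G$ unaffected by the pinning; and pinning or forcing vertices only deletes rows and columns of $\infmatrix_\nu$, which cannot increase any norm involved — so $\eta$ and $b$ are uniform over pinnings. Feeding $\eta,b$ into the spectral-independence-implies-optimal-mixing theorem of \cite{VigodaSpectralIndB} gives mixing time $Cn\log n$ with $C=C(\maxDeg,\SingBound,\kk,\varepsilon)$ (the fixed integer $N$ is inessential and does not scale with $n$); the lower bound of \cite{OptMixingTime} shows the exponent is best possible.

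The main obstacle is the second paragraph: establishing the $\kk$-block contraction $\prod_{\kk\text{ steps}}(\text{amortised }\dlogtrecur)\le\beta^{\kk}$ with $\beta\SingBound<1$ exactly under $\lambda\le(1-\varepsilon)\lcritical(\SingBound)$, for the right choice of potential $\potF$ — and, dovetailed with it, setting up $\EdgeToV,\VToEdge,\Invol$ so that the dominating matrix is honestly a power series in $\NBMatrix$ rather than in a crude walk-count matrix, so that the single spectral hypothesis on $(\NBMatrix)^N$ suffices. The short walks ($\ell<\kk$), the telescoping boundary terms, the non-symmetry of $\infmatrix_\nu$, and the accounting for pinned/forced coordinates are the routine remaining points.
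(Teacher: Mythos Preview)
The high-level framework (spectral independence via \cite{VigodaSpectralIndB}, marginal boundedness, Weitz's tree, potential from \cite{ConnectiveConst}) matches the paper. The gap is your central technical claim in the second paragraph: the ``$\kk$-block contraction'' asserting that the product of amortised factors over any $\kk$ consecutive steps of \emph{any single walk}, at \emph{any} ratios in $\ratiorange$, is at most $\beta^{\kk}$ with $\beta\SingBound<1$. This is not what the $(\pfs,\delta,c)$-potential provides, and it is false at this level of generality.

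The contraction condition \eqref{eq:contractionRelationPF} is an $\ell_{\pfs}$-norm inequality over the children at a single vertex, not a pointwise bound on each amortised factor. Along one fixed path the amortised factors telescope to boundary terms times the raw product $\prod_i|\dlogtrecur(\ky_i)|$; for the Hard-core model each $|\dlogtrecur(\ky)|=e^{\ky}/(1+e^{\ky})$ can be as large as $\lambda/(1+\lambda)$, and for $\lambda$ near $\lcritical(\SingBound)\approx e/\SingBound$ this exceeds $1/\SingBound$. So no per-path $\beta<1/\SingBound$ exists, and your entrywise domination $|\infmatrix_\nu|\le c\,\EdgeToV\big(\sum_m\beta^{m+\kk}(\NBMatrix)^m\big)\VToEdge$ cannot be established this way. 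That argument \emph{does} work under plain $\delta$-contraction (Definition~\ref{def:HContraction}) and is essentially how the paper proves Theorem~\ref{thrm:L2InflRedux2KNBTM} for the Ising model; for Hard-core it only recovers the Hayes bound $\lambda<1/\SingBound$, not $\lcritical(\SingBound)$.

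What the paper actually needs is substantially more delicate. The $\ell_{\pfs}$ nature of the contraction means the natural bound is of the form $\sum_\ell\big(\delta^{\ell}\cdot\|(\NBMatrix)^{\ell}\|_2\big)^{1/\pfs}$, with the $1/\pfs$ exponent on the whole product. To obtain this the paper introduces the \emph{extended influence matrix} $\ExtdInfMatrixF$, indexed by length-$\kk$ self-avoiding walks rather than vertices, whose $(\kP,\kQ)$ entry is an influence in the $\{\kP,\kQ\}$-\emph{extension} of $\mu^{\Lambda,\tau}$ --- a different Gibbs distribution for each pair $(\kP,\kQ)$. One then shows $\|\infmatrix\|_2\le C_1+C_2\||\ExtdInfMatrix|\|_2$ (Proposition~\ref{proposition:CINorm2VsCLNorm2}) and bounds $\||\ExtdInfMatrix|\|_{2\pfs}$ by the series above (Theorem~\ref{thrm:InfNormBoundHConj}). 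The latter requires a genuine correlation-decay argument (Theorem~\ref{thrm:ProdWideHatBetaVsBeta}) to control how the edge weights in the $\Tsaw$ trees change as $(\kP,\kQ)$ varies, since otherwise the recursion at different entries of $\ExtdInfMatrix$ uses incomparable weights. None of this machinery appears in your outline, and your ``main obstacle'' paragraph underestimates what is required: it is not merely a matter of choosing $\potF$ and setting up $\VToEdge,\EdgeToV,\Invol$ correctly.
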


Related to \Cref{thrm:HC4SPRadiusHash}, are the rapid mixing bounds in the literature that are expressed in terms of norms of   $\Adjacency_G$, 
the {\em adjacency matrix} of $G$.  
In his seminal work  in \cite{Hayes06}, Hayes obtains rapid mixing bounds in terms of $\norm{\Adjacency_G}{2}$.
Further results with respect to $\norm{\Adjacency_G}{2}$  can be obtained using other approaches, e.g.,
using  the  {\em stochastic localisation} method to analyse the mixing of Glauber dynamics \cite{FastMCMCLocalisation}.  
Using $\norm{\Adjacency_G}{2}$ in our setting has major drawbacks compared to 
$\nnorm{ (\NBMatrix)^N}{1/N}{2}$.   E.g., $\norm{\Adjacency_G}{2}$ tends to be  strongly correlated 
with the maximum degree  $\maxDeg$. 
On the other hand,  it is well-known that  for small values of $N,\kk$,  the norm  $\nnorm{ (\NBMatrix)^N}{1/N}{2}$ captures 
information related to the degree sequence, whereas,   for large values of these two parameters, the norm expresses 
{\em structural properties} of graph $G$. 
This makes it highly desirable to work with $\NBMatrix$.  It turns out that it is  more challenging to handle matrix $\NBMatrix$ 
than $\Adjacency_G$  in the analysis.  This is because $\NBMatrix$  does not possess  standard symmetries, e.g.,  it is {\em not} a normal matrix etc.
To our  knowledge, this is the first rapid mixing result  obtained in terms of norms of the  $\kk$-non-backtracking 
matrix $\NBMatrix$.

\Cref{thrm:HC4CCK} follows easily from  \Cref{thrm:HC4SPRadiusHash}. Specifically, it suffices to use the fact that 
 for large $N$ we essentially have  $\nnorm{ (\NBMatrix)^N}{1/N}{2}\leq \cconnective_{\kk}$ (see \Cref{prop:HSPRadCCLCC} for
the exact relation) and note that the function $\lcritical(\kz)$ is decreasing in $\kz$.

We further show that the above  mixing bound in terms of  $\nnorm{ (\NBMatrix)^N}{1/N}{2}$ is unlikely to
be improved.

\begin{theorem}\label{thrm:HarndessTransHCNBM}
Unless ${\rm NP} = {\rm RP}$, for $\varepsilon\in (0,1)$, $\maxDeg\geq 3$, $N,\kk\geq 1$ and $\SingBound > 1$, 
the following is true:

For any $\lambda>(1+\varepsilon)\lcritical(\lceil \SingBound \rceil)$, there is no FPRAS for estimating the partition function 
of the Hard-core model for graphs $G$ of maximum degree at most $\maxDeg$ such that $\nnorm{ (\NBMatrix)^N}{1/N}{2}=\SingBound$.

For any $\lambda<(1-\varepsilon)\lcritical(\SingBound)$, there is an FPRAS for estimating the partition function of the Hard-core model
for graphs $G$ of maximum degree at most $\maxDeg$ such that $\nnorm{ (\NBMatrix)^N}{1/N}{2}=\SingBound$.
\end{theorem}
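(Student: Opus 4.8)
\emph{The FPRAS direction.} This is immediate from \Cref{thrm:HC4SPRadiusHash}: for such a $G$ with $\lambda\le(1-\varepsilon)\lcritical(\SingBound)$, that theorem gives $O(n\log n)$ mixing of Glauber dynamics, hence a polynomial-time approximate sampler for the Hard-core model on $G$, which one turns into an FPRAS by the standard self-reducibility/telescoping-product argument (write $\PartFun(G)$ as a product of ratios of partition functions of $G$ under progressively larger sets of pinned vertices and estimate each ratio from samples). Every instance appearing in the product is an \emph{induced subgraph} $G'$ of $G$ (delete a pinned-to-$0$ vertex; delete a pinned-to-$1$ vertex together with its neighbourhood), so its maximum degree is still at most $\maxDeg$, and its $\kk$-non-backtracking matrix $\NBMatrix'$ satisfies $(\NBMatrix')^N\le(\NBMatrix)^N$ entrywise (the $\kk$-non-backtracking walks of $G'$ form a subset of those of $G$); hence $\nnorm{(\NBMatrix')^N}{1/N}{2}\le\SingBound$ and \Cref{thrm:HC4SPRadiusHash} applies to every instance in the product. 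This is exactly the ``optimum mixing $\Rightarrow$ FPRAS'' implication recorded in \Cref{sec:Introduction}.

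\emph{The hardness direction: setup and the empty case.} Put $b:=\lceil\SingBound\rceil\ge2$. The plan is to reduce from the inapproximability results of Sly, Sly--Sun, and Galanis--\v{S}tefankovi\v{c}--Vigoda \cite{Sly10,SS14,GSV16Hardness}: unless ${\rm NP}={\rm RP}$, for every integer $b\ge2$ and every fugacity exceeding $\lcritical(b)$ there is no FPRAS for $\PartFun$ of the Hard-core model on graphs of maximum degree $b+1$, with hard instances that may be taken $(b+1)$-regular and bipartite (or small perturbations of such, of arbitrarily large girth). Since $\lambda>(1+\varepsilon)\lcritical(b)>\lcritical(b)$, this regime is in force. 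First dispose of the degenerate sub-case $\maxDeg\le b$: then $\maxDeg-1\le b-1<\SingBound$ (the last inequality because $\lceil\SingBound\rceil=b$), whereas every graph $G$ of maximum degree at most $\maxDeg$ satisfies $\nnorm{(\NBMatrix)^N}{1/N}{2}\le\maxDeg-1$, since each $\kk$-non-backtracking walk has at most $\maxDeg-1$ one-vertex extensions and is the one-vertex extension of at most $\maxDeg-1$ walks, so every row and column sum of $(\NBMatrix)^N$ is at most $(\maxDeg-1)^N$ and thus $\|(\NBMatrix)^N\|_2\le(\maxDeg-1)^N$. So the relevant class of graphs is empty and the non-FPRAS claim holds vacuously; henceforth assume $\maxDeg\ge b+1$.

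\emph{Realising the prescribed norm.} It remains to place inside the class ``maximum degree $\le\maxDeg$ and $\nnorm{(\NBMatrix)^N}{1/N}{2}=\SingBound$'' instances on which an FPRAS would contradict \cite{Sly10,SS14,GSV16Hardness}. If $\SingBound=b$ this is direct: take a $(b+1)$-regular bipartite hard instance $H$ of girth larger than $\kk$; then each $\kk$-non-backtracking walk has exactly $b$ one-vertex extensions, so $\NBMatrix\mathbf 1=b\mathbf 1$ and $\rho(\NBMatrix)=b$, while $\|(\NBMatrix)^N\|_2\le b^N$ by the estimate above, whence $\nnorm{(\NBMatrix)^N}{1/N}{2}=b=\SingBound$; and an FPRAS for $H$ is precisely what is ruled out. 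For non-integer $\SingBound\in(b-1,b)$ one cannot use a regular hard instance, since any connected $(b+1)$-regular graph has $\rho(\NBMatrix)=b$ and hence $\nnorm{(\NBMatrix)^N}{1/N}{2}=b>\SingBound$. The plan here is: (i) perturb the $(b+1)$-regular hard instance --- lowering the degrees of a suitably spread-out set of vertices --- so as to drive $\rho(\NBMatrix)$, and with it $\nnorm{(\NBMatrix)^N}{1/N}{2}$, down to some value $\le\SingBound$, while staying inside the inapproximability regime of \cite{Sly10,SS14,GSV16Hardness} (their reduction is robust under such local modifications, and the connective-constant strengthening of their hardness --- inapproximability for $\lambda>\lcritical(\cconnective)$ on families of prescribed connective constant, cf.\ the remark after \Cref{thrm:HC4CCK} --- is also at hand); and (ii) take the disjoint union of the resulting hard instance $H'$ with a fixed small ``tuner'' graph $B$ of maximum degree at most $\maxDeg$, chosen by an intermediate-value argument over a one-parameter family of small graphs so that $\nnorm{(\NBMatrix_B)^N}{1/N}{2}=\SingBound$ exactly. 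Since the $\kk$-non-backtracking matrix of a disjoint union is block diagonal, $\nnorm{(\NBMatrix_{H'\sqcup B})^N}{1/N}{2}=\max\bigl\{\nnorm{(\NBMatrix_{H'})^N}{1/N}{2},\ \nnorm{(\NBMatrix_B)^N}{1/N}{2}\bigr\}=\SingBound$, the maximum degree is at most $\maxDeg$, and $\PartFun(H'\sqcup B)=\PartFun(H')\cdot\PartFun(B)$ with $\PartFun(B)$ computable exactly; hence an FPRAS for $H'\sqcup B$ would give one for $H'$, a contradiction.

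\emph{Main obstacle.} The crux is step (i): pushing the $\kk$-non-backtracking norm of a hard instance all the way down to $\SingBound\in(b-1,b)$ with inapproximability intact. Since $\rho(\NBMatrix)<b$ forces a non-vanishing density of structural defects, this is exactly where one must lean on the robustness (or the connective-constant form) of \cite{Sly10,SS14,GSV16Hardness} rather than their bare statements; everything else is a citation or one of the elementary row/column-sum estimates above. Finally, the appearance of $\lceil\SingBound\rceil$ rather than $\SingBound$ in the hardness threshold reflects that the effective degree of a hard instance is essentially an integer; this is consistent with the FPRAS threshold $\lcritical(\SingBound)$, because for a \emph{fixed} $\kk$ a family with $\kk$-non-backtracking norm $\SingBound$ may still have connective constant as large as $b=\lceil\SingBound\rceil$, so the two statements leave a genuine gap rather than overlapping.
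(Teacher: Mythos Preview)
Your FPRAS direction matches the paper's: invoke \Cref{thrm:HC4SPRadiusHash} and convert fast mixing to an FPRAS via the standard sampling-to-counting reduction (the paper cites \cite{JVV86,SVV09} without further detail). Your remark that induced subgraphs have no larger $\kk$-non-backtracking norm is correct, though strictly unnecessary: the spectral-independence bound underlying \Cref{thrm:HC4SPRadiusHash} already holds under arbitrary pinnings $(\Lambda,\tau)$, so one can run the self-reducibility with pinnings on $G$ itself rather than passing to subgraphs.

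For the hardness direction you are substantially more careful than the paper. The paper's own proof is two sentences: it notes that $\nnorm{(\NBMatrix)^N}{1/N}{2}=\SingBound$ forces $\maxDeg\ge\lceil\SingBound\rceil+1$, then invokes \Cref{thrm:HarndessCounting} for maximum degree $\lceil\SingBound\rceil+1$, without explaining why the hard instances from \cite{SS14,GSV16Hardness} land in the class with norm \emph{exactly} $\SingBound$. As you correctly observe, a $(\lceil\SingBound\rceil+1)$-regular hard instance of large girth has norm $\lceil\SingBound\rceil$, so the paper's argument is immediate only when $\SingBound$ is an integer. Your perturb-plus-tuner plan genuinely addresses this gap; the soft spot you flag (driving a hard instance's norm strictly below $\SingBound$ while retaining inapproximability) is real and would need the connective-constant form of the hardness, to which the paper alludes in its introduction but never formalises. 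In effect the paper tacitly reads the norm constraint as ``$\le\SingBound$'' or restricts attention to integer $\SingBound$; you take the ``$=$'' at face value and work harder for it.
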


The results in  \cite{SS14,GSV16Hardness}  that we utilise to show \Cref{thrm:HarndessTransHCNBM}, require the argument 
of $\lcritical(\cdot)$ to be an integer. 
For this reason, we use $\lcritical(\lceil  \SingBound \rceil)$ for the  lower bound in the theorem above. 

\subsection{Further results for the Ising model}
Using the formalism in \eqref{def:GibbDistr}, the Ising model corresponds to taking $\beta=\gamma$. Usually, we refer to the 
parameters $\beta$ as the {\em inverse temperature} and $\lambda$ as the {\em external field}. 
Recall that  when $\beta>1$, the Ising model is called {\em ferromagnetic}, while for 
$\beta<1$,  it is called {\em antiferromagnetic}.

For  integer $\kk\geq 2$, it is  well-known  that  the Ising model  on the infinite $\kk$-ary tree, with external field  $\lambda>0$ 
and inverse temperature $\beta\geq 0$,  exhibits uniqueness for 
\begin{align}\nonumber
\frac{\kk-1}{\kk+1} <\beta< \frac{\kk+1}{\kk-1} \enspace .
\end{align}
For $\kz>1$ and $\delta\in (0,1)$, we let the interval
\begin{align}\label{eq:DefOfIsingUniquReg}
\UnIsing(\kz,\delta) =\textstyle \left [ \frac{\kz-1+\delta}{\kz+1-\delta},   \frac{\kz+1-\delta}{\kz-1+\delta}\right] \enspace.
\end{align}
We obtain the following result for the Ising model. 
\begin{theorem}\label{thrm:Ising4CCK}
For $\varepsilon\in (0,1)$, $\maxDeg >1$, $\kk\geq 1$, $\cconnective_{\kk}>1$ and $\lambda>0$ 
consider graph $G=(V,E)$ of maximum degree $\maxDeg$ such that the radius-$\kk$ connective constant is $\cconnective_{\kk}$.
Let $\mu$ be the Ising model on $G$ with external field $\lambda$ and inverse temperature $\beta\in \UnIsing(\cconnective_{\kk},\varepsilon)$. 

There is a constant $C={C( \maxDeg,\cconnective_{\kk}, \varepsilon)}$ such that the mixing time of Glauber dynamics 
on $\mu$ is at most $C n\log n$. 
\end{theorem}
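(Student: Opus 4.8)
\emph{Proof sketch.} The plan is to deduce \Cref{thrm:Ising4CCK} from an Ising analogue of \Cref{thrm:HC4SPRadiusHash}, namely: if $G$ has maximum degree $\maxDeg$ and $\nnorm{(\NBMatrix)^N}{1/N}{2}=\SingBound$, then for every external field $\lambda>0$ and inverse temperature $\beta\in\UnIsing(\SingBound,\varepsilon)$ the Glauber dynamics for the Ising model on $G$ mixes within $C(\maxDeg,\SingBound,\kk,\varepsilon)\,n\log n$ steps. Granting this, \Cref{thrm:Ising4CCK} follows exactly as \Cref{thrm:HC4CCK} follows from \Cref{thrm:HC4SPRadiusHash}: for $N$ large, \Cref{prop:HSPRadCCLCC} gives $\SingBound\le(1+o(1))\,\cconnective_{\kk}$, and since the endpoints of the interval in \eqref{eq:DefOfIsingUniquReg} are $1\mp\frac{2-2\delta}{\kz\pm(1-\delta)}$ — so $\UnIsing(\kz,\delta)$ shrinks as $\kz$ increases and widens as $\delta$ decreases — we get $\UnIsing(\cconnective_{\kk},\varepsilon)\subseteq\UnIsing(\SingBound,\varepsilon/2)$ for $N$ large enough; hence the hypothesis $\beta\in\UnIsing(\cconnective_{\kk},\varepsilon)$ already places $\beta$ in the window covered by the $\NBMatrix$-version, and the mixing bound follows with a constant depending only on $\maxDeg,\cconnective_{\kk},\varepsilon$ (and $\kk$).

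The core task is therefore establishing bounded spectral independence for the Ising model on $G$ for all $\beta\in\UnIsing(\SingBound,\varepsilon)$. I would first pass to the tree of self-avoiding walks $\Tsaw$ rooted at a vertex (Weitz \cite{Weitz}), on which the root marginal is computed by the one-variable Ising tree recursion for the log-ratio of the $\pm$ marginals. Via the potential-function method one chooses a suitable potential $\potF$ (with derivative $\dpotF$) so that, after the induced change of variable, the recursion contracts: the transmission across a single tree edge is bounded by a factor which is $<1$ exactly on the Ising tree-uniqueness interval, and membership in $\UnIsing(\cdot,\varepsilon)$ — which is compactly contained in that interval with a gap controlled by $\varepsilon$ — leaves a quantitative margin. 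Telescoping along each self-avoiding walk then bounds the pairwise influence $\abs{\infmatrix(u,w)}$ by $C(\maxDeg)$ times a sum over the self-avoiding walks $u\rightsquigarrow w$ of products of per-edge weights, each length-$\ell$ product decaying geometrically in $\ell$; the ferromagnetic branch $\beta>1$ of $\UnIsing$ is treated within the same framework using the monotonicity of the ferromagnetic recursion to reduce the contraction estimate to the extremal (all-plus/all-minus) boundary conditions.

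The decisive step — and the one I expect to be the main obstacle — is turning this walk-by-walk estimate into a bound on the \emph{spectral radius} $\spradius(\infmatrix)$ that sees $\SingBound$ rather than $\maxDeg$: summing the weight-products over \emph{all} walks yields only a $\maxDeg^{\ell}$ growth, which is exactly what the max-degree results exploit. Instead one records the influence along $\kk$-non-backtracking walks: pass from $\infmatrix$ to an extended influence matrix $\ExtdInfMatrix$ whose entries are the per-segment Ising weights carried along the transitions encoded by $\NBMatrix$ (see \Cref{sec:MatricesOnPaths}), and conjugate by the symmetric weight matrix $\SymWeightM$ so that the governing $\ell_2$ geometry becomes symmetric even though $\NBMatrix$ itself is neither symmetric nor normal. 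One then uses $\spradius(\infmatrix)\le C(\maxDeg,\kk)\,\nnorm{(\ExtdInfMatrix)^N}{1/N}{2}$ for every $N$, and bounds the right-hand side by $(\text{maximal per-segment weight})\cdot\nnorm{(\NBMatrix)^N}{1/N}{2}$, that is, by $(\text{contraction factor})\cdot\SingBound$ up to constants; the hypothesis $\beta\in\UnIsing(\SingBound,\varepsilon)$ is precisely what forces this product below $1$, so $\spradius(\infmatrix)=O_{\maxDeg,\SingBound,\kk,\varepsilon}(1)$. Making the $\ell_2$-norm estimate go through despite the non-normality of $\NBMatrix$ — using $\spradius(M)\le\|M^N\|^{1/N}$ together with the $\SymWeightM$-conjugation — and the $\kk$-step blocking that replaces the crude combinatorial count of walks by the true growth rate $\nnorm{(\NBMatrix)^N}{1/N}{2}$, is the technical heart of the argument.

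Finally, with $O(1)$ spectral independence in hand I would invoke the standard machinery converting spectral independence into optimal mixing. Since $\maxDeg$ is bounded, $\lambda>0$, and $\beta$ is bounded away from $0$ and $\infty$ by membership in $\UnIsing$, the single-site conditional marginals of $\mu$ are uniformly bounded away from $0$ and $1$; this marginal boundedness together with bounded spectral independence yields, via \cite{OptMCMCIS,VigodaSpectralIndB} and the $O(n\log n)$ optimality benchmark of \cite{OptMixingTime}, that the Glauber dynamics mixes in $C\,n\log n$ steps with $C=C(\maxDeg,\cconnective_{\kk},\varepsilon)$, as claimed.
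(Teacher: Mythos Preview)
Your reduction to the $\NBMatrix$-version (\Cref{thrm:Ising4SPRadiusHash}) via \Cref{prop:HSPRadCCLCC} and the monotonicity of $\UnIsing(\cdot,\cdot)$ in both arguments is exactly what the paper does. For \Cref{thrm:Ising4SPRadiusHash} itself, however, the paper takes a simpler route than your sketch: it does not invoke the potential-function machinery for the Ising model at all, only the bare $\delta$-contraction condition of \Cref{def:HContraction} through \Cref{thrm:NonBacktrackingInfinityMixing}. A one-line computation (\Cref{lemma:IsingInfNormBound}) gives $\|\nabla \logtrecur_d\|_\infty\le|1-\beta|/(1+\beta)\le(1-\varepsilon)/\SingBound$ whenever $\beta\in\UnIsing(\SingBound,\varepsilon)$, and that is all that is needed; the $(\pfs,\delta,c)$-potential analysis is reserved for the Hard-core model, where the per-edge contraction is not uniform in the argument. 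Two technical points in your sketch are also off relative to the paper: the extended influence matrix is never raised to a power --- the actual bound is $\|\infmatrix^{\Lambda,\tau}_G\|_2\le C_1+C_2\sum_{\ell\ge k}\delta^\ell\|(\NBMatrix)^\ell\|_2$ (\Cref{thrm:L2InflRedux2KNBTM}), a sum over walk lengths --- and $\SymWeightM$ is not a conjugating matrix but the bounded Hadamard factor in the decomposition $\infmatrix=\VToEdge_k\cdot(\ExtdInfMatrixF\circ\SymWeightM)\cdot\EdgeToV_k+\infmatrix_{<2k}$ of \Cref{thrm:InflVsExtInfl}. Your plan would work in principle, but you are reaching for heavier tools than the Ising case needs.
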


It is not hard to obtain  an FPTAS for the Ising model using parameterizations  in 
terms of  the connective constant, i.e., like the Hard-core model  in 
\cite{ConnectiveConstFirst,ConnectiveConst}. 
This  readily implies $O(n\log n)$ mixing for Glauber dynamics on amenable graphs. 
A related result in this area (and  one of the best understood cases) is  the Ising model on 
the  $\kr$-dimensional  square lattice $\mathbb{Z}^{\kr}$, 
for $\kr>0$.  Optimum bounds for the mixing time of Glauber dynamics for the Ising on $\mathbb{Z}^{\kr}$ have  been known since the 1990s from 
the influential work of Martinelli  and Olivieri in \cite{MartinelliOlivieriA,MartinelliOlivieriB}. These  results are 
better than those one obtains with the connective constant.

The mixing time of  Glauber dynamics for the Ising model on  $G(n,\kd/n)$ is also well-studied. 
It is known that  Glauber dynamics on typical instances of this distribution is rapidly mixing  for  any 
$\beta\in \UnIsing(\cconnective,\varepsilon)$ and any $\varepsilon>0$.   The ferromagnetic case follows from  the 
work of Mossel and Sly in   \cite{ISINGGNPMS}. The bounds for the antiferromagnetic Ising are implied by  the work of 
Efthymiou, Hayes, \v Stefankovi\v c and Vigoda in \cite{Efth19}, while  explicit bounds appear in   \cite{Efth24}.

As opposed to the previous works  which focus on  special families of graphs, our endeavour  here is to obtain 
optimum mixing bounds    {\em without} any assumptions about the underlying graph, i.e., we don't necessarily assume that
the underlying graph is amenable or a typical instance of $G(n,d/n)$.

As in the case of \Cref{thrm:HC4CCK}, we obtain \Cref{thrm:Ising4CCK} by establishing  
optimum mixing bounds expressed in terms of  $\nnorm{ (\NBMatrix)^N}{1/N}{2}$. In particular, we establish the following result.

\begin{theorem}\label{thrm:Ising4SPRadiusHash}
For $\varepsilon\in (0,1)$, $\lambda>0$, $\SingBound>1$ and integers $N>0$, $\maxDeg>1$, 
consider graph $G=(V,E)$ of maximum degree $\maxDeg$ such that $\nnorm{ (\NBMatrix)^N}{1/N}{2} =\SingBound$.
Let $\mu$ be the Ising model on $G$ with external field $\lambda$ and inverse temperature $\beta\in \UnIsing({\SingBound},\varepsilon)$. 

There is a constant {$C={C(\maxDeg, {\SingBound}, \varepsilon)}$} such that the mixing time of Glauber dynamics 
on $\mu$ is at most $C n\log n$. 
\end{theorem}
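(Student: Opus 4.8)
The plan is to run the Spectral Independence method, following the same route as the Hard-core result \Cref{thrm:HC4SPRadiusHash}. I would reduce the theorem to two ingredients: (i) a bound, uniform over all pinnings, on the spectral norm of the pairwise influence matrix of $\mu$; and (ii) the standard implication that constant spectral independence, together with single-site marginals bounded away from $\{+1,-1\}$, forces $O(n\log n)$ mixing of Glauber dynamics. Ingredient (ii) is essentially automatic here: since $G$ has maximum degree $\maxDeg$ and $\beta,\lambda$ are fixed constants, every conditional marginal $\mu(\sigma_v=\pm 1\mid\sigma_\Lambda=\tau)$ lies in $[b,1-b]$ for a constant $b=b(\maxDeg,\beta,\lambda)>0$, and the local-to-global / entropic independence machinery then upgrades any constant spectral independence bound to the claimed mixing time. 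So the whole task is (i).

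For (i), I would fix $\Lambda\subseteq V$ and a pinning $\tau\in\{\pm1\}^\Lambda$, and let $\Psi^{\Lambda,\tau}$ be the influence matrix of $\mu(\cdot\mid\sigma_\Lambda=\tau)$. Using the tree of self-avoiding walks $\Tsaw(G,w)$ rooted at a free vertex $w$ (the Weitz reduction for two-spin systems), the marginal at $w$ equals the root marginal of the Ising model on $\Tsaw(G,w)$ with the induced boundary condition, and $\Psi^{\Lambda,\tau}(w,u)$ is a signed sum over self-avoiding walks from $w$ to $u$ avoiding $\Lambda$ of the product of one-step derivatives of the Ising tree recursion along the walk. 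In the right potential (the log-odds/$\tanh$ parametrization) each one-step derivative has absolute value at most $\frac{|\beta-1|}{\beta+1}$, a factor independent of $\maxDeg$ and $\lambda$; and $\beta\in\UnIsing(\SingBound,\varepsilon)$ is exactly the statement that $\frac{|\beta-1|}{\beta+1}\le\frac{1-\delta}{\SingBound}$ for a constant $\delta=\delta(\varepsilon)>0$. Pinning at $\Lambda$ only truncates $\Tsaw(G,w)$, so the bound is uniform in $\Lambda,\tau$. Hence $|\Psi^{\Lambda,\tau}|$ is dominated entrywise by $\sum_{\ell\ge 1}\bigl(\tfrac{1-\delta}{\SingBound}\bigr)^{\ell}W_\ell$, where $W_\ell(w,u)$ counts self-avoiding walks of length $\ell$ from $w$ to $u$ in $G$, and so $\norm{\Psi^{\Lambda,\tau}}{2}\le\sum_{\ell\ge 1}\bigl(\tfrac{1-\delta}{\SingBound}\bigr)^{\ell}\norm{W_\ell}{2}$.

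The decisive step is to control $\norm{W_\ell}{2}$ through $\NBMatrix$. A self-avoiding walk is $\kk$-non-backtracking, so $W_\ell$ is entrywise dominated by the matrix counting $\kk$-non-backtracking walks of length $\ell$; writing this via the incidence operators $\VToEdge,\EdgeToV$ of \Cref{sec:MatricesOnPaths} between vertices and $\kk$-walks gives $\norm{W_\ell}{2}\le\maxDeg^{O(\kk)}\norm{(\NBMatrix)^{\ell}}{2}$, up to a harmless $O(\kk)$ shift in the power. From the hypothesis $\nnorm{(\NBMatrix)^N}{1/N}{2}=\SingBound$, i.e.\ $\norm{(\NBMatrix)^N}{2}=\SingBound^N$, and submultiplicativity of the operator norm (writing $m=qN+r$ and using $\norm{\NBMatrix}{2}\le\maxDeg$), one gets $\norm{(\NBMatrix)^m}{2}\le\maxDeg^{O(N)}\SingBound^{m}$ for all $m$. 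Substituting, each term of the series is at most $\maxDeg^{O(\kk+N)}(1-\delta)^{\ell}$, so the series sums to a constant $C'=C'(\maxDeg,\SingBound,\kk,N,\varepsilon)$; that is, $\mu$ is $C'$-spectrally independent under every pinning, and combining with (ii) yields Glauber mixing time at most $Cn\log n$.

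The hard part will be this last norm comparison: because $\NBMatrix$ is not a normal matrix, $\norm{(\NBMatrix)^m}{2}$ is not read off from its spectrum, so one has to argue purely via submultiplicativity from the single hypothesis on $\nnorm{(\NBMatrix)^N}{1/N}{2}$, while keeping the dimension-type constants ($\maxDeg^{O(\kk)}$ from the vertex/$\kk$-walk incidence, $\maxDeg^{O(N)}$ from the $N$-step averaging) independent of $n=|V|$. A secondary, Ising-specific subtlety is that for $\beta<1$ (antiferromagnetic) the one-step derivatives, and hence the influences, are signed; the interval $\UnIsing(\SingBound,\varepsilon)$ is two-sided precisely so that $\frac{|\beta-1|}{\beta+1}\le\frac{1-\delta}{\SingBound}$ holds at both ends, and the one-variable contraction estimate on the $\SingBound$-ary-tree recursion must be verified in both regimes.
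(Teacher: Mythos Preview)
Your proposal is correct and tracks the paper's proof closely: both establish $\delta$-contraction for the Ising recursion via the bound $\frac{|\beta-1|}{\beta+1}\le\frac{1-\varepsilon}{\SingBound}$ (this is exactly \Cref{lemma:IsingInfNormBound}), dominate the influence matrix entrywise by self-avoiding-walk counts weighted by $\delta^\ell$, pass to $k$-non-backtracking walks and hence to $\NBMatrix$, control $\norm{(\NBMatrix)^m}{2}$ from the single hypothesis on $\nnorm{(\NBMatrix)^N}{1/N}{2}$ by submultiplicativity (the paper packages this as \Cref{lemma:SingSequenConv}), sum the resulting geometric series, and invoke \Cref{thrm:SPINLOGN}. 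The one substantive difference is organizational: the paper routes the norm comparison through the extended influence matrix $\ExtdInfMatrixF$ (via \Cref{thrm:NonBacktrackingInfinityMixing}, whose proof uses \Cref{thrm:L2InflRedux2KNBTM} and \Cref{proposition:CINorm2VsCLNorm2}), machinery developed primarily for the Hard-core/potential-function analysis and reused here, whereas you go directly from the entrywise bound $|\infmatrix^{\Lambda,\tau}_G|\le\sum_\ell\delta^\ell W_\ell$ to the sandwich $W_\ell\le \VToEdge_{\kk}\cdot(\NBMatrix)^{\ell-\kk}\cdot\Invol\cdot\MTR{\VToEdge}_{\kk}$. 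Your path is shorter and more elementary for the pure $\delta$-contraction case; the paper's detour through $\ExtdInfMatrixF$ buys nothing extra for Ising but gives a uniform framework that also handles the Hard-core case where no global $\delta$-contraction holds and potential functions are required.
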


There is no hardness region for the ferromagnetic Ising model. This is due to the algorithm of 
Jerrum and Sinclair in \cite{JerSincIsing93}. However, we get the following hardness results   for the {\em antiferromagnetic} Ising model.

\begin{theorem}\label{thrm:HarndessTransIsingNBM}
Unless ${\rm NP} = {\rm RP}$, for $\varepsilon\in (0,1)$, $\maxDeg\geq 3$, $N, k\geq 1$ and $\SingBound > 1$ 
the following is true:

For any $0<\beta \leq \frac{\SingBound-1+\varepsilon}{\SingBound+1-\varepsilon}$ and $\lambda$ such that
we are not in the tree uniqueness region, there is no FPRAS for estimating 
the partition function of the {\em antiferromagnetic} Ising model
for graphs $G$ of maximum degree at most $\maxDeg$ such that $\nnorm{ (\NBMatrix)^N}{1/N}{2}=\SingBound$.

For any $\frac{\SingBound-1+\varepsilon}{\SingBound+1-\varepsilon}\leq \beta\leq 1 $ and any $\lambda>0$,
there is an FPRAS for estimating the partition function of the {\em antiferromagnetic} Ising model
for graphs $G$ of maximum degree at most $\maxDeg$ such that $\nnorm{ (\NBMatrix)^N}{1/N}{2}=\SingBound$.
\end{theorem}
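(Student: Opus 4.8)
The plan is to reduce this statement to the known hardness/algorithmic dichotomy for the antiferromagnetic Ising model on bounded-degree graphs (from \cite{SS14,GSV16Hardness} on the hardness side and from \Cref{thrm:Ising4SPRadiusHash} on the algorithmic side), by showing that the extra condition $\nnorm{(\NBMatrix)^N}{1/N}{2}=\SingBound$ can be imposed ``for free'' on the instances produced by those constructions. I would split the argument into the two halves of the theorem.

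\textbf{The algorithmic half.} This is immediate: given $\varepsilon$, $\maxDeg$, $N$, $\kk$, $\SingBound$ and a graph $G$ of maximum degree at most $\maxDeg$ with $\nnorm{(\NBMatrix)^N}{1/N}{2}=\SingBound$, and any inverse temperature $\beta$ with $\frac{\SingBound-1+\varepsilon}{\SingBound+1-\varepsilon}\leq \beta\leq 1$, one checks that $\beta\in\UnIsing(\SingBound,\varepsilon)$ (the right endpoint of the interval exceeds $1$, so the constraint $\beta\le 1$ lands inside the uniqueness window). Hence \Cref{thrm:Ising4SPRadiusHash} gives $O(n\log n)$ mixing of Glauber dynamics, and the standard reduction from optimal mixing to an FPRAS for the partition function (as discussed in the introduction, e.g.\ yielding an $O^*(n^2)$-time FPRAS) completes this direction. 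The one subtlety is verifying that the constant $C$ in \Cref{thrm:Ising4SPRadiusHash} depends only on the allowed parameters; this is exactly what that theorem asserts, so nothing more is needed.

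\textbf{The hardness half.} Here I would start from the NP-hardness (under ${\rm NP}\neq{\rm RP}$) of approximating the antiferromagnetic Ising partition function on graphs of maximum degree at most $\Delta'$ for $\beta<\frac{\Delta'-1}{\Delta'+1}$ outside the tree-uniqueness region, as established in \cite{SS14,GSV16Hardness}. Setting $\Delta'=\lceil\SingBound\rceil$ and using that $\beta\le\frac{\SingBound-1+\varepsilon}{\SingBound+1-\varepsilon}$ is bounded away from $\frac{\lceil\SingBound\rceil-1}{\lceil\SingBound\rceil+1}$ when $\varepsilon$ is small relative to the rounding gap, those works give hard instances $G_0$ of maximum degree at most $\lceil\SingBound\rceil$. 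It remains to massage $G_0$ into a graph $G$ with the \emph{exact} value $\nnorm{(\NBMatrix)^N}{1/N}{2}=\SingBound$ while (i) keeping maximum degree at most $\maxDeg$ and (ii) preserving the hardness of computing $\PartFun$. For (ii), the standard trick is to take a disjoint union of $G_0$ with an auxiliary ``gadget'' graph $G_1$ on which the Ising partition function is trivially computable in polynomial time (so that an FPRAS for $G_0\sqcup G_1$ yields one for $G_0$, since $\PartFun(G_0\sqcup G_1)=\PartFun(G_0)\cdot\PartFun(G_1)$). For (i) and the exact-norm condition, I would use the fact that $\nnorm{(\NBMatrix)^N}{1/N}{2}$ over a disjoint union is the maximum of the component values, and that one has explicit control over this quantity for simple gadgets: e.g.\ a large $\lceil\SingBound\rceil$-regular high-girth graph (or a suitable tree-like piece) realizes a $k$-non-backtracking norm arbitrarily close to $\lceil\SingBound\rceil-1$ or to the relevant regular-tree value, and by interpolating the gadget's degree/structure one can tune the union's norm to hit \emph{exactly} $\SingBound$. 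One must double-check that adding the gadget does not push the norm of the union above $\SingBound$ due to $G_0$ itself — this is why we need the degree of $G_0$ controlled by $\lceil\SingBound\rceil\le\maxDeg$, and possibly a preliminary step replacing $G_0$ by a hardness instance whose own $k$-non-backtracking norm is already at most $\SingBound$ (achievable since the hard gadgets in \cite{SS14,GSV16Hardness} can be taken locally tree-like, keeping their norm near $\lceil\SingBound\rceil-1<\SingBound$, using a relation of the type in \Cref{prop:HSPRadCCLCC}).

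\textbf{Main obstacle.} The delicate point is the simultaneous calibration in the hardness half: producing a family of instances that is genuinely hard, has maximum degree bounded by $\maxDeg$, and has $k$-non-backtracking norm \emph{exactly} equal to the prescribed $\SingBound$ (not merely bounded by it), uniformly as $G_0$ ranges over the hardness construction. The cleanest route, which I would pursue, is to fix once and for all a tunable gadget whose $(\NBMatrix)^N$-norm can be set to any target value in a dense set, bound $G_0$'s own norm strictly below $\SingBound$ via its local tree-likeness together with the monotonicity of $\lcritical(\cdot)$ and \Cref{prop:HSPRadCCLCC}, and then let the gadget's norm be the unique component achieving the value $\SingBound$ in the disjoint union. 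The role of $\varepsilon$ being bounded away from the integer-rounding gap $\lceil\SingBound\rceil-\SingBound$ is what makes the hardness threshold $\frac{\lceil\SingBound\rceil-1}{\lceil\SingBound\rceil+1}$ compatible with the stated range $\beta\le\frac{\SingBound-1+\varepsilon}{\SingBound+1-\varepsilon}$, and I would state this compatibility as an explicit inequality at the start of the proof.
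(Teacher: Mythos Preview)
Your algorithmic half is correct and is exactly what the paper does: it invokes \Cref{thrm:Ising4SPRadiusHash} and the standard sampling-to-counting reduction.

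For the hardness half, note first that the paper does not actually give a proof of \Cref{thrm:HarndessTransIsingNBM}: it says the argument is ``almost identical'' to that of \Cref{thrm:HarndessTransHCNBM} and omits it. That proof in turn is a one-paragraph invocation of the Sly--Sun/Galanis--\v Stefankovi\v c--Vigoda hardness at maximum degree $\lceil\SingBound\rceil+1$, together with the observation that any $G$ with $\nnorm{(\NBMatrix)^N}{1/N}{2}=\SingBound$ must have $\maxDeg\ge\lceil\SingBound\rceil+1$. In particular, the paper does \emph{not} perform the exact-norm calibration you are worrying about; it relies implicitly on the fact that the hard instances in \cite{SS14,GSV16Hardness} are essentially $(\lceil\SingBound\rceil+1)$-regular, hence have $k$-non-backtracking norm $\lceil\SingBound\rceil$. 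This is also why \Cref{thrm:HarndessTransHCNBM} uses $\lceil\SingBound\rceil$ rather than $\SingBound$ in its threshold; the Ising statement \Cref{thrm:HarndessTransIsingNBM} is written with $\SingBound$ directly, but the intended argument is the same and the statement should be read with the same integrality caveat.

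So your proposal is more careful than the paper on this point, and you have correctly identified the genuine subtlety. Two corrections to your write-up, though. First, the degree parameter should be $\Delta'=\lceil\SingBound\rceil+1$, not $\lceil\SingBound\rceil$: the tree-uniqueness threshold associated to maximum degree $\Delta'$ is $\frac{\Delta'-2}{\Delta'}=\frac{\lceil\SingBound\rceil-1}{\lceil\SingBound\rceil+1}$, which is the quantity you want to compare against $\frac{\SingBound-1+\varepsilon}{\SingBound+1-\varepsilon}$. Second, your disjoint-union gadget gives norms in a dense set, not every real $\SingBound$ exactly; since $\nnorm{(\NBMatrix)^N}{1/N}{2}$ takes only countably many values over finite graphs, ``norm exactly $\SingBound$'' is only meaningful for realisable $\SingBound$ anyway, and for those one should argue directly that a single fixed graph $G_1$ realises $\SingBound$ and then take $G_0\sqcup G_1$ with $G_0$ a hard instance of smaller norm. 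That cleaner version of your gadget argument works, and is a legitimate (and more thorough) alternative to the paper's terse reduction; the paper simply sidesteps the issue by aligning the threshold with the integer $\lceil\SingBound\rceil$.
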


\subsection{Results with backtracking walks}

We    explore the direction of using   {\em backtracking} walks to establish rapid mixing bounds, i.e., 
rather than  the  $k$-non-backtracking ones.  This approach naturally formalises in terms of the {\em adjacency matrix}  
$\Adjacency_G$.  
Recall that  $\Adjacency_G$ is a 0/1 matrix indexed by   the vertices of $G$ such that   for any  $\ku,\kw$,  we have 
\begin{align}\nonumber 
\Adjacency_G( \ku,\kw)= \Ind\{\textrm{$\ku, \kw$ are adjacent in $G$}\} \enspace.
\end{align} 
The rapid mixing bounds we obtain are  expressed in terms of $\| \Adjacency_G \|_2$, 
which is equal to the {\em spectral radius}  of the matrix.
Utilising  matrix $\Adjacency_G$ and its norms to obtain rapid mixing bounds has been considered before in the literature.   The seminal  work 
of Hayes in \cite{Hayes06} first introduces this approach by considering $\|\Adjacency_G\|_2$. Subsequently, further extensions 
were introduced   in \cite{DGJ09,Hayes07}.

 Compared to $\NBMatrix$, matrix $\Adjacency_G$ is very well-behaved, e.g., it is symmetric, irreducible, etc. This nice behaviour of
 $\Adjacency_G$ gives rise to a  more elegant analysis  than what we have with $\NBMatrix$.
In that regard, we obtain the following  result for the Hard-core model.

\begin{theorem}\label{thrm:HC4SPRadiusAdj}
For any $\varepsilon \in (0,1)$, for $\maxDeg, \aspradius>1$, consider graph $G=(V,E)$ 
of maximum degree $\maxDeg$ such that $\norm{ \Adjacency_G}{2}=\aspradius$. 
Also, let $\mu$ be the Hard-core model on $G$ with fugacity $\lambda\leq (1-\varepsilon)\lcritical(\aspradius)$.

There is a constant $C={C(\maxDeg, \aspradius, \varepsilon)}$ such that the mixing time of Glauber dynamics on $\mu$ 
is at most $C n\log n$.
\end{theorem}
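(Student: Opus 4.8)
The plan is to reduce \Cref{thrm:HC4SPRadiusAdj} to \emph{spectral independence}: we show that $\mu$, and in fact every pinning $\mu^{\Lambda,\tau}$ of it, is $\eta$-spectrally independent for some $\eta=\eta(\maxDeg,\aspradius,\varepsilon)$, and then invoke the by-now standard implication that spectral independence, together with marginal boundedness and a bound on the maximum degree, yields an $O(n\log n)$ mixing time for Glauber dynamics (\cite{OptMCMCIS,VigodaSpectralIndB}). The argument is meant to run in parallel with that of \Cref{thrm:HC4SPRadiusHash}, but it is cleaner: $\Adjacency_G$ is real symmetric, hence normal, so $\nnorm{(\Adjacency_G)^{N}}{1/N}{2}=\norm{\Adjacency_G}{2}=\aspradius$ for \emph{every} $N$, and the spectral quantity that for the $\kk$-non-backtracking matrix $\NBMatrix$ has to be accessed through large powers and a limit is here simply $\aspradius$ itself.

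To control the spectral independence constant, fix a pinning $(\Lambda,\tau)$ and let $\infmatrix^{\tau}$ be the pairwise influence matrix of $\mu^{\Lambda,\tau}$; it suffices to bound its spectral radius. Using Weitz's self-avoiding-walk tree $\Tsaw(G,\kv)$ \cite{Weitz} and the Hard-core tree recursion, one linearises the influences: each entry $|\infmatrix^{\tau}(\ku,\kv)|$ is at most a sum, over self-avoiding walks $P$ from $\ku$ to $\kv$ in $G\setminus\Lambda$, of products $\prod_{e\in P}\alpha_e$ of one-step contraction factors of the recursion, measured in a suitable \emph{potential} metric $\potF$ (chosen exactly as in \cite{ConnectiveConst}). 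Writing $\infmatrixB$ for the walk-generating matrix with entries $\infmatrixB(\ku,\kv)=\sum_{P:\ \ku\to\kv}\prod_{e\in P}\alpha_e$, so that $|\infmatrix^{\tau}|\le\infmatrixB$ entrywise, the target estimate is
\[
\spradius\bigl(\infmatrix^{\tau}\bigr)\ \le\ \spradius\bigl(\infmatrixB\bigr)\ \le\ \frac{C_0}{1-q},
\]
with $C_0=C_0(\maxDeg)$ and $q=q(\maxDeg,\aspradius,\varepsilon)\in(0,1)$ whenever $\lambda\le(1-\varepsilon)\lcritical(\aspradius)$. The second inequality is where normality of $\Adjacency_G$ is used: bounding $\spradius(\infmatrixB)$ by a function of $\aspradius$ amounts to diagonalising $\Adjacency_G$, whereas for $\NBMatrix$ one must instead tame a non-normal matrix through its powers. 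Since the self-avoiding-walk-tree bound only shrinks under pinning, the same $\eta=C_0/(1-q)$ works for every $\mu^{\Lambda,\tau}$.

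Marginal boundedness is immediate and depends only on the degree: under any pinning and for any unpinned vertex $\kv$, both spin-probabilities at $\kv$ are bounded away from $0$ and $1$ by constants depending only on $\lambda$ and $\maxDeg$ (e.g.\ $\mu^{\Lambda,\tau}(\sigma_{\kv}=0)\ge(1+\lambda)^{-1}$ and $\mu^{\Lambda,\tau}(\sigma_{\kv}=1)\ge\lambda(1+\lambda)^{-\maxDeg-1}$). Feeding $\eta$-spectral independence, the marginal lower bound, and the maximum degree $\maxDeg$ into the optimal-mixing machinery then gives the claimed $C\,n\log n$ bound with $C=C(\maxDeg,\aspradius,\varepsilon)$.

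The main obstacle is the estimate $\spradius(\infmatrixB)\le C_0/(1-q)$ \emph{with the threshold governed by $\lcritical(\aspradius)$} rather than by $\maxDeg$ or by a cruder quantity. A naive route --- bounding every $\alpha_e$ by a uniform worst case and counting walks by powers of $\Adjacency_G$ --- only reproduces a Hayes-type condition of the form $\lambda\aspradius\lesssim 1$; recovering the optimal $\lcritical(\aspradius)$ forces one to retain the degree-dependence of the Hard-core recursion (a high-degree vertex carries many walks but contributes a small contraction factor at each passage) inside the potential-function analysis, so that this degree-dependence is exactly absorbed against the walk-growth rate $\aspradius$. This is precisely the step that, for the non-backtracking matrix, is carried out in the proof of \Cref{thrm:HC4SPRadiusHash}; the present statement follows by specialising that analysis to the symmetric, irreducible matrix $\Adjacency_G$, where the absence of pathologies makes the spectral bookkeeping routine and the quantity $\nnorm{(\NBMatrix)^N}{1/N}{2}$ collapses to $\norm{\Adjacency_G}{2}=\aspradius$.
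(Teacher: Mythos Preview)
Your high-level plan is exactly the paper's: establish that $\mu$ is $\eta$-spectrally independent for some $\eta=\eta(\maxDeg,\aspradius,\varepsilon)$ via the potential-function analysis of \cite{ConnectiveConst} on the self-avoiding-walk tree, check $b$-marginal boundedness, and invoke \Cref{thrm:SPINLOGN}. Where your sketch goes astray is in the last paragraph: the adjacency-matrix bound does \emph{not} follow by ``specialising'' the non-backtracking analysis of \Cref{thrm:HC4SPRadiusHash}. The matrix $\NBMatrix$ is indexed by length-$k$ self-avoiding walks, not vertices, and does not collapse to $\Adjacency_G$ for any $k$; the non-backtracking proof goes through an auxiliary \emph{extended influence matrix} $\ExtdInfMatrixF$ (\Cref{sec:ExtInfluenceMatrixNew}) precisely because the weighting one needs cannot be expressed as $\UpD^{-1}\infmatrix^{\Lambda,\tau}_G\UpD$ for a diagonal $\UpD$. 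That machinery is unnecessary here and does not transfer.

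The paper instead proves the adjacency case directly and more simply via \Cref{thrm:AdjacencyPotentialSpIn}: one takes the diagonal matrix $\UpD$ with $\UpD(\kv,\kv)=\maxeigenv(\kv)^{1/\pfs}$, where $\maxeigenv$ is the Perron eigenvector of $\Adjacency_G$, and bounds $\norm{\UpD^{-1}\infmatrix^{\Lambda,\tau}_G\UpD}{\infty}$ by propagating the potential-function contraction down the SAW tree (\Cref{thrm:InflNormBound4GeneralD}). The eigenvector weighting is what converts the walk-count $\sum_{\ku}(\EdgeToV\cdot\Adjacency_G^{\ell-1})(\kw\kz,\ku)\UpD^{\pfs}(\ku,\ku)$ into $\aspradius^{\ell-1}\maxeigenv(\kz)$, producing a geometric series in $(\delta\aspradius)^{1/\pfs}$. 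This is the ``degree-dependence absorbed against $\aspradius$'' you allude to, but it is achieved by eigenvector weighting of $\norm{\cdot}{\infty}$, not by diagonalisation of a walk matrix $\infmatrixB$. Finally, one combines this with \Cref{thrm:GoodPotentialHC} (the $(\pfs,\delta,c)$-potential with $\delta\le 1/\dcritical(\lambda)$) and \Cref{claim:InterpretGoodPotentialHC} (translating $\lambda\le(1-\varepsilon)\lcritical(\aspradius)$ into $\delta\le(1-z)/\aspradius$) to get $\spradius(\infmatrix^{\Lambda,\tau}_G)=O(1)$, exactly as in the paper's three-line proof in \Cref{sec:MainResultsHC}.
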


The maximum degree bounds that apply in our setting are those from \cite{OptMCMCIS,VigodaSpectralIndB}.
 \Cref{thrm:HC4SPRadiusAdj} improves on these bounds  when  $\spradius(\Adjacency_G)<\maxDeg-1$. 
\Cref{tab:SpectralRadiusResults} shows  some notable families of graphs where \Cref{thrm:HC4SPRadiusAdj} 
 gives significant improvements over the maximum degree bounds. 
\Cref{thrm:HC4SPRadiusAdj} also  improves   on the  results  in   \cite{Hayes06}, which 
establishes  optimum mixing for the  Hard-core model with fugacity  $\lambda<\frac{1}{\aspradius}$. The bound 
we obtain here  is, roughly,   $\lambda<\frac{e}{\aspradius}$. That is, we get an improvement of  factor $e$ 
for the bounded degree graphs. 
Unlike our  approach that  utilises  the Spectral Independence method,  the  results    in \cite{DGJ09,Hayes06,Hayes07} 
rely on the {\em path coupling} technique  by Bubley and Dyer in  \cite{PathCoupling}.  

\begin{table}
    \centering
    \begin{tabular}{ c cc }
        \hline \\ [-10pt]
        \begin{tabular}{c}Graph with \\ \hline max Degree $\maxDeg$ \end{tabular}& & \begin{tabular}{c} Spectral radius of \\ \hline the Adjacency matrix  \end{tabular}  
	\\ [8pt] \hline \\  [-10pt]
        Planar & $\quad$&$\sqrt{8(\maxDeg-2)}+2\sqrt{3}$ 
        \\ [3pt]
        $r$-degenarate & & $ 2\sqrt{r(\maxDeg-r)} $ 
        \\ [3pt]
        Euler genus $g$ & & $\sqrt{8(\maxDeg-2g+4 )}+2g+4$   
        \\ [3pt]
        \hline  \\ [-8pt]
    \end{tabular}
    \caption{Bounds on the spectral radius of the Adjacency matrix, from  \cite{SPRadiusPlannar,Hayes06}}
            \label{tab:SpectralRadiusResults}
\end{table}

As far as the Ising model and the adjacency matrix are concerned, we obtain the following
result. 
\begin{theorem}\label{thrm:Ising4SPRadiusAdj}
For $\varepsilon\in (0,1)$, $\lambda>0$, $\aspradius>1$ and integer $\maxDeg >1$, 
consider graph $G=(V,E)$ of maximum degree $\maxDeg$ such that $\norm{\Adjacency_G}{2}=\aspradius$.
Let $\mu$ be the Ising model on $G$ with external field $\lambda$ and inverse temperature $\beta\in \UnIsing(\aspradius,\varepsilon)$. 

There is a constant $C={C( \maxDeg,\aspradius, \varepsilon)}$ such that the mixing time of Glauber dynamics 
on $\mu$ is at most $C n\log n$. 
\end{theorem}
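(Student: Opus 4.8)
The plan is to derive this theorem from the non-backtracking version, Theorem \ref{thrm:Ising4SPRadiusHash}, by comparing the spectral data of the adjacency matrix $\Adjacency_G$ with that of the $\kk$-non-backtracking matrix $\NBMatrix$. First I would specialise to $\kk=1$, where $\NBMatrix$ is the Hashimoto matrix $\NBMatrixE_G$. The classical Ihara--Bass type identity relates the eigenvalues of $\NBMatrixE_G$ to those of $\Adjacency_G$; more to the point for our purposes, the number of $1$-non-backtracking walks of length $N$ emanating from a fixed vertex is bounded above by the number of ordinary walks of length $N$, so that the relevant operator norm satisfies $\nnorm{(\NBMatrixE_G)^N}{1/N}{2} \le C'\cdot\norm{\Adjacency_G}{2}$ for an absolute (or $\maxDeg$-dependent) constant $C'$, and in the limit $N\to\infty$ the growth rate of non-backtracking walks is at most the growth rate of all walks, which is $\spradius(\Adjacency_G)=\norm{\Adjacency_G}{2}$ since $\Adjacency_G$ is symmetric. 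Thus for every $\varepsilon'>0$ there is $N$ large with $\nnorm{(\NBMatrixE_G)^N}{1/N}{2}\le (1+\varepsilon')\aspradius =: \SingBound$.

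The second step is to feed this $\SingBound$ into Theorem \ref{thrm:Ising4SPRadiusHash}. Given $\beta\in\UnIsing(\aspradius,\varepsilon)$, I need $\beta\in\UnIsing(\SingBound,\varepsilon/2)$ (say); this holds once $\varepsilon'$ is chosen small enough as a function of $\varepsilon$ and $\aspradius$, because the endpoints $\frac{\kz-1+\delta}{\kz+1-\delta}$ of the interval $\UnIsing(\kz,\delta)$ are jointly continuous in $(\kz,\delta)$ and moving $\kz$ from $\aspradius$ to $(1+\varepsilon')\aspradius$ can be absorbed by the slack $\varepsilon$ in $\delta$. Concretely one checks that $\UnIsing(\aspradius,\varepsilon)\subseteq \UnIsing((1+\varepsilon')\aspradius,\varepsilon/2)$ for $\varepsilon'$ small; this is a monotonicity/continuity computation on a one-parameter family of intervals. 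Then Theorem \ref{thrm:Ising4SPRadiusHash} yields mixing time at most $C(\maxDeg,\SingBound,\varepsilon/2)\, n\log n$, and since $\SingBound$ and the choice of $\varepsilon'$ (hence $N$) depend only on $\maxDeg,\aspradius,\varepsilon$, the resulting constant is of the claimed form $C(\maxDeg,\aspradius,\varepsilon)$.

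An alternative, and perhaps cleaner, route avoids the limiting argument: run the analysis underlying Theorem \ref{thrm:Ising4SPRadiusHash} directly but replace the $\NBMatrix$-based influence bound with an $\Adjacency_G$-based one. Since $\Adjacency_G$ is symmetric, irreducible and nonnegative, one has $\norm{(\Adjacency_G)^N}{2}^{1/N}=\norm{\Adjacency_G}{2}=\aspradius$ exactly for all $N$, so the structural estimates that in the non-backtracking case only hold ``essentially'' (up to $o(1)$ in $N$) here hold on the nose. The self-avoiding-walk / influence-matrix spectral bound that drives Spectral Independence is then controlled termwise by powers of $\Adjacency_G$ rather than $\NBMatrix$, and the better algebraic behaviour of $\Adjacency_G$ (normality) makes the potential-function argument go through with the interval $\UnIsing(\aspradius,\varepsilon)$ directly. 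Either way, I expect the main obstacle to be the bookkeeping in the second step: verifying that the $\varepsilon$-slack in the uniqueness interval $\UnIsing(\cdot,\cdot)$ genuinely absorbs the multiplicative loss incurred when passing between $\aspradius$ and the norm of a matrix power, and confirming that this loss can be made arbitrarily small uniformly over the allowed range of parameters (equivalently, that the dependence of $N$ on $\varepsilon$ does not feed back into $\SingBound$ in a circular way). The Hard-core analogue, Theorem \ref{thrm:HC4SPRadiusAdj}, is obtained by the identical reduction, using that $\lcritical(\kz)$ is continuous and decreasing so that $\lambda\le(1-\varepsilon)\lcritical(\aspradius)\le (1-\varepsilon/2)\lcritical((1+\varepsilon')\aspradius)$ for $\varepsilon'$ small.
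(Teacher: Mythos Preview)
Your reduction to Theorem~\ref{thrm:Ising4SPRadiusHash} is essentially sound, but it is not the route the paper takes, and it is considerably more roundabout. The paper proves Theorem~\ref{thrm:Ising4SPRadiusAdj} directly via Theorem~\ref{thrm:AdjacencyInfinityMixing}: Lemma~\ref{lemma:IsingInfNormBound} shows that for $\beta\in\UnIsing(\aspradius,\varepsilon)$ one has $\sup_{\bf y}\norm{\nabla H_d({\bf y})}{\infty}\le (1-\varepsilon)/\aspradius$, i.e.\ $\delta$-contraction with $\delta=(1-\varepsilon)/\aspradius$; Theorem~\ref{thrm:AdjacencyInfinityMixing} then gives $\spradius(\infmatrix^{\Lambda,\tau}_G)\le\varepsilon^{-1}$ by the elementary entrywise bound $|\infmatrix^{\Lambda,\tau}_G|\le\sum_{\ell\ge 0}(\delta\Adjacency_G)^\ell$; and Theorem~\ref{thrm:SPINLOGN} converts this into $O(n\log n)$ mixing. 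No non-backtracking matrices, no extended influence matrix, no limiting argument in $N$ are needed. Your ``alternative route'' gestures in this direction but still imagines re-running the $\NBMatrix$ machinery with $\Adjacency_G$ substituted in; the point is that for the Ising model (where only $\delta$-contraction, not a potential function, is required) there is a much shorter self-contained adjacency-matrix argument, namely Theorem~\ref{thrm:AdjacencyInfinityMixing}.

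As for your primary reduction, two remarks. First, your norm comparison is correct once made precise: every $1$-non-backtracking walk is a walk, so $\norm{(\NBMatrixE_G)^N}{\infty}\le\norm{\Adjacency_G^N}{\infty}$, whence by Gelfand $\spradius(\NBMatrixE_G)\le\spradius(\Adjacency_G)=\aspradius$ and thus $\nnorm{(\NBMatrixE_G)^N}{1/N}{2}\le(1+\varepsilon')\aspradius$ for $N$ large. Second, there is an unhandled edge case: Theorem~\ref{thrm:Ising4SPRadiusHash} assumes $\SingBound>1$, but if $G$ is a tree the Hashimoto matrix is nilpotent and $\SingBound=0$, while a single cycle gives $\SingBound=1$; your reduction does not cover these, whereas the paper's direct argument via Theorem~\ref{thrm:AdjacencyInfinityMixing} does. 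What your reduction buys is that it avoids reproving anything once Theorem~\ref{thrm:Ising4SPRadiusHash} is in hand; what the paper's route buys is a dramatically shorter, self-contained proof that does not invoke the heavier non-backtracking apparatus at all.
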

The results we obtain in \Cref{thrm:Ising4SPRadiusAdj} for the Ising model match those in \cite{Hayes06}.

\subsubsection*{Notation} Consider  graph $G=(V,E)$, we a Gibbs distribution $\mu_G$ on 
$\{\pm 1\}^V$. 
For $\Lambda\subset V$,  we let $\mu_{\Lambda}$ denote the marginal of $\mu$ at set $\Lambda$.
We also let $\mu_G(\cdot \ |\ M, \sigma)$ and $\mu^{M,\sigma}_G$ denote the distribution $\mu$ conditional on 
the configuration at $M\subset V$ being $\sigma$. 
Sometimes, we use the term ``pinning" to refer to $(M,\sigma)$.  Also, we interpret the conditional
marginal $\mu_{\Lambda}(\cdot \ |\ M, \sigma)$ in the natural way. Similarly, for $\mu^{M,\sigma}_{\Lambda}$.

For vertex $\kw\in V$, we let $N_G(\kw)$ be the set of vertices which are adjacent to $\kw$ in $G$.


 \begin{figure}
 \begin{minipage}{.45\textwidth}
 \centering
		\includegraphics[width=.41\textwidth]{./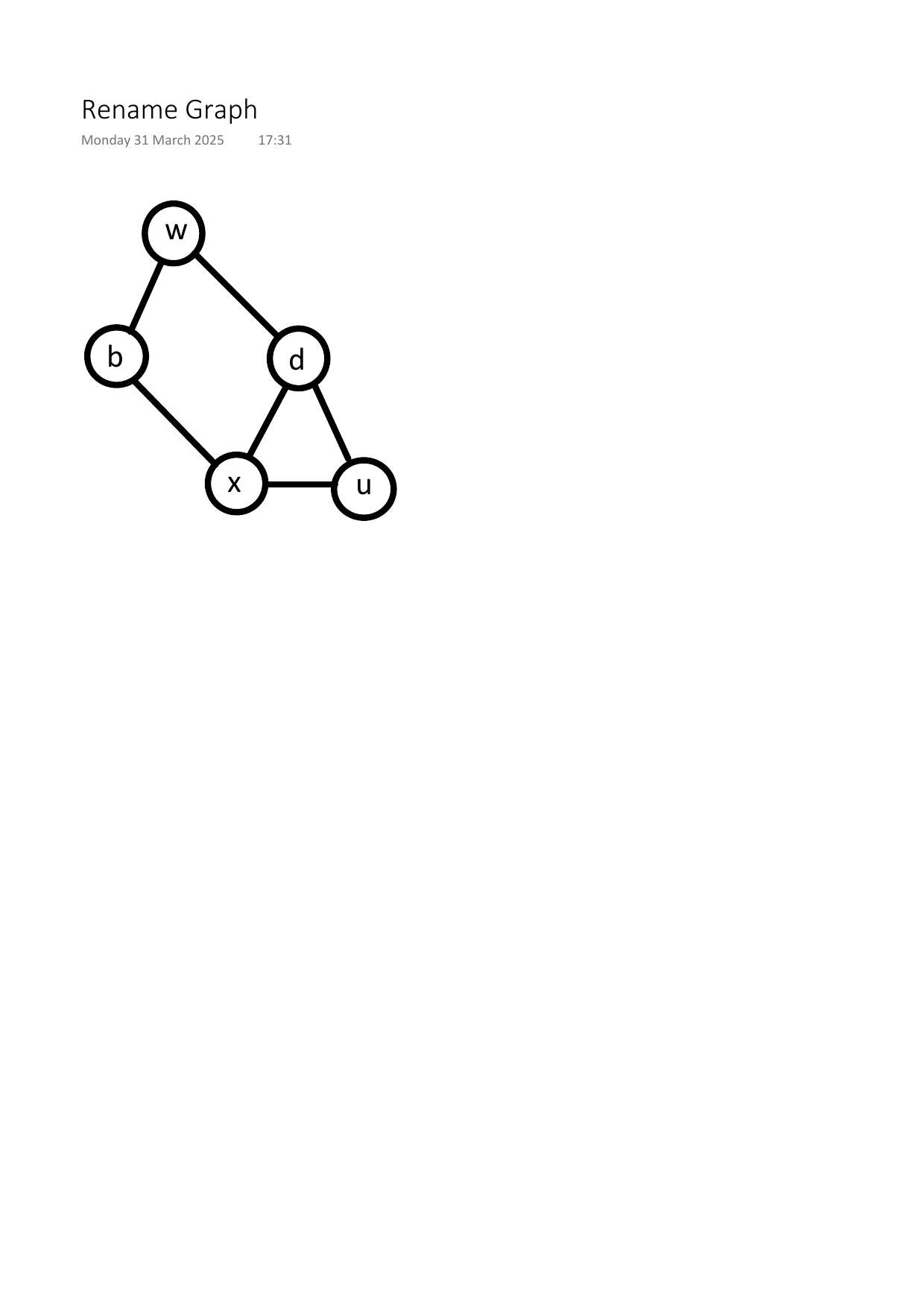}
		\caption{Initial graph $G$}
	\label{fig:Base4TSAWIntro}
\end{minipage}
 \begin{minipage}{.4\textwidth}
 \centering
	\includegraphics[height=3.1cm]{./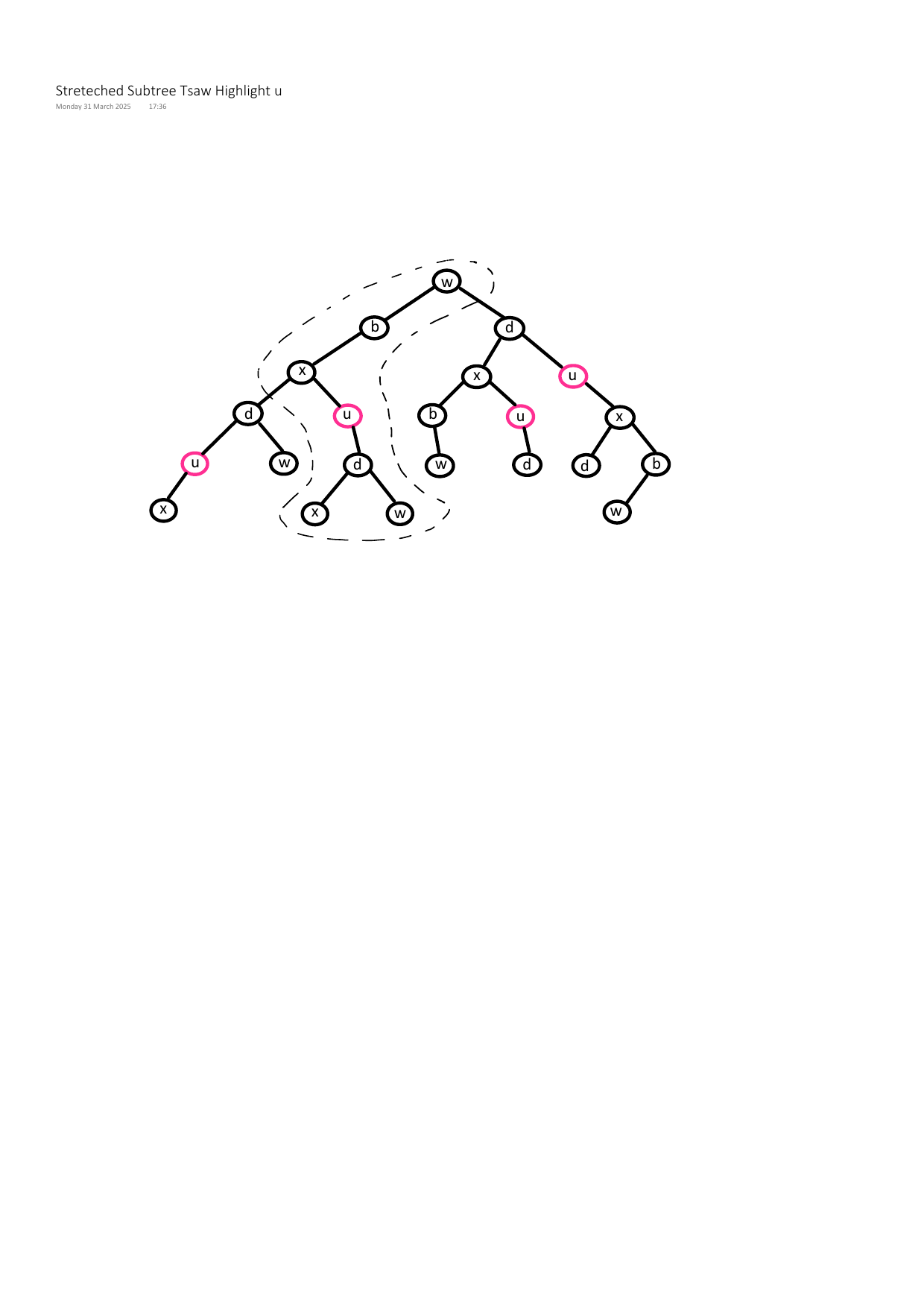}
		\caption{$\Tsaw(G, \kw)$}
	\label{fig:TSAWExampleIntro}
\end{minipage}
\end{figure}

\section{Approach \LastReviewG{2025-04-02} }\label{sec:Approach}

We utilise the Spectral Independence method to show our rapid mixing results.  
Central to this method is the notion of the   {\em pairwise influence matrix} $\infmatrix^{\Lambda,\tau}_{G}$.
Let us briefly visit this notion.   

Let graph $G=(V,E)$ and a Gibbs distribution $\mu$ on $\{\pm 1\}^V$.
For  $\Lambda\subset V$ and a configuration $\tau$ at $\Lambda$, 
the  $\infmatrix^{\Lambda,\tau}_{G}$ is a matrix  indexed by the vertices in 
$V\setminus \Lambda$ such that  for any $\kw,\ku\in V\setminus \Lambda$, we have
\begin{align}\label{def:InfluenceMatrixHighLevelA}
\infmatrix^{\Lambda,\tau}_{G}(\kw,\ku) &= \mu_{\ku }(+1\ |\ (\Lambda, \tau), (\kw, +1))- \mu_{\ku }(+1\ |\ (\Lambda, \tau), (\kw, -1)) 
\enspace.
\end{align} 
The Gibbs marginal $\mu_{\ku }(+1 \ |\ (\Lambda, \tau), (\kw, +1))$ indicates the probability that vertex $\ku$ gets $+1$, 
conditional on the configuration at $\Lambda$ being $\tau$ and the configuration at $\kw$ being $+1$. 
We have the analogous for the marginal  $\mu_{\ku }(+1 \ |\ (\Lambda, \tau), (\kw, -1))$.

For the cases we consider here if the maximum eigenvalue of  $\infmatrix^{\Lambda,\tau}_{G}$ is bounded, 
then the corresponding Glauber dynamics on $\mu$ has
 $O(n\log n)$  mixing time. This follows from  \cite{VigodaSpectralIndB}. 
The rapid mixing results  in \Cref{sec:Introduction} are obtained by establishing  such bounds on 
the maximum eigenvalue of  $\infmatrix^{\Lambda,\tau}_{G}$. In particular, we establish that the spectral radius 
$\spradius( \infmatrix^{\Lambda,\tau}_{G})$ is $O(1)$. 

Before showing how we obtain such bounds for $\spradius( \infmatrix^{\Lambda,\tau}_{G})$, let us visit 
the notion of the tree of self-avoiding walks that we use in the analysis with $\infmatrix^{\Lambda,\tau}_{G}$.

\subsubsection*{The $\Tsaw$ construction:}
It is a well-known fact that each entry $\infmatrix^{\Lambda,\tau}_{G}(\kw,\ku)$ can 
be expressed in terms of a topological construction called the {\em tree of self-avoiding walks} 
together with weights on the paths of this tree that we call {\em influences}. 
It is worth giving a high-level (hence imprecise) description of this relation. For a formal description,  
see \Cref{sec:TsawConstruction}.

Recall that a walk is {\em self-avoiding} if it does not repeat vertices. For vertex $\kw$ in $G$, we 
define $T=\Tsaw(G,\kw)$, the tree of self-avoiding walks starting from $\kw$, as follows: Consider 
the set of walks $\kv_0, \ldots, \kv_{\kr}$ in graph $G$ that emanates from vertex $\kw$, i.e., 
$\kv_0=\kw$, while one of the following two holds
\begin{enumerate}
\item $\kv_0, \ldots, \kv_{\kr}$ is a self-avoiding walk,
\item $\kv_0, \ldots, \kv_{\kr-1}$ is a self-avoiding walk, while there is $\kj\leq \kr-3$ such that $\kv_{\kr}=\kv_{\kj}$.
\end{enumerate}
Each of these walks  corresponds to a vertex in  $T$. Two vertices in this tree are adjacent  if the 
corresponding walks are adjacent, i.e., one extends the other by one vertex.  The vertex $\kz$ in 
$T$ corresponding to walk $\kv_0, \ldots, \kv_{\kr}$ in $G$ is called a {\em copy} of vertex $\kv_{\kr}$ 
in  tree $T$.

\Cref{fig:TSAWExampleIntro} shows an application of the above construction on the  graph shown in 
\Cref{fig:Base4TSAWIntro}.  The tree of self-avoiding walks in this example starts from vertex $\kw$. 
The vertices with label $\kv$ in the tree  correspond to the copies of  vertex $\kv$ of the initial 
graph $G$,  e.g., all the highlighted vertices in \Cref{fig:TSAWExampleIntro} are copies of vertex $\ku$.

Each path in $T=\Tsaw(G,\kw)$ is associated with a real number that we call {\em influence}. 
The influences depend on the parameters of the Gibbs distribution $\mu^{\Lambda,\tau}$. 
For the path of length $1$, which corresponds to  edge $\ke$ in $T$, we let $\infweight(\ke)$ denote 
its influence. For a path $\kP$ of length $>1$, the influence $\infweight(\kP)$ is given by
\begin{align}\nonumber
\infweight(\kP) &= \prod\nolimits_{\ke\in \kP}\infweight(\ke) \enspace,
\end{align}
i.e., $\infweight(\kP)$ is equal to the product  of influences of the edges in this path\footnote{In the 
related literature, influences are defined w.r.t. the vertices of the tree, not the edges. Our formalism 
here is equivalent.}.

The entry $\infmatrix^{\Lambda,\tau}_{G}(\kw,\ku)$ can be expressed
as a sum of influences over paths in $T$, i.e., 
\begin{align}\nonumber %
\infmatrix^{\Lambda,\tau}_{G}(\kw,\ku) & =\sum\nolimits_{\kP} \infweight(\kP) \enspace,
\end{align}
where $\kP$ varies over the paths from the root to the copies of vertex $\ku$ in $T$. 
The $\Tsaw$-construction is quite helpful for the analysis as it allows us to   compute 
the entries of $\infmatrix^{\Lambda,\tau}_{G}$ recursively.

\subsubsection*{Bounds from matrix norms for $\spradius(\infmatrix^{\Lambda,\tau}_{G})$:}
We use  matrix norms\footnote{Unless otherwise specified, the norm $\norm{\UpM}{p}$ for matrix $\UpM$ corresponds to the induced $\ell_p$ to $\ell_p$ matrix norm, i.e., 
$\norm{M}{P}=\max_{\norm{{\bf x}}{p}=1}\norm{\UpM{\bf x}}{p}$. } to bound the spectral radius $\spradius(\infmatrix^{\Lambda,\tau}_{G})$. 

Let us start by considering  the results for the Hard-core model,  expressed in terms of the adjacency matrix $\Adjacency_G$.  
From  standard  linear algebra, we have 
\begin{align*}
\spradius\left( \infmatrix^{\Lambda,\tau}_{G} \right) &\leq \norm{ \UpD^{-1} \cdot \infmatrix^{\Lambda,\tau}_{G} \cdot \UpD }{\infty} \enspace,
\end{align*}
for any non-singular matrix $\UpD$ such that $\UpD$ and $\infmatrix^{\Lambda,\tau}_{G}$ are conformable 
for multiplication.

Choosing $\UpD$ to be the identity matrix, one recovers the  norm used to obtain the max-degree 
bounds in  \cite{VigodaSpectralInd}. Here, we make a non-trivial choice for this matrix. We set $\UpD$ 
to be diagonal, such that $\UpD(\kv,\kv)=\left( \eigenv_1(\kv)\right)^{1/\pfs}$,  with $\eigenv_1$ being the 
principal eigenvector of $\Adjacency_G$. The quantity $\pfs$ is a parameter of the potential function 
we use in the analysis.  Typically, we have $\pfs>1$.

Since we assume that $G$ is connected, matrix  $\Adjacency_G$ is {\em irreducible}. 
The Perron-Frobenius theorem implies that all entries in $\eigenv_1$ are strictly positive; 
hence, $\UpD$ is non-singular.

The estimation of $\norm{\UpD^{-1} \cdot \infmatrix^{\Lambda,\tau}_{G} \cdot \UpD }{\infty}$ is carried 
out by exploiting the $\Tsaw$-constructions for $\infmatrix^{\Lambda,\tau}_{G}$. Specifically, for each 
vertex $\kw$ of  graph $G$, we estimate the weighted influence of the root of $\Tsaw(G,\kw)$ to the 
vertices at each level $\kh$ of the tree. Note that the influences are weighted according to the entries of matrix $\UpD$. 
To illustrate this,  consider the example in  \Cref{fig:WeightedInfluences}.  Instead of just calculating the total influence of the
root to the vertices at level $4$ in  $\Tsaw(G,\kw)$, in this example, we have the extra weight 
$\frac{\UpD(\ku,\ku)}{\UpD(\kw,\kw)}$ on the  influence from the root to each  copy of vertex $\ku$ at 
level $4$ in the tree. 
This calculation gives rise to some  natural tree recursions.

Utilising potential functions, we subsequently  obtain 
\begin{align}
 \norm{ \UpD^{-1} \cdot \infmatrix^{\Lambda,\tau}_{G} \cdot \UpD }{\infty} \leq C\cdot \sum\nolimits_{\kell\geq 0} 
 \left( \delta^{\kell}\cdot \norm{ \Adjacency^{\kell}_G}{2} \right)^{1/\pfs}\enspace, \nonumber
\end{align}
where $C>0$ is a bounded number, while  $\delta$ depends on the parameters of 
the Gibbs distribution. 
Specifically, for any $\lambda<\lcritical(\norm{\Adjacency_G}{2})$, we show that $\delta\leq \frac{1-\epsilon}{\norm{\Adjacency_G}{2}}$, 
where $\epsilon$ increases with the  distance of $\lambda$ from $\lcritical(\norm{\Adjacency_G}{2})$. 
For such $C,\delta$ and $\pfs$,  we  obtain that $\spradius(\infmatrix^{\Lambda,\tau}_{G})$ is bounded.

We would like to use a similar idea  for our results with $\NBMatrix$. It turns out that the weighting we
need to use in this case is more involved. As illustrated in  the example for $\Adjacency_G$  in
\Cref{fig:WeightedInfluences},  all copies of vertex $\ku$ in tree $\Tsaw(G,\kw)$ get the same weight
$\frac{\UpD(\ku,\ku)}{\UpD(\kw,\kw)}$.  Working with $\NBMatrix$,  the weight of each copy of vertex
$\ku$ needs to  also depend on the path that connects this vertex   with the root of the tree. This means
that the various copies of  vertex $\ku$ in $\Tsaw(G,\kw)$ get different weights. This somehow precludes 
the use of norm $\norm{ \UpD^{-1} \cdot \infmatrix^{\Lambda,\tau}_{G} \cdot \UpD }{\infty}$ (or a similar norm) for our
endeavours. 
To this end, rather than using $\infmatrix^{\Lambda,\tau}_{G}$, we introduce a new influence matrix to work with that
we call the {\em extended  influence matrix}  $\ExtdInfMatrixF$.

Let us take things from the beginning. 
For brevity, set  $\nnorm{(\NBMatrix)^N}{1/N}{2}=\SingBound$, for some choice 
of $\kk,N\geq 1$, while assume  that  $\lambda<\lcritical(\SingBound)$. 
We obtain our results for the Hard-core model  in terms of $\NBMatrix$
by showing 
\begin{align}\label{eq:2SNormCIOverviewH}
\spradius( \infmatrix^{\Lambda,\tau}_{G} ) & \leq C\cdot \sum\nolimits_{\kell\geq 0} \left( \delta^{\kell}
\cdot \norm{( \NBMatrix)^{\kell}}{2} \right)^{1/\pfs}\enspace,
\end{align}
where  $C>0$ is a constant,  $\pfs> 1$ is a parameter of the potential functions we are using  in the analysis 
and  $\delta\leq \frac{1-\epsilon}{\SingBound}$,  where $\epsilon>0$ increases with the  distance of 
$\lambda$ from $\lcritical(\SingBound)$. 
For such $C,\delta$ and $\pfs$,  it is not hard to show that the above inequality implies
that $\spradius( \infmatrix^{\Lambda,\tau}_{G})$ is $O(1)$. We have to note, though, that this is not 
entirely straightforward as  the value of  the quantity $\nnorm{( \NBMatrix)^{\kell}}{1/\kell}{2}$  varies with $\kell$.

We introduce the notion of  the extended  influence matrix $\ExtdInfMatrixF$ to establish \eqref{eq:2SNormCIOverviewH}.  
This  is a matrix induced by $G$ and the Gibbs distribution $\mu^{\Lambda,\tau}_G$, while it 
is indexed by the self-avoiding walks of length $\kk$ in $G$ that  do not  intersect with $\Lambda$.

We think of $\ExtdInfMatrixF$ as  a matrix that expresses influences with respect to the Gibbs distribution 
$\mu^{\Lambda,\tau}_G$, but in a more refined way than $\infmatrix^{\Lambda,\tau}_{G}$.
Intuitively,  $\infmatrix^{\Lambda,\tau}_{G}(\kw,\ku)$ accounts for   the amount of information that propagates from 
 vertex $\kw$ to vertex $\ku$ over the paths of graph $G$. 
 In that respect,   
for $\kP$ and $\kQ$, two self-avoiding walks of length $\kk$ that emanate from $\kw$ and $\ku$, respectively, 
the entry $\ExtdInfMatrixF(\kP,\kQ)$ expresses  the amount of information propagating  from $\kw$  to $\ku$ 
over the paths of  graph $G$  that overlap with $\kP$ and $\kQ$. 

In what follows,  we give a more concrete description of the extended influence matrix.

 \begin{figure}
 \begin{minipage}{.65\textwidth}
 \centering
		\includegraphics[height=4cm]{./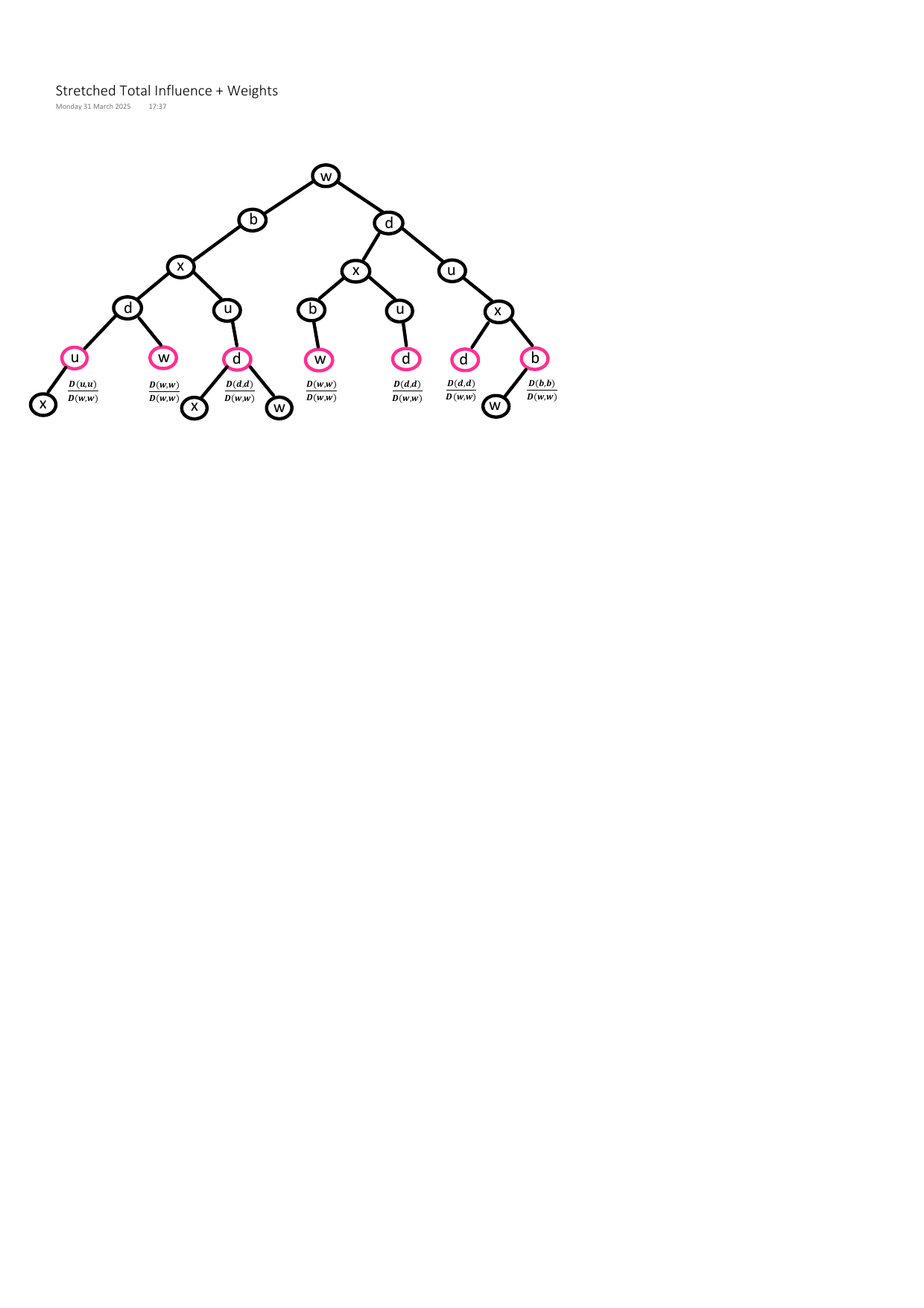}
		\caption{Weighted Influences}
	\label{fig:WeightedInfluences}
\end{minipage}
 \begin{minipage}{.4\textwidth}
\end{minipage}
\end{figure}

\subsubsection*{The extended influence matrix $\ExtdInfMatrixF$:} 
The definition of $\ExtdInfMatrixF$ can get  too technical for this high-level exposition. 
For this reason,  we focus on the simplest case by considering the parameter $\kk=1$.
For a formal definition of $\ExtdInfMatrixF$,  see \Cref{sec:ExtInfluenceMatrixNew}.

To define the extended influence matrix, we first  introduce the notion of the {\em extension}.
 An extension is an operation that we typically apply to a  graph $G$ or a Gibbs distribution on this graph. 
We start with the description of the graph 
extension. To keep the description simple, assume that we are given the  graph $G$ shown in  
\Cref{fig:Base4TSAWIntro}. W.l.o.g. assume that there is a total ordering of the vertices in $G$.

Every extension of graph $G$ is  specified with respect to a self-avoiding walk $\kP$  of length $\kk$ 
in this graph (here $\kk=1$). The resulting graph is called the $\kP$-extension of $G$ and is 
denoted as $\gext{G}{\kP}$. For our example, let $\kP=\kw\kb$. Then, $\gext{G}{\kP}$ is a new 
graph obtained by applying the following operations on $G$:
\begin{enumerate}[(a)]
\item remove all the edges that are incident to vertex $\kw$, apart from the one that connects 
$\kw$  and $\kb$,
\item for each neighbour $\kz$ of vertex $\kw$ that has been disconnected, insert a new vertex 
$\kw\kz$ and connect it only to vertex $\kz$.
\end{enumerate}
\Cref{fig:ADExtGNew} shows the resulting graph after applying the above operations to graph $G$ in 
\Cref{fig:Base4TSAWIntro}.

Compared to $G$, graph $\gext{G}{\kP}$ has an extra vertex which we denote with two letters, i.e.,
vertex $\kw\kd$. This new vertex is called  {\em split-vertex}. Split-vertices are important when  
we consider the extensions of Gibbs distributions.

Let us now describe the extensions of Gibbs distribution $\mu^{\Lambda,\tau}_G$.  As in the case
of graph $G$, the  extension of  $\mu^{\Lambda,\tau}_G$ is always specified with respect to a 
self-avoiding walk $\kP$ of the underlying graph $G$, while  walk $\kP$ is not allowed to intersect
with $\Lambda$, the set of pinned vertices. 

The $\kP$-extension of  $\mu^{\Lambda,\tau}_G$ is a new Gibbs distribution with underlying graph $\gext{G}{\kP}$, i.e.
the $\kP$-extension of $G$, and it has the same specification as   $\mu^{\Lambda,\tau}_G$, i.e., the same parameters
$\beta,\gamma$ and $\lambda$.  

Letting $\mu^{M,\sigma}_{G_{\kP}}$ be the $\kP$-extension of  $\mu^{\Lambda,\tau}_G$, 
the set of pinned vertices $M$ corresponds to the union of the vertices in $\Lambda$ and the split-vertices of graph $G_{\kP}$. 
The pinning $\sigma$ is specified as follows: for any $\kv\in \Lambda$, we have
$\sigma(\kv)=\tau(\kv)$, while for each  split-vertex $\kv\kz$, we have

\begin{align}
\sigma(\kv \kz) &= \left\{
\begin{array}{lcl}
+1& \quad & \textrm{if $\kx>\kz$}\enspace,\\ 
-1& \quad &\textrm{if $\kx<\kz$} \enspace,
\end{array}
\right. 
\end{align}
where $\kx$ is the vertex after $\kv$ in path $\kP$. The comparison between $\kx$ and $\kz$ is with respect to the
total ordering of the vertices in $G$. 

Before concluding the  description of the extension of  $\mu^{\Lambda,\tau}_G$, we note  that the pinning of 
the split-vertices  is reminiscent of the pinning we have in Weitz's $\Tsaw$-construction in \cite{Weitz}.

For two self-avoiding paths $\kP,\kQ$ in $G$, each of length $\kk= 1$, which do not intersect with each
other, we  also consider the $\{\kP,\kQ\}$-extension  of graph $G$. We obtain this graph by first getting
$\gext{G}{\kP}$,  i.e., the $\kP$-extension of $G$, and then we take the $\kQ$-extension of graph $G_{\kP}$.
Similarly, we obtain the $\{\kP,\kQ\}$-extension of the Gibbs distribution $\mu^{\Lambda,\tau}_G$. That is, 
we first take the $\kP$-extension of $\mu^{\Lambda,\tau}_G$ and then, apply the $\kQ$-extension to the
resulting Gibbs distribution.

 The order in which we consider $\kP$ and $\kQ$ to obtain the $\{\kP,\kQ\}$-extension of graph $G$  does not matter. 
As long as $\kP$ and $\kQ$ do not intersect, one obtains the same  graph $\gext{G}{\kP,\kQ}$ 
 by either taking the $\kP$-extension of $\gext{G}{\kQ}$ or the $\kQ$-extension of $\gext{G}{\kP}$. Similarly
 for $\mu^{\Lambda,\tau}_G$.

 \begin{figure}
 \begin{minipage}{.35\textwidth}
 \centering
		\includegraphics[width=.38\textwidth]{./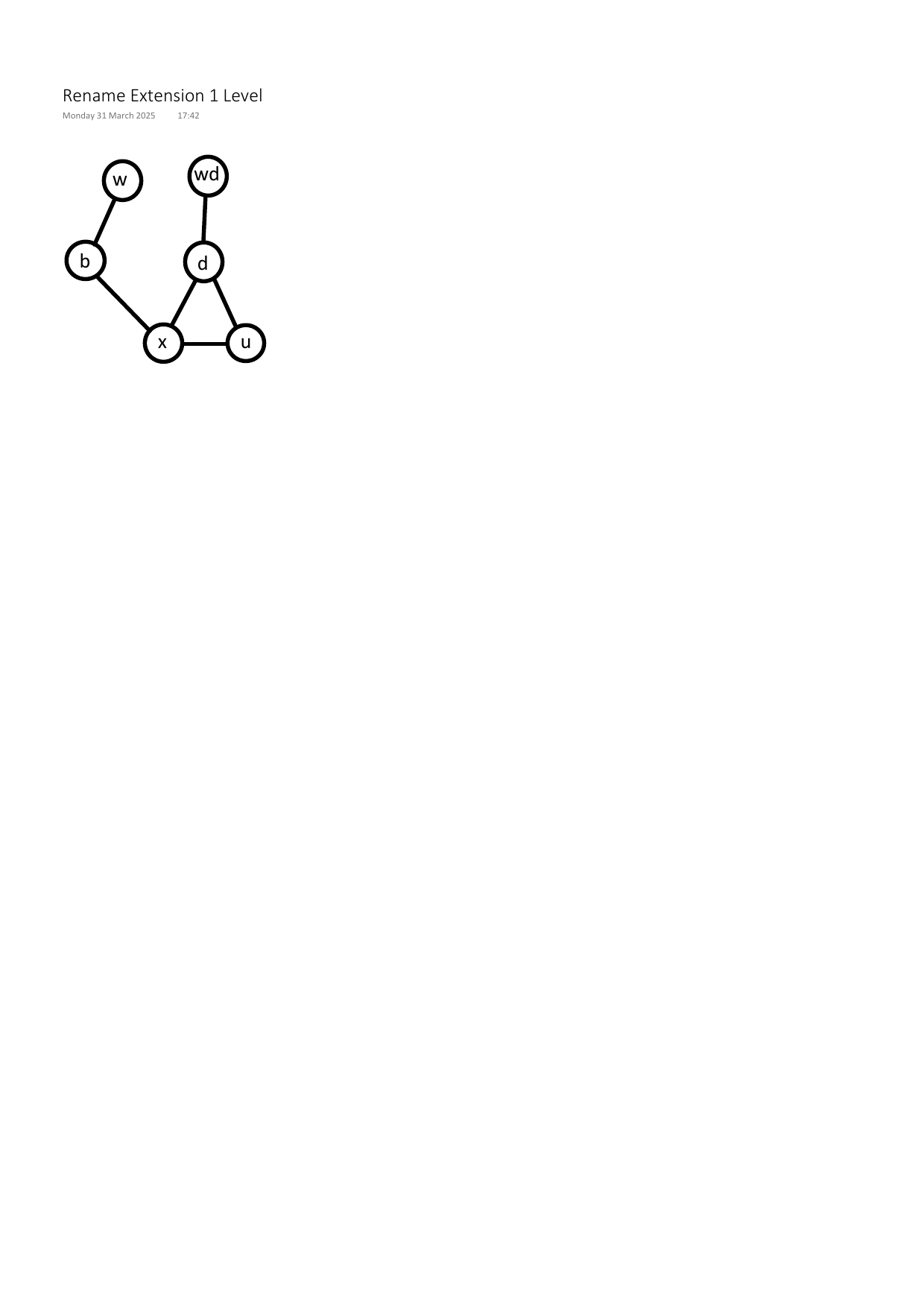}
		\caption{ $\kw\kb$-extension}
	\label{fig:ADExtGNew}
\end{minipage}
 \begin{minipage}{.41\textwidth}
 \centering
		\includegraphics[width=.42\textwidth]{./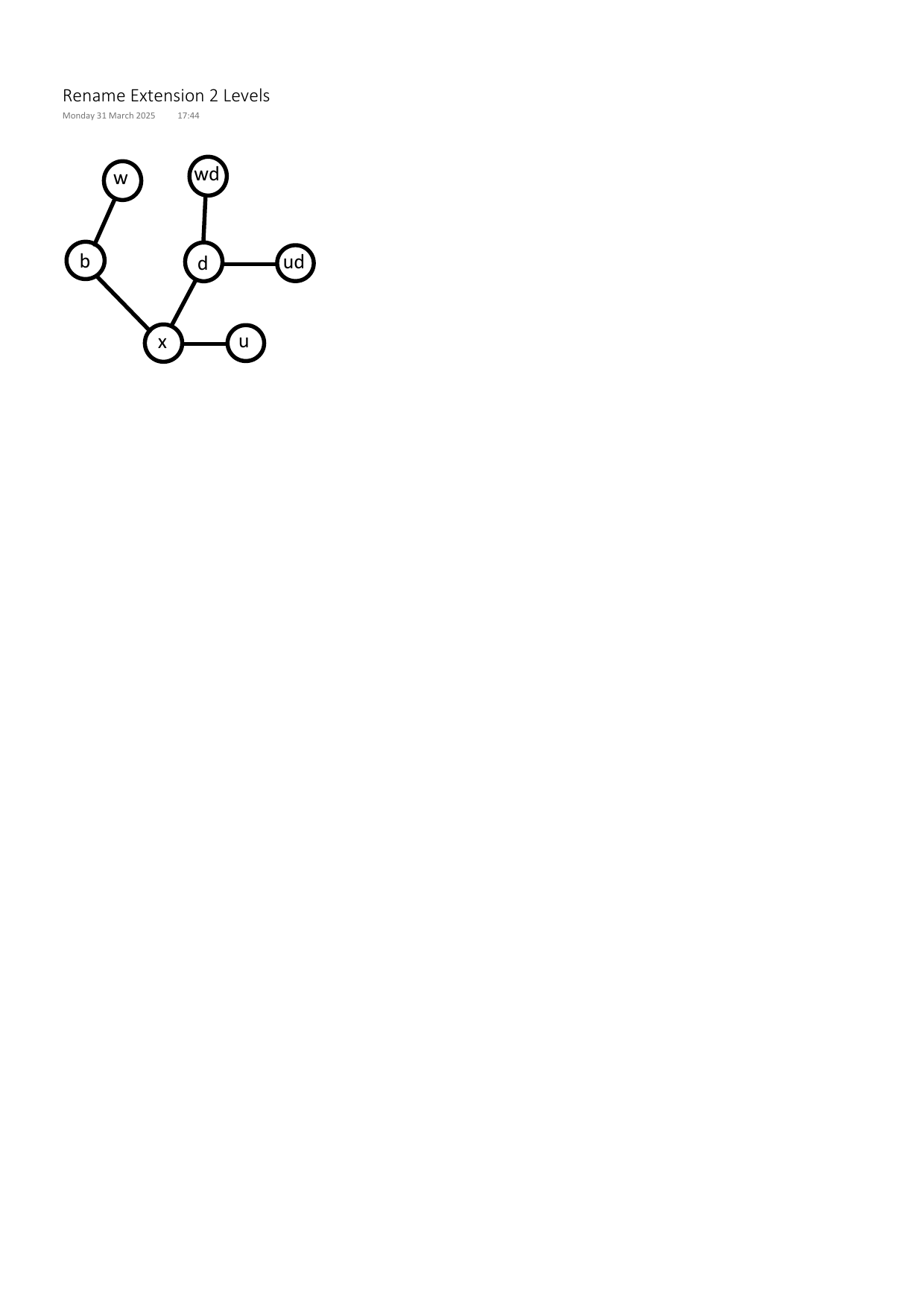}
		\caption{$\{\kw\kb,\ku\kx\}$-extension}
	\label{fig:ABFEExtGNew}
\end{minipage}
\end{figure}

We now proceed with the definition of the extended influence matrix $\ExtdInfMatrixF$.  
Setting $\kk=1$,   matrix $\ExtdInfMatrixF$ is indexed by  self-avoiding walks of length $1$
(ordered pairs of adjacent vertices) in $V\setminus \Lambda$.

For $\kP$ and $\kQ$ self-avoiding walks in graph $G$, each of length $\kk=1$, which do not intersect with each other 
and do not intersect with $\Lambda$,  for matrix $\ExtdInfMatrix=\ExtdInfMatrixF$, the following holds:
 letting $\zeta^{\kP,\kQ}$ be the $\{\kP,\kQ\}$-extension 
of $\mu^{\Lambda,\tau}_G$,  we have
\begin{align}
\ExtdInfMatrix(\kP,\kQ)&=\zeta^{\kP,\kQ}_{\ku}(+1\ |\  (\kw,+1))-\zeta^{\kP,\kQ}_{\ku}(+1\ | \ (\kw,-1))\enspace,
\end{align}
where $\kw$ and $\ku$ are starting  vertices of walks $\kP,\kQ$, respectively.

Letting $\infmatrix^{\kP,\kQ}$ the influence matrix of $\zeta^{\kP,\kQ}$, the $\{\kP,\kQ\}$-extension of $\mu^{\Lambda,\tau}_G$, 
 the above implies that
\begin{align}
\ExtdInfMatrix(\kP,\kQ) &=\infmatrix^{\kP,\kQ}(\kw,\ku)\enspace,
\end{align}
where, as mentioned earlier,  $\kw$ and $\ku$ are the starting  vertices of walks $\kP,\kQ$, respectively.

The above definition implies that each entry of $\ExtdInfMatrix$   is a standard influence, i.e.,  in 
the sense of the influences we have with the entries in the pairwise influence matrix. This allows us to, e.g., utilise the $\Tsaw$-construction in   the analysis with 
$\ExtdInfMatrix$ and exploit  nice algebraic properties that emerge for this  matrix.

\subsubsection*{Connections between $\ExtdInfMatrixF$ with $\infmatrix^{\Lambda,\tau}_{G}$:}
It is  interesting to compare the entries of matrix $\ExtdInfMatrix=\ExtdInfMatrixF$ with those in $\infmatrix=\infmatrix^{\Lambda,\tau}_{G}$.

Consider a Gibbs distribution $\mu_G$ on the graph $G$ shown in \Cref{fig:Base4TSAWIntro}. As before, consider 
$\kk=1$, while for simplicity, let $\Lambda=\emptyset$.  Consider walks $\kP=\kw\kb$ and $\kQ=\ku\kx$ in $G$.
It is natural to compare the entry of the extended influence matrix $\ExtdInfMatrix(\kP,\kQ)$ with   entry  
$\infmatrix(\kw,\ku)$. Note that the entry in $\infmatrix$ is specified by the starting vertices of $\kP$ and $\kQ$, respectively. 

Consider the $\Tsaw$-construction for $\infmatrix(\kw,\ku)$ shown in \Cref{fig:TSAWExampleIntro}.  Then,  
as argued earlier, $\infmatrix(\kw,\ku)$ corresponds to  the sum of influences  over the paths from the root to the copies 
of vertex $\ku$ in this tree, i.e., the highlighted vertices.

Recalling that $\kP=\kw\kb$ and $\kQ=\ku\kx$,  the entry $\ExtdInfMatrix(\kP,\kQ)$ can be described in terms 
of the same tree too. It is  {\em approximately equal} to the sum of  influences  over the paths from the 
root to the copies of $\ku$,  with the {\em extra  constraint} that  the second vertex of each path
is a copy of vertex $\kb$, while the vertex prior  to the last one is a copy of vertex $\kx$. 
This corresponds to  the influence from the root to the highlighted  vertex in the subtree encircled by the dotted line in \Cref{fig:TSAWExampleIntro}. 

Using a line of arguments similar to what we describe  above, we   show that there exists a constant $C_0>0$ such that 
\begin{align}\nonumber
\norm{ \infmatrix}{2}  \leq C_0 \cdot \norm{\abs{\ExtdInfMatrix}}{2} \enspace.
\end{align}
where $\abs{\ExtdInfMatrix}$ is the matrix with entries $\abs{\ExtdInfMatrix(\kP,\kQ)}$. 

From the above, we get \eqref{eq:2SNormCIOverviewH} by showing that  there exists a constant $C_1>0$
such that
\begin{align}\nonumber
\norm{\abs{\ExtdInfMatrix}}{2} &\leq C_1\cdot \sum\nolimits_{\kell\geq 0} \left( \delta^{\kell}
\cdot \norm{( \NBMatrix)^{\kell}}{2} \right)^{1/\pfs}\enspace.
\end{align}
where the quantities $\delta,\pfs$ were specified when we introduce \eqref{eq:2SNormCIOverviewH}.

\subsubsection*{Perturbed influences in $\Tsaw$:}
The analysis we use to calculate norms of $\ExtdInfMatrix=\ExtdInfMatrixF$ is not too different from what we had 
for the case of $\Adjacency_G$ and the influence matrix $\infmatrix^{\Lambda,\tau}_G$. 
Even though we do not use norms like $\norm{\UpD^{-1} \cdot \abs{\ExtdInfMatrix}\cdot \UpD}{\infty}$ in the analysis, 
we set up tree 
recursions based on the $\Tsaw$-construction with appropriate weighting of the influences.

A point that requires some care is that the definition of $\ExtdInfMatrix$ implies that  the Gibbs distribution we 
consider at each entry  $\ExtdInfMatrix(\kP,\kQ)$ depends on the choice of $\kP$ and $\kQ$. Hence, different 
entries might consider different Gibbs distributions.  This implies that when we calculate the influence of the root 
to the vertices at  level $\kh$ of the tree of self-avoiding walks, the influence of each path comes from a different 
Gibbs distribution.  This observation is   important in the analysis and could cause problems when  we use amortisation 
 and potential  functions.  

We circumvent the problem by establishing that the corresponding influences that arise from the different Gibbs
distributions can be viewed as one being a small perturbation of the  other. We establish this by utilising a  {\em correlation decay} 
argument.

\spreadpoint

\section{Spectral Bounds for $\infmatrix^{\Lambda,\tau}_{G}$ from likelihood  ratios \LastReviewG{2025-03-05}}\label{sec:RecursionVsSpectralIneq}

Consider tree $T=(V,E)$ of maximum degree at most $\maxDeg$, while let vertex $\kr$ be the root. 
Let the Gibbs distribution $\mu$ on $\{\pm 1\}^{V}$ be specified as in \eqref{def:GibbDistr} with respect to  parameters $\beta, \gamma$ 
and $\lambda$.

For $\kK \subseteq V\setminus\{\kr\}$ and $\tau\in \{\pm 1\}^{\kK}$, let the {\em ratio of  marginals} at the root 
$\gratio^{\kK, \tau}_{\kr}$ be defined by 
\begin{align}\label{eq:DefOfR}
\gratio^{\kK, \tau}_{\kr}=\frac{\mu^{\kK,\tau}_{\kr}(+1)}{\mu^{\kK,\tau}_{\kr}(-1)} \enspace.
\end{align}
Recall that $\mu^{\kK,\tau}_{\kr}(\cdot)$ denotes the marginal of $\mu^{\kK,\tau}(\cdot)$ 
at the root $\kr$. The above allows for $\gratio^{\kK, \tau}_{\kr}=\infty$ when 
$\mu^{\kK,\tau}_{\kr}(-1)=0$.

For vertex $\ku\in V$, we let $T_{\ku}$ be the subtree of $T$ that includes $\ku$ and all its descendants. 
We always assume that the root of $T_{\ku}$ is vertex $\ku$. 
With a slight abuse of notation, we let $\gratio^{\kK, \tau}_{\ku}$ denote the ratio of marginals at the root for 
the subtree $T_{\ku}$, where the Gibbs distribution is with respect to $T_{\ku}$, while we impose 
pinning $\tau(\kK\cap T_{\ku})$.

Suppose that  the vertices $\kv_1, \ldots, \kv_{\kd}$ are the children of the root $\kr$, for an integer $\kd>0$. 
It is standard to express $\gratio^{\kK, \tau}_{\kr}$ in terms of $\gratio^{\kK, \tau}_{\kv_{\ki}}$'s 
by having $\gratio^{\kK, \tau}_{\kr}=\trecur_{\kd} (\gratio^{\kK, \tau}_{\kv_1},  \ldots, \gratio^{\kK, \tau}_{\kv_{\kd}} )$, where
\begin{align}\label{eq:BPRecursion}
\trecur_{\kd}:[0, +\infty]^{\kd}\to [0, +\infty] & &\textrm{s.t.}& &
 (\kx_1, \ldots, \kx_{\kd})\mapsto \lambda \prod^{\kd}_{\ki=1}\frac{\beta {\kx}_{\ki}+1}{{\kx}_{\ki}+\gamma} \enspace.
\end{align}
To get cleaner results in the analysis, we work with log-ratios of 
Gibbs marginals. Let $\logtrecur_{\kd}=\log \circ F_{\kd} \circ \exp$, which means that 
\begin{align}\label{eq:DefOfH}
\logtrecur_{\kd}:[-\infty, +\infty]^{\kd}\to [-\infty, +\infty] &&\textrm{s.t.}&& (\kx_1, \ldots, \kx_{\kd})\mapsto \log \lambda+\sum^{\kd}_{\ki=1}
\log\left( \frac{\beta \exp(\kx_{\ki})+1}{\exp(\kx_{\ki})+\gamma} \right) \enspace.
\end{align}
From \eqref{eq:BPRecursion}, it is elementary to verify that $\log \gratio^{\kK, \tau}_{\kr}=\logtrecur_{\kd}(\log \gratio^{\kK, \tau}_{\kv_1}, 
\ldots, \log \gratio^{\kK, \tau}_{\kv_{\kd}})$.

\newcommand{\mybrO}{[}
\newcommand{\mybrC}{]}

We let the function 
\begin{align}\label{eq:DerivOfLogRatio}
\dlogtrecur:[-\infty,+\infty]\to \mathbb{R}&& \textrm{s.t. }&& 
\kx \mapsto -\frac{(1-\beta\gamma)\cdot \exp(\kx)}{(\beta \exp(\kx)+1)(\exp(\kx)+\gamma)} \enspace.
\end{align}
For any $\ki\in [\kd]$, we have $\frac{\partial}{\partial \kx_{\ki}}\logtrecur_{\kd}(\kx_1, \ldots, \kx_{\kd})=\dlogtrecur(\kx_{\ki})$, 
where recall that set $[\kd]=\{1,\ldots, \kd\}$.
Furthermore, let the interval $\ratiorange_{\kd} \subseteq \mathbb{R}$ be defined by 
\begin{align} \nonumber 
\ratiorange_{\kd} &= \left \{
\begin{array}{lcl}
\mybrO \log (\lambda\beta^{\kd}), \log(\lambda/\gamma^{\kd}) \mybrC & \quad & \textrm{if $\beta\gamma<1$} \enspace, \\ \vspace{-.3cm } \\
\mybrO \log (\lambda/\gamma^{\kd}), \log(\lambda\beta^{\kd}) \mybrC & \quad & \textrm{if $\beta\gamma>1$}\enspace. 
\end{array}
\right . 
\end{align}
Standard algebra implies that $\ratiorange_{\kd}$ contains all the log-ratios for a vertex with $\kd$ children. 
Also, let 
\begin{align}\label{eq:DefOfRatiorange}
 \ratiorange &= \bigcup\nolimits_{1\leq \kd<\maxDeg}\ratiorange_{\kd} \enspace.
\end{align}
The set $\ratiorange$ contains all log-ratios in the tree $T$.

\subsection{First attempt}\label{sec:SIFirstAttempt}
We present the first set of our results, which we use to establish spectral independence. These results rely on the notion of 
$\delta$-contraction.

\begin{definition}[$ \delta$-contraction]\label{def:HContraction}
Let $\delta\geq 0$ and the integer $\maxDeg\geq 1$. Also, let $\beta,\gamma, \lambda\in \mathbb{R}$ be
such that $0\leq \beta\leq \gamma$, 
$\gamma >0$ and $\lambda>0$. 
The set of functions $\{ \logtrecur_d\}_{1\leq \kd <\maxDeg}$, defined as in \eqref{eq:DefOfH}
with respect to the parameters $\beta,\gamma,\lambda$,
exhibits {\em $\delta$-contraction } if it satisfies the following condition: 

For any $1\leq \kd < \maxDeg$ and any $({\bf \ky}_1, \ldots, {\bf \ky}_{\kd})\in [-\infty,+\infty]^{\kd}$, we have 
$\norm{\nabla \logtrecur_{\kd} ({\bf \ky}_1, \ldots, {\bf \ky}_{\kd})}{\infty} \leq \delta.$
\end{definition}

The $\delta$-contraction condition is equivalent to having $\dlogtrecur(\kz)\leq \delta$, for any $\kz\in [-\infty,+\infty]$, where 
$\dlogtrecur(\kz)$ is defined in \eqref{eq:DerivOfLogRatio}.

\begin{theorem}[Adjacency Matrix] \label{thrm:AdjacencyInfinityMixing}
Let $\varepsilon\in (0,1)$, $\aspradius > 1$ and the integer $\maxDeg > 1$. Also, let $\beta,\gamma, \lambda\in \mathbb{R}$ 
be such that $0\leq \beta\leq \gamma$, $\gamma >0$ and $\lambda>0$. 

Consider graph $G=(V,E)$ of maximum degree $\maxDeg$ such that   $\norm{\Adjacency_G}{2}=\aspradius$. 
 Also, consider $\mu$ the Gibbs distribution on $G$, specified by the parameters $(\beta, \gamma, \lambda)$.

For $\delta =\frac{1-\varepsilon}{\aspradius}$, suppose  the set of functions  $\{ \logtrecur_{\kd}\}_{1\leq \kd< \maxDeg}$ 
specified by $(\beta,\gamma,\lambda)$ exhibits $\delta$-contraction. For any  $\Lambda\subset V$ and 
$\tau\in \{\pm 1\}^{\Lambda}$, the influence matrix  $\infmatrix^{\Lambda,\tau}_{G}$ induced by $\mu$ satisfies 
\begin{align}\nonumber 
\textstyle \spradius \left( \infmatrix^{\Lambda,\tau}_{G} \right) &\leq \varepsilon^{-1} \enspace.
\end{align}
\end{theorem}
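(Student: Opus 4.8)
The plan is to bound $\spradius(\infmatrix^{\Lambda,\tau}_{G})$ by a diagonally rescaled $\ell_{\infty}$ operator norm, where the diagonal rescaling uses the \emph{exact} principal eigenvector of $\Adjacency_G$ (this is the $\pfs=1$ specialisation of the weighting outlined in \Cref{sec:Approach}); working with the exact eigenvector, rather than a fractional power of it, is precisely what keeps the ensuing telescoping lossless and yields the $\maxDeg$-free constant $\varepsilon^{-1}$. Since $G$ is connected, $\Adjacency_G$ is irreducible, so by Perron--Frobenius its principal eigenvector $\eigenv_1$ is entrywise positive; let $\UpD$ be the diagonal matrix indexed by $V\setminus\Lambda$ with $\UpD(\kv,\kv)=\eigenv_1(\kv)$, which is therefore non-singular. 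As the induced $\ell_{\infty}$ norm is submultiplicative and dominates the spectral radius,
\[
\spradius\!\left(\infmatrix^{\Lambda,\tau}_{G}\right)=\spradius\!\left(\UpD^{-1}\infmatrix^{\Lambda,\tau}_{G}\UpD\right)\le\norm{\UpD^{-1}\infmatrix^{\Lambda,\tau}_{G}\UpD}{\infty}=\max_{\kw}\ \frac{1}{\eigenv_1(\kw)}\sum_{\ku}\eigenv_1(\ku)\,\bigl|\infmatrix^{\Lambda,\tau}_{G}(\kw,\ku)\bigr|,
\]
with $\kw,\ku$ ranging over $V\setminus\Lambda$; it therefore suffices to bound the right-hand side by $\varepsilon^{-1}$.

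To estimate the weighted row sums I would expand each entry through the $\Tsaw$-construction (\Cref{sec:TsawConstruction}), which expresses $\infmatrix^{\Lambda,\tau}_{G}(\kw,\ku)=\sum_{\kP}\infweight(\kP)$ as a signed sum over the paths $\kP$ of $\Tsaw(G,\kw)$ running from the root to a copy of $\ku$, with $\infweight(\kP)=\prod_{\ke\in\kP}\infweight(\ke)$, the pinning $(\Lambda,\tau)$ entering only as a boundary condition that freezes copies of $\Lambda$-vertices into leaves and hence only prunes the tree. Two facts then finish the bound. First, $\delta$-contraction (\Cref{def:HContraction}) bounds every edge influence in absolute value by $\delta$ --- an edge influence is $\dlogtrecur$ evaluated at the log-ratio of the corresponding child subtree, which is a component of $\nabla\logtrecur_{\kd}$, cf.\ the line after \eqref{eq:DerivOfLogRatio} --- so $|\infweight(\kP)|\le\delta^{|\kP|}$. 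Second, a length-$\kell$ path of $\Tsaw(G,\kw)$ ending at a copy of $\ku$ is, forgetting the tree structure, a length-$\kell$ walk from $\kw$ to $\ku$ in $G$ (this holds for both kinds of walks appearing in the definition of the tree, including the ``type-2'' terminal ones), so the number of such paths is at most $(\Adjacency_G^{\kell})(\kw,\ku)$. Grouping the paths by their length $\kell$, for each fixed $\kw$,
\[
\sum_{\ku}\eigenv_1(\ku)\,\bigl|\infmatrix^{\Lambda,\tau}_{G}(\kw,\ku)\bigr|\le\sum_{\kell\ge 0}\delta^{\kell}\sum_{\ku}(\Adjacency_G^{\kell})(\kw,\ku)\,\eigenv_1(\ku)=\sum_{\kell\ge 0}\delta^{\kell}\bigl(\Adjacency_G^{\kell}\eigenv_1\bigr)(\kw)=\eigenv_1(\kw)\sum_{\kell\ge 0}(\delta\aspradius)^{\kell},
\]
where $\Adjacency_G\eigenv_1=\aspradius\eigenv_1$ because $\norm{\Adjacency_G}{2}=\aspradius$ is the largest eigenvalue of the symmetric irreducible matrix $\Adjacency_G$. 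Dividing by $\eigenv_1(\kw)$ and using $\delta\aspradius=(1-\varepsilon)<1$, the geometric series sums to $\sum_{\kell\ge0}(1-\varepsilon)^{\kell}=\varepsilon^{-1}$, uniformly in $\kw$, which is exactly the asserted bound.

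The steps I expect to require the most care are the ones packaged into the $\Tsaw$-construction above: checking that $\infmatrix^{\Lambda,\tau}_{G}(\kw,\ku)$ genuinely equals the signed sum of path influences with pinned vertices acting purely to truncate the tree, and that both families of walks in the tree's definition are ordinary walks of $G$ so that $(\Adjacency_G^{\kell})(\kw,\ku)$ is a valid upper bound on the number of level-$\kell$ paths; these are standard but belong to \Cref{sec:TsawConstruction}. Conceptually, the only substantive choice is to weight by $\eigenv_1$ to the first power, so that $\Adjacency_G^{\kell}\eigenv_1$ collapses exactly to $\aspradius^{\kell}\eigenv_1$ with no slack; this is also why the present result is only a ``first attempt'' --- the sharper theorems later replace $\delta$-contraction by a potential-function contraction reaching $\lcritical$, which forces a weight $\eigenv_1(\kv)^{1/\pfs}$ with $\pfs>1$, for which a power-mean inequality loses a factor of order $\maxDeg^{1-1/\pfs}\aspradius^{1/\pfs}$ per level and must be rebalanced by a careful choice of $\pfs$; that balancing is the real work there, but it plays no role in the statement above.
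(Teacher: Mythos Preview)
Your argument is correct, but it takes a different route from the paper. The paper does not rescale by the Perron eigenvector; instead it defines $\powadj=\sum_{\kell\ge 0}(\delta\Adjacency_G)^{\kell}$, shows directly that $\spradius(\powadj)=\norm{\powadj}{2}\le\sum_{\kell\ge 0}\delta^{\kell}\norm{\Adjacency_G^{\kell}}{2}=\varepsilon^{-1}$, passes to the principal submatrix $\powadj_{V\setminus\Lambda}$ via Cauchy interlacing, and then proves the \emph{entrywise} bound $|\infmatrix^{\Lambda,\tau}_G(\kw,\ku)|\le\powadj_{V\setminus\Lambda}(\kw,\ku)$ using exactly the same $\Tsaw$ and walk-counting step you use; monotonicity of the spectral radius under entrywise domination then finishes. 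So both proofs share the same $\Tsaw$-to-walks inequality, but the paper packages the spectral estimate through $\powadj$ and interlacing, while you package it through $\norm{\UpD^{-1}\infmatrix\UpD}{\infty}$ with $\UpD=\mathrm{diag}(\eigenv_1)$. Your approach is precisely the $\pfs=1$ specialisation of the weighting the paper uses for the potential-function theorem (\Cref{thrm:AdjacencyPotentialSpIn}), and indeed your closing paragraph identifies this; the paper simply chose a more self-contained argument for the contraction-only case, avoiding Perron--Frobenius altogether. One minor imprecision: in your display, the first ``$=$'' after the double sum should be ``$\le$'', since your $\ku$-sum runs only over $V\setminus\Lambda$ whereas $(\Adjacency_G^{\kell}\eigenv_1)(\kw)$ sums over all of $V$; this does not affect the conclusion because all terms are nonnegative.
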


The proof of \Cref{thrm:AdjacencyInfinityMixing} appears in \Cref{sec:thrm:AdjacencyInfinityMixing}.

\begin{definition}[$b$-marginal boundedness]
For a number $b\geq 0$, we say that a distribution $\mu$ over $\{\pm 1\}^V$ is $b$-marginally bounded if  for every 
$\Lambda\subset V$ and any $\tau\in \{\pm 1\}^{\Lambda}$,  we have the following: for any  $\ku\in V\setminus \Lambda$ 
and for any $\kx\in\{\pm 1\} $ which is in the support of $\mu_{\ku}(\cdot \ |\ \Lambda, \tau)$, we have
$\mu_{\ku}(\kx\ |\ \Lambda, \tau)\geq b$. 
\end{definition}

For matrix $\NBMatrix$ we have the following result.

\begin{theorem}[$k$-non-backtracking matrix] \label{thrm:NonBacktrackingInfinityMixing}
Let $\varepsilon, b\in (0,1)$, $\SingBound>1$ and the integers $k,N \geq 1$, $\maxDeg > 1$. 
Also, let $\beta,\gamma, \lambda\in \mathbb{R}$ be such that $0\leq \beta\leq \gamma$, 
$\gamma >0$ and $\lambda>0$.

Consider graph $G=(V,E)$ of maximum degree $\maxDeg$ such that  $\nnorm{ (\NBMatrix)^N}{1/N}{2} =\SingBound$.
Assume that $\mu$, the Gibbs distribution on $G$ specified by the parameters  
$(\beta, \gamma, \lambda)$, is $b$-marginally bounded.

For $\delta =\frac{1-\varepsilon}{\SingBound}$, suppose  the set of functions  
$\{\logtrecur_d\}_{1\leq \kd< \maxDeg}$  specified by $(\beta,\gamma,\lambda)$ exhibits 
$\delta$-contraction. Then,  there exists a bounded number $\widehat{C}=\widehat{C}(\maxDeg, \SingBound, N, b)$ s
uch that for any $\Lambda\subset V$ and $\tau\in \{\pm 1\}^{\Lambda}$, 
the influence matrix  $\infmatrix^{\Lambda,\tau}_{G}$ induced by $\mu$ satisfies
\begin{align}\nonumber 
\textstyle \spradius \left( \infmatrix^{\Lambda,\tau}_{G} \right) &\leq \widehat{C} \cdot \varepsilon^{-1} \enspace.
\end{align}
\end{theorem}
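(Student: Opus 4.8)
The plan is to bound $\spradius(\infmatrix^{\Lambda,\tau}_G)$ by passing through the extended influence matrix $\ExtdInfMatrixF$ described in Section~\ref{sec:Approach}, and then to bound $\norm{\abs{\ExtdInfMatrixF}}{2}$ by a sum of powers of $\NBMatrix$. The first step is a comparison inequality: there is a constant $C_0 = C_0(\maxDeg,k,b)$ with $\norm{\infmatrix^{\Lambda,\tau}_G}{2} \le C_0\cdot \norm{\abs{\ExtdInfMatrixF}}{2}$, and since $\spradius(\cdot)\le\norm{\cdot}{2}$ this reduces everything to controlling $\norm{\abs{\ExtdInfMatrixF}}{2}$. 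To prove this comparison one writes each entry $\infmatrix^{\Lambda,\tau}_G(\kw,\ku)$ as a sum of path-influences $\infweight(\kP)$ over root-to-$\ku$ paths in $\Tsaw(G,\kw)$ (the relation recalled in Section~\ref{sec:TsawConstruction}), groups the paths of length $\ge k+1$ according to their first $k$ and last $k$ edges — which specify a pair of length-$k$ self-avoiding walks $P,Q$ in $G$ — and checks that the contribution of each group is, up to a bounded factor depending only on $\maxDeg,k$ and the marginal-boundedness constant $b$, the corresponding entry $\abs{\ExtdInfMatrixF(P,Q)}$; the finitely many short paths contribute a bounded additive/multiplicative error. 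Converting this entrywise domination into the operator-norm inequality $\norm{\cdot}{2}$ uses the standard fact that $\norm{\UpM}{2}\le\norm{\UpN}{2}$ whenever $\abs{\UpM(i,j)}\le\UpN(i,j)$ entrywise with $\UpN$ nonnegative, together with a bounded ``lifting/projection'' factor accounting for the many walks $P$ starting at a fixed vertex $\kw$ (at most $\maxDeg^k$ of them).

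The second and main step is to show
\begin{align}\nonumber
\norm{\abs{\ExtdInfMatrixF}}{2} \;\le\; C_1 \cdot \sum\nolimits_{\kell\ge 0}\Bigl( \delta^{\kell}\cdot \norm{(\NBMatrix)^{\kell}}{2}\Bigr)^{1/\pfs}
\end{align}
for a constant $C_1=C_1(\maxDeg,k,b)$ and the potential-function parameter $\pfs>1$, where $\delta$ is the contraction rate from the hypothesis. Here one uses that each entry $\ExtdInfMatrixF(P,Q)$ is itself a genuine influence (of the pinned, extended Gibbs distribution $\zeta^{P,Q}$), so the $\Tsaw$-construction applies to it and $\delta$-contraction gives $\abs{\ExtdInfMatrixF(P,Q)}\le$ (a weighted sum over levels of the tree of self-avoiding walks of $\zeta^{P,Q}$). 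The combinatorial point is that the walks in $\Tsaw$ for $\zeta^{P,Q}$ that reach from the $P$-end to the $Q$-end, when read at the level of $k$-step windows in $G$, are exactly enumerated by walks in the $k$-non-backtracking graph, i.e.\ by entries of $(\NBMatrix)^{\kell}$. So after applying $\delta$-contraction to kill one factor per edge and absorbing the bounded degree-dependent overcounting into $C_1$, the matrix $\abs{\ExtdInfMatrixF}$ is dominated (in the entrywise, hence $\norm{\cdot}{2}$, sense) by $\sum_{\kell}\delta^{\kell}(\NBMatrix)^{\kell}$ up to a bounded change of basis; the exponent $1/\pfs$ enters because the potential function that linearises the recursion introduces a power, exactly as in the adjacency-matrix computation sketched before \eqref{eq:2SNormCIOverviewH}. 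The ``perturbed influences in $\Tsaw$'' issue — that different entries $\ExtdInfMatrixF(P,Q)$ come from different Gibbs distributions $\zeta^{P,Q}$ — is handled by a correlation-decay argument showing that the influence of the root to level $\kh$ in $\Tsaw$ for any $\zeta^{P,Q}$ differs from a reference value by a factor $1+o_\kh(1)$, so the bound is uniform over $P,Q$.

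The final step is purely analytic: given $\delta=\frac{1-\varepsilon}{\SingBound}$ and $\nnorm{(\NBMatrix)^N}{1/N}{2}=\SingBound$, one shows $\sum_{\kell\ge 0}\bigl(\delta^{\kell}\norm{(\NBMatrix)^{\kell}}{2}\bigr)^{1/\pfs}$ is bounded by a constant depending only on $\maxDeg,\SingBound,N,b$ (and $\varepsilon$). The subtlety flagged after \eqref{eq:2SNormCIOverviewH} is that $\norm{(\NBMatrix)^{\kell}}{2}^{1/\kell}$ is only controlled on the subsequence $\kell\in N\mathbb{Z}$; one writes $\kell=qN+r$ with $0\le r<N$, uses submultiplicativity $\norm{(\NBMatrix)^{qN+r}}{2}\le\norm{(\NBMatrix)^{N}}{2}^{q}\cdot\norm{\NBMatrix}{2}^{r}\le \SingBound^{qN}\cdot\maxDeg^{r}$ together with $\delta^{\kell}\SingBound^{qN}\le\delta^r(1-\varepsilon)^{qN}$, so the tail is geometric with ratio $(1-\varepsilon)^{N/\pfs}<1$ and the sum is at most a constant times $\varepsilon^{-1}$ (the $N$ finitely many residues and the fixed power $\maxDeg^{r/\pfs}$ are absorbed into $\widehat C$). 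Combining with Step~1 gives $\spradius(\infmatrix^{\Lambda,\tau}_G)\le C_0 C_1 \cdot (\text{const})\cdot\varepsilon^{-1} = \widehat C\cdot\varepsilon^{-1}$ with $\widehat C=\widehat C(\maxDeg,\SingBound,N,b)$, as claimed.

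\textbf{Main obstacle.} The hard part will be Step~1–2 \emph{together}: making precise the claim that grouping $\Tsaw$ paths by their $k$-step windows reproduces, up to \emph{bounded} multiplicative factors, the entries of $\abs{\ExtdInfMatrixF}$ and then powers of $\NBMatrix$. This requires (i) a correlation-decay estimate uniform over the extended, pinned distributions $\zeta^{P,Q}$ to justify treating their influences as mutual perturbations, and (ii) careful bookkeeping of the degree-$\maxDeg$ and window-size-$k$ overcounting so that none of it blows up with $n$. The marginal-boundedness hypothesis ($b$) is exactly what keeps the correlation-decay and the window-contribution ratios bounded, which is why $\widehat C$ is allowed to depend on $b$.
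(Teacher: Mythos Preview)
Your overall route—pass from $\infmatrix^{\Lambda,\tau}_G$ to the extended influence matrix $\ExtdInfMatrixF$, then bound $\norm{\abs{\ExtdInfMatrixF}}{2}$ by a series in powers of $\NBMatrix$—is the same as the paper's. But you have conflated this theorem with the harder \Cref{thrm:NonBacktrackingPotentialSpIn} and imported machinery that is neither needed nor justified here.

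The hypothesis is plain $\delta$-contraction, i.e.\ $\abs{\dlogtrecur(x)}\le\delta$ for all $x$. There is no $(\pfs,\delta,c)$-potential in the assumptions, so your Step~2 bound with exponent $1/\pfs$ for some $\pfs>1$ has no source; you would have to produce such a potential from $\delta$-contraction alone, and in general you cannot (what you do get is the trivial potential $\potF=\mathrm{id}$ with $\pfs=1$, which collapses your formula to the unweighted one). More importantly, the potential is unnecessary: $\delta$-contraction gives $\abs{\infweight(\ke)}\le\delta$ for every edge, hence for any length-$\kell$ path in $\Tsaw(\gext{G}{\kP,\kQ},\kw)$ the product of influences is at most $\delta^{\kell}$. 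This already yields the entrywise bound $\abs{\ExtdInfMatrixF(\kP,\kQ)}\le\sum_{\kell\ge 2\kk}\delta^{\kell}\cdot\#\{\text{self-avoiding walks }\kw\to\ku\text{ of length }\kell\text{ in }\gext{G}{\kP,\kQ}\}$, and since every such walk is a $\kk$-non-backtracking walk from $\kP$ to $\kQ^{-1}$ in $G$, the right-hand matrix is dominated entrywise by $\sum_{\kell\ge\kk}\delta^{\kell+\kk}\bigl((\NBMatrix)^{\kell}\cdot\Invol\bigr)$, whence $\norm{\abs{\ExtdInfMatrixF}}{2}\le\sum_{\kell\ge\kk}\delta^{\kell+\kk}\norm{(\NBMatrix)^{\kell}}{2}$. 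Combined with your Step~1 (which is \Cref{proposition:CINorm2VsCLNorm2} in the paper), this gives $\norm{\infmatrix^{\Lambda,\tau}_G}{2}\le C_1+C_2\sum_{\kell\ge\kk}\delta^{\kell}\norm{(\NBMatrix)^{\kell}}{2}$ with no $1/\pfs$.

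Consequently, what you flag as the ``main obstacle''—a correlation-decay estimate uniform over the extended distributions $\zeta^{\kP,\kQ}$ to control the perturbed influences—is not an obstacle at all for this theorem. The bound $\abs{\infweight(\ke)}\le\delta$ holds for the edge weights of \emph{each} $\zeta^{\kP,\kQ}$ separately (they share the parameters $(\beta,\gamma,\lambda)$), so there is nothing to compare across different $(\kP,\kQ)$. The perturbation/correlation-decay argument (\Cref{thrm:ProdWideHatBetaVsBeta} in the paper) is only needed for \Cref{thrm:NonBacktrackingPotentialSpIn}, where one must amortise via a potential and therefore needs the weights coming from different $\zeta^{\kP,\kQ}$ to be comparable. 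Your Step~3 (the analytic tail bound via $\kell=qN+r$ and submultiplicativity) is correct and essentially what the paper does.
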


The proof of \Cref{thrm:NonBacktrackingInfinityMixing} appears in \Cref{sec:thrm:NonBacktrackingInfinityMixing}.

We use \Cref{thrm:AdjacencyInfinityMixing,thrm:NonBacktrackingInfinityMixing} to show our results for the Ising model
in \Cref{thrm:Ising4SPRadiusAdj} and \Cref{thrm:Ising4SPRadiusHash}, respectively. 

Using \Cref{thrm:AdjacencyInfinityMixing}    one retrieves the rapid mixing  results for the Hard-core model in \cite{Hayes06}. 
To get  improved results for the  Hard-core model, we utilise potential functions. The approach with the potential functions is
described in the subsequent section.

\subsection{Second Attempt}\label{sec:SISecondAttempt}

We introduce a second set of results by utilising the notion of the potential functions in the analysis.
It is worth mentioning that we use the sharp results for potential functions from \cite{ConnectiveConst}.

\begin{definition}[$(\pfs,\delta, c)$-potential]\label{def:GoodPotential}
Let $s\geq 1$, $\delta,c>0$ and let the integer $\maxDeg> 1$. 
Also, let $\beta,\gamma, \lambda\in \mathbb{R}$ be such that $0\leq \beta\leq \gamma$, 
$\gamma >0$ and $\lambda>0$.

Consider $\{ \logtrecur_{\kd}\}_{1\leq \kd< \Delta}$, defined in \eqref{eq:DefOfH} with respect to 
$(\beta,\gamma,\lambda)$. The differentiable, strictly increasing function $\potF$, with image $S_{\potF}$, is called $(\pfs, \delta, c)$-potential 
if it satisfies the following two conditions: 
\begin{description}
\item[Contraction]
For $1\leq \kd< \maxDeg$, for $(\widehat{\bf \ky}_1, \ldots, \widehat{\bf \ky}_{\kd})\in (S_{\potF})^{\kd}$, 
 and ${\bf m}=({\bf m}_1, \ldots, {\bf m}_{\kd})\in \mathbb{R}^{\kd}_{ \geq 0}$ we have that 
\begin{align} \label{eq:contractionRelationPF}
 \xdpotF( \logtrecur_d({\bf \ky}_1, \ldots, {\bf \ky}_{\kd}) ) \cdot \sum^{\kd}_{j=1}
\frac{ \abs{ \dlogtrecur\left({\bf \ky}_j \right)}}{ \xdpotF\left( {\bf \ky}_j \right)} \cdot {\bf m}_j \leq \delta^{\frac{1}{\pfs}} \cdot \norm{ {\bf m}}{\pfs} 
\enspace ,
\end{align}
where $\xdpotF=\potF'$, ${\bf \ky}_j=\potF^{-1}(\widehat{\bf \ky}_j)$, while $\dlogtrecur(\cdot)$ is the function defined in \eqref{eq:DerivOfLogRatio}.
\item[Boundedness] We have that
\begin{align}\label{eq:BoundednessPF}
  \max_{\kz,\kx\in \ratiorange}\left\{ \xdpotF(\kz) \cdot \frac{\abs{\dlogtrecur(\kx)}}{\xdpotF(\kx)} \right\} & \leq c \enspace.
\end{align}
\end{description}
\end{definition}

The notion of the $(\pfs,\delta, c)$-potential function  is a generalisation of the 
so-called ``$(\alpha,c)$-potential function" that is introduced in \cite{VigodaSpectralInd}. Note that 
the notion of $(\alpha,c)$-potential function implies the use of the $\kell_1$-norm in the analysis. 
The setting we consider here is more general. The condition in \eqref{eq:contractionRelationPF}
 implies that we are using the $\kell_{\kr}$-norm, where $\kr$ is the H\"older conjugate of the
parameter $\pfs$ in the $(\pfs,\delta, c)$-potential function\footnote{ $\kr^{-1}+\pfs^{-1}=1$. }.

 \begin{theorem}[Adjacency Matrix]\label{thrm:AdjacencyPotentialSpIn}
 Let $\varepsilon\in (0,1)$, $\aspradius > 1$, $s\geq 1$, $\zeta>0$ and 
 the integer $\maxDeg > 1$.
 Also, let $\beta,\gamma, \lambda\in \mathbb{R}$ be such that $\gamma >0$,
$0\leq \beta\leq \gamma$ and $\lambda>0$. 

Consider graph $G=(V,E)$ of maximum degree $\maxDeg$
such that $\norm{\Adjacency_G}{2}=\aspradius$. Also, consider $\mu$ the Gibbs
distribution on $G$ specified by the parameters $(\beta, \gamma, \lambda)$.

For $\delta= \frac{1-\varepsilon}{\aspradius}$ and $c=\frac{\zeta}{\aspradius}$, suppose that there
is a $(\pfs,\delta,c)$-potential function $\potF$ with respect to $(\beta,\gamma,\lambda)$. Then, for any
$\Lambda\subset V$ and any $\tau\in \{\pm 1\}^{\Lambda}$, the influence matrix
$\infmatrix^{\Lambda,\tau}_{G}$ induced by $\mu$ satisfies 
\begin{align}\label{eq:thrm:AdjacencyPotentialSpIn}
\spradius \left( \infmatrix^{\Lambda,\tau}_{G} \right) &\leq
1+ \zeta \cdot (1-(1-\varepsilon)^{\pfs})^{-1} \cdot (\maxDeg/\aspradius)^{1-(1/\pfs)} \enspace.
\end{align}
 \end{theorem}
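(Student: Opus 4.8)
The plan is to control $\spradius(\infmatrix^{\Lambda,\tau}_{G})$ by the induced $\ell_\infty$-operator norm of a well-chosen diagonal similarity transform of $\infmatrix^{\Lambda,\tau}_{G}$, to expand the resulting weighted row sums through the $\Tsaw$-construction, and then to estimate the weighted sums of path-influences level by level using the two defining properties of the $(\pfs,\delta,c)$-potential together with the Perron relation for $\Adjacency_G$. First I would fix the diagonal matrix $\UpD$ with $\UpD(\kv,\kv)=\eigenv_1(\kv)^{1/\pfs}$, where $\eigenv_1$ is the principal eigenvector of $\Adjacency_G$; since $G$ is connected, $\Adjacency_G$ is irreducible, so by Perron--Frobenius $\eigenv_1$ is strictly positive and $\UpD$ is nonsingular, with $\Adjacency_G\eigenv_1=\aspradius\eigenv_1$. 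As the spectral radius is a similarity invariant, $\spradius(\infmatrix^{\Lambda,\tau}_{G})\le \norm{\UpD^{-1}\infmatrix^{\Lambda,\tau}_{G}\UpD}{\infty}=\max_{\kw}\sum_{\ku}(\eigenv_1(\ku)/\eigenv_1(\kw))^{1/\pfs}\,|\infmatrix^{\Lambda,\tau}_{G}(\kw,\ku)|$. The diagonal term is $\infmatrix^{\Lambda,\tau}_{G}(\kw,\kw)=1$, which produces the leading ``$1+$'' in \eqref{eq:thrm:AdjacencyPotentialSpIn}, so it remains to bound the off-diagonal row sum.

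Next I would fix $\kw$, let $T=\Tsaw(G,\kw)$, and write $v(\kz)\in V$ for the vertex that a tree node $\kz$ is a copy of. By the $\Tsaw$-construction of \Cref{sec:TsawConstruction}, $\infmatrix^{\Lambda,\tau}_{G}(\kw,\ku)=\sum_{\kP}\infweight(\kP)$ over paths $\kP$ from the root to the copies of $\ku$ in $T$, with $\infweight(\kP)=\prod_{\ke\in\kP}\infweight(\ke)$ and each edge influence equal, up to sign, to $\dlogtrecur$ evaluated at the log-ratio of the subtree hanging below that edge. Grouping by depth and using the triangle inequality, the off-diagonal row sum is at most $\sum_{\kell\ge1}\Sigma_\kell$, where $\Sigma_\kell=\eigenv_1(\kw)^{-1/\pfs}\sum_{\mathrm{depth}(\kz)=\kell}\eigenv_1(v(\kz))^{1/\pfs}\cdot|\infweight(\text{root}\!\to\!\kz)|$ is the $\eigenv_1^{1/\pfs}$-weighted total absolute influence to the copies at depth $\kell$. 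Here one also records the standard facts that every log-ratio occurring in $T$ lies in the interval $\ratiorange$ of \eqref{eq:DefOfRatiorange} and in the domain of $\potF$, so that the Contraction and Boundedness conditions apply at every node.

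The heart of the argument is the estimate of $\Sigma_\kell$. I would pass to the potential-transformed log-ratios $\potF(\cdot)$; by the chain rule the transformed one-step influence from a child to its parent $p$ equals $\xdpotF(\cdot\text{ at }p)\cdot|\dlogtrecur(\cdot\text{ at the child})|/\xdpotF(\cdot\text{ at the child})$, and the transformed influence of a path telescopes to $|\infweight(\kP)|$ times $\xdpotF$ at the root divided by $\xdpotF$ at the endpoint. Attaching the factor $\eigenv_1(v(\kz))^{1/\pfs}$ to the transformed root-to-$\kz$ influence, I would apply the Contraction inequality \eqref{eq:contractionRelationPF} at each internal node -- with test vector ${\bf m}$ the partial sums accumulated from its children -- and combine it with the Perron relation $\Adjacency_G\eigenv_1=\aspradius\eigenv_1$ together with positivity of $\eigenv_1$ (so that the child-weights at a node sum to at most $\aspradius$ times the parent's weight); this yields a one-level contraction with factor $(\delta\aspradius)^{1/\pfs}=(1-\varepsilon)^{1/\pfs}$ in the relevant $\ell_\pfs$-quantity, hence geometric decay in $\kell$ after iterating from the root. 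The Boundedness condition \eqref{eq:BoundednessPF} is then invoked once, to absorb the transformed influence of the terminal edge into a copy of $\ku$, contributing the factor $c=\zeta/\aspradius$, and a single application of Hölder's inequality converts the $\ell_\pfs$-sum over the at most $\maxDeg$ children of a node into an $\ell_1$-sum, contributing $\maxDeg^{1-1/\pfs}$. Putting these together bounds $\Sigma_\kell$ by $c\,\aspradius^{1/\pfs}\maxDeg^{1-1/\pfs}$ times a factor decaying geometrically in $\kell$ at a rate governed by $\delta\aspradius=1-\varepsilon$; summing over $\kell\ge1$ and substituting $\delta=(1-\varepsilon)/\aspradius$ and $c=\zeta/\aspradius$ collapses all the constants to $\zeta\,(1-(1-\varepsilon)^{\pfs})^{-1}(\maxDeg/\aspradius)^{1-1/\pfs}$, which with the diagonal ``$1+$'' of the first step gives \eqref{eq:thrm:AdjacencyPotentialSpIn}.

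The hard part is the level-by-level estimate in the third step. Unlike the maximum-degree analysis of \cite{VigodaSpectralInd}, where one takes $\UpD=\IdM$ and all copies of a vertex in the $\Tsaw$ tree carry the same weight, here the weight $\eigenv_1(v(\kz))^{1/\pfs}$ of a node depends on which $G$-vertex it copies, so the potential-function change of variables and the eigenvector weighting must be carried out simultaneously; the delicate point is to fold the three sources of loss or gain -- the per-step contraction factor $\delta^{1/\pfs}$ from \eqref{eq:contractionRelationPF}, the Perron growth factor $\aspradius$, and the Hölder loss $\maxDeg^{1-1/\pfs}$ -- into a single clean geometric series with exactly the claimed constants, and to treat the boundary terms carefully (the $\xdpotF$ factor at the root, the terminal edge handled via \eqref{eq:BoundednessPF}, the copies of $\kw$ itself appearing inside $T$, and the verification that all log-ratios remain in $\ratiorange$ and in the domain of $\potF$).
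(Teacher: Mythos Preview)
Your proposal is correct and follows essentially the same route as the paper: the paper also takes $\UpD(\kv,\kv)=\eigenv_1(\kv)^{1/\pfs}$, bounds $\spradius(\infmatrix^{\Lambda,\tau}_G)\le\norm{\UpD^{-1}\infmatrix^{\Lambda,\tau}_G\UpD}{\infty}$, expands the row sums via the $\Tsaw$-construction, applies the Contraction property level by level (yielding the $\delta$-factor) together with the Perron relation $\Adjacency_G\eigenv_1=\aspradius\eigenv_1$, invokes Boundedness once for the factor $c$, uses H\"older for the factor $\maxDeg^{1-1/\pfs}$, and sums the geometric series. The only cosmetic differences are that the paper packages the level-by-level estimate as a separate statement (\Cref{thrm:InflNormBound4GeneralD}) valid for arbitrary nonsingular diagonal $\UpD$ before specialising, and it applies Boundedness at the root edge rather than at the terminal edge; neither affects the argument.
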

The proof of \Cref{thrm:AdjacencyPotentialSpIn} appears in \Cref{sec:thrm:AdjacencyPotentialSpIn}.

For the $\kk$-non-backtracking matrix we get the following result.

\begin{theorem}[$k$-non-backtracking matrix]\label{thrm:NonBacktrackingPotentialSpIn}
 Let $b,\varepsilon\in (0,1)$, $\SingBound> 1$, $\pfs \geq 1$, $c>0$ and 
 the integers $\kk,N \geq 1, \maxDeg > 1$.
Also, let $\beta,\gamma, \lambda\in \mathbb{R}$ be such that $\gamma >0$,
$0\leq \beta\leq \gamma$ and $\lambda>0$.

Consider graph $G=(V,E)$ of maximum degree $\maxDeg$ such that 
$\nnorm{ (\NBMatrix)^N}{1/N}{2}= \SingBound$.
Assume that $\mu$, the Gibbs distribution on $G$ specified by the parameters $(\beta, \gamma, \lambda)$,
is $b$-marginally bounded.

For $\delta= \frac{1-\varepsilon}{\SingBound}$, suppose  there
is a $(\pfs,\delta,c)$-potential function $\potF$ with respect to $(\beta,\gamma,\lambda)$. 
There is a bounded number $C>1$ such that for any
$\Lambda\subset V$ and $\tau\in \{\pm 1\}^{\Lambda}$, the influence matrix
$\infmatrix^{\Lambda,\tau}_{G}$ induced by $\mu$ satisfies 
{\small
\begin{align}\nonumber 
 \spradius \left( \infmatrix^{\Lambda,\tau}_{G} \right) &\leq 
 \frac{C}{1-(1-\varepsilon/4)^{1/\pfs}} \enspace. 
\end{align}
}
\end{theorem}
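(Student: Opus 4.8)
\textbf{Proof plan for \Cref{thrm:NonBacktrackingPotentialSpIn}.}
The plan is to reduce the bound on $\spradius(\infmatrix^{\Lambda,\tau}_{G})$ to a bound on $\norm{\abs{\ExtdInfMatrix}}{2}$, where $\ExtdInfMatrix=\ExtdInfMatrixF$ is the extended influence matrix of \Cref{sec:ExtInfluenceMatrixNew}, and then to estimate $\norm{\abs{\ExtdInfMatrix}}{2}$ via a $\Tsaw$-based tree recursion that is weighted using the potential function $\potF$. Concretely, I would proceed in four stages. \textbf{(Stage 1: comparison inequality.)} First I would invoke the relation $\norm{\infmatrix^{\Lambda,\tau}_{G}}{2}\leq C_0\cdot\norm{\abs{\ExtdInfMatrix}}{2}$ sketched in \Cref{sec:Approach}; since $\spradius(\UpM)\leq\norm{\UpM}{2}$ for any matrix, this already gives $\spradius(\infmatrix^{\Lambda,\tau}_{G})\leq C_0\cdot\norm{\abs{\ExtdInfMatrix}}{2}$. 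The constant $C_0$ will depend on $\maxDeg,\kk,N$ and the marginal-boundedness parameter $b$ (the latter enters because translating "sum of influences over SAW paths through $\kP,\kQ$" into a single entry of $\ExtdInfMatrix$ costs a bounded factor coming from the pinned split-vertices). \textbf{(Stage 2: entries of $\ExtdInfMatrix$ are honest influences.)} Each entry $\ExtdInfMatrix(\kP,\kQ)=\infmatrix^{\kP,\kQ}(\kw,\ku)$ is a pairwise influence for the Gibbs distribution $\zeta^{\kP,\kQ}$, hence expressible through Weitz's $\Tsaw$-construction: $\ExtdInfMatrix(\kP,\kQ)=\sum_{\kP'}\infweight(\kP')$ over paths $\kP'$ from the root to copies of $\ku$ in $\Tsaw(G_{\kP,\kQ},\kw)$. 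I would then observe that, because $\zeta^{\kP,\kQ}$ differs from $\mu^{\Lambda,\tau}_G$ only through the split-vertex pinnings, the edge-influences $\infweight(\ke)$ appearing in this tree are perturbations of the corresponding influences in the $\Tsaw$-tree of $\mu^{\Lambda,\tau}_G$; a correlation-decay argument (as flagged in \Cref{sec:Approach}) shows the perturbation is multiplicatively $1+o(1)$ along each path and, crucially, is summable so that it contributes only a bounded constant overall.

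\textbf{(Stage 3: the weighted tree recursion.)} This is the technical heart. For a fixed root $\kw$ I would bound $\sum_{\kP\ni\kw,\kQ}\abs{\ExtdInfMatrix(\kP,\kQ)}$ (a row-type sum) and the analogous column sum, and then combine them via the Schur test / Riesz–Thorin-type interpolation to control $\norm{\abs{\ExtdInfMatrix}}{2}$ — but with the potential-function weighting built in, the natural object is not the $\ell_1$ row sum but a $\potF$-amortised quantity. Following the mechanism behind \eqref{eq:contractionRelationPF}–\eqref{eq:BoundednessPF}, I would push the potential $\potF$ through the BP recursion \eqref{eq:DefOfH}: at a vertex with $\kd$ children the contraction hypothesis gives $\xdpotF(\logtrecur_\kd(\cdot))\sum_j \abs{\dlogtrecur(\ky_j)}/\xdpotF(\ky_j)\cdot{\bf m}_j\leq\delta^{1/\pfs}\norm{{\bf m}}{\pfs}$, which after summing over the tree and unrolling $\ell$ levels produces a factor $\delta^{\ell/\pfs}$ against the number of length-$\ell$ $\kk$-non-backtracking walks from $\kw$, i.e. against $\norm{(\NBMatrix)^{\ell}}{2}$ (this is exactly where the non-backtracking matrix — rather than $\Adjacency_G$ — enters, since the SAW-tree paths projected onto $G$ are $\kk$-non-backtracking and the different copies of a vertex $\ku$ carry path-dependent weights). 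The boundedness condition \eqref{eq:BoundednessPF} with $c=\zeta/\ldots$ controls the "first-level" term and also the telescoping error when we replace the $\potF$-amortised weight by its value. The net output is
\begin{align}\nonumber
\norm{\abs{\ExtdInfMatrix}}{2}\leq C_1\cdot\sum\nolimits_{\kell\geq 0}\left(\delta^{\kell}\cdot\norm{(\NBMatrix)^{\kell}}{2}\right)^{1/\pfs}\enspace,
\end{align}
which is display \eqref{eq:2SNormCIOverviewH} up to the comparison constant from Stage 1.

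\textbf{(Stage 4: summing the series.)} Finally I would bound the series $\sum_{\kell\geq 0}(\delta^{\kell}\norm{(\NBMatrix)^{\kell}}{2})^{1/\pfs}$. Writing $\delta=\frac{1-\varepsilon}{\SingBound}$ and using submultiplicativity together with the hypothesis $\nnorm{(\NBMatrix)^N}{1/N}{2}=\SingBound$, one gets $\norm{(\NBMatrix)^{\kell}}{2}\leq C_2\cdot\SingBound^{\kell}$ for all $\kell$ up to a bounded overhead $C_2=C_2(\maxDeg,\SingBound,N)$ absorbing the at-most-$N$ "boundary" exponents where the normalised norm has not yet stabilised (here one uses $\norm{(\NBMatrix)}{2}\leq\maxDeg$ trivially). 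Hence each term is at most $C_2^{1/\pfs}(1-\varepsilon)^{\kell/\pfs}$, and the geometric series sums to $\frac{C_2^{1/\pfs}}{1-(1-\varepsilon)^{1/\pfs}}$. Collecting constants $C:=C_0 C_1 C_2^{1/\pfs}$ (bounded in $\maxDeg,\SingBound,\kk,N,b$) and slightly relaxing $(1-\varepsilon)^{1/\pfs}$ to $(1-\varepsilon/4)^{1/\pfs}$ to leave slack for the perturbation errors of Stage 2, we obtain $\spradius(\infmatrix^{\Lambda,\tau}_{G})\leq\frac{C}{1-(1-\varepsilon/4)^{1/\pfs}}$, as claimed.

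\textbf{Main obstacle.} I expect Stage 3 to be the crux: making the $\potF$-amortisation compatible with the fact that every entry $\ExtdInfMatrix(\kP,\kQ)$ lives on a \emph{different} extended Gibbs distribution $\zeta^{\kP,\kQ}$, so that the "$\ky_j$" values fed into \eqref{eq:contractionRelationPF} vary entry-by-entry, and simultaneously keeping track that the path-dependent weights on copies of a vertex really reorganise into powers of $\NBMatrix$ rather than of $\Adjacency_G$. The correlation-decay perturbation bound of Stage 2 is what lets these two difficulties be decoupled, and getting its error genuinely summable (not merely bounded per level) is the delicate point.
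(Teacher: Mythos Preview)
Your four-stage plan matches the paper's proof: Stage~1 is \Cref{proposition:CINorm2VsCLNorm2}, Stage~2 is \Cref{thrm:ProdWideHatBetaVsBeta}, Stage~3 is \Cref{thrm:InfNormBoundHConj} (via \Cref{thrm:HC-VEntryBound}), and Stage~4 is the final summation using \Cref{lemma:SingSequenConv}. You also correctly identify Stage~3 and its interaction with the perturbation argument as the crux.

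There is one concrete gap in your Stage~3 mechanism. Unrolling the $(\pfs,\delta,c)$-contraction \eqref{eq:contractionRelationPF} against a test vector $\vtest$ yields an entrywise inequality of the shape $(\abs{\ExtdInfMatrix}\cdot\vtest)(\kP)^{\pfs}\lesssim\sum_{\kell}\delta^{\kell}\,\bigl((\NBMatrix)^{\kell}\cdot\vtest^{\pfs}\bigr)(\kP)$; taking $\ell_2$ on both sides bounds $\norm{\abs{\ExtdInfMatrix}}{2\pfs}$, not $\norm{\abs{\ExtdInfMatrix}}{2}$. This is exactly \Cref{thrm:InfNormBoundHConj}. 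Your proposed ``Schur test / Riesz--Thorin'' route does not close this: those tools pass from $\ell_1,\ell_\infty$ to $\ell_2$, not from $\ell_{2\pfs}$ down to $\ell_2$, and for a generic nonnegative matrix one can have $\norm{\cdot}{2}>\norm{\cdot}{2\pfs}$. The paper instead uses a symmetrisation step (\Cref{claim:Ll2VsLl2s}): by \Cref{claim:InfSymmetrisation} and $b$-marginal-boundedness, $\abs{\ExtdInfMatrix}$ is entrywise comparable, up to a factor $b^{-2}$, to a \emph{symmetric} nonnegative matrix $\cM$, and for symmetric $\cM$ one has $\norm{\cM}{2}=\spradius(\cM)\leq\norm{\cM}{2\pfs}$; chaining gives $\norm{\abs{\ExtdInfMatrix}}{2}\leq b^{-4}\,\norm{\abs{\ExtdInfMatrix}}{2\pfs}$. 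This is the missing link between your Stages~1 and~3.

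A minor point: the $\varepsilon/4$ in the final bound is not slack you introduce ex post but the residue of two genuine losses --- the perturbation of Stage~2 costs a per-level factor $(1+\varepsilon/3)^{1/\pfs}$ (\Cref{thrm:ProdWideHatBetaVsBeta}), turning $\delta=(1-\varepsilon)/\SingBound$ into $(1-\varepsilon/2)/\SingBound$ in \Cref{thrm:InfNormBoundHConj}, and the norm stabilisation in Stage~4 costs a further $(1+\varepsilon/4)$ via \Cref{lemma:SingSequenConv}.
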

The proof of \Cref{thrm:NonBacktrackingPotentialSpIn} appears in \Cref{sec:thrm:NonBacktrackingPotentialSpIn}.

We use \Cref{thrm:AdjacencyPotentialSpIn,thrm:NonBacktrackingPotentialSpIn} to show our
main results for the Hard-core model in \Cref{thrm:HC4SPRadiusAdj} and \Cref{thrm:HC4SPRadiusHash}, respectively.

\begin{figure}
 \centering
\begin{tikzpicture}[node distance=1.8cm]
\node (HCNBack) [startstop, thick, fill=black!20] {\Cref{thrm:HC4SPRadiusHash}};
\node (HCNBackPotential) [startstop, below of=HCNBack,  thick, fill=black!20] {\Cref{thrm:GoodPotentialHC}};
\node (HCNBackIS) [startstop, right of=HCNBackPotential, xshift=2cm, thick, fill=black!20] {\Cref{thrm:NonBacktrackingPotentialSpIn}};
\node (CINorm2VsCLNorm2) [startstop, right of=HCNBackIS, xshift=2cm,thick, fill=black!20] {\Cref{proposition:CINorm2VsCLNorm2}};
\node (HCNBackIVsExtI) [startstop, above of=CINorm2VsCLNorm2,  thick, fill=black!20] {\Cref{thrm:InflVsExtInfl}};
\node (HCNBackNormRecurence) [startstop, below of=CINorm2VsCLNorm2,thick, fill=black!20] {\Cref{thrm:InfNormBoundHConj}};
\node (HCNBackMultiRecur) [startstop, below of=HCNBackIS, thick, fill=black!20] {\Cref{thrm:HC-VEntryBound}};
\node (HCNBackSSM) [startstop, below of=HCNBackPotential,thick, fill=black!20] {\Cref{thrm:ProdWideHatBetaVsBeta}};
\draw [arrow, thick] (HCNBackIS) -- (HCNBack);
\draw [arrow, thick] (HCNBackPotential) -- (HCNBack);
\draw [arrow, thick] (HCNBackIVsExtI) -- (CINorm2VsCLNorm2);
\draw [arrow, thick] (CINorm2VsCLNorm2) -- (HCNBackIS);
\draw [arrow, thick] (HCNBackNormRecurence) -- (HCNBackIS);
\draw [arrow, thick] (HCNBackMultiRecur) -- (HCNBackNormRecurence);
\draw [arrow, thick] (HCNBackSSM) -- (HCNBackMultiRecur);
\end{tikzpicture}
		\caption{Proof structure for \Cref{thrm:HC4SPRadiusHash} - Hard-core model and $\NBMatrix$.}
	\label{fig:StrctHCNBacktracking}
\end{figure}

 \begin{figure}
 \centering
\begin{tikzpicture}[node distance=1.8cm]
\node (HCAdjc) [startstop, thick, fill=black!20] {\Cref{thrm:HC4SPRadiusAdj}};
\node (HCAdjcSI) [startstop, right of=HCAdjc, xshift=2cm, thick, fill=black!20] {\Cref{thrm:AdjacencyPotentialSpIn}};
\node (HCAdjcNormRecurence) [startstop, right of=HCAdjcSI, xshift=2cm, thick, fill=black!20] {\Cref{thrm:InflNormBound4GeneralD}};
\node (HCAdjcPotential) [startstop, below of=HCAdjcSI, thick, fill=black!20] {\Cref{thrm:GoodPotentialHC}};
\draw [arrow] (HCAdjcSI) -- (HCAdjc);
\draw [arrow] (HCAdjcNormRecurence) -- (HCAdjcSI);
\draw [arrow] (HCAdjcPotential) -- (HCAdjc);
\end{tikzpicture}
		\caption{Proof structure for \Cref{thrm:HC4SPRadiusAdj} - Hard-core model and $\Adjacency_G$.}
	\label{fig:StrctHCAdjacency}
\end{figure}

  \spreadpoint
\section{Structure of the paper. \LastReviewG{2025-03-31}}

For the Hard-core model, the main technical results we prove in this paper are \Cref{thrm:HC4SPRadiusAdj,thrm:HC4SPRadiusHash}.
\Cref{fig:StrctHCNBacktracking} shows  the basic structure for proving  \Cref{thrm:HC4SPRadiusAdj},
while \Cref{fig:StrctHCAdjacency} shows the proof structure  for   \Cref{thrm:HC4SPRadiusHash}.

Similarly, for the Ising model,   \Cref{fig:StrctISINGNBacktracking,fig:StrctISINGAdjacency} show the basic structure 
of the proofs of  \Cref{thrm:Ising4SPRadiusHash,thrm:Ising4SPRadiusAdj}, respectively.

The proofs of the results are presented in  order of increasing difficulty.  We first present the proofs of the results 
related to  the adjacency matrix $\Adjacency_G$. Subsequently, we formally introduce the notion of extended 
influence matrix $\ExtdInfMatrixF$ together with some technical results about this matrix. Then, we   present the results related to $\NBMatrix$. 

In the appendix, there are some standard or easy to prove results which we include for the sake of the paper being self-contained.

\begin{figure}
 \centering
\begin{tikzpicture}[node distance=1.8cm]
\node (ISINGNBack) [startstop,thick, fill=black!20] {\Cref{thrm:Ising4SPRadiusHash}};
\node (ISINGNBackIS) [startstop, right of=ISINGNBack, xshift=2cm,thick, fill=black!20] {\Cref{thrm:NonBacktrackingInfinityMixing}};
\node (ISINGIVsJ) [startstop, right of=ISINGNBackIS, xshift=2cm,thick, fill=black!20] {\Cref{prop:CJVskNBM}};
\node (ISINGJVsK) [startstop, below of=ISINGIVsJ,thick, fill=black!20] {\Cref{thrm:L2InflRedux2KNBTM}};
\node (ISINGKVsH) [startstop, below of=ISINGNBackIS,thick, fill=black!20] {\Cref{lemma:eq:CLVsCJEntrywise}};
\draw [arrow] (ISINGKVsH) -- (ISINGJVsK);
\draw [arrow] (ISINGJVsK) -- (ISINGNBackIS);
\draw [arrow] (ISINGIVsJ) -- (ISINGJVsK);
\draw [arrow] (ISINGNBackIS) -- (ISINGNBack);
\end{tikzpicture}
		\caption{Proof structure for \Cref{thrm:Ising4SPRadiusHash} - Ising model and $\NBMatrix$.}
	\label{fig:StrctISINGNBacktracking}
\end{figure}

\begin{figure}
 \centering
\begin{tikzpicture}[node distance=2cm]
\node (ISINGAdjc) [startstop,thick, fill=black!20] {\Cref{thrm:Ising4SPRadiusAdj}};
\node (ISINGAdjcSI) [startstop, right of=ISINGAdjc, xshift=2cm,thick, fill=black!20] {\Cref{thrm:AdjacencyInfinityMixing}};
\draw [arrow] (ISINGAdjcSI) -- (ISINGAdjc);
\end{tikzpicture}
		\caption{Proof structure for \Cref{thrm:Ising4SPRadiusAdj} - Ising model and $\Adjacency_G$.}
	\label{fig:StrctISINGAdjacency}
\end{figure}

\spreadpoint 
\section{Preliminaries for the Analysis \LastReviewG{2025-03-03}}\label{sec:Preliminaries}

\subsection{Glauber dynamics and mixing times.}\label{sec:MCMCIntroStaff}
Suppose that we are given a graph $G=(V,E)$ and  a Gibbs distribution $\mu$ on $\{\pm 1\}^V$. 
We use the discrete-time, (single site) {\em Glauber dynamics} 
$\Glauber$ to approximately sample from $\mu$. Glauber dynamics is a very simple to describe 
Markov chain. The state space is the support of $\mu$.
We assume that the chain starts from an arbitrary configuration $X_0$. For 
$\kt\geq 0$, the transition from the state $X_{\kt}$ to $X_{\kt+1}$ is according to the following steps: 

\begin{enumerate}[(a)]
\item Choose uniformly at random a vertex $\kv\in V$. 
\item For every vertex $\kw$ different than $\kv$, set $X_{\kt+1}(\kw)=X_{\kt}(\kw)$.
\item Set $X_{\kt+1}(\kv)$ according to the marginal of $\mu$ at $\kv$, conditional on 
the neighbours of $\kv$ having the configuration specified by $X_{t+1}$.
\end{enumerate}

For two distributions $\upnu$ and $\hat{\upnu}$ on the discrete set $\Omega$,
the {\em total variation distance} satisfies 
\begin{align}\nonumber 
\norm{\upnu-\hat{\upnu}}{\tv} &=(1/{2}) \ { \sum\nolimits_{\kx\in \Omega} }\abs{\upnu(\kx)-\hat{\upnu}(\kx)} \enspace .
\end{align}

Let $\UpP$ be the transition matrix of an {\em ergodic} Markov chain $\Glauber$ on a finite state space
$\Omega$ with stationary distribution $\mu$. Recall that $\UpP$ is called ergodic if it is irreducible and aperiodic. 
For $\kt\geq 0$ and $\sigma\in \Omega$, we let $\UpP^{\kt}(\sigma, \cdot)$  be the distribution of $X_{\kt}$ when $X_0=\sigma$. Then, the {\em mixing time} of $\UpP$ is defined by 
\begin{align}\nonumber 
{\rm T}_{\rm mix}(\UpP)=\min\{\kt\geq 0\ :\ \forall \sigma\in \Omega \ \ \norm{\UpP^{\kt}(\sigma, \cdot)-\mu(\cdot)}{\tv}\leq 1/4\} \enspace.
\end{align}
We use the mixing time as a measure of the rate of convergence to equilibrium for Glauber dynamics.
For the cases we consider here, Glauber dynamics is trivially ergodic.

\subsection{Spectral Independence}\label{sec:SIPrelimIntro}
In this section, we introduce the  basics of the Spectral Independence method. Specifically, we start by formally
introducing the notion of the {\em pairwise influence matrix} $\infmatrix^{\Lambda,\tau}_{G}$. Subsequently, we 
present  some rapid mixing results  we can obtain from the Spectral Independence method. 

\subsubsection*{Influence Matrix:}
We have seen the definition of the influence matrix before. However, since it is such an important  notion for 
our  results, we state it once more.

Let graph $G=(V,E)$ and a Gibbs distribution $\mu$ on $\{\pm 1\}^V$.  For  $\Lambda\subset V$ and a configuration 
$\tau$ at $\Lambda$,  the {\em pairwise influence matrix}  $\infmatrix^{\Lambda,\tau}_{G}$ is   indexed by the vertices 
in $V\setminus \Lambda $  such that, for $\ku,\kw\in V\setminus \Lambda$, we have
\begin{align}\label{def:InfluenceMatrix}
\infmatrix^{\Lambda,\tau}_{G}(\kw,\ku) &
= \mu_{\ku}(+1 \ |\ (\Lambda, \tau), (\kw, +1))- \mu_{\ku}(+1 \ |\ (\Lambda, \tau), (\kw, -1)) \enspace, 
\end{align} 
where $\mu_{\ku}(+1 \ |\ (\Lambda, \tau), (\kw, +1))$ is the probability of the event that vertex $\ku$  has configuration
 $+1$, conditional on  that the configuration at $\Lambda$ is $\tau$ and the  configuration at $\kw$ is $+1$. We have 
 the analogous for $\mu_{\ku}(+1 \ |\ (\Lambda, \tau), (\kw, -1))$.

When we consider Gibbs distributions with hard constraints, e.g., the Hard-core model, it is possible  that for a 
vertex  $\kv\notin \Lambda$ to have $\mu^{\Lambda,\tau}_{\kv}(+1)\in \{0,1\}$.  For vertices $\kw,\ku$ such that 
at least one of $\mu^{\Lambda,\tau}_{\kw}(+1),\mu^{\Lambda,\tau}_{\ku}(+1)$ is in $\{0,1\}$, we have  
$\infmatrix^{\Lambda,\tau}_{G}(\kw,\ku)=0$.

In the analysis, we use the following folklore result which is standard to prove (e.g. see \Cref{sec:claim:InfSymmetrisation}). 

\begin{restatable}{claim}{Isymsym}\label{claim:InfSymmetrisation}
For graph $G=(V,E)$ and a Gibbs distribution $\mu$ on $\{\pm 1\}^V$, 
for $\Lambda\subset V$ and $\tau\in \{\pm 1\}^V$, let $\UpM$ be the $(V\setminus \Lambda)\times (V\setminus \Lambda)$
diagonal matrix such that 
\begin{align}\nonumber
\UpM(\kv,\kv) &\textstyle =\sqrt{\mu^{\Lambda,\tau}_{\kv}(+1)\cdot \mu^{\Lambda,\tau}_{\kv}(-1)} &\forall \kv\in V\setminus \Lambda \enspace.
\end{align}
Then, for the influence matrix $\infmatrix^{\Lambda,\tau}_G$ induced by $\mu$, the following is true:
if $\UpM$ is non-singular, then matrix $\UpM\cdot \infmatrix^{\Lambda,\tau}_G \cdot \UpM^{-1}$ is symmetric. 
\end{restatable}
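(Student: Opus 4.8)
The plan is to reduce the statement to a single ``reciprocity'' identity between the off-diagonal entries of $\infmatrix := \infmatrix^{\Lambda,\tau}_{G}$, and then to verify that identity by a short computation with the joint law of two spins. First I would observe that since $\UpM$ is non-singular, every diagonal entry $\UpM(\kv,\kv)=\sqrt{\mu^{\Lambda,\tau}_{\kv}(+1)\,\mu^{\Lambda,\tau}_{\kv}(-1)}$ is strictly positive, hence $\mu^{\Lambda,\tau}_{\kv}(+1)\in(0,1)$ for every $\kv\in V\setminus\Lambda$; in particular all conditional marginals occurring in the definition of $\infmatrix$ are well defined, and none of the degenerate entries $\infmatrix(\kw,\ku)=0$ forced by hard constraints arise. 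Since conjugation by the diagonal matrix $\UpM$ leaves diagonal entries unchanged and sends the $(\kw,\ku)$-entry to $\tfrac{\UpM(\kw,\kw)}{\UpM(\ku,\ku)}\infmatrix(\kw,\ku)$, the matrix $\UpM\cdot\infmatrix\cdot\UpM^{-1}$ is symmetric if and only if, for all $\kw\ne\ku$ in $V\setminus\Lambda$,
\[
\mu^{\Lambda,\tau}_{\kw}(+1)\,\mu^{\Lambda,\tau}_{\kw}(-1)\,\infmatrix(\kw,\ku)\;=\;\mu^{\Lambda,\tau}_{\ku}(+1)\,\mu^{\Lambda,\tau}_{\ku}(-1)\,\infmatrix(\ku,\kw).
\]

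To prove this identity I would fix $\kw\ne\ku$, work inside the conditional distribution $\mu^{\Lambda,\tau}$, and write $p_{ab}=\mu^{\Lambda,\tau}\big(\sigma(\kw)=a,\ \sigma(\ku)=b\big)$ for $a,b\in\{+1,-1\}$. Then $\mu^{\Lambda,\tau}_{\kw}(a)=p_{a,+1}+p_{a,-1}$ and $\mu_{\ku}(+1\mid(\Lambda,\tau),(\kw,a))=p_{a,+1}/(p_{a,+1}+p_{a,-1})$, the denominators being positive by the first paragraph. Simplifying
\[
\infmatrix(\kw,\ku)=\frac{p_{+1,+1}}{p_{+1,+1}+p_{+1,-1}}-\frac{p_{-1,+1}}{p_{-1,+1}+p_{-1,-1}}
\]
gives $\infmatrix(\kw,\ku)=\big(p_{+1,+1}p_{-1,-1}-p_{+1,-1}p_{-1,+1}\big)\big/\big(\mu^{\Lambda,\tau}_{\kw}(+1)\mu^{\Lambda,\tau}_{\kw}(-1)\big)$, and the numerator is invariant under interchanging the roles of $\kw$ and $\ku$ (this swap merely transposes the table $(p_{ab})$). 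By the same computation $\infmatrix(\ku,\kw)=\big(p_{+1,+1}p_{-1,-1}-p_{+1,-1}p_{-1,+1}\big)\big/\big(\mu^{\Lambda,\tau}_{\ku}(+1)\mu^{\Lambda,\tau}_{\ku}(-1)\big)$. Multiplying each expression by the appropriate product of marginals yields exactly the displayed identity, and the claim follows.

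I expect no real obstacle: the argument is linear algebra together with a $2\times2$ calculation, and the only point needing a remark is the non-degeneracy of the marginals, which is precisely what non-singularity of $\UpM$ provides. For a slicker phrasing one could note that $\mu^{\Lambda,\tau}_{\kw}(+1)\,\mu^{\Lambda,\tau}_{\kw}(-1)\,\infmatrix(\kw,\ku)$ equals $p_{+1,+1}p_{-1,-1}-p_{+1,-1}p_{-1,+1}=\mathrm{Cov}_{\mu^{\Lambda,\tau}}\big(\Ind\{\sigma(\kw)=+1\},\Ind\{\sigma(\ku)=+1\}\big)$, which is manifestly symmetric in $\kw$ and $\ku$; this is the same computation repackaged, and it is the version I would most likely present.
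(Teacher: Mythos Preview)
Your proof is correct and is essentially the same as the paper's: the paper introduces the covariance matrix $\CovMatrix^{\Lambda,\tau}_G$, notes that $\UpM(\kv,\kv)=\sqrt{\CovMatrix^{\Lambda,\tau}_G(\kv,\kv)}$ and that $\infmatrix^{\Lambda,\tau}_G(\ku,\kv)=\CovMatrix^{\Lambda,\tau}_G(\ku,\kv)/\CovMatrix^{\Lambda,\tau}_G(\ku,\ku)$, and concludes from the symmetry of $\CovMatrix^{\Lambda,\tau}_G$. Your explicit $2\times 2$ computation is just an unpacking of this covariance identity, as you yourself observe at the end.
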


\subsubsection*{Spectral Independence and Mixing Times:}
The main focus of the Spectral Independence method is on  the maximum eigenvalue of the influence
matrix $\infmatrix^{\Lambda,\tau}_{G}$, denoted as $\eigenval_{1} (\infmatrix^{\Lambda,\tau}_{G})$.

\begin{definition}[Spectral Independence]\label{Def:SpInMu}
For a real $\eta>0$, the Gibbs distribution $\mu$ on $G=(V,E)$ is $\eta$-spectrally
independent, if for every $0\leq \kk\leq |V|-2$, $\Lambda\subset V$ of size $\kk$ and $\tau\in \{\pm 1\}^\Lambda$
we have that $\eigenval_{1}(\cI^{\Lambda,\tau}_{G})\leq 1+\eta$. 
\end{definition}

\noindent
The notion of $\eta$-spectral independence for $\mu$ can be utilised to bound the mixing time of the corresponding 
Glauber dynamics.

\begin{theorem}[\cite{OptMCMCIS}]\label{thrm:SPCT-INDClosed}
For $\eta>0$, there is a constant $C\geq 0$ such that if $\mu$ is an $\eta$-spectrally independent distribution, 
then the Glauber dynamics for sampling from $\mu$ has {\em mixing time} which is at most $Cn^{2+\eta}$. 
\end{theorem}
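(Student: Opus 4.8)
The plan is to follow the high-dimensional expander route that the Spectral Independence method of \cite{OptMCMCIS} formalises. First I would encode $\mu$ as a weighted simplicial complex: identify a configuration $\sigma\in\{\pm 1\}^V$ with the set $\{(v,\sigma(v)):v\in V\}$, so that $\mu$ becomes a probability measure on the maximal faces of a pure $(n-1)$-dimensional complex $X$, the weights on lower faces being obtained by pushing $\mu$ down. Under this identification a step of Glauber dynamics $\Glauber$ is exactly a step of the top-level down-up walk $P^{\vee\wedge}_{n}$ on $X$ (erase a uniformly random vertex, then re-sample it from its conditional marginal). Hence ${\rm T}_{\rm mix}$ is governed by the spectral gap of $P^{\vee\wedge}_{n}$ together with the standard factor: for a reversible chain ${\rm T}_{\rm mix}\le {\rm T}_{\rm rel}\cdot\log\frac{4}{\pi_{\min}}$, and for the distributions in question ($\maxDeg$ bounded, or $b$-marginally bounded) one has $\pi_{\min}\ge e^{-O(n)}$, so that factor is $O(n)$.

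Next I would invoke the local-to-global machinery for simplicial complexes — Oppenheim's trickle-down together with the Alev--Lau theorem \cite{FirstSpInd}: if for every face $\tau$ obtained by pinning $k$ vertices the second-largest eigenvalue of the non-lazy local walk on the $1$-skeleton of the link $X_\tau$ is at most $\eta_k$, then
\begin{align*}
\mathrm{gap}\bigl(P^{\vee\wedge}_{n}\bigr)\ \ge\ \frac{1}{n}\prod_{k=0}^{n-2}\bigl(1-\eta_k\bigr).
\end{align*}
So everything reduces to bounding the local eigenvalues $\eta_k$ by the spectral-independence parameter. The crux is the translation between the link walk and the pairwise influence matrix: for a pinning $(\Lambda,\tau)$ with $|\Lambda|=k$, the link $X_\tau$ is the complex of the conditional distribution $\mu^{\Lambda,\tau}$ on the $n-k$ remaining vertices, and a direct computation shows that the local walk operator on its $1$-skeleton, restricted to the mean-zero subspace and rescaled, is similar to $\frac{1}{n-k-1}\,\infmatrix^{\Lambda,\tau}_{G}$ — here the symmetrisation of \Cref{claim:InfSymmetrisation} is what makes both operators self-adjoint. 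Therefore $\eta_k\le \frac{\eigenval_1(\infmatrix^{\Lambda,\tau}_{G})-1}{\,n-k-1\,}\le \frac{\eta}{\,n-k-1\,}$ by $\eta$-spectral independence.

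Plugging this in and telescoping,
\begin{align*}
\prod_{k=0}^{n-2}\Bigl(1-\frac{\eta}{n-k-1}\Bigr)\ =\ \prod_{j=1}^{n-1}\Bigl(1-\frac{\eta}{j}\Bigr)\ =\ \Omega\!\left(n^{-\eta}\right),
\end{align*}
using $\sum_{j\le n}1/j=\ln n+O(1)$, so $\mathrm{gap}(P^{\vee\wedge}_{n})=\Omega(n^{-1-\eta})$ and ${\rm T}_{\rm rel}=O(n^{1+\eta})$. Combining with $\log(1/\pi_{\min})=O(n)$ gives ${\rm T}_{\rm mix}=O(n^{2+\eta})$, which is the claimed bound (the implied constant absorbing the dependence on $\maxDeg$ and the marginal-boundedness parameter). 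The main obstacle is the middle step: one must ensure the links are connected so that the local walks actually have a spectral gap and trickle-down applies — for hard-constraint models such as the Hard-core model this needs a short argument excluding "frozen" vertices via marginal boundedness — and one must carry out the eigenvalue bookkeeping between the link walk and $\infmatrix^{\Lambda,\tau}_{G}$ carefully, since these operators live on differently normalised spaces.
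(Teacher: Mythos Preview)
The paper does not prove \Cref{thrm:SPCT-INDClosed}; it is quoted from \cite{OptMCMCIS} as a black box, and in fact the paper's own applications appeal to the sharper \Cref{thrm:SPINLOGN} from \cite{VigodaSpectralIndB} rather than to this statement. Your sketch is a faithful outline of the Anari--Liu--Oveis~Gharan argument via the Alev--Lau local-to-global theorem \cite{FirstSpInd}, and it is essentially correct.

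Two points are worth tightening if you were to write it out in full. First, the telescoped product $\prod_{j=1}^{n-1}(1-\eta/j)$ has nonpositive factors when $j\le\lfloor\eta\rfloor$, so the estimate $\Omega(n^{-\eta})$ via $\sum_{j}1/j\sim\ln n$ is not literally valid as written; the standard fix is to truncate the product at $j>\lceil\eta\rceil$ and handle the finitely many bottom levels (links on $O(\eta)$ free coordinates) separately, absorbing their contribution into $C=C(\eta)$. Second, the bound $\log(4/\pi_{\min})=O(n)$ does not follow from $\eta$-spectral independence alone; it uses the Gibbs structure with bounded $\beta,\gamma,\lambda$ and bounded degree that the paper assumes throughout but that the theorem statement, read in isolation, does not carry. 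You already flag both issues in your closing caveats, so these are refinements rather than gaps.
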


Note that \Cref{thrm:SPCT-INDClosed} is quite general and implies that for bounded $\eta$, the mixing time is 
polynomial in $n$. However, this polynomial can be very large. There have been improvements on 
\Cref{thrm:SPCT-INDClosed} since its introduction in \cite{OptMCMCIS}, e.g., see 
\cite{VigodaSpectralIndB,FastMCMCLocalisation,WeimingCoUbboundedDelta}.

We use \Cref{thrm:SPINLOGN}, proved in \cite{VigodaSpectralIndB}, to turn our spectral independence results into 
optimum mixing ones. 
We need to introduce a few useful concepts from \cite{VigodaSpectralIndB}, to formally state the result we are using.

Let graph $G=(V,E)$. For a non-empty set  $S\subset V$, let  ${\tt H}_S$ be the graph whose vertices correspond to 
 the configurations in $\{\pm 1\}^S$.  Two configurations in $\{\pm 1\}^S$ are considered adjacent 
 if and only if they differ at the assignment  of a single vertex.
 Similarly, any subset $\Omega_0\subseteq \{\pm 1\}^S$ is considered to be connected if the subgraph
 induced by $\Omega_0$ is connected. 
 
 A distribution $\mu$ on $\{\pm 1\}^V$ is considered to be {\em totally
 connected} if for every nonempty $\Lambda \subset V$ and every boundary condition $\tau$ at $\Lambda$,
 the set of configurations in the support of $\mu(\cdot \ |\ \Lambda, \tau)$ is connected. 
We remark here that all Gibbs distributions with soft-constraints such as the Ising model are 
totally connected in a trivial way. The same holds for the Hard-core model and this follows from standard arguments.

The following result is a part of Theorem 1.9 from \cite{VigodaSpectralIndB} (arxiv-version).

\begin{theorem}[\cite{VigodaSpectralIndB}]\label{thrm:SPINLOGN} Let the integer $\maxDeg\geq 3$ and $b,\eta\in \mathbb{R}_{>0}$. 
Consider graph $G=(V,E)$ on $n$ vertices and maximum degree $\maxDeg$. Also, let $\mu$ be a totally connected Gibbs
distribution on $\{\pm 1\}^V$. 

If $\mu$ is both $b$-marginally bounded and $\eta$-spectrally independent, then there are constants $C_1, C_2>0$ such
the Glauber dynamics for $\mu$ exhibits mixing time
\begin{align}\nonumber 
{\rm T}_{\rm mix} &\leq C_1\times \left( {\maxDeg}/{b} \right)^{C_2\left ( \frac{\eta-1}{\kb^2}+1 \right)} \times \left( n \log n \right) \enspace.
\end{align} 
\end{theorem}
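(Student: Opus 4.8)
The plan is to derive the stated mixing bound from an \emph{approximate tensorization of entropy} for Glauber dynamics, obtained by combining $\eta$-spectral independence with $b$-marginal boundedness through a local-to-global argument on the simplicial complex associated with $\mu$. The crude polynomial bound of \Cref{thrm:SPCT-INDClosed} does not suffice here, so the whole point is to upgrade the spectral control of the influence matrices to an \emph{entropic} one, which then yields the optimal $n\log n$ rate.

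First I would set up the high-dimensional expander picture. Encode $\mu$ as a weighted pure $(n-1)$-dimensional simplicial complex whose ground set is $V\times\{\pm1\}$, whose top faces are the sets $\{(v,\sigma(v)):v\in V\}$ for $\sigma$ in the support of $\mu$, and whose face weights are the corresponding probabilities; one step of Glauber dynamics (with the site chosen uniformly) is exactly the down-up walk on this complex that forgets a uniform coordinate and re-samples it from the conditional distribution. Total connectivity guarantees that every link of the complex is connected, so all these walks are irreducible, and $b$-marginal boundedness guarantees that the vertex weights of every link are bounded below by $b$, which is precisely what prevents the local constants from degenerating.

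Next I would control the links and run the local-to-global step. For every pinning $(\Lambda,\tau)$ the hypothesis gives $\eigenval_1(\infmatrix^{\Lambda,\tau}_{G})\le 1+\eta$, i.e.\ the link of the face $\tau$ is a local spectral expander with one-step eigenvalue at most $\eta$ at every level; combined with the $b$ lower bound on that link's marginals, the sharp estimate upgrading spectral independence plus marginal boundedness to an entropy-contraction (entropic independence) bound at each link applies, and the local-to-global theorem for entropy then aggregates these into a global approximate tensorization
\begin{align}\nonumber
\mathrm{Ent}_{\mu}[f] \;\le\; C_{\mathrm{AT}}\cdot \sum_{v\in V}\mathbb{E}_{\mu}\!\left[\mathrm{Ent}_{\mu(\cdot\mid V\setminus\{v\})}[f]\right],\qquad C_{\mathrm{AT}}=(\maxDeg/b)^{O\left(\frac{\eta-1}{b^2}+1\right)}.
\end{align}
This is equivalent to a modified log-Sobolev inequality for Glauber dynamics with constant $\rho\ge (n\,C_{\mathrm{AT}})^{-1}$. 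To finish, plug this into the standard bound $T_{\mathrm{mix}}\le \rho^{-1}\big(\log\log(1/\mu_{\min})+\log 4\big)$; since $b$-marginal boundedness forces $\mu(\sigma)\ge b^{\,n}$ on the support (chain rule over an ordering of $V$), one gets $\log\log(1/\mu_{\min})=O(\log n)$, and hence $T_{\mathrm{mix}}\le C_1(\maxDeg/b)^{C_2((\eta-1)/b^2+1)}\,n\log n$ as claimed.

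The main obstacle is the local-to-global aggregation: one must show that the per-link entropy-contraction factors, each of which degrades as coordinates are pinned, combine into a \emph{telescoping} product of total size only $(\maxDeg/b)^{O((\eta-1)/b^2+1)}$ rather than something exponential in $n$, and in particular that the exponent vanishes when $\eta=1$. This requires the sharp quantitative form of ``spectral independence plus marginal boundedness implies entropy factorization'' (not merely the qualitative version behind \Cref{thrm:SPCT-INDClosed}), together with careful treatment of the hard-constraint case where $\mu$ lacks full support — which is exactly why total connectivity has to be assumed, so that the link complexes stay connected and the down-up walks remain irreducible.
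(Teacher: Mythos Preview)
The paper does not prove this theorem: it is quoted verbatim as ``a part of Theorem 1.9 from \cite{VigodaSpectralIndB} (arxiv-version)'' and used as a black box to convert the spectral-independence bounds established elsewhere in the paper into $O(n\log n)$ mixing. There is therefore no in-paper proof to compare your proposal against.

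That said, your sketch is a faithful outline of how \cite{VigodaSpectralIndB} itself establishes the result: encode $\mu$ as a weighted simplicial complex, use $\eta$-spectral independence together with $b$-marginal boundedness to get per-link entropy contraction, aggregate via the local-to-global entropy factorization to obtain approximate tensorization with constant $(\maxDeg/b)^{O((\eta-1)/b^2+1)}$, and convert the resulting modified log-Sobolev inequality into a mixing-time bound using $\mu_{\min}\ge b^n$. Your identification of the main technical obstacle---controlling the telescoping product of link-wise entropy factors so that it depends on $\maxDeg,b,\eta$ but not on $n$---is exactly the crux of \cite{VigodaSpectralIndB}, and total connectivity is indeed what keeps the links irreducible in the hard-constraint case. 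For the purposes of the present paper none of this needs to be reproduced; you may simply cite the reference.
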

 \Cref{thrm:SPINLOGN} implies  $O(n\log n)$  mixing, provided that $\maxDeg$, $\eta$ and $\kb^{-1}$ are bounded numbers.

\subsection{(Local) Connective Constant:}\label{sec:Intro2LCC}
We start by introducing a few standard notions related to the (standard) {\em connective constant}.
We consider an infinite, locally finite graph $G$. 
Locally finite means that all vertices have finite degree. For a vertex $\kv$ of graph $G$, we denote by $\pi(\kv,\kk)$ the 
number of {\em self-avoiding walks} of length $\kk$ starting at $\kv$. Recall that a walk is called self-avoiding if it does not repeat vertices. 
We let 
\begin{align}\nonumber
c_{\kk}=\sup\{ \pi(\kv,\kk)\ |\ \kv\in V\}\enspace. 
\end{align}

\begin{definition}[connective constant]\label{def:CCInfinite}
Let $G=(V,E)$ be a locally finite, infinite graph. The connective constant $\cconnective=\cconnective(G)$ is defined by
$\cconnective = \lim_{\kk \to \infty}c^{1/\kk}_{\kk} .$
\end{definition}
It is well-known that the above limit always exists. 

For our algorithmic applications, we use  the notion of {\em radius-$\kk$ connective constant}. 
\begin{definition}\label{def:LocalCC}
For integer $\kk>0$ and graph $G=(V,E)$, the radius-$\kk$ connective constant $\cconnective_{\kk}=\cconnective_{\kk}(G)$ is defined by
$\cconnective_{\kk}= c^{1/\kk}_{\kk}$.
\end{definition}

The above definitions imply that $\cconnective=\lim_{\kk \to\infty} \cconnective_{\kk}$. 

The use of infinite graph $G$ was only to formally introduce  the notion of the connective constant. 
From now on,  unless otherwise specified,  we assume that the graph $G$ is always {\em finite}.

\subsection{The adjacency matrix $\Adjacency_G$}
For graph $G=(V,E)$, the {\em adjacency matrix} $\Adjacency_G$ is a zero-one, $V\times V$ matrix 
such that, for any $\ku,\kw\in V$, we have
\begin{align}\nonumber 
\Adjacency_G(\kw,\ku)= \Ind\{\textrm{$\ku, \kw$ are adjacent in $G$}\} \enspace.
\end{align}
Recall that a {\em walk} in  graph $G$ is any sequence of vertices $\kw_0, \ldots, \kw_{\kell}$ 
such that each consecutive pair $(\kw_{\ki-1}, \kw_{\ki})$ is an edge.
The length of the walk is equal to the number of consecutive pairs $(\kw_{\ki-1}, \kw_{\ki})$.

The adjacency matrix has the property that for any $\ku,\kw\in V$ and $\kell\geq 1$,
we have 
\begin{align}\label{eq:NoOfWalksVSA2L}
  \Adjacency^{\kell}_G (\kw,\ku)&=\textrm{$\#$ length $\kell$ walks\ from}\ \kw\ \textrm{to}\ \ku \enspace.
\end{align}
For undirected $G$, matrix $\Adjacency^{\kell}_G$ is symmetric for any $\kell\geq 0$. Hence, $\Adjacency^{\kell}_G$ has real eigenvalues, while 
the eigenvectors corresponding to distinct eigenvalues are orthogonal with each other. 
We denote with $\eigenv_{\kj} \in \mathbb{R}^V$ the eigenvector of $\Adjacency_G$ 
that corresponds to the eigenvalue $\eigenval_{\kj}(\Adjacency_G)$.
Unless otherwise specified, we always take $\eigenv_{\kj}$ such that $ \norm{ \eigenv_{\kj}}{2}=1$.

Since $\Adjacency_G$ is symmetric, the {\em spectral radius} satisfies that $\spradius(\Adjacency_G)=\norm{\Adjacency_G}{2}$. 
Recall that $\norm{\Adjacency_G}{2}=\max_{{\bf x}\in\mathbb{R}^V:{\bf x}\neq {\bf 0}}\frac{ \norm{ \Adjacency_G \cdot {\bf x}}{2}}{\norm{ {\bf x}}{2}}$.

When $G$ is connected, then $\Adjacency_G$ is irreducible. Then, the Perron-Frobenius Theorem 
implies that
\begin{align}\label{eq:DefOfBSMatrix}
\spradius(\Adjacency_G)&=\eigenval_1(\Adjacency_G)& \textrm{and} && \maxeigenv(\ku) &>0 \qquad \forall \ku\in V \enspace.
\end{align}

It is an easy exercise to show that  $\sqrt{\maxDeg} \leq \spradius(\Adjacency_G)\leq \maxDeg$.

 \begin{figure}
 \begin{minipage}{.6\textwidth}
 \centering
		\includegraphics[width=.7\textwidth]{./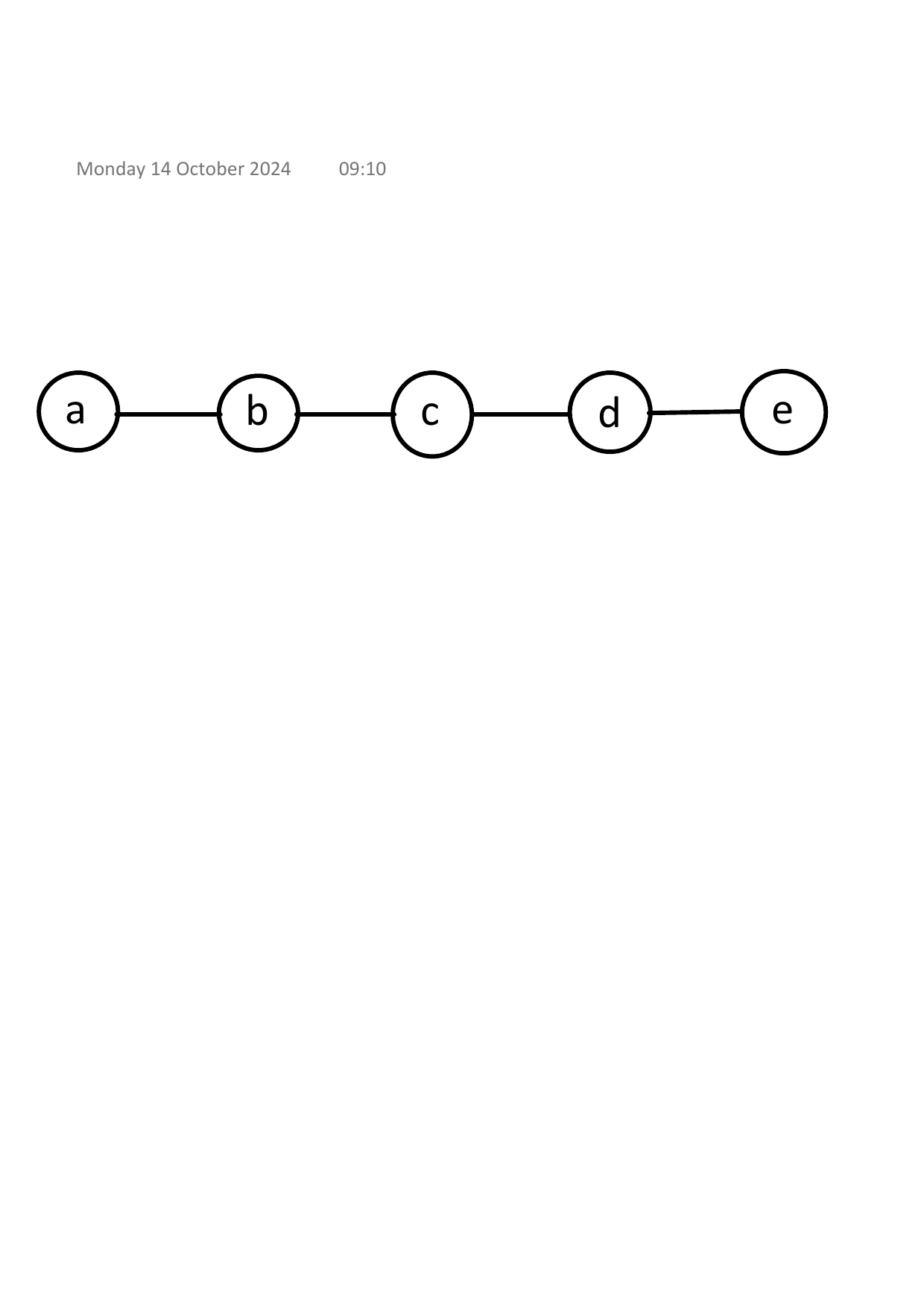}
		\caption{Example graph $G$ for $\NBMatrix$}
	\label{fig:OrientedEdges}
\end{minipage}
\end{figure}

\subsection{Matrices on oriented paths \& PT-Invariance}\label{sec:MatricesOnPaths}
For graph $G=(V,E)$ and integer $\kk>0$, we let $\ExtV_{\kk}$ be the set of self-avoiding walks of length $\kk$. 

For  $\kP \in \ExtV_{\kk}$, we let $\kP^{-1}$ denote the walk with the opposite direction. E.g., 
suppose $\kP=\ku_0,\ku_1, \ldots, \ku_{\kk-1}, \ku_\kk$, then $\kP^{-1}=\ku_{\kk},\ku_{\kk-1},\ldots, \ku_{1}, \ku_0$.

In this paper, we work with matrices defined on (subsets of) $\ExtV_{\kk}$. Typically, the matrices we consider
are not symmetric. However, they exhibit other symmetries that we utilise in the analysis. The most important one is 
the so-called {\em PT-invariance}, where PT stands for ``parity-time". 

Let $L\subseteq \ExtV_{\kk}$ be such that if $\kP \in L$, then $\kP^{-1}\in L$, too. 
For ${\bf \kx}\in \mathbb{R}^{L}$, let the vector $\PTV{\bf \kx}$ be such that
\begin{align}
\PTV{\bf \kx}(\kP) &={\bf \kx}(\kP^{-1}) &\forall \kP\in L \enspace. 
\end{align}

\noindent
Also, let $\Invol_L$ be the involution on $\mathbb{R}^{L}$ such that
\begin{align}\label{def:OfInvolutation}
\Invol_L \cdot {\bf \kx}=\PTV{\bf \kx} \enspace. 
\end{align}
The $L\times L$ matrix $\UpZ$ is PT-invariant if we have 
\begin{align}\label{eq:DefPTInvarianceFormal}
\UpZ\cdot \Invol_L &=\Invol_L \cdot \MTR{\UpZ}\enspace, 
\end{align}
where $\MTR{\UpZ}$ is the  matrix transpose of $\UpZ$. The above implies 
that $\UpZ\cdot \Invol_L$ is symmetric, while we also have
\begin{align}\label{def:PTInvariance}
\UpZ(\kP,\kQ) &=\UpZ(\kQ^{-1},\kP^{-1}) \enspace, & \forall \kP,\kQ \in L \enspace. 
\end{align}

\subsubsection{The $\kk$-non-backtracking matrix $\NBMatrix$}
 For graph $G=(V,E)$ and integer $\kk>0$, recall that $\ExtV_{\kk}$ is the set of self-avoiding walks of length $\kk$.

 The (order) $\kk$-non-backtracking matrix $\NBMatrix$ is an $\ExtV_{\kk} \times \ExtV_{\kk}$, zero-one matrix 
 such that for any $\kP=\kx_0,\ldots, \kx_{\kk}$ and $\kQ=\kz_0, \ldots, \kz_{\kk}$ in $\ExtV_{\kk}$, we have 
\begin{align}\label{eq:DefOfKNBMatrix}
\NBMatrix(\kP, \kQ) &=\Ind\{\kx_0 \neq \kz_{\kk} \}\times \prod\nolimits_{0\leq \ki <\kk}\Ind\{\kz_{\ki} = \kx_{\ki+1}\} \enspace.
\end{align}
That is, $\NBMatrix(\kP, \kQ)$ is equal to $1$ if the walk $\kx_0, \ldots, \kx_{\kk}, \kz_{\kk}$  does not repeat vertices, i.e.,
it is self-avoiding. Otherwise, $\NBMatrix(\kP, \kQ)$ is zero. 
When $\kk=1$, $\NBMatrix$  to the well-known {\em Hashimoto non-backtracking matrix} \cite{Hash89}.

For the graph $G$ in \Cref{fig:OrientedEdges}, consider $\NBMatrixE_{G,3}$.
Then, for $\kP=\ka\kb\kc$, $\kQ=\kb\kc \kd$ and $\kR=\kc \kd \ke$ 
we have $\NBMatrixE_{G,3}(\kP, \kQ)=1$, also, we have $\NBMatrixE_{G,3}(\kQ,\kR)=1$, while note that we have 
$\NBMatrixE_{G,3}(\kP, \kR)=0$.

For any integer $\kr>\kk$, we say that walk $\kR=\kw_{0},\ldots, \kw_{\kr}$ is a {\em $\kk$-non-backtracking 
walk} in $G$, if every $(\kk+2)$-tuple of consecutive vertices $\kw_{\ki}, \ldots, \kw_{\ki+\kk+1}$ 
in $\kR$ is a self-avoiding walk.
Furthermore, for any integer $\kell>0$ and any $\kP,\kQ\in \ExtV_{\kk}$, we have 
\begin{align}
\NBMatrix^{\kell}(\kP,\kQ) &=\textrm{ $\#$ $\kk$-non-backtracking walks from $\kP$ to $\kQ$ having length $\kell$} \enspace.
\end{align}
The above entry counts all the $\kk$-non-backtracking walks $\kR=\kw_0, \ldots, \kw_{\kell+\kk}$ such that the first $\kk+1$ vertices
in $\kR$ correspond to  walk $\kP$, while the last $\kk+1$ vertices correspond to walk $\kQ$. 
Note that $\kR$ is a walk with $\kell+\kk+1$ vertices.

\subsubsection*{Norms \& Singular Values of $\NBMatrix$:}
In the general case, $\NBMatrix$ is not normal, i.e., $\NBMatrix\cdot \MTR{\NBMatrixE}_{G,\kk} \neq \MTR{\NBMatrixE}_{G,\kk}\cdot {\NBMatrix}$.

It is elementary to show that it is PT-invariant. More specifically, for any integer $\kell\geq 0$, we have 
\begin{align}\label{eq:DefPTInvarianceFormalB}
\NBMatrix^{\kell} \cdot \Invol &= \Invol \cdot\MTR{\NBMatrixE}^{\kell}_{G,\kk}\enspace,
\end{align}
where $\Invol $ is the involution on $\mathbb{R}^{\ExtV_{\kk}}$ defined as in \eqref{def:OfInvolutation}. 
The PT-invariance emerges from the simple observation that for any two $\kP,\kQ \in \ExtV_{k}$, 
the number of $\kk$-non-backtracking walks of length $\kell$ from $\kP$ to $\kQ$ in graph $G$ is the
same as the number of $\kk$-non-backtracking walks of length $\kell$ from $\kQ^{-1}$ to $\kP^{-1}$.

Notably \eqref{eq:DefPTInvarianceFormalB} implies that 
matrix $\NBMatrix^{\kell} \cdot \Invol$ is symmetric, i.e., the r.h.s. of \eqref{eq:DefPTInvarianceFormalB} is
the transpose of the l.h.s. since we have $\MTR{\Invol}={\Invol}$.

For integers $\kk,\kell>0$, let $\hsingular_{\kk,\kell}$ be the maximum {\em singular value} of matrix 
$(\NBMatrix)^{\kell}$, i.e., $\norm{ (\NBMatrix)^{\kell}}{2}=\hsingular_{\kk,\kell}$. 
It is standard that 
\begin{align}
\norm{(\NBMatrix)^{\kell}}{2}=\norm{ (\NBMatrix)^{\kell} \cdot \Invol}{2}=\hsingular_{\kk,\kell}\enspace,
\end{align}
Since $(\NBMatrix)^{\kell}\cdot \Invol$ is symmetric, 
 $\sigma_{\kk,\kell}$ is equal to the spectral radius of matrix $(\NBMatrix)^{\kell}\cdot \Invol$. 

For the analysis, we use of the following two results about $\hsingular_{\kk,\kell}$, which are standard to prove.

\begin{restatable}{claim}{hsingpathbound}\label{claim:SigmaLVsPathNumber}
For integers $\kell,\kk>0$ and any $\kP,\kQ \in \ExtV_{\kk}$, we have 
$\NBMatrix^{\kell}(\kP,\kQ) \leq \sigma_{\kk,\kell}$.
\end{restatable}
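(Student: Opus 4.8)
The plan is to derive the bound from the elementary fact that every entry of a real matrix is dominated in absolute value by its induced $\ell_2\to\ell_2$ operator norm. By the walk-counting interpretation recorded just above, $\NBMatrix^{\kell}(\kP,\kQ)$ is a nonnegative integer, hence equal to $\abs{\NBMatrix^{\kell}(\kP,\kQ)}$; and, as noted in the excerpt, $\norm{(\NBMatrix)^{\kell}}{2}=\sigma_{\kk,\kell}$. So it suffices to prove $\abs{\NBMatrix^{\kell}(\kP,\kQ)}\leq \norm{(\NBMatrix)^{\kell}}{2}$.

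To do this, let $\mathbf{e}_{\kP},\mathbf{e}_{\kQ}\in\mathbb{R}^{\ExtV_{\kk}}$ be the standard basis vectors indexed by $\kP$ and $\kQ$, so that $\norm{\mathbf{e}_{\kP}}{2}=\norm{\mathbf{e}_{\kQ}}{2}=1$ and $\NBMatrix^{\kell}(\kP,\kQ)=\langle \mathbf{e}_{\kP},\,(\NBMatrix)^{\kell}\,\mathbf{e}_{\kQ}\rangle$. Applying the Cauchy--Schwarz inequality and then the definition of the induced norm,
\[
\abs{\NBMatrix^{\kell}(\kP,\kQ)}=\abs{\langle \mathbf{e}_{\kP},\,(\NBMatrix)^{\kell}\,\mathbf{e}_{\kQ}\rangle}\leq \norm{\mathbf{e}_{\kP}}{2}\cdot\norm{(\NBMatrix)^{\kell}\,\mathbf{e}_{\kQ}}{2}\leq \norm{(\NBMatrix)^{\kell}}{2}\cdot\norm{\mathbf{e}_{\kP}}{2}\cdot\norm{\mathbf{e}_{\kQ}}{2}=\sigma_{\kk,\kell}.
\]
Combining this with the nonnegativity of the entry yields $\NBMatrix^{\kell}(\kP,\kQ)\leq\sigma_{\kk,\kell}$, as claimed.

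There is essentially no obstacle here: the statement is a one-line consequence of Cauchy--Schwarz, the only points worth spelling out being that the entry is nonnegative (so the absolute value can be dropped) and that $\sigma_{\kk,\kell}$ is, by definition, $\norm{(\NBMatrix)^{\kell}}{2}$. As an alternative route one could use PT-invariance: since $\Invol$ acts as the coordinate permutation $\mathbf{e}_{\kQ}\mapsto\mathbf{e}_{\kQ^{-1}}$, one has $\NBMatrix^{\kell}(\kP,\kQ)=\bigl((\NBMatrix)^{\kell}\cdot\Invol\bigr)(\kP,\kQ^{-1})$, and the entries of the symmetric matrix $(\NBMatrix)^{\kell}\cdot\Invol$ are bounded by its spectral radius, which equals $\sigma_{\kk,\kell}$; this is slightly longer and not needed.
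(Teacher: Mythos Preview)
Your proof is correct. It is more direct than the paper's: you bound the entry by $\norm{(\NBMatrix)^{\kell}}{2}$ via Cauchy--Schwarz and the definition of the operator norm, which works for any matrix. The paper instead passes to the symmetric matrix $(\NBMatrix)^{\kell}\cdot\Invol$ via PT-invariance, chooses the unit vector $\mathbf{x}=\tfrac{1}{\sqrt{2}}(\mathbf{e}_{\kP}+\mathbf{e}_{\kQ^{-1}})$, and bounds $\NBMatrix^{\kell}(\kP,\kQ)\leq \mathbf{x}^T\bigl((\NBMatrix)^{\kell}\cdot\Invol\bigr)\mathbf{x}\leq \norm{(\NBMatrix)^{\kell}\cdot\Invol}{2}=\sigma_{\kk,\kell}$ using the Rayleigh-quotient bound for symmetric matrices. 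This is exactly the ``alternative route'' you sketched and dismissed as unnecessary; you were right that it is, and your argument is the cleaner of the two.
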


The proof of \Cref{claim:SigmaLVsPathNumber} appears in \Cref{sec:claim:SigmaLVsPathNumber}.

\begin{restatable}{claim}{hsingseqnondec}\label{lemma:SingSequenConv}
For any integers $\kk,N>0$ and $\kz\in (0,1)$, there is $\ell_0=\ell_0(N,\kz,\maxDeg)$ such that 
for any integer $\kell\geq \ell_{0}$, we have
$\left( \hsingular_{\kk,\kell} \right)^{1/\kell} \leq (1+\kz) (\hsingular_{\kk,N})^{\frac{1}{N}}\enspace. $
\end{restatable}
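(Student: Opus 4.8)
\noindent
The plan is to base everything on submultiplicativity of the operator $2$-norm together with the crude a priori bound $\hsingular_{\kk,\kell}\le(\maxDeg-1)^{\kell}$. The reason for carrying this explicit bound, rather than appealing only to Fekete's subadditive lemma, is that it forces the threshold $\ell_0$ to depend on $N,\kz,\maxDeg$ only, uniformly over all graphs $G$ of maximum degree $\maxDeg$; a soft Fekete argument would produce a graph-dependent $\ell_0$.

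First I would record the two ingredients. Since $(\NBMatrix)^{\kell+\km}=(\NBMatrix)^{\kell}\,(\NBMatrix)^{\km}$, submultiplicativity of $\norm{\cdot}{2}$ gives $\hsingular_{\kk,\kell+\km}\le\hsingular_{\kk,\kell}\cdot\hsingular_{\kk,\km}$ for all $\kell,\km\ge 0$, with the convention $\hsingular_{\kk,0}=1$. For the a priori bound, the row of $(\NBMatrix)^{\kell}$ indexed by $\kP\in\ExtV_{\kk}$ sums to the number of $\kk$-non-backtracking walks of length $\kell$ extending $\kP$, which is at most $(\maxDeg-1)^{\kell}$ because at each of the $\kell$ steps there are at most $\maxDeg-1$ non-backtracking extensions; by the PT-invariance of the walk count (i.e.\ \eqref{def:PTInvariance}) the columns obey the same bound. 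Hence $\norm{(\NBMatrix)^{\kell}}{1},\norm{(\NBMatrix)^{\kell}}{\infty}\le(\maxDeg-1)^{\kell}$, and therefore $\hsingular_{\kk,\kell}=\norm{(\NBMatrix)^{\kell}}{2}\le\sqrt{\norm{(\NBMatrix)^{\kell}}{1}\,\norm{(\NBMatrix)^{\kell}}{\infty}}\le(\maxDeg-1)^{\kell}$.

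Next I would dispose of the degenerate case: if $(\NBMatrix)^{N}=0$ then $\hsingular_{\kk,N}=0$ and $(\NBMatrix)^{\kell}=0$ for every $\kell\ge N$, so the claim holds trivially with $\ell_0:=N$. Otherwise $(\NBMatrix)^{N}$ is a nonzero matrix with nonnegative integer entries, so some entry is at least $1$ and hence $\hsingular_{\kk,N}\ge 1$. For arbitrary $\kell\ge N$ write $\kell=\kq N+\kr$ with $\kq\ge 1$ and $0\le\kr<N$; iterating submultiplicativity and applying the a priori bound to the remainder gives
\begin{align*}
\hsingular_{\kk,\kell}\ \le\ (\hsingular_{\kk,N})^{\kq}\cdot\hsingular_{\kk,\kr}\ \le\ (\hsingular_{\kk,N})^{\kq}\cdot(\maxDeg-1)^{N}\enspace.
\end{align*}
Taking $\kell$-th roots and using $\kq/\kell\le 1/N$ together with $\hsingular_{\kk,N}\ge 1$,
\begin{align*}
(\hsingular_{\kk,\kell})^{1/\kell}\ \le\ (\hsingular_{\kk,N})^{\kq/\kell}\cdot(\maxDeg-1)^{N/\kell}\ \le\ (\hsingular_{\kk,N})^{1/N}\cdot(\maxDeg-1)^{N/\kell}\enspace.
\end{align*}
Finally, choosing $\ell_0=\max\{N,\ \lceil N\log\maxDeg/\log(1+\kz)\rceil\}$ makes $(\maxDeg-1)^{N/\kell}\le\maxDeg^{N/\kell}\le 1+\kz$ for all $\kell\ge\ell_0$, which is exactly the assertion; and $\ell_0$ depends only on $N,\kz,\maxDeg$ (the degenerate branch above used $\ell_0=N\le\max\{N,\dots\}$, so a single choice of $\ell_0$ works in all cases).

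I do not expect a genuine obstacle; the only thing requiring care is obtaining a threshold that is uniform in $G$, which is precisely why I would use the explicit estimate $\hsingular_{\kk,\kell}\le(\maxDeg-1)^{\kell}$ in place of an abstract convergence argument, and why I would peel off the case in which a power of $\NBMatrix$ vanishes separately.
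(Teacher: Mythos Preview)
Your proof is correct and takes essentially the same approach as the paper's: both write $\kell=qN+r$, use submultiplicativity of $\norm{\cdot}{2}$ to peel off $(\hsingular_{\kk,N})^{q}$, bound the remainder via $\norm{(\NBMatrix)^{r}}{2}\le(\maxDeg-1)^{r}$, and then choose $\ell_0$ so that the remainder's $\kell$-th root is at most $1+\kz$. Your treatment is a bit more careful in explicitly disposing of the degenerate case $(\NBMatrix)^{N}=0$ and in justifying $\hsingular_{\kk,N}\ge 1$ before using the monotonicity step $(\hsingular_{\kk,N})^{q/\kell}\le(\hsingular_{\kk,N})^{1/N}$, which the paper's argument relies on implicitly.
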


The proof of \Cref{lemma:SingSequenConv} appears in \Cref{sec:lemma:SingSequenConv}.

\subsection{$\Tsaw$ Construction for $\infmatrix^{\Lambda,\tau}_{G}$} 
\label{sec:TsawConstruction}

 \begin{figure}
 \begin{minipage}{.45\textwidth}
 \centering
		\includegraphics[width=.41\textwidth]{./G4Tsaw.pdf}
		\caption{Initial graph $G$}
	\label{fig:Base4TSAW}
\end{minipage}
 \begin{minipage}{.4\textwidth}
 \centering
		\includegraphics[height=3.1cm]{./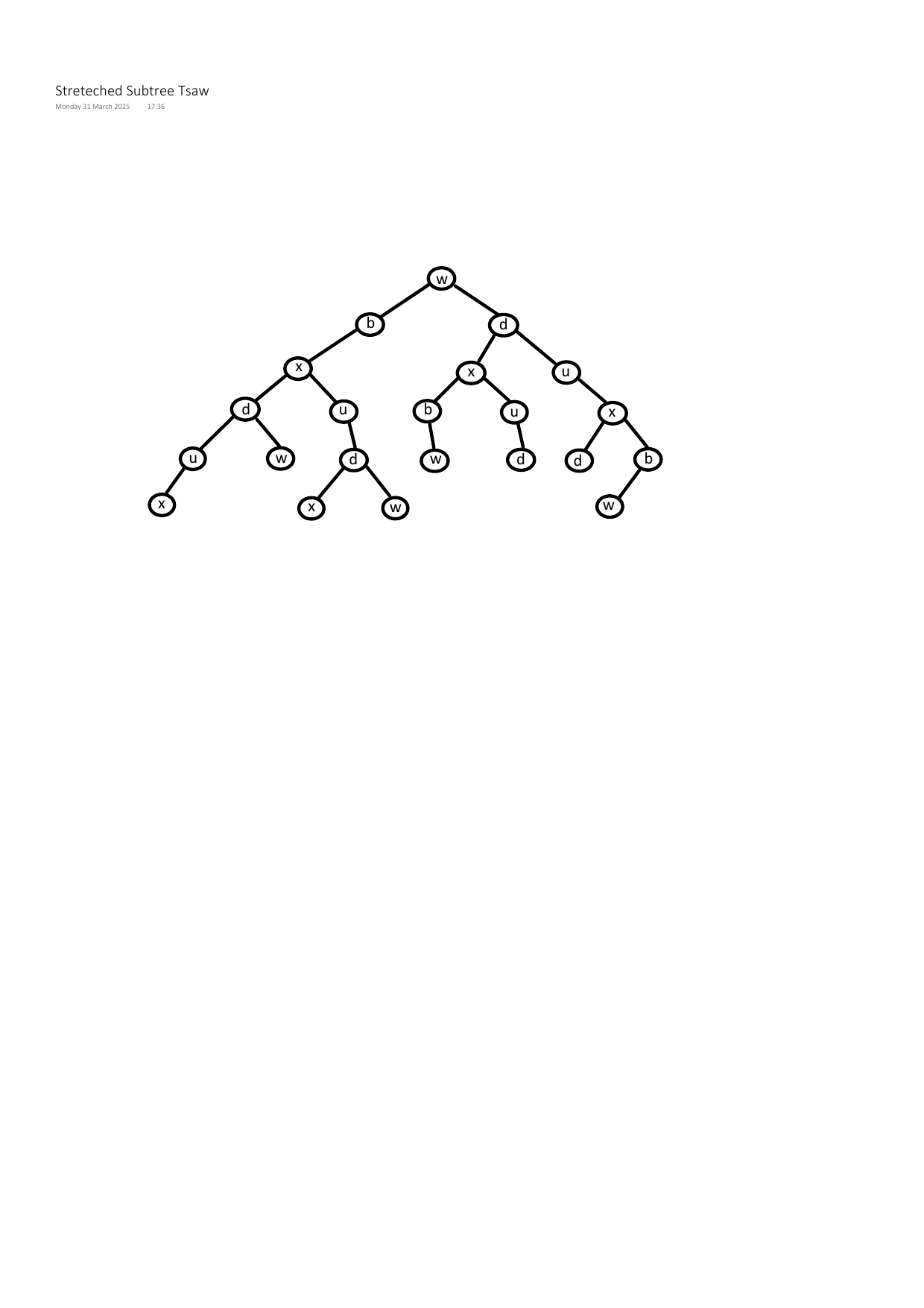}
		\caption{$\Tsaw(G,\kw)$}
	\label{fig:TSAWExample}
\end{minipage}
\end{figure}

The exposition here relies on results from \cite{OptMCMCIS,VigodaSpectralInd}.
Let graph $G=(V,E)$ and a Gibbs distribution $\mu_G$ on $\{\pm 1\}^V$. 
Assume w.l.o.g. that there is a {\em total ordering} of the vertices in $G$.

We start by introducing the notion of the {\em tree of self-avoiding walks} in $G$. 
For each vertex $\kw$ in $G$, we define $T=\Tsaw(G,\kw)$, the tree of self-avoiding walks 
starting from $\kw$, as follows:
Consider the set consisting of every walk $\kv_0, \ldots, \kv_{\kr}$ in graph $G$ that emanates 
from vertex $\kw$, i.e., $\kv_0=\kw$, while one of the following two holds
\begin{enumerate}
\item $\kv_0, \ldots, \kv_{\kr}$ is a self-avoiding walk,
\item $\kv_0, \ldots, \kv_{\kr-1}$ is a self-avoiding walk, while there is $j\leq \kr-3$ such that $\kv_{\kr}=\kv_{j}$.
\end{enumerate}
Each of these walks  corresponds to a vertex in $T$. Two vertices in $T$ are adjacent 
if the corresponding walks are adjacent, i.e.,  one walk extends the other by one vertex.

The vertex in $T$ that corresponds to the walk $\kv_0, \ldots, \kv_{\kr}$, 
is called  ``copy of vertex $\kv_{\kr}$", i.e., $\kv_{\kr}$ is the last vertex in the path. 
For each vertex $\kv$ in $G$, we let $\cp(\kv)$ be the set of its copies in $T$. 
Note that vertex $\kv$ may more than one  copies in $T$. 

For an example of the above construction, consider the graph $G$ in \Cref{fig:Base4TSAW}.
In \Cref{fig:TSAWExample}, we have the tree of self-avoiding walks that starts from vertex $\kw$ in $G$.
The vertices with label $\kv$ in the tree  correspond to the copies of  vertex $\kv$ of the initial 
graph $G$, e.g. the root if a copy of vertex $\kw$.

For $\Lambda\subset V$ and $\tau\in \{\pm \}^{\Lambda}$, let 
the influence matrix $\infmatrix^{\Lambda,\tau}_{G}$ induced by $\mu^{\Lambda,\tau}_G$. 
We describe how the entry 
$\infmatrix^{\Lambda,\tau}_{G}(\kw,\ku)$ can be expressed using an appropriately defined 
spin-system on $T=\Tsaw(G,\kw)$.

Let $\mu_T$ be a Gibbs distribution on $T$ which has the same specification as $\mu_G$.
Each $\kz \in \cp(\kx)$ in the tree $T$, such 
that $\kx \in \Lambda$, is pinned to   configuration  $\tau(\kx)$. Furthermore, if  
 vertex $\kz$ in $T$  corresponds  to walk  $\kv_0, \ldots, \kv_{\kell}$ in $G$ such that 
$\kv_{\kell}=\kv_j$, for $0\leq j \leq \kell-3$, then  we set a pinning at vertex $\kz$, as well. This pinning 
depends on the  total ordering of the vertices. Particularly, we set at $\kz$
\begin{enumerate}[(a)]	
\item $-1$ if $\kv_{\kj+1}>\kv_{\kell-1}$, 
\item $+1$ otherwise.
\end{enumerate}
Let $\Gamma=\Gamma(G,\Lambda)$ be the set of vertices in $T$ which have been pinned 
in the above construction, while let $\sigma=\sigma(G,\tau)$ be the pinning  at $\Gamma$.

For each edge $\ke$ in $T$, we specify weight $\infweight(\ke)$ as follows: letting $\ke=\{\kx,\kz\}$ be such 
that $\kx$ is the parent of $\kz$ in $T$, we set 
\begin{align}\label{def:OfInfluenceWeights}
\infweight(\ke)=\left \{ 
\begin{array}{lcl}
0 & \quad&\textrm{if there is a pinning at either $\kx$ or $\kz$},\\
\textstyle \dlogtrecur\left(\log \gratio^{\Gamma, \sigma}_{\kz}\right) & & \textrm{otherwise}.
\end{array}
\right .
\end{align}
The function $\dlogtrecur(\cdot)$ is from \eqref{eq:DerivOfLogRatio}, while $\gratio^{\Gamma, \sigma}_{\kz}$ 
is a ratio of Gibbs marginals at $\kz$ (see definitions in Section \ref{sec:RecursionVsSpectralIneq}). 
Then, we have the following proposition.  
\begin{proposition}[\cite{OptMCMCIS,VigodaSpectralInd}]\label{prop:Inf2TreeRedaux}
For every $\ku,\kw\in V\setminus \Lambda$ we have that 
\begin{align}\label{eq:InfluenceEntryAsWeightedSum}
\infmatrix^{\Lambda,\tau}_{G}(\kw,\ku) &=\sum\nolimits_{\kP}\prod\nolimits_{\ke\in \kP}\infweight(\ke) \enspace,
\end{align}
where $\kP$ varies over all paths from the root of $\Tsaw(G,\kw)$ to the set of vertices in $\cp(\ku)$. 
\end{proposition}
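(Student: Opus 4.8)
The plan is to reduce from the graph $G$ to the tree $T=\Tsaw(G,\kw)$ by a differentiated form of Weitz's self-avoiding walk tree identity \cite{Weitz}, and then to obtain \eqref{eq:InfluenceEntryAsWeightedSum} by applying the chain rule to the recursions \eqref{eq:DefOfH} along the paths of $T$. The first step is to re-express the influence as an external-field derivative. Endow every vertex $v\in V$ with a field $\theta_v\in\mathbb{R}$, so that in \eqref{def:GibbDistr} the weight of $\sigma$ is additionally multiplied by $\prod_v\exp(\theta_v\Ind\{\sigma(v)=+1\})$, and write $\PartFun$ for the resulting (conditional) partition function, so that $\partial_{\theta_v}\log\PartFun=\mu^{\Lambda,\tau}_v(+1)$ for $v\notin\Lambda$. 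Conditioning on $\kw$ and using that $\partial_{\theta_\kw}\mu^{\Lambda,\tau}_\kw(\pm1)=\pm\mu^{\Lambda,\tau}_\kw(+1)\mu^{\Lambda,\tau}_\kw(-1)$ while the conditional marginals $\mu^{\Lambda,\tau}_\ku(+1\mid\kw\!\mapsto\!\pm1)$ do not depend on $\theta_\kw$, one gets $\partial_{\theta_\kw}\mu^{\Lambda,\tau}_\ku(+1)=\mu^{\Lambda,\tau}_\kw(+1)\mu^{\Lambda,\tau}_\kw(-1)\,\infmatrix^{\Lambda,\tau}_{G}(\kw,\ku)$; combining this with the symmetry $\partial_{\theta_\ku}\mu^{\Lambda,\tau}_\kw(+1)=\partial_{\theta_\kw}\mu^{\Lambda,\tau}_\ku(+1)$ of the Hessian of $\log\PartFun$ and with $\partial_{\theta_\ku}\log\gratio^{\Lambda,\tau}_\kw=\partial_{\theta_\ku}\mu^{\Lambda,\tau}_\kw(+1)/(\mu^{\Lambda,\tau}_\kw(+1)\mu^{\Lambda,\tau}_\kw(-1))$ gives
\begin{align}\nonumber
\infmatrix^{\Lambda,\tau}_{G}(\kw,\ku)=\left.\frac{\partial}{\partial\theta_{\ku}}\log\gratio^{\Lambda,\tau}_{\kw}\,\right|_{\theta\equiv 0}.
\end{align}
If $\mu^{\Lambda,\tau}_{\kw}(+1)$ or $\mu^{\Lambda,\tau}_{\ku}(+1)$ lies in $\{0,1\}$ then both sides of \eqref{eq:InfluenceEntryAsWeightedSum} vanish by the stated conventions, so it suffices to treat the non-degenerate case.

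Next I would apply Weitz's construction \cite{Weitz} to the field-carrying distribution: $\gratio^{\Lambda,\tau}_{G,\kw}=\gratio^{\Gamma,\sigma}_{T,\kr}$, where $\kr$ is the root of $T=\Tsaw(G,\kw)$, $(\Gamma,\sigma)$ is precisely the pinning defined above \eqref{def:OfInfluenceWeights}, and on $T$ each copy $\kz\in\cp(v)$ carries the same field $\theta_v$. Treating the fields of the copies of $\ku$ as independent variables and then identifying them, the multivariate chain rule and the previous display give
\begin{align}\nonumber
\infmatrix^{\Lambda,\tau}_{G}(\kw,\ku)=\left.\frac{\partial}{\partial\theta_{\ku}}\log\gratio^{\Gamma,\sigma}_{T,\kr}\,\right|_{0}=\sum_{\kz\in\cp(\ku)}\left.\frac{\partial\log\gratio^{\Gamma,\sigma}_{T,\kr}}{\partial\theta_{\kz}}\,\right|_{0}.
\end{align}
Finally, for a fixed $\kz\in\cp(\ku)$ with root-to-$\kz$ path $\kP:\ \kr=\kv_0,\kv_1,\dots,\kv_m=\kz$ (each $\kv_{i+1}$ a child of $\kv_i$), a field at $\kz$ shifts $\log\gratio^{\Gamma,\sigma}_{\kz}$ by exactly that field, and, by \eqref{eq:DefOfH}--\eqref{eq:DerivOfLogRatio} and since the remaining children of each $\kv_i$ are unaffected, each step up the path multiplies the sensitivity by $\partial\log\gratio_{\kv_i}/\partial\log\gratio_{\kv_{i+1}}=\dlogtrecur(\log\gratio^{\Gamma,\sigma}_{\kv_{i+1}})$, so that $\partial_{\theta_\kz}\log\gratio^{\Gamma,\sigma}_{T,\kr}=\prod_{i=0}^{m-1}\dlogtrecur(\log\gratio^{\Gamma,\sigma}_{\kv_{i+1}})=\prod_{\ke\in\kP}\infweight(\ke)$, which is exactly \eqref{def:OfInfluenceWeights} (a pinned vertex on $\kP$ forces the relevant subtree ratio and makes the incident edge weight $0$, consistently killing that term). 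Summing over $\kz\in\cp(\ku)$ yields \eqref{eq:InfluenceEntryAsWeightedSum}.

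The step I expect to require the most care is the junction between the last two: one must check that Weitz's identity survives the insertion of the external fields — it does, since a field at $v$ only rescales the fugacity at $v$ and at its copies, and Weitz's identity is a purely local recursive statement — and that the finite chain-rule expansion over a (possibly deep) tree is valid, i.e.\ that every ratio occurring lies in $[0,+\infty]$ and $\logtrecur_{\kd}$ is differentiable where used, both immediate from \eqref{eq:BPRecursion}--\eqref{eq:DerivOfLogRatio}. The remaining bookkeeping — that interior ``backtracking'' copies and copies of $\Lambda$-vertices contribute $0$ in agreement with $\infweight(\ke)=0$, and that summing over all of $\cp(\ku)$ rather than only its self-avoiding copies is harmless — is routine. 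An equivalent combinatorial route replaces the chain-rule step: after the field-derivative reduction $\infmatrix^{\Lambda,\tau}_G(\kw,\ku)=\sum_{\kz\in\cp(\ku)}\infmatrix^{\Gamma,\sigma}_T(\kr,\kz)$, one proves $\infmatrix^{\Gamma,\sigma}_T(\kr,\kz)=\prod_{\ke\in\kP}\infweight(\ke)$ by induction on the depth of $\kz$, using that the child of $\kr$ on $\kP$ separates $\kr$ from $\kz$ in $T$, together with the one-line identity $\infmatrix^{\Gamma,\sigma}_T(\kr,\kv_1)=\dlogtrecur(\log\gratio^{\Gamma,\sigma}_{\kv_1})$.
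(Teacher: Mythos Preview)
The paper does not supply its own proof of this proposition; it is quoted as a known result from \cite{OptMCMCIS,VigodaSpectralInd} and then used as a black box throughout. Your sketch is correct and is essentially the argument one finds in those references: the identification $\infmatrix^{\Lambda,\tau}_{G}(\kw,\ku)=\partial_{\theta_\ku}\log\gratio^{\Lambda,\tau}_{\kw}$ via the Hessian symmetry of $\log\PartFun$, Weitz's reduction $\gratio^{\Lambda,\tau}_{G,\kw}=\gratio^{\Gamma,\sigma}_{T,\kr}$ (which indeed is unaffected by vertex-local field reparametrisations), and then the chain rule along the unique root-to-$\kz$ path for each copy $\kz\in\cp(\ku)$, using that $\partial\log\gratio_{\kv_i}/\partial\log\gratio_{\kv_{i+1}}=\dlogtrecur(\log\gratio^{\Gamma,\sigma}_{\kv_{i+1}})=\infweight(\{\kv_i,\kv_{i+1}\})$. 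The alternative inductive formulation you mention at the end (prove $\infmatrix^{\Gamma,\sigma}_T(\kr,\kz)=\prod_{\ke\in\kP}\infweight(\ke)$ directly on the tree) is the way the result is often packaged in \cite{VigodaSpectralInd}; the two are equivalent and your handling of the degenerate/pinned cases via $\infweight(\ke)=0$ is consistent with the paper's convention in \eqref{def:OfInfluenceWeights}.
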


\spreadpoint

\section{Proof of \Cref{thrm:AdjacencyInfinityMixing} - SI with $\delta$-contraction \& $\Adjacency_G$ \LastReviewG{2025-03-1}}\label{sec:thrm:AdjacencyInfinityMixing}

Recalling that $\delta=(1-\varepsilon)/\aspradius$, let $\powadj$ be the $V\times V$ matrix defined by
\begin{align}\label{eq:DefOfPowerAdj}
 \powadj&= \sum\nolimits_{0\leq \kell \leq n} \left( \delta \cdot \Adjacency_G\right)^{\kell} \enspace.
\end{align}
Since the adjacency matrix $\Adjacency_G$ is symmetric,  $\powadj$ is symmetric, as well. We have 
\begin{align}\nonumber
\spradius\left (\powadj \right ) & = \norm{ \powadj}{2} = \norm{ \sum\nolimits_{\kell\geq 0}(\delta\cdot \Adjacency_G)^{\kell}}{2} 
\leq \sum\nolimits_{\kell\geq 0} \delta^{\kell} \cdot \norm{ (\Adjacency_G)^{\kell}}{2}  \leq \sum\nolimits_{\kell\geq 0}(1-\varepsilon)^{\kell} =\varepsilon^{-1} \enspace.
\end{align}
Let $\powadj_{V\setminus \Lambda}$ be the principal submatrix of $\powadj$,  which is obtained by 
removing rows and columns that correspond to the vertices in $\Lambda$. 
Cauchy’s interlacing theorem, e.g., see \cite{MatrixAnalysis}, implies that
\begin{align}
\spradius\left (\powadj_{V\setminus \Lambda} \right )&\leq \spradius\left (\powadj \right ) \leq \varepsilon^{-1}\enspace.
\end{align}
\Cref{thrm:AdjacencyInfinityMixing} follows by showing that $\spradius( \infmatrix^{\Lambda,\tau}_{G} ) \leq \spradius\left( \powadj_{V\setminus \Lambda} \right)$.
To this end,  it suffices to show  
\begin{align}\label{eq:Target4thrm:AdjacencyInfinityMixing}
 \abs{\infmatrix^{\Lambda,\tau}_{G} (\kw,\ku)}&\leq \powadj_{V\setminus \Lambda}(\kw,\ku) & \forall \ku,\kw\in V\setminus \Lambda \enspace. 
\end{align}
To see why the above implies $\spradius( \infmatrix^{\Lambda,\tau}_{G} ) \leq \spradius\left( \powadj_{V\setminus \Lambda} \right)$ see  \Cref{lemma:MonotoneVsSRad} in the Appendix,

It is immediate that for any $\ku\in V\setminus \Lambda$, we have $\abs{\infmatrix^{\Lambda,\tau}_{G}(\ku,\ku) } =1$.
Also, we have $\powadj(\ku,\ku)\geq 1$, as the summand in \eqref{eq:DefOfPowerAdj} for $\kell=0$ corresponds to 
the identity matrix. This proves that \eqref{eq:Target4thrm:AdjacencyInfinityMixing} is true when $\ku=\kw$.

Fix $\kw,\ku\in V\setminus \Lambda$ such that $\kw\neq \ku$.  Let $T=\Tsaw(G,\kw)$ and let $\{ \infweight(\ke)\}$ be the weights 
on the edges of $T$ specified as in \eqref{def:OfInfluenceWeights} with respect to $\mu^{\Lambda,\tau}_G$.   \Cref{prop:Inf2TreeRedaux} implies
\begin{align}\label{eq:AdjacencyInfinityMixingTsawConstruction}
 \infmatrix^{\Lambda,\tau}_{G}(\kw,\ku) &=\sum\nolimits_{\kell\geq 0}\sum\nolimits_{\kP\in \cP(\ku,\kell)}\prod\nolimits_{\ke\in \kP}\infweight(\ke)\enspace, 
\end{align}
where set $\cP(\ku,\kell)$ consists of the paths of length $\kell$ from the root of $T$ to $\cp(\ku)$. 

The  $\delta$-contraction assumption implies  that $\abs{\infweight(\ke)} \leq \delta$,
for each $\ke\in T$. Then,  
from \eqref{eq:AdjacencyInfinityMixingTsawConstruction},  we have 
\begin{align}\label{eq:AdjacencyInfinityMixingInlEntryBoundDelta}
\abs{\infmatrix^{\Lambda,\tau}_{G} (\kw,\ku)}&\leq \sum\nolimits_{\kell\geq 0} \abs{\cP(\ku,\kell)} \cdot \delta^{\kell} \enspace. 
\end{align}
Let $\saw_{\kell}(\kw,\ku)$ be the set of walks of length $\kell$ from $\kw$ to $\ku$, in graph $G$, that correspond to the elements in 
$\cp(\ku)$. Let ${\rm Walks}_{\kell}(\kw,\ku)$ be the set of walks of length $\kell$ from $\kw$ to $\ku$, in graph $G$.  We have 
$\saw_{\kell}(\kw,\ku)\subseteq {\rm Walks}_{\kell}(\kw,\ku)$  
i.e., since each element in $\saw_{\kell}(\kw,\ku)$ is also a walk in graph $G$ from $\kw$ to $\ku$. We have 
\begin{align}\nonumber 
\abs{\cP(\ku,\kell)} &= \abs{\saw_{\kell}(\kw,\ku)} \leq \abs{{\rm Walks}_{\kell}(\kw,\ku)} = \Adjacency^{\kell}_G(\kw,\ku) \enspace.
\end{align}
The last equality is from \eqref{eq:NoOfWalksVSA2L}.
Plugging the above bound into \eqref{eq:AdjacencyInfinityMixingInlEntryBoundDelta}, we get 
\begin{align}\nonumber 
\abs{\infmatrix^{\Lambda,\tau}_{G} (\kw,\ku)} &\leq \sum\nolimits_{\kell\geq 0} \delta^{\kell} \cdot \Adjacency^{\kell}_G(\kw,\ku)=\powadj_{V\setminus \Lambda}(\kw,\ku) \enspace. 
\end{align}
Hence, \eqref{eq:Target4thrm:AdjacencyInfinityMixing} is also true when $\kw\neq \ku$.
\Cref{thrm:AdjacencyInfinityMixing} follows. \hfill $\square$

\spreadpoint

\section{Proof of \Cref{thrm:AdjacencyPotentialSpIn} - SI with Potentials \& $\Adjacency_G$ \LastReviewG{2025-03-11}}\label{sec:thrm:AdjacencyPotentialSpIn}

We let $\ExtV$ be the set of ordered pairs of adjacent vertices in $V$. Consider the zero-one
$\ExtV\times V$ matrix $\EdgeToV$ such that, for any $\kw\kx\in \ExtV$
and any $\ku\in V$, we have
\begin{align}\label{def:SingleEdgeM}
\EdgeToV(\kw\kx,\ku)&=\Ind\{\kx=\ku\}\enspace. 
\end{align}

\begin{theorem}\label{thrm:InflNormBound4GeneralD}
Let $\pfs\geq 1$, $\delta,c >0$ and the integer $\maxDeg>1$. Also, let 
$\beta,\gamma, \lambda\in \mathbb{R}$ be such that $\gamma >0$, $0\leq \beta\leq \gamma$ and $\lambda>0$. 

Consider graph $G=(V,E)$ of maximum degree $\maxDeg$, while let 
$\mu$ the Gibbs distribution on $G$ specified by the parameters $(\beta, \gamma, \lambda)$.
Suppose  there is a $(\pfs,\delta,c)$-potential function $\potF$ with respect to $(\beta,\gamma,\lambda)$. 

For any $\Lambda\subset V$ and $\tau\in \{\pm 1\}^{\Lambda}$, for any diagonal, non-negative, 
non-singular $\UpD$, such that $\UpD$ and $\infmatrix^{\Lambda, \tau}_G$ are conformable for multiplication
the following is true: for any $\kw \in V\setminus \Lambda$,  we have 
\begin{align}
\lefteqn{
 \sum\nolimits_{\ku \in V\setminus \Lambda } \abs{ ( \UpD^{-1}\cdot \infmatrix^{\Lambda, \tau}_G\cdot \UpD ) (\kw,\ku) }
} \hspace{.5cm}  \label{eq:Target4thrm:InflNormBound4GeneralD} \\
&\leq 1+  
 \frac{c}{\UpD(\kw,\kw) } \left( \sum\nolimits_{\kz \in N(\kw)\setminus \Lambda} \UpD(\kz,\kz)+ \sum\nolimits_{\kell \geq 2} \left( \delta^{\kell-1} \sum\nolimits_{\ku \in V\setminus \Lambda}
\left(\EdgeToV\cdot \Adjacency^{\kell-1}_G\right)(\kw\kz, \ku) 
\cdot \UpD^{\pfs}(\ku,\ku)
\right)^{{1}/{\pfs}} \right)
\enspace,  \nonumber 
\end{align}
where $\Adjacency_G$ is the adjacency matrix of $G$ and matrix $\EdgeToV$ is defined in \eqref{def:SingleEdgeM}.
\end{theorem}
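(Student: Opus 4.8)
The plan is to unfold the weighted row-sum through the $\Tsaw$-construction and then run an amortised analysis driven by the two defining properties of the $(\pfs,\delta,c)$-potential function. First I would use that $\UpD$ is diagonal, so $(\UpD^{-1}\cdot\infmatrix^{\Lambda,\tau}_G\cdot\UpD)(\kw,\ku)=\frac{\UpD(\ku,\ku)}{\UpD(\kw,\kw)}\,\infmatrix^{\Lambda,\tau}_G(\kw,\ku)$, and then invoke \Cref{prop:Inf2TreeRedaux} together with the triangle inequality to bound the left-hand side of \eqref{eq:Target4thrm:InflNormBound4GeneralD} by $\frac{1}{\UpD(\kw,\kw)}\sum_{\kz\in T}\UpD(\kv(\kz),\kv(\kz))\,\prod_{\ke\in\kP_\kz}|\infweight(\ke)|$, where $T=\Tsaw(G,\kw)$, $\kv(\kz)$ is the vertex of $G$ that the tree-node $\kz$ copies, and $\kP_\kz$ is the path from the root $\kr$ to $\kz$. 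The root ($\kz=\kr$, empty product, $\kv(\kr)=\kw$) contributes exactly the summand $1$. For $\kz\neq\kr$ the path $\kP_\kz$ starts with an edge from $\kr$ to a child that copies some neighbour $\kz_1$ of $\kw$; that edge carries weight $\dlogtrecur(\log\gratio_{\kz_1})$ if $\kz_1\notin\Lambda$ and weight $0$ if $\kz_1\in\Lambda$, so only $\kz_1\in N(\kw)\setminus\Lambda$ survive. Grouping by this first step reduces the task to bounding $|\dlogtrecur(\log\gratio_{\kz_1})|\cdot S(\kz_1)$ for each such $\kz_1$, where $S(\kz_1):=\sum_{\kz\in T_{\kz_1}}\UpD(\kv(\kz),\kv(\kz))\,|\mathrm{infl}(\kz_1,\kz)|$, $T_{\kz_1}$ is the subtree hanging off that child, and $\mathrm{infl}(\kz_1,\kz)$ is the product of the $\infweight$'s along the path inside $T_{\kz_1}$.

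Next I would introduce, for every tree-node $\kz'$ and every $m\ge0$, an amortised weight $\Phi_m(\kz')$ with $\Phi_0(\kz')=\UpD(\kv(\kz'),\kv(\kz'))$ and, writing $\kz'_1,\dots,\kz'_{d}$ for the unpinned children of $\kz'$ and $y_{\kz'}:=\log\gratio_{\kz'}$,
\[
\Phi_m(\kz')=\xdpotF(y_{\kz'})\sum_{j=1}^{d}\frac{|\dlogtrecur(y_{\kz'_j})|}{\xdpotF(y_{\kz'_j})}\,\Phi_{m-1}(\kz'_j)\qquad(m\ge1),
\]
pinned children being omitted since their incident edge-weight is $0$. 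Expanding the recursion gives the closed form $\Phi_m(\kz')=\xdpotF(y_{\kz'})\sum_{\kz}\tfrac{|\mathrm{infl}(\kz',\kz)|}{\xdpotF(y_\kz)}\,\UpD(\kv(\kz),\kv(\kz))$, the sum over depth-$m$ descendants $\kz$ of $\kz'$ reachable through unpinned nodes. Two estimates then close the argument. First, applying the Contraction property \eqref{eq:contractionRelationPF} at $\kz'$ (legitimate since a non-root node has $d<\maxDeg$ children, and $y_{\kz'}=\logtrecur_{d}(y_{\kz'_1},\dots,y_{\kz'_d})$) with the test vector $(m_j)=(\Phi_{m-1}(\kz'_j))_j$ yields $\Phi_m(\kz')\le\delta^{1/\pfs}\big(\sum_j\Phi_{m-1}(\kz'_j)^{\pfs}\big)^{1/\pfs}$, hence $\Phi_m(\kz')^{\pfs}\le\delta\sum_j\Phi_{m-1}(\kz'_j)^{\pfs}$; iterating $m$ levels down from $\kz_1$ and bounding the number of depth-$m$ copies of a vertex $\ku$ below $\kz_1$ by the number of length-$m$ walks $\Adjacency^m_G(\kz_1,\ku)$ (exactly as in the proof of \Cref{thrm:AdjacencyInfinityMixing}) gives
\[
\Phi_m(\kz_1)\le\delta^{m/\pfs}\Big(\textstyle\sum_{\ku\in V\setminus\Lambda}\Adjacency^m_G(\kz_1,\ku)\,\UpD^{\pfs}(\ku,\ku)\Big)^{1/\pfs}.
\]
Second, the Boundedness property \eqref{eq:BoundednessPF}, i.e. $\xdpotF(\kz)\,|\dlogtrecur(\kx)|/\xdpotF(\kx)\le c$ for all $\kz,\kx\in\ratiorange$, applied term-by-term (with $\kx=y_{\kz_1}$ and $\kz=y_\kz$) to the closed form of $\Phi_m(\kz_1)$, gives $|\dlogtrecur(y_{\kz_1})|\cdot(m\text{-th layer of }S(\kz_1))\le c\,\Phi_m(\kz_1)$, and summing over $m$ gives $|\dlogtrecur(y_{\kz_1})|\cdot S(\kz_1)\le c\sum_{m\ge0}\Phi_m(\kz_1)$.

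Combining the two displays, re-indexing $m=\kell-1$, and using $\Adjacency^{\kell-1}_G(\kz_1,\ku)=(\EdgeToV\cdot\Adjacency^{\kell-1}_G)(\kw\kz_1,\ku)$ reproduces the right-hand side of \eqref{eq:Target4thrm:InflNormBound4GeneralD}, with the sum over $\kz_1\in N(\kw)\setminus\Lambda$ distributing over both the $\UpD(\kz_1,\kz_1)$ term (the $\kell=1$ contribution) and the $\kell\ge2$ terms. The step I expect to be the main obstacle is making the $\Phi$-recursion legitimate in the presence of pinnings: at a node some children are pinned, so its effective recursion is again some $\logtrecur_{d'}$ with $d'<\maxDeg$ — for the Hard-core model this is immediate, since pinning a neighbour to $+1$ zeroes the incident edge-weight while pinning to $-1$ merely deletes a child without rescaling $\lambda$, so the effective recursion always has the original parameters — and in general one needs a short argument that the resulting log-ratios stay inside $\ratiorange$, the range over which both potential conditions are posed. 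A secondary bookkeeping point is arranging the telescoping so that the single $\xdpotF(y_{\kz'})$ factor introduced at each node cancels cleanly and is ultimately absorbed, at the first edge out of $\kw$, by the Boundedness constant $c$.
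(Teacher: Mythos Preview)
Your proposal is correct and follows essentially the same approach as the paper: both unfold the row-sum through $\Tsaw(G,\kw)$, introduce the same telescoped recursion (your $\Phi_m(z')=\xdpotF(y_{z'})\sum_j\frac{|\dlogtrecur(y_{z'_j})|}{\xdpotF(y_{z'_j})}\Phi_{m-1}(z'_j)$ is exactly the paper's $\subcont_v$ read from a fixed node downward rather than from level $\ell$ upward), apply Contraction level-by-level to get the $\delta^{(\ell-1)/\pfs}$ factor, absorb the first edge via Boundedness to produce the constant $c$, and finally bound the number of depth-$m$ copies of $\ku$ below $\kz_1$ by $\Adjacency_G^{m}(\kz_1,\ku)=(\EdgeToV\cdot\Adjacency_G^{m})(\kw\kz_1,\ku)$. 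Your two flagged concerns are not genuine obstacles: pinned children simply enter the Contraction inequality with $m_j=0$ (the condition is stated for arbitrary ${\bf m}\in\mathbb{R}^d_{\ge 0}$ and all $d<\maxDeg$, and $y_{z'}=\logtrecur_d$ of \emph{all} its children's log-ratios regardless of pinning), while the telescoping bookkeeping is exactly what the paper records as \Cref{claim:AddPotentialA}.
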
 
The proof of \Cref{thrm:InflNormBound4GeneralD} appears in \Cref{sec:thrm:InflNormBound4GeneralD}.

\newcommand{\Pweight}{{\UpW}}

\begin{proof}[Proof of \Cref{thrm:AdjacencyPotentialSpIn}]

Let $\maxeigenv$ be the principal eigenvector of $\Adjacency_G$, while let $\eigenval_1$ be the 
corresponding eigenvalue. Since $G$ is assumed to be connected, matrix $\Adjacency_G$ is 
non-negative and irreducible.  The Perron-Frobenius theorem implies 
\begin{align}\label{eq:PFT4thrm:AdjacencyPotentialSpIn}
\aspradius &=\eigenval_1& \textrm{and} && \maxeigenv(\ku)>0 \qquad \forall \ku\in V\enspace. 
\end{align}
Let $\SpGMatrix$ be the $(V\setminus \Lambda)\times (V\setminus \Lambda)$ diagonal matrix such 
that for any  $u\in V\setminus \Lambda$, we have 
\begin{align}\label{def:AdjMatrixEigenvals}
\SpGMatrix(\ku,\ku)=\maxeigenv(\ku) \enspace.
\end{align} 
 \Cref{eq:PFT4thrm:AdjacencyPotentialSpIn} implies that $\SpGMatrix$ is {\em non-singular}.

We prove \Cref{thrm:AdjacencyPotentialSpIn} by applying \Cref{thrm:InflNormBound4GeneralD}, where 
$\UpD=\SpGMatrix^{1/\pfs}$. Specifically, we show that
\begin{align} \label{eq:Target4thrm:MySpectralMatrixNorm} 
\norm{ \SpGMatrix^{-1/\pfs} \cdot \infmatrix^{\Lambda, \tau}_G \cdot \SpGMatrix^{1/\pfs}}{\infty} &\leq 
1+c \cdot \maxDeg^{1-(1/\pfs)}\cdot
\aspradius^{{1}/{\pfs}} \cdot \sum\nolimits_{ \kell \geq 0} (\delta \cdot \aspradius)^{{\kell}/{\pfs}} \enspace.
\end{align}
Then, \Cref{thrm:AdjacencyPotentialSpIn} follows by substituting $c=\frac{\zeta}{\aspradius}$ and 
$\delta=\frac{1-\varepsilon}{\aspradius}$  and noting that the norm on the l.h.s. is an upper bound for 
$\spradius( \infmatrix^{\Lambda,\tau}_{G})$.  It remains to show that \eqref{eq:Target4thrm:MySpectralMatrixNorm} 
is true.

For $\kw\in V\setminus \Lambda$, we let $\DBounded(\kw)$ be equal to the absolute row sum for the row of matrix 
$\SpGMatrix^{-1/\pfs} \cdot \infmatrix^{\Lambda, \tau}_G \cdot \SpGMatrix^{1/\pfs}$ that corresponds to vertex $\kw$. 
\Cref{thrm:InflNormBound4GeneralD} and \eqref{def:AdjMatrixEigenvals} imply  
\begin{align} 
\DBounded(\kw)
 & \leq 1+ \frac{c}{\left(\maxeigenv(\kw)\right)^{1/\pfs}} \cdot \left( 
\sum\nolimits_{\kz \in N(\kw)}\left(\maxeigenv(\kz)\right)^{1/\pfs}+ \sum\nolimits_{\kell \geq 2}
\left( \delta^{\kell-1} \cdot \left(\EdgeToV\cdot\Adjacency^{\kell-1}_G\cdot \maxeigenv\right) (\kw\kz) 
\right)^{1/\pfs} 
\right) 
\enspace. \label{eq:Base4Proofthrm:PowerMatrixTopologicalNorm}
\end{align}
Recalling that $\maxeigenv$ is the principal eigenvector of $\Adjacency_G$, \eqref{eq:PFT4thrm:AdjacencyPotentialSpIn} implies
\begin{align}
\EdgeToV\cdot\Adjacency^{\kell-1}_G\cdot \maxeigenv &= \aspradius^{\kell-1} \cdot \EdgeToV\cdot \maxeigenv\enspace. 
\end{align}
Plugging the above into \eqref{eq:Base4Proofthrm:PowerMatrixTopologicalNorm} and rearranging, we get 
\begin{align}\label{eq:fthrm:PowerMatrixTopologicalNormTempA}
\DBounded(\kw) & =1+ c \cdot \sum\nolimits_{\kell\geq 1} \left(\delta \cdot \aspradius \right)^{{(\kell-1)}/\pfs} 
\cdot \sum\nolimits_{\kz\in N_G(\kw)}\left( {\textstyle \frac{\maxeigenv(\kz)}{\maxeigenv(\kw)} } \right)^{1/\pfs} \enspace.
\end{align}
We need to bound the rightmost sum in the equation above.
Letting $d=\abs{N_G(\kw)}$ and applying H\"older's inequality,  we get 
\begin{align}\label{eq:fthrm:PowerMatrixTopologicalNormTempB}
\sum\nolimits_{\kz\in N_G(\kw)}\left( {\textstyle\frac{\maxeigenv(\kz)}{ \maxeigenv(\kw)} }\right)^{1/\pfs} & \leq
\left( \sum\nolimits _{\kz\in N_G(\kw)} {\textstyle \frac{\maxeigenv(\kz)}{ \maxeigenv(\kw)}} \right)^{1/\pfs} d^{1-(1/\pfs)}\ = \ 
\maxDeg^{1-(1/\pfs)} \cdot \aspradius^{1/\pfs}\enspace,
\end{align}
where the last equality follows from the observations that $\sum_{\kz\in N_G(\kw)}\maxeigenv(\kz)=\aspradius\cdot \maxeigenv(\kw)$
and $d^{1-(1/\pfs)}\leq \maxDeg^{1-(1/\pfs)}$. 
Plugging \eqref{eq:fthrm:PowerMatrixTopologicalNormTempB} into \eqref{eq:fthrm:PowerMatrixTopologicalNormTempA} 
we get 
\begin{align}
\DBounded(\kw) &\leq
 1+ c \cdot \maxDeg^{1-(1/\pfs)} \cdot \aspradius^{1/\pfs} \cdot
 \sum\nolimits_{\kell \geq 0} \left(\delta \cdot \aspradius \right)^{\kell/\pfs} \enspace . \nonumber 
\end{align}
The above holds for any $\kw\in V\setminus \Lambda$,  implying that \eqref{eq:Target4thrm:MySpectralMatrixNorm} is true. 
\Cref{thrm:AdjacencyPotentialSpIn} follows.
\end{proof}

\subsection{Proof of \Cref{thrm:InflNormBound4GeneralD}}\label{sec:thrm:InflNormBound4GeneralD}

We abbreviate $\infmatrix^{\Lambda, \tau}_G$ to $\infmatrix$ and the diagonal element $\UpD (\ku,\ku)$ to $\UpD (\ku)$.

Let $T=\Tsaw(G,\kw)$, while consider the weights $\{\infweight(e)\}$ over the edges of $T$ obtain 
by applying the $\Tsaw$-construction to the Gibbs distribution $\mu^{\Lambda,\tau}_G$.

For every path $\kP$ of length $\kell\geq 0$ that starts from the root of tree $T$, i.e., $\kP=\ke_1, \ldots \ke_{\kell}$, let 
\begin{align}\label{eq:WeightPVsInfls}
 {\tt weight}(\kP) & = \prod\nolimits_{1\leq \ki \leq \kell }\infweight(e_{\ki}) \enspace.
\end{align}
Note that we have written path $\kP$ using its edges. 

The following result is useful to involve the potential function $\potF$ in our derivations.

\begin{claim}\label{claim:AddPotentialA}
For integer $\kell>1$ and any path $\kP=\ke_1, \ldots \ke_{\kell}$ such that ${\tt weight}(\kP)\neq 0$,  we have 
\begin{align} \nonumber
\abs{ {\tt weight}(\kP)} &=\chiofh(\ke_{\kell}) \cdot \frac{\abs{\infweight(\ke_1)}}{\chiofh(\ke_1)} \cdot\prod^{\kell}_{\ki=2}\frac{\chiofh(\ke_{\ki-1})}{\chiofh(\ke_{\ki})} \cdot \abs{\infweight(\ke_{\ki})} \enspace, 
\end{align}
where $\chiofh(\ke)=\xdpotF(\infweight(\ke))$ and $\xdpotF(\cdot )$ is the derivative of the potential function $\potF$, i.e., $\xdpotF=\potF'$. 
\end{claim}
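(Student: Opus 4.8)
The plan is to prove the identity by a direct telescoping computation, starting from the definition \eqref{eq:WeightPVsInfls} of ${\tt weight}(\kP)$ and inserting the function $\chiofh$ in a way that cancels. First I would observe that since ${\tt weight}(\kP) \neq 0$, none of the edge-weights $\infweight(\ke_1), \ldots, \infweight(\ke_{\kell})$ vanishes, so each $\chiofh(\ke_{\ki}) = \xdpotF(\infweight(\ke_{\ki}))$ is a well-defined nonzero number; indeed $\potF$ is strictly increasing by \Cref{def:GoodPotential}, hence $\xdpotF = \potF' > 0$ on its domain, so in particular no division by zero occurs and all the $\chiofh(\ke_{\ki})$ are strictly positive. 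Taking absolute values in \eqref{eq:WeightPVsInfls} gives $\abs{{\tt weight}(\kP)} = \prod_{\ki=1}^{\kell} \abs{\infweight(\ke_{\ki})}$.

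Next I would rewrite the right-hand side of the claimed identity and check it collapses to this product. Expanding the product $\prod_{\ki=2}^{\kell} \frac{\chiofh(\ke_{\ki-1})}{\chiofh(\ke_{\ki})}$, the numerator contributes $\chiofh(\ke_1)\chiofh(\ke_2)\cdots\chiofh(\ke_{\kell-1})$ and the denominator contributes $\chiofh(\ke_2)\chiofh(\ke_3)\cdots\chiofh(\ke_{\kell})$, so the telescoped product equals $\frac{\chiofh(\ke_1)}{\chiofh(\ke_{\kell})}$. Therefore the right-hand side of the claim is
\begin{align}\nonumber
\chiofh(\ke_{\kell}) \cdot \frac{\abs{\infweight(\ke_1)}}{\chiofh(\ke_1)} \cdot \frac{\chiofh(\ke_1)}{\chiofh(\ke_{\kell})} \cdot \prod_{\ki=2}^{\kell}\abs{\infweight(\ke_{\ki})} = \abs{\infweight(\ke_1)}\cdot\prod_{\ki=2}^{\kell}\abs{\infweight(\ke_{\ki})} = \prod_{\ki=1}^{\kell}\abs{\infweight(\ke_{\ki})},
\end{align}
which is exactly $\abs{{\tt weight}(\kP)}$. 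This closes the argument.

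There is essentially no obstacle here: the only subtlety worth spelling out is the non-vanishing of the $\chiofh(\ke_{\ki})$, which is why the hypothesis ${\tt weight}(\kP)\neq 0$ is imposed and why one invokes strict monotonicity of $\potF$; everything else is the elementary telescoping identity $\prod_{\ki=2}^{\kell}\frac{a_{\ki-1}}{a_{\ki}} = \frac{a_1}{a_{\kell}}$ for nonzero $a_1,\dots,a_{\kell}$. The point of writing $\abs{{\tt weight}(\kP)}$ in this form is purely to prepare for applying the contraction inequality \eqref{eq:contractionRelationPF} of the $(\pfs,\delta,c)$-potential level by level in the subsequent amortization argument, so no more than this bookkeeping identity is needed at this stage.
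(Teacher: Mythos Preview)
Your proof is correct and follows essentially the same telescoping argument as the paper: both insert the factors $\chiofh(\ke_{\ki})/\chiofh(\ke_{\ki})$ and regroup, with the only cosmetic difference being that you verify the right-hand side collapses to $\prod_i |\infweight(\ke_i)|$ whereas the paper expands the left-hand side into the right-hand side. One small caveat: strict monotonicity of $\potF$ alone does not guarantee $\potF'>0$ pointwise, so your justification that $\chiofh(\ke_i)\neq 0$ is slightly imprecise in the abstract framework of \Cref{def:GoodPotential}; the paper simply asserts this follows from ``our assumptions'' on $\potF$, and indeed for the concrete potential \eqref{def:XPotentialFunction} used later one has $\xdpotF>0$ explicitly.
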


\begin{proof}
Our assumptions for the potential function $\potF$ imply that, for every $\ke_{\ki}\in \kP$, we have $\chiofh(\ke_{\ki})\neq 0$. 

Using a simple telescopic trick, we get 
\begin{align}
\abs{{\tt weight}(\kP)}&= \prod^{\kell}_{\ki=1}\frac{\chiofh(\ke_{\ki})}{\chiofh(\ke_{\ki})} \cdot \abs{\infweight(\ke_{\ki})} \ 
 = \ \chiofh(\ke_{\kell})\cdot \frac{\abs{\infweight(\ke_1)}}{\chiofh(e_1)}\cdot \prod^{\kell}_{\ki=2}\frac{\chiofh(\ke_{\ki-1})}{\chiofh(\ke_{\ki})}\cdot \abs{\infweight(\ke_{\ki})} \enspace. \nonumber
\end{align}
The claim follows.
\end{proof}

For $\ku\in V\setminus \Lambda$ and $\kell\geq 0$, let $\cP(\ku,\kell)$ be the set of paths $\kP$ from the root of $T$ to 
the set of copies of  vertex $\ku$ which are at level  $\kell$ of the tree such that  ${\tt weight}(\kP)\neq 0$. Also,  let 
$\cp(\ku,\kell)$ be the set of copies of $\ku$ at level $\kell$ in $T$ that are connected to the root with a path $\kP\in \cP(\ku,\kell)$.

Let
\begin{align}\label{eq:DefOfCalHL}
\inflPotB_{\kell} & =
 \sum\nolimits_{\ku \in V\setminus \Lambda} \sum\nolimits_{\kP \in \cP(\ku, \kell) }
\abs{ {\tt weight}(\kP ) } \cdot \UpD(\ku) \enspace.
\end{align}
Recalling that \Cref{prop:Inf2TreeRedaux} implies 
\begin{align}\label{eq:DescOfbCru}
\textstyle \left( \UpD^{-1}\cdot \infmatrix\cdot \UpD \right) (\kw,\ku) & =\frac{1}{\UpD(\kw)} \cdot \sum\nolimits_{\kell \geq 0} \sum\nolimits_{\kP \in \cP(\ku,\kell)}{\tt weight}(\kP) \cdot \UpD(\ku)\enspace,
\end{align}
we have
\begin{align}\label{eq:bCVsbCELL}
 \sum\nolimits_{\ku\in V\setminus \Lambda } \abs{\left( \UpD^{-1}\cdot \infmatrix\cdot \UpD \right) (\kw,\ku)} & 
 \leq \frac{1}{\UpD(\kw)} \sum\nolimits_{\kell\geq 0} \inflPotB_{\kell} 
\enspace .
\end{align}
Fix $\kell> 1$. For a vertex $\kv$ at level $\kh$ of $T$, where $\kh=0, \ldots, \kell$, 
let the quantity $\subcont_{\kv}$ be defined as follows: 
For $\kh=\kell$, we let
\begin{align} \label{eq:SRecurBaseWeights}
\subcont_{\kv} & = \sum\nolimits_{\ku \in V\setminus \Lambda} \Ind\{\kv \in \cp (\ku,\kell)\} \cdot \UpD(\ku) \enspace.
\end{align}
For a vertex $\kv$, which is at level $0< \kh <\kell$ such that vertices $\kv_1, \ldots, \kv_d$ are its children, we let
\begin{align} \label{eq:SRecurStepHC1821}
\subcont_{\kv}&= 
\chiofh(\kb_{\kv})\cdot \sum\nolimits_{\kj}
\frac{\abs{\infweight(\kb_{\kj})}}{\chiofh(\kb_{\kj})} \cdot \subcont_{\kv_{\kj}} \enspace,
\end{align}
where $\kb_{\kv}$ is the edge that connects $\kv$ to its parent, while $\kb_{\kj}$ is the edge that connects $\kv$
to its child $\kv_{\kj}$. Since we assume that $\kh>0$, $\kv$ has a parent,  hence $\kb_{\kv}$ is well-defined.
Also, recall $\chiofh(\cdot)$ from \Cref{claim:AddPotentialA}.

Finally, for $\kh=0$, i.e., $\kv$ and the root of $T$ are identical, we let 
\begin{align}\label{eq:SRecurTopWeights}
\subcont_{\kv}\ = \ \subcont_{\rm root} & =  
\max_{\bar{\kb},\hat{\kb}\in T} \left\{ {\textstyle \chiofh\left(\bar{\kb}\right) \cdot \frac{\abs{ \infweight\left(\hat{\kb}\right)}}{\chiofh\left(\hat{\kb}\right)}} \right\} 
\cdot \sum\nolimits_{\kz_{\kj}} \subcont_{\kz_{\kj}} \enspace,
\end{align}
where $\kz_1, \ldots, \kz_{\kr}$ are the children of the root of $T$. 

\begin{claim}\label{claim:CEllVsCalD}
For any $\kell> 1$, we have $\inflPotB_{\kell} \leq \subcont_{\rm root}$. 
\end{claim}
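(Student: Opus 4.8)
The idea is to prove, by downward induction on the level $\kh$ (from $\kh=\kell$ up to $\kh=0$), a statement of the form: for every vertex $\kv$ at level $\kh$ in $T$,
$$
\subcont_{\kv} \ \geq\ \sum\nolimits_{\ku\in V\setminus\Lambda}\ \sum\nolimits_{\kP}\ \abs{\tt weight_{\kv}(\kP)}\cdot \UpD(\ku)\,\big/\,\chiofh(\kb_{\kv})^{-1}\cdot(\text{appropriate factor}),
$$
where more precisely the natural invariant is: $\subcont_{\kv}$ bounds the $\potF$-normalised contribution of the subtree rooted at $\kv$, namely the sum over copies $\ku$ at level $\kell$ reachable from $\kv$ of $\UpD(\ku)$ times the product of $\abs{\infweight(\cdot)}/\chiofh(\cdot)$ along the path from $\kv$ down to that copy, with the extra factor $\chiofh(\kb_{\kv})$ attached at the top (matching exactly the telescoping identity of \Cref{claim:AddPotentialA}). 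The base case $\kh=\kell$ is \eqref{eq:SRecurBaseWeights}, which is an equality by definition. For the inductive step at a vertex $\kv$ of level $0<\kh<\kell$ with children $\kv_1,\dots,\kv_d$, I would apply the inductive hypothesis to each $\subcont_{\kv_\kj}$ and then use the recursion \eqref{eq:SRecurStepHC1821}, which, combined with the telescoping structure of \Cref{claim:AddPotentialA}, reassembles the subtree contribution with the correct normalising factors.

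The subtle point — and the place where the contraction property of the $(\pfs,\delta,c)$-potential is implicitly needed — is the top level $\kh=0$. At the root there is no parent edge $\kb_{\rm root}$, so the factor $\chiofh(\kb_{\kv})$ that appears at internal vertices must be replaced by the worst-case bound in \eqref{eq:SRecurTopWeights}, i.e.\ $\max_{\bar\kb,\hat\kb\in T}\{\chiofh(\bar\kb)\cdot \abs{\infweight(\hat\kb)}/\chiofh(\hat\kb)\}$. I would argue that for each child $\kz_\kj$ of the root, the quantity $\chiofh(\kb_{\kz_\kj})\cdot \abs{\infweight(\kb_{\kz_\kj})}/\chiofh(\kb_{\kz_\kj})\cdot(\text{subtree})$ — where $\kb_{\kz_\kj}$ is the edge from the root to $\kz_\kj$ — is dominated after replacing the leading $\chiofh(\kb_{\kz_\kj})$ and the edge-term $\abs{\infweight(\kb_{\kz_\kj})}/\chiofh(\kb_{\kz_\kj})$ by this maximum (these two edges $\bar\kb,\hat\kb$ need not be the same edge, which is exactly why the definition \eqref{eq:SRecurTopWeights} takes an unconstrained maximum over two independent edges). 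Summing over the children $\kz_1,\dots,\kz_\kr$ of the root then gives $\subcont_{\rm root}$, and since every path $\kP\in\cP(\ku,\kell)$ passes through exactly one child of the root, the double sum defining $\inflPotB_{\kell}$ in \eqref{eq:DefOfCalHL} is recovered (using \eqref{eq:WeightPVsInfls} and \Cref{claim:AddPotentialA} to rewrite $\abs{\tt weight(\kP)}$ in the telescoped form). This yields $\inflPotB_{\kell}\leq \subcont_{\rm root}$.

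I expect the main obstacle to be bookkeeping rather than a genuine difficulty: one has to be careful that the invariant carried up the tree has precisely the shape of the telescoped product in \Cref{claim:AddPotentialA}, so that at an internal vertex the incoming factor $\chiofh(\kb_\kv)$ cancels against the $\chiofh(\kb_\kj)$ in the denominator of the child's edge-term, leaving a clean product, and that the special treatment of the root edge is handled by the unconstrained double maximum in \eqref{eq:SRecurTopWeights}. A secondary point to check is that restricting to paths with ${\tt weight}(\kP)\neq 0$ (equivalently, avoiding pinned vertices, where $\infweight=0$ and $\chiofh$ would be undefined) is consistent throughout the recursion — this is immediate since a zero weight on any edge kills the whole path's contribution, so such paths may simply be dropped from all the sums. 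No use of the contraction inequality \eqref{eq:contractionRelationPF} or the norm bound is required for this particular claim; those enter only in the subsequent steps that bound $\subcont_{\rm root}$ in terms of $\norm{\cdot}{\pfs}$ and powers of $\Adjacency_G$.
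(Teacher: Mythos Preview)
Your approach is correct and matches the paper's: both argue by induction from level $\kell$ towards the root, using the telescoping identity of \Cref{claim:AddPotentialA} to unfold the recursion \eqref{eq:SRecurStepHC1821}, with the inequality entering only at $\kh=0$ via the double maximum in \eqref{eq:SRecurTopWeights}. Two minor points: the paper in fact establishes an \emph{equality} (their \eqref{eq:Target4InductionRvVsHl}) at each intermediate level $1\le \kh\le \kell-1$, and your verbal description of the invariant as ``product of $\abs{\infweight}/\chiofh$ with $\chiofh(\kb_{\kv})$ on top'' does not quite match the telescoped form $\prod_{\ki}\tfrac{\chiofh(\ke_{\ki-1})}{\chiofh(\ke_{\ki})}\abs{\infweight(\ke_{\ki})}$ that the recursion actually preserves --- but since you explicitly invoke \Cref{claim:AddPotentialA} this is only a wording slip, and you are right that neither the contraction nor the boundedness condition is used in this claim.
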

The proof of \Cref{claim:CEllVsCalD} appears immediately after the proof of \Cref{thrm:InflNormBound4GeneralD}.

The Boundedness property of the $(\pfs,\delta,c)$-potential $\potF$, together with \eqref{eq:SRecurTopWeights} 
imply
\begin{align}\label{eq:SRecurTopWeightsB}
\subcont_{\rm root} & \leq 
c \cdot \sum\nolimits_{\kz_j} \subcont_{\kz_j} \enspace.
\end{align}

\noindent
The Contraction property  of the $(\pfs,\delta,c)$-potential $\potF$ and \eqref{eq:SRecurStepHC1821}
imply that, for every vertex $\kv$ at level $0<\kh<\kell$, we have 
\begin{align}\label{eq:SubContVsDeltaKWeights}
\left( \subcont_{\kv} \right) ^{\pfs} & \leq 
\delta \cdot \sum\nolimits_{\kv_j} \left( \subcont_{\kv_j} \right)^{\pfs} \enspace,
\end{align}
where recall that $\kv_1,\ldots, \kv_d$ are the children of $\kv$. 
Then, \eqref{eq:SubContVsDeltaKWeights} and \eqref{eq:SRecurBaseWeights}, imply 
\begin{align}\label{eq:SubContVsDeltaKWeights1821}
 \left ( \subcont_{\kv} \right)^{\pfs} &
 \leq \delta^{\kell - \kh }\cdot \sum\nolimits_{\ku\in V\setminus \Lambda}  \abs{\cp(\ku,\kell)\cap T_{\kv}} \cdot 
\left ( \UpD(\ku)\right)^{\pfs} \enspace ,
\end{align}
where, recall that $\cp(\ku,\kell)\subseteq \cp(\ku)$ is the set of copies of vertex $\ku$ in $T$
that are at level $\kell$. $T_{\kv}$ is the subtree of $T$ that includes $\kv$ and its descendants.

Applying \eqref{eq:SubContVsDeltaKWeights1821} for the children of the root and plugging the result into \eqref{eq:SRecurTopWeightsB}, we get
\begin{align}\label{eq:FinalBoundRroorAdj}
 \subcont_{\rm root} & \leq 
 c\cdot \sum\nolimits_{\kz_{\kj}} \left( \delta^{\kell-1} \cdot \sum\nolimits_{\ku\in V\setminus \Lambda} \abs{\cp(\ku,\kell)\cap T_{\kz_{\kj} }} \cdot 
\left ( \UpD(\ku)\right )^{\pfs} \right )^{1/{\pfs}} \enspace,
\end{align}
where note that each $\kz_{\kj}$ is at level $\kh=1$ of $T$.

Suppose that $\kz_{\kj}$ is a copy of vertex $\kz$ in $G$. Then, since $\kell>1$,  we have 
\begin{align}\label{eq:CPUELLTzjVsMAlwzu}
\abs{\cp(\ku,\kell)\cap T_{\kz_j}}   \leq \left( \EdgeToV\cdot \Adjacency^{\kell-1}_G\right)(\kw\kz,\ku)\enspace, 
\end{align}
where recall matrix $\EdgeToV$ from \eqref{def:SingleEdgeM}, while $\Adjacency_G$ is the 
adjacency matrix of $G$. 
It is not hard to check that  both quantities in \eqref{eq:CPUELLTzjVsMAlwzu} count walks of length $\kell$ in $G$ starting
from the oriented edge $\kw\kz$ and ending at vertex $\ku$. To see that \eqref{eq:CPUELLTzjVsMAlwzu} is true, 
we note that the l.h.s. considers only self-avoiding walks, while the
 r.h.s. considers simple walks., i.e., including the self-avoiding ones.

Combining \eqref{eq:CPUELLTzjVsMAlwzu} with \eqref{eq:FinalBoundRroorAdj} and \Cref{claim:CEllVsCalD}, 
for $\kell>1$,  we have
\begin{align}\label{eq:FinalBoundCJLAdj}
\inflPotB_{\kell} & \leq 
c \cdot \sum\nolimits_{\kz \in N_G(\kw)} \left( \delta^{\kell-1} \cdot 
\sum\nolimits_{\ku\in V\setminus \Lambda} \left( \EdgeToV\cdot \Adjacency^{\kell-1}_G\right)(\kw\kz,\ku)
 \cdot 
\left ( \UpD(\ku)\right )^{\pfs} \right )^{1/\pfs} \enspace.
\end{align}
As far as $\inflPotB_1$ is concerned, note that for any edge $\ke\in T$, we have 
\begin{align} \label{eq:BeBoundFromPotBoundedness}
\abs{\infweight(\ke)}= \frac{\chiofh(\ke)}{\chiofh(\ke)} \cdot \abs{\infweight(\ke)} \leq \max_{\bar{\ke}, \hat{\ke} \in T} 
\left\{ \chiofh(\bar{\ke}) \cdot \frac{\abs{\infweight(\hat{\ke})}}{\chiofh(\hat{\ke})} \right\} \leq c \enspace,
\end{align}
where the last inequality follows from the Boundedness condition of the $(\pfs,\delta,c)$-potential function $\potF$.
Applying the definition of $\inflPotB_{\kell}$ for $\kell=1$, we get 
\begin{align}\label{eq:FinalBoundCJ1Adj}
\inflPotB_1
&=\sum\nolimits_{\kz \in N_G(\kw)}\UpD(\kz)\cdot \abs{\infweight(\ke_{\kz})} \leq  c \cdot \sum\nolimits_{\kz\in N_G(\kw)}\UpD(\kz)\enspace,
\end{align}
where  $\infweight(\ke_{\kz})$ corresponds to the weight in $T$ for the edge $\ke_{\kz}$ that connects the root 
of $T$ with the copy of vertex $\kz$.  The last inequality uses \eqref{eq:BeBoundFromPotBoundedness}.

In light of all the above, we get \eqref{eq:Target4thrm:InflNormBound4GeneralD} by plugging \eqref{eq:FinalBoundCJLAdj} 
and \eqref{eq:FinalBoundCJ1Adj} into  \eqref{eq:bCVsbCELL} and noting that $\inflPotB_0=\UpD(\kw)$.
\Cref{thrm:InflNormBound4GeneralD} follows. \hfill $\square$.

\begin{proof}[Proof of \Cref{claim:CEllVsCalD}]
For brevity, we let 
\begin{align}\nonumber 
{\tt Max}& =\max_{\bar{\ke}, \hat{\ke} \in T}
\left\{ {\textstyle \chiofh(\bar{\ke}) \cdot \frac{\abs{\infweight(\hat{\ke})} }{\chiofh(\hat{\ke})}} \right\} \enspace.
\end{align}
Recall that $\kell>1$. From the definition of $\inflPotB_{\kell} $ in \eqref{eq:DefOfCalHL} and  \Cref{claim:AddPotentialA} 
we have 
\begin{align}
\inflPotB_{\kell} &=  \sum\nolimits_{\ku\in V\setminus \Lambda} \sum\nolimits_{\kP=(\ke_1,\ldots, \ke_\kell) \in \cP(\ku, \kell) }
 \left( \chiofh(\ke_{\kell})\cdot \frac{\abs{\infweight(\ke_1)} }{\chiofh(\ke_1)} \right)\cdot \prod\nolimits_{2\leq \ki\leq \kell}\frac{\chiofh(\ke_{\ki-1})}{\chiofh(\ke_{\ki})}\cdot \abs{\infweight(\ke_{\ki})} \cdot \UpD(\ku) \enspace.
 \nonumber
\end{align}
For $\ku\in V\setminus \Lambda$ and $\kell\geq 0$, recall that  $\cP(\ku,\kell)$ is the set of paths $\kP$ from the root of $T$ to 
the set of copies of  vertex $\ku$ which are at level  $\kell$ of the tree such that  ${\tt weight}(\kP)\neq 0$. Also,  recall that 
$\cp(\ku,\kell)$ is the set of copies of $\ku$ at level $\kell$ in $T$ that are connected to the root with a path $\kP\in \cP(\ku,\kell)$.

Noting that $\left( \chiofh(\ke_{\kell})\cdot \frac{\abs{\infweight(\ke_1)}}{\chiofh(\ke_1)} \right)\leq {\tt Max}$, we get 
\begin{align}
\inflPotB_{\kell} 
 &\leq   {\tt Max}\cdot \sum\nolimits_{\ku\in V\setminus \Lambda} \sum\nolimits_{\kP=(\ke_1,\ldots, \ke_\kell) \in \cP(\ku, \kell) } 
\prod\nolimits_{2\leq \ki\leq \kell}\frac{\chiofh(\ke_{\ki-1})}{\chiofh(\ke_{\ki})}\cdot \abs{\infweight(\ke_{\ki})}\cdot \UpD(\ku) \enspace. \label{eq:basis4HlAtHlVsRroot}
\end{align}
As far as $\subcont_{\rm root}$ is concerned, recalling \eqref{eq:SRecurTopWeights}, we have
\begin{align}\label{eq:SRecurTopWeightsRepeat4HlVsRroot}
\subcont_{\rm root} &= {\tt Max}\cdot \sum\nolimits_{1\leq \kj\leq\kr} \subcont_{\kz_{\kj}} \enspace,
\end{align}
where $\kz_1, \ldots, \kz_{\kr}$ are the children of the root, for some integer $\kr>0$.

We show that for any vertex $\kv$ at level $\kh=1,\ldots, \kell-1$ of the tree $T$, we have
\begin{align}\label{eq:Target4InductionRvVsHl}
\subcont_{\kv} &=\sum\nolimits_{\ku\in V\setminus \Lambda} 
\sum\nolimits_{\kP=(\ke_1,\ldots, \ke_{\kell-\kh+1})\in \cP(\kv,\ku, \kell)} 
 \prod\nolimits_{2\leq \ki\leq \kell-\kh+1}
\frac{\chiofh(\ke_{\ki-1}) }{\chiofh(\ke_{\ki})} \cdot \abs{\infweight(\ke_{\ki})} \cdot \UpD(\ku) \enspace,
\end{align}
where $\cP(\kv,\ku, \kell)$ is the set of paths of length $\kell-\kh+1$ from the parent of vertex $\kv$ to 
$\cp(\ku,\kell)\cap T_{\kv}$.

Before proving that \eqref{eq:Target4InductionRvVsHl} is true, let us show how it implies \Cref{claim:CEllVsCalD}.
For every child $\kz_{\kj}$ of the root, \eqref{eq:Target4InductionRvVsHl} implies 
\begin{align}
\subcont_{\kz_{\kj}} & = \sum\nolimits_{\ku\in V\setminus \Lambda} \sum\nolimits_{\kP=(\ke_1,\ldots, \ke_{\kell}) \in \cP(\kz_{\kj}, \ku, \kell) } 
\prod\nolimits_{2\leq \ki\leq \kell} \frac{\chiofh(\ke_{\ki-1})}{\chiofh(\ke_{\ki})}\cdot \abs{\infweight(\ke_{\ki})} \cdot \UpD(\ku) \enspace. 
\end{align}
Summing the above over all the children of the root, we get
\begin{align}\label{eq:GoodExpression4Rzj}
\sum\nolimits_{\kz_{\kj}} \subcont_{\kz_{\kj}} & = \sum\nolimits_{\ku\in V\setminus \Lambda} \sum\nolimits_{\kP=(\ke_1,\ldots \ke_{\kell}) \in \cP(\ku, \kell) } 
\prod\nolimits_{2\leq \ki\leq \kell} \frac{\chiofh(\ke_{\ki-1})}{\chiofh(\ke_{\ki})}\cdot \abs{\infweight(\ke_{\ki})} \cdot \UpD(\ku) \enspace. 
\end{align}
For the above, we use the observation that the sets $\cP(\kz_j,\ku,\kell)$'s partition $\cP(\ku,\kell)$, i.e., we have
 $\cup_{\kz_{\kj}}\cP(\kz_{\kj},\ku,\kell)=\cP(\ku,\kell)$, while for  distinct ${\kj}, \ki$ the sets 
$\cP(\kz_{\kj},\ku,\kell)$ and $\cP(\kz_{\ki},\ku,\kell)$ do not intersect. 

 Plugging \eqref{eq:GoodExpression4Rzj} into \eqref{eq:SRecurTopWeightsRepeat4HlVsRroot}, we see that
 $\subcont_{\rm root}$ is equal to the r.h.s. of the inequality in \eqref{eq:basis4HlAtHlVsRroot}. This implies
 that $\inflPotB_{\kell} \leq \subcont_{\rm root}$, which also proves the claim.

We conclude the proof of \Cref{claim:CEllVsCalD} by showing that \eqref{eq:Target4InductionRvVsHl} is true. 
To this end, we use induction on the level $\kh$ of vertex $\kv$ in \eqref{eq:Target4InductionRvVsHl}.
The basis of the induction corresponds to taking $\kh=\kell-1$. For the inductive step, we assume that there is 
$\kh\leq \kell-1$ such that that \eqref{eq:Target4InductionRvVsHl} is true. We then prove that \eqref{eq:Target4InductionRvVsHl} 
is also true for any vertex $\kv$ at level $\kh-1$ of tree $T$.

We start with the basis and show \eqref{eq:Target4InductionRvVsHl} for a vertex $\kv$
at level $\kell-1$. 
As per standard notation, let   $\kv_1, \ldots, \kv_d$ be the children of  $\kv$. These children are vertices at level $\kell$.
From \eqref{eq:SRecurBaseWeights} and \eqref{eq:SRecurStepHC1821}, we have 
\begin{align}
\subcont_{\kv} &= \sum\nolimits_{\kj}\frac{\chiofh(\kb_{\kv} )}{\chiofh(\kb_{\kj})}\cdot \abs{ \infweight(\kb_{\kj})} \cdot \subcont_{\kv_{\kj}} \nonumber\\
&=\sum\nolimits_{\ku \in V\setminus \Lambda} \sum\nolimits_{\kj}\frac{\chiofh(\kb_{\kv} )}{\chiofh(\kb_{\kj})}\cdot \abs{ \infweight(\kb_{\kj})} \cdot 
\Ind\{\kv_{\kj} \in \cp (\ku,\kell )\}\cdot \UpD(\ku) \enspace, \label{eq:FinalofBaseHlVsRvInter}
\end{align}
where we let $\kb_{\kv}$ be the edge that connects $\kv$ to its parent at $T$, while edge $\kb_{\kj}$ connects
$\kv$ to its child $\kv_{\kj}$. For the second equality, we substitute $\subcont_{\kv_{\kj}}$ by using \eqref{eq:SRecurBaseWeights}.

For each $\ku \in V\setminus \Lambda$ in \eqref{eq:FinalofBaseHlVsRvInter}, 
the summand $\frac{\chiofh(\kb_{\kv} )}{\chiofh(\kb_{\kj})}\cdot \abs{\infweight(\kb_{\kj})} \cdot 
\Ind\{\kv_{\kj} \in \cp (\ku,\kell )\}$ is non-zero only if the edges $\kb_{\kv}$ and $\kb_{\kj}$ 
 form a path which belongs to set $\cP(\kv,\ku,\kell)$, i.e., we have
$(\kb_{\kv},\kb_{\kj})\in \cP(\kv,\ku,\kell)$.

We use this observation and write \eqref{eq:FinalofBaseHlVsRvInter} as follows:
\begin{align}
\subcont_{\kv}&=\sum\nolimits_{\ku \in V\setminus \Lambda} \sum\nolimits_{\kP=(\ke_1,\ke_2) \in \cP(\kv, \ku, \kell)}
\frac{\chiofh(\ke_{1}) }{\chiofh(\ke_2)} \cdot \abs{\infweight(\ke_2)}  \cdot \UpD(\ku) \enspace. 
\label{eq:FinalofBaseHlVsRv}
\end{align}
It is not hard to see that  \eqref{eq:FinalofBaseHlVsRv}  corresponds to \eqref{eq:Target4InductionRvVsHl} 
for $\kh=\kell-1$. This concludes the proof of the  base of the induction. 

The induction hypothesis is that there exists $1<\kh\leq \kell-1$, such that \eqref{eq:Target4InductionRvVsHl} 
is true for any vertex  at level $\kh$ of tree $T$. We show that the hypothesis implies that 
\eqref{eq:Target4InductionRvVsHl}  is true for any vertex $\kv$ at level $\kh-1$. 
As per standard notation, we let  $\kv_1, \ldots, \kv_d$  be the children of $\kv$, 
while edge $\kb_{\kv}$ connects $\kv$ to its parent and  edge $\kb_{\kj}$ connects
$\kv$ to its child $\kv_{\kj}$.  From \eqref{eq:SRecurStepHC1821} we have
\begin{align}
\subcont_{\kv} &= \sum\nolimits_{\kj} \frac{\chiofh(\kb_{\kv} )}{\chiofh(\kb_{\kj})}\cdot \abs{\infweight(\kb_{\kj})} \cdot \subcont_{\kv_{\kj}}  \enspace.  \nonumber
\end{align}
The children of $\kv$ are at level $\kh$. 
Using the  hypothesis, we substitute each $\subcont_{\kv_{\kj}}$  using \eqref{eq:Target4InductionRvVsHl} and get
\begin{align}\nonumber
\subcont_{\kv}&= \sum\nolimits_{\ku\in V\setminus \Lambda} 
\sum\nolimits_{\kj} \frac{\chiofh(\kb_{\kv} )}{\chiofh(\kb_{\kj})}\cdot \abs{\infweight(\kb_{\kj})}\cdot 
\sum\nolimits_{\kP=(\ke_1,\ldots, \ke_{\kell-\kh+1})\in \cP(\kv_{\kj}, \ku, \kell)}
 \prod\nolimits_{2\leq \ki\leq \kell-h+1}
\frac{\chiofh(\ke_{\ki-1}) }{\chiofh(\ke_{\ki})} \cdot \abs{\infweight(\ke_{\ki})} \cdot \UpD(\ku) \enspace.
\end{align}
It is elementary  that  for any $1\leq \kj\leq d$ and for  every path $\kP=(\ke_1,\ldots, \ke_{\kell-\kh+1})\in \cP(\kv_{\kj}, \ku, \kell)$, 
there is exactly one path  $\kQ=(\overline{\ke}_1, \ldots, \overline{\ke}_{\kell-\kh+2})\in \cP(\kv,\ku,\kell)$,  with $\overline{\ke}_2=\{\kv,\kv_{\kj} \}$, 
such that 
\begin{align} \nonumber
 \prod\nolimits_{2\leq \ki\leq \kell-h+2}
\frac{\chiofh(\overline{\ke}_{\ki-1}) }{\chiofh(\overline{\ke}_{\ki})} \cdot \abs{\infweight(\overline{\ke}_{\ki})} &=
\frac{\chiofh(\kb_{\kv} )}{\chiofh(\kb_{\kj})}\cdot \abs{\infweight(\kb_{\kj})} \cdot \prod\nolimits_{2\leq \ki\leq \kell-\kh+1}
\frac{\chiofh(\ke_{\ki-1}) }{\chiofh(\ke_{\ki})} \cdot \abs{\infweight(\ke_{\ki})} \enspace
\end{align}
and vice versa.  Combining the two relations above, we get 
\begin{align}
\subcont_{\kv}&= \sum\nolimits_{\ku\in V\setminus \Lambda} 
\sum\nolimits_{\kQ=(\overline{\ke}_1,\ldots, \overline{\ke}_{\kell-\kh+2})\in \cP(\kv, \ku, \kell)}
 \prod\nolimits_{2\leq \ki\leq \kell-\kh+2}
\frac{\chiofh(\overline{\ke}_{\ki-1}) }{\chiofh(\overline{\ke}_{\ki})} \cdot \abs{\infweight(\overline{\ke}_{\ki})} \cdot \UpD(\ku) \enspace, \nonumber 
\end{align}
which    proves the inductive steps and concludes the inductive proof. 
\Cref{claim:CEllVsCalD} follows. 
\end{proof}

\spreadpoint

 \begin{figure}
 \begin{minipage}{.45\textwidth}
 \centering
		\includegraphics[width=.515\textwidth]{./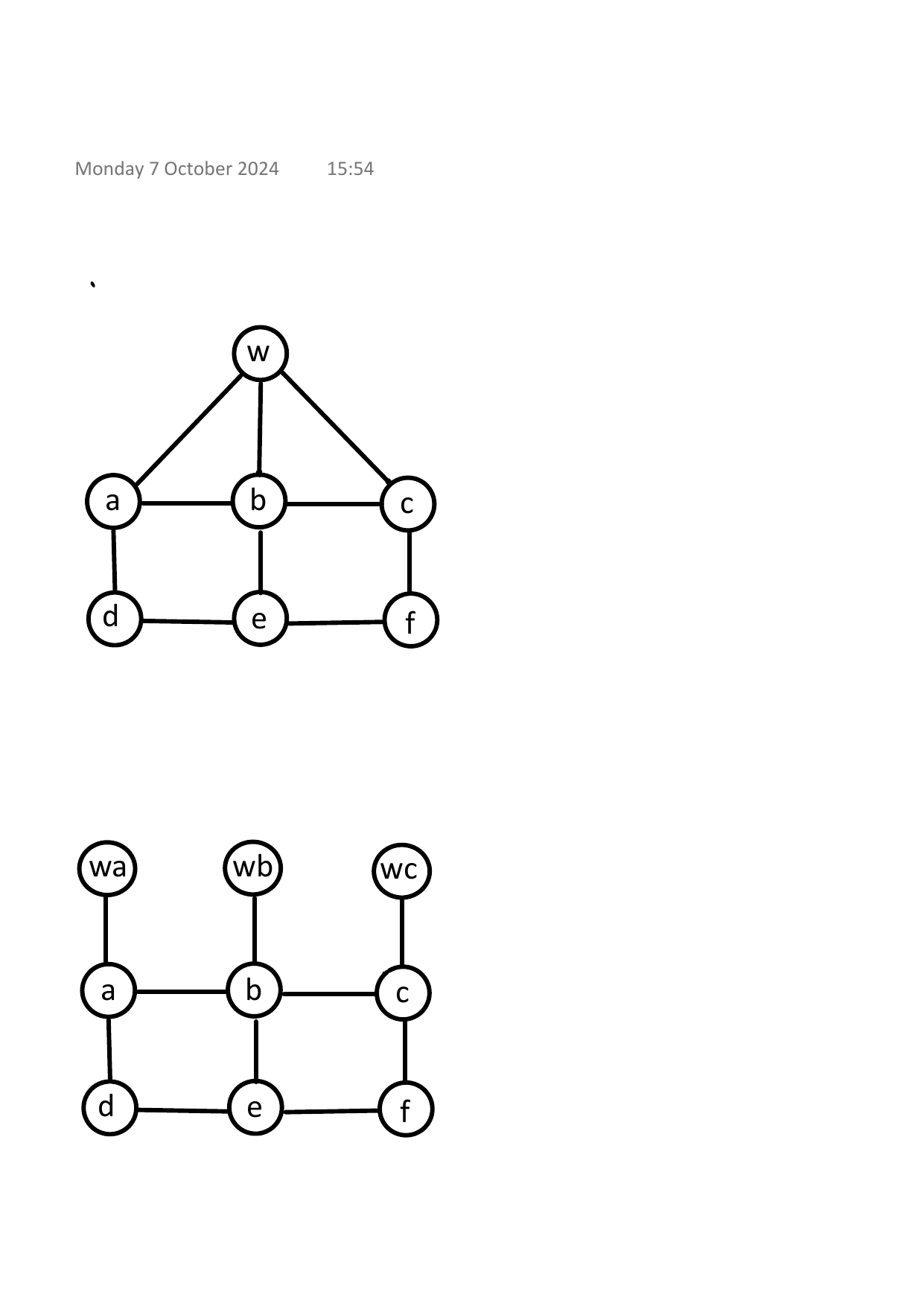}
		\caption{Initial graph $G$}
	\label{fig:GBeforeExt}
\end{minipage}
 \begin{minipage}{.5\textwidth}
 \centering
		\includegraphics[width=.51\textwidth]{./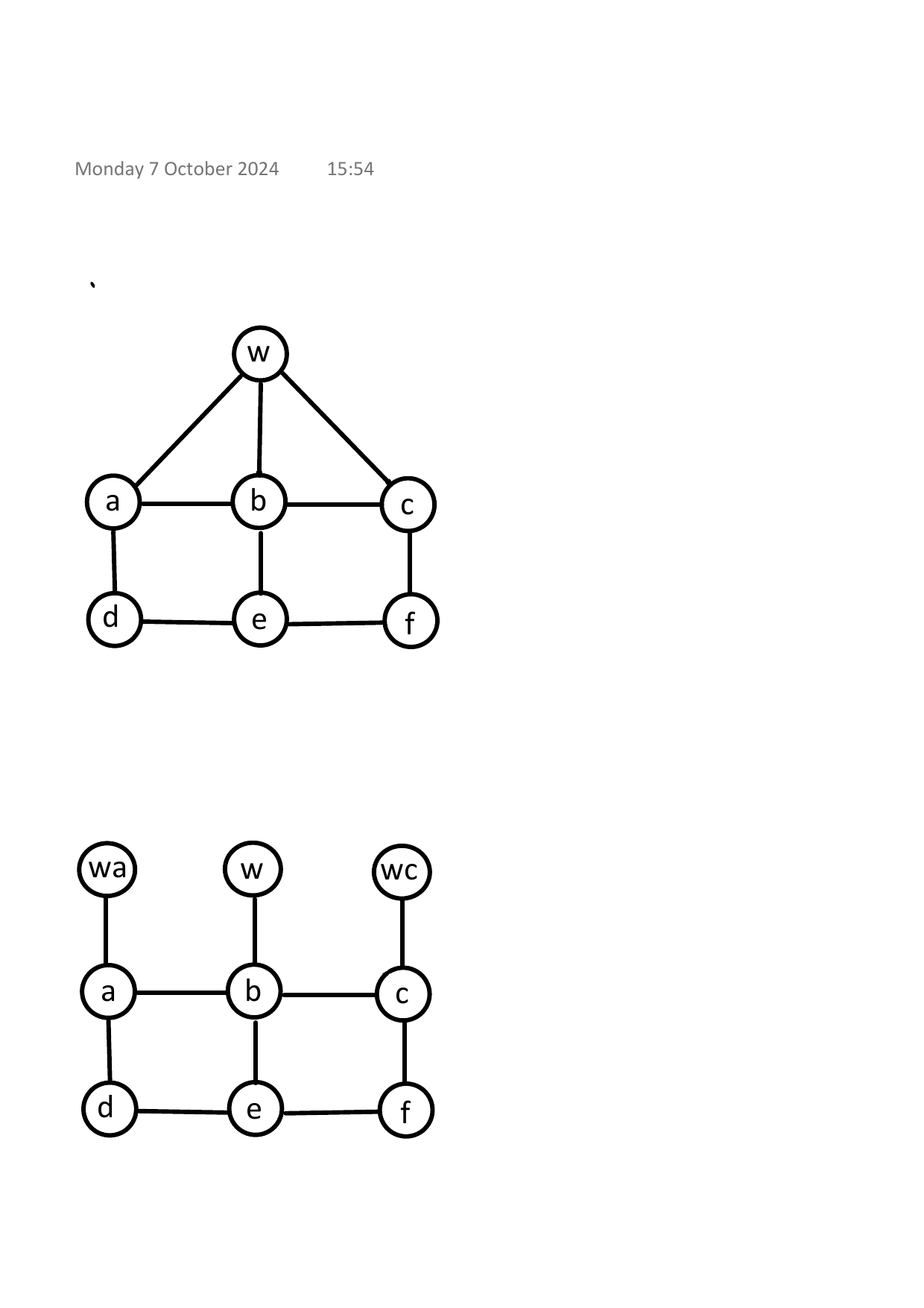}
		\caption{Graph $\gext{G}{\kP}$, for $\kP=\kw,\kb$}
	\label{fig:GAtRoot}
\end{minipage}
\end{figure}

\section{Taking Extensions \LastReviewG{2025-03-05}}\label{sec:Extensions}

In this section,  we introduce the notion of extension.  An extension is an operation that we typically apply to a 
graph $G$ or a Gibbs distribution on this graph.  Subsequently, we use extensions to define the extended influence 
matrix $\ExtdInfMatrixF$,  in \Cref{sec:ExtInfluenceMatrixNew}.

\subsection{The $\{\kP,\kQ \}$-extension of $G$:}\label{sec:TopologicalG}
We consider graph $G=(V,E)$ and a self-avoiding walk $\kP$ of length $\kk \geq 1$ in this graph. We introduce the notion of 
$\kP$-extension of $G$, denoted as $\gext{G}{\kP}$. This is a graph obtained after a set of 
operations on $G$. We use induction on the length $\kk$ of walk $\kP$ to define $\gext{G}{\kP}$.

First, consider $\kP$ of length $1$, while let $\kP=\ku, \kv$. Let $\sbsplit$ be the set of 
all neighbours of vertex $\ku$ in graph $G$ apart from vertex $\kv$, i.e., $\sbsplit=N_G(\ku)\setminus\{\kv\}$. 
 Graph $\gext{G}{\kP}$ is obtained by working as follows: 
\begin{enumerate}[(a)]
\item remove each edge of $G$ that connects $\ku$ to a vertex $\kz\in \sbsplit$,
\item  for each $\kz\in \sbsplit$, insert the new vertex $\ku\kz$ in the graph which is only connected to $\kz$.
\end{enumerate}
 \Cref{fig:GBeforeExt,fig:GAtRoot} show an example of the above operations. 
 \Cref{fig:GBeforeExt} shows the initial graph $G$. 
 \Cref{fig:GAtRoot} shows the $\kP$-extension of $G$, for walk $\kP=\kw,\kb$.

We call {\em split-vertices} all vertices introduced in step (b). 
We let $\ssplit_P$ denote the set of split-vertices in $\gext{G}{\kP}$. 
In the example shown in \Cref{fig:GBeforeExt,fig:GAtRoot}, we have $\ssplit_{\kP}=\{ \kw \ka, \kw\kc\}$.

Suppose, now, that $\kP \kx$ is a self-avoiding walk in $G$, which is of length $k>1$. 
With $\kP\kx$, we indicate that vertex $\kx$ is the last vertex of the walk, while $\kP$ corresponds 
to the first $\kk$ vertices.
 $\kP$ is a self-avoiding walk of length $\kk-1$ in $G$. We show how to obtain $\gext{G}{\kP\kx}$ using $\gext{G}{\kP}$.

It is easy to see that $\kP\kx$ is also a self-avoiding walk in $\gext{G}{\kP}$. Let $\ku$ be the last vertex in $\kP$, i.e., vertex $\ku$ 
is just before $\kx$ in $\kP\kx$. Let $\sbsplit$ contain all neighbours of $\ku$ in $\gext{G}{\kP}$ 
apart from $\kx$, the vertex in $\kP$ prior to $\ku$ and the vertices in $\ssplit_{\kP}$. 
If $\sbsplit\neq \emptyset$, then $\gext{G}{\kP\kx}$ is the result of the following operations on $\gext{G}{\kP}$:
\begin{enumerate}[(a)] \setcounter{enumi}{2}
\item remove each edge in $\gext{G}{\kP}$ that connects $\ku$ to a vertex $\kz\in \sbsplit$,
\item for each $\kz\in \sbsplit$, we insert the split-vertex $\ku\kz$, which is only connected to $\kz$.
\end{enumerate}
If $\sbsplit= \emptyset$, graphs $\gext{G}{\kP}$ and $\gext{G}{\kP\kx}$ are identical.

For graph $\gext{G}{\kP\kx}$, the set of split vertices $\ssplit_{\kP\kx}$ corresponds to the union of 
the split-vertices $\ssplit_{\kP}$ and the newly introduced split-vertices at step (d). 
We follow the convention to denote with $\ku\kz$ the split-vertex  obtained from 
vertex $\ku$ of the initial graph $G$ and is connected to vertex $\kz$.

\Cref{fig:GP-PWBE,fig:GP-PWBEF} provide further examples of extensions of
the graph shown in \Cref{fig:GBeforeExt}. Specifically, \Cref{fig:GP-PWBE} illustrates 
graph $\gext{G}{\kP}$ for the self-avoiding walk $\kP=\kw,\kb,\ke$. In \Cref{fig:GP-PWBEF}, we have 
graph $\gext{G}{\kP}$ for walk $\kP=\kw,\kb,\ke,\kf$.

We view graph $\gext{G}{\kP}$ as consisting of the original set of vertices, 
i.e., those from the initial graph $G$,  and $\ssplit_{\kP}$. In the example of \Cref{fig:GP-PWBE}, graph 
$\gext{G}{\kP}$ consists of the set of vertices of the initial graph $G$ from \Cref{fig:GBeforeExt} and the 
set of split-vertices $\ssplit_{\kP}=\{\kw\ka, \kb\ka, \kb\kc, \kw\kc\}$.

Having defined the $\kP$-extension of $G$, we introduce the related 
notion of the $\{\kP,\kQ\}$-extension of $G$.
For integer $\kk\geq 1$, let $\kP$ and $\kQ$ be two self-avoiding walks in $G$, each of length $\kk$. 
Let $\kw$ and $\ku$ be the starting vertices of $\kP$ and $\kQ$, respectively. We say that $\kP$ and 
$\kQ$ are {\em compatible} when $\dist_G(\ku,\kw)\geq 2\kk$.

For compatible $\kP$ and $ \kQ$, the $\{\kP,\kQ \}$-extension of graph $G$, denoted as $\gext{G}{\kP,\kQ}$, is obtained 
by taking the $\kQ$-extension of graph $\gext{G}{\kP}$.

 The order in which we consider $\kP$ and $\kQ$ to obtain the $\{\kP,\kQ\}$-extension of graph $G$ 
 does not matter. It is easy to check that, as long as $\kP$ and $\kQ$ are compatible, one obtains the same 
 graph $\gext{G}{\kP,\kQ}$ by either taking the $\kP$-extension of $\gext{G}{\kQ}$ or the $\kQ$-extension of $\gext{G}{\kP}$.

The notion of $\{\kP,\kQ\}$-extension is well-defined since the compatibility of $\kP,\kQ$ ensures that 
$\kQ$ is a self-avoiding walk of $\gext{G}{\kP}$ and $\kP$ is a self-avoiding walk of $\gext{G}{\kQ}$.
For $\kP$ and $\kQ$ that are not compatible, we do not define $\gext{G}{\kP,\kQ}$.

 \begin{figure}
 \begin{minipage}{.47\textwidth}
 \centering
		\includegraphics[width=.515\textwidth]{./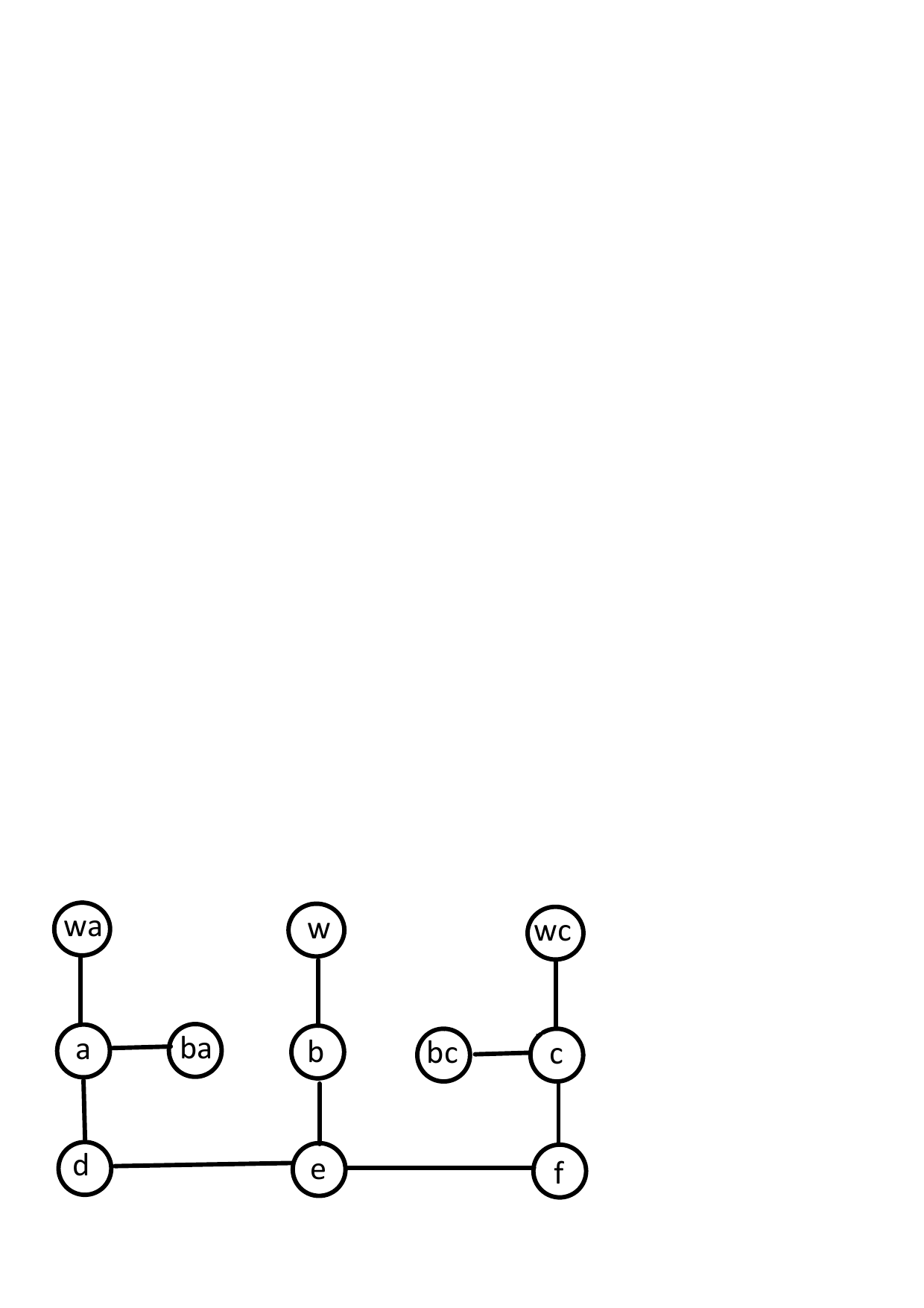}
		\caption{$\gext{G}{\kP}$, for $\kP=\kw,\kb,\ke$}
	\label{fig:GP-PWBE}
\end{minipage}
 \begin{minipage}{.5\textwidth}
 \centering
		\includegraphics[width=.51\textwidth]{./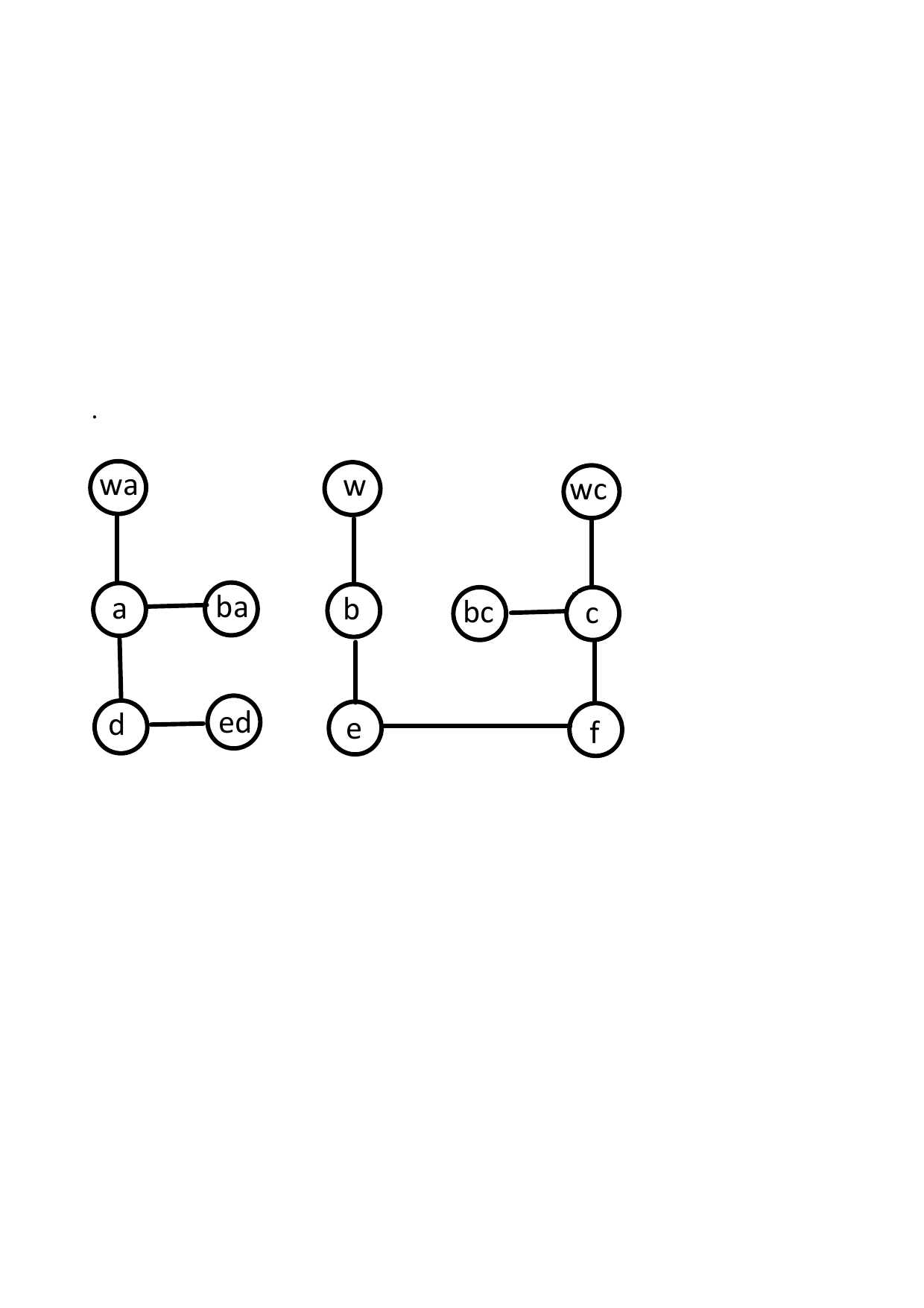}
		\caption{$\gext{G}{\kP}$, for $\kP=\kw,\kb,\ke,\kf$}
	\label{fig:GP-PWBEF}
\end{minipage}
\end{figure}

\subsection{The $\{\kP,\kQ\}$-extension of $\mu^{\Lambda,\tau}_G$:}\label{sec:TopologicalMu}
Consider graph $G=(V,E)$ and a Gibbs distribution $\mu_G$ for this graph.
Without loss of generality, assume a {\em total ordering} for the vertices in graph $G$.

Let $\Lambda\subset V$ and $\tau\in \{\pm 1\}^{\Lambda}$. 
Also, let the self-avoiding walk $\kP$ in graph $G$, which is of length $\kk \geq 1$. 
We take $\kP$ such that it does not intersect with set $\Lambda$. 
Then, consider $\gext{G}{\kP}$, i.e., the $\kP$-extension of $G$. 
Recall  that we follow the convention to call $\ku\kz$ the split-vertex in $\ssplit_{\kP}$ obtained from 
vertex $\ku$ of the initial graph $G$ and is connected with vertex $\kz$.

The $\kP$-extension of the Gibbs distribution $\mu^{\Lambda,\tau}_G$
 is a new Gibbs distribution with underlying graph $\gext{G}{\kP}$ having the same specifications as
$\mu^{\Lambda,\tau}_G$.

More specifically, we let $\mu^{M,\sigma}_{\gext{G}{\kP}}$ be the $\kP$-extension of $\mu^{\Lambda,\tau}_G$
such that the set of pinned vertices $M=\Lambda\cup \ssplit_{\kP}$ while pinning $\sigma$ satisfies the
following:
 for $\ks \in \Lambda\subseteq M$, we have $\sigma(\ks)=\tau(\ks)$, while 
for $\kq\kx\in \ssplit_{\kP}$, we have
\begin{align}\label{eq:DefOfTau_x}
\sigma(\kq\kx) &= \left\{
\begin{array}{lcl}
+1& \quad & \textrm{if $\kz>\kx$}\enspace,\\ 
-1& \quad &\textrm{if $\kz<\kx$} \enspace,
\end{array}
\right. 
\end{align}
where $\kz$ is the vertex after $\kq$ in path $\kP$. The comparison between $\kz$ and $\kx$ is with respect to the total ordering 
of the vertices in $G$. 

Having defined the $\kP$-extension of $\mu^{\Lambda,\tau}_G$, 
we introduce the related notion of the $\{\kP,\kQ\}$-extension of $\mu^{\Lambda,\tau}_G$.
For integer $\kk\geq 1$, let $\kP$ and $\kQ$ be two self-avoiding walks in $G$, each of 
length $\kk$. Let $\kw$ and $\ku$ be the starting vertices of $\kP$ and $\kQ$, respectively. 
 We assume that $\kP$ and $\kQ$ are compatible, i.e., $\dist_{G}(\ku,\kw)\geq 2\kk$.

We obtain the $\{\kP,\kQ \}$-extension of $\mu^{\Lambda,\tau}_G$ by first taking $\mu^{M,\sigma}_{\gext{G}{\kP}}$, i.e., 
 the $\kP$-extension of $\mu^{\Lambda,\tau}_G$, and then taking the $\kQ$-extension of $\mu^{M,\sigma}_{\gext{G}{\kP}}$. 
The $\{\kP,\kQ\}$-extension of $\mu^{\Lambda,\tau}_G$ is a Gibbs distribution on graph $\gext{G}{\kP,\kQ}$.

As in the case of graphs, for $\kP$ and $\kQ$ which are compatible, the order in which we consider them to obtain the 
$\{\kP,\kQ\}$-extension of $\mu^{\Lambda,\tau}_G$ does not make any difference to the outcome. 
For $\kP,\kQ$ that are not compatible, we do not define the $\{\kP,\kQ\}$-extension of $\mu^{\Lambda,\tau}_G$.

\newcommand{\Tw}{\magenta{T}}
\newcommand{\STw}{\capri{t}}
\newcommand{\XTPw}{\blue{T_{\kP}}}

\subsection{Properties of the extensions}\label{sec:PropPextG}\label{sec:PropPextMu}
\label{sec:PropsOfExtensions}

The results in this section are technical and are used later for  the proofs of 
\Cref{thrm:InflVsExtInfl,thrm:InfNormBoundHConj}.
The reader may skip this section on the first read of this paper.

We consider two different settings for the extensions and we study their properties.

\subsubsection{Two simple properties:}\label{sec:BasicPropExt}
The results presented  in \Cref{sec:BasicPropExt} are used to prove \Cref{thrm:InflVsExtInfl}. 

Let integer $\kk>0$, graph $G=(V,E)$, set $\Lambda\subset V$ and $\kw\in V\setminus \Lambda$. 
We let $\ExtV_{\Lambda, \kk}$ be the set of all self-avoiding walks in $G$ of length $\kk$ and do not intersect 
with $\Lambda$.

Consider $\Tw=\Tsaw(G,\kw)$. For each walk $\kP=\kx_0,\ldots,\kx_k$ in $\ExtV_{\Lambda,\kk}$ that
emanates from $\kw$,   there is a path $R_{\kP}=\ku_0,\ldots, \ku_{\kk}$ in tree $\Tw$, which emanates from the root, 
while $\ku_{\ki}$ is a copy of $\kx_{\ki}$, for all $0\leq \ki \leq \kk $. 

Let $\STw_{\kP}$ be the subtree of $\Tw$ induced by the following set of vertices:
\begin{enumerate}[(i)]
\item the vertices in path $R_{\kP}$,
\item all neighbours of $R_{\kP}$ which are copies of a vertex in $\kP$,
\item  all the descendants of vertex $\ku_{\kk}$ in $\Tw$, recall that $\ku_{\kk}$ is the last vertex in $R_{\kP}$.
\end{enumerate}
Also, let $\XTPw=\Tsaw(\gext{G}{\kP},\kw)$. Tree $\XTPw$ is well-defined since $\kw$ is also a vertex in $\gext{G}{\kP}$. 

In what follows, we show that we can {\em identify} $\STw_{\kP}$ with $\XTPw$. This means that there is 
 a bijection ${\uph}$ from the set of vertices in $\STw_{\kP}$ and those in $\XTPw$,
such that the following holds:
\begin{enumerate}[(a)]
\item  $\uph$ preserves adjacencies, i.e., if vertices $\kx$ and $\kz$ are adjacent in $\STw_{\kP}$, then $\uph(\kx)$ and $\uph(\kz)$ are adjacent in $\XTPw$,
\item for a vertex $\ku$ in $\STw_{\kP}$, which is a copy of vertex $\kv$ in $G$, we have that $\uph(\ku)$ is a copy of vertex $\kv$, or a 
copy of a split-vertex in $\gext{G}{\kP}$, generated by $\kv$. 
\end{enumerate}

\begin{lemma}\label{lemma:TsawPVsTsaw}
For walk $\kP\in \ExtV_{\Lambda,\kk}$ that emanates from vertex $\kw$, we  identify $\STw_{\kP}$ with $\XTPw$, where the two trees are defined above.
\end{lemma}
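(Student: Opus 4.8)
The statement is a structural identification between two trees of self-avoiding walks: on one side, the subtree $\STw_{\kP}$ of $\Tw = \Tsaw(G,\kw)$ carved out by walk $\kP$; on the other, the full tree $\XTPw = \Tsaw(\gext{G}{\kP}, \kw)$ built on the $\kP$-extension of $G$. The plan is to construct the bijection $\uph$ explicitly by describing, for each vertex of $\STw_{\kP}$, which walk in $\gext{G}{\kP}$ it should map to, and then to verify properties (a) and (b) directly from the construction of $\gext{G}{\kP}$ given in \Cref{sec:TopologicalG}. Recall that $\gext{G}{\kP}$ is obtained from $G$ by, along each vertex $\kx_{\ki}$ of $\kP = \kx_0, \ldots, \kx_{\kk}$, detaching the "side neighbours" (those not on $\kP$ and not previously split) and reattaching each such neighbour $\kz$ via a brand-new split-vertex $\kx_{\ki}\kz$ of degree one. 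So the local picture around the spine $\kP$ in $\gext{G}{\kP}$ is: the path $\kP$ itself, plus, hanging off each spine vertex $\kx_{\ki}$ (for $i < \kk$), the original subgraphs reachable from the side-neighbours $\kz$, each now rooted at a degree-one split vertex $\kx_{\ki}\kz$; and hanging off the endpoint $\kx_{\kk}$, the entire rest of $G$ minus the vertices of $\kP$.

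The bijection $\uph$ is then defined walk-by-walk. A vertex of $\STw_{\kP}$ is (the tree-copy of) a walk $\kw = \kv_0, \ldots, \kv_{\kr}$ in $G$ that, by the definition of $\STw_{\kP}$ (items (i)--(iii)), either runs along the path $R_{\kP}$, or branches off $R_{\kP}$ at some $\ku_{\ki}$ ($i < \kk$) into a copy of a side-neighbour of $\kx_{\ki}$ and then continues arbitrarily as a $\Tsaw$-walk, or continues past $\ku_{\kk}$ into the descendants of the endpoint. To each such walk I would associate the corresponding walk in $\gext{G}{\kP}$: along the spine, $\kv_0, \ldots, \kv_{j} = \kx_0, \ldots, \kx_j$ maps to itself (the spine survives intact in $\gext{G}{\kP}$); at the first step where the $G$-walk leaves the spine at $\kx_{\ki}$ into a side-neighbour $\kz$, replace that step by the step into the split-vertex $\kx_{\ki}\kz$, and then copy the remainder of the walk verbatim (the component of $\kz$ is untouched by the extension, so every subsequent vertex still exists and the adjacencies are preserved); for a walk that reaches $\kx_{\kk}$ and continues, again copy it verbatim since the $\kx_{\kk}$-side of $\gext{G}{\kP}$ is exactly $G$ with the $\kP$-vertices removed. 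One checks this is well-defined: the self-avoiding / almost-self-avoiding conditions defining membership in a $\Tsaw$ (the two cases in \Cref{sec:TsawConstruction}) are preserved, because inserting a degree-one split vertex in place of an edge neither creates nor destroys repeated vertices except for renaming, and the "loop-closing" case $\kv_{\kr} = \kv_{j}$ with $j \le \kr - 3$ transfers under the obvious relabelling. The inverse map is described the same way, reading off split vertices $\kx_{\ki}\kz$ as "step from $\kx_{\ki}$ into $\kz$."

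For property (a), adjacency in a $\Tsaw$ means "one walk extends the other by one vertex," and under $\uph$ this is clearly preserved since $\uph$ acts by relabelling vertices along the walk without changing its length-structure, except at the single spine-departure step, where a step $\kx_{\ki} \to \kz$ becomes $\kx_{\ki} \to \kx_{\ki}\kz$ (still a single extension step) and is then followed by $\kx_{\ki}\kz \to \kz$ corresponding to... — here I need to be slightly careful about whether lengths match, and the cleanest way is to check that the split-vertex insertion replaces each detached edge $\{\kx_{\ki}, \kz\}$ by a path of length one through $\kx_{\ki}\kz$ of length... one should confirm from \Cref{sec:TopologicalG} that the split vertex $\kx_{\ki}\kz$ is connected \emph{only} to $\kz$, so in $\gext{G}{\kP}$ the vertex $\kx_{\ki}$ is simply \emph{not} adjacent to $\kz$ anymore — meaning the walk in $G$ that goes $\ldots, \kx_{\ki}, \kz, \ldots$ has no literal counterpart, and instead $\STw_{\kP}$'s defining clause (ii) ("neighbours of $R_{\kP}$ which are copies of a vertex in $\kP$") is exactly what gets matched to the split vertices. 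I would therefore re-read clause (ii) carefully and set up $\uph$ so that a copy-of-$\kz$ neighbour of $\ku_{\ki}$ in $\STw_{\kP}$ maps to the copy of the split-vertex $\kx_{\ki}\kz$ in $\XTPw$, and the subtree below it maps identically; this makes (b) immediate by construction.

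\textbf{Main obstacle.} The genuine difficulty is bookkeeping, not depth: getting the definition of $\uph$ to exactly respect the three-part description of $\STw_{\kP}$ — in particular making sure that the "neighbours of $R_{\kP}$ which are copies of a vertex in $\kP$" in clause (ii) are precisely the vertices that become split-vertices in $\gext{G}{\kP}$, and that vertices on $R_{\kP}$ which are copies of $\kP$-vertices but lie \emph{off} the spine direction are not double-counted. I expect to spend most of the effort on a careful induction on $\kk$ (mirroring the inductive definition of $\gext{G}{\kP}$) showing that $\uph$ restricted to the already-extended prefix behaves correctly and that the newly split neighbours at step $\kk$ match the newly required subtrees in $\STw_{\kP}$; the pinning / ordering data plays no role here and can be ignored for this lemma.
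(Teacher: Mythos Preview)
Your walk-by-walk bijection is the paper's approach: partition the walks by Type and by whether the last vertex lies in $\kP$, act as the identity on Type~I walks and on Type~II walks whose repeated vertex lies outside $\kP$, and on the remaining Type~II walks replace the last vertex by an appropriate split vertex. But your choice of split vertex is backwards. You send the copy-of-$\kz$ child of $\ku_{\ki}$ to a copy of the split vertex $\kx_{\ki}\kz$; yet, as you yourself already observed, $\kx_{\ki}\kz$ is adjacent only to $\kz$, so no copy of it can be a child of $\ku_{\ki}$ in $\XTPw$, and property~(a) fails outright. Property~(b) fails too: $\kx_{\ki}\kz$ is generated by $\kx_{\ki}$, not by $\kz$, so sending a copy of $\kz$ to a copy of $\kx_{\ki}\kz$ violates~(b). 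The correct target is the split vertex generated by $\kz$ and attached to $\kx_{\ki}$, namely $\kz\,\kx_{\ki}$ (here $\kz=\kx_{\kj}\in\kP$); this is exactly what the paper's bijection $\ABijection$ does --- on a Type~II walk $\kv_0,\ldots,\kv_{r-1},\kv_r$ with $\kv_r\in\kP$, it replaces $\kv_r$ by the split vertex generated by $\kv_r$ and attached to $\kv_{r-1}$. That split vertex exists precisely when $\kz$ is processed before we reach $\kx_{\ki}$, i.e.\ when $j\le i-2$, which is exactly the Type~II case of clause~(ii).

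With the direction fixed, your ``main obstacle'' becomes the right worry. If $\kP$ has a forward chord ($\kx_{\ki}$ adjacent to $\kx_{\kj}$ with $j\ge i+2$ and $i<\kk$), clause~(ii) as literally written contributes a Type~I child of $\ku_{\ki}$ (a copy of $\kx_{\kj}$) that has no partner in $\XTPw$: neither $\kx_{\ki}\kx_{\kj}$ nor $\kx_{\kj}\kx_{\ki}$ is adjacent to $\kx_{\ki}$ in $\gext{G}{\kP}$. The identification goes through once clause~(ii) is read as comprising only the Type~II neighbours of $R_{\kP}$ --- those looping back to an already-visited $\kP$-vertex. Your planned induction on $\kk$ will not by itself resolve this; it is a matter of reading the definition correctly, not of the inductive mechanics.
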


The proof of \Cref{lemma:TsawPVsTsaw} appears in \Cref{sec:lemma:TsawPVsTsaw}.

For $\Lambda\subset V$ and $\tau\in \{\pm 1\}^{\Lambda}$, consider the Gibbs distribution $\mu^{\Lambda,\tau}_G$. 
Also, consider the collection of weights $\{ \infweight(e)\}$ over the edges of $\Tw=\Tsaw(G,\kw)$ we obtain as described 
in \eqref{def:OfInfluenceWeights} with respect to $\mu^{\Lambda,\tau}_G$.

Fix $\kP\in \ExtV_{\Lambda,k}(w)$. Using \Cref{lemma:TsawPVsTsaw},  we identify $\XTPw$ as a subtree of $\Tw$. 
We endow $\XTPw$ with the edge weights $\{\infweight(e)\}_{e\in \XTPw}$.

Applying the $\Tsaw$-construction to the $\kP$-extension of $\mu^{\Lambda,\tau}_G$ gives rise to the weights $\{\infweight_{\kP} (e)\}$ 
over the edges of $\XTPw$. Now we have two sets of weights for the edges of $\XTPw$, i.e., $\{\infweight(e)\}_{e\in \XTPw}$ and 
$\{\infweight_{\kP}(e)\}_{e\in \XTPw}$.

The following result establishes a connection between these two sets of weights on the edges of $\XTPw$.

\begin{lemma}\label{lemma:GibbsWeightsTpVsP}
For walk $\kP\in \ExtV_{\Lambda,\kk}$ that emanates from vertex $\kw$, consider $\XTPw=\Tsaw(\gext{G}{\kP},w)$ and the weights $\{ \infweight(e)\}$ and $\{\infweight_{\kP}(e)\}$ 
over the edges of $\XTPw$, as defined above.
For all edges $\ke$ in $\XTPw$, which are at distance $\kell \geq \kk$ from the root of the tree, we have 
$\infweight(\ke)=\infweight_{\kP}(\ke)$. 
\end{lemma}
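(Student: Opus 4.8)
The plan is to compare the two weightings $\{\infweight(\ke)\}$ and $\{\infweight_\kP(\ke)\}$ on the tree $\XTPw$ edge by edge, using the explicit formula \eqref{def:OfInfluenceWeights}: each weight is either $0$ (when one of the endpoints carries a pinning) or $\dlogtrecur(\log \gratio^{\Gamma,\sigma}_{\kz})$, where $\kz$ is the child-endpoint and $\gratio^{\Gamma,\sigma}_{\kz}$ is the ratio of Gibbs marginals at $\kz$ in the appropriate subtree with the appropriate induced pinning. Since the function $\dlogtrecur$ depends only on the parameters $\beta,\gamma,\lambda$, which are identical for $\mu^{\Lambda,\tau}_G$ and for its $\kP$-extension, it suffices to prove that for every edge $\ke=\{\kx,\kz\}$ of $\XTPw$ at distance $\kell \ge \kk$ from the root (with $\kx$ the parent of $\kz$), the following two things agree: (i) the pinning status of the endpoints is the same under both constructions, and (ii) when neither endpoint is pinned, the marginal ratios $\gratio^{\Gamma,\sigma}_{\kz}$ computed in the two subtrees coincide.

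First I would set up notation via \Cref{lemma:TsawPVsTsaw}: identify $\XTPw$ with the subtree $\STw_\kP$ of $\Tw=\Tsaw(G,\kw)$, so that an edge $\ke$ of $\XTPw$ at distance $\kell\ge\kk$ from the root corresponds to an edge of $\Tw$ at distance $\kell$ from the root lying below the copy $\ku_\kk$ of the last vertex of $\kP$ (since all vertices of $\XTPw$ beyond level $\kk$ are descendants of $\ku_\kk$ in $\Tw$). The key structural observation is this: the graph operations defining $\gext{G}{\kP}$ only delete edges incident to vertices of $\kP$ and attach new pendant split-vertices to neighbours of $\kP$; they do not touch any vertex at distance $\ge \kk+1$ from $\kw$ along walks that stay inside the relevant region. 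Consequently, inside $\Tw$, the subtree hanging below $\ku_\kk$ is structurally identical for $G$ and for $\gext{G}{\kP}$, and the only pinnings that could differ are those coming from the split-vertex pinning \eqref{eq:DefOfTau_x} versus the self-avoiding-walk "back-edge" pinnings in Weitz's construction.

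The heart of the argument is to verify that these two sources of pinning induce the \emph{same} boundary conditions on the part of $\XTPw$ below level $\kk$. Concretely, for a vertex $\kz$ of $\XTPw$ at level $\kell\ge\kk$: either $\kz$ is a copy of a genuine vertex of $G$ that belongs to $\Lambda$ — in which case it is pinned to $\tau$ in both constructions — or $\kz$ is a copy of a split-vertex $\kq\kx$ of $\gext{G}{\kP}$. In the latter case, $\kz$ is a leaf of $\XTPw$ (split-vertices have degree $1$), pinned to $\sigma(\kq\kx)$ according to \eqref{eq:DefOfTau_x}; and under the identification with $\STw_\kP\subseteq\Tw$, this same vertex arises as a copy of $\kq$ that closes a cycle back onto $\kP$ (the neighbour $\kx$ of $\kq$ lies on $\kP$), so Weitz's construction pins it by comparing $\kx$ with the successor $\kz$ of $\kq$ along $\kP$ — and the ordering rule in Weitz's $\Tsaw$-construction, items (a)--(b) after \eqref{def:OfInfluenceWeights}, is precisely \eqref{eq:DefOfTau_x}. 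This matching of the pinning rules is the main obstacle: one must carefully track which vertex plays the role of $\kv_{\kj+1}$ versus $\kv_{\kell-1}$ in the $\Tsaw$ pinning and check it coincides with the roles of $\kz$ (successor on $\kP$) versus $\kx$ (the split neighbour) in \eqref{eq:DefOfTau_x}, handling the fact that $\kP$ is traversed "outward" from $\kw$. Once the pinnings agree, both $\gratio^{\Gamma,\sigma}_{\kz}$'s are computed on the identical subtree with the identical boundary condition, hence are equal, and therefore $\infweight(\ke)=\dlogtrecur(\log\gratio^{\Gamma,\sigma}_{\kz})=\infweight_\kP(\ke)$ (and $0=0$ in the pinned case). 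A clean way to package the marginal-ratio equality is to induct on the height of $\kz$ within $\XTPw$ using the recursion \eqref{eq:BPRecursion}: the base case is pinned leaves, and the inductive step uses that the children-structure below level $\kk$ is identical in the two graphs, which follows from the fact (noted in \Cref{sec:TopologicalG}) that the extension operation is local to $\kP$.
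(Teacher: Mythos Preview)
Your proposal is correct and follows essentially the same route as the paper: identify $\XTPw$ with the subtree $\STw_\kP\subseteq\Tw$ via \Cref{lemma:TsawPVsTsaw}, observe that the weight on an edge at level $\ge\kk$ depends only on the parameters $(\beta,\gamma,\lambda)$ and on the pinning restricted to the subtree below, and then verify that the two pinnings $(M\cap\XTPw,\sigma)$ and $(\kK,\eta)$ coincide. The paper makes this last step explicit by partitioning the pinned vertices into three classes (copies of $\Lambda$, Type-II walks whose repeated vertex lies on $\kP$, and Type-II walks whose repeated vertex lies off $\kP$) and matching each class across the two constructions; your sketch covers the first two and leaves the third (Type-II cycles not touching $\kP$) implicit, but that case is indeed trivial once the subtree identification is in place.

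One small slip: in your parenthetical ``the neighbour $\kx$ of $\kq$ lies on $\kP$'' the roles are reversed relative to the paper's convention for the split-vertex $\kq\kx$ --- it is $\kq$ that lies on $\kP$ and $\kx$ that is the detached neighbour. This does not affect the argument, but it is exactly the bookkeeping you flag as ``the main obstacle'', so when you write it out in full be careful to align $\kv_{\kj+1}$ (the successor of $\kq$ along $\kP$) with the vertex called $\kz$ in \eqref{eq:DefOfTau_x}, and $\kv_{\kell-1}$ with $\kx$.
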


The proof of \Cref{lemma:GibbsWeightsTpVsP} appears in \Cref{sec:lemma:GibbsWeightsTpVsP}.

\newcommand{\SMTu}{ \applegreen{T}}
\newcommand{\SMTPu}{\overline{\cadmiumgreen{T}}}

\newcommand{\cpT}{\byzantine{\cp}_{\kP}}
\newcommand{\cpTP}{\magenta{\overline{\cp}}_{\kP}}

\subsubsection{Perturbation of weights:}\label{sec:NonSoBasicExtensions}

The properties of the extension presented in this section, i.e., \Cref{sec:NonSoBasicExtensions}, are used in the proof
of \Cref{thrm:InfNormBoundHConj}. The setting we consider here is different from the one in \Cref{sec:BasicPropExt}.

For integer $\kk>0$, consider $G=(V,E)$, set $\Lambda\subset V$ and $\tau\in \{\pm 1\}^{\Lambda}$. For 
$\kw, \ku\in V\setminus \Lambda$, such that $\dist_{G}(\kw,\ku)\geq 2\kk$ , let walk $\kP\in \ExtV_{\Lambda, \kk}$  
that emanates from $\kw$. 

Consider the Gibbs distribution $\mu^{\Lambda,\tau}_G$.  Using the $\Tsaw$-construction for $\mu^{\Lambda,\tau}_G$, 
we obtain the Gibbs distribution $\mu^{M,\sigma}_{\SMTu}$ for tree $\SMTu=\Tsaw(G,\ku)$. Also,  let $\{ \infweight(e)\}$ 
be the collection of weights over the edges of $\SMTu$ that arise  from $\mu^{M,\sigma}_{\SMTu}$.

We use the $\Tsaw$-construction for the $\kP$-extension of  $\mu^{\Lambda,\tau}_G$ and get the Gibbs distribution 
$\mu^{\kL,\eta}_{\SMTPu}$ for tree $\SMTPu=\Tsaw(\gext{G}{\kP},\ku)$.  Let $\{\infweight_{\kP}(\ke)\}$ be the collection 
of weights over the edges of $\SMTPu$ that arise  from $\mu^{\kL,\eta}_{\SMTPu}$.

As opposed to the setting in \Cref{sec:BasicPropExt}, in this section we do {\em not} start the trees of self-avoiding walks 
from  vertex $\kw$, i.e., the first vertex in $\kP$; we rather start the trees from a distant vertex $\ku$.

\begin{lemma}\label{lemma:Subtree4TSawU}
We can identify $\SMTPu=\Tsaw(\gext{G}{\kP}, \ku)$ as a subtree of $\SMTu=\Tsaw(G, \ku)$ such that both trees have 
the same root.

Each vertex in $\SMTu \setminus \SMTPu$ corresponds to a descendant of a vertex in $\cpT$,
where set $\cpT$ consist of each vertex $\kv$ in $\SMTu$ which is a copy of a vertex in $\kP$.
\end{lemma}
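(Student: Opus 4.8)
The plan is to prove \Cref{lemma:Subtree4TSawU} by carefully tracing how the $\kP$-extension alters the graph $G$ and then transferring that local description to the trees of self-avoiding walks rooted at the \emph{distant} vertex $\ku$. Recall that $\gext{G}{\kP}$ is obtained from $G$ by (i) deleting certain edges incident to the vertices of $\kP$ (namely, for each internal vertex $\kq$ of $\kP$ and each neighbour $\kz$ of $\kq$ not already on $\kP$ nor a previously-created split-vertex, the edge $\kq\kz$) and (ii) attaching a pendant split-vertex $\kq\kz$ to $\kz$ for each deleted edge. Thus $\gext{G}{\kP}$ is a subgraph of $G$ on the original vertex set, with some pendant vertices added; crucially every walk in $\gext{G}{\kP}$ that does not traverse a split-vertex is a walk in $G$. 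Since $\kP$ and (any path leaving) $\ku$ start at distance $\geq 2\kk$, the root $\ku$ and its immediate neighbourhood are untouched by the extension, so $\Tsaw(\gext{G}{\kP},\ku)$ and $\Tsaw(G,\ku)$ agree at the root; this gives the "same root'' claim.

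First I would set up the identification map. A vertex of $\SMTu=\Tsaw(G,\ku)$ is a walk $\kv_0=\ku,\kv_1,\dots,\kv_r$ in $G$ (self-avoiding, or self-avoiding-up-to-the-last-step closing a cycle of length $\geq 3$, per the $\Tsaw$ definition). I claim such a walk survives into $\SMTPu$ precisely when it uses only edges present in $\gext{G}{\kP}$, i.e.\ when it never traverses one of the edges deleted in step (i). Map each surviving walk to "the same walk, read in $\gext{G}{\kP}$''; this is injective and adjacency-preserving by construction (adjacency in either tree is "extends by one vertex''), and one checks the two clauses of the $\Tsaw$ branching rule are preserved because the deleted edges are edges, not vertices, so self-avoidance and the cycle-closing condition are unaffected. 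This exhibits $\SMTPu$ as a subtree of $\SMTu$ with the same root. (One should also note the pendant split-vertices of $\gext{G}{\kP}$: a walk in $\SMTPu$ reaching a split-vertex $\kq\kz$ must have entered via $\kz$ and then stops, so such vertices of $\SMTPu$ are identified with the dead-end copies of $\kq$ one step past $\kz$ in $\SMTu$ that survive — this is harmless for the statement and can be absorbed into the bijection exactly as in \Cref{lemma:TsawPVsTsaw}.)

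Next I would prove the "complement'' statement: every vertex of $\SMTu\setminus\SMTPu$ is a descendant (in $\SMTu$) of some vertex of $\cpT$, the set of copies of vertices of $\kP$. Take a walk $W=\ku=\kv_0,\kv_1,\dots,\kv_r$ in $G$ that is \emph{not} in $\SMTPu$; by the previous paragraph $W$ traverses a deleted edge, say the first such deleted edge is $\{\kv_{j},\kv_{j+1}\}$. By the definition of which edges are deleted in forming $\gext{G}{\kP}$, at least one endpoint of a deleted edge is a vertex of $\kP$ (indeed the deleted edges are exactly $\kq\kz$ with $\kq$ an internal vertex of $\kP$). Hence $\kv_j$ or $\kv_{j+1}$ lies on $\kP$, so the length-$j$ (or length-$(j+1)$) prefix of $W$ is a vertex of $\SMTu$ that is a copy of a vertex of $\kP$, i.e.\ an element of $\cpT$; and $W$ itself is a descendant of that prefix in the tree $\SMTu$ because $\SMTu$'s tree order is exactly the prefix order on walks. (I should double-check the edge case where $\kv_j=\ku$ itself or is very near $\ku$: the distance hypothesis $\dist_G(\kw,\ku)\geq 2\kk$ forces any vertex of $\kP$ reachable within the relevant prefix length to genuinely appear, and rules out degenerate collisions; this is the only place the compatibility-type hypothesis is used.)

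The main obstacle I expect is bookkeeping around the split-vertices and the precise form of the $\Tsaw$ branching rule: one must verify that passing from $G$ to $\gext{G}{\kP}$ neither creates spurious tree vertices (it cannot, since we only delete edges and add pendants far from $\ku$) nor destroys the cycle-closing clause in a way that would make $\SMTPu$ fail to sit inside $\SMTu$ as an \emph{induced} subtree. I would handle this exactly as in the proof of \Cref{lemma:TsawPVsTsaw} — reusing that identification locally and then arguing that the part of $\SMTu$ "beyond'' the first copy of a $\kP$-vertex is precisely what gets pruned. Modulo that careful case analysis, the argument is a direct translation of the edge-deletion description of the extension into the prefix-tree structure of $\Tsaw$.
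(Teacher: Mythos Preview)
Your proposal is correct and follows essentially the same route as the paper: define the identification by sending each walk in $\SMTPu$ not ending at a split-vertex to the identical walk in $\SMTu$, and each walk ending at a split-vertex to the walk obtained by replacing that terminal split-vertex with its generating $\kP$-vertex; then characterise $\SMTu\setminus\SMTPu$ as those walks that traverse an edge of $\cO$ and hence have a prefix ending at a copy of a $\kP$-vertex. The only minor differences are that the paper writes the bijection explicitly in the direction $\SMTPu\to\SMTu$ (rather than starting from ``surviving'' walks in $\SMTu$), and it does not invoke the distance hypothesis $\dist_G(\kw,\ku)\geq 2\kk$ --- only $\ku\notin\kP$ is needed for this lemma, so your worry about that edge case is unnecessary.
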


The proof of \Cref{lemma:Subtree4TSawU} appears in \Cref{sec:lemma:Subtree4TSawU}.

We use \Cref{lemma:Subtree4TSawU} and identify $\SMTPu$ as a subtree of $\SMTu$.
We endow $\SMTPu$ with the collection of weights
$\{ \infweight(\ke)\}_{\ke\in \SMTPu}$. 
Now, we have two sets of weights for $\SMTPu$, i.e., $\{\infweight(\ke)\}_{\ke\in \SMTPu}$ and
$\{\infweight_{\kP} (\ke)\}_{\ke\in \SMTPu}$.

 In the following result,  we consider a path $\kW$ from the root of $\SMTPu$ to a copy of vertex $\kw$
and describe a relation between the weights of this path that are induced by $\{\infweight_{\kP}(\ke)\}$ and 
$\{\infweight(\ke)\}$, respectively.

\begin{theorem}\label{thrm:ProdWideHatBetaVsBeta}
Let $\maxDeg > 1$, $\kk\geq 1$ and $N>1$ be integers, $c>0$, $\SingBound> 1$, $\pfs\geq 1$, while 
let $b,\varepsilon\in (0,1)$. Also, let $\beta,\gamma, \lambda\in \mathbb{R}$ be such that $\gamma >0$,
$0\leq \beta\leq \gamma$ and $\lambda>0$.

Consider graph $G=(V,E)$ of maximum degree $\maxDeg$ such that $\nnorm{ (\NBMatrix)^N}{1/N}{2}= \SingBound$. 
Assume that $\mu$, the Gibbs distribution on $G$ specified by the parameters $(\beta, \gamma, \lambda)$,
is $b$-marginally bounded. For $\delta= \frac{1-\varepsilon}{\SingBound}$, suppose that there
is a $(\pfs,\delta,c)$-potential function $\potF$ with respect to $(\beta,\gamma,\lambda)$.

There are constants $\ell_0> 2\kk$, $\widehat{\kc}_1\geq 1$ such that,  for $\Lambda\subset V$ and 
$\tau\in \{\pm 1\}^{\Lambda}$, for any $\ku,\kw\in V\setminus \Lambda$ with  $\dist(\ku,\kw)\geq 2\kk$ 
and walk $\kP\in \ExtV_{\Lambda,\kk}$ that emanates from $\kw$, the following is true:

For $\kell\geq \ell_0$, for path $\kW=\ke_1, \ldots, \ke_{\kell}$ from the root of $\SMTPu=\Tsaw(\gext{G}{\kP}, \ku)$ 
to a copy of vertex $\kw$, we have
\begin{align}\nonumber
\prod\nolimits_{1\leq \ki \leq \kell} \abs{\infweight_{\kP}(\ke_{\ki})} & \leq \widehat{\kc}_1 \cdot (1+\varepsilon/3)^{\kell/\pfs} \cdot \prod\nolimits_{1\leq \ki\leq \kell-\kk} \abs{\infweight (\ke_{\ki})} \enspace. 
\end{align}
\end{theorem}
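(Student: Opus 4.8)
The plan is to exploit that $\gext{G}{\kP}$ and $G$ differ only inside the radius-one ball around the walk $\kP$, which I write as $\kP=x_0,\dots,x_{\kk}$ with $x_0=\kw$. After identifying $\SMTPu=\Tsaw(\gext{G}{\kP},\ku)$ with a subtree of $\SMTu=\Tsaw(G,\ku)$ via \Cref{lemma:Subtree4TSawU}, the two weight families $\{\infweight_{\kP}(\ke)\}$ and $\{\infweight(\ke)\}$ on $\SMTPu$ coincide far from $\kP$ and differ only through a localized, exponentially-damped perturbation. I write $v_0,\dots,v_{\kell}$ for the successive vertices of $\kW$ (so $v_0$ is the root, $v_{\kell}$ the terminal copy of $\kw$, and $\ke_i=\{v_{i-1},v_i\}$), and I may assume $\prod_i\abs{\infweight_{\kP}(\ke_i)}>0$, since otherwise the inequality is trivial.

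\textbf{Step 1: the shape of $\kW$.} First I would isolate the structural fact that accounts for the shift by $\kk$ on the right-hand side. In $\gext{G}{\kP}$ each of $x_0,\dots,x_{\kk-1}$ has every edge other than its two $\kP$-edges replaced by an edge to a pinned split-vertex, and an edge of $\Tsaw$ incident to a pinned vertex has weight $0$ by \eqref{def:OfInfluenceWeights}; hence a non-zero-weight path in $\SMTPu$ ending at a copy of $\kw=x_0$ is forced, over its last $\kk$ edges, to trace $\kP$ backwards. Thus $v_{\kell-j}$ is a copy of $x_j$ for $0\le j\le\kk$, and in particular $v_{\kell-\kk}$ is a copy of $x_{\kk}$, whose $G$-neighbourhood the extension leaves untouched.

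\textbf{Step 2: the last $\kk$ edges, and a non-degeneracy remark.} Reading off \eqref{eq:DerivOfLogRatio} shows $\dlogtrecur$ is bounded on all of $[-\infty,+\infty]$ by a constant $B=B(\beta,\gamma)$, so $\abs{\infweight_{\kP}(\ke)}\le B$ for every edge and $\prod_{i=\kell-\kk+1}^{\kell}\abs{\infweight_{\kP}(\ke_i)}\le B^{\kk}$, which I will absorb into $\widehat{\kc}_1$. I would also note that every $\infweight(\ke_i)$ with $i\le\kell-\kk$ is non-zero: the initial segment $v_0,\dots,v_{\kell-\kk}$ consists of copies of real vertices with only real-vertex ancestors, so the $\Lambda$-pinnings and near-repeat pinnings it encounters agree in $\SMTu$ and in $\SMTPu$, and none of the latter occur because $\prod_i\abs{\infweight_{\kP}(\ke_i)}>0$; combined with the $b$-marginal boundedness hypothesis this also yields a uniform lower bound $\abs{\infweight(\ke_i)}\ge a_{\min}>0$ on this segment.

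\textbf{Step 3: the first $\kell-\kk$ edges.} For $i\le\kell-\kk$ one has $\infweight(\ke_i)=\dlogtrecur(\log\gratio_{v_i})$ and $\infweight_{\kP}(\ke_i)=\dlogtrecur(\log\gratio^{\kP}_{v_i})$, the two ratios of marginals at $v_i$ being computed in the subtrees of $\SMTu$ and of $\SMTPu$ rooted at $v_i$; by \Cref{lemma:Subtree4TSawU} these subtrees agree outside the descendants of copies of vertices of $\kP$, and every copy of the heavily-modified vertices $x_0,\dots,x_{\kk-1}$ lies below a copy of $x_{\kk}$. I would then invoke a correlation-decay estimate built from the $\delta$-contraction supplied by the $(\pfs,\delta,c)$-potential $\potF$ (\Cref{def:GoodPotential}) and from $b$-marginal boundedness, which keeps each ratio in a compact subinterval of the interior of $\ratiorange$ on which $\dlogtrecur,\dlogtrecur'$ are bounded and $\dlogtrecur$ is bounded away from $0$. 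This should give, for a suitable constant $\ell_0=\ell_0(\maxDeg,\SingBound,b,\varepsilon,\kk)>2\kk$,
\begin{align*}
\abs{\infweight_{\kP}(\ke_i)}\ \le\ (1+\varepsilon/3)^{1/\pfs}\,\abs{\infweight(\ke_i)}\qquad\text{for all } i\le \kell-\ell_0 ,
\end{align*}
while for the at most $\ell_0$ remaining edges ($\kell-\ell_0<i\le\kell-\kk$) I would simply use $\abs{\infweight_{\kP}(\ke_i)}\le B$ and $\abs{\infweight(\ke_i)}\ge a_{\min}$. Multiplying these bounds over $i=1,\dots,\kell-\kk$ and then by the factor $B^{\kk}$ from Step 2 produces the claimed inequality with $\widehat{\kc}_1=B^{\kk}(B/a_{\min})^{\ell_0}$.

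\textbf{The main obstacle.} The genuinely delicate point is the quantitative correlation-decay estimate in Step 3. The discrepancy $\abs{\log\gratio^{\kP}_{v_i}-\log\gratio_{v_i}}$ unfolds as a sum, over all copies of vertices of $\kP$ sitting in the subtree below $v_i$, of the bounded effect of deleting their hanging subtrees, each contribution damped by the contraction as it propagates up to $v_i$. Since the number of such copies at distance $d$ from $v_i$ grows like the number of self-avoiding (hence $\kk$-non-backtracking) walks of length $d$, bounding this sum by a small constant uniformly in $i$ and in the graph is exactly where the hypothesis $\nnorm{(\NBMatrix)^N}{1/N}{2}=\SingBound$ together with the matched contraction $\delta=(1-\varepsilon)/\SingBound$ must be used — a crude maximum-degree bound would not suffice — in the same spirit as \Cref{claim:SigmaLVsPathNumber} and \Cref{lemma:SingSequenConv}. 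A secondary but essential check is that $\SMTu$ and $\SMTPu$ really do coincide away from the descendants of $\kP$-copies, including their near-repeat pinning structure; this is the content of \Cref{lemma:Subtree4TSawU} and is where the compatibility/non-intersection hypotheses on $\kP$ enter.
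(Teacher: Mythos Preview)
Your overall strategy — identify $\SMTPu$ inside $\SMTu$, peel off the last $\kk$ edges, and use correlation decay to compare weights on the rest — matches the paper's. But Step 3 contains a real gap in the decomposition.

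You claim the multiplicative bound $\abs{\infweight_{\kP}(\ke_i)}\le(1+\varepsilon/3)^{1/\pfs}\abs{\infweight(\ke_i)}$ for \emph{all} $i\le\ell-\ell_0$, and in the ``main obstacle'' paragraph you frame the task as bounding the perturbation sum by a small constant \emph{uniformly in $i$}. That is not achievable. The discrepancy at $v_i$ is governed by the distance from $v_i$ to the \emph{nearest} copy of a $\kP$-vertex in the subtree below $v_i$; this distance equals the graph distance $\dist_G(\hat v_i,\kP)$, where $\hat v_i$ is the vertex of $G$ that $v_i$ copies. The self-avoiding walk in $G$ underlying $\kW$ may pass within distance $\ell_0$ of $\kP$ at many places along its length, not only in its final segment, and at each such $i$ your sum has a leading term of order $1$ — it is not small. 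So the per-edge bound fails there, and ``$i\le\ell-\ell_0$'' is the wrong criterion: it controls only the distance to the \emph{terminal} copy of $\kP$ on $\kW$, not to the other copies sitting in the subtree.

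The paper's fix is exactly the missing ingredient: partition the first $\ell-\kk$ edges according to whether $\ke_i$ is at distance $\ge\ell_0$ from the full set $\cpTP$ of $\kP$-copies (Proposition~\ref{prop:StableSSM} gives your multiplicative bound on this set) or within $\ell_0$ of $\cpTP$. For the latter ``bad'' set, the key observation (Claim~\ref{cliam:NoOfEdgeClose2LM}) is that the number of \emph{vertices of $G$} within $\ell_0$ of $\kP$ is at most $(\kk+1)\maxDeg^{\ell_0+1}$, and since $\kW$ corresponds to a self-avoiding walk in $G$ it visits each such vertex at most once. Hence the bad set has constant size independent of $\ell$, and your crude $B/a_{\min}$ bound there contributes only to $\widehat{\kc}_1$. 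Your correlation-decay discussion correctly identifies the mechanism for the good edges, but the decomposition must be by distance to $\cpTP$, not by index along $\kW$.
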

The proof of \Cref{thrm:ProdWideHatBetaVsBeta} appears in \Cref{sec:thrm:ProdWideHatBetaVsBeta}.

\spreadpoint
\section{The extended influence matrix $\ExtdInfMatrixF$ \LastReviewG{2025-03-05}}
\label{sec:ExtInfluenceMatrixNew}
The reader needs to be familiar with the notion of extensions we introduce in \Cref{sec:TopologicalG,sec:TopologicalMu} to appreciate the content of this section.

Let integer $k\geq 1$. We consider graph $G=(V,E)$ and a Gibbs distribution $\mu_G$, defined as in \eqref{def:GibbDistr}. Furthermore, 
we have $\Lambda\subset V$ and $\tau\in \{\pm 1\}^{\Lambda}$. 

Recall that $\ExtV_{\kk}$ is the set of self-avoiding walks of length $\kk$ in $G$. 
Also, set $\ExtV_{\Lambda, \kk}\subseteq \ExtV_{\kk}$ consists of all walks $\kw_0, \ldots, \kw_{\kk} \in \ExtV_{\kk}$ such that 
$\kw_{\ki}\notin \Lambda$, for $\ki=0,\ldots, \kk$.

The extended influence matrix $\ExtdInfMatrix=\ExtdInfMatrixF$ is an $\spset_{\Lambda, \kk}\times \spset_{\Lambda, \kk}$ 
matrix such that for $\kP, \kQ \in \ExtV_{\Lambda, \kk}$, we define the entry $\ExtdInfMatrix(\kP,\kQ) $ as follows:
We have 
\begin{align}\label{def:EdgeInflMatrixA}
\ExtdInfMatrix(\kP, \kQ) &=0 \enspace, 
\end{align}
if $\kP$ and $\kQ$ are not compatible, i.e., the starting vertices of $\kP$ and $\kQ$ are at distance $< 2\kk$.

Consider $\kP$ and $\kQ$ which are compatible, while let $\kw$ and $\ku$ be their starting vertices, 
respectively.  Letting $\upzeta^{\kP,\kQ}$ be the $\{\kP, \kQ\}$-extension of $\mu^{\Lambda,\tau}_G$
we have
\begin{align}\label{def:EdgeInflMatrix}
\ExtdInfMatrix(\kP, \kQ) &=
\upzeta^{\kP,\kQ}_{\ku}(+1\ |\ (\kw,+1)) - \upzeta^{\kP,\kQ}_{\ku}(+1 \ |\ (\kw,-1)) \enspace,
\end{align}
where $\upzeta^{\kP,\kQ}_{\ku}$ is the marginal of $\upzeta^{\kP,\kQ}$ at vertex $\ku$.

Letting $ \infmatrix^{\kP,\kQ}$ be the influence matrix induced by $\upzeta^{\kP,\kQ}$, 
\eqref{def:EdgeInflMatrix} implies 
\begin{align}\label{def:EdgeInflMatrixEquivalent}
\ExtdInfMatrix(\kP, \kQ) &=\infmatrix^{\kP,\kQ}(\kw,\ku)\enspace, 
\end{align}
where, as mentioned above,   $\kw$ and $\ku$ are the starting vertices of $\kP$ and $\kQ$, respectively.

Concluding the basic definition of the extended influence matrix, it is useful to note that the entries of matrix 
$\ExtdInfMatrix$ do not refer to a single Gibbs distribution.  As $\kP$ and $\kQ$ vary, the corresponding 
$\{\kP,\kQ\}$-extensions of $\mu^{\Lambda,\tau}_G$ are  Gibbs distributions which are different from each 
other. This makes it  challenging to estimate matrix norms for $\ExtdInfMatrix$.

\subsection*{Connections between $\ExtdInfMatrixF$ and $\infmatrix^{\Lambda,\tau}_G$:}
In the following result, we show that
$\ExtdInfMatrixF$ enjoys a natural relation with the (standard) influence matrix $ \infmatrix^{\Lambda,\tau}_G$,
which is induced by the Gibbs distribution $\mu^{\Lambda,\tau}_G$.

For integer $\kk>0$,  let $\infmatrix^{\Lambda,\tau}_{G, <\kk}$ be the $(V\setminus \Lambda)\times (V\setminus \Lambda)$ matrix 
such that, for any $\ku,\kw\in V\setminus \Lambda$, we have
\begin{align}\label{eq:DefOfCISmallerK}
 \infmatrix^{\Lambda,\tau}_{G,<\kk}(\ku,\kw) &= \Ind\{\dist_{G}(\ku,\kw)< \kk\} \cdot \infmatrix^{\Lambda,\tau}_{G}(\ku,\kw) \enspace. 
\end{align}
Let $\VToEdge_{\kk}$ be the $(V\setminus \Lambda)\times \spset_{\Lambda,\kk}$, zero-one matrix such that for 
$\kr \in V\setminus \Lambda$ and $\kP \in \spset_{\Lambda,\kk}$, we have 
\begin{align} \label{def:Vertex2EdgeMatrixEntryBB}
\VToEdge_{\kk}(\kr, \kP) &= \Ind\{ \textrm{$\kr$ is the starting vertex of walk $\kP$} \} \enspace.
\end{align}
Also, let $\EdgeToV_{\kk}$ be the $\ExtV_{\Lambda, \kk}\times (V\setminus \Lambda)$, zero-one matrix, such that for 
 $\kr \in V\setminus \Lambda$ and $\kP \in \spset_{\Lambda,\kk}$, we have
\begin{align} \label{def:Edge2VertexMatrixEntryBB}
\EdgeToV_{\kk} (\kP, \kr) &= \Ind\{ \textrm{$\kr$ is the last vertex of walk $\kP$}\} \enspace.
\end{align}

\begin{theorem}\label{thrm:InflVsExtInfl}
Let $\maxDeg> 1$ and $\kk> 0$, let $\beta,\gamma, \lambda, b \in \mathbb{R}$ be such that $\gamma >0$,
$0\leq \beta\leq \gamma, \lambda>0$ and $b>0$. 

Let $G=(V,E)$ be of maximum degree $\maxDeg$. Consider $\mu_G$ the Gibbs distribution on $G$ 
specified by the parameters $(\beta, \gamma, \lambda)$, 
while assume that $\mu_G$ is $b$-marginally
bounded.

There is a constant $\widehat{\kC}>0$ such that, for any $\Lambda\subset V$ and $\tau\in \{\pm 1\}^{\Lambda}$, 
there is an $\ExtV_{\Lambda,k}\times \ExtV_{\Lambda, k }$ matrix $\SymWeightM$ over $\mathbb{R}$, 
for which we have
\begin{align}\label{eq:thrm:InflVsExtInfl}
 \infmatrix^{\Lambda,\tau}_G& =\VToEdge_{\kk} \cdot \left( \ExtdInfMatrixF\circ \SymWeightM \right )\cdot \EdgeToV_{\kk}
+ \infmatrix^{\Lambda,\tau}_{G, <2\kk} \enspace, 
\end{align}
while for any $\kP,\kQ\in \ExtV_{\Lambda,\kk}$ we also have $\abs{\SymWeightM(\kP,\kQ)}  \leq \widehat{\kC}$.

Recall that $\ExtdInfMatrixF\circ \SymWeightM$ is the Hadamard product of the two matrices. 
Matrix $\infmatrix^{\Lambda,\tau}_{G, <2\kk}$ is defined in \eqref{eq:DefOfCISmallerK}. 
\end{theorem}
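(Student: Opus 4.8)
The plan is to fix $\Lambda\subset V$ and $\tau\in\{\pm1\}^{\Lambda}$, construct $\SymWeightM$ entry by entry, and verify \eqref{eq:thrm:InflVsExtInfl} one entry $(\kw,\ku)$ at a time, splitting on $\dist_G(\kw,\ku)$. If $\dist_G(\kw,\ku)<2\kk$ then $\infmatrix^{\Lambda,\tau}_{G,<2\kk}(\kw,\ku)=\infmatrix^{\Lambda,\tau}_{G}(\kw,\ku)$ by \eqref{eq:DefOfCISmallerK}, so it suffices to see that the product term vanishes; and indeed every entry $\ExtdInfMatrix(\kP,\kQ)$ that enters $\big(\VToEdge_{\kk}(\ExtdInfMatrixF\circ\SymWeightM)\EdgeToV_{\kk}\big)(\kw,\ku)$ involves a walk $\kP$ from $\kw$ and a walk $\kQ$ attached to $\ku$ with $\dist_G(\kw,\ku)<2\kk$, so $\kP,\kQ$ are incompatible and $\ExtdInfMatrix(\kP,\kQ)=0$ by \eqref{def:EdgeInflMatrixA}. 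So the substance is the regime $\dist_G(\kw,\ku)\geq2\kk$, where $\infmatrix^{\Lambda,\tau}_{G,<2\kk}(\kw,\ku)=0$ and every root-to-$\cp(\ku)$ path in $\Tw=\Tsaw(G,\kw)$ has length at least $2\kk$.

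In that regime I would start from $\infmatrix^{\Lambda,\tau}_{G}(\kw,\ku)=\sum_{\kR}\prod_{\ke\in\kR}\infweight(\ke)$ (\Cref{prop:Inf2TreeRedaux}) and group the paths $\kR$ by their first $\kk$ edges and their last $\kk$ edges. The first $\kk$ edges spell a self-avoiding walk $\kP$ of length $\kk$ from $\kw$ that avoids $\Lambda$ — no edge weight may vanish on a contributing path — hence $\kP\in\ExtV_{\Lambda,\kk}$; the last $\kk$ edges, read from the copy of $\ku$ backwards, spell $\kQ\in\ExtV_{\Lambda,\kk}$ from $\ku$, and since $\dist_G(\kw,\ku)\geq2\kk$ the pair $(\kP,\kQ)$ is compatible. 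For a fixed compatible $(\kP,\kQ)$ the corresponding subsum factors into (the first $\kk$ edge-weights) $\times$ (the last $\kk$ edge-weights) $\times$ (a ``middle'' subsum supported below the copy of the last vertex of $\kP$).

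The core step converts that subsum into $\SymWeightM$ times an entry of $\ExtdInfMatrixF$, by two symmetric passes through the machinery of \Cref{sec:PropsOfExtensions}. First, \Cref{lemma:TsawPVsTsaw} identifies the subtree $\STw_{\kP}\subseteq\Tw$ with $\Tsaw(\gext{G}{\kP},\kw)$ and \Cref{lemma:GibbsWeightsTpVsP} gives agreement of the $\mu^{\Lambda,\tau}_G$-weights with the $\kP$-extension weights past depth $\kk$; summing the matching tails turns the middle subsum into the influence matrix $\infmatrix^{M,\sigma}_{\gext{G}{\kP}}(\kw,\ku)$ of the $\kP$-extension $\mu^{M,\sigma}_{\gext{G}{\kP}}$, up to a ratio of first-$\kk$ edge-weights depending on $\kP$ only. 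Next, re-root via \Cref{claim:InfSymmetrisation} — this introduces the factor $m^{(\kP)}(\ku)/m^{(\kP)}(\kw)$, where $m^{(\kP)}(\kv)$ denotes the product of the two marginals of the $\kP$-extension at $\kv$ — and apply \Cref{lemma:TsawPVsTsaw},\Cref{lemma:GibbsWeightsTpVsP} a second time, now to the graph $\gext{G}{\kP}$ at the vertex $\ku$ with the walk $\kQ$ (whose vicinity is unmodified since $\dist_G(\kw,\ku)\geq2\kk$), to reach $\infmatrix^{\kP,\kQ}(\ku,\kw)=\ExtdInfMatrix(\kQ,\kP)$ by \eqref{def:EdgeInflMatrixEquivalent}. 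Collecting the two marginal ratios and the two first/last-$\kk$ edge-weight ratios into a single coefficient $\SymWeightM(\kP,\kQ)$ (declared $0$ whenever the matching entry of $\ExtdInfMatrixF$ is $0$) and summing over $(\kP,\kQ)$ reproduces $\VToEdge_{\kk}(\ExtdInfMatrixF\circ\SymWeightM)\EdgeToV_{\kk}$ — after reconciling the ordered-pair conventions of $\VToEdge_{\kk},\EdgeToV_{\kk}$ with the index order of $\ExtdInfMatrixF$ (which may call for replacing some walks by their reversals and one further application of \Cref{claim:InfSymmetrisation}). Note $\SymWeightM(\kP,\kQ)$ depends on $(\kP,\kQ)$ alone, since $\kw,\ku$ are their designated endpoints.

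Finally $\SymWeightM$ is uniformly bounded. Each marginal-ratio factor lies in $[\,4(b')^{2},(4(b')^{2})^{-1}\,]$ for a constant $b'=b'(\beta,\gamma,\lambda,\maxDeg)$: extensions never raise a degree, so $\gext{G}{\kP}$ and $\gext{G}{\kP,\kQ}$ have maximum degree $\leq\maxDeg$ and carry a Gibbs distribution with the same parameters $(\beta,\gamma,\lambda)$, whose every in-support conditional marginal is at least such a $b'$ (so $b$-marginal boundedness passes to the extensions). The edge-weight ratios are, in absolute value, a product of at most $\kk$ single-edge influences over a product of at most $\kk$ single-edge influences; the numerator is at most $1$ since $\abs{\dlogtrecur(x)}=\frac{|1-\beta\gamma|e^{x}}{(\beta e^{x}+1)(e^{x}+\gamma)}\leq\frac{|\sqrt{\beta\gamma}-1|}{\sqrt{\beta\gamma}+1}<1$ by \eqref{eq:DerivOfLogRatio} and AM--GM. \textbf{The main obstacle is the denominator.} For soft constraints ($\beta>0$, e.g.\ the Ising case) single-edge influences are uniformly bounded away from $0$ and we are done; for hard constraints ($\beta=0$) an influence along the copy of $\kP$ or $\kQ$ in an extended tree can degenerate to $0$, and one must check that exactly then $\ExtdInfMatrix(\kP,\kQ)$ — and the matching subsum of $\Tw$-path weights — also vanishes, so that setting $\SymWeightM(\kP,\kQ)=0$ is consistent. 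This is precisely what the compatibility of the extension pinnings with the $\Tsaw$-pinnings in \Cref{lemma:GibbsWeightsTpVsP} is designed to deliver, together with the fact that a contributing path cannot pass through a forced vertex. The constant $\widehat{\kC}$ is then assembled from $b'$, $\maxDeg$, $\kk$ and a uniform lower bound on the surviving single-edge influences.
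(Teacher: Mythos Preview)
Your proposal is correct and follows essentially the same route as the paper. The paper organizes the argument through an intermediate matrix $\SemExtdInfMatrix^{\Lambda,\tau}_{G,\kk}$ (the ``$\kP$-only'' extended influence) and proves two separate lemmas: one passing from $\infmatrix^{\Lambda,\tau}_G$ to $\SemExtdInfMatrix$ via a first weight matrix $\SymWeightMA$ (your first pass, using \Cref{lemma:TsawPVsTsaw} and \Cref{lemma:GibbsWeightsTpVsP} at $\kw$ with $\kP$), and a second passing from $\SemExtdInfMatrix$ to $\ExtdInfMatrixF$ via $\SymWeightMB$ (your re-rooting by \Cref{claim:InfSymmetrisation}, second pass at $\ku$ with $\kQ$, and final symmetrisation). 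Your single coefficient $\SymWeightM(\kP,\kQ)$ is exactly the paper's product of these factors, and your treatment of the degenerate $\beta=0$ case---setting $\SymWeightM(\kP,\kQ)=0$ whenever a first/last-$\kk$ edge-weight vanishes, because the corresponding $\Tsaw$-subsum and $\ExtdInfMatrix(\kP,\kQ)$ then both vanish---matches the paper's handling via the indicator in the denominator.
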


The proof of \Cref{{thrm:InflVsExtInfl}} appears in \Cref{sec:thrm:InflVsExtInfl}.

\begin{proposition}\label{proposition:CINorm2VsCLNorm2}
Let $\maxDeg> 1$ and $k> 0$, let $\beta,\gamma, \lambda, b \in \mathbb{R}$ be such that $\gamma >0$,
$0\leq \beta\leq \gamma, \lambda>0$ and $b>0$. 

Let $G=(V,E)$ be of maximum degree $\maxDeg$. Consider $\mu_G$ the Gibbs distribution on $G$ 
specified by the parameters $(\beta, \gamma, \lambda)$, 
while assume that $\mu_G$ is $b$-marginally
bounded.

There are constants $C_1, C_2>0$ such that, for any $\Lambda\subset V$ and any $\tau\in \{\pm 1\}^{\Lambda}$, 
we have
\begin{align}\nonumber
\norm{ \infmatrix^{\Lambda,\tau}_G}{2}  \leq C_1+ C_2 \cdot \norm{ \abs{\ExtdInfMatrixF}}{2} \enspace.
\end{align}
\end{proposition}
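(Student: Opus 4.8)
The plan is to derive the bound directly from the structural decomposition of \Cref{thrm:InflVsExtInfl} and then estimate the operator $2$-norm of each piece, using submultiplicativity of the spectral norm together with the entrywise control on the auxiliary matrix $\SymWeightM$. Concretely, first I would invoke \Cref{thrm:InflVsExtInfl} to write
\[
\infmatrix^{\Lambda,\tau}_G \ = \ \VToEdge_{\kk}\cdot\bigl(\ExtdInfMatrixF\circ\SymWeightM\bigr)\cdot\EdgeToV_{\kk}\ +\ \infmatrix^{\Lambda,\tau}_{G, <2\kk},
\]
where $\abs{\SymWeightM(\kP,\kQ)}\le\widehat{\kC}$ for all $\kP,\kQ\in\ExtV_{\Lambda,\kk}$, and then by the triangle inequality it suffices to bound $\norm{\VToEdge_{\kk}\cdot(\ExtdInfMatrixF\circ\SymWeightM)\cdot\EdgeToV_{\kk}}{2}$ and $\norm{\infmatrix^{\Lambda,\tau}_{G, <2\kk}}{2}$ separately.

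For the first term I would bound the three factors. The $0/1$ bookkeeping matrices have bounded norms: each column of $\VToEdge_{\kk}$ has exactly one nonzero entry (a walk has a unique starting vertex), so $\norm{\VToEdge_{\kk}}{1}=1$, and each of its rows has at most $\maxDeg(\maxDeg-1)^{\kk-1}\le\maxDeg^{\kk}$ nonzero entries (the number of self-avoiding walks of length $\kk$ from a fixed vertex), so $\norm{\VToEdge_{\kk}}{\infty}\le\maxDeg^{\kk}$; the interpolation inequality $\norm{M}{2}\le\sqrt{\norm{M}{1}\norm{M}{\infty}}$ then gives $\norm{\VToEdge_{\kk}}{2}\le\maxDeg^{\kk/2}$, and by the symmetric argument $\norm{\EdgeToV_{\kk}}{2}\le\maxDeg^{\kk/2}$. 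For the middle factor, from $\abs{(\ExtdInfMatrixF\circ\SymWeightM)(\kP,\kQ)}\le\widehat{\kC}\,\abs{\ExtdInfMatrixF(\kP,\kQ)}$ and the elementary facts that $\norm{A}{2}\le\norm{\abs{A}}{2}$ for any matrix $A$ and $\norm{B}{2}\le\norm{C}{2}$ whenever $0\le B\le C$ entrywise (the latter since then $B^{\mathsf T}B\le C^{\mathsf T}C$ entrywise and Perron--Frobenius monotonicity applies), one obtains $\norm{\ExtdInfMatrixF\circ\SymWeightM}{2}\le\widehat{\kC}\,\norm{\abs{\ExtdInfMatrixF}}{2}$. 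Multiplying the three estimates yields $\norm{\VToEdge_{\kk}\cdot(\ExtdInfMatrixF\circ\SymWeightM)\cdot\EdgeToV_{\kk}}{2}\le\widehat{\kC}\,\maxDeg^{\kk}\,\norm{\abs{\ExtdInfMatrixF}}{2}$, so one may take $C_2=\widehat{\kC}\,\maxDeg^{\kk}$.

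For the second term I would argue that $\infmatrix^{\Lambda,\tau}_{G, <2\kk}$ has spectral norm bounded by a graph-independent constant: each entry of $\infmatrix^{\Lambda,\tau}_G$ is a difference of two marginal probabilities, hence at most $1$ in absolute value, and $\infmatrix^{\Lambda,\tau}_{G, <2\kk}(\ku,\kw)$ vanishes unless $\dist_G(\ku,\kw)<2\kk$, so each row and each column has at most $\maxDeg^{2\kk}$ nonzero entries (the number of vertices within distance $2\kk-1$ of a fixed vertex). Applying $\norm{M}{2}\le\sqrt{\norm{M}{1}\norm{M}{\infty}}$ once more gives $\norm{\infmatrix^{\Lambda,\tau}_{G, <2\kk}}{2}\le\maxDeg^{2\kk}=:C_1$, a constant depending only on $\maxDeg$ and $\kk$. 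Combining the two bounds produces $\norm{\infmatrix^{\Lambda,\tau}_G}{2}\le C_1+C_2\,\norm{\abs{\ExtdInfMatrixF}}{2}$ with $C_1,C_2$ uniform over $\Lambda$ and $\tau$, as required. The main obstacle is not in this argument at all: essentially all the content is packed into \Cref{thrm:InflVsExtInfl}, whose proof supplies the decomposition and the uniform bound $\widehat{\kC}$ on the symmetrising weights; given that, the remaining steps are routine norm manipulations, the only mild subtlety being the monotonicity of the spectral norm under entrywise absolute value and under entrywise domination by a nonnegative matrix.
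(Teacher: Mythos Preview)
Your argument is correct and essentially identical to the paper's: both invoke \Cref{thrm:InflVsExtInfl}, apply the triangle inequality and submultiplicativity, bound $\norm{\ExtdInfMatrixF\circ\SymWeightM}{2}\le\widehat{\kC}\,\norm{\abs{\ExtdInfMatrixF}}{2}$ via entrywise domination, and control $\norm{\VToEdge_{\kk}}{2},\norm{\EdgeToV_{\kk}}{2},\norm{\infmatrix^{\Lambda,\tau}_{G,<2\kk}}{2}$ by elementary degree-counting. The only cosmetic difference is that the paper bounds $\norm{\VToEdge_{\kk}}{2}$ by computing $\VToEdge_{\kk}\MTR{\VToEdge}_{\kk}$ directly (it is diagonal) rather than via the $\ell_1$--$\ell_\infty$ interpolation inequality, and similarly for $\norm{\infmatrix^{\Lambda,\tau}_{G,<2\kk}}{2}$ uses symmetry to pass to $\norm{\cdot}{\infty}$; the resulting constants differ only marginally.
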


Recall that, for matrix $\ExtdInfMatrixF$, we let $\abs{\ExtdInfMatrixF}$ denote the matrix 
having entries $\abs{\ExtdInfMatrixF(\kP,\kQ)}$.
The proof of \Cref{proposition:CINorm2VsCLNorm2} appears in \Cref{sec:proposition:CINorm2VsCLNorm2}.


\spreadpoint

\section{Proof of \Cref{thrm:NonBacktrackingInfinityMixing} - SI with $\delta$-contraction \& $\NBMatrix$ \LastReviewG{2025-03-12}}
\label{sec:thrm:NonBacktrackingInfinityMixing}

In this section, we assume that the reader is familiar with the notion of extended influence matrix $\ExtdInfMatrixF$
that was introduced in \Cref{sec:ExtInfluenceMatrixNew}.

We prove \Cref{thrm:NonBacktrackingInfinityMixing} by using the following result. 

\begin{theorem}\label{thrm:L2InflRedux2KNBTM}
Let $\delta, b \in \mathbb{R}_{>0}$ and integers $\maxDeg, \kk\geq 1$. 
Also, let $\beta,\gamma, \lambda\in \mathbb{R}$ be such that $0\leq \beta\leq \gamma$, 
$\gamma >0$ and $\lambda>0$. 

Consider graph $G=(V,E)$ of maximum degree $\maxDeg$ while 
assume that $\mu_G$, the Gibbs distribution on $G$ specified by the parameters $(\beta, \gamma, \lambda)$,
is $b$-marginally bounded.

Suppose  the set of functions $\{ \logtrecur_d\}_{1\leq d< \maxDeg}$ 
specified by $(\beta,\gamma,\lambda)$ exhibits $\delta$-contraction.
There are constants $C_1,C_2>0$ such that, for any $\Lambda\subset V$ and $\tau\in \{\pm 1\}^{\Lambda}$,
 the influence matrix $\infmatrix^{\Lambda,\tau}_{G}$, induced by $\mu^{\Lambda,\tau}_G$, satisfies 
\begin{align}\nonumber
 \norm{ \infmatrix^{\Lambda,\tau}_{G}}{2}  &\leq 
 C_1+C_2 \cdot \sum\nolimits_{\kell\geq \kk}  \delta^{\kell} \cdot \norm{ ( \NBMatrix)^{\kell}}{2}  \enspace .
\end{align}
\end{theorem}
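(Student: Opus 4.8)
The plan is to reduce the claim, via \Cref{proposition:CINorm2VsCLNorm2}, to an entrywise estimate on the extended influence matrix: since that proposition already gives $\norm{ \infmatrix^{\Lambda,\tau}_{G}}{2}\le C_1+C_2\norm{\abs{\ExtdInfMatrixF}}{2}$ (with constants depending only on $\maxDeg,\kk,b$), it suffices to bound $\norm{\abs{\ExtdInfMatrixF}}{2}$ by $O\!\big(\sum_{\kell\ge\kk}\delta^{\kell}\norm{(\NBMatrix)^{\kell}}{2}\big)$. I would fix a pinning $(\Lambda,\tau)$ throughout and work with a single matrix $\ExtdInfMatrix=\ExtdInfMatrixF$.

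The first step is to expand each entry $\ExtdInfMatrix(\kP,\kQ)$ via the tree of self-avoiding walks. Recall $\ExtdInfMatrix(\kP,\kQ)=\infmatrix^{\kP,\kQ}(\kw,\ku)$ for compatible $\kP,\kQ$ with starting vertices $\kw,\ku$, where $\infmatrix^{\kP,\kQ}$ is the influence matrix of the $\{\kP,\kQ\}$-extension $\upzeta^{\kP,\kQ}$ of $\mu^{\Lambda,\tau}_G$. Since $\upzeta^{\kP,\kQ}$ carries the same parameters $(\beta,\gamma,\lambda)$ as $\mu$, the $\delta$-contraction hypothesis (via the remark after \Cref{def:HContraction}) forces $\abs{\infweight(\ke)}\le\delta$ for every edge $\ke$ of $\Tsaw(\gext{G}{\kP,\kQ},\kw)$, and \eqref{def:OfInfluenceWeights} makes every edge incident to a pinned vertex — a vertex of $\Lambda$, a split-vertex, or a ``type-2'' copy — have weight $0$. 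Applying \Cref{prop:Inf2TreeRedaux} to $\upzeta^{\kP,\kQ}$ then gives
\begin{align}\nonumber
\abs{\ExtdInfMatrix(\kP,\kQ)}\ \le\ \sum\nolimits_{L\ge 0} W_L(\kP,\kQ)\cdot\delta^{L}\enspace,
\end{align}
where $W_L(\kP,\kQ)$ counts the nonzero-weight paths of length $L$ from the root of $\Tsaw(\gext{G}{\kP,\kQ},\kw)$ to a copy of $\ku$.

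The heart of the argument is a structural description of these paths, which I would establish by unwinding the definition of $\gext{G}{\kP,\kQ}$ in \Cref{sec:TopologicalG,sec:TopologicalMu}. Near $\kw$ the $\kP$-extension deletes every edge out of the walk $\kP$ and routes each deleted neighbour to a pinned split-vertex; since the split-vertices have weight-$0$ incident edges, a nonzero-weight path out of the root is \emph{forced} to traverse all of $\kP$ first, and symmetrically it must arrive at $\ku$ along $\kQ^{-1}$ (compatibility $\dist_G(\kw,\ku)\ge 2\kk$ guarantees the two local surgeries do not interfere). As all pinned and split vertices are avoided, such a path projects to a self-avoiding walk $\kR=\kv_0,\dots,\kv_L$ in $G$ with $\kv_0=\kw$, $\kv_L=\ku$, first $\kk+1$ vertices equal to $\kP$ and last $\kk+1$ vertices equal to $\kQ^{-1}$; hence $L\ge\dist_G(\kw,\ku)\ge 2\kk$. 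Now I would send $\kR$ to its sliding windows $W_i=(\kv_i,\dots,\kv_{i+\kk})\in\ExtV_{\kk}$ for $0\le i\le L-\kk$: each window is self-avoiding, $W_0=\kP$, $W_{L-\kk}=\kQ^{-1}$, and self-avoidance of the $(\kk+1)$-vertex sub-walk $\kv_i,\dots,\kv_{i+\kk+1}$ is exactly the condition in \eqref{eq:DefOfKNBMatrix} ensuring $\NBMatrix(W_i,W_{i+1})=1$; moreover $\kR$ is recovered from its window sequence. Thus $(W_0,\dots,W_{L-\kk})$ ranges injectively over the $\kk$-non-backtracking walks of length $L-\kk$ from $\kP$ to $\kQ^{-1}$, so $W_L(\kP,\kQ)\le \NBMatrix^{L-\kk}(\kP,\kQ^{-1})$, and $W_L(\kP,\kQ)=0$ for $L<2\kk$. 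Substituting $\kell=L-\kk$,
\begin{align}\nonumber
\abs{\ExtdInfMatrix(\kP,\kQ)}\ \le\ \delta^{\kk}\sum\nolimits_{\kell\ge\kk}\delta^{\kell}\,\NBMatrix^{\kell}(\kP,\kQ^{-1})\enspace.
\end{align}

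To finish, note that the matrix with $(\kP,\kQ)$-entry $\NBMatrix^{\kell}(\kP,\kQ^{-1})$ is precisely $\NBMatrix^{\kell}\cdot\Invol$, which is non-negative and, by \eqref{eq:DefPTInvarianceFormalB}, satisfies $\norm{\NBMatrix^{\kell}\cdot\Invol}{2}=\norm{\NBMatrix^{\kell}}{2}$. Since $\abs{\ExtdInfMatrix}$ is dominated entrywise by the non-negative matrix $\delta^{\kk}\sum_{\kell\ge\kk}\delta^{\kell}\NBMatrix^{\kell}\Invol$, the standard monotonicity of the operator norm for non-negative matrices (in the spirit of \Cref{lemma:MonotoneVsSRad}) together with the triangle inequality give $\norm{\abs{\ExtdInfMatrix}}{2}\le \delta^{\kk}\sum_{\kell\ge\kk}\delta^{\kell}\norm{\NBMatrix^{\kell}}{2}$, and plugging this into \Cref{proposition:CINorm2VsCLNorm2} (letting the constants absorb the harmless factor $\delta^{\kk}$) yields the stated bound. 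The main obstacle — and the only step needing real care — is the structural claim about $\Tsaw(\gext{G}{\kP,\kQ},\kw)$: verifying rigorously that the split-vertex pinnings force the prefix $\kP$ and suffix $\kQ^{-1}$, that every other branch of the tree contributes weight $0$, and that the sliding-window map is a bijection onto the correct set of $\kk$-non-backtracking walks. The norm-monotonicity and the window bookkeeping are routine once that description is in place.
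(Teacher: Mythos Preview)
Your proposal is correct and follows essentially the same approach as the paper: the paper likewise reduces via \Cref{proposition:CINorm2VsCLNorm2} to bounding $\norm{\abs{\ExtdInfMatrixF}}{2}$, uses $\delta$-contraction in $\Tsaw(\gext{G}{\kP,\kQ},\kw)$ to get a path-count bound, observes that nonzero-weight paths are self-avoiding walks in $G$ with prefix $\kP$ and suffix $\kQ^{-1}$ (hence $\kk$-non-backtracking walks from $\kP$ to $\kQ^{-1}$), and finishes with the involution $\Invol$ and norm monotonicity/Cauchy interlacing. The only cosmetic difference is that the paper packages the entrywise bound into an auxiliary matrix $\infmatrixB$ (splitting your argument into \Cref{lemma:eq:CLVsCJEntrywise} and \Cref{prop:CJVskNBM}), whereas you carry it out in one pass.
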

The proof of \Cref{thrm:L2InflRedux2KNBTM} appears in \Cref{sec:thrm:L2InflRedux2KNBTM}.

\begin{proof}[Proof of \Cref{thrm:NonBacktrackingInfinityMixing}]
\Cref{thrm:L2InflRedux2KNBTM} implies that for  $\delta =\frac{1-\varepsilon}{\SingBound}$,  there are constants $C_1,C_2>0$ such that
\begin{align}\label{eq:ResultFromthrm:L2InflRedux2KNBTM}
 \norm{ \infmatrix^{\Lambda,\tau}_{G}}{2}  &\leq C_1+C_2 \cdot \sum\nolimits_{\kell \geq \kk} 
\delta^{\kell} \cdot \norm{ (\NBMatrix)^{\kell}}{2} 
 \enspace,
\end{align}
where recall that $\SingBound=\nnorm{ (\NBMatrix)^N}{1/N}{2}$. 

 \Cref{lemma:SingSequenConv} implies that there is a bounded number $\ell_0=\ell_0(N,\varepsilon, \maxDeg)>k$ such that, 
for any $\kell \geq \ell_0$, we have 
\begin{align}\label{eq:KVsHigherSingVals}
\nnorm{ (\NBMatrix)^{\kell}}{1/\kell}{2}
& \leq   (1+{\varepsilon}/{2}) \SingBound\enspace.
\end{align}
From \eqref{eq:ResultFromthrm:L2InflRedux2KNBTM}, we  have
\begin{align}\label{eq:Base4thrm:NonBacktrackingInfinityMixing}
 \norm{ \infmatrix^{\Lambda,\tau}_{G}}{2}  &\leq C_1+C_2 \cdot \sum\nolimits_{\kk \leq \kell< \ell_0} 
\delta^{\kell} \cdot \norm{ ( \NBMatrix)^{\kell}}{2} +
C_2 \cdot \sum\nolimits_{\kell \geq \ell_0} 
\delta^{\kell} \cdot \norm{( \NBMatrix)^{\kell}}{2}
 \enspace. 
\end{align}
Since $\ell_0$ and $C_1,C_2$ are bounded numbers, there is ${C}_0>1$, which is a bounded number, too, such that
\begin{align}\label{eq:Sum4BelowL0}
C_1+C_2 \cdot \sum\nolimits_{\kk \leq \kell < \ell_0} 
\delta^{\kell} \cdot \norm{ ( \NBMatrix)^{\kell}}{2} \leq {C}_0 \enspace.
\end{align}
For the above, we use the observation that, for any $0<\kell<\ell_0$, we have that 
$\norm{( \NBMatrix)^{\kell}}{2}  \leq (\maxDeg-1)^{\kell}$, while
$\maxDeg$ is also a bounded number.
Furthermore, we have 
\begin{align}
C_2 \cdot \sum\nolimits_{\kell \geq \ell_0} \delta^{\kell} \cdot \norm{( \NBMatrix)^{\kell}}{2} 
& \leq  C_2 \cdot \sum\nolimits_{\kell \geq \ell_0} 
\delta^{\kell} \cdot \left((1+\varepsilon/2)\SingBound\right)^{\kell} & \mbox{[from \eqref{eq:KVsHigherSingVals}]} \nonumber \\
&\leq C_2 \cdot \sum\nolimits_{\kell \geq \ell_0} 
\left( 1-\varepsilon \right)^{\kell} \cdot \left(1+\varepsilon/2\right )^{\kell} & \mbox{[since $\delta=\frac{1-\varepsilon}{\SingBound}$]} \nonumber\\
&\leq 2C_2 \cdot \varepsilon^{-1} \enspace. \label{eq:Sum4AboveL0}
\end{align}

\noindent
\Cref{thrm:NonBacktrackingInfinityMixing} follows by plugging \eqref{eq:Sum4BelowL0} and \eqref{eq:Sum4AboveL0} into
 \eqref{eq:Base4thrm:NonBacktrackingInfinityMixing} and setting $\widehat{C}=2C_2+{C}_0$.
\end{proof}

\subsection{Proof of \Cref{thrm:L2InflRedux2KNBTM}}\label{sec:thrm:L2InflRedux2KNBTM}
Recall that set $\ExtV_{\Lambda,\kk}$ consists of the self-avoiding walks of length $\kk$ in $G$
which do not intersect with set $\Lambda$.

Let $\infmatrixB$ be the $\ExtV_{\Lambda,k}\times \ExtV_{\Lambda,k}$ matrix 
such that, for any $\kP,\kQ\in \ExtV_{\Lambda, \kk}$, the entry $\infmatrixB(\kP,\kQ)$ is defined as follows:
Let $\kw$ and $\ku$ be the starting vertices of $\kP$ and $\kQ$, respectively.
If $\kP$ and $\kQ$ are not compatible, i.e., $\dist_G(\ku,\kw)< 2\kk$, we let $\infmatrixB(\kP,\kQ)=0$.

For $\kP$ and $\kQ$ which are compatible, we consider graph $\gext{G}{\kP,\kQ}$, the $(\kP,\kQ)$-extension of $G$,
and let $T_{\kP,\kQ}=\Tsaw(\gext{G}{\kP,\kQ},\kw)$. We let
\begin{align}\nonumber
\infmatrixB(\kP,\kQ)&= \sum\nolimits_{\kell \geq 2\kk}\delta^{\kell}\times \abs{ \scp(\ku,\kell)} \enspace,
\end{align}
where   we let set $\scp(\ku,\kell)$ consist of all copies of vertex  $\ku$ in $T_{\kP,\kQ}$ which correspond to 
self-avoiding walks of length  $\kell$ in $\gext{G}{\kP,\kQ}$.

\begin{lemma} \label{lemma:eq:CLVsCJEntrywise}
We have $\abs{\ExtdInfMatrixF} \leq \infmatrixB$,  where the inequality is entry-wise. 
\end{lemma}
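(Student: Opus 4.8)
\textbf{Proof plan for \Cref{lemma:eq:CLVsCJEntrywise}.}

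The plan is to compare, entry by entry, the extended influence $\ExtdInfMatrixF(\kP,\kQ)$ against the combinatorial quantity $\infmatrixB(\kP,\kQ)$. First I would dispose of the incompatible case: if $\kP,\kQ$ are not compatible then by \eqref{def:EdgeInflMatrixA} we have $\ExtdInfMatrixF(\kP,\kQ)=0$, and by definition $\infmatrixB(\kP,\kQ)=0$ as well, so the inequality holds trivially. So fix $\kP,\kQ$ compatible, with starting vertices $\kw$ and $\ku$ respectively, and recall from \eqref{def:EdgeInflMatrixEquivalent} that $\ExtdInfMatrixF(\kP,\kQ)=\infmatrix^{\kP,\kQ}(\kw,\ku)$, where $\infmatrix^{\kP,\kQ}$ is the influence matrix of $\upzeta^{\kP,\kQ}$, the $\{\kP,\kQ\}$-extension of $\mu^{\Lambda,\tau}_G$, a Gibbs distribution on $\gext{G}{\kP,\kQ}$.

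The key step is to invoke the $\Tsaw$-construction (\Cref{prop:Inf2TreeRedaux}) applied to $\upzeta^{\kP,\kQ}$ on $\gext{G}{\kP,\kQ}$: letting $T_{\kP,\kQ}=\Tsaw(\gext{G}{\kP,\kQ},\kw)$ with edge weights $\{\infweight_{\kP,\kQ}(\ke)\}$ obtained from \eqref{def:OfInfluenceWeights} for $\upzeta^{\kP,\kQ}$, we get
\begin{align}\nonumber
\infmatrix^{\kP,\kQ}(\kw,\ku) &= \sum\nolimits_{\kell\geq 0}\sum\nolimits_{\kP'\in \cP(\ku,\kell)}\prod\nolimits_{\ke\in \kP'}\infweight_{\kP,\kQ}(\ke)\enspace,
\end{align}
where $\cP(\ku,\kell)$ ranges over root-to-copy-of-$\ku$ paths of length $\kell$ in $T_{\kP,\kQ}$. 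Taking absolute values and using the $\delta$-contraction hypothesis, which gives $\abs{\infweight_{\kP,\kQ}(\ke)}\leq\delta$ for every edge (this is exactly the equivalent form of \Cref{def:HContraction} applied to $\upzeta^{\kP,\kQ}$, whose specification parameters $(\beta,\gamma,\lambda)$ agree with those of $\mu_G$ by definition of extension), we obtain
\begin{align}\nonumber
\abs{\ExtdInfMatrixF(\kP,\kQ)} &\leq \sum\nolimits_{\kell\geq 0} \abs{\cP(\ku,\kell)}\cdot \delta^{\kell}\enspace.
\end{align}
Then I would identify $\cP(\ku,\kell)$ with (a subset of) the copies of $\ku$ at level $\kell$ in $T_{\kP,\kQ}$; since each such path corresponds to a self-avoiding walk of length $\kell$ in $\gext{G}{\kP,\kQ}$ emanating from $\kw$ and ending at $\ku$, we have $\abs{\cP(\ku,\kell)}\leq\abs{\scp(\ku,\kell)}$, where $\scp(\ku,\kell)$ is the set in the definition of $\infmatrixB$. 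Finally, the terms with $\kell<2\kk$ vanish: compatibility means $\dist_{\gext{G}{\kP,\kQ}}(\kw,\ku)\geq 2\kk$ (the extension operation does not decrease distances between $\kw$ and $\ku$), so there are no walks of length $<2\kk$ between them, hence $\abs{\scp(\ku,\kell)}=0$ for $\kell<2\kk$. Summing from $\kell\geq 2\kk$ gives exactly $\infmatrixB(\kP,\kQ)$, completing the bound.

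The main obstacle I anticipate is the bookkeeping around the split-vertices of $\gext{G}{\kP,\kQ}$: one must check that the $\Tsaw$-construction behaves correctly when some vertices of the underlying graph are pinned (split-vertices are in the pinned set $M$), that pinned vertices contribute zero-weight edges (which only helps the inequality, via \eqref{def:OfInfluenceWeights}), and that the copies of $\ku$ counted by $\scp(\ku,\kell)$ — defined directly in terms of self-avoiding walks in $\gext{G}{\kP,\kQ}$ rather than the $\Tsaw$-tree — indeed dominate the copies reached by non-zero-weight tree paths. Since $\ku\notin\Lambda$ and $\ku$ is a starting vertex of $\kQ$ (hence not itself a split-vertex), the copies of $\ku$ in the tree are in natural correspondence with self-avoiding walks from $\kw$ to $\ku$ in $\gext{G}{\kP,\kQ}$, and the $\delta$-contraction bound absorbs everything else; this reduces the obstacle to a routine (if slightly tedious) verification.
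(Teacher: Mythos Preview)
Your proposal is correct and follows essentially the same approach as the paper's proof: both dispose of the incompatible case trivially, then for compatible $\kP,\kQ$ apply \Cref{prop:Inf2TreeRedaux} to $\upzeta^{\kP,\kQ}$ on $T_{\kP,\kQ}=\Tsaw(\gext{G}{\kP,\kQ},\kw)$, bound each path weight by $\delta^\kell$ via $\delta$-contraction, and observe that the surviving paths are counted by $\abs{\scp(\ku,\kell)}$. The paper handles your anticipated bookkeeping obstacle in one sentence, simply noting that copies of $\ku$ not arising from self-avoiding walks carry a pinning and hence contribute zero influence.
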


The proof of \Cref{lemma:eq:CLVsCJEntrywise} appears in \Cref{sec:lemma:eq:CLVsCJEntrywise}.

As per standard notation, $\abs{\ExtdInfMatrixF}$ in \Cref{lemma:eq:CLVsCJEntrywise} corresponds 
to the matrix having entries $\abs{\ExtdInfMatrixF(\kP,\kQ)}$. 
\Cref{proposition:CINorm2VsCLNorm2} implies that there are constants $c_1,c_2>0$ such that
\begin{align}
\norm{ \infmatrix^{\Lambda,\tau}_G}{2}  &\leq c_1+ c_2 \cdot \norm{\abs{\ExtdInfMatrixF}}{2}  
\ \leq \   c_1+ c_2 \cdot \norm{ \infmatrixB}{2}  \enspace. \label{eq:IVsJ2Norm}
\end{align}
The second inequality follows from \Cref{lemma:eq:CLVsCJEntrywise} and the observation that 
$\abs{\ExtdInfMatrixF}$ and $\infmatrixB$ have non-negative entries.

\begin{proposition}\label{prop:CJVskNBM} 
We have 
\begin{align}\label{eq:thrm:DomOfInflB}
\norm{ \infmatrixB}{2}  &\leq 
\sum \nolimits_{\kell\geq \kk} \delta^{\kell+\kk} \cdot \norm{ ( \NBMatrix )^{\kell}}{2}  \enspace.
\end{align}
\end{proposition}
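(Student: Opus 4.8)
The plan is to prove \eqref{eq:thrm:DomOfInflB} in two stages: first an \emph{entrywise} domination of $\infmatrixB$ by a matrix assembled from powers of $\NBMatrix$, and then a passage to $\ell_2$ operator norms using facts already recorded in the excerpt. The heart of the matter is the purely combinatorial claim that, for compatible $\kP,\kQ\in\ExtV_{\Lambda,\kk}$ with starting vertices $\kw,\ku$ and any $\kell\geq 2\kk$,
\[
\abs{\scp(\ku,\kell)} \ \leq\ \NBMatrix^{\kell-\kk}(\kP,\kQ^{-1}),
\]
i.e. the number of copies of $\ku$ in $T_{\kP,\kQ}=\Tsaw(\gext{G}{\kP,\kQ},\kw)$ corresponding to self‑avoiding walks of length $\kell$ in $\gext{G}{\kP,\kQ}$ is at most the number of $\kk$‑non‑backtracking walks of length $\kell-\kk$ from $\kP$ to $\kQ^{-1}$ in $G$.

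\textbf{The combinatorial claim.} Note that $\scp(\ku,\kell)$ is in bijection with the self‑avoiding walks $W$ of length $\kell$ from $\kw$ to $\ku$ in $\gext{G}{\kP,\kQ}$. I would analyse such a $W$ directly from the inductive construction of the extensions in \Cref{sec:TopologicalG}: the $\kP$‑extension prunes the neighbourhood of each of the first $\kk$ vertices of $\kP$ so that, apart from newly inserted split‑vertices (all of degree one), the only remaining edges at those vertices are the edges of $\kP$ itself; symmetrically for the $\kQ$‑extension around the first $\kk$ vertices of $\kQ$. Since a self‑avoiding walk can neither backtrack nor pass \emph{through} a degree‑one vertex, $W$ is forced to start with $\kP$ and to end with $\kQ^{-1}$; writing $\kP=\kx_0,\dots,\kx_\kk$ and $\kQ=\kz_0,\dots,\kz_\kk$, we get $W=\kx_0,\dots,\kx_\kk,\dots,\kz_\kk,\dots,\kz_0$. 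Moreover $W$ visits no split‑vertex at all (a degree‑one vertex can only be the last vertex of a self‑avoiding walk, and $W$ ends at the original vertex $\ku$), so $W$ runs entirely on vertices of $G$; since every edge of $\gext{G}{\kP,\kQ}$ joining two original vertices is an edge of $G$, $W$ is a self‑avoiding walk of length $\kell$ in $G$ from $\kw$ to $\ku$ whose first $\kk+1$ vertices form $\kP$ and whose last $\kk+1$ vertices form $\kQ^{-1}$. A self‑avoiding walk is a fortiori $\kk$‑non‑backtracking, and having $\kell+1$ vertices it is counted by $\NBMatrix^{\kell-\kk}(\kP,\kQ^{-1})$; the assignment $W\mapsto W$ is injective, which gives the claim. (Compatibility $\dist_G(\kw,\ku)\geq 2\kk$ is exactly what makes $\kP$ and $\kQ$ disjoint, so the prefix/suffix decomposition is meaningful, and makes the order of the two extensions immaterial.)

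\textbf{Assembling the bound.} For compatible $\kP,\kQ$, summing the claim over $\kell$ gives
\[
\infmatrixB(\kP,\kQ)=\sum\nolimits_{\kell\geq 2\kk}\delta^{\kell}\,\abs{\scp(\ku,\kell)}\ \leq\ \sum\nolimits_{\kell\geq 2\kk}\delta^{\kell}\,\NBMatrix^{\kell-\kk}(\kP,\kQ^{-1})=\Big(\sum\nolimits_{m\geq \kk}\delta^{m+\kk}\,(\NBMatrix)^{m}\,\Invol\Big)(\kP,\kQ),
\]
using that $(\NBMatrix^{m}\Invol)(\kP,\kQ)=\NBMatrix^{m}(\kP,\kQ^{-1})$; for incompatible $\kP,\kQ$ the left side is $0$ while the right side is non‑negative. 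Extending $\infmatrixB$ by zeros to an $\ExtV_{\kk}\times\ExtV_{\kk}$ matrix (which leaves its $\ell_2$ norm unchanged), we obtain entrywise $0\leq\infmatrixB\leq\sum_{m\geq\kk}\delta^{m+\kk}(\NBMatrix)^{m}\Invol$, the right‑hand side being non‑negative since $(\NBMatrix)^m$ and the permutation matrix $\Invol$ are. Monotonicity of the $\ell_2$ operator norm under entrywise domination of non‑negative matrices, the triangle inequality, and the identity $\norm{(\NBMatrix)^{m}\Invol}{2}=\norm{(\NBMatrix)^{m}}{2}$ recorded in the excerpt (valid since $\Invol$ is orthogonal) then yield $\norm{\infmatrixB}{2}\leq\sum_{m\geq\kk}\delta^{m+\kk}\norm{(\NBMatrix)^{m}}{2}$, which is \eqref{eq:thrm:DomOfInflB}. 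Convergence of the series is not an issue: $G$ is finite, and by \Cref{claim:SigmaLVsPathNumber} together with \Cref{lemma:SingSequenConv} the summands $\delta^{\kell}\abs{\scp(\ku,\kell)}$ decay geometrically.

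\textbf{Main obstacle.} The routine part is the summation and the norm manipulation; the delicate step is the combinatorial claim — specifically, extracting from the inductive definition of $\gext{G}{\kP,\kQ}$ that every self‑avoiding walk from $\kw$ to $\ku$ is forced through $\kP$ at the start and $\kQ^{-1}$ at the end and never meets a split‑vertex, and checking that the compatibility hypothesis is precisely strong enough for this (and for the two extensions to commute). I expect this to be the place where care is needed, even though each individual verification is elementary.
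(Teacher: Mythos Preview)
Your proof is correct and follows essentially the same route as the paper's: the same combinatorial bound $\abs{\scp(\ku,\kell)}\leq(\NBMatrix)^{\kell-\kk}(\kP,\kQ^{-1})$ via the degree-one split-vertex observation, the same entrywise domination, and the same norm manipulations. The only cosmetic difference is in passing from $\ExtV_{\Lambda,\kk}$ to $\ExtV_{\kk}$: the paper keeps the principal submatrix $((\NBMatrix)^{\kell}\cdot\Invol)_{\ExtV_{\Lambda,\kk}}$ and invokes Cauchy's interlacing theorem (using that $(\NBMatrix)^{\kell}\cdot\Invol$ is symmetric), whereas you zero-pad $\infmatrixB$ and use entrywise monotonicity of $\norm{\cdot}{2}$ on non-negative matrices; both are valid and equally short.
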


The proof of \Cref{prop:CJVskNBM} appears in \Cref{sec:prop:CJVskNBM}.

\Cref{thrm:L2InflRedux2KNBTM} follows by plugging \eqref{eq:thrm:DomOfInflB} into
 \eqref{eq:IVsJ2Norm} and setting $C_1=c_1$ and $C_2=c_2\cdot \delta^{\kk}$. 
\hfill $\square$

\subsection{Proof of \Cref{lemma:eq:CLVsCJEntrywise}}\label{sec:lemma:eq:CLVsCJEntrywise}
 
 When there is no danger of confusion, in what follows we abbreviate matrix $\ExtdInfMatrixF$ to $\ExtdInfMatrix$. 
 
 Let $\kP,\kQ \in \ExtV_{\Lambda, \kk}$, while let $\kw$ and $\ku$ be their starting vertices, respectively.
 For non-compatible $\kP, \kQ$, the definitions of matrices $\ExtdInfMatrix$ and
$ \infmatrixB$ imply that $\ExtdInfMatrix(\kP,\kQ)= \infmatrixB(\kP,\kQ)=0$. 
We consider compatible $\kP$ and $\kQ$. It suffices to show that $\abs{\ExtdInfMatrix(\kP,\kQ)} \leq \infmatrixB(\kP,\kQ)$.

We let $\upzeta^{\kP,\kQ}$ be the $\{\kP,\kQ\}$-extension of $\mu^{\Lambda,\tau}_G$,  while let $\infmatrix^{\kP,\kQ}$ be 
the influence matrix induced by $\upzeta^{\kP,\kQ}$.
Recall from \eqref{def:EdgeInflMatrixEquivalent} that 
\begin{align}\label{eq:IPQVsCalLPQ11-3}
\ExtdInfMatrix(\kP, \kQ)=\infmatrix^{\kP,\kQ}(\kw,\ku)\enspace,
\end{align}
where, as mentioned above,   $\kw$ and $\ku$ are the starting vertices of $\kP$ and $\kQ$,  respectively.

Let $T_{\kP,\kQ}=\Tsaw(\gext{G}{\kP,\kQ}, \kw)$. 
Also, let $\{\infweight(\ke)\}$ be the collection of weights over the edges of $T_{\kP,\kQ}$ we obtain as 
described in \eqref{def:OfInfluenceWeights} with respect to the Gibbs distribution $\upzeta^{\kP,\kQ}$. 
We have 
\begin{align}\label{eq:IPQVsEdgeWeights11-3}
\infmatrix^{\kP,\kQ}(\kw,\ku)&=\sum\nolimits_{\kell \geq 2\kk}\sum\nolimits_{\kW\in T_{\kP,\kQ} : \abs{\kW}=\kell} \prod\nolimits_{\ke\in \kW} \infweight(\ke)\enspace, 
\end{align}
where $\kW$ varies over the paths of length $\kell$ in $T_{\kP,\kQ}$ from the root to the copies of vertex $\ku$
corresponding to self-avoiding walks in $\gext{G}{\kP,\kQ}$. Any copies of $\ku$ in $T_{\kP,\kQ}$ that do not correspond
to self-avoiding walk have a pinning. The influence of the root to  vertices with pinning is zero.

The assumption about $\delta$-contraction implies that $\upzeta^{\kP,\kQ}$ also exhibits $\delta$-contraction, i.e., 
 it has the same specifications as $\mu^{\Lambda,\tau}_G$. Hence, we have 
$\abs{ \prod\nolimits_{\ke\in \kW } \infweight(\ke) } \leq \delta^{\kell}$
for each $\kW$ of length $\kell$.
Combining this observation with \eqref{eq:IPQVsCalLPQ11-3} and \eqref{eq:IPQVsEdgeWeights11-3}, we obtain 
\begin{align}\nonumber
\abs{\ExtdInfMatrix(\kP, \kQ)}=\abs{\infmatrix^{\kP,\kQ}(\kw,\ku)}&\leq \sum\nolimits_{\kell\geq 2\kk}\sum\nolimits_{\kW\in T_{\kP,\kQ} : \abs{\kW}=\kell} \delta^{\kell}
=\sum\nolimits_{\kell\geq 2\kk} \delta^{\kell} \cdot \abs{\scp(\ku,\kell)} = \infmatrixB(\kP,\kQ) \enspace.
\end{align}
The above proves that indeed $\abs{\ExtdInfMatrix(\kP,\kQ)} \leq \infmatrixB(\kP,\kQ)$ for $\kP$ and $\kQ$ which are compatible. 

\Cref{lemma:eq:CLVsCJEntrywise} follows.  \hfill $\square$

\subsection{Proof of \Cref{prop:CJVskNBM}}\label{sec:prop:CJVskNBM}
First, we show that the following entry-wise inequality
\begin{align}\label{eq:Final4prop:CJVskNBM}
\infmatrixB \leq \sum\nolimits_{\kell\geq \kk}\delta^{\kell+\kk}\cdot  ( ( \NBMatrix )^{\kell}\cdot \Invol )_{\ExtV_{\Lambda,\kk}}\enspace,
\end{align}
where $\Invol$ is the involution defined in \eqref{def:OfInvolutation}, while $\left( ( \NBMatrix )^{\kell}\cdot \Invol \right)_{\ExtV_{\Lambda,k}}$ is the principal submatrix of $( \NBMatrix )^{\kell}\cdot \Invol$
 induced by the elements in $\ExtV_{\Lambda,k}$.

Consider $\kP,\kQ\in \ExtV_{\Lambda,k}$, which are compatible. Let $\kw$ and $\ku$
be the starting vertices of $\kP$ and $\kQ$, respectively. 
Consider $\gext{G}{\kP,\kQ}$, the $(\kP,\kQ)$-extension of graph $G$ and let $T_{\kP,\kQ}=\Tsaw(\gext{G}{\kP,\kQ},\kw)$.

From the definition of matrix $\infmatrixB$, we have
\begin{align}\label{eq:RedefInfmatrixBP-Q}
\infmatrixB(\kP,\kQ)&=\sum\nolimits_{\kell\geq 2\kk}\delta^{\kell}\cdot \abs{\scp(\ku,\kell)} \enspace,
\end{align}
where recall that $\scp(\ku,\kell)$ is the set of copies of vertex $\ku$ in $T_{\kP,\kQ}$ which correspond to 
self-avoiding walks of length $\kell$ in $\gext{G}{\kP,\kQ}$.  
The vertex $\ku$ is the starting vertex of $\kQ$. Note that the compatibility assumption for $\kP$  and $\kQ$ implies 
$\dist_{G}(\kw,\ku)\geq 2\kk$. Hence the  copies of $\ku$ in $T$ are at distance $\geq 2\kk$ from the root.

For $\kell\geq 2\kk$,  we obtain a bound on $\abs{\scp(\ku,\kell)}$ in terms of the $\kk$-non-backtracking 
matrix $\NBMatrix$. 
W.l.o.g. let $\kP=\kv_0, \ldots, \kv_{\kk}$ and $\kQ=\kx_0,\ldots, \kx_{\kk}$. The cardinality of set $\scp(\ku,\kell)$ is 
equal to the number of self-avoiding walks $\kR=\kz_0,\ldots, \kz_{\kell}$ in $\gext{G}{\kP,\kQ}$, satisfying the following 
condition: for $0\leq i \leq \kk$, we have
\begin{itemize}
\item $\kz_{\ki}$ is a copy of vertex $\kv_{\ki}$ and $\kz_{\kell-{\ki}}$ is a copy of $\kx_{\ki}$. 
\end{itemize}
Recall that the additional vertices that $\gext{G}{\kP,\kQ}$ has compared to $G$ are split-vertices, each of which 
is of degree $1$. No split-vertex can be a part of a self-avoiding walk from $\kw$ to $\ku$.
Hence, every self-avoiding walk $\kR$ in $\gext{G}{\kP,\kQ}$ that satisfies the above condition also appears in $G$.

Furthermore, since every self-avoiding walk of length $\kell$ from $\kw$ to $\ku$ in $G$ can also be considered as a 
$\kk$-non-backtracking walk of length $\kell-\kk$ from $\kP$ to $\kQ^{-1}$, we conclude that 
\begin{align}\nonumber
\abs{\scp(\ku,\kell)} &\leq  ( \NBMatrix )^{\kell-\kk}(\kP,\kQ^{-1}) = ( ( \NBMatrix)^{\kell-\kk} \cdot \Invol ) (\kP,\kQ) \enspace.
\end{align}
Walk $\kQ^{-1}$ corresponds to $\kQ$ reversed, i.e., $\kQ^{-1}=\kx_{\kk},\kx_{\kk-1}, \ldots, \kx_0$.
Plugging the above inequality into \eqref{eq:RedefInfmatrixBP-Q}, we get that \eqref{eq:Final4prop:CJVskNBM} is 
true for compatible  walks $\kP$ and $\kQ$.

For non-compatible $\kP,\kQ$, the definition of $\infmatrixB$ specifies  that $\infmatrixB(\kP,\kQ)=0$. This implies that 
\eqref{eq:Final4prop:CJVskNBM} is true even for non-compatible  $\kP$ and $\kQ$ since we always have
$( ( \NBMatrix)^{\kell-\kk} \cdot \Invol ) (\kP,\kQ)\geq 0$.

All the above establish that \eqref{eq:Final4prop:CJVskNBM} is true. Furthermore, \eqref{eq:Final4prop:CJVskNBM}, 
the fact that both $\infmatrixB$  and $\left( \NBMatrix\right)^{\kell-\kk}\cdot \Invol$ have non-negative entries and the 
triangle inequality imply
\begin{align}\nonumber
\norm{ \infmatrixB}{2} &
\leq \sum\nolimits_{\kell\geq \kk}\delta^{\kell+\kk}\cdot \norm{ (( \NBMatrix)^{\kell} \cdot \Invol )_{\ExtV_{\Lambda, \kk}}}2 
\leq \sum\nolimits_{\kell\geq \kk}\delta^{\kell+\kk}\cdot \norm{ ( \NBMatrix)^{\kell} \cdot \Invol }{2} 
\leq \sum\nolimits_{\kell\geq \kk}\delta^{\kell+\kk}\cdot \norm{ ( \NBMatrix)^{\kell} }{2} \enspace. \nonumber
\end{align}
The second inequality follows from Cauchy’s  interlacing theorem. Note that   matrix $(\NBMatrix)^{\kell} \cdot \Invol$ is symmetric 
due to \eqref{eq:DefPTInvarianceFormalB}.
 The third inequality follows since $\norm{ ( \NBMatrix)^{\kell} \cdot \Invol }{2}\leq \norm{ ( \NBMatrix)^{\kell}}{2}\cdot \norm{ \Invol }{2}$ while, for the involutionary matrix $\Invol$,  
we have $\norm{ \Invol}{2}=1$.

\Cref{prop:CJVskNBM} follows.  \hfill $\square$

\spreadpoint

\section{Proof of \Cref{thrm:NonBacktrackingPotentialSpIn} - SI with Potentials \& $\NBMatrix$ \LastReviewG{2025-04-02}}
\label{sec:thrm:NonBacktrackingPotentialSpIn}

In this section, we assume that the reader is familiar with the notion of the extended influence matrix $\ExtdInfMatrixF$
 introduced in \Cref{sec:ExtInfluenceMatrixNew}.

In what follows, for a matrix $\UpD \in \mathbb{R}^{ N \times N}$, we let $\abs{\UpD}$ denote the matrix having entries $\abs{\UpD_{i,j}}$.

\begin{theorem}\label{thrm:InfNormBoundHConj}
Let $\maxDeg > 1$ and $\kk,N\geq 1$ be integers, $c>0$, $\SingBound> 1$, $\pfs\geq 1$ while 
let $b,\varepsilon\in (0,1)$. Also, let $\beta,\gamma, \lambda\in \mathbb{R}$ be such that $\gamma >0$,
$0\leq \beta\leq \gamma$ and $\lambda>0$. 

Consider graph $G=(V,E)$ of maximum degree $\maxDeg$ such that 
$\nnorm{ (\NBMatrix)^N}{1/N}{2}= \SingBound$.
Assume that $\mu_G$, the Gibbs distribution on $G$ specified by the parameters $(\beta, \gamma, \lambda)$,
is $b$-marginally bounded.
For $\delta= \frac{1-\varepsilon}{\SingBound}$, suppose that there
is a $(\pfs,\delta,c)$-potential function $\potF$ with respect to $(\beta,\gamma,\lambda)$.

There exists a fixed constant $\widehat{c} \geq 1$ such that 
for any $\Lambda\subset V$ and $\tau\in \{\pm 1\}^{\Lambda}$, the extended influence matrix 
$\ExtdInfMatrixF$, that is induced by $\mu^{\Lambda,\tau}_G$, satisfies that
\begin{align}\nonumber
\norm{  \abs{ \ExtdInfMatrixF}}{2\pfs} &\leq \widehat{c}\cdot \sum\nolimits_{\kell\geq \kk} 
\left(
 {\textstyle \left(\frac{1-\varepsilon/2}{\SingBound}\right)^{\kell} } \cdot %
\norm{ ( \NBMatrix)^{\kell}}{2} 
 \right)^{1/\pfs}
\enspace.
\end{align}
\end{theorem}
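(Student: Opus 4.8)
The goal is to bound $\norm{\abs{\ExtdInfMatrixF}}{2\pfs}$ by a weighted sum of $\norm{(\NBMatrix)^{\kell}}{2}^{1/\pfs}$. The strategy mirrors the proof of \Cref{thrm:InflNormBound4GeneralD} (the adjacency-matrix potential bound), but adapted to the extended influence matrix: instead of bounding an $\ell_\infty \to \ell_\infty$ operator norm of a conjugated matrix, we must directly control an $\ell_{2\pfs} \to \ell_{2\pfs}$ norm of $\abs{\ExtdInfMatrixF}$, using the PT-invariance of $\NBMatrix$ (via \eqref{eq:DefPTInvarianceFormalB}) to keep things symmetric. First I would reduce to estimating, for a fixed row $\kP \in \ExtV_{\Lambda,\kk}$ emanating from $\kw$, the quantity $\sum_{\kQ} \abs{\ExtdInfMatrix(\kP,\kQ)}^{?}$ in a way that matches the $\ell_{2\pfs}$ structure. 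The key point is that by \eqref{def:EdgeInflMatrixEquivalent}, $\ExtdInfMatrix(\kP,\kQ) = \infmatrix^{\kP,\kQ}(\kw,\ku)$ is a genuine influence in the extended Gibbs distribution $\upzeta^{\kP,\kQ}$, so it admits a $\Tsaw$-expansion (\Cref{prop:Inf2TreeRedaux}) on $T_{\kP,\kQ} = \Tsaw(\gext{G}{\kP,\kQ},\kw)$ as a weighted sum over paths from the root to copies of $\ku$ that are at distance $\geq 2\kk$ (by compatibility).

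**Key steps.** (1) Fix $\kP$ and set up the single tree $T = \Tsaw(\gext{G}{\kP}, \kw)$; by \Cref{lemma:TsawPVsTsaw} and \Cref{lemma:GibbsWeightsTpVsP}, the trees $T_{\kP,\kQ}$ for various compatible $\kQ$ are all realized as subtrees of $T$ (after appropriate pinning coming from the $\kQ$-extension), and their edge weights agree with those of $T$ at depth $\geq \kk$. Here is where \Cref{thrm:ProdWideHatBetaVsBeta} enters: it lets us replace the $\kQ$-extension weights $\infweight_{\kP}(\ke)$ along a root-to-$\kw$-copy path by the un-extended weights $\infweight(\ke)$ at the cost of a factor $\widehat{\kc}_1(1+\varepsilon/3)^{\kell/\pfs}$ and a truncation of length $\kk$. (2) Introduce the potential function $\potF$ exactly as in \Cref{claim:AddPotentialA}: write each path weight as $\chiofh(\ke_{\kell}) \cdot \frac{\abs{\infweight(\ke_1)}}{\chiofh(\ke_1)} \prod_{\ki=2}^{\kell} \frac{\chiofh(\ke_{\ki-1})}{\chiofh(\ke_{\ki})}\abs{\infweight(\ke_{\ki})}$, and run the tree-recursion $\subcont_{\kv}$ of \eqref{eq:SRecurBaseWeights}–\eqref{eq:SRecurTopWeights}. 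The Contraction property gives $(\subcont_{\kv})^{\pfs} \leq \delta \sum_j (\subcont_{\kv_j})^{\pfs}$, which telescopes down the tree; the Boundedness property absorbs the top- and bottom-edge factors into the constant $c$ (as in \eqref{eq:BeBoundFromPotBoundedness}). (3) The recursion converts the sum-of-path-weights at level $\kell$ into $\delta^{\kell}$ times a count of self-avoiding walks of length $\kell$ from $\kw$ to various endpoints $\ku$, which — because these walks pass through $\kP$ initially and $\kQ$ terminally and have no split-vertex (degree-1) interior — is exactly a count of $\kk$-non-backtracking walks of length $\kell-\kk$ from $\kP$ to $\kQ^{-1}$, i.e. $(\NBMatrix)^{\kell-\kk}(\kP,\kQ^{-1}) = ((\NBMatrix)^{\kell-\kk}\cdot\Invol)(\kP,\kQ)$ — this is precisely the combinatorial identity already used in \Cref{prop:CJVskNBM}. (4) Assemble: entry-wise, $\abs{\ExtdInfMatrixF} \leq C \sum_{\kell \geq \kk} (\text{const})^{\kell/\pfs} \delta^{\kell}\, ((\NBMatrix)^{\kell}\cdot \Invol)_{\ExtV_{\Lambda,\kk}}$ in a suitably $\pfs$-rooted sense; then take the $\ell_{2\pfs}$ operator norm, use the triangle inequality, Cauchy interlacing to pass from the principal submatrix to the full matrix, $\norm{\Invol}{2}=1$ and the symmetry of $(\NBMatrix)^{\kell}\cdot\Invol$ from \eqref{eq:DefPTInvarianceFormalB} to get $\norm{((\NBMatrix)^{\kell}\cdot\Invol)_{\ExtV_{\Lambda,\kk}}}{2} \leq \norm{(\NBMatrix)^{\kell}}{2}$, and collapse $(1-\varepsilon)(1+\varepsilon/3) \leq 1-\varepsilon/2$ to get the stated $\left(\frac{1-\varepsilon/2}{\SingBound}\right)^{\kell}$.

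**Main obstacle.** The delicate point is Step (1)/(3) combined: making rigorous that the $\ell_{2\pfs}$ norm of $\abs{\ExtdInfMatrixF}$ is controlled by the right object. The issue is that the potential-function recursion naturally produces an $\ell_{\pfs}$-type bound on row sums (as in \Cref{thrm:InflNormBound4GeneralD}), but we want an $\ell_{2\pfs}$ operator norm; the factor of $2$ is what makes the Hadamard-product/symmetrization machinery of \Cref{proposition:CINorm2VsCLNorm2} work downstream, and it has to be threaded through carefully — essentially one applies the recursion to a "half-powered" version of the matrix or uses $\norm{M}{2\pfs}^2 \le \norm{M \circ M^{\star}}{\pfs}$-style inequalities together with PT-invariance. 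The second subtlety is that the Gibbs distribution $\upzeta^{\kP,\kQ}$ genuinely depends on $(\kP,\kQ)$, so the edge weights $\infweight$ differ across entries; \Cref{thrm:ProdWideHatBetaVsBeta} is exactly designed to handle this "perturbed influences" problem, but one must check its hypotheses (the $(\pfs,\delta,c)$-potential, $b$-marginal boundedness) are available and that the perturbation factor $(1+\varepsilon/3)^{\kell/\pfs}$ stays compatible with the final geometric-series convergence — which it does precisely because $\delta = \frac{1-\varepsilon}{\SingBound}$ leaves room. Everything else is bookkeeping over the $\Tsaw$-recursion that parallels \Cref{sec:thrm:InflNormBound4GeneralD}.
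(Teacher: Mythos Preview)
Your plan is essentially the paper's: decompose $\abs{\ExtdInfMatrix}$ by path length $\kell$, use \Cref{thrm:ProdWideHatBetaVsBeta} to replace the $\upzeta^{\kP,\kQ}$-weights by $\upzeta^{\kP}$-weights (at cost $(1+\varepsilon/3)^{\kell/\pfs}$), run the $(\pfs,\delta,c)$-potential recursion on $\Tsaw(\gext{G}{\kP},\kw)$, convert leaf counts to $((\NBMatrix)^{\kell-\kk}\cdot\Invol)(\kP,\kQ)$, and finish with Cauchy interlacing plus $\norm{\Invol}{2}=1$. Two clarifications:

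\textbf{The $\ell_{2\pfs}$ step.} The paper does not produce an entry-wise domination as your Step~(4) suggests, nor does it use a Hadamard inequality. It works with test vectors: for $\vtest\geq 0$ with $\norm{\vtest}{2\pfs}=1$, set $\vecF=\ExtdInfMatrixL_\kell\cdot\vtest$ and run the potential recursion with base values $\vtest(\kQ)$ at the copies of $\kQ$. Contraction gives, componentwise in $\kP$, a bound $\vecF(\kP)^{\pfs}\leq\text{const}\cdot\bigl(\tfrac{1-\varepsilon/2}{\SingBound}\bigr)^{\kell}\cdot\bigl(((\NBMatrix)^{\kell-\kk}\Invol)_{\ExtV_{\Lambda,\kk}}\cdot\vtest^{\pfs}\bigr)(\kP)$. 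Then the simple identity $\norm{\vecF}{2\pfs}^{2\pfs}=\norm{\vecF^{\pfs}}{2}^{2}$ together with $\norm{\vtest^{\pfs}}{2}=\norm{\vtest}{2\pfs}^{\pfs}=1$ turns this into an $\ell_2$-operator-norm bound on $(\NBMatrix)^{\kell-\kk}\Invol$. This is the precise mechanism behind your ``half-powered'' remark.

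\textbf{Minor misattribution.} In Step~(1) you invoke \Cref{lemma:TsawPVsTsaw} and \Cref{lemma:GibbsWeightsTpVsP} to embed $T_{\kP,\kQ}$ in $\Tsaw(\gext{G}{\kP},\kw)$, but those lemmas require the extension path to emanate from the root. Here $\kQ$ starts at $\ku\neq\kw$, so the relevant tool is \Cref{lemma:Subtree4TSawU} (and \Cref{lemma:BoundaryTVsBoundaryTPFromU}), which is exactly what underlies \Cref{thrm:ProdWideHatBetaVsBeta}; since you use the latter anyway, this is cosmetic.
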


The proof of \Cref{thrm:InfNormBoundHConj} appears in \Cref{sec:thrm:InfNormBoundHConj}.

\begin{proof}[Proof of \Cref{thrm:NonBacktrackingPotentialSpIn}]
When there is no danger of confusion,  in this proof, we abbreviate 
$\infmatrix^{\Lambda,\tau}_{G}$ and $\ExtdInfMatrixF$ to $\infmatrix$ and $\ExtdInfMatrix$, respectively.

From \Cref{proposition:CINorm2VsCLNorm2}, there are constants $c_1, c_2>0$ such that 
\begin{align}\label{eq:L2BoundInflNoNuMatrix}
\norm{ \infmatrix}{2}  \leq c_1+c_2\cdot \norm{ \abs{ \ExtdInfMatrix }}{2}  \enspace.
\end{align} 

\Cref{thrm:NonBacktrackingPotentialSpIn} follows by bounding appropriately $\norm{ \abs{ \ExtdInfMatrix_{\kk} }}{2}$. 
To this end, we use the following claim.

\begin{claim}\label{claim:Ll2VsLl2s}
There is a constant $c_3>0$ such that $\norm{ \abs{ \ExtdInfMatrixF }}{2}  \leq c_3\cdot \norm{ \abs{ \ExtdInfMatrixF }}{2\pfs}$.
\end{claim}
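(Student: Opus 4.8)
The plan is to derive \Cref{claim:Ll2VsLl2s} from two ingredients: a Riesz--Thorin interpolation inequality tying the $\ell_2\to\ell_2$ operator norm to the $\ell_q\to\ell_q$ and $\ell_{q'}\to\ell_{q'}$ operator norms at conjugate exponents $q=2\pfs$ and $q'=\tfrac{2\pfs}{2\pfs-1}$, and the fact that $\ExtdInfMatrixF$ is \emph{almost symmetric} because the Gibbs distributions attached to its entries are reversible. Since $G$ is finite, $\abs{\ExtdInfMatrixF}$ is a finite non-negative matrix, so there are no measure-theoretic subtleties, and every operator norm below is the induced $\ell_p\to\ell_p$ norm. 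Throughout I would fix $q=2\pfs\ge 2$ and its H\"older conjugate $q'=\tfrac{2\pfs}{2\pfs-1}\le 2$, so that $\tfrac1q+\tfrac1{q'}=1$, hence $\tfrac12=\tfrac{1/2}{q}+\tfrac{1/2}{q'}$; that is, $2$ lies exactly at the midpoint of the Riesz--Thorin scale between $\ell_{q'}$ and $\ell_q$.

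First I would record, by Riesz--Thorin interpolation applied to $\abs{\ExtdInfMatrixF}$,
\begin{align}\nonumber
\norm{\abs{\ExtdInfMatrixF}}{2} \ \le\ \norm{\abs{\ExtdInfMatrixF}}{q'}^{1/2}\cdot \norm{\abs{\ExtdInfMatrixF}}{q}^{1/2}\enspace,
\end{align}
and then, by duality of induced $\ell_p$ operator norms, $\norm{\abs{\ExtdInfMatrixF}}{q'}=\norm{\abs{\ExtdInfMatrixF}^{\mathsf T}}{q}$. It therefore suffices to bound $\abs{\ExtdInfMatrixF}^{\mathsf T}$ entrywise by a constant multiple of $\abs{\ExtdInfMatrixF}$. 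Fix compatible $\kP,\kQ\in\ExtV_{\Lambda,\kk}$ with starting vertices $\kw$ and $\ku$ (for non-compatible pairs both matrices vanish at $(\kP,\kQ)$, the compatibility relation being symmetric). Since the $\{\kP,\kQ\}$-extension of $\mu^{\Lambda,\tau}_G$ does not depend on the order of $\kP,\kQ$, we have $\upzeta^{\kQ,\kP}=\upzeta^{\kP,\kQ}$, so by \eqref{def:EdgeInflMatrixEquivalent} the numbers $\ExtdInfMatrixF(\kP,\kQ)=\infmatrix^{\kP,\kQ}(\kw,\ku)$ and $\ExtdInfMatrixF(\kQ,\kP)=\infmatrix^{\kP,\kQ}(\ku,\kw)$ are the $(\kw,\ku)$ and $(\ku,\kw)$ entries of \emph{one and the same} influence matrix $\infmatrix^{\kP,\kQ}$. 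Applying \Cref{claim:InfSymmetrisation} to $\upzeta^{\kP,\kQ}$ — whose symmetrising diagonal has entries $\sqrt{\upzeta^{\kP,\kQ}_\kv(+1)\,\upzeta^{\kP,\kQ}_\kv(-1)}$, which are non-zero on the relevant support and, by $\widehat b$-marginal boundedness of $\upzeta^{\kP,\kQ}$, lie in $[\widehat b,\tfrac12]$ — yields
\begin{align}\nonumber
\abs{\ExtdInfMatrixF(\kQ,\kP)} \ =\ \abs{\ExtdInfMatrixF(\kP,\kQ)}\cdot \frac{\upzeta^{\kP,\kQ}_\ku(+1)\,\upzeta^{\kP,\kQ}_\ku(-1)}{\upzeta^{\kP,\kQ}_\kw(+1)\,\upzeta^{\kP,\kQ}_\kw(-1)} \ \le\ \frac{1}{4\widehat b^{\,2}}\cdot \abs{\ExtdInfMatrixF(\kP,\kQ)}\enspace.
\end{align}
Thus $\abs{\ExtdInfMatrixF}^{\mathsf T}\le \tfrac{1}{4\widehat b^{\,2}}\,\abs{\ExtdInfMatrixF}$ entrywise, and since the induced $\ell_p$ operator norm is monotone under entrywise domination of non-negative matrices (argue via \Cref{lemma:MonotoneVsSRad}, or directly by restricting the defining supremum to non-negative test vectors), $\norm{\abs{\ExtdInfMatrixF}^{\mathsf T}}{q}\le \tfrac{1}{4\widehat b^{\,2}}\norm{\abs{\ExtdInfMatrixF}}{q}$. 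Plugging this into the interpolation inequality gives $\norm{\abs{\ExtdInfMatrixF}}{2}\le \tfrac{1}{2\widehat b}\norm{\abs{\ExtdInfMatrixF}}{2\pfs}$, so the claim holds with $c_3=\max\{1,\tfrac{1}{2\widehat b}\}$.

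The main obstacle is the input used above: that $\upzeta^{\kP,\kQ}$ is $\widehat b$-marginally bounded with a constant $\widehat b=\widehat b(\maxDeg,\beta,\gamma,\lambda,b)$ independent of $\kP,\kQ,\Lambda,\tau$. Concretely one must check that the two operations defining the extension — deleting the edges incident to the relevant vertices, and pinning the newly created split-vertices according to \eqref{eq:DefOfTau_x} — cannot drive a conditional marginal of the Hard-core or Ising model outside $[\widehat b,1-\widehat b]$ beyond a factor controlled by $\maxDeg$ and the model parameters. Both operations are instances of passing to a subgraph with a boundary condition, for which marginal lower bounds of these two-spin models are preserved up to constants; this is the one place where $b$-marginal boundedness of $\mu_G$ (together with bounded degree) is genuinely used. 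Everything else in the argument — the Riesz--Thorin inequality at the midpoint exponent, the duality $\norm{A}{q'}=\norm{A^{\mathsf T}}{q}$, and monotonicity of operator norms under entrywise domination — is standard and self-contained.
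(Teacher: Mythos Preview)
Your argument is correct and rests on the same pivot as the paper: by \eqref{def:EdgeInflMatrixEquivalent} the entries $\ExtdInfMatrixF(\kP,\kQ)$ and $\ExtdInfMatrixF(\kQ,\kP)$ are the $(\kw,\ku)$ and $(\ku,\kw)$ entries of one and the same influence matrix $\infmatrix^{\kP,\kQ}$, so \Cref{claim:InfSymmetrisation} plus marginal boundedness makes $\abs{\ExtdInfMatrixF}$ and its transpose entrywise comparable. The execution differs. The paper packages this into a \emph{genuinely symmetric} non-negative matrix $\cM$ (multiply $\abs{\ExtdInfMatrixF(\kP,\kQ)}$ by the square root of the marginal ratio coming from \Cref{claim:InfSymmetrisation}) satisfying $b^2\cM\le\abs{\ExtdInfMatrixF}\le b^{-2}\cM$ entrywise, and then uses only the elementary fact $\norm{\cM}{2}=\spradius(\cM)\le\norm{\cM}{2\pfs}$ (spectral radius bounds every matrix norm from below), yielding $c_3=b^{-4}$ in three lines. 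Your Riesz--Thorin-plus-duality route is heavier machinery but delivers a sharper constant and avoids naming the auxiliary matrix. One cosmetic slip: your displayed ratio is inverted (the $\kw$-terms should be in the numerator), though the bound $\tfrac{1}{4\widehat b^{\,2}}$ survives since both products lie in $[\widehat b^{\,2},\tfrac14]$. On the obstacle you flag, the paper simply reuses $b$ for $\upzeta^{\kP,\kQ}$: for two-spin systems the marginal-boundedness constant depends only on $(\beta,\gamma,\lambda,\maxDeg)$, and $\gext{G}{\kP,\kQ}$ keeps the parameters and does not raise the maximum degree, so $\widehat b=b$ works.
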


\Cref{thrm:InfNormBoundHConj} and \Cref{claim:Ll2VsLl2s} imply that  there exists a fixed number ${c}_4\geq 1$, such that 
\begin{align}\label{eq:ExtInfN2sBoundWithDistL}
\norm{ \abs{ \ExtdInfMatrix }}{2}  &\leq 
{c}_4\cdot \sum\nolimits_{\kell\geq \kk} 
\left(
 {\textstyle \left(\frac{1-\varepsilon/2}{\SingBound}\right)^{\kell}} \cdot %
\norm{ ( \NBMatrix)^{\kell}}{2}  \right)^{1/\pfs}\enspace.
\end{align}
where recall that $\nnorm{ (\NBMatrix)^N}{1/N}{2}= \SingBound$.

 \Cref{lemma:SingSequenConv} implies that there is a bounded integer $\ell_0=\ell_0(N,\varepsilon, \maxDeg)>0$ such that, for any $\kell\geq \ell_0$, 
 we have 
\begin{align}\label{eq:CorFromlemma:SingSequenConv}
\nnorm{ (\NBMatrix)^{\kell}}{1/\kell}{2}
\leq (1+\varepsilon/4) \SingBound\enspace. 
\end{align}
Since $\ell_0$ is bounded, there exists a constant $c_5>0$ such that 
\begin{align}\label{eq:CorFromlemma:SingSequenConvBB}
{c}_4\cdot \sum\nolimits_{k \leq \kell < \ell_0} 
\left(
 {\textstyle \left(\frac{1-\varepsilon/2}{\SingBound}\right)^{\kell}} \cdot 
\norm{ (\NBMatrix)^{\kell}}{2}   \right)^{1/\pfs} \leq c_5\enspace. 
\end{align}
For the above, we use that $\norm{ (\NBMatrix)^{\kell}}{2} \leq \nnorm{\NBMatrix}{\kell}{2}\leq (\maxDeg-1)^{\kell}$. 
Plugging \eqref{eq:CorFromlemma:SingSequenConvBB} and \eqref{eq:CorFromlemma:SingSequenConv} into \eqref{eq:ExtInfN2sBoundWithDistL}, we obtain
\begin{align}
\norm{ \abs{ \ExtdInfMatrix } }{2}  &\leq 
c_5+ {c}_4\cdot \sum\nolimits_{\kell\geq \ell_0} 
\left( {\textstyle (1+\varepsilon/4)(1-\varepsilon/2) } 
 \right)^{\kell/\pfs} 
 \ \leq \ c_5+\frac{c_4}{1-(1-\varepsilon/4)^{1/\pfs}}
 \enspace. \nonumber
\end{align}
\Cref{thrm:NonBacktrackingPotentialSpIn} follows by plugging the above into \eqref{eq:L2BoundInflNoNuMatrix} and 
setting $C=c_1+c_2\cdot (c_4+c_5)$. 
\end{proof}

\begin{proof}[Proof of \Cref{claim:Ll2VsLl2s}]
For what follows, we abbreviate $\ExtdInfMatrixF$ to $\ExtdInfMatrix$. 

Consider $\kP, \kQ\in \ExtV_{\Lambda,\kk}$ which are compatible. 
Let $\kw$ and $\ku$ be the starting vertices of $\kP$ and $\kQ$, respectively. 
Then, as already noticed in \eqref{def:EdgeInflMatrixEquivalent}, we have 
\begin{align}\label{eq::InfNormHStartVsH-HVsI}
\ExtdInfMatrix(\kP,\kQ)&=\infmatrix^{\kP,\kQ}(\kw,\ku), 
& \textrm{and} && 
\ExtdInfMatrix(\kQ,\kP)&=\infmatrix^{\kP,\kQ}(\ku, \kw)\enspace, 
\end{align}
where the influence matrix $\infmatrix^{\kP,\kQ}$ is with respect to $\upzeta^{\kP,\kQ}$, the 
$\{\kP, \kQ\}$-extension of $\mu^{\Lambda,\tau}_G$. Note that $\infmatrix^{\kP,\kQ}$ and
$\infmatrix^{\kQ,\kP}$ correspond to the same influence matrix. 

A useful observation is that \Cref{claim:InfSymmetrisation} and \eqref{eq::InfNormHStartVsH-HVsI} imply that
if $\upzeta^{\kP,\kQ}_{\kw}(+1), \upzeta^{\kP,\kQ}_{\ku}(+1)\notin \{0,1\}$, then we have 
\begin{align}\label{eq::InfNormHStartVsH-HVsIBB}
\sqrt{
\frac{\upzeta^{\kP,\kQ}_{\kw}(+1) \cdot \upzeta^{\kP,\kQ}_{\kw}(-1)} 
{\upzeta^{\kP,\kQ}_{\ku}(+1) \cdot \upzeta^{\kP,\kQ}_{\ku}(-1) }
}\cdot
\ExtdInfMatrix(\kP,\kQ)&= 
\sqrt{\frac{\upzeta^{\kP,\kQ}_{\ku}(+1) \cdot \upzeta^{\kP,\kQ}_{\ku}(-1)}
{\upzeta^{\kP,\kQ}_{\kw}(+1) \cdot \upzeta^{\kP,\kQ}_{\kw}(-1)} }
\cdot 
\ExtdInfMatrix(\kQ,\kP)\enspace.
\end{align}
If at least one of $\upzeta^{\kP,\kQ}_{\kw}(+1), \upzeta^{\kP,\kQ}_{\ku}(+1)$ is in $\{0,1\}$, 
then we have $\ExtdInfMatrix(\kP,\kQ)=\ExtdInfMatrix(\kQ,\kP)=0$.

Let the $\ExtV_{\Lambda,\kk}\times \ExtV_{\Lambda,\kk}$ matrix $\cM$ be defined as follows:
For  $\kP \in \ExtV_{\Lambda,\kk}$ with starting vertex $\kw$ and 
$\kQ \in \ExtV_{\Lambda,\kk}$ with starting vertex $\ku$, such that $\kP$ and $\kQ$
are compatible while 
 $\upzeta^{\kP,\kQ}_{\kw}(+1), \upzeta^{\kP,\kQ}_{\ku}(+1)\notin \{0,1\}$, we let 
\begin{align}\label{eq:DefOfHatL}
\cM(\kP,\kQ)&=
\sqrt{
\frac{\upzeta^{\kP,\kQ}_{\kw}(+1) \cdot \upzeta^{\kP,\kQ}_{\kw}(-1)} 
{\upzeta^{\kP,\kQ}_{\ku}(+1) \cdot \upzeta^{\kP,\kQ}_{\ku}(-1) }
}
\cdot
\abs{\ExtdInfMatrix(\kP, \kQ)} \enspace. 
\end{align}
On the other hand, if at least one of $\upzeta^{\kP,\kQ}_{\kw}(+1), \upzeta^{\kP,\kQ}_{\ku}(+1)$ is in $\{0,1\}$, 
 then we let $\cM(\kP,\kQ)=0$. Also, we let  $\cM(\kP,\kQ)=0$ when $\kP$ and $\kQ$ are not compatible. 

$\cM$ has non-negative entries, while \eqref{eq::InfNormHStartVsH-HVsIBB} implies that it is symmetric. 
Furthermore, we have 
\begin{align}\label{eq:L2BoundInflNoNuMatrixNoMaxB}
\norm{ \abs{\ExtdInfMatrix}}{2}   &\leq b^{-2} \cdot \norm{  \cM }{2} 
 \leq b^{-2} \cdot \norm{ \cM}{2\pfs} 
 \leq b^{-4} \cdot \norm{ \abs{\ExtdInfMatrix}}{2\pfs} \enspace. 
\end{align}
The first inequality follows by noting that   \eqref{eq:DefOfHatL} and the $b$-marginal boundedness assumption imply 
that $\abs{\ExtdInfMatrix} \leq b^{-2} \cdot \cM$, where the comparison is entry-wise.  The second inequality follows since 
$\cM$ is symmetric. Similarly to the first inequality, the third inequality follows by noting that  \eqref{eq:DefOfHatL} and the $b$-marginal
boundedness assumption imply that $\cM$ has non-negative entries, while  $\cM \leq b^{-2}\cdot \abs{\ExtdInfMatrix}$, 
where the comparison is entry-wise. 

\Cref{claim:Ll2VsLl2s} follows for $c_3=b^{-4}$.
\end{proof}

\spreadpoint

\newcommand{\ExtdInfMatrixL}{\mathcal{D}} 

\section{Proof of \Cref{thrm:InfNormBoundHConj} \LastReviewG{2025-04-03} } \label{sec:thrm:InfNormBoundHConj}

\begin{figure}
 \centering
\begin{tikzpicture}[node distance=1.8cm]
\node (HCNBack) [startstop,thick, fill=black!20] {\Cref{thrm:InfNormBoundHConj}};
\node (HCNBackIS) [startstop, right of=HCNBack, xshift=2cm,thick, fill=black!20] {\Cref{thrm:HC-VEntryBound}}; 
\node (HCNBackNormRecurence) [startstop, right of=HCNBackIS, xshift=2cm,thick, fill=black!20] {\Cref{claim:CQLVsCKL}};
\node (HCNBackMultiRecur) [startstop, right of=HCNBackNormRecurence, xshift=2cm, thick, fill=black!20] {\Cref{thrm:ProdWideHatBetaVsBeta}};
\draw [arrow] (HCNBackIS) -- (HCNBack);
\draw [arrow] (HCNBackNormRecurence) -- (HCNBackIS);
\draw [arrow] (HCNBackMultiRecur) -- (HCNBackNormRecurence);
\end{tikzpicture}
		\caption{Proof structure for \Cref{thrm:InfNormBoundHConj}}
	\label{fig:Strct4thrm:InfNormBoundHConj}
\end{figure}

For the proof of \Cref{thrm:InfNormBoundHConj},  we use the properties of the extensions shown in  \Cref{sec:NonSoBasicExtensions}
and in particular we use \Cref{thrm:ProdWideHatBetaVsBeta}.
\Cref{fig:Strct4thrm:InfNormBoundHConj} shows the structure of the proof of \Cref{thrm:InfNormBoundHConj}.

\newcommand{\TPQ}{ \cadmiumgreen{T_{\kP,\kQ}} }
\newcommand{\TPP}{ \byzantine{\overline{T}}}
\newcommand{\TAdj}{\red{\UpZ}}

\newcommand{\vecF}{\upf}
\newcommand{\vecG}{\upg}
\newcommand{\vecK}{\upk}
\newcommand{\vtest}{\iris{{\bf q}}}

\subsection{Proof of \Cref{thrm:InfNormBoundHConj}}\label{sec:thrm:InfNormBoundHConj}
For a vector ${\bf v}\in \mathbb{R}^{N}_{\geq 0}$,  and $\kr \in \mathbb{R}$, we let vector ${\bf v}^{\kr}$ be such 
that ${\bf v}^{\kr}(\ki)=({\bf v}(\ki))^{\kr}$, for $1\leq \ki \leq  N$.  Also, when there is no danger of confusion, we abbreviate $\ExtdInfMatrixF$ to $\ExtdInfMatrix$.

For integer $\kell\geq 2\kk$, we let $\ExtdInfMatrixL_{\kell}$ be the $\ExtV_{\Lambda,\kk}\times \ExtV_{\Lambda,\kk}$ matrix 
such that for $\kP,\kQ \in \ExtV_{\Lambda,\kk}$,  the entry $\ExtdInfMatrixL_{\kell}(\kP,\kQ)$ is defined as follows:

Let $\kw$ and $\ku$ be the starting vertices of $\kP$ and $\kQ$, respectively.
Assume, first, that $\kP$ and $\kQ$ are compatible, i.e., $\dist_G(\kw,\ku)\geq 2\kk$.  Let $\gext{G}{\kP,\kQ}$ 
and $\upzeta^{\kP,\kQ}$ be the $\{\kP,\kQ\}$-extensions of $G$ and  $\mu^{\Lambda,\tau}_G$, respectively.  
Also, let $\TPQ=\Tsaw(G_{\kP,\kQ}, \kw)$ and the weights $\{ \infweight_{\kP,\kQ}(\ke)\}$ are obtained as  specified  
in \eqref{def:OfInfluenceWeights} with respect to $\upzeta^{\kP,\kQ}$.  We let
\begin{align}\nonumber
\ExtdInfMatrixL_{\kell}(\kP, \kQ) &=\sum\nolimits_{\kW\in \TPQ: \abs{\kW}=\kell} \prod\nolimits_{\ke\in \kW} \abs{\infweight_{\kP,\kQ}(\ke)} \enspace, 
\end{align}
where $\kW$ varies over all the paths of length $\kell$ from the root to the copies of $\ku$ in $\TPQ$ that correspond to
self-avoiding walks.   For non-compatible walks $\kP$ and $\kQ$, we let  $\ExtdInfMatrixL_{\kell}(\kP, \kQ) =0$.

From the definitions of the extended influence matrix $\ExtdInfMatrix$ and $ \ExtdInfMatrixL_{\kell}$, it is not hard to 
verify the entry-wise inequality   $\abs{\ExtdInfMatrix} \leq \sum\nolimits_{\kell\geq 2\kk} \ExtdInfMatrixL_{\kell}$, which
implies 
\begin{align}\nonumber 
\norm{ \abs{\ExtdInfMatrix}}{2\pfs} &
\leq \sum\nolimits_{\kell\geq 2\kk} \norm{ \ExtdInfMatrixL_{\kell}}{2\pfs}\enspace.
\end{align}
In light of the above, \Cref{thrm:InfNormBoundHConj} follows by showing that there exists a constant 
$\widehat{c}\geq 1$ such that
\begin{align}\label{eq:Target4thrm:InfNormBoundHConjLL}
\norm{ \ExtdInfMatrixL_{\kell}}{2\pfs} &\leq \widehat{c} \cdot \left(
 {\textstyle \left(\frac{1-\varepsilon/2}{\SingBound}\right)^{\kell-\kk}} \cdot 
\norm{ (\NBMatrix)^{{\kell-\kk}}}{2}
 \right)^{1/\pfs} & \forall \kell\geq 2\kk \enspace.
\end{align}
Before showing that \eqref{eq:Target4thrm:InfNormBoundHConjLL} is true,
we introduce a few concepts and establish some useful technical results.

For $\kell\geq 2\kk$, we let $\TAdj^{(\kell)}$ be the $\ExtV_{\Lambda,\kk}\times \ExtV_{\Lambda,\kk}$ 
matrix such that for $\kP,\kQ \in \ExtV_{\Lambda,\kk}$, the entry $\TAdj^{(\kell)}(\kP,\kQ)$ is defined 
as follows:

Let $\kw$ and $\ku$ be the starting vertices of $\kP$ and $\kQ$, respectively. For $\kP$ and 
$\kQ$ which are compatible,   we consider $\TPP=\Tsaw(\gext{G}{\kP}, \kw)$ and let 
\begin{align}\nonumber
\TAdj^{(\kell)}(\kP, \kQ) &=\# \textrm{copies of walk $\kQ$ in $\TPP$ which is at distance $\kell$ from the root} \enspace. 
\end{align}
For vertex $\kx$ in $\TPP$, which is connected to the root of the tree with path  $\kW=\kx_0, \ldots, \kx_{\kell}$, i.e., $\kx_0$ 
is the root and $\kx_{\kell}$ is $\kx$; we say that $\kx$ is  a copy of walk $\kQ=\kz_0,\ldots, \kz_{\kk}$, if $\kx_{\kell-\ki}$ is a copy of 
$\kz_{\ki}$, for all $\ki=0,\ldots, \kk$. 
 For non-compatible $\kP$ and $\kQ$, we have $\TAdj^{(\kell)}(\kP, \kQ) =0$.

For integer $\kell\geq 2\kk$ and $\vtest \in \mathbb{R}^{\ExtV_{\Lambda,\kk}}_{\geq 0}$, 
let vectors $\vecF_{\vtest, \kell},\vecG_{\vtest, \kell}, \vecK_{\vtest, \kell} \in \mathbb{R}^{\ExtV_{\Lambda,\kk}}_{\geq 0}$ be 
defined as follows: We let 
\begin{align}\label{eq:DefOfFvec}
\vecF_{\vtest, \kell}=\ExtdInfMatrixL_{\kell}\cdot \vtest \enspace.
\end{align}
We also let 
\begin{align} \label{eq:DefOfGvec}
\vecG_{\vtest, \kell} &= \left( {\textstyle \frac{1-\varepsilon/2}{\SingBound}} \right)^{\kell/\pfs} \cdot 
\left( \TAdj^{(\kell)}\cdot \vtest^{\pfs}\right)^{1/\pfs}
\enspace.
\end{align}
Similarly, we let
\begin{align} \label{eq:DefOfKvec}
\vecK_{\vtest, \kell}&= 
 {\textstyle \left(\frac{1-\varepsilon/2}{\SingBound}\right)^{\kell/\pfs} } \cdot 
\left( ( ( \NBMatrix)^{\kell-\kk}\cdot \Invol)_{\ExtV_{\Lambda,\kk}} \cdot \vtest^{\pfs}\right)^{1/\pfs}
 \enspace,
\end{align}
where $\Invol$ is the $\ExtV_{\kk}\times \ExtV_{\kk}$ involutionary matrix defined in \eqref{def:OfInvolutation}.
Also,  $(( \NBMatrix)^{{\kell-\kk}}\cdot \Invol)_{\ExtV_{\Lambda,\kk}}$ corresponds to the principle submatrix 
of $( \NBMatrix)^{\kell-\kk}\cdot \Invol$ induced by the elements in $\ExtV_{\Lambda, \kk}\subseteq \ExtV_{\kk}$.

\begin{proposition}\label{thrm:HC-VEntryBound}
For $\kk\geq 1$, there exists a constant $\widehat{c}_0\geq 1$ such that, for any $\kell\geq 2\kk$
 and any $\vtest \in \mathbb{R}^{\ExtV_{\Lambda,\kk}}_{\geq 0}$
\begin{align}\label{eq:thrm:HC-VEntryBoundBB}
 \vecF^{\pfs}_{\vtest, \kell} &\leq \widehat{c}_0 \cdot 
\vecG^{\pfs}_{\vtest, \kell} \enspace, 
\end{align}
where the inequality between the two vectors is entry-wise.
\end{proposition}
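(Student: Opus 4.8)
\textbf{Proof plan for \Cref{thrm:HC-VEntryBound}.}

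The plan is to prove the entry-wise bound $\vecF^{\pfs}_{\vtest,\kell} \leq \widehat{c}_0 \cdot \vecG^{\pfs}_{\vtest,\kell}$ by examining each coordinate $\kP \in \ExtV_{\Lambda,\kk}$ separately and setting up a tree recursion on $\TPQ$, exactly in the spirit of the proof of \Cref{thrm:InflNormBound4GeneralD} but now with the weighting coming from the potential function applied to the walks in $\TPP=\Tsaw(\gext{G}{\kP},\kw)$. Fix a compatible $\kP$ with starting vertex $\kw$ (the non-compatible rows are zero on both sides). By definition, $\vecF_{\vtest,\kell}(\kP) = \sum_{\kQ} \ExtdInfMatrixL_{\kell}(\kP,\kQ)\,\vtest(\kQ)$, which expands as a sum over paths $\kW$ of length $\kell$ from the root of $\TPQ$ to self-avoiding copies of $\ku$ of the quantity $\prod_{\ke\in\kW}\abs{\infweight_{\kP,\kQ}(\ke)}$, times $\vtest(\kQ)$. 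The first step is to remove the $\kQ$-dependence of the underlying Gibbs distribution: the weights $\infweight_{\kP,\kQ}$ differ from $\infweight_{\kP}$ (those obtained from the $\kP$-extension alone, via $\Tsaw(\gext{G}{\kP},\kw)$) only near the far endpoint $\ku$, and by \Cref{thrm:ProdWideHatBetaVsBeta} applied with roles of $\kw$ and $\ku$ swapped — or the analogous perturbation estimate — one can absorb the discrepancy into a constant $\widehat{\kc}_1$ together with a geometric factor $(1+\varepsilon/3)^{\kell/\pfs}$ on the tail of the path. Thus up to the constant $\widehat{c}_0$ and a harmless geometric correction, it suffices to bound $\sum_{\kQ}\big(\sum_{\kW} \prod_{\ke\in\kW}\abs{\infweight_{\kP}(\ke)}\big)\vtest(\kQ)$ where now all influences come from the single fixed Gibbs distribution $\upzeta^{\kP}$ on $\gext{G}{\kP}$.

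The second step is the tree recursion with the potential function, which is the technical heart and essentially a re-run of \Cref{claim:CEllVsCalD}. Working in $\TPP$, define for each vertex $\kv$ at level $\kh$ of the tree a quantity $\subcont_{\kv}$ by backward recursion from level $\kell$: at the bottom level put $\subcont_{\kv} = \sum_{\kQ}\Ind\{\kv \text{ is a copy of walk }\kQ\}\,\vtest(\kQ)$; at intermediate levels use $\subcont_{\kv} = \chiofh(\kb_{\kv})\cdot\sum_{\kj}\frac{\abs{\infweight_{\kP}(\kb_{\kj})}}{\chiofh(\kb_{\kj})}\subcont_{\kv_{\kj}}$ exactly as in \eqref{eq:SRecurStepHC1821}; and at the root apply the Boundedness bound to get a constant factor. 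The Contraction property \eqref{eq:contractionRelationPF} of the $(\pfs,\delta,c)$-potential then yields $(\subcont_{\kv})^{\pfs} \leq \delta \sum_{\kj}(\subcont_{\kv_{\kj}})^{\pfs}$, and iterating this $\kell-\kh$ times gives $(\subcont_{\kv})^{\pfs} \leq \delta^{\kell-\kh}\cdot\big(\text{number of length-}\kell\text{ self-avoiding paths in }\TPP\text{ through }\kv\text{ to copies of each }\kQ\text{, weighted by }\vtest(\kQ)^{\pfs}\big)$, which at the root reads $\vecF_{\vtest,\kell}(\kP)^{\pfs} \lesssim \delta^{\kell}\cdot \sum_{\kQ}\TAdj^{(\kell)}(\kP,\kQ)\,\vtest(\kQ)^{\pfs}$. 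Recalling $\delta = \frac{1-\varepsilon}{\SingBound}$ and that $(1-\varepsilon)(1+\varepsilon/3) \leq 1-\varepsilon/2$ (for the relevant range, after raising to the $1/\pfs$ and $\kell$ powers the corrections combine favourably), this is precisely $\vecG^{\pfs}_{\vtest,\kell}(\kP)$ up to the constant $\widehat{c}_0$.

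The last ingredient is the bookkeeping identifying the count $\TAdj^{(\kell)}(\kP,\kQ)$, the number of copies of walk $\kQ$ at distance $\kell$ in $\Tsaw(\gext{G}{\kP},\kw)$, as an upper bound for the number of self-avoiding paths of length $\kell$ in $\gext{G}{\kP,\kQ}$ from the root to self-avoiding copies of $\ku$ consistent with $\kQ$: since split-vertices have degree one and cannot lie on such a walk, every such walk survives in $\gext{G}{\kP}$, and $\Tsaw(\gext{G}{\kP},\kw)$ records at least all of them — this is the same argument used in \Cref{lemma:eq:CLVsCJEntrywise} and in the proof of \Cref{prop:CJVskNBM}. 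I expect the main obstacle to be the careful alignment of the perturbation estimate from \Cref{thrm:ProdWideHatBetaVsBeta} with the tree recursion: that theorem is phrased for paths in $\Tsaw(\gext{G}{\kP},\ku)$ reaching a copy of $\kw$, so one has to justify that the same perturbation constant controls the product of influences for the last $\kk$ edges of every root-to-$\ku$ path in $\TPQ$ uniformly, and that the geometric $(1+\varepsilon/3)^{\kell/\pfs}$ factor can be merged into the $(1-\varepsilon/2)$ budget without blowing up the constant. Everything else is a routine adaptation of the already-established adjacency-matrix argument.

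\begin{proof}[Proof sketch]
Fix $\kP \in \ExtV_{\Lambda,\kk}$ compatible, with starting vertex $\kw$; for non-compatible $\kP$ both sides vanish. Let $\TPP = \Tsaw(\gext{G}{\kP},\kw)$ with influence weights $\{\infweight_{\kP}(\ke)\}$ from the $\kP$-extension $\upzeta^{\kP}$ of $\mu^{\Lambda,\tau}_G$. For each compatible $\kQ$ with starting vertex $\ku$, by definition of $\ExtdInfMatrixL_{\kell}$ and the identification of $\TPQ$ with a subtree of $\TPP$ (\Cref{lemma:Subtree4TSawU}), together with \Cref{thrm:ProdWideHatBetaVsBeta} to replace $\infweight_{\kP,\kQ}$ by $\infweight_{\kP}$ up to a constant $\widehat{\kc}_1$ and a factor $(1+\varepsilon/3)^{\kell/\pfs}$, we have
\begin{align}\nonumber
\ExtdInfMatrixL_{\kell}(\kP,\kQ) &\leq \widehat{\kc}_1\cdot(1+\varepsilon/3)^{\kell/\pfs}\cdot \sum\nolimits_{\kW} \prod\nolimits_{\ke\in\kW}\abs{\infweight_{\kP}(\ke)}\enspace,
\end{align}
where $\kW$ ranges over length-$\kell$ root-to-$\ku$ paths in $\TPP$ hitting self-avoiding copies of $\kQ$. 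Define $\subcont_{\kv}$ on $\TPP$ by backward recursion: at level $\kell$, $\subcont_{\kv}=\sum_{\kQ}\Ind\{\kv\text{ is a copy of }\kQ\}\vtest(\kQ)$; at levels $0<\kh<\kell$, $\subcont_{\kv}=\chiofh(\kb_{\kv})\sum_{\kj}\frac{\abs{\infweight_{\kP}(\kb_{\kj})}}{\chiofh(\kb_{\kj})}\subcont_{\kv_{\kj}}$; at the root, apply Boundedness of $\potF$. Exactly as in \Cref{claim:CEllVsCalD}, $\sum_{\kQ}\big(\sum_{\kW}\prod_{\ke\in\kW}\abs{\infweight_{\kP}(\ke)}\big)\vtest(\kQ) \leq \subcont_{\rm root}$, and the Contraction property \eqref{eq:contractionRelationPF} gives $(\subcont_{\kv})^{\pfs}\leq\delta\sum_{\kj}(\subcont_{\kv_{\kj}})^{\pfs}$, so iterating,
\begin{align}\nonumber
\subcont_{\rm root}^{\pfs} &\leq c^{\pfs}\cdot\delta^{\kell}\cdot \sum\nolimits_{\kQ}\TAdj^{(\kell)}(\kP,\kQ)\,\vtest(\kQ)^{\pfs}\enspace,
\end{align}
using that the number of weighted length-$\kell$ self-avoiding root-to-$\ku$ paths through any subtree is bounded by the copy count $\TAdj^{(\kell)}(\kP,\kQ)$ (split-vertices have degree one, cf.\ \Cref{lemma:eq:CLVsCJEntrywise}). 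Combining, with $\delta=\frac{1-\varepsilon}{\SingBound}$ and $(1-\varepsilon)(1+\varepsilon/3)\leq 1-\varepsilon/2$,
\begin{align}\nonumber
\vecF_{\vtest,\kell}(\kP)^{\pfs} &\leq \widehat{\kc}_1^{\pfs}c^{\pfs}\cdot\Big(\tfrac{1-\varepsilon/2}{\SingBound}\Big)^{\kell}\sum\nolimits_{\kQ}\TAdj^{(\kell)}(\kP,\kQ)\,\vtest(\kQ)^{\pfs} = \widehat{\kc}_1^{\pfs}c^{\pfs}\cdot \vecG_{\vtest,\kell}(\kP)^{\pfs}\enspace,
\end{align}
which is \eqref{eq:thrm:HC-VEntryBoundBB} with $\widehat{c}_0 = \widehat{\kc}_1^{\pfs}c^{\pfs}$.
\end{proof}
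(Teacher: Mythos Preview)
Your overall strategy matches the paper's: reduce from $\infweight_{\kP,\kQ}$ to $\infweight_{\kP}$ via \Cref{thrm:ProdWideHatBetaVsBeta}, then run a potential-weighted tree recursion on $\TPP=\Tsaw(\gext{G}{\kP},\kw)$, and finally read off the copy-count $\TAdj^{(\kell)}$. The gap you yourself flag as ``the main obstacle'' is, however, real and is not handled correctly in the sketch.

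Concretely, \Cref{thrm:ProdWideHatBetaVsBeta} (applied with the roles of $\kP$ and $\kQ$ swapped, as in \Cref{claim:CQLVsCKL}) yields
\[
\prod_{1\le \ki\le \kell}\abs{\infweight_{\kP,\kQ}(\ke_{\ki})}\ \le\ \widehat{\kc}_1\,(1+\varepsilon/3)^{\kell/\pfs}\prod_{1\le \ki\le \kell-\kk}\abs{\infweight_{\kP}(\ke_{\ki})},
\]
i.e.\ the $\infweight_{\kP}$--product runs only over the first $\kell-\kk$ edges. Your displayed bound keeps the full product $\prod_{\ke\in\kW}\abs{\infweight_{\kP}(\ke)}$ on the right, and that is strictly stronger and in general false: for the last $\kk$ edges the ambient trees $\TPP$ and $\TPQ$ differ (the copy of $\ku$ has children in $\TPP$ but is a leaf in $\TPQ$), and for the Hard-core model a cycle-closing child of some vertex along those last $\kk$ edges in $\TPP$ may be pinned to $+1$, forcing $\infweight_{\kP}(\ke_{\ki})=0$ while $\infweight_{\kP,\kQ}(\ke_{\ki})\neq 0$. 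For such a path your right-hand side vanishes while the left does not.

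The paper resolves this exactly by working with the truncated product: it introduces the auxiliary matrix $\SimpleExtdInfMatrixL_{\kell}$ with entries $\sum_{\kW}\prod_{1\le \ki\le \kell-\kk}\abs{\infweight_{\kP}(\ke_{\ki})}$ and the vector $\SimpleFvec_{\vtest,\kell}=\SimpleExtdInfMatrixL_{\kell}\cdot\vtest$, and splits the tree recursion accordingly. For levels $\kell-\kk\le \kh\le \kell-1$ it uses the \emph{unweighted} sum $\subcont_{\kv}=\sum_{\kj}\subcont_{\kv_{\kj}}$ together with H\"older to get $(\subcont_{\kv})^{\pfs}\le \maxDeg^{\pfs-1}\sum_{\kj}(\subcont_{\kv_{\kj}})^{\pfs}$; only for levels $0<\kh<\kell-\kk$ does it apply the potential recursion and the Contraction property. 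This costs a harmless constant $\maxDeg^{\kk(\pfs-1)}$ and yields $\delta^{\kell-\kk-1}$ instead of your $\delta^{\kell}$, which is then combined with the $(1+\varepsilon/3)^{\kell}$ from \Cref{claim:CQLVsCKL} via the elementary inequalities $(1-\varepsilon)\le (1-\varepsilon/3)(1-\varepsilon/2)$ and $(1+\varepsilon/3)(1-\varepsilon/3)\le 1$, absorbing the $(\frac{1-\varepsilon}{\SingBound})^{-\kk-1}$ discrepancy into $\widehat{c}_0$. Once you make that one adjustment---truncate the product at $\kell-\kk$ and handle the top $\kk$ levels by H\"older rather than Contraction---your argument becomes precisely the paper's.
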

The proof of \Cref{thrm:HC-VEntryBound} appears in \Cref{sec:thrm:HC-VEntryBound}.

Letting ${\bf 0}\in \mathbb{R}^{\ExtV_{\Lambda,\kk}}$ be the all-zeros vector, 
for any $\vtest \in \mathbb{R}^{\ExtV_{\Lambda,\kk}}_{\geq 0}$, for any $\kell\geq 2\kk$, we have 
\begin{align}\label{eq:GKVecDom}
{\bf 0}\leq \vecF_{\vtest, \kell} \leq \widehat{c}_0\cdot \vecG_{\vtest, \kell} \leq \widehat{c}_0\cdot \vecK_{\vtest, \kell}\enspace,
\end{align}
where the above inequalities are entry-wise, while $\widehat{c}_0$ is from \Cref{thrm:HC-VEntryBound}.

To see why \eqref{eq:GKVecDom} is true, note the following: The first inequality implies that the entries of 
$f_{\vtest,\kell}$, which is defined in \eqref{eq:DefOfFvec}, are non-negative, which is true since both
$\ExtdInfMatrixL_{\kell}$ and $\vtest$ have non-negative entries. 
The second inequality follows from the fact that the entries of $\vecG_{\vtest,\kell}$ are non-negative and
 \Cref{thrm:HC-VEntryBound}. 
The third inequality follows from the observation that each entry $\TAdj^{(\kell)}(\kP, \kQ)$ is upper bounded by the 
number of $\kk$-non-backtracking walks of length $\kell-\kk$ from $\kP$ to $\kQ$.

Let $\vtest^{\star}=\vtest^{\star}(\kell) \in \mathbb{R}^{\ExtV_{\Lambda,\kk}}_{\geq 0}$ be 
 such that 
\begin{align}\label{eq:Norm2sYLVsKStar}
\norm{ \vtest^{\star}}{2\pfs}&=1& \textrm{and} &&
\norm{\ExtdInfMatrixL_{\kell}}{2\pfs}&=\norm{ \ExtdInfMatrixL_{\kell}\cdot \vtest^{\star}}{2\pfs} \enspace.
\end{align}
We let $\vecF^{\star}_{\kell}=\vecF_{\vtest^{\star}, \kell}$, while let $\vecG^{\star}_{\kell}=\vecG_{\vtest^{\star}, \kell}$ and 
 $\vecK^{\star}_{\kell}=\vecK_{\vtest^{\star}, \kell}$. 
Noting that $\vtest^{\star}$ has non-negative entries because the entries of $\ExtdInfMatrixL_{\kell}$ are non-negative, 
from \eqref{eq:GKVecDom}, we get
\begin{align}\label{eq:KSNorm4L1}
\norm{ \ExtdInfMatrixL_{\kell}}{2\pfs}&=\norm{ \vecF^{\star}_{\kell}}{2\pfs} \leq 
\widehat{c}_0 \cdot \norm{ \vecG^{\star}_{\kell}}{2\pfs} \leq \widehat{c}_0 \cdot \norm{ \vecK^{\star}_{\kell}}{2\pfs} \enspace. 
\end{align}
Let vector $\vecK^{\pfs,\star}_{\kell}$ be such that $\vecK^{\pfs,\star}_{\kell}(\kP)=({\vecK}^{\star}_{\kell}(\kP))^{\pfs}$, 
for all $\kP\in \ExtV_{\Lambda,\kk}$. Similarly, define $\vtest^{\pfs,\star}$ such that $\vtest^{\pfs,\star}(\kP)=({\vtest}^{\star}(\kP))^{\pfs}$. 
From the definition of the corresponding vectors, we have  
\begin{align}\label{eq:KstarL2sVsBarKStarL2}
\nnorm{ \vecK^{\star}_{\kell}}{2\pfs}{2\pfs}&=\nnorm{ \vecK^{\pfs,\star}_{\kell}}{2}{2}&
\textrm{and} && 
\norm{ \vtest^{\pfs,\star}}{2}=\norm{ \vtest^{\star}}{2\pfs}=1
\enspace. 
\end{align}
Using the above observation,  we have
\begin{align}
\nnorm{ \vecK^{\star}_{\kell}}{2\pfs}{2\pfs}=\nnorm{ \vecK^{\pfs,\star}_{\kell}}{2}{2} &=
 {\textstyle \left(\frac{1-\varepsilon/2}{\SingBound}\right)^{2\kell}} \cdot %
\nnorm{   ( ( \NBMatrix)^{{\kell-\kk}}\cdot \Invol )_{\ExtV_{\Lambda,\kk}} \cdot \vtest^{\pfs, \star}  }{2}{2} 
 &\textrm{[from \eqref{eq:DefOfKvec}]}\nonumber \\
 &\leq {\textstyle \left(\frac{1-\varepsilon/2}{\SingBound}\right)^{2\kell}} \cdot %
 \nnorm{ {\textstyle ( ( \NBMatrix)^{{\kell-\kk}}\cdot \Invol )_{\ExtV_{\Lambda, \kk}}}}{2}{2} &
 \textrm{[since $\norm{ \vtest^{\pfs,\star}}{2}=1$]}\nonumber \\
&\leq {\textstyle \left(\frac{1-\varepsilon/2}{\SingBound}\right)^{2 \kell}} \cdot %
\nnorm{ ( \NBMatrix)^{{\kell-\kk}}\cdot \Invol }{2}{2} & \textrm{[since $(\NBMatrix)^{\kell-\kk}\cdot \Invol$ is symmetric]} \nonumber \\
&= {\textstyle \left(\frac{1-\varepsilon/2}{\SingBound}\right)^{2\kell}} \cdot
\nnorm{ ( \NBMatrix)^{{\kell-\kk}}}{2}{2}
\enspace. \label{eq:VevKStarNormBound}
\end{align}
The third inequality follows from Cauchy’s interlacing theorem, e.g., see \cite{MatrixAnalysis}.
Note that   matrix $(\NBMatrix)^{\kell-\kk} \cdot \Invol$ is symmetric  due to \eqref{eq:DefPTInvarianceFormalB}.
The last derivation holds since for the involution $ \Invol$, we have $\norm{ \Invol}{2}=1$. 

We get \eqref{eq:Target4thrm:InfNormBoundHConjLL} by plugging \eqref{eq:VevKStarNormBound} into \eqref{eq:KSNorm4L1}
and setting $\widehat{c}=\widehat{c}_0\cdot \left(\frac{1-\varepsilon/2}{\SingBound}\right)^{\kk/\pfs}$. 
 \Cref{thrm:InfNormBoundHConj} follows. 
 \hfill $\square$

\newcommand{\SimpleExtdInfMatrixL}{\cC}
\newcommand{\SimpleFvec}{\upz}

\subsection{Proof of \Cref{thrm:HC-VEntryBound}}\label{sec:thrm:HC-VEntryBound}

For integer $\kell\geq 2\kk$, let $\SimpleExtdInfMatrixL_{\kell}$ be the $\ExtV_{\Lambda,\kk}\times \ExtV_{\Lambda,\kk}$ matrix 
such that for $\kP,\kQ \in \ExtV_{\Lambda,\kk}$, entry $\SimpleExtdInfMatrixL_{\kell}(\kP,\kQ)$ is defined as follows:

Let  $\kw$ and $\ku$ be the starting vertices of $\kP$ and $\kQ$, respectively. Assume, first,  that $\kP$ and $\kQ$ are compatible.
Let $\gext{G}{\kP}$ and $\upzeta^{\kP}$ be the $\kP$-extensions of $G$ and  $\mu^{\Lambda,\tau}_G$, respectively. 
Also,  let $\TPP=\Tsaw(\gext{G}{\kP}, \kw)$ while consider the edge-weights 
$\{\infweight_{\kP}(\ke)\}$ obtained as specified in \eqref{def:OfInfluenceWeights} with respect to $\upzeta^{\kP}$. 
Let 
\begin{align}\nonumber
\SimpleExtdInfMatrixL_{\kell}(\kP, \kQ) &=\sum\nolimits_{\kW=(\ke_1, \ldots, \ke_{\kell})\in\TPP} \prod\nolimits_{1\leq \ki \leq \kell-\kk} \abs{ \infweight_{\kP}(\ke_{\ki})} \enspace, 
\end{align}
where $\kW$ varies over all the paths of length $\kell$ from the root of $\TPP$ to set $\cp(\kQ,\kell)$. Set $\cp(\kQ,\kell)$ consists of the 
copies of $\kQ$ in $\TPP$  which are at distance $\kell$ from the root, while for each path $\kR=\ke_1,\ldots, \ke_{\kell}$ from the root to 
 such copy, we have $\prod\nolimits_{1\leq \ki \leq \kell-\kk} \abs{ \infweight_{\kP}(\ke_{\ki})}\neq 0$.

For non compatible walks $\kP$ and $\kQ$, i.e., $\dist_{G}(\ku,\kw)<2\kk$,  we let $\SimpleExtdInfMatrixL_{\kell}(\kP, \kQ) =0$. 

For vertex $\kx$ in $\TPP$, which is connected to the root with path 
$M=\kx_0, \ldots, \kx_{\kell}$, i.e., $\kx_0$ is the root and $\kx_{\kell}$ is vertex $\kx$, 
recall that we say $\kx$ is a copy of walk $\kQ=\kz_0,\ldots, \kz_{\kk}$ in $\ExtV_{\Lambda,\kk}$,
if $\kx_{\kell-\ki}$ is a copy of $\kz_{\ki}$, for $0\leq \ki\leq \kk$.

For $\vtest \in \mathbb{R}^{\ExtV_{\Lambda,\kk}}_{\geq 0}$, let
\begin{align}\label{eq:DefOfZQELL}
\SimpleFvec_{\vtest,\kell} &= \SimpleExtdInfMatrixL_{\kell} \cdot \vtest \enspace. 
\end{align}

\begin{claim}\label{claim:CQLVsCKL}
There exists fixed constant $\widehat{c}_1>0$ such that 
for any $\vtest \in \mathbb{R}^{\ExtV_{\Lambda,\kk}}_{\geq 0}$, we have
\begin{align}\label{eq:CQLVsCKL}
\vecF^{\pfs}_{\vtest, \kell} \leq \widehat{c}_1\cdot (1+\varepsilon/3)^{\kell} \cdot \SimpleFvec^{\pfs}_{\vtest,\kell} \enspace,
\end{align}
where the inequality between the two vectors is entry-wise.
\end{claim}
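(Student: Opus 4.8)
\textbf{Proof proposal for Claim~\ref{claim:CQLVsCKL}.}
The plan is to compare, entry-by-entry, the matrix $\ExtdInfMatrixL_{\kell}$ (which uses the weights $\{\infweight_{\kP,\kQ}(\ke)\}$ coming from the full $\{\kP,\kQ\}$-extension $\upzeta^{\kP,\kQ}$ and tree $\TPQ=\Tsaw(\gext{G}{\kP,\kQ},\kw)$) with the matrix $\SimpleExtdInfMatrixL_{\kell}$ (which uses only the $\kP$-extension $\upzeta^{\kP}$, the tree $\TPP=\Tsaw(\gext{G}{\kP},\kw)$, and truncates the product at the first $\kell-\kk$ edges). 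Since both vectors in \eqref{eq:CQLVsCKL} are obtained by applying these matrices to the same non-negative test vector $\vtest$ and then raising to the $\pfs$-th power, it suffices to prove the entry-wise bound $\ExtdInfMatrixL_{\kell}(\kP,\kQ) \le \widehat{c}_1^{1/\pfs} (1+\varepsilon/3)^{\kell/\pfs}\,\SimpleExtdInfMatrixL_{\kell}(\kP,\kQ)$ for every compatible $\kP,\kQ$ (non-compatible pairs give $0$ on both sides); then apply $\ExtdInfMatrixL_{\kell}\cdot\vtest$, use non-negativity of all entries and of $\vtest$, and take $\pfs$-th powers coordinatewise, the constant $\widehat{c}_1$ absorbing the $\pfs$-th power.

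The first step is a structural comparison of the two trees. Because $\kP$ and $\kQ$ are compatible, $\dist_G(\kw,\ku)\ge 2\kk$, so the $\kQ$-extension only modifies $G$ in the radius-$\kk$ ball around $\ku$, which is disjoint from the radius-$\kk$ ball around $\kw$. I would invoke (a localized version of) \Cref{lemma:Subtree4TSawU}, or rather the paired statement implicit in \Cref{sec:PropsOfExtensions}: the tree $\TPQ$ can be identified with a subtree of $\TPP$, and the only difference between them lies below the copies of $\kQ$ — precisely, a path in $\TPQ$ from the root to a copy of $\ku$ that is a self-avoiding walk of length $\kell$ in $\gext{G}{\kP,\kQ}$ corresponds to a path in $\TPP$ whose last $\kk$ edges land on a copy of the walk $\kQ$, and whose first $\kell-\kk$ edges coincide. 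Thus the sum defining $\ExtdInfMatrixL_{\kell}(\kP,\kQ)$ is a sum over the \emph{same} index set as $\SimpleExtdInfMatrixL_{\kell}(\kP,\kQ)$ (paths in $\TPP$ to $\cp(\kQ,\kell)$), except that the former carries the full weight-product $\prod_{1\le\ki\le\kell}\abs{\infweight_{\kP,\kQ}(\ke_{\ki})}$ while the latter carries only the truncated product $\prod_{1\le\ki\le\kell-\kk}\abs{\infweight_{\kP}(\ke_{\ki})}$. Two things must then be controlled: (i) the weights $\infweight_{\kP,\kQ}$ along the first $\kell-\kk$ edges differ from the weights $\infweight_{\kP}$ only by a small perturbation, and (ii) the trailing $\kk$ factors $\prod_{\kell-\kk<\ki\le\kell}\abs{\infweight_{\kP,\kQ}(\ke_{\ki})}$ are bounded by an absolute constant.

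For (ii), the $\kk$ trailing factors are each at most $1$ in absolute value for the Hard-core model and, more robustly, each is $O(1)$ since $\abs{\dlogtrecur(\kx)}$ is bounded on the relevant range $\ratiorange$ (this is exactly the Boundedness-type estimate \eqref{eq:BeBoundFromPotBoundedness}/\eqref{eq:BoundednessPF}); so their product is bounded by a fixed $\widehat{c}_1$ depending only on $\maxDeg,\beta,\gamma,\lambda$, in particular not on $\kell$. For (i), which is the crux, I would use \Cref{thrm:ProdWideHatBetaVsBeta}: applied with $\ku,\kw$ swapped (the walk $\kP$ emanating from $\kw$, the distant vertex being $\kw$ itself in the tree rooted appropriately, and the path $\kW$ of length $\kell'=\kell-\kk$ reaching a copy of $\kw$ — one sets things up so that the $\{\kP\}$-extension weights and the $\{\kP,\kQ\}$-extension weights play the roles of $\infweight_{\kP}$ and $\infweight$), it gives $\prod_{1\le\ki\le\kell'}\abs{\infweight_{\kP,\kQ}(\ke_{\ki})}\le \widehat{\kc}_1 (1+\varepsilon/3)^{\kell'/\pfs}\prod_{1\le\ki\le\kell'-\kk}\abs{\infweight_{\kP}(\ke_{\ki})}$ — i.e. passing from the coarser extension to the finer one costs only a factor $(1+\varepsilon/3)^{O(\kell)/\pfs}$ and a constant. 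Combining (i) and (ii) gives the desired per-entry bound with the factor $(1+\varepsilon/3)^{\kell}$ after raising to the $\pfs$-th power (the exponent $\kell/\pfs$ inside becomes $\kell$ once one takes $\pfs$-th powers of the vectors $\vecF_{\vtest,\kell}$ and $\SimpleFvec_{\vtest,\kell}$), and summing over the common index set against $\vtest\ge 0$ preserves the inequality. The main obstacle I anticipate is bookkeeping the tree identification precisely — making sure the index sets of paths really do match up under the subtree embedding and that the split-vertex pinnings (which kill certain weights) are consistent between $\TPQ$ and $\TPP$ so that zero weights on one side correspond to zero weights on the other — and correctly orienting the hypotheses of \Cref{thrm:ProdWideHatBetaVsBeta} (which is stated for trees rooted at the distant vertex $\ku$, whereas here the natural root is $\kw$), which is a matter of applying the same lemma with the roles of the two endpoints exchanged.
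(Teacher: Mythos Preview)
Your approach is correct and is exactly the paper's: compare $\infweight_{\kP,\kQ}$-weights to $\infweight_{\kP}$-weights path by path via \Cref{thrm:ProdWideHatBetaVsBeta}, then sum against $\vtest\ge 0$ and raise to the $\pfs$-th power.

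One cleanup, though: your split into (i) and (ii) with $\kell'=\kell-\kk$ is an unnecessary detour and, as written, leaves a mismatch. The right instantiation of \Cref{thrm:ProdWideHatBetaVsBeta} is: base graph $\gext{G}{\kP}$ in place of $G$, extension walk $\kQ$ (emanating from $\ku$) in place of $\kP$, and the roles of $\kw,\ku$ swapped. Then the tree in the theorem becomes $\Tsaw((\gext{G}{\kP})_{\kQ},\kw)=\TPQ$, the path $\kW$ has full length $\kell$ and ends at a copy of $\ku$ (the first vertex of $\kQ$), and the theorem already yields
\[
\prod_{1\le\ki\le\kell}\abs{\infweight_{\kP,\kQ}(\ke_{\ki})}\ \le\ \widehat{c}_1^{\,1/\pfs}\,(1+\varepsilon/3)^{\kell/\pfs}\prod_{1\le\ki\le\kell-\kk}\abs{\infweight_{\kP}(\ke_{\ki})}\,,
\]
with the truncation to $\kell-\kk$ on the right built in. Your step (ii) is then redundant, and applying the theorem at length $\kell'=\kell-\kk$ as you suggest would leave only $\kell-2\kk$ factors on the right, not matching the definition of $\SimpleExtdInfMatrixL_{\kell}$.
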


In light of \Cref{claim:CQLVsCKL}, \Cref{thrm:HC-VEntryBound} follows by showing that there exists constant 
$\widehat{c}_2>0$ such that, for any $\kell\geq 2\kk$ and any $\kP\in \ExtV_{\Lambda,\kk}$,  we have 
\begin{align}\label{eq:Target4thrm:InfNormBoundHConj}
 \SimpleFvec^{\pfs}_{\vtest,\kell}(\kP) &\leq \widehat{c}_2 \cdot (1-\varepsilon/3)^{\kell} \cdot \vecG^{\pfs}_{\vtest, \kell}(\kP)
\enspace.
\end{align}
Specifically, we get \eqref{eq:thrm:HC-VEntryBoundBB} by plugging \eqref{eq:Target4thrm:InfNormBoundHConj} into 
 \eqref{eq:CQLVsCKL} and setting $\widehat{c}_0=\widehat{c}_1\cdot \widehat{c}_2$.
It remains to show that \eqref{eq:Target4thrm:InfNormBoundHConj} is true. 
The high-level approach to obtain \eqref{eq:Target4thrm:InfNormBoundHConj} 
is not too different from that we use to prove \Cref{thrm:InflNormBound4GeneralD}.

For what follows, we fix $\kell\geq 2\kk$ and vector $\vtest\in \mathbb{R}^{\ExtV_{\Lambda,\kk}}_{\geq 0}$. Also, fix  $\kP \in \ExtV_{\Lambda,\kk}$
while let $\kw$ be the starting vertices of $\kP$.

\begin{claim}\label{claim:AddPotential}
For $\kell\geq 2\kk$ and any path $\kW=\ke_1, \ldots, \ke_{\kell}$  in $\TPP=\Tsaw(\gext{G}{\kP}, \kw)$ such that 
$\prod^{\kell-\kk}_{\ki=1} \abs{\infweight_{\kP}(\ke_{\ki})} \neq 0$, 
we have
\begin{align} \nonumber
\prod^{\kell-\kk}_{\ki=1} \abs{\infweight_{\kP}(\ke_{\ki})} &=\chiofh(\ke_{\kell-\kk}) \cdot \frac{\abs{\infweight_{\kP}(\ke_1)}}{\chiofh(\ke_1)} 
\cdot \prod^{\kell-\kk}_{\ki=2}\frac{\chiofh(\ke_{\ki-1})}{\chiofh(\ke_{\ki})} \cdot \abs{\infweight_{\kP}(\ke_{\ki})} 
\enspace, 
\end{align}
where $\chiofh(\ke)=\xdpotF(\infweight_{\kP}(\ke))$ and $\xdpotF=\potF'$. 
\end{claim}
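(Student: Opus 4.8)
\textbf{Proof plan for \Cref{claim:AddPotential}.}
The claim is the exact analogue, in the $\kP$-extended setting, of \Cref{claim:AddPotentialA}, and the plan is to prove it by the very same telescoping trick, just carried to the truncated length $\kell-\kk$ instead of the full length $\kell$. First I would record that the hypotheses on the $(\pfs,\delta,c)$-potential function $\potF$ guarantee $\xdpotF=\potF'$ is strictly positive on the relevant range, so that for every edge $\ke_{\ki}$ with $1\leq \ki\leq \kell-\kk$ appearing in the product we have $\chiofh(\ke_{\ki})=\xdpotF(\infweight_{\kP}(\ke_{\ki}))\neq 0$; this uses that $\infweight_{\kP}(\ke_{\ki})$ lies in $\ratiorange$ (it is a value $\dlogtrecur(\cdot)$ applied to a log-ratio in the extended tree, which has maximum degree at most $\maxDeg$), exactly as in the proof of \Cref{claim:AddPotentialA}. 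The nonvanishing of $\prod^{\kell-\kk}_{\ki=1}\abs{\infweight_{\kP}(\ke_{\ki})}$ is what lets us divide freely.

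Then the computation is a one-line telescope: insert $\chiofh(\ke_{\ki})/\chiofh(\ke_{\ki})$ into each factor and regroup so that the numerator $\chiofh(\ke_{\ki})$ pairs with the denominator of the next factor, leaving the boundary terms $\chiofh(\ke_{\kell-\kk})$ at the top and $\chiofh(\ke_1)$ at the bottom uncancelled. Concretely,
\begin{align}\nonumber
\prod^{\kell-\kk}_{\ki=1} \abs{\infweight_{\kP}(\ke_{\ki})}
&= \prod^{\kell-\kk}_{\ki=1} \frac{\chiofh(\ke_{\ki})}{\chiofh(\ke_{\ki})} \cdot \abs{\infweight_{\kP}(\ke_{\ki})}
= \chiofh(\ke_{\kell-\kk}) \cdot \frac{\abs{\infweight_{\kP}(\ke_1)}}{\chiofh(\ke_1)} \cdot \prod^{\kell-\kk}_{\ki=2}\frac{\chiofh(\ke_{\ki-1})}{\chiofh(\ke_{\ki})} \cdot \abs{\infweight_{\kP}(\ke_{\ki})}\enspace,
\end{align}
which is exactly the asserted identity. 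The only bookkeeping difference from \Cref{claim:AddPotentialA} is that the product runs over the first $\kell-\kk$ edges of the path $\kW=\ke_1,\ldots,\ke_{\kell}$ rather than all of them — the remaining $\kk$ edges near the leaf simply do not enter the statement — so no new idea is needed.

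There is essentially no obstacle here; the statement is a pure algebraic rearrangement and the whole content is the (already used) observation that the potential-function hypotheses force $\xdpotF$ to be nonzero on $\ratiorange$, so that $\chiofh(\ke)$ never vanishes along the truncated path. If one wants to be careful, the single point worth a sentence is why $\infweight_{\kP}(\ke_{\ki})\in\ratiorange$ for $1\leq \ki\leq \kell-\kk$: this is because $\gext{G}{\kP}$ still has maximum degree at most $\maxDeg$, so every log-ratio in $\Tsaw(\gext{G}{\kP},\kw)$ lies in $\ratiorange$ as defined in \eqref{eq:DefOfRatiorange}, and $\dlogtrecur$ maps $\ratiorange$ into the domain where the Boundedness condition \eqref{eq:BoundednessPF} applies. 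I would then conclude the proof of \Cref{claim:AddPotential} immediately after displaying the telescoped identity.
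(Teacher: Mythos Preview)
Your proposal is correct and matches the paper's intent exactly: the paper does not give a proof of \Cref{claim:AddPotential} at all, stating only that it ``is very similar to \Cref{claim:AddPotentialA}. For this reason, we omit its proof.'' Your telescoping argument is precisely the one used for \Cref{claim:AddPotentialA}, carried to length $\kell-\kk$. One small wording slip: it is not $\infweight_{\kP}(\ke_{\ki})$ that lies in $\ratiorange$ but the underlying log-ratio; nonvanishing of $\chiofh(\ke_{\ki})$ follows simply from $\potF$ being strictly increasing (so $\xdpotF>0$), which is all you need.
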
 
\Cref{claim:AddPotential} is very similar to  \Cref{claim:AddPotentialA}. For this reason, we omit its proof.

Recall that we fix $\vtest \in \mathbb{R}^{\ExtV_{\Lambda,\kk}}_{\geq 0}$.  For each vertex $\kv$ at level $\kh$ of $\TPP=\Tsaw(\gext{G}{\kP}, \kw)$, where 
$0\leq \kh\leq  \kell$, we let $\subcont_{\kv}$ be defined as follows: 
For $\kh=\kell$, we let 
\begin{align}\label{eq:BaseRvUzDef}
\subcont_{\kv} &= \sum\nolimits_{\kQ\in \ExtV_{\Lambda,\kk}: \kP, \kQ \textrm{ compatible}} 
\vtest(\kQ) \cdot \Ind\{\kv \in \cp(\kQ, \kell) \}\enspace.
\end{align}
For $\kell-\kk \leq \kh \leq \kell-1$, letting $\kv_1,\ldots, \kv_{\kd}$ be the children of $\kv$, we let 
\begin{align}\label{eq:1STEPRvUzDef}
\subcont_{\kv} &= \sum\nolimits_{\kv_i} \subcont_{\kv_i}\enspace.
\end{align}
For a vertex $\kv$ at level $0<\kh<\kell-\kk$ of the tree $\TPP$, we have
\begin{align}\label{eq:RecurRUZ}
\subcont_{\kv} & = \chiofh(\kb_{\kv}) \cdot \sum\nolimits_{\ki} \frac{\abs{\infweight_{\kP}(\kb_{\ki})}}{\chiofh(\kb_{\ki})} \cdot \subcont_{\kv_{\ki}}\enspace,
\end{align}
where $\kb_{\ki}$ is the edge that connects $\kv$ to its child $\kv_{\ki}$ and edge $\kb_{\kv}$ connects $\kv$ to its parent in tree $\TPP$. 

Finally, for $\kh=0$, i.e., vertex $\kv$ is identical to the root of $\TPP$, we let 
\begin{align}\label{eq:RecurRUZRoot}
\subcont_\kv=\subcont_{\rm root} &= \max_{\ke_a,\ke_b\in \TPP} \left\{ \chiofh(\ke_a) \cdot \frac{ \abs{\infweight_{\kP}(\ke_b)}}{\chiofh(\ke_b)} \right\} \cdot \subcont_{\pi} \enspace,
\end{align}
where vertex $\pi$ is the single child of the root of $\TPP$.

\begin{claim}\label{claim:ZvecQLVsCalRV}
We have $\SimpleFvec_{\vtest,\kell}(\kP)\leq \subcont_{\rm root}$. 
\end{claim}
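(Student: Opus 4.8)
\textbf{Proof plan for \Cref{claim:ZvecQLVsCalRV}.}
The statement $\SimpleFvec_{\vtest,\kell}(\kP)\leq \subcont_{\rm root}$ mirrors exactly the structure of \Cref{claim:CEllVsCalD} from the proof of \Cref{thrm:InflNormBound4GeneralD}, so the plan is to replay that argument in the present setting, keeping careful track of the fact that now the weight-product runs only over the first $\kell-\kk$ edges of a length-$\kell$ path (because copies of $\kQ$ sit at distance $\kell$ from the root but the matrix entry multiplies only $\abs{\infweight_{\kP}(\ke_1)}\cdots\abs{\infweight_{\kP}(\ke_{\kell-\kk})}$). First I would unfold the definition of $\SimpleFvec_{\vtest,\kell}(\kP)=\big(\SimpleExtdInfMatrixL_{\kell}\cdot\vtest\big)(\kP)=\sum_{\kQ}\SimpleExtdInfMatrixL_{\kell}(\kP,\kQ)\,\vtest(\kQ)$, and then expand each $\SimpleExtdInfMatrixL_{\kell}(\kP,\kQ)$ as the sum over paths $\kW=(\ke_1,\dots,\ke_{\kell})$ in $\TPP$ from the root to $\cp(\kQ,\kell)$ of $\prod_{1\leq\ki\leq\kell-\kk}\abs{\infweight_{\kP}(\ke_\ki)}$. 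Applying \Cref{claim:AddPotential} rewrites each such product in the telescoped form $\chiofh(\ke_{\kell-\kk})\cdot\frac{\abs{\infweight_{\kP}(\ke_1)}}{\chiofh(\ke_1)}\cdot\prod_{2\leq\ki\leq\kell-\kk}\frac{\chiofh(\ke_{\ki-1})}{\chiofh(\ke_\ki)}\abs{\infweight_{\kP}(\ke_\ki)}$, and the prefactor $\chiofh(\ke_{\kell-\kk})\cdot\frac{\abs{\infweight_{\kP}(\ke_1)}}{\chiofh(\ke_1)}$ is at most ${\tt Max}:=\max_{\ke_a,\ke_b\in\TPP}\{\chiofh(\ke_a)\cdot\abs{\infweight_{\kP}(\ke_b)}/\chiofh(\ke_b)\}$, which is exactly the constant appearing in \eqref{eq:RecurRUZRoot}.

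The core of the argument is then an induction that matches the recursive definition of $\subcont_{\kv}$ in \eqref{eq:BaseRvUzDef}--\eqref{eq:RecurRUZRoot} with the path-sum. Concretely, for a vertex $\kv$ at level $\kh$ of $\TPP$ I would prove by downward induction on $\kh$ that $\subcont_{\kv}$ equals the appropriate ``tail sum'': for $\kell-\kk\leq\kh\leq\kell$, $\subcont_{\kv}=\sum_{\kQ}\vtest(\kQ)\cdot\abs{\cp(\kQ,\kell)\cap T_{\kv}}$ (this is immediate from \eqref{eq:BaseRvUzDef} at $\kh=\kell$ and propagates through \eqref{eq:1STEPRvUzDef} since on these levels no $\chiofh$-weights are introduced), and for $0<\kh<\kell-\kk$, $\subcont_{\kv}=\sum_{\kQ}\vtest(\kQ)\sum_{\kW}\prod_{2\leq\ki\leq(\text{remaining length})}\frac{\chiofh(\overline{\ke}_{\ki-1})}{\chiofh(\overline{\ke}_\ki)}\abs{\infweight_{\kP}(\overline{\ke}_\ki)}$, the sum over paths $\kW$ from the parent of $\kv$ down through $\kv$ to the copies of $\kQ$ in $T_{\kv}$, exactly as in the proof of \Cref{claim:CEllVsCalD}. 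The inductive step for $0<\kh<\kell-\kk$ uses \eqref{eq:RecurRUZ} together with the standard bijection between a path through a child $\kv_\ki$ and its one-edge extension through $\kv$, so that the factor $\chiofh(\kb_\kv)/\chiofh(\kb_\ki)\cdot\abs{\infweight_{\kP}(\kb_\ki)}$ from \eqref{eq:RecurRUZ} becomes the $\ki=2$ term of the telescoped product; one also needs the observation that the sets of paths through distinct children partition the set of paths through $\kv$.

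Once the formula for $\subcont_{\kv}$ at level $\kh=1$ (the single child $\pi$ of the root) is in hand, \eqref{eq:RecurRUZRoot} gives $\subcont_{\rm root}={\tt Max}\cdot\subcont_{\pi}$, and $\subcont_{\pi}$ is the full sum $\sum_{\kQ}\vtest(\kQ)\sum_{\kW}\prod_{2\leq\ki\leq\kell-\kk}\frac{\chiofh(\ke_{\ki-1})}{\chiofh(\ke_\ki)}\abs{\infweight_{\kP}(\ke_\ki)}$ over all root-to-$\cp(\kQ,\kell)$ paths. Comparing with the expanded expression for $\SimpleFvec_{\vtest,\kell}(\kP)$ after applying \Cref{claim:AddPotential} and bounding the prefactor by ${\tt Max}$ yields $\SimpleFvec_{\vtest,\kell}(\kP)\leq{\tt Max}\cdot\subcont_{\pi}=\subcont_{\rm root}$, which is the claim. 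The one point requiring genuine care—and the main obstacle—is the transition at level $\kh=\kell-\kk$: the weight-product in $\SimpleExtdInfMatrixL_{\kell}(\kP,\kQ)$ stops at edge $\ke_{\kell-\kk}$ even though the path continues for $\kk$ more edges to reach the copy of $\kQ$, and the definition \eqref{eq:1STEPRvUzDef} correspondingly sums children without weights on those bottom $\kk$ levels; I would need to verify that the set of copies of $\kQ$ at distance $\kell$ in the subtree $T_{\kv}$ below a level-$(\kell-\kk)$ vertex $\kv$ is counted correctly by the weightless recursion \eqref{eq:1STEPRvUzDef}--\eqref{eq:BaseRvUzDef}, i.e., that every copy of $\kQ$ is reached by exactly one such descending path, which is clear since $\TPP$ is a tree.
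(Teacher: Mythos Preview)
Your proposal is correct and follows exactly the approach the paper intends: the paper omits the proof of this claim entirely, stating only that it ``is tedious but very similar to that of \Cref{claim:CEllVsCalD},'' and your plan faithfully adapts that argument to the present setting. You have correctly identified the two adjustments that distinguish this case from \Cref{claim:CEllVsCalD}---the truncated weight-product over only the first $\kell-\kk$ edges and the corresponding unweighted recursion \eqref{eq:1STEPRvUzDef} on levels $\kell-\kk$ through $\kell$---and your handling of the transition at level $\kell-\kk$ is right.
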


The proof of \Cref{claim:ZvecQLVsCalRV} is tedious but very similar to that of \Cref{claim:CEllVsCalD} in the proof of \Cref{thrm:InflNormBound4GeneralD}.
For this reason we omit it.

H\"older's inequality and \eqref{eq:1STEPRvUzDef} imply,  for a vertex $\kv$ at level $\kell-\kk\leq \kh\leq \kell-1$ in $\TPP$, that we have
\begin{align}\label{eq:CaseVHEqLMinusK}
\left( \subcont_{\kv} \right)^{\pfs} &\leq \maxDeg^{\pfs-1} \cdot \sum\nolimits_{\kv_{\ki}}\left(\subcont_{\kv_{\ki}} \right)^{\pfs} \enspace.
\end{align}
The contraction property of $(\pfs,\delta,c)$-potential $\potF$, together with \eqref{eq:RecurRUZ},
imply that for a vertex $\kv$ at level $0<\kh<\kell-\kk$, we have 
\begin{align}\label{eq:CaseVHEqLMinusKSmaller}
\left( \subcont_{\kv} \right)^{\pfs} \leq \delta \cdot \sum\nolimits_{\kv_{\ki}} \left( \subcont_{\kv_{\ki}} \right)^{\pfs} \enspace.
\end{align} 
In light of the above two inequalities and \eqref{eq:BaseRvUzDef}, 
for a vertex $\kv$ at level $0<\kh < \kell-\kk$, we have
\begin{align}\nonumber 
\left( \subcont_{\kv} \right)^{\pfs} &\leq \maxDeg^{\kk\cdot(\pfs-1)}\cdot 
\sum\nolimits_{\kQ \in \ExtV_{\Lambda,\kk}: \kP,\kQ \textrm{ compatible}} \vtest^{\pfs}(\kQ) \cdot \delta^{\kell-\kh-\kk} \cdot \abs{\cp(\kQ,\kell)\cap \TPP_{\kv}} \enspace.
\end{align}
For the single child of the root of $\TPP$, vertex $\pi$, the above implies 
\begin{align}
\left( \subcont_{\pi} \right)^{\pfs} &\leq \maxDeg^{\kk\cdot(\pfs-1)} \cdot \delta^{\kell-\kk-1}\cdot\sum\nolimits_{\kQ \in \ExtV_{\Lambda,\kk}:\kP,\kQ \textrm{ compatible}} 
\vtest^{\pfs}(\kQ) \cdot \abs{\cp(\kQ,\kell)}
 \label{eq:AQLVSSubtree} \\ &
= \maxDeg^{\kk\cdot(\pfs-1)} \cdot \left( {\textstyle \frac{1-\varepsilon}{\SingBound} }\right)^{\kell-\kk-1 }\cdot  ( \TAdj^{(\kell)} \cdot \vtest^{\pfs} ) (\kP) 
\label{eq:cQc2SBound} \\ &
\leq \left( {\textstyle \frac{1-\varepsilon}{\SingBound} }\right)^{-\kk-1}\cdot \maxDeg^{\kk\cdot(\pfs-1)} \cdot (1-\varepsilon/3)^{\kell}\cdot \vecG^{\pfs}_{\vtest, \kell}(\kP)
\enspace. \label{eq:FinalIineqWithVegG}
\end{align}
For \eqref{eq:AQLVSSubtree} we use the observation that since the root of $\TPP$ has only one child, i.e., vertex $\pi$,  
and $\kell\geq 2\kk$, we  have $\cp(\kQ,\kell)\cap \TPP_{\pi}=\cp(\kQ,\kell)$. For \eqref{eq:cQc2SBound} we use the definition 
of matrix $\TAdj^{(\kell)}$ and recall that $\delta=\frac{1-\varepsilon}{\SingBound}$. 
The last inequality follows from the definition of $\vecG_{\vtest, \kell}$ in \eqref{eq:DefOfGvec}.

The boundedness property of the $(\pfs,\delta,c)$-potential and \eqref{eq:RecurRUZRoot} imply 
\begin{align}\label{eq:CQRootVsCRcBoundedness}
\left( \subcont_{\rm root}\right)^{\pfs} &\leq c^{\pfs} \cdot \left( \subcont_{\pi}\right)^{\pfs} 
\leq c^{\pfs} \left( {\textstyle \frac{1-\varepsilon}{\SingBound} }\right)^{-\kk-1 } \cdot \maxDeg^{\kk\cdot(\pfs-1)} \cdot (1-\varepsilon/3)^{\kell}\cdot \vecG^{\pfs}_{\vtest, \kell}(\kP)\enspace, 
\end{align}
where for the second equality, we use \eqref{eq:FinalIineqWithVegG}. 
We get \eqref{eq:Target4thrm:InfNormBoundHConj} by combining \eqref{eq:CQRootVsCRcBoundedness}
and \Cref{claim:ZvecQLVsCalRV} while setting  
$\widehat{c}_2=c^{\pfs}\cdot \maxDeg^{\kk\cdot(\pfs-1)}\cdot \left( {\textstyle \frac{1-\varepsilon}{\SingBound} }\right)^{-\kk-1}$.

All the above conclude the proof of \Cref{thrm:HC-VEntryBound}.
\hfill $\square$

\begin{proof}[Proof \Cref{claim:CQLVsCKL}]
The definition of $\vecF_{\vtest, \kell}$ implies that, for any $\kP \in \ExtV_{\Lambda,\kk}$, we have 
\begin{align}\label{eq:Base4claim:CQLVsCKL}
\vecF_{\vtest, \kell}(\kP)&=\sum\nolimits_{\kQ\in \ExtV_{\Lambda,\kk}: \kP,\kQ \textrm{ compatible}}\vtest(\kQ)\cdot \sum\nolimits_{\kW=(\ke_1, \ldots, \ke_{\kell})}
 \prod\nolimits_{1\leq \ki\leq \kell} \abs{\infweight_{\kP,\kQ}(\ke_{\ki})} \enspace,
\end{align}
where $\kW$ varies over the paths from the root to the vertices in $\cp(\kQ,\kell)$ in tree $\TPQ$.

The edge weights of path $\kW$  are obtained using $\zeta^{\kP,\kQ}$,   the $\{\kP,\kQ\}$-extension of
the Gibbs distribution $\mu^{\Lambda,\tau}_G$. 
To obtain our result, we need to express $ \prod\nolimits_{1\leq \ki\leq \kell} \abs{\infweight_{\kP,\kQ}(\ke_{\ki})}$
using the weights $\{\infweight_{\kP}(\ke)\}$, i.e., using the edge-weights obtained 
using $\zeta^{\kP}$, i.e.,   the $\kP$-extension of the Gibbs distribution $\mu^{\Lambda,\tau}_G$.

To this end,  we  use    \Cref{thrm:ProdWideHatBetaVsBeta}.  This theorem  implies that,  there is $\widehat{c}_1>0$ such that 
for $\kW=\ke_1, \ldots, \ke_{\kell}$ and $\kell\geq 2\kk$ we have
\begin{align}\nonumber
\prod\nolimits_{1\leq \kk\leq \kell} \abs{\infweight_{\kP,\kQ}(\ke_{\ki})}\leq \widehat{c}^{\ 1/\pfs}_1 \cdot (1+\varepsilon/3)^{\kell/\pfs} 
\cdot    \prod\nolimits_{1\leq \ki\leq \kell-\kk} \abs{ \infweight_{\kP}(\ke_{\ki})}\enspace.
\end{align}
Plugging the above into \eqref{eq:Base4claim:CQLVsCKL}, we obtain
\begin{align}\nonumber  
\vecF_{\vtest, \kell}(\kP) &\leq \widehat{c}^{\ 1/\pfs}_1 \cdot (1+\varepsilon/3)^{\kell/\pfs} 
\cdot \sum\nolimits_{\kQ\in\ExtV_{\Lambda,\kk}: \kP,\kQ\textrm{ compatible}}\vtest(\kQ)\cdot \sum\nolimits_{\kW=(\ke_1,\ldots, \ke_{\kell})}
 \prod\nolimits_{1\leq \ki\leq \kell-\kk} \abs{ \infweight_{\kP}(\ke_{\ki})}
 \\&
 = \widehat{c}^{\ 1/\pfs}_1 \cdot (1+\varepsilon/3)^{\kell/\pfs} 
\cdot \sum\nolimits_{\kQ\in\ExtV_{\Lambda,\kk}: \kP,\kQ\textrm{ compatible}}
\vtest(\kQ)\cdot \SimpleExtdInfMatrixL_{\kell}(\kP, \kQ) \nonumber
 \\ &
 =\widehat{c}^{\ 1/\pfs}_1\cdot (1+\varepsilon/3)^{\kell/\pfs} \cdot \SimpleFvec_{\vtest,\kell}(\kP) \nonumber
\enspace.
\end{align}
For the second equality we use the definition of matrix $\SimpleExtdInfMatrixL_{\kell}$.
The third equality follows from the definition of  $\SimpleFvec_{\vtest,\kell}$ in \eqref{eq:DefOfZQELL} and recalling  that for $\kP$ and $\kQ$ which
are not compatible we have $\SimpleExtdInfMatrixL_{\kell}(\kP, \kQ)=0$. 

\Cref{claim:CQLVsCKL} follows.
\end{proof}

\spreadpoint

\section{Proof of \Cref{thrm:ProdWideHatBetaVsBeta} \LastReviewG{2025-03-04}}\label{sec:thrm:ProdWideHatBetaVsBeta}

Let us briefly recall the setting for \Cref{thrm:ProdWideHatBetaVsBeta}.  We have $\kw, \ku\in V\setminus \Lambda$, 
such that $\dist(\kw,\ku)\geq 2\kk$ and  walk $\kP\in \ExtV_{\Lambda,\kk}$ that emanates from $\kw$. Note that our assumptions imply
that  $\ku\notin \kP$.

We apply the $\Tsaw$-construction on $\mu^{\Lambda,\tau}_G$ and  obtain the Gibbs distribution $\mu^{M,\sigma}_{\SMTu}$ 
for tree $\SMTu=\Tsaw(G,\ku)$.   We also use the $\Tsaw$-construction for the $\kP$-extension of $\mu^{\Lambda,\tau}_G$ 
and get  the Gibbs distribution $\mu^{\kL,\eta}_{\SMTPu}$ for tree $\SMTPu=\Tsaw(\gext{G}{\kP},\ku)$.

We let set $\cpT$ consist of each vertex in $\SMTu$ which is a copy of a vertex $\kv\in P$. Also, let set $\cpTP$ 
consist of each vertex in $\SMTPu$ which is either a copy of a vertex $\kv\in \kP$, or  a copy of a split-vertex in $\kv\in \ssplit_{\kP}$.

We use \Cref{lemma:Subtree4TSawU} and identify $\SMTPu$ as a subtree of $\SMTu$. This allows us to consider  $(\kL, \eta)$ as a pinning 
at the vertices $\SMTu$ and compare it with the pinning $(M,\sigma)$.  Among other results, this is carried out in  the following lemma.

\begin{lemma}\label{lemma:BoundaryTVsBoundaryTPFromU}
We identify $\SMTPu=\Tsaw(\gext{G}{\kP},\ku)$ as a subtree of $\SMTu=\Tsaw(G,\ku)$ and we consider $\cpTP$ as a subset of vertices of $\SMTu$
and 
$(\kL, \eta)$   as a pinning of vertices in $\SMTu$. 

Then, the following is true:
\begin{enumerate}[(a)]
\item the pinnings $\sigma$ and $\eta$ agree on the assignment of the vertices in $\kL\cap M$, \label{statement:lemma:BoundaryTVsBoundaryTPFromUA}
\item we have $\cpT \cap \SMTPu= \cpTP$, \label{stat:B-lemma:BoundaryTVsBoundaryTPFromU} \label{statement:lemma:BoundaryTVsBoundaryTPFromUB}
\item we, also, have $\kL\setminus M \subseteq \cpT$ and $M\cap \SMTPu \subseteq \kL$, \label{stat:C-lemma:BoundaryTVsBoundaryTPFromU}
\item letting $\kK=\kL\cup \cpTP$ and $\widehat{\eta}\in \{\pm 1\}^{\kK}$, the distribution 
$\mu^{\kK, \widehat{\eta}}_{\SMTPu}$ and the marginal of $\mu^{\kK, \widehat{\eta}}_{\SMTu}$ at the subtree $\SMTPu$ are identical. 
\label{statement:lemma:BoundaryTVsBoundaryTPFromUD}
\end{enumerate}
\end{lemma}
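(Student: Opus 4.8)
The plan is to prove \Cref{lemma:BoundaryTVsBoundaryTPFromU} by carefully unwinding the identification of $\SMTPu$ as a subtree of $\SMTu$ provided by \Cref{lemma:Subtree4TSawU}, and then tracking how the $\Tsaw$-construction assigns pinnings in the two trees. The key conceptual point is that $\gext{G}{\kP}$ differs from $G$ only by (i) deleting edges incident to vertices on $\kP$ and (ii) replacing each deleted edge $\ku'\kz$ (with $\ku'\in \kP$) by a pendant split-vertex $\ku'\kz$ pinned according to \eqref{eq:DefOfTau_x}. So a walk in $\gext{G}{\kP}$ starting from $\ku$ either stays entirely in the ``unmodified'' part of the graph (and then it is also a walk in $G$), or it reaches a vertex on $\kP$ and is then forced to continue along $\kP$ (or stop at a split-vertex). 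First I would set up notation for the bijection $\uph$ from \Cref{lemma:Subtree4TSawU} and record the basic dichotomy: a vertex of $\SMTu$ lies in $\SMTPu$ iff the corresponding self-avoiding walk in $G$, once it first hits $\kP$, proceeds along $\kP$ in the prescribed direction.

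For statement \eqref{statement:lemma:BoundaryTVsBoundaryTPFromUA}, I would compare the two sources of pinning. In $\SMTu$, the pinned set $M$ consists of copies of $\Lambda$-vertices (pinned by $\tau$) together with ``loop-closing'' copies pinned by the total-ordering rule in (a)--(b) after \eqref{def:OfInfluenceWeights}. In $\SMTPu$, the pinned set $\kL$ consists of: copies of $\Lambda$-vertices (pinned by $\tau$), copies of split-vertices $\ku'\kz\in\ssplit_{\kP}$ (pinned by \eqref{eq:DefOfTau_x}), and loop-closing copies in $\gext{G}{\kP}$. The claim is that on $\kL\cap M$ these agree. For copies of $\Lambda$-vertices this is immediate. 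For loop-closing copies, one must check that a walk closes a loop in $\gext{G}{\kP}$ exactly when the corresponding walk closes the same loop in $G$ away from the modified part, and that the total-ordering rule gives the same sign; this is where one uses that the identification $\uph$ respects the ``copy of which vertex'' structure and the total ordering is the fixed one on $V(G)$. I expect this verification — especially reconciling the split-vertex pinning \eqref{eq:DefOfTau_x} with Weitz's loop-closing pinning rule — to be the main obstacle, since the two rules have superficially different forms and one must show they are consistent along any walk that threads through $\kP$.

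For statements \eqref{statement:lemma:BoundaryTVsBoundaryTPFromUB} and \eqref{stat:C-lemma:BoundaryTVsBoundaryTPFromU}, I would argue combinatorially from the subtree identification. Statement \eqref{statement:lemma:BoundaryTVsBoundaryTPFromUB}, $\cpT \cap \SMTPu = \cpTP$, says: a vertex of $\SMTPu$ is a copy (in $\SMTu$) of a $\kP$-vertex iff it is a copy (in $\SMTPu$) of a $\kP$-vertex or a $\ssplit_{\kP}$-vertex — the split-vertices $\ku'\kz$ of $\gext{G}{\kP}$ being precisely the ``phantom'' neighbours that in $G$ would be genuine neighbours of $\ku'\in\kP$. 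This follows by inspecting how step (b) of the $\kP$-extension relabels the neighbours of $\kP$-vertices. For \eqref{stat:C-lemma:BoundaryTVsBoundaryTPFromU}: $\kL\setminus M\subseteq \cpT$ holds because the only pinned vertices of $\SMTPu$ that are not pinned in $\SMTu$ are copies of split-vertices, which by \eqref{statement:lemma:BoundaryTVsBoundaryTPFromUB} are copies of $\kP$-vertices in $\SMTu$, hence in $\cpT$; and $M\cap\SMTPu\subseteq \kL$ holds because a loop-closing or $\Lambda$-copy of $\SMTu$ that survives into $\SMTPu$ is still pinned there (using part of \eqref{statement:lemma:BoundaryTVsBoundaryTPFromUA} and the fact that restricting the graph cannot un-close a loop).

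Finally, statement \eqref{statement:lemma:BoundaryTVsBoundaryTPFromUD} — that $\mu^{\kK,\widehat\eta}_{\SMTPu}$ equals the marginal on $\SMTPu$ of $\mu^{\kK,\widehat\eta}_{\SMTu}$ when we pin everything in $\kK=\kL\cup\cpTP$ — I would prove via the spatial Markov property of Gibbs distributions on trees. Conditioned on the configuration of $\cpTP$, the complement of $\SMTPu$ in $\SMTu$ is separated from the interior of $\SMTPu$: every path in $\SMTu$ from a vertex of $\SMTu\setminus\SMTPu$ to a non-pinned vertex of $\SMTPu$ must pass through $\cpTP$, because (by \Cref{lemma:Subtree4TSawU}) every vertex of $\SMTu\setminus\SMTPu$ is a descendant of a vertex of $\cpT$, and by \eqref{statement:lemma:BoundaryTVsBoundaryTPFromUB} the vertices of $\cpT$ lying on the boundary of $\SMTPu$ are exactly those of $\cpTP$. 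Since $\mu_{\SMTu}$ is a nearest-neighbour Gibbs measure on a tree, conditioning on $\cpTP\cup\kL$ makes $\SMTPu$ conditionally independent of its complement, and the conditional law on $\SMTPu$ depends only on the induced subgraph $\SMTPu$ and the parameters $(\beta,\gamma,\lambda)$ — which is exactly the specification of $\mu_{\SMTPu}$. Matching up the pinnings on $\kK$ then gives the stated equality. I would then conclude that \Cref{lemma:BoundaryTVsBoundaryTPFromU} follows, with the understanding that the truly delicate point, flagged above, is the pinning-consistency check in part \eqref{statement:lemma:BoundaryTVsBoundaryTPFromUA}.
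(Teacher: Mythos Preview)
Your approach is correct and essentially matches the paper's: classify the pinned vertices in $\SMTu$ and $\SMTPu$ by the type of walk they represent (copies of $\Lambda$-vertices, copies of split-vertices, loop-closing vertices), track them through the bijection $\uph$ from \Cref{lemma:Subtree4TSawU}, and then invoke the tree Markov property for part \eqref{statement:lemma:BoundaryTVsBoundaryTPFromUD}.

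One correction, though: your flagged ``main obstacle'' in part \eqref{statement:lemma:BoundaryTVsBoundaryTPFromUA} --- reconciling the split-vertex pinning rule \eqref{eq:DefOfTau_x} with Weitz's loop-closing rule --- does not actually arise here. In the setting of this lemma the tree of self-avoiding walks is rooted at $\ku$, which is far from $\kP$. A copy of a split-vertex $\ku'\kz\in\ssplit_{\kP}$ in $\SMTPu$ corresponds (via $\uph$) to a \emph{self-avoiding} walk in $G$ from $\ku$ to the $\kP$-vertex $\ku'$; since $\kP$ does not intersect $\Lambda$ and this walk does not close a loop, the corresponding vertex of $\SMTu$ is \emph{not} in $M$. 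Thus every split-vertex copy lies in $\kL\setminus M$, not in $\kL\cap M$, and this is exactly how the paper proves the first half of \eqref{stat:C-lemma:BoundaryTVsBoundaryTPFromU}. Consequently $\kL\cap M$ contains only $\Lambda$-copies and loop-closing vertices whose walks avoid the modified edges entirely; for both types the pinning rule is literally the same in the two trees, so \eqref{statement:lemma:BoundaryTVsBoundaryTPFromUA} is immediate. You may be thinking of \Cref{lemma:GibbsWeightsTpVsP}, where the tree is rooted at $\kw$ (the start of $\kP$): there the split-vertex copies \emph{do} coincide with loop-closing vertices of $\Tsaw(G,\kw)$, and the compatibility of \eqref{eq:DefOfTau_x} with the Weitz rule is genuinely used.
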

The proof of \Cref{lemma:BoundaryTVsBoundaryTPFromU} appears in \Cref{sec:lemma:BoundaryTVsBoundaryTPFromU}. 
%

Let $\{\infweight(\ke)\}$ be the collection of weights over the edges of $\SMTu$ that arise from $\mu^{M,\sigma}_{\SMTu}$, i.e., 
as specified in \eqref{def:OfInfluenceWeights}.
Also, let $\{\infweight_{\kP}(\ke)\}$ be the collection of weights over the edges of $\SMTPu$ that arise from $\mu^{\kL,\eta}_{\SMTPu}$.
We identify $\SMTPu$ as a subtree of $\SMTu$ and we endow $\SMTPu$ with the collection of weights $\{\infweight(\ke)\}_{\ke\in \SMTPu}$. 
 Now, we have two sets of weights for the edges of $\SMTPu$, i.e., $\{\infweight(\ke)\}_{\ke\in \SMTPu}$ and $\{\infweight_{\kP}(\ke)\}_{\ke\in \SMTPu}$.

The following proposition implies that, provided that we have an appropriately contracting potential function $\potF$, 
 for any edge $\kf\in \SMTPu$ which is far from $\cpTP$, weights $\infweight(\kf)$ and $\infweight_{\kP}(\kf)$ are 
 close to each other. Recall that set $\cpTP$ consists of each vertex in $\SMTPu$ which is either a copy of a vertex 
 $\kv\in \kP$ or  a copy of a split-vertex in  $\ssplit_{\kP}$.

\begin{proposition}\label{prop:StableSSM}
Let $\maxDeg > 1$, $\kk\geq 1$ and $N>1$ be integers, $c>0$, $\SingBound> 1$, $\pfs \geq 1$, while 
let $b,\varepsilon\in (0,1)$. Also, let $\beta,\gamma, \lambda\in \mathbb{R}$ be such that $\gamma >0$,
$0\leq \beta\leq \gamma$ and $\lambda>0$. 

Consider graph $G=(V,E)$ of maximum degree $\maxDeg$ such that 
$\nnorm{ (\NBMatrix)^N}{1/N}{2}= \SingBound$.
Assume that $\mu$, the Gibbs distribution on $G$ specified by the parameters $(\beta, \gamma, \lambda)$,
is $b$-marginally bounded.
For $\delta= \frac{1-\varepsilon}{\SingBound}$, suppose that there
is a $(\pfs,\delta,c)$-potential function $\potF$ with respect to $(\beta,\gamma,\lambda)$. 

There exists a bounded number $\ell_0>2\kk$ such that, 
for any $\Lambda\subset V$ and $\tau\in \{\pm 1\}^{\Lambda}$,
for any $\ku,\kw\in V\setminus \Lambda$ with $\dist(\ku,\kw)\geq 2\kk$ and for $\kP\in \ExtV_{\Lambda,\kk}$ 
that emanates from  $\kw$  the following is true:

For any edge $\kf$ in $\SMTPu=\Tsaw(\gext{G}{\kP}, \ku)$ which is at distance $\kell\geq \ell_0$ from $\cpTP$, we have 
\begin{align}\label{eq:prop:StableSSM}
\infweight_{\kP}(\kf)&\leq (1+\varepsilon/3)^{1/\pfs} \cdot \infweight(\kf) \enspace, 
\end{align}
where the set $\cpTP$ and the weights $\{ \infweight_{\kP}(\ke)\}$, $\{\infweight(\ke)\}$ are specified above.
\end{proposition}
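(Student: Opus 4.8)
## Proof Proposal for Proposition~\ref{prop:StableSSM}

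The plan is to compare the two edge-weight systems $\{\infweight(\ke)\}$ and $\{\infweight_{\kP}(\ke)\}$ on the common subtree $\SMTPu$ by tracking how the underlying ratios of Gibbs marginals differ at each vertex. Recall from \eqref{def:OfInfluenceWeights} that the weight of an edge $\ke=\{\kx,\kz\}$ (with $\kx$ the parent of $\kz$) is $\dlogtrecur(\log \gratio^{\cdot,\cdot}_{\kz})$, evaluated at the appropriate ratio of marginals in the corresponding tree. The two weights $\infweight(\kf)$ and $\infweight_{\kP}(\kf)$ therefore differ only because the ratio $\gratio_{\kz}$ at the child endpoint $\kz$ of $\kf$ is computed under two different pinnings: $(M,\sigma)$ for $\SMTu$ versus $(\kL,\eta)$ for $\SMTPu$. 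By \Cref{lemma:BoundaryTVsBoundaryTPFromU}, these pinnings agree on $\kL\cap M$, and the only place where they can genuinely disagree is at vertices in $\cpTP$ (this is exactly statements (a)--(c) of that lemma). So the whole effect is a ``boundary perturbation'' localised at $\cpTP$, and we need it to decay as we move a distance $\kell$ away from $\cpTP$.

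First I would set up the correlation-decay machinery: for a vertex $\kz$ in $\SMTPu$ at distance $\kell$ from $\cpTP$, express $\log\gratio^{(\kL,\eta)}_{\kz}$ and $\log\gratio^{(M,\sigma)}_{\kz}$ via the recursion $\logtrecur_{\kd}$ iterated $\kell$ times, starting from whatever (possibly different) values the two systems assign near $\cpTP$. The derivative bound $|\dlogtrecur|$ controlled by the $(\pfs,\delta,c)$-potential gives contraction in the potential-transformed coordinate $\potF$; concretely, one level of the recursion contracts the $\potF$-distance between the two ratio-vectors by a factor governed by $\delta^{1/\pfs}$ times a $\|\cdot\|_{\pfs}$-to-$\|\cdot\|_{\infty}$ comparison along the tree. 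Since the branching in $\SMTPu$ is at most $\maxDeg-1$ and $\delta=\tfrac{1-\varepsilon}{\SingBound}$ with $\SingBound>1$, after enough levels the discrepancy between $\log\gratio^{(\kL,\eta)}_{\kz}$ and $\log\gratio^{(M,\sigma)}_{\kz}$ becomes arbitrarily small — this is where the ``bounded number $\ell_0$'' is produced: choose $\ell_0$ large enough that the accumulated discrepancy is below whatever threshold makes $\dlogtrecur$ at the perturbed ratio within a $(1+\varepsilon/3)^{1/\pfs}$ multiplicative factor of $\dlogtrecur$ at the unperturbed ratio. Here I would use that $\dlogtrecur$ is Lipschitz (indeed smooth) on the relevant range $\ratiorange$, together with $b$-marginal boundedness to keep all ratios in a compact region where $|\dlogtrecur|$ is bounded below away from $0$ (or handle the degenerate case $1-\beta\gamma=0$, i.e.\ $\beta\gamma=1$, separately since then $\dlogtrecur\equiv 0$ and the claim is trivial).

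The key steps in order: (1) invoke \Cref{lemma:BoundaryTVsBoundaryTPFromU} to reduce the problem to a pinning that differs only on $\cpTP$; (2) for an edge $\kf=\{\kx,\kz\}$ at distance $\kell\ge\ell_0$ from $\cpTP$, write both $\log\gratio_{\kz}$'s as outputs of $\kell$-fold iterated $\logtrecur$-recursions and bound their difference by iterating the contraction estimate coming from the $(\pfs,\delta,c)$-potential; (3) quantify the decay: the discrepancy is at most $C\cdot(\text{contraction factor})^{\kell}$ for a constant $C$ depending only on $\maxDeg,\SingBound,b$ (using $b$-marginal boundedness to bound the initial discrepancy uniformly); (4) choose $\ell_0$ so this is small enough that passing through the Lipschitz function $\dlogtrecur$ yields $\infweight_{\kP}(\kf)\le (1+\varepsilon/3)^{1/\pfs}\cdot\infweight(\kf)$ — taking care that $\infweight(\kf)$ and $\infweight_{\kP}(\kf)$ have the same sign (they do, since $\dlogtrecur$ has constant sign $-\operatorname{sgn}(1-\beta\gamma)$ on all of $\ratiorange$), so the multiplicative comparison is legitimate.

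The main obstacle I expect is step (2)--(3): making the correlation-decay estimate respect the potential-function weighting rather than just the crude $\delta$-contraction. Plain $\delta$-contraction gives decay like $((\maxDeg-1)\delta)^{\kell}$, which need \emph{not} be $<1$ when $\SingBound\ll\maxDeg$; the whole point of the $(\pfs,\delta,c)$-potential is that the \emph{right} amortised quantity contracts, and one has to feed the difference of the two ratio-vectors through the same $\|\cdot\|_{\pfs}$-based inequality \eqref{eq:contractionRelationPF} that is used elsewhere in the paper, with an extra Hölder step to convert back to a per-edge bound. Additionally there is bookkeeping around the fact that near $\cpTP$ the two trees really are the \emph{same} subtree with the \emph{same} further structure (this is what \Cref{lemma:Subtree4TSawU} and \Cref{lemma:BoundaryTVsBoundaryTPFromU}(b),(d) are for), so the only injected discrepancy is the pinning flip at split-vertices, which is a bounded, one-time perturbation — I would be careful to state this reduction cleanly before turning the crank on the contraction.
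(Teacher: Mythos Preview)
Your high-level structure is right and matches the paper: reduce to a boundary disagreement localised at $\cpTP$ (via \Cref{lemma:BoundaryTVsBoundaryTPFromU}), show the difference in ratios at the child of $\kf$ decays with distance $\kell$, then push this through the Lipschitz map $\dlogtrecur$ (the paper does this via the mean value theorem applied to $\Phi_d$, your step (4)). You also correctly flag the real obstacle: naive $\delta$-contraction accumulates a branching factor $(\maxDeg-1)$ per level and $(\maxDeg-1)\delta$ need not be $<1$.

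The genuine gap is in how you propose to get past that obstacle. Iterating the $(\pfs,\delta,c)$-contraction \eqref{eq:contractionRelationPF} for $\kell$ levels gives, schematically,
\[
\text{(discrepancy at }\kz)\;\lesssim\;\bigl(\delta^{\kell}\cdot K(\kell)\bigr)^{1/\pfs},
\]
where $K(\kell)$ is the number of paths of length $\kell$ in $\SMTPu$ from $\kz$ to $\cpTP$ (these are the only leaves carrying nonzero initial disagreement). Your ``extra H\"older step'' does not eliminate $K(\kell)$; it merely packages it. For the product $\delta^{\kell}K(\kell)$ to decay you still need $K(\kell)\le(\text{something}\approx\SingBound)^{\kell}$, and nothing in the potential function or the $\|\cdot\|_{\pfs}$-inequality gives you that.

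The paper supplies exactly this missing ingredient in the proof of its \Cref{claim:Target4prop:StableSSM}: paths from $\kz$ to $\cpTP$ in $\SMTPu$ correspond to self-avoiding walks in $G$ from a fixed vertex to the vertices of $\kP$; every self-avoiding walk is a $\kk$-non-backtracking walk; hence $K(\kell)$ is bounded by a short sum of entries of $(\NBMatrix)^{\kell-\kk}$, and then \Cref{claim:SigmaLVsPathNumber} together with \Cref{lemma:SingSequenConv} give $K(\kell)\le((1+\varepsilon/2)\SingBound)^{\kell}$ for $\kell$ large. Combined with $\delta=(1-\varepsilon)/\SingBound$ this yields the exponential decay. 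So the crucial idea you are missing is to route the path-count through the spectral hypothesis $\|(\NBMatrix)^N\|_2^{1/N}=\SingBound$; without it your outline does not close. (A secondary point: you should also separately dispose of the case $\infweight(\kf)=0$ coming from a nearby pinning, not just from $\beta\gamma=1$; the paper does this in \Cref{claim:AlphaPZero}.)
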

The proof of \Cref{prop:StableSSM} appears in \Cref{sec:prop:StableSSM}.

\begin{proof}[Proof of \Cref{thrm:ProdWideHatBetaVsBeta}]
We consider path $\kW=\ke_1,\ldots, \ke_{\kell}$ from the root of $\SMTPu=\Tsaw(\gext{G}{\kP}, \ku)$ to a copy of vertex $\kw$.
Note that we specify $\kW$ it terms of its edges.

Recall  the two sets of edge-weights for $\SMTPu$, i.e., set  $\{\infweight(\ke)\}$ is induced by the Gibbs distribution $\mu^{M,\sigma}_{\SMTu}$, 
 and set $\{ \infweight_{\kP}(\ke)\}$ is induced by the Gibbs distribution $\mu^{\kL,\eta}_{\SMTPu}$.

Also, recall that set $\cpTP$ consists of each vertex in $\SMTPu$  which is either a copy of a vertex 
$\kv\in \kP$, or a copy of a split-vertex of $\kv\in \kP$.

Let $\infweight_{\rm min}=\min_{\ke}\{ \abs{\infweight_{\kP} (\ke)}, \abs{{\infweight}(\ke)} \}$,
where $\ke$ varies over the set of edges in $\SMTPu$ such that $ \infweight_{\kP} (\ke), \infweight(\ke)\neq 0$. 
Similarly, we define $\infweight_{\rm max}=\max_{\ke}\{\abs{\infweight_{\kP}(\ke)}, \abs{\infweight(\ke)}\}$
where $\ke$ varies  over the same set of edges. 
It is an easy exercise to show that our assumption about $b$-marginal boundedness, for fixed $b>0$, 
implies that $\infweight_{\rm min}$ and $\infweight_{\rm max}$ are constants bounded away from zero. 

Consider $\ell_0$ as specified in \Cref{prop:StableSSM}. 
Let set $H$ consist of the edges  $\ke_{\ki} \in\kW$ such that $\ki>\kell-\kk$.  Also, let set $\kK$ consist of the edges $\ke_{\ki}\in \kW$ such that 
$1\leq \ki \leq \kell-\kk$ and $\dist(\ke_{\ki}, \cpTP )<\ell_0$. Similarly, let set $R$ consist of the edges $\ke_{\ki}\in \kW$ such 
that $1\leq \ki \leq  \kell-\kk$ and $\dist(\ke_{\ki}, \cpTP )\geq \ell_0$.
We have 
\begin{align}
\prod\nolimits_{1\leq i\leq \kell}\abs{\infweight_{\kP}(\ke_{\ki})} &= \prod\nolimits_{\ke_{\ki}\in H}\abs{\infweight_{\kP}(\ke_{\ki})} \cdot
\prod\nolimits_{\ke_{\ki} \in \kK}\abs{\infweight_{\kP}(\ke_{\ki})}
\cdot
\prod\nolimits_{\ke_{\ki} \in R}\abs{\infweight_{\kP}(\ke_{\ki})} \enspace. \label{eq:splitProdHatBetas}
\end{align}

For our subsequent arguments, we use the following result. 

\begin{claim}\label{claim:SupportOfWeiS}
For $\kW=\ke_1,\ldots, \ke_{\kell}$ and $\{\infweight_{\kP}(\ke_{\ki})\}$, $\{\infweight(\ke_{\ki})\}$, 
the following is true:  if $\infweight_{\kP}(\ke_{\ki})\neq 0$, then we have $\infweight(\ke_{\ki})\neq 0$, for each $\ke_{\ki}\in \kK\cup R$.
\end{claim}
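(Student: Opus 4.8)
\textbf{Proof proposal for \Cref{claim:SupportOfWeiS}.}

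The plan is to trace through the $\Tsaw$-construction and identify precisely when an edge weight vanishes. Recall from \eqref{def:OfInfluenceWeights} that for an edge $\ke=\{\kx,\kz\}$ of a tree of self-avoiding walks (with $\kx$ the parent of $\kz$), the weight $\infweight(\ke)$ is zero exactly when there is a pinning at $\kx$ or at $\kz$, and otherwise it equals $\dlogtrecur(\log \gratio_{\kz})$ evaluated in the appropriate pinned subtree. A quick inspection of \eqref{eq:DerivOfLogRatio} shows $\dlogtrecur(\kx)$ is never zero under our assumptions ($0 \le \beta \le \gamma$, $\gamma>0$, $\lambda>0$, and $\beta\gamma\neq 1$ in the relevant regime; in the Ising case $\beta=\gamma<1$ so $1-\beta\gamma>0$, and in the Hard-core case $\beta=0$ so $1-\beta\gamma=1$). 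Hence for an edge of $\SMTPu$, whether its weight with respect to $\mu^{M,\sigma}_{\SMTu}$ (i.e.\ $\infweight(\ke)$) or with respect to $\mu^{\kL,\eta}_{\SMTPu}$ (i.e.\ $\infweight_{\kP}(\ke)$) is zero is governed \emph{solely} by whether one of its endpoints is pinned under $(M,\sigma)$ or $(\kL,\eta)$, respectively. So the claim reduces to: for each $\ke_{\ki}\in \kK\cup R$, if neither endpoint of $\ke_{\ki}$ is pinned under $(\kL,\eta)$, then neither endpoint is pinned under $(M,\sigma)$.

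The key tool is \Cref{lemma:BoundaryTVsBoundaryTPFromU}, which (after identifying $\SMTPu$ as a subtree of $\SMTu$) tells us that $\kL \setminus M \subseteq \cpT$ and that $M \cap \SMTPu \subseteq \kL$. The second inclusion is the one I would use directly: if a vertex $\kv$ of $\SMTPu$ is pinned under $(M,\sigma)$ — that is, $\kv \in M \cap \SMTPu$ — then $\kv \in \kL$, so it is pinned under $(\kL,\eta)$ as well. Contrapositively, if neither endpoint of $\ke_{\ki}$ lies in $\kL$, then neither lies in $M$, hence $\infweight_{\kP}(\ke_{\ki})\neq 0$ forces (via the endpoint characterization above) $\infweight(\ke_{\ki})\neq 0$.

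Here I must be a little careful about one subtlety: the weights $\{\infweight(\ke)\}$ are originally defined on \emph{all} of $\SMTu$, and a vertex $\kv$ of the subtree $\SMTPu$ might pick up a pinning in the full tree $\SMTu$ that does \emph{not} come from $M$ but rather from the $\Tsaw$-construction's own loop-closing pinnings (the rule after \eqref{def:OfInfluenceWeights} giving $\pm 1$ at copies closing short cycles). However, since $\SMTPu$ is identified as a genuine subtree of $\SMTu$ with the same root, any vertex of $\SMTPu$ and any edge of $\SMTPu$ sees the same ancestor-path inside $\SMTu$, so the loop-closing pinnings that fall on vertices of $\SMTPu$ are identical in both constructions; these are exactly the pinnings recorded inside $M$ and inside $\kL$ respectively (they are part of $\sigma$, resp.\ $\eta$), and part \ref{statement:lemma:BoundaryTVsBoundaryTPFromUA} of \Cref{lemma:BoundaryTVsBoundaryTPFromU} guarantees $\sigma$ and $\eta$ agree on $\kL \cap M$. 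So a vertex of $\SMTPu$ is pinned in $\SMTu$ (for \emph{any} reason) iff it lies in $M$, and is pinned in $\SMTPu$ iff it lies in $\kL$; combined with $M \cap \SMTPu \subseteq \kL$ this closes the argument.

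The main obstacle I anticipate is precisely bookkeeping this subtlety cleanly — making sure the identification of $\SMTPu$ with a subtree of $\SMTu$ is compatible with \emph{all} sources of pinning (boundary pinnings from $\Lambda$, split-vertex pinnings from the extension, and $\Tsaw$ loop-closing pinnings), so that ``$\infweight_{\kP}(\ke)=0$'' and ``$\infweight(\ke)=0$'' are each equivalent to a purely combinatorial condition on endpoint membership in $\kL$, resp.\ $M$. Once that equivalence is in hand, the claim is an immediate consequence of the inclusion $M\cap \SMTPu \subseteq \kL$ from \Cref{lemma:BoundaryTVsBoundaryTPFromU}, and the restriction to $\ke_{\ki}\in \kK\cup R$ (rather than $H$) plays no role beyond matching the indexing used in \eqref{eq:splitProdHatBetas}. \hfill $\square$
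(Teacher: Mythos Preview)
Your reduction to ``endpoint pinned $\Leftrightarrow$ weight zero'' is where the argument fails. From \eqref{def:OfInfluenceWeights}, when neither endpoint of $\ke=\{\kx,\kz\}$ is pinned one has $\infweight(\ke)=\dlogtrecur(\log \gratio_{\kz})$, and while $\dlogtrecur(y)\neq 0$ for \emph{finite} $y$, the log--ratio $\log \gratio_{\kz}$ is not guaranteed to be finite. In the Hard-core case ($\beta=0$), if a \emph{child} of $\kz$ is pinned to $+1$ the independent-set constraint forces $\mu_{\kz}(+1)=0$, hence $\gratio_{\kz}=0$, $\log\gratio_{\kz}=-\infty$, and $\dlogtrecur(-\infty)=0$. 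So $\infweight(\ke)$ can vanish with both endpoints unpinned. The $b$-marginal-boundedness hypothesis does not block this: it only lower-bounds marginals \emph{within the support}. Since \Cref{thrm:ProdWideHatBetaVsBeta} is stated for general $0\le\beta\le\gamma$ and feeds directly into the Hard-core results, this case must be handled.

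This is exactly why the paper enlarges the relevant vertex set to $\uppi_{\ki}=\{\kx_{\ki},\kz_{\ki}\}\cup\{\text{children of }\kx_{\ki},\kz_{\ki}\text{ in }\SMTu\}$ and devotes the bulk of the proof to showing $\uppi_{\ki}\subseteq\SMTPu$. That inclusion is where the restriction $\ki\le\ell-\kk$ (i.e.\ $\ke_{\ki}\in K\cup R$) is genuinely used: along $\kW$ the positions $\le\ell-\kk$ cannot be copies of $\kx_j$ with $j<\kk$, which is precisely what would be needed for a child to drop out of $\SMTPu$ (edges removed in forming $\gext{G}{\kP}$ all touch $\{\kx_0,\dots,\kx_{\kk-1}\}$). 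Once $\uppi_{\ki}\subseteq\SMTPu$ is established, your use of $M\cap\SMTPu\subseteq\kL$ together with the agreement of $\sigma,\eta$ on $\kL\cap M$ (parts \ref{statement:lemma:BoundaryTVsBoundaryTPFromUA} and \ref{stat:C-lemma:BoundaryTVsBoundaryTPFromU} of \Cref{lemma:BoundaryTVsBoundaryTPFromU}) does finish the job. So your closing remark that the restriction to $K\cup R$ ``plays no role beyond matching the indexing'' is incorrect; it is essential in the hard-constraint regime.
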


For $\infweight_{\kP}(\ke_i)$ such that  $\ke_{\ki}\in H$,   we use the standard bound 
\begin{align}\label{eq:WeightEllUBound}
\abs{\infweight_{\kP}(\ke_{\ki})} &\leq 1\enspace. 
\end{align}
For each edge $\ke_{\ki}\in \kK$, 
we have 
\begin{align}\label{eq:WeightPVsWeightClose3LM}
\abs{ \infweight_{\kP}(\ke_{\ki}) } \leq \frac{\infweight_{\rm max}}{\infweight_{\rm min}} \cdot \abs{\infweight(\ke_{\ki})}
\ \leq \ (1+\varepsilon/3)^{{1}/{\pfs}} \cdot \frac{\infweight_{\rm max}}{\infweight_{\rm min}} \cdot \abs{\infweight(\ke_{\ki})} \enspace.
\end{align}
We use  \Cref{claim:SupportOfWeiS} for the first inequality. The second inequality 
uses that $1<(1+\varepsilon/2)^{{1}/{\pfs}}$ for $\varepsilon, \pfs>0$.

 \Cref{prop:StableSSM} implies that, for every $\ke_{\ki} \in R$, we have
\begin{align}\label{eq:WeightPVsWeightFarfromLM}
\abs{ \infweight_{\kP}(\ke_{\ki}) }  \leq (1+\varepsilon/3)^{{1}/{\pfs}}\cdot \abs{ \infweight(\ke_{\ki})} \enspace.
\end{align}
Plugging \eqref{eq:WeightEllUBound}, \eqref{eq:WeightPVsWeightClose3LM} and \eqref{eq:WeightPVsWeightFarfromLM}
into \eqref{eq:splitProdHatBetas}, we get
\begin{align}\label{eq:Final4thrm:ProdWideHatBetaVsBeta}
\prod^{\kell}_{\ki=1} \abs{\infweight_{\kP}(\ke_{\ki})}
&\leq \left( \frac{\infweight_{\rm max}}{\infweight_{\rm min}} \right)^{\abs{\kK}}
\cdot 
\prod^{\kell-\kk}_{\ki=1} (1+\varepsilon/3)^{{1}/{s}} \cdot \abs{\infweight(\ke_{\ki})}\enspace.
\end{align}
For  the cardinality of set $\kK$, we use the following claim. 
\begin{claim}\label{cliam:NoOfEdgeClose2LM}
There are at most $(\kk+1)\cdot \maxDeg^{\ell_0+1}$ edges in path $\kW$ that are at distance $<\ell_0$ from set $\cpTP$. 
\end{claim}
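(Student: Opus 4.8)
\textbf{Proof proposal for \Cref{cliam:NoOfEdgeClose2LM}.}

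The plan is to bound the number of edges in the path $\kW$ that lie at distance $<\ell_0$ from the set $\cpTP$, by first bounding how many \emph{vertices} of $\SMTPu$ can be at distance $<\ell_0$ from $\cpTP$ and lie on $\kW$, and then passing to edges. The key structural fact I would use is that $\cpTP$ is not an arbitrary subset of the tree $\SMTPu$: it consists of copies of the $\kk+1$ vertices of the walk $\kP$ together with copies of split-vertices generated by those vertices. Since $\kW$ is a single root-to-leaf path in the tree $\SMTPu$, and since each vertex of $\SMTPu$ that is a copy of a fixed vertex $\kv\in\kP$ appears on $\kW$ in a controlled way, I expect $\kW\cap\cpTP$ to be small. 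The cleanest route is: (i) show that $\kW$ meets $\cpTP$ in at most $\kk+1$ vertices; (ii) observe that each vertex of $\SMTPu$ has at most $\maxDeg$ children, so the ball of radius $\ell_0$ around any single vertex contains at most $\sum_{j=0}^{\ell_0}\maxDeg^{j}\le \maxDeg^{\ell_0+1}$ vertices; (iii) conclude that at most $(\kk+1)\cdot\maxDeg^{\ell_0+1}$ vertices of $\SMTPu$ lie within distance $<\ell_0$ of $\cpTP$ \emph{and} on $\kW$ — hence at most that many edges of $\kW$ can be within distance $<\ell_0$ of $\cpTP$, since each such edge is incident to such a vertex (or we simply bound edges by the incident vertex count, which changes the constant by at most a factor absorbed into $\maxDeg^{\ell_0+1}$).

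For step (i), the point is that $\cpTP$ is the image, under the $\Tsaw$ copy-map, of the finitely many vertices $\kv\in\kP$ (there are $\kk+1$ of them) and their split-vertices. A root-to-leaf path $\kW$ in a tree of self-avoiding walks, by construction, corresponds to a walk in $\gext{G}{\kP}$ that is self-avoiding (or nearly so, with the terminal-pinning convention); in particular the vertices of $G$ (or split-vertices) it passes through are essentially distinct along the path. Thus for each $\kv\in\kP$, the path $\kW$ contains at most one copy of $\kv$, and similarly for each split-vertex; combined with the fact that split-vertices have degree $1$ (so a copy of a split-vertex, if on $\kW$, must be the last vertex of $\kW$), this gives $|\kW\cap\cpTP|\le \kk+1$ comfortably. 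I would make this precise by recalling from \Cref{lemma:Subtree4TSawU} (and the definitions in \Cref{sec:BasicPropExt}) how copies of vertices distribute along root-to-leaf paths.

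The arithmetic in step (ii)–(iii) is routine: $\maxDeg$ is the maximum degree, so in $\SMTPu$ every vertex has at most $\maxDeg$ neighbours, a ball of radius $\ell_0$ has at most $1+\maxDeg+\cdots+\maxDeg^{\ell_0}\le\maxDeg^{\ell_0+1}$ vertices (using $\maxDeg\ge 2$, and for $\maxDeg>1$ the bound still holds after adjusting constants, but since the statement already writes $\maxDeg^{\ell_0+1}$ we may as well assume $\maxDeg\ge2$, the case $\maxDeg=1$ being trivial as $G$ is then a single edge), and multiplying by the $\le\kk+1$ many ``centres'' on $\kW$ gives the claimed $(\kk+1)\cdot\maxDeg^{\ell_0+1}$ bound on the relevant vertices, hence on the relevant edges of $\kW$. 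The main obstacle, such as it is, is purely bookkeeping: making the claim ``$\kW$ contains at most $\kk+1$ vertices of $\cpTP$'' airtight, because one has to be careful about whether a copy of a split-vertex and a copy of the vertex that generated it could both appear on $\kW$, and about the terminal-pinning copies in the $\Tsaw$-construction; but the degree-$1$ property of split-vertices and the self-avoiding nature of root-to-leaf paths resolve this, and any slack is absorbed into the exponent $\ell_0+1$. No deep idea is needed here — it is a counting lemma feeding into the proof of \Cref{thrm:ProdWideHatBetaVsBeta}.
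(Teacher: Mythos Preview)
Your inference from (i)+(ii) to (iii) has a genuine gap. You take balls of radius $\ell_0$ around the at most $\kk+1$ vertices in $\kW\cap\cpTP$, but the claim concerns edges of $\kW$ at distance $<\ell_0$ from $\cpTP$, not from $\kW\cap\cpTP$. These differ: a vertex $a$ on $\kW$ that is a copy of some $a'\notin \kP$ can have a child $b$ \emph{off} $\kW$ which is a copy of $x_k$ (or of a split-vertex), hence $b\in\cpTP$ and $\dist(a,\cpTP)=1$; yet $a$ may lie far along $\kW$ from the last $\kk+1$ vertices, which (as you correctly argue) constitute all of $\kW\cap\cpTP$. Any neighbour $a'$ of $x_k$ in $G$, or any vertex adjacent to a split-vertex, furnishes such an $a$ whenever $\kW$ does not continue from $a$ into $\kP$; there can be many such $a$'s along $\kW$, all missed by your $\kk+1$ balls. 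So the containment implicit in your step (iii) is false.

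The paper fixes this by working in the graph $\gext{G}{\kP}$ (equivalently $G$) rather than in the tree. It bounds the number of edges of $G$ at graph-distance $<\ell_0$ from the $\kk+1$ vertices of $\kP$ by $(\kk+1)\maxDeg^{\ell_0+1}$, and then uses that $\kW$ corresponds to a self-avoiding walk in $\gext{G}{\kP}$, so each edge of $\gext{G}{\kP}$ has at most one copy on $\kW$. The bridge you are missing is that tree-distance in $\SMTPu$ dominates graph-distance in $\gext{G}{\kP}$: if an edge of $\kW$ is within $\ell_0$ of $\cpTP$ in the tree, its projection is an edge of $\gext{G}{\kP}$ within $\ell_0$ of $\kP\cup\ssplit_{\kP}$. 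Your step (i) is fine, but to finish you must invoke this projection to $G$ and the self-avoiding property, not balls in the tree centred on $\kW\cap\cpTP$.
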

The above implies that $\abs{\kK}\leq (\kk+1) \cdot \maxDeg^{\ell_0+1}$. 
Hence, \Cref{thrm:ProdWideHatBetaVsBeta} follows from \eqref{eq:Final4thrm:ProdWideHatBetaVsBeta} by setting 
$\widehat{c}_1=\left( \frac{\infweight_{\rm max}}{\infweight_{\rm min}} \right)^{(\kk+1)\cdot \maxDeg^{\ell_0+1}}$.
\end{proof}

\begin{proof}[Proof of \Cref{claim:SupportOfWeiS}]

We use \Cref{lemma:Subtree4TSawU} and identify $\SMTPu=\Tsaw(\gext{G}{\kP},\ku)$ as a subtree of $\SMTu=\Tsaw(G,\ku)$
such that both trees have  the same root. Then, $\kW$ is also a path in $\SMTu$ starting from the root.

Having $\SMTPu$ as a subtree of $\SMTu$, we endow the subset of edges of  $\SMTu$ that correspond to those of $\SMTPu$
with the edge weights $\{\infweight_{\kP}(\ke)\}$. Note that  we do not necessarily  have $\infweight_{\kP}(\ke)$ for all edges $\ke\in \SMTu$.
Also,  recall that the set of weights $\{\infweight(\ke)\}$  and $\{\infweight_{\kP}(\ke)\}$ are induced by 
$\mu^{M,\sigma}_{\SMTu}$ and  $\mu^{\kL,\eta}_{\SMTPu}$, respectively. 

For $\ke_{\ki}\in \kW$, let  $\ke_{\ki}=\{\kx_{\ki},\kz_{\ki}\}$ 
while let $\uppi_{\ki}$ be the set  that consists of $\kx_{\ki}, \kz_{\ki}$ and their 
children in $\SMTu$.
Furthermore, we have the following observations:
 \begin{enumerate}[(i)]
\item \Cref{lemma:BoundaryTVsBoundaryTPFromU} implies that $\cpT\cap \SMTPu=\cpTP$, \label{obs:APBarTVsBarAp}
\item \Cref{lemma:Subtree4TSawU} implies that each vertex in $\SMTu \setminus \SMTPu$ corresponds to a descendant of a vertex in $\cpT$, \label{obs:TminusBarTVsAp}
\item \Cref{lemma:BoundaryTVsBoundaryTPFromU} implies that $M \cap \SMTPu \subseteq \kL$, \label{obs:PinMVsBarTVsPinL}
\item \Cref{lemma:BoundaryTVsBoundaryTPFromU} implies that the pinnings $\sigma$ and $\eta$ agree on the assignment  in $\kL\cap M$. \label{obs:PinMVsPinLIntesection}
\end{enumerate}
Since we assume that our Gibbs distributions are $b$-marginally bounded,  we can have $\infweight(\ke_{\ki})=0$ (resp. $\infweight_{\kP}(\ke_{\ki})=0$) only if at least one vertex in
$\uppi_{\ki}$ has a fixed configuration under $\mu^{M,\sigma}_{\SMTu}$ (resp. $\mu^{L,\eta}_{\SMTPu}$).
Note that for Gibbs  distributions with parameter $\beta=0$, it is possible that  a pinning on a child of $\kx_{\ki}$ or   $\kz_{\ki}$ 
determines  the configuration of its parent with probability 1, implying that  $\infweight(\ke_{\ki})=0$ (resp. $\infweight_{\kP}(\ke_{\ki})=0$).

Consider $1\leq \ki \leq \kell-\kk$. If there is $\kv\in \uppi_{\ki}$ which  is not in $\SMTPu$, this means that at least one of 
$\kx_{\ki}$ and $\kz_{\ki}$ is  in $\cpT$, i.e., due to observation \ref{obs:TminusBarTVsAp}. 
But then, since $\kx_{\ki},\kz_{\ki}\in \SMTPu$,  observation \ref{obs:APBarTVsBarAp} implies  that the parent of $\kv$, i.e., one of  $\kx_{\ki}, \kz_{\ki}$, 
is in $\cpT\cap \SMTPu=\cpTP$. 
Since $i\leq \kell-\kk$,  if any of $\kx_{\ki},\kz_{\ki}$  is in  $\cpTP$, then it should be  a split-vertex in $\ssplit_{\kP}$, which means that is of degree $1$. 
This can only happen if $\ki=1$, however in this case we  have  $\kx_{\ki},\kz_{\ki}\notin \cpTP$ because the root is a copy of vertex $\ku$ which is
at distance at least $\kk$ from any copy of a vertex $\kv\in \cpTP$. We then conclude that this event cannot happen. 

Hence,  all elements in $\uppi_{\ki}$ are vertices in $\SMTPu$. The fact that $\ke_{\ki}\in \SMTPu$ and 
observations \ref{obs:PinMVsBarTVsPinL} and \ref{obs:PinMVsPinLIntesection} imply that  if any vertex in $\uppi_{\ki}$ is pinned, then the
pinning is the same under both $\sigma,\eta$. Hence,  if  $\infweight_{\kP}(\ke_{\ki})\neq 0$, then we also have that $\infweight(\ke_{\ki})\neq 0$. 

\Cref{claim:SupportOfWeiS} follows. 
\end{proof}

\begin{proof}[Proof of \Cref{cliam:NoOfEdgeClose2LM}]
It is not hard to verify that the number of edges at distance $\kell$ from the vertices in $\kP\cup \ssplit_{\kP}$ in $\gext{G}{\kP}$ is 
 the same  as the number of edges at distance $\kell$ from the vertices  in $\kP$ in graph $G$.

Furthermore, it is not hard to check that there are at most $\maxDeg^{\kell+1}$ edges in $G$ which are at distance 
$\kell$ a from vertex $\kv\in \kP$. A simple geometric series calculation implies that the total number of edges at 
distance $<\ell_0$ from all  vertices $\kv\in \kP$ is at most $(\kk+1)\cdot \maxDeg^{\ell_0+1}$. Recall that there 
are in total $\kk+1$ vertices in $\kP$.

Hence, we conclude that the number of edges at distance $<\ell_0$ from the vertices in $\kP\cup \ssplit_{\kP}$ in 
$\gext{G}{\kP}$ is  at most $(\kk+1)\cdot \maxDeg^{\ell_0+1}$.  The claim follows by noting that each one of these, 
at most $(\kk+1)\cdot \maxDeg^{\ell_0+1}$ edges can have at  most one copy in path $\kW$. 
\end{proof}

\newcommand{\ratioT}{\blue{\gratio}}
\newcommand{\ratioTP}{\magenta{\overline{\gratio}}}

\subsection{Proof of \Cref{prop:StableSSM}}\label{sec:prop:StableSSM}

We use \Cref{lemma:Subtree4TSawU} and identify $\SMTPu=\Tsaw(\gext{G}{\kP}, \ku)$ as a subtree of 
$\SMTu=\Tsaw(G,\ku)$.

Recall that the edge-weights $\{ \infweight(e)\}$ in $\SMTu$ arise from $\mu^{M,\sigma}_{\SMTu}$. Also, the
edge-weights   $\{\infweight_{\kP}(\ke)\}$ in $\SMTPu$ arise from $\mu^{\kL,\eta}_{\SMTPu}$. Having identified 
$\SMTPu$ as a subtree of $\SMTu$, we endow $\SMTPu$ with the edge-weights $\{ \infweight(\ke)\}_{\ke\in \SMTPu}$. 
 Hence now we have two sets of weights for $\SMTPu$.

Also, recall that set $\cpTP$ consists of each vertex in $\SMTPu$ which is either a copy of vertex $\kv\in P$, or 
a copy of a split-vertex in $\ssplit_{\kP}$. 

We start with the following claim.
\begin{claim}\label{claim:AlphaPZero}
For any edge $\kf$ in $\SMTPu=\Tsaw(\gext{G}{\kP},\ku)$, at distance $\kell >10$ from $\cpTP$, 
 if $\infweight(\kf)=0$, then $\infweight_{\kP}(\kf)=0$. 
\end{claim}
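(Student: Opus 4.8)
\textbf{Proof plan for \Cref{claim:AlphaPZero}.}

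The plan is to track \emph{where} the weight of an edge in a $\Tsaw$-tree can vanish, and show that this ``support information'' transfers from $\SMTPu$ to $\SMTu$ in the region far from $\cpTP$. Recall from \eqref{def:OfInfluenceWeights} that for an edge $\ke=\{\kx,\kz\}$ of a $\Tsaw$-tree (with $\kx$ the parent of $\kz$), the weight equals $\dlogtrecur(\log \gratio^{\Gamma,\sigma}_{\kz})$ when neither endpoint is pinned, and $0$ otherwise. For the models in this paper one checks (using $b$-marginal boundedness, as in the proof of \Cref{claim:SupportOfWeiS}) that $\dlogtrecur(\log \gratio^{\cdot}_{\kz})=0$ can only happen when a hard constraint forces the configuration of $\kx$ or $\kz$ — which, for $\beta=0$, can be induced by a pinning on a child of $\kx$ or $\kz$, but in any case is a phenomenon localised to $\kz$, its parent, and its children. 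So $\infweight(\kf)=0$ iff some vertex in the ``local neighbourhood'' $\uppi$ of $\kf$ (the two endpoints together with their children) is pinned or has a child whose pinning propagates determinism upward.

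First I would invoke \Cref{lemma:Subtree4TSawU} to regard $\SMTPu$ as a subtree of $\SMTu$ with a common root, and endow the shared edges with both weight systems $\{\infweight(\ke)\}$ (from $\mu^{M,\sigma}_{\SMTu}$) and $\{\infweight_{\kP}(\ke)\}$ (from $\mu^{\kL,\eta}_{\SMTPu}$). Let $\kf=\{\kx,\kz\}$ be an edge of $\SMTPu$ at distance $\kell>10$ from $\cpTP$, and let $\uppi$ consist of $\kx$, $\kz$, and their children in $\SMTu$. The distance condition together with observation \ref{obs:TminusBarTVsAp} of \Cref{lemma:Subtree4TSawU} (each vertex of $\SMTu\setminus\SMTPu$ is a descendant of a vertex in $\cpT$) and part \ref{stat:B-lemma:BoundaryTVsBoundaryTPFromU} of \Cref{lemma:BoundaryTVsBoundaryTPFromU} ($\cpT\cap\SMTPu=\cpTP$) force every vertex of $\uppi$ to actually lie in $\SMTPu$: a child of $\kx$ or $\kz$ outside $\SMTPu$ would make $\kx$ or $\kz$ lie in $\cpT\cap\SMTPu=\cpTP$, contradicting $\kell>10\ge 1$. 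Then parts \ref{stat:C-lemma:BoundaryTVsBoundaryTPFromU} and \ref{statement:lemma:BoundaryTVsBoundaryTPFromUA} of \Cref{lemma:BoundaryTVsBoundaryTPFromU} give $M\cap\SMTPu\subseteq\kL$ and that $\sigma,\eta$ agree on $\kL\cap M$; so any vertex of $\uppi$ that is pinned under $\sigma$ is also pinned (to the same value) under $\eta$.

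Next I would compare the two ratios $\gratio^{M,\sigma}_{\kz}$ (in $\SMTu$) and $\gratio^{\kL,\eta}_{\kz}$ (in $\SMTPu$). Here part \ref{statement:lemma:BoundaryTVsBoundaryTPFromUD} of \Cref{lemma:BoundaryTVsBoundaryTPFromU} is the key: taking $\kK=\kL\cup\cpTP$ and conditioning further on a pinning $\widehat\eta$ of $\cpTP$, the law $\mu^{\kK,\widehat\eta}_{\SMTPu}$ coincides with the marginal on $\SMTPu$ of $\mu^{\kK,\widehat\eta}_{\SMTu}$; in particular the subtree $\SMTPu_{\kz}$ hanging below $\kz$ sees exactly the same boundary data under both distributions (the extra pinnings of $M\setminus\kL$ all sit on $\cpTP$ or on vertices of $\SMTu\setminus\SMTPu$, which are descendants of $\cpT$ and hence separated from $\kz$ by $\cpTP$). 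Thus $\gratio^{M,\sigma}_{\kz}$ and $\gratio^{\kL,\eta}_{\kz}$ are ratios of the \emph{same} tree recursion run on $\SMTPu_{\kz}$ but possibly with different leaf data only through $\cpTP$, which is at distance $>10$; a determinism-propagation argument (a leaf being forced forces its parent, etc.) shows that $\gratio^{\kL,\eta}_{\kz}\in\{0,\infty\}$ implies $\gratio^{M,\sigma}_{\kz}\in\{0,\infty\}$ as well, because the only way a ratio degenerates is through a chain of forced vertices reaching up from the boundary, and the boundary data that could do this is common to both. Combining: if $\infweight(\kf)=0$ then either an endpoint of $\kf$ is pinned under $\sigma$ (hence under $\eta$, giving $\infweight_{\kP}(\kf)=0$), or $\gratio^{M,\sigma}_{\kz}$ degenerates, which by the above would force $\gratio^{\kL,\eta}_{\kz}$ to degenerate too, again giving $\infweight_{\kP}(\kf)=0$.

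The main obstacle is the last step — rigorously controlling when $\dlogtrecur(\log\gratio_{\kz})=0$ and showing that this ``forced-configuration'' event really does transfer from $\SMTu$ to $\SMTPu$ far from $\cpTP$. For soft-constraint models ($\beta>0$) this is vacuous since $\dlogtrecur$ never vanishes and no marginal is ever $0$ or $1$, so $\infweight(\kf)$ is never $0$ and the claim is trivially true. The work is entirely in the hard-core case $\beta=0$: there I would argue that $\gratio^{M,\sigma}_{\kz}=0$ or $\infty$ (equivalently, $\kz$'s marginal is fixed) happens exactly when $\kz$ is pinned, or all of $\kz$'s children are fixed to $+1$ in a way forcing $\kz=-1$, and unwind this recursively to a pinning on the boundary; since the relevant boundary lies beyond distance $10$, it is contained in $\cpTP\cup(M\cap\SMTPu)\subseteq\kL\cup\cpTP$ after conditioning, and by \Cref{lemma:BoundaryTVsBoundaryTPFromU}\ref{statement:lemma:BoundaryTVsBoundaryTPFromUD} the recursion below $\kz$ is identical under $\mu^{M,\sigma}_{\SMTu}$ and $\mu^{\kL,\eta}_{\SMTPu}$, so the forcing occurs simultaneously. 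This gives $\infweight(\kf)=0\Rightarrow\infweight_{\kP}(\kf)=0$, proving \Cref{claim:AlphaPZero}. \hfill$\square$
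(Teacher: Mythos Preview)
Your proposal reaches the right conclusion and shares the paper's core moves: identify $\SMTPu$ as a subtree of $\SMTu$, define the local set $\uppi$ of $\kf$'s endpoints and their children, show $\uppi\subseteq\SMTPu$ via \Cref{lemma:Subtree4TSawU} and part \ref{stat:B-lemma:BoundaryTVsBoundaryTPFromU} of \Cref{lemma:BoundaryTVsBoundaryTPFromU}, and conclude that the pinnings on $\uppi$ coincide under $(M,\sigma)$ and $(\kL,\eta)$ using parts \ref{statement:lemma:BoundaryTVsBoundaryTPFromUA} and \ref{stat:C-lemma:BoundaryTVsBoundaryTPFromU}. Where you diverge is the middle of your argument: you try to compare the two ratios $\gratio^{M,\sigma}_{\kz}$ and $\gratio^{\kL,\eta}_{\kz}$ by invoking part \ref{statement:lemma:BoundaryTVsBoundaryTPFromUD} and a ``determinism-propagation'' argument through possibly long chains of forcing. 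This is unnecessary and the invocation of part \ref{statement:lemma:BoundaryTVsBoundaryTPFromUD} is not quite right (that statement is about the distributions \emph{after} conditioning on $\cpTP$, which you are not doing). The paper's proof is shorter: it simply observes---exactly as you eventually do for the hard-core case in your last paragraph---that by $b$-marginal boundedness, $\infweight(\kf)=0$ can happen \emph{only} through a pinning on some vertex of $\uppi$ (for $\beta=0$ the forcing is one step, not a chain: $\gratio_{\kz}=0$ iff some child of $\kz$ is pinned to $+1$, and $\gratio_{\kz}=\infty$ iff $\kz$ itself is pinned). Since the pinnings on $\uppi$ agree under both boundary conditions, $\infweight_{\kP}(\kf)=0$ follows immediately. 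Two small slips to fix: you write the implication $\gratio^{\kL,\eta}_{\kz}\in\{0,\infty\}\Rightarrow\gratio^{M,\sigma}_{\kz}\in\{0,\infty\}$ backwards relative to what you actually need, and ``all of $\kz$'s children are fixed to $+1$'' should be ``some child of $\kz$ is pinned to $+1$''.
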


\Cref{claim:AlphaPZero} implies that \eqref{eq:prop:StableSSM} is true for any edge $\kf$ in $\SMTPu$ at 
distance $\kell>10$ from $\cpTP$  such that $\infweight(\kf)=0$. For the rest of the proof, we consider edge 
$\kf$ in $\SMTPu$ such that $\infweight(\kf)\neq 0$ which is at distance $\kell \geq \ell_0$ from $\cpTP$, where 
the quantity  $\ell_0>10$ is specified later in the proof.

Let edge $\kf=\{\kv,\kx\}$ while w.l.o.g. assume that $\kv$ is a child of $\kx$. Also, let $\kv_1, \ldots, \kv_d$ 
be the children of  $\kv$ in $\SMTPu$. We take $\kf$ at distance $\kell>10$ from $\cpTP$. This implies that 
vertex $\kv$ has the same set of children in both trees $\SMTu$ and $\SMTPu$. If $\kv$ had children in 
$\SMTu$ which are not in $\SMTPu$, then according to \Cref{lemma:Subtree4TSawU}, we would have had
$\kv\in \cpT$. But since $\kv\in \SMTPu$, \Cref{lemma:BoundaryTVsBoundaryTPFromU} (statement 
\ref{stat:B-lemma:BoundaryTVsBoundaryTPFromU})  implies that $\kv\in \cpT\cap \SMTPu=\cpTP$.  
This cannot happen since we assume that $\kf$ is at distance at least $10$ from $\cpTP$.  Hence, $\kv$ 
indeed has the same set of children in both trees $\SMTu$ and $\SMTPu$.

Let function $\Phi_d:\mathbb{R}^d\to \mathbb{R}$ be such that 
\begin{align}\label{eq:DefOfPhiSSM}
{\bf \kx} &
\mapsto \frac{(\beta \gamma-1) \cdot F_d({\bf \kx})} {(\beta \cdot F_d({\bf \kx})+1)(F_d({\bf \kx})+\gamma)}\enspace, 
\end{align}
while $F_d({\bf \kx})=F_d({\bf \kx}_1, \ldots, {\bf \kx}_d)$ is defined in \eqref{eq:BPRecursion}. 
Recall that $d$ is the number of children of vertex $\kv$.

For each child $\kv_{\ki}$,  consider the ratios of marginals $\ratioT^{M, \sigma}_{\kv_{\ki}}$ and 
$\ratioTP^{L, \eta}_{\kv_{\ki}}$ defined as described in \Cref{sec:RecursionVsSpectralIneq}. 
Specifically, the ratio $\ratioT^{M, \sigma}_{\kv_{\ki}}$ is defined with respect to the Gibbs distribution
$\mu^{M,\sigma}_{\SMTu}$, while the ratio $\ratioTP^{L, \eta}_{\kv_{\ki}}$ is defined with respect to the 
Gibbs distribution $\mu^{\kL,\eta}_{\SMTPu}$.

Using \eqref{def:OfInfluenceWeights}, it is not hard to verify that we have 
$\infweight(\kf)=\Phi_d(\ratioT^{M, \sigma}_{\kv_1}, \ldots, \ratioT^{M, \sigma}_{\kv_d})$
and similarly $\infweight_{\kP}(\kf)=\Phi_d(\ratioTP^{\kL,\eta}_{\kv_1},\ldots, \ratioTP^{\kL, \eta}_{\kv_d})$.
From the {\em mean value theorem}, there exists ${\bf \kz}\in \mathbb{R}^d$ such that 
\begin{align}\label{eq:MeanValue4Alphas}
\infweight_{\kP}(\kf)&=\infweight(\kf) +
\sum^{d}_{\ki=1} \left . \frac{\partial }{\partial {\bf \kx}_{\ki}}\Phi_d({\bf \kx}) \right |_{{\bf \kx}={\bf \kz}} \cdot 
\left (\ratioTP^{\kL, \eta}_{\kv_{\ki}}-\ratioT^{M, \sigma}_{\kv_{\ki}} \right )\enspace.
\end{align}
Recall that we take $\kf$ such that $\infweight(\kf) \neq 0$.  Our  $b$-boundedness assumption, for fixed 
$b>0$, and having  $ \infweight(\kf) \neq 0$, imply that $\abs{\infweight (\kf)}$ is a constant  bounded away 
from zero. Hence, \eqref{eq:prop:StableSSM} is true if the sum on the r.h.s. in \eqref{eq:MeanValue4Alphas} 
is a  sufficiently small constant.

A standard calculation implies that for $\ki =1,\ldots, d$, we have 
\begin{align}
\frac{\partial }{\partial {\bf \kx}_{\ki}}\Phi_d({\bf \kx}) &= \Phi_d({\bf \kx}) \cdot {\textstyle \frac{\beta\gamma-1}{(\beta {\bf \kx}_{\ki}+1)({\bf \kx}_{\ki}+\gamma)}}
\cdot {\textstyle \left(1-\frac{1+\beta\gamma+2F_d({\bf \kx})}{\beta\gamma-1} \cdot \Phi_d({\bf \kx}) \right) }\enspace. 
\end{align}
The above implies that $\frac{\partial }{\partial {\bf \kx}_{\ki}}\Phi_d({\bf \kx}) $ is bounded only
if  both $\Phi_d({\bf \kx})$ and $F_d({\bf \kx})$ are bounded for any ${\bf \kx}\in [0,\infty]^d$ and 
$d\in [\maxDeg]$.
From the definition of the functions $\Phi_d({\bf \kx})$ and $F_d({\bf \kx})$, it follows that they are indeed  bounded,  implying that 
$\frac{\partial }{\partial {\bf \kx}_{\ki}}\Phi_d({\bf \kx}) $ is also bounded  for $\ki =1,\ldots, d$.

In light of all the above, \eqref{eq:prop:StableSSM} follows by showing that for any fixed number 
$\upkappa>0$,  there is $\ell_0=\ell_0(\upkappa)$ such that for any edge 
$\kf$ at distance $\kell\geq \ell_0$ from set $\cpTP$, we have
\begin{align}\label{eq:Target4prop:StableSSM}
\abs{ \ratioTP^{\kL, \eta}_{\kv_{\ki}}-\ratioT^{M, \sigma}_{\kv_{\ki}} } &\leq \upkappa & \textrm{for } 1\leq \ki\leq d\enspace. 
\end{align} 
Note that assuming that  $\kf$ is at distance $\kell$ from $\cpTP$ implies that the ball of radius  
$\kell-1$ from each vertex $\kv_{\ki}$ is inside  tree  $\SMTPu$.  This follows from 
\Cref{lemma:Subtree4TSawU,lemma:BoundaryTVsBoundaryTPFromU}.

Furthermore, statement \ref{stat:C-lemma:BoundaryTVsBoundaryTPFromU} in \Cref{lemma:BoundaryTVsBoundaryTPFromU} implies that there is a pair of
configurations $\xi^{(\ki)}_1$ and $\xi^{(\ki)}_2$ at $N=\kL\cup \cpTP$, which only disagree at $\cpTP$, such that 
\begin{align} \label{eq:RealRatiosVsTPRatios}
\abs{ \ratioTP^{\kL, \eta}_{\kv_{\ki}}-\ratioT^{M, \sigma}_{\kv_{\ki}} } &\leq 
\abs{ \ratioTP^{N, \xi^{(\ki)}_1}_{\kv_{\ki}}-\ratioTP^{N, \xi^{(\ki)}_2}_{\kv_{\ki}} } & \textrm{for } 1\leq \ki\leq d
\enspace,
\end{align}
where $\ratioTP^{N, \xi^{(\ki)}_1}_{\kz}$ and $\ratioTP^{N, \xi^{(\ki)}_2}_{\kz}$ are the ratios of marginals at vertex $\kz$ with respect
to the Gibbs distributions $\mu^{N, \xi^{(\ki)}_1}_{\SMTPu}$ and $\mu^{N, \xi^{(\ki)}_2}_{\SMTPu}$, respectively.
We bound the r.h.s. of \eqref{eq:RealRatiosVsTPRatios} by using the following result.

 \begin{claim} \label{claim:Target4prop:StableSSM}
For $\delta= \frac{1-\varepsilon}{\SingBound}$, suppose that there
is a $(\pfs,\delta,c)$-potential function $\potF$ with respect to $(\beta,\gamma,\lambda)$, 
where $\varepsilon, \beta,\gamma, \lambda,\SingBound$ are defined in \Cref{prop:StableSSM}. 

Then, there are constants $c_A, c_B>0$ and $t_0>0$, such that for any vertex $\kz$ in tree $\SMTPu$ which is at 
distance $\kell\geq \kt_0$ from $\cpTP$, the following is true:

For $N=\kL\cup \cpTP$ and for {\em any} two configurations $\eta_1, \eta_2\in \{\pm 1\}^{N }$
that differ only at $\cpTP$, we have 
\begin{align}\label{eq:ClaimTarget4prop:StableSSM}
\abs{ \ratioTP^{N, \eta_1}_{\kz} - \ratioTP^{N, \eta_2}_{\kz}} \leq c_A \cdot \exp\left( -c_B \cdot \kell \right)\enspace,
\end{align}
where $\ratioTP^{N, \eta_1}_{\kz}$ and $\ratioTP^{N, \eta_2}_{\kz}$ are the ratios of marginals at vertex $\kz$ with respect
to the Gibbs distributions $\mu^{N, \eta_1}_{\SMTPu}$ and $\mu^{N, \eta_2}_{\SMTPu}$, respectively.
 \end{claim}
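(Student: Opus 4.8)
\textbf{Proof plan for \Cref{claim:Target4prop:StableSSM}.} The statement is a standard correlation-decay (strong spatial mixing) estimate for the tree recursion on $\SMTPu$, phrased as a contraction of the discrepancy between the root ratios induced by two boundary conditions that differ only on $\cpTP$. The plan is to work with the log-ratios and the potential function $\potF$ exactly as in the proofs of \Cref{thrm:AdjacencyPotentialSpIn,thrm:InflNormBound4GeneralD}, but now tracking a single discrepancy down a path to the disagreement set rather than summing influences over all paths. First I would fix a vertex $\kz$ at distance $\kell$ from $\cpTP$ and let $\kz=\kz_0, \kz_1, \ldots, \kz_m$ be the (unique) path in $\SMTPu$ from $\kz$ toward the nearest disagreeing vertex, with $m\geq \kell$. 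Writing $x^{(j)}_i = \log \ratioTP^{N,\eta_j}_{\kz_i}$ for $j\in\{1,2\}$ and using $\log\gratio_{\kz_i} = \logtrecur_{d_i}(\cdots)$ from \eqref{eq:DefOfH}, the chain rule applied along this path gives, for a suitable mean-value point,
\begin{align}\nonumber
\abs{x^{(1)}_0 - x^{(2)}_0} &\leq \prod\nolimits_{i=1}^{m} \abs{ \dlogtrecur(\xi_i) } \cdot \abs{x^{(1)}_m - x^{(2)}_m} \enspace,
\end{align}
where $\dlogtrecur$ is the function from \eqref{eq:DerivOfLogRatio} and each $\xi_i$ lies in the interval $\ratiorange$ from \eqref{eq:DefOfRatiorange}. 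Since $\abs{x^{(1)}_m - x^{(2)}_m}$ is bounded by a constant depending only on $\maxDeg,\beta,\gamma,\lambda$ (the log-ratios all lie in the bounded set $\ratiorange$), it suffices to show the product of derivatives decays exponentially in $m$, hence in $\kell$.

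The decay of $\prod_i \abs{\dlogtrecur(\xi_i)}$ is where the $(\pfs,\delta,c)$-potential enters. The plan is to insert the telescoping factors $\xdpotF(\cdot)/\xdpotF(\cdot)$ exactly as in \Cref{claim:AddPotentialA}, so that $\prod_{i=1}^m \abs{\dlogtrecur(\xi_i)}$ becomes $\xdpotF(\xi_m)\cdot \xdpotF(\xi_0)^{-1}\cdot\prod_{i=1}^m \bigl(\xdpotF(\xi_{i-1})/\xdpotF(\xi_i)\bigr)\abs{\dlogtrecur(\xi_i)}$, where the leading and trailing $\xdpotF$ factors are bounded above and below by the Boundedness condition \eqref{eq:BoundednessPF} together with compactness of $\ratiorange$. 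For the remaining product I would apply the Contraction condition \eqref{eq:contractionRelationPF} of the potential specialized to the single-child case (take ${\bf m}$ supported on one coordinate, i.e.\ the child lying on the path): this yields that each factor $\bigl(\xdpotF(\xi_{i-1})/\xdpotF(\xi_i)\bigr)\abs{\dlogtrecur(\xi_i)}$ is at most $\delta^{1/\pfs} = \bigl((1-\varepsilon)/\SingBound\bigr)^{1/\pfs}$. Since $\SingBound>1$ and $\varepsilon\in(0,1)$, we have $\delta^{1/\pfs} < 1$ as long as $\SingBound > 1-\varepsilon$, which holds; hence the product is at most $\delta^{m/\pfs} \leq \delta^{\kell/\pfs}$, giving the exponential decay with $c_B = \tfrac{1}{\pfs}\log\bigl(\SingBound/(1-\varepsilon)\bigr)$ and $c_A$ absorbing the boundedness constants and the bound on $\abs{x^{(1)}_m-x^{(2)}_m}$. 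The constant $\kt_0$ is just a threshold ensuring $\kz$ is far enough from $\cpTP$ that the path of length $m\geq \kell$ actually exists inside $\SMTPu$ (which is guaranteed by \Cref{lemma:Subtree4TSawU,lemma:BoundaryTVsBoundaryTPFromU}, as noted in the surrounding text).

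I expect the main obstacle to be bookkeeping rather than conceptual: one must check carefully that the Contraction inequality \eqref{eq:contractionRelationPF}, which is stated for vectors ${\bf m}\in\mathbb{R}^{\kd}_{\geq 0}$ and involves $\norm{{\bf m}}{\pfs}$, does indeed yield the clean per-edge bound $\delta^{1/\pfs}$ when restricted to a one-hot vector — this is immediate since $\norm{{\bf m}}{\pfs} = {\bf m}_j$ for a one-hot ${\bf m}$, but it must be invoked at each $\xi_i\in S_\potF$ via the substitution ${\bf \ky}_j = \potF^{-1}(\widehat{\bf \ky}_j)$, so one needs the argument to range over the image $S_\potF$ of the potential, which is fine since the log-ratios along the path all lie in $\ratiorange$ and $\potF$ is defined (differentiable, strictly increasing) there. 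A secondary subtlety is edge cases with $\beta=0$ (hard constraints): here some $\dlogtrecur(\xi_i)$ could vanish, which only helps, and the possibility that an intermediate vertex becomes frozen is handled because freezing forces the discrepancy to be zero from that point on. Modulo these checks the argument is a routine path-contraction computation, so I would present it compactly, citing \Cref{claim:AddPotentialA} for the telescoping step and \Cref{def:GoodPotential} for the two inequalities used.
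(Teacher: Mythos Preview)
Your argument has a genuine gap: you treat the discrepancy as propagating along a \emph{single} path from $\kz$ to the nearest vertex of $\cpTP$, but this is not the case. The ratio $\ratioTP^{N,\eta_j}_{\kz}$ depends on the entire subtree of $\SMTPu$ rooted at $\kz$, and the two boundary conditions $\eta_1,\eta_2$ may differ at \emph{many} vertices of $\cpTP$ lying in that subtree (every self-avoiding walk from $\kz$ that eventually reaches a copy of a vertex of $\kP$ contributes one). The mean-value step at a vertex $\kv$ with children $\kv_1,\dots,\kv_d$ gives $|x^{(1)}_{\kv}-x^{(2)}_{\kv}|\le \sum_j |\dlogtrecur(\xi_j)|\,|x^{(1)}_{\kv_j}-x^{(2)}_{\kv_j}|$, so after unrolling you get a sum over \emph{all} paths to $\cpTP$, not a product along one. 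Your one-hot specialization of \eqref{eq:contractionRelationPF} correctly yields a per-edge factor $\delta^{1/\pfs}$, but then the bound becomes roughly $\sum_{r\ge \kell} K(r)\,\delta^{r/\pfs}$ with $K(r)$ the number of paths of length $r$ from $\kz$ to $\cpTP$; since $K(r)$ can be as large as $(\maxDeg-1)^r$ and $\delta^{1/\pfs}\cdot(\maxDeg-1)$ is typically $>1$ when $\SingBound<\maxDeg-1$ (the interesting regime), no decay follows.

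The paper's proof supplies precisely the missing ingredient. It first bounds $K(r)$ using the hypothesis $\nnorm{(\NBMatrix)^N}{1/N}{2}=\SingBound$: paths from $\kz$ to $\cpTP$ in $\SMTPu$ correspond to self-avoiding (hence $\kk$-non-backtracking) walks in $G$ from $\widehat{\kz}$ to the finite set $\kP$, and \Cref{claim:SigmaLVsPathNumber,lemma:SingSequenConv} give $K(r)\le ((1+\varepsilon/2)\SingBound)^r$ for $r$ large. It then invokes the full $\ell_\pfs$-aggregated contraction \eqref{eq:contractionRelationPF} (equivalently \eqref{eq:SSMCondition4Stability} via $\Upxi=\potF\circ\log$), which after unrolling yields $|\Delta\potF|^{\pfs}\lesssim \sum_r \delta^{r} K(r)$; combining gives $\delta^{r}K(r)\le ((1-\varepsilon)(1+\varepsilon/2))^r$, hence exponential decay. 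Your write-up never invokes the assumption on $\NBMatrix$ at all, and without it the claim is false in general.
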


We get \eqref{eq:Target4prop:StableSSM} from \eqref{eq:RealRatiosVsTPRatios} and \Cref{claim:Target4prop:StableSSM}. 
 \Cref{prop:StableSSM} follows. \hfill $\square$

\subsubsection{Proof of \Cref{claim:AlphaPZero}}
The proof is not too different from that of \Cref{claim:SupportOfWeiS}

In this proof, we identify $\SMTPu=\Tsaw(\gext{G}{\kP},\ku)$ as a subtree of $\SMTu=\Tsaw(G,\ku)$. 
W.l.o.g.,  let edge $\kf$ in $\SMTPu$ be such that $\kf=\{\kv,\kx\}$. Also, let $\uppi$ be the set of vertices that consists 
of $\kv,\kx$ and their children in $\SMTu$.

Arguing as in \Cref{claim:SupportOfWeiS}, the assumption of having $\kf$ at distance $>10$ from $\cpTP$ implies
that all elements in $\uppi$ are vertices in $\SMTPu$.

Our assumption for $b$-boundedness, for fixed $b>0$, implies that for edge $\kf$ in $\SMTPu$, we either 
have $\infweight(\kf)=0$, or $\abs{\infweight(\kf)}$ is bounded away from zero. The same holds for  
$\infweight_{\kP}(\kf)$. Furthermore, we have $\infweight(\kf)=0$ (resp. $\infweight_{\kP}(\kf)=0$) only if at 
least one of the vertices in $\uppi$ has a pinning under $\mu^{M,\sigma}_{\SMTu}$ (resp. $\mu^{L,\eta}_{\SMTPu}$).
Note that for the case where the parameter of the Gibbs distribution $\beta=0$, it is possible that by pinning one 
of the children of $\kv$ or $\kx$ we get $\infweight(\kf)=0$ (resp.  $\infweight_{\kP}(\kf)=0$).

Recall from \Cref{lemma:BoundaryTVsBoundaryTPFromU}, statement \ref{stat:C-lemma:BoundaryTVsBoundaryTPFromU}, 
that the pinnings imposed by  $\mu^{M,\sigma}_{\SMTu}$ and $\mu^{\kL,\eta}_{\SMTPu}$ on the vertices in
$\SMTPu$ can only differ on the vertices in $\kL\setminus M\subseteq \cpT$.   But since $\kL\setminus M\subseteq \SMTPu$,
we have that $\kL\setminus M\subseteq \cpTP$, i.e., since statement  \ref{stat:B-lemma:BoundaryTVsBoundaryTPFromU} in
\Cref{lemma:BoundaryTVsBoundaryTPFromU} implies that   $\cpT\cap \SMTPu=\cpTP$.

Taking $\kf$ at a sufficiently large distance from $\cpTP$, e.g., distance greater than $10$, we have that  $\uppi\not\subset \cpTP$. 
Hence,  if the two distributions  $\mu^{M,\sigma}_{\SMTu}$ and $\mu^{\kL,\eta}_{\SMTPu}$ impose  pinnings to  vertices in 
set $\uppi$, these pinning should be identical.

This leads us to conclude that if $\infweight(\kf)=0$, this is because $\mu^{M,\sigma}_{\SMTu}$ fixes the configuration
of a vertex in $\uppi$. But then, $\mu^{\kL,\eta}_{\SMTPu}$ imposes identical pinning to $\uppi$. Hence, we also have 
$\infweight_{\kP}(\kf)=0$. 

All the above conclude the proof of \Cref{claim:AlphaPZero}. \hfill $\square$

\subsubsection{Proof of \Cref{claim:Target4prop:StableSSM}}
In this proof, we identify $\SMTPu=\Tsaw(\gext{G}{\kP},\ku)$ as a subtree of $\SMTu=\Tsaw(G,\ku)$.

Firstly, we upper bound the number of paths in $\SMTPu$ from vertex $\kz$ to the set of vertices  $\cpTP$ as a function of 
$\kell$, i.e., the distance of $\kz$ from the set. If $\kz$ is a copy of vertex $\widehat{\kz}$ in $\gext{G}{\kP}$, then the paths from 
$\kz$ to  $\cpTP$ correspond to the self-avoiding walks from $\widehat{\kz}$ to the set of vertices $\kP\cup \ssplit_{\kP}$. 
Let $\kK(\kell)$ be the number of these walks. We assume that $\kell\geq \kt_0$  for a large constant $\kt_0$, e.g., $\kt_0>100\kk$. 

Note that $\kK(\kell)$ is upper bounded by the number of self-avoiding walks in $G$ from vertex $\widehat{\kz}$ to $\kP$. 
Furthermore, since each self-avoiding path of length $\kell>\kk$ is also a $\kk$-non-backtracking walk, we have 
\begin{align}\label{eq:KEllVsNBMGPEntry}
\kK(\kell) &\leq\sum\nolimits_{\kW,\kQ}\NBMatrix^{\kell-\kk}(\kW,\kQ)  
\enspace.
\end{align}
Variable $\kW$ varies over the walks in $\ExtV_{\kk}$ which emanate from vertex $\widehat{\kz}$.   Variable $\kQ$ varies over 
$\ExtV_{\kk}$ such that  its  last vertex is in $\kP$.  The exponent $\kell-\kk$ has to do with how we measure distances 
between $\kk$-non-backtracking walks with $\NBMatrix$.

A crude estimation implies that the number of summands in 
\eqref{eq:KEllVsNBMGPEntry} is $\leq (\kk+2)\cdot \maxDeg^{\kk}$. 
We have
\begin{align}\label{eq:KELLVMaxHGKWQ}
\kK(\kell) &\leq (\kk+2)\cdot \maxDeg^{\kk} \cdot \max\nolimits_{\kW,\kQ}\{ \NBMatrix^{\kell-\kk}(\kW,\kQ)\}
 \enspace.
\end{align}
Furthermore, \Cref{claim:SigmaLVsPathNumber,lemma:SingSequenConv} imply that for any $\zeta\in (0,1)$,
 there exists integer $\hat{\kr}_0=\hat{\kr}_0(N,\maxDeg, \kk, \zeta)>1$ such that for any $\kr \geq \hat{\kr}_0$ we have
 \begin{align}\label{eq:FromClaim5758}
\NBMatrix^{\kr}(\kW,\kQ) \leq \left( (1+\zeta)\cdot \SingBound \right)^{\kr}\enspace.
 \end{align}
Recall that $\nnorm{ (\NBMatrix)^N}{1/N}{2}=\SingBound$.
Combining \eqref{eq:KELLVMaxHGKWQ} and \eqref{eq:FromClaim5758}, for $\kell\geq \max\{\hat{\kr}_0,\kk\}$, we have 
\begin{align}
\kK(\kell) & \leq (\kk+2)\cdot \maxDeg^{\kk}\cdot \left( (1+\zeta)\cdot \SingBound \right)^{\kell-\kk} \enspace. 
\end{align}
We choose $\zeta$ such that $\zeta=(1+\varepsilon/2)(1-\varepsilon/2)$, where $\varepsilon$ is from the 
parameter $\delta$ of the $(\pfs,\delta,c)$-potential function $\potF$, i.e., recall that we have $\delta= \frac{1-\varepsilon}{\SingBound}$. 
We choose $\kt_0$ to be the maximum between the quantities $2\varepsilon^{-1}\log((\kk+2)\cdot \maxDeg^{\kk})$ 
and $\hat{\kr}_0$. 
With our choices of $\zeta$ and $\kt_0$, for any $\kell\geq \kt_0$, we have 
$\kK(\kell)\leq \left((1+\varepsilon/2)\cdot \SingBound\right)^{\kell}$.

With the above in mind, standard arguments imply that  for \eqref{eq:ClaimTarget4prop:StableSSM}, it suffices to show 
 that there exists a differentiable, increasing potential function  $\Upxi:[-\infty,+\infty]\to(-\infty,+\infty)$ such that 
\begin{align}\label{eq:SSMCondition4Stability}
\phi\left( F({\bf \kz})\right)\cdot \sum^d_{\ki=1} \frac{1}{\phi({\bf \kz}_{\ki})}\cdot \left. \frac{\partial}{\partial {\bf \kx}_{\ki}}F({\bf \kx})\right|_{{\bf \kx}={\bf \kz}} \cdot {\bf m}_{\ki}
&\leq \left( \frac{1-\varepsilon}{\SingBound}\right)^{1/\pfs} \cdot \norm{ {\bf m}}{\pfs}\enspace, 
\end{align}
where $\phi=\Upxi'$. 
Using the standard chain rule for derivatives, (see also  \cite{VigodaSpectralInd}) we conclude that \eqref{eq:SSMCondition4Stability} is true by using 
$\Upxi=\potF\circ\log$, where $\potF$ is the $(\pfs,\delta,c)$-potential function. \Cref{claim:Target4prop:StableSSM} follows. 
\hfill $\square$

\spreadpoint
 \section{Proof of rapid mixing results in \Cref{sec:Introduction} - Hard-core Model} \label{sec:MainResultsHC}

We start by introducing the potential function
$\potF$. We define $\potF$ in terms of $\xdpotF=\potF'$ by having
{
\begin{align}\label{def:XPotentialFunction}
\xdpotF &: \mathbb{R}_{>0} \to \mathbb{R} & \textrm{such that} && y& \mapsto {\textstyle{\sqrt{\frac{e^y}{1+e^y}} }}  \enspace,
\end{align}
}
while $\potF(0)=0$. 

The potential function $\potF$ was proposed in a more general form in \cite{VigodaSpectralInd}. 
 It is standard to show that $\potF$ is well-defined (see also \cite{VigodaSpectralInd}). 

For any given $\lambda>0$, we define, implicitly,  function $\dcritical(\lambda)$ to be the positive 
number $z>1$ such that $\frac{z^{z}}{(z-1)^{(z+1)}}=\lambda$. 
From its definition it is not hard to see that $\dcritical(\cdot)$ is the inverse map of $\lcritical(\cdot )$, i.e., we have that 
$\dcritical(x)=\lcritical^{-1}(x)$. In that respect, $\dcritical(x)$ is well-defined as $\lcritical(x)$ is monotonically 
decreasing in $x$.

\begin{theorem}\label{thrm:GoodPotentialHC}
For $\lambda>0$, let $\dcritical=\dcritical(\lambda)$. The function $\potF$ defined in \eqref{def:XPotentialFunction} is a 
$(\pfs, \delta, c)$-potential function such that 
{
\begin{align}
\pfs^{-1}&= 1-{\textstyle \frac{\dcritical-1}{2}}\log\left(1+\textstyle{\frac{1}{\dcritical-1}} \right) \enspace, 
&\delta& \leq {\textstyle \frac{1}{\dcritical}} &\textrm{and} && c \leq {\textstyle \frac{\lambda}{1+\lambda} }\enspace.
\end{align}
}
\end{theorem}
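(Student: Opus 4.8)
\textbf{Proof plan for \Cref{thrm:GoodPotentialHC}.}
The plan is to verify the two defining conditions of a $(\pfs,\delta,c)$-potential function from \Cref{def:GoodPotential}, specialised to the Hard-core parameters $\beta=0$, $\gamma=1$, $\lambda>0$. With these values the tree recursion of \eqref{eq:BPRecursion} becomes $F_{\kd}(\kx_1,\dots,\kx_{\kd})=\lambda\prod_{\ki}\frac{1}{\kx_{\ki}+1}$, the derivative function of \eqref{eq:DerivOfLogRatio} becomes $\dlogtrecur(\kx)=-\frac{\exp(\kx)}{\exp(\kx)+1}$, so that $\abs{\dlogtrecur(\kx)}=\frac{\exp(\kx)}{\exp(\kx)+1}$, and $\xdpotF(y)=\sqrt{\frac{e^y}{1+e^y}}$ by \eqref{def:XPotentialFunction}. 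First I would record the elementary identity $\xdpotF(y)^2 = \abs{\dlogtrecur(y)}$ — this is the key algebraic coincidence that makes the choice of $\potF$ work — and also compute $\xdpotF(\logtrecur_{\kd}({\bf \ky}))$ in closed form using $\exp(\logtrecur_{\kd}({\bf y})) = \lambda\prod_{j}(1+e^{{\bf y}_j})^{-1}$.

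Next I would establish the \textbf{Boundedness} condition \eqref{eq:BoundednessPF}. Since $\abs{\dlogtrecur(\kx)}/\xdpotF(\kx) = \xdpotF(\kx) = \sqrt{e^{\kx}/(1+e^{\kx})} \le 1$, and $\xdpotF(\kz) = \sqrt{e^{\kz}/(1+e^{\kz})}$, the product $\xdpotF(\kz)\cdot\abs{\dlogtrecur(\kx)}/\xdpotF(\kx)$ is maximised by taking $\kz$ as large as possible and $\kx$ as large as possible within $\ratiorange$. For the Hard-core model ($\beta=0$, $\gamma=1$) the set $\ratiorange_{\kd}$ from \Cref{sec:RecursionVsSpectralIneq} is $[\log(\lambda\cdot 0^{\kd}),\log\lambda] = \{-\infty\}\cup(\text{stuff})$ — more carefully, $\beta\gamma = 0 < 1$ so $\ratiorange_{\kd} = [\log(\lambda\beta^{\kd}),\log(\lambda/\gamma^{\kd})] = [-\infty,\log\lambda]$, hence $\ratiorange = [-\infty,\log\lambda]$ and the supremum of $e^{\kz}/(1+e^{\kz})$ over $\kz\in\ratiorange$ is $\frac{\lambda}{1+\lambda}$, attained at $\kz=\log\lambda$. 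This gives $c \le \frac{\lambda}{1+\lambda}$ as claimed.

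The main work is the \textbf{Contraction} condition \eqref{eq:contractionRelationPF}: for all $1\le\kd<\maxDeg$ and all ${\bf y}_j$ in the image of $\potF$, and all ${\bf m}\in\mathbb{R}^{\kd}_{\ge0}$,
\begin{align}\nonumber
\xdpotF(\logtrecur_{\kd}({\bf y}))\cdot\sum_{j=1}^{\kd}\frac{\abs{\dlogtrecur({\bf y}_j)}}{\xdpotF({\bf y}_j)}\cdot {\bf m}_j \le \delta^{1/\pfs}\cdot\norm{{\bf m}}{\pfs}.
\end{align}
Using $\abs{\dlogtrecur({\bf y}_j)}/\xdpotF({\bf y}_j) = \xdpotF({\bf y}_j) = \sqrt{e^{{\bf y}_j}/(1+e^{{\bf y}_j})}$, and writing $t_j = e^{{\bf y}_j}\in(0,\infty)$, the left side is $\xdpotF(\logtrecur_{\kd})\cdot\sum_j\sqrt{\tfrac{t_j}{1+t_j}}\,{\bf m}_j$. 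I would apply H\"older's inequality with exponents $\pfs$ and its conjugate $\kr$ to pull out $\norm{{\bf m}}{\pfs}$, reducing the claim to the ``worst-case'' inequality $\xdpotF(\logtrecur_{\kd})^{\kr}\cdot\sum_j\left(\tfrac{t_j}{1+t_j}\right)^{\kr/2}\le\delta^{\kr/\pfs}$, i.e. to a statement about the scalar function obtained by setting all $t_j$ equal (by a standard convexity/symmetrisation argument this is the extremal configuration, which I would justify by a Lagrange-multiplier or monotonicity computation). Setting all $t_j=t$ gives $\logtrecur_{\kd} = \log\lambda - \kd\log(1+t)$, and the problem becomes a one-variable optimisation: find the largest $\kd$ and worst $t$ making $\frac{\lambda/(1+t)^{\kd}}{1+\lambda/(1+t)^{\kd}}\cdot \kd^{2/\kr}\cdot\left(\tfrac{t}{1+t}\right) \le \delta^{2/\pfs}$ — here I would substitute $\kd\to\infty$ with $\kd\log(1+t)$ fixed (the connective-constant / infinite-tree limit), which is precisely the regime where the critical value $\lcritical(\dcritical)=\lambda$ enters, i.e. $\dcritical = \dcritical(\lambda)$ solves $\frac{\dcritical^{\dcritical}}{(\dcritical-1)^{\dcritical+1}}=\lambda$.

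\textbf{The main obstacle} will be this last optimisation: showing that the best achievable $\pfs$ and $\delta$ are exactly $\pfs^{-1}=1-\frac{\dcritical-1}{2}\log(1+\frac{1}{\dcritical-1})$ and $\delta\le\frac{1}{\dcritical}$. This is where I would invoke the sharp potential-function analysis of \cite{ConnectiveConst} directly — the paper states it uses ``the sharp results for potential functions from \cite{ConnectiveConst}'' — so rather than re-deriving the extremal calculation from scratch, I would cite the corresponding lemma there, check that the normalisation of $\potF$ (namely $\xdpotF = \sqrt{e^y/(1+e^y)}$, up to the multiplicative constant which does not affect \eqref{eq:contractionRelationPF} since it is scale-invariant in $\potF$) matches theirs, and confirm that the quantity $1-\frac{\dcritical-1}{2}\log(1+\frac{1}{\dcritical-1})$ is exactly the exponent appearing in their contraction bound evaluated at the uniqueness threshold $\dcritical=\dcritical(\lambda)$. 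The bound $\delta\le 1/\dcritical$ then follows because at fugacity $\lambda=\lcritical(\dcritical)$ the decay rate of the standard (unweighted) SAW recursion is $1/\dcritical$, and the potential function is chosen precisely to convert the $\kell_\infty$ contraction rate into an $\kell_\pfs$ rate without loss.
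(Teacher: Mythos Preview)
Your proposal is correct and matches the paper's proof closely: the Boundedness argument is identical, and for Contraction the paper likewise defers both the symmetrisation step and the resulting one-variable optimisation to two lemmas quoted from \cite{ConnectiveConst} (stated there as \Cref{lemma:InsteadOfHolder} and \Cref{lemma:XiBoundAtFixPoint}). The only cosmetic difference is that the paper first performs an explicit change of variables (\Cref{claim:OnePotVsOtherPot}) from the log-ratio potential $\xdpotF$ to the ratio-variable potential $\dpotF(y)=\tfrac{1}{2}\sqrt{1/(y(1+y))}$ so as to match the formulation in \cite{ConnectiveConst} literally, rather than carrying out the H\"older and symmetrisation steps by hand before citing.
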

The proof of \Cref{thrm:GoodPotentialHC} appears in \Cref{sec:thrm:GoodPotentialHC}.
We also have the following claim.

\begin{restatable}{claim}{ElaborateGoodPotenHC}\label{claim:InterpretGoodPotentialHC}
For $\varepsilon\in (0,1)$, $\kR \geq 2$ and $0<\lambda<(1-\varepsilon)\lcritical(\kR)$ the following is true: 
There is $0<\kz<1$ , which only depend on $\varepsilon$, such that for $\dcritical=\dcritical(\lambda)$, we have 
{
\begin{align}\label{eq:claim:InterpretGoodPotentialHC}
{\textstyle \frac{1-z}{\kR}} & \geq {\textstyle\frac{1}{\dcritical} }&\textrm{and} && {\textstyle \frac{\lambda}{1+\lambda}} &<\textstyle{\frac{e^3}{\kR} }\enspace. 
\end{align}
}
\end{restatable}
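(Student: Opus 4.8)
\textbf{Proof plan for Claim~\ref{claim:InterpretGoodPotentialHC}.}
The plan is to unwind the definitions of $\dcritical(\cdot)$ and $\lcritical(\cdot)$ and treat the two asserted inequalities separately, since they involve different monotonicity facts. Throughout, recall that $\dcritical=\lcritical^{-1}$ and that $\lcritical(\kz)=\frac{\kz^{\kz}}{(\kz-1)^{(\kz+1)}}$ is strictly decreasing in $\kz$ on $(1,\infty)$; equivalently $\dcritical(\cdot)$ is strictly decreasing. The hypothesis $\lambda<(1-\varepsilon)\lcritical(\kR)<\lcritical(\kR)$ then gives $\dcritical=\dcritical(\lambda)>\dcritical(\lcritical(\kR))=\kR$, which is the basic inequality driving everything.

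\emph{First inequality.} I would like to produce $\kz=\kz(\varepsilon)\in(0,1)$ with $\frac{1-\kz}{\kR}\geq \frac{1}{\dcritical}$, i.e.\ $\dcritical\geq \frac{\kR}{1-\kz}$. Since $\dcritical>\kR$ already, what is needed is a \emph{quantitative} lower bound on $\dcritical-\kR$ coming from the gap $\varepsilon$ in $\lambda<(1-\varepsilon)\lcritical(\kR)$. The plan is: apply $\dcritical(\cdot)$ (decreasing) to $\lambda\le (1-\varepsilon)\lcritical(\kR)$ to get $\dcritical\geq \dcritical\big((1-\varepsilon)\lcritical(\kR)\big)$, and then bound $\dcritical\big((1-\varepsilon)\lcritical(\kR)\big)$ from below by $\frac{\kR}{1-\kz}$ for a suitable $\kz$. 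Equivalently, setting $\kz$ via $\frac{\kR}{1-\kz}=\dcritical\big((1-\varepsilon)\lcritical(\kR)\big)$, one checks $\kz\in(0,1)$, and one must verify $\kz$ can be taken to depend on $\varepsilon$ only (not on $\kR$). For the last point I would use that the map $\kR\mapsto \lcritical(\kR+a)/\lcritical(\kR)$ and related ratios are monotone/uniformly bounded for $\kR\geq 2$; concretely, since $\lcritical(\kz)\sim e/\kz$ for large $\kz$ and is well controlled on any compact subinterval of $[2,\infty)$, the relative perturbation in the inverse caused by a factor $(1-\varepsilon)$ is bounded below uniformly over $\kR\ge 2$. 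This uniformity is the main obstacle — one has to make the ``$\kz$ depends only on $\varepsilon$'' claim rigorous, presumably by an explicit estimate such as $\dcritical\big((1-\varepsilon)\lcritical(\kR)\big)\ge \kR/(1-\kz(\varepsilon))$ obtained from $\frac{d}{d\kz}\log\lcritical(\kz)=-\log\!\big(1+\tfrac{1}{\kz-1}\big)$, whose absolute value is decreasing in $\kz$ and at most $\log 2$ on $[2,\infty)$, giving $\log\lcritical(\kR)-\log\lcritical(\dcritical)\le (\log 2)(\dcritical-\kR)$, hence $\dcritical-\kR\ge \frac{-\log(1-\varepsilon)}{\log 2}$, and then choosing $\kz$ accordingly.

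\emph{Second inequality.} Here I would show $\frac{\lambda}{1+\lambda}<\frac{e^3}{\kR}$. Since $\frac{\lambda}{1+\lambda}<\lambda$ it suffices to show $\lambda\le e^3/\kR$; using $\lambda<\lcritical(\kR)$, it suffices to prove $\lcritical(\kR)\le e^3/\kR$ for all $\kR\ge 2$. The plan is to estimate $\lcritical(\kR)=\frac{\kR}{\kR-1}\cdot\big(1+\tfrac{1}{\kR-1}\big)^{\kR-1}$; the factor $\big(1+\tfrac{1}{\kR-1}\big)^{\kR-1}$ is increasing to $e$, hence at most $e$, and $\frac{\kR}{\kR-1}\le 2$ for $\kR\ge 2$, so $\lcritical(\kR)\le \frac{2e}{\kR}<\frac{e^3}{\kR}$, with lots of room. (If one wanted the sharper $\frac{\lambda}{1+\lambda}$ instead of $\lambda$, the same bound trivially applies.) This part is routine; the only care needed is the standard monotonicity of $(1+1/n)^n$.

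Assembling: choose $\kz=\kz(\varepsilon)\in(0,1)$ from the first part; the computation in the second part needs no parameter. Then both displayed inequalities in \eqref{eq:claim:InterpretGoodPotentialHC} hold, and the claim follows. The genuinely delicate point, to be handled with an explicit derivative bound on $\log\lcritical$ as sketched above, is ensuring the threshold $\kz$ is independent of $\kR$.
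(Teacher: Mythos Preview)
Your overall strategy matches the paper's: handle the two inequalities separately, use the monotonicity of $\dcritical$ for the first, and bound $\lcritical(\kR)$ directly for the second. The paper in fact only \emph{asserts} the uniformity of $\kz$ in $\kR$ without justification, so your attempt to quantify it is commendable. However, there are two concrete errors.

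\textbf{First inequality: the additive bound does not give the multiplicative one.} Your derivative argument (with a corrected constant; the formula $\frac{d}{d\kz}\log\lcritical(\kz)=-\log(1+\tfrac{1}{\kz-1})$ is not right, the actual derivative is $\log\tfrac{\kz}{\kz-1}-\tfrac{2}{\kz-1}$) yields $\dcritical-\kR\ge c(\varepsilon)$ for some $c(\varepsilon)>0$. But the target $\dcritical\ge \kR/(1-\kz)$ is equivalent to $\dcritical-\kR\ge \tfrac{\kz}{1-\kz}\kR$, and a fixed additive gap $c(\varepsilon)$ cannot dominate $\tfrac{\kz}{1-\kz}\kR$ for all $\kR\ge 2$ unless $\kz=0$. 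The fix is to bound $|\tfrac{d}{dt}\log\lcritical(t)|\le \tfrac{2}{t-1}\le \tfrac{4}{t}$ for $t\ge 2$, which upon integration from $\kR$ to $\kR/(1-\kz)$ gives $\log\lcritical(\kR)-\log\lcritical(\kR/(1-\kz))\le -4\log(1-\kz)$; hence $\lcritical(\kR/(1-\kz))\ge (1-\kz)^4\lcritical(\kR)$, and choosing $\kz$ so that $(1-\kz)^4\ge 1-\varepsilon$ works uniformly in $\kR$.

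\textbf{Second inequality: wrong factorisation.} Your expression $\lcritical(\kR)=\frac{\kR}{\kR-1}\big(1+\tfrac{1}{\kR-1}\big)^{\kR-1}$ equals $\big(\tfrac{\kR}{\kR-1}\big)^{\kR}$, which is $\lcritical(\kR)\cdot(\kR-1)$, not $\lcritical(\kR)$ (check $\kR=3$). The paper uses the correct identity $\lcritical(\kR)=\frac{1}{\kR}\big(1+\tfrac{1}{\kR-1}\big)^{\kR+1}$, then bounds $\big(1+\tfrac{1}{\kR-1}\big)^{\kR+1}\le \exp\!\big(\tfrac{\kR+1}{\kR-1}\big)\le e^3$ for $\kR\ge 2$, giving $\lcritical(\kR)\le e^3/\kR$ directly. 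With this correction your argument is essentially the paper's.
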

\Cref{claim:InterpretGoodPotentialHC} follows from elementary calculations, 
for a proof see \Cref{sec:claim:InterpretGoodPotentialHC}.

\subsection{Proof of \Cref{thrm:HC4SPRadiusHash}}
Combining \Cref{thrm:GoodPotentialHC}, \Cref{thrm:NonBacktrackingPotentialSpIn} and \Cref{claim:InterpretGoodPotentialHC}, 
with our assumption that $\maxDeg$ and $\SingBound$ being $\Theta(1)$,
we have
\begin{align} \nonumber 
\spradius {\textstyle \left( \infmatrix^{\Lambda,\tau}_{G} \right)} &= O(1) \enspace, 
\end{align}
for any $\Lambda\subset V$ and $\tau\in \{\pm 1\}^{\Lambda}$.
Furthermore, our assumptions about $\lambda$ imply that $\mu$ is $b$-marginally bounded, 
for fixed $b>0$.

Considering all the above, \Cref{thrm:HC4SPRadiusHash} follows as a corollary from \Cref{thrm:SPINLOGN}.
\hfill $\square$

\subsection{Proof of \Cref{thrm:HC4SPRadiusAdj}} 
The proof of \Cref{thrm:HC4SPRadiusAdj} is almost identical to that of \Cref{thrm:HC4SPRadiusHash}.

Combining \Cref{thrm:GoodPotentialHC}, \Cref{thrm:AdjacencyPotentialSpIn} and \Cref{claim:InterpretGoodPotentialHC},
with our assumption about $\maxDeg$ and $\aspradius$ being $\Theta(1)$,
%
we have
\begin{align} \nonumber 
\spradius \left( \infmatrix^{\Lambda,\tau}_{G} \right) &=O(1)\enspace,
\end{align}
for any $\Lambda\subset V$ and $\tau\in \{\pm 1\}^{\Lambda}$.
Furthermore, our assumptions about $\lambda$ imply that $\mu$ is $b$-marginally bounded, 
for fixed $b>0$.

Considering all the above, \Cref{thrm:HC4SPRadiusAdj} follows as a corollary from \Cref{thrm:SPINLOGN}.
\hfill $\square$

\subsection{Proof of \Cref{thrm:HC4CCK}} 
 
To prove \Cref{thrm:HC4CCK}, we use the following proposition. 

\begin{restatable}{proposition}{CcLccVsHSPradius}\label{prop:HSPRadCCLCC}
For integers $\kk,N>0$ and graph $G=(V,E)$, we have that 
$\norm{(\NBMatrix)^N}{2} \leq \left( \cconnective_{\kk}\right)^{N+k}$.
\end{restatable}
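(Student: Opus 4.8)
The plan is to bound the $(\kP,\kQ)$ entry of $(\NBMatrix)^N$ by counting walks and then use the definition of $\cconnective_\kk$. Recall from \Cref{sec:MatricesOnPaths} that $(\NBMatrix)^N(\kP,\kQ)$ equals the number of $\kk$-non-backtracking walks of length $N$ from $\kP$ to $\kQ$; such a walk is a sequence $\kw_0,\ldots,\kw_{N+\kk}$ whose first $\kk+1$ vertices form $\kP$, whose last $\kk+1$ vertices form $\kQ$, and in which every $(\kk+2)$-window of consecutive vertices is self-avoiding. First I would observe that a $\kk$-non-backtracking walk of length $N+\kk$ (as a vertex sequence $\kw_0,\ldots,\kw_{N+\kk}$) can be decomposed into overlapping self-avoiding blocks: indeed, restricting to any window of $\kk+1$ consecutive vertices gives a self-avoiding walk of length $\kk$ emanating from some vertex. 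The point is that the whole walk is \emph{determined} by its starting vertex $\kw_0$ together with the sequence of successive extensions, and each extension is constrained.

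The cleanest route is: fix the starting walk $\kP$ and bound the number of $\kk$-non-backtracking walks of length $N$ emanating from $\kP$, which is $\sum_{\kQ}(\NBMatrix)^N(\kP,\kQ)$, hence in particular an upper bound for each individual entry $(\NBMatrix)^N(\kP,\kQ)$. Such a walk corresponds to a vertex sequence $\kw_0,\ldots,\kw_{N+\kk}$ with $\kw_0\cdots\kw_\kk = \kP$, and the key structural fact is that $\kw_0,\ldots,\kw_{N+\kk}$ need not be globally self-avoiding, so I cannot directly call it one of the $\pi(\kw_0,N+\kk)$ self-avoiding walks. Instead I would bound the count by $c_{N+\kk}$ differently: note that any $\kk$-non-backtracking walk $\kw_0,\ldots,\kw_{N+\kk}$ has the property that $\kw_0,\ldots,\kw_{\kk+1}$ is self-avoiding, $\kw_1,\ldots,\kw_{\kk+2}$ is self-avoiding, and so on. A $\kk$-non-backtracking walk, viewed as a whole, is in fact itself a walk in which no vertex is repeated within distance $\kk+1$; I would argue that the number of such walks of length $N+\kk$ starting at a fixed vertex $\ku$ is at most $(c_\kk)^{\lceil (N+\kk)/\kk\rceil} \le (\cconnective_\kk)^{(N+\kk)}$ by chopping the walk into $\lceil (N+\kk)/\kk \rceil$ consecutive self-avoiding segments of length $\le \kk$ each (each segment, being a sub-walk of length $\le \kk$ of a $\kk$-non-backtracking walk, is self-avoiding, so there are at most $c_\kk$ choices for it given its starting vertex). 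Summing the chopped counts multiplicatively and using $c_\kk \le (\cconnective_\kk)^{\kk}$ gives the bound $(\cconnective_\kk)^{N+\kk}$ on $\sum_\kQ (\NBMatrix)^N(\kP,\kQ)$, hence on every entry.

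From the entrywise bound $(\NBMatrix)^N(\kP,\kQ) \le (\cconnective_\kk)^{N+\kk}$ I would then pass to the spectral norm. Since $(\NBMatrix)^N$ has nonnegative entries, $\norm{(\NBMatrix)^N}{2} \le \sqrt{\norm{(\NBMatrix)^N}{1}\cdot \norm{(\NBMatrix)^N}{\infty}}$; but I want the cleaner bound $(\cconnective_\kk)^{N+\kk}$ itself, so the better approach is to use $\norm{(\NBMatrix)^N}{2} \le \norm{(\NBMatrix)^N}{\infty}$ when the matrix is, up to the involution $\Invol$, symmetric — or more simply, to bound $\norm{(\NBMatrix)^N}{2}\le \max_\kP \sum_\kQ (\NBMatrix)^N(\kP,\kQ)$, which holds because for a nonnegative symmetric-up-to-$\Invol$ matrix the spectral norm equals the spectral radius of $(\NBMatrix)^N\Invol$ (see \eqref{eq:DefPTInvarianceFormalB}), which by Perron–Frobenius is at most the maximum absolute row sum. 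Since $\Invol$ merely permutes the index set, the row sums of $(\NBMatrix)^N\Invol$ coincide with those of $(\NBMatrix)^N$, and we have already shown each is at most $(\cconnective_\kk)^{N+\kk}$.

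The main obstacle I anticipate is the bookkeeping in the ``chopping into self-avoiding segments'' step: one must verify carefully that a length-$\le\kk$ sub-walk of a $\kk$-non-backtracking walk is genuinely self-avoiding (it is, by definition of $\kk$-non-backtracking, since every $(\kk+2)$-window — hence every shorter window — is self-avoiding), and that the number of ways to assemble the segments is at most the product of the per-segment counts $c_\kk$ (this needs the observation that each new segment is determined by a self-avoiding walk from the last vertex of the previous segment, giving at most $c_\kk$ choices, with a possible off-by-one in the last, shorter, segment that only helps). Getting the exponent to be exactly $N+\kk$ rather than something slightly larger requires taking the segmentation into exactly $N/\kk$ (or so) pieces while tracking that the total length is $N+\kk$ and that $c_\kk \le (\cconnective_\kk)^{\kk}$ is applied the right number of times; this is elementary but must be done with care.
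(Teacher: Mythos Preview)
Your approach is essentially the paper's: both chop the $k$-non-backtracking walk into consecutive length-$k$ self-avoiding segments (the paper packages this as ``$k$-wise self-avoiding walks''), bound the row sums of $(\NBMatrix)^N$ by $(\cconnective_k)^{N+k}$, and then pass to $\|\cdot\|_2$ via the symmetry of $(\NBMatrix)^N\Invol$.

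There is one genuine slip. The inequality $(c_k)^{\lceil(N+k)/k\rceil} \le (\cconnective_k)^{N+k}$ is false in general: since $c_k = (\cconnective_k)^k$, the left side equals $(\cconnective_k)^{k\lceil(N+k)/k\rceil}$, and $k\lceil(N+k)/k\rceil > N+k$ whenever $k \nmid N$. The repair is to exploit what you set up and then discarded: you are counting extensions of the \emph{fixed} walk $\kP$ by $N$ steps, not length-$(N+k)$ walks from a vertex. Segmenting only those $N$ new steps into blocks of length $\le k$ gives at most $(c_k)^{\lceil N/k\rceil} = (\cconnective_k)^{k\lceil N/k\rceil}$ choices, and now $k\lceil N/k\rceil \le N + k - 1 < N+k$, so the desired bound follows. (The paper's bookkeeping in fact sets $M = N/k + 1$ and so implicitly assumes $k \mid N$; your corrected version handles general $N$.)

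Your final step is correct and cleaner than the paper's. The paper invokes Gelfand's formula to pass from $\|\cdot\|_\infty$ to $\|\cdot\|_2$, whereas you observe directly that for the symmetric matrix $A = (\NBMatrix)^N\Invol$ one has $\|A\|_2 = \spradius(A) \le \|A\|_\infty$ (no Perron--Frobenius needed; $\spradius \le \|\cdot\|_\infty$ holds for any matrix), and since $\Invol$ is a permutation matrix both $\|\cdot\|_2$ and $\|\cdot\|_\infty$ are unchanged by right-multiplication by $\Invol$.
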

The proof of \Cref{prop:HSPRadCCLCC} appears in \Cref{sec:prop:HSPRadCCLCC}.

Given the parameters $\varepsilon\in (0,1)$ and $\cconnective_{\kk}$,
 let $\kz=\kz(\varepsilon, \cconnective_{\kk})>0$ be  such that
\begin{align}\label{eq:Base4thrm:HC4CCK}
\lcritical((1+\kz)\cconnective_k)= (1-\varepsilon/3)\lcritical(\cconnective_k) \enspace.
\end{align}
Such $\kz$ exists and it is unique, since   $\lcritical(\kx)$ is monotonically decreasing in $\kx$. 

 \Cref{prop:HSPRadCCLCC} implies that there exists $M=M(\kz,\cconnective_{\kk})$ such that for any $N\geq M$,
we have 
\begin{align}\label{eq:Base4thrm:HC4CCKBB}
\nnorm{ (\NBMatrix)^N}{1/N}{2} \leq \left( \cconnective_{\kk}\right)^{\frac{N+k}{N}}\leq (1+\kz)\cconnective_{\kk}\enspace. 
\end{align}
Let $\SingBound=\nnorm{ (\NBMatrix)^M}{1/M}{2}$, where $M$ is defined above. From \eqref{eq:Base4thrm:HC4CCK} and \eqref{eq:Base4thrm:HC4CCKBB}
we have
\begin{align}
\lcritical(\SingBound) &\geq (1-\varepsilon/3)\lcritical(\cconnective_k) \enspace.
\end{align}
For the above, we use once more that    $\lcritical(x)$ is monotonically decreasing in $x$. 

\Cref{thrm:HC4CCK} follows by noting that the above implies that for $\lambda<(1-\varepsilon)\lcritical(\cconnective_k)$ we also have that 
$\lambda<(1-\varepsilon/3)\lcritical(\SingBound)$. For such $\lambda$,  \Cref{thrm:HC4SPRadiusHash} implies that 
 there is a constant $C>0$, such that Glauber dynamics on the Hard-core model on $G$ 
has mixing time at most $C n\log n$.

\hfill $\square$

\subsection{Proof of \Cref{prop:HSPRadCCLCC}}\label{sec:prop:HSPRadCCLCC}

%
%

Let integer $\kM>1$ which we will specify later. Let a length $\kk \cdot \kM$ walk 
$\kP=\kw_0, \ldots, \kw_{\kk \cdot \kM}$  in $G$.  We say that $\kP$ is $\kk$-wise self-avoiding, 
if the sub-walk $\kP_{\ki}=\kw_{\ki\cdot \kk}, w_{\ki\cdot \kk+1}, \ldots, w_{\ki\cdot \kk+\kk}$ is 
self-avoiding, for all $\ki=0,\ldots, \kM-1$. Note that $\kP$ is not necessarily a self-avoiding walk 
since the same vertex $\kv$ can appear in both $\kP_{\ki}$ and $\kP_{\kj}$, for $\ki\neq \kj$.

For each $\kv\in V$, let $\uppi_{\kk}(\kv, \kk \kM)$ be the set of $\kk$-wise self-avoiding walks of length 
$\kk \cdot \kM$ that emanate from $\kv$.  Let $\upchi_{\kk,\kM}=\sup\{ \uppi_{\kk}(\kv, \kk \kM)\ |\ \kv\in V\}$. 
It is an easy exercise to show that 
\begin{align}
\upchi_{\kk, \kM} & \leq (\cconnective_{\kk})^{\kk \kM}\enspace. 
\end{align}
Furthermore, let $\kQ=x_{0}, \ldots, x_{\kk \kM}$ be a $\kk$-non-backtracking walk in $G$.  $\kQ$ is also a $\kk$-wise 
self-avoiding walk. To see this,  note that for all $\kj=0,\ldots, \kk\cdot(\kM-1)$, the sub-walk $\kx_{\kj},\ldots, \kx_{\kj+\kk}$ 
in $\kQ$ is self-avoiding.   For $\kQ$ to be $\kk$-wise self-avoiding, we only need the sub-walks $\kx_{\kj},\ldots, \kx_{\kj+\kk}$ 
to be self-avoiding, for $\kj$ being a multiple of $\kk$. Hence,  $Q$ is a $\kk$-wise self-avoiding walk.

For $\kW, S\in \ExtV_{\kk}$, i.e.,  two self-avoiding walks of length $\kk$ in $G$, recall that $\NBMatrix^{\kk (M-1)}(\kW, S)$ is equal to the number of 
$\kk$-non-backtracking walks of length $\kk\cdot (\kM-1)$ that start from $\kW$ and end at $S$. 

Every $\kk$-non-backtracking walk of length $\kk\cdot (\kM-1)$ from $\kW$ to $S$ is also a $\kk$-wise self-avoiding 
walk of length $\kk\cdot \kM$ starting at $\kv$,  the first vertex in $\kW$. Hence, we have 
\begin{align}
\sum\nolimits _{S\in \ExtV_{\kk}}\NBMatrix^{\kk(\kM-1)}(\kW, S) \leq \uppi_{\kk}(\kv, \kk \kM) \leq \upchi_{\kk, \kM} \enspace.
\end{align}
%

From all the above, we obtain 
\begin{align}\nonumber
\norm{ (\NBMatrix)^{\kk(\kM-1)}\cdot \Invol }{\infty}=
\norm{ (\NBMatrix)^{\kk(\kM-1)}}{\infty} &\leq \upchi_{\kk,\kM}= (\cconnective_{\kk})^{\kk \kM}\enspace, 
\end{align}
where $\Invol$ is the involution defined in \eqref{def:OfInvolutation}.
For any integer $\kell>0$, we have that
\begin{align}\label{eq:HRLInfBound}
\norm{ ( (\NBMatrix)^{\kk(\kM-1)}\cdot \Invol )^{\kell}}{\infty} &\leq 
\nnorm{
(\NBMatrix)^{\kk(\kM-1)}\cdot \Invol}{\kell}{\infty} \leq (\cconnective_{\kk})^{\kk\cdot \kM\cdot \kell}
\enspace. 
\end{align}
Recall that $(\NBMatrix)^{\kk(\kM-1)}\cdot \Invol$ is a symmetric matrix, e.g., see \eqref{eq:DefPTInvarianceFormalB}. 
From  Gelfand's formula e.g., see \cite{SIAM-LAlg}, we get
\begin{align}\nonumber
\norm{ (\NBMatrix)^{\kk(\kM-1)}\cdot \Invol }{2} =
 \lim_{\kell\to\infty} 
  \nnorm{  {\textstyle ( (\NBMatrix)^{\kk(\kM-1)}\cdot \Invol )^{\kell} } }{1/\kell}{\infty} &
 \leq \lim_{\kell \to \infty} \left ((\cconnective_{\kk})^{\kk\cdot M\cdot \kell} \right)^{1/\kell}
= (\cconnective_{\kk})^{\kk\cdot \kM} \enspace. 
\end{align}
where the second derivation follows from \eqref{eq:HRLInfBound}. 
Since $\Invol$ is an involution, we have  $\norm{ (\NBMatrix)^{\kk(\kM-1)}\cdot \Invol}{2}=\norm{ (\NBMatrix)^{\kk(\kM-1)}}{2}$. 
From the above,  we conclude that $\norm{ (\NBMatrix)^{\kk(\kM-1)}}{2} \leq \left(\cconnective_{\kk}\right)^{\kk\cdot \kM}$.

\Cref{prop:HSPRadCCLCC} follows by setting $\kM=\frac{N}{k}+1$. 
\hfill $\square$

\spreadpoint
 \section{Proof of results in \Cref{sec:Introduction} - Ising Model}\label{sec:MainResultsIsing}
For $d>0$, consider the function $\logtrecur_d$ defined in \eqref{eq:DefOfH}.
For the Ising model, this correspond to the  function  $\logtrecur_d:[-\infty, +\infty]^d\to [-\infty, +\infty] $ such that
{\small
\begin{align}\label{eq:LogRatioIsing}
 (\kx_1, \ldots, \kx_d)\mapsto \log\lambda+ \sum\nolimits_{\ki\in [d]}
\log\left( \frac{\beta \exp(\kx_{\ki})+1}{\exp(\kx_{\ki})+\beta} \right) \enspace .
\end{align}
}

The following is a folklore result. 

\begin{restatable}{lemma}{IsingInfNormBound}\label{lemma:IsingInfNormBound}
For any $\varepsilon \in (0,1)$, $d>0$, $\kR > 1$ and $\beta\in \UnIsing(\kR,\varepsilon )$
we have
{\small 
\begin{align}\label{eq:lemma:IsingInfNormBound}
\sup\nolimits_{\bf y}\norm{\nabla \logtrecur_d ({\bf y})}{\infty} &\leq { \frac{1-\varepsilon}{\kR}} \enspace.
\end{align}
}
\end{restatable}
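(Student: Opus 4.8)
The plan is to reduce the claim to a single-variable estimate. Since $\logtrecur_d$ is a sum over the $d$ children and $\frac{\partial}{\partial \kx_\ki}\logtrecur_d(\kx_1,\ldots,\kx_d)=\dlogtrecur(\kx_\ki)$, where for the Ising model $\gamma=\beta$ and so
\begin{align}\nonumber
\dlogtrecur(\kx) = -\frac{(1-\beta^2)\exp(\kx)}{(\beta\exp(\kx)+1)(\exp(\kx)+\beta)}\enspace,
\end{align}
we have $\norm{\nabla \logtrecur_d({\bf y})}{\infty}=\sum_{\ki=1}^{d}\abs{\dlogtrecur(\ky_\ki)}$ only if we measure the gradient in $\ell_1$; but the statement asks for the $\ell_\infty$ norm of the gradient, which is simply $\max_\ki \abs{\dlogtrecur(\ky_\ki)}$. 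So it suffices to show $\sup_{\kx\in[-\infty,+\infty]}\abs{\dlogtrecur(\kx)}\leq \frac{1-\varepsilon}{\kR}$.

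First I would substitute $\kt=\exp(\kx)\in[0,+\infty]$ and analyse $g(\kt)=\frac{\abs{1-\beta^2}\,\kt}{(\beta\kt+1)(\kt+\beta)}$. A routine calculus computation (differentiate, set numerator to zero) shows the maximum over $\kt\ge 0$ is attained at $\kt=1/\sqrt{\beta}\cdot\sqrt{\beta}=1$... more precisely at the point where $\beta\kt^2=\beta$, i.e. $\kt=1$, giving $g(1)=\frac{\abs{1-\beta^2}}{(\beta+1)^2}=\frac{\abs{1-\beta}}{1+\beta}$ when $\beta<1$ and $\frac{\beta-1}{\beta+1}$ when $\beta>1$. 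Thus $\sup_\kx\abs{\dlogtrecur(\kx)} = \frac{\abs{\beta-1}}{\beta+1}$. It then remains to check that for every $\beta$ in the interval $\UnIsing(\kR,\varepsilon)=\left[\frac{\kR-1+\varepsilon}{\kR+1-\varepsilon},\frac{\kR+1-\varepsilon}{\kR-1+\varepsilon}\right]$ one has $\frac{\abs{\beta-1}}{\beta+1}\le\frac{1-\varepsilon}{\kR}$.

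For this last step I would note that $\beta\mapsto\frac{\abs{\beta-1}}{\beta+1}$ is decreasing on $(0,1]$ and increasing on $[1,\infty)$, so its maximum over the symmetric interval $\UnIsing(\kR,\varepsilon)$ is attained at one of the two endpoints. By the symmetry $\beta\leftrightarrow 1/\beta$ (which leaves $\frac{\abs{\beta-1}}{\beta+1}$ invariant), both endpoints give the same value, so it suffices to evaluate at $\beta_-=\frac{\kR-1+\varepsilon}{\kR+1-\varepsilon}$. Here $\beta_-+1=\frac{2\kR}{\kR+1-\varepsilon}$ and $1-\beta_-=\frac{2(1-\varepsilon)}{\kR+1-\varepsilon}$, whence $\frac{1-\beta_-}{\beta_-+1}=\frac{1-\varepsilon}{\kR}$ exactly. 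This gives $\sup_{\bf y}\norm{\nabla\logtrecur_d({\bf y})}{\infty}\le\frac{1-\varepsilon}{\kR}$, as required.

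I do not anticipate a serious obstacle here; the only points requiring care are (i) verifying that the critical point of $g(\kt)$ is indeed an interior maximum (and handling the boundary cases $\kt=0,+\infty$ where $g$ vanishes), and (ii) confirming the monotonicity of $\beta\mapsto\frac{\abs{\beta-1}}{\beta+1}$ on each side of $1$ so that the supremum over $\UnIsing(\kR,\varepsilon)$ reduces to an endpoint evaluation. Both are elementary one-variable facts, so the proof is essentially a short computation once the reduction to $\sup_\kx\abs{\dlogtrecur(\kx)}=\frac{\abs{\beta-1}}{\beta+1}$ is in place.
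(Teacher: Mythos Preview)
Your proposal is correct and follows essentially the same route as the paper: reduce to bounding $\sup_\kx\abs{\dlogtrecur(\kx)}$, show this equals $\frac{\abs{\beta-1}}{\beta+1}$, and then check that this quantity is at most $\frac{1-\varepsilon}{\kR}$ on $\UnIsing(\kR,\varepsilon)$ by monotonicity and endpoint evaluation. The only cosmetic difference is that the paper obtains the single-variable maximum via the inequality $e^{-\kx}+e^{\kx}\ge 2$ rather than by differentiating in $\kt=e^{\kx}$; both arguments yield the same bound at $\kx=0$.
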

For the sake of our paper being self-contained, we present a proof of \Cref{lemma:IsingInfNormBound} in
\Cref{sec:lemma:IsingInfNormBound}.

\subsection{Proof of \Cref{thrm:Ising4SPRadiusHash}}\label{sec:thrm:Ising4SPRadiusHash}

For our choice of $\beta$ and $\lambda$,  \Cref{lemma:IsingInfNormBound} implies 
that the corresponding Gibbs distribution  $\mu^{\Lambda,\tau}_G$ exhibits $\left( \frac{1-\varepsilon}{\SingBound}\right)$-contraction
for any $\Lambda\subset V$ and $\tau\in \{\pm 1\}^{\Lambda}$. That is, 
\begin{align}\nonumber 
\sup\nolimits_{\bf y}\norm{\nabla \logtrecur_d ({\bf y})}{\infty} &\leq ({1-\varepsilon})/{\SingBound} & \forall d\in [\maxDeg] \enspace.
\end{align}
The above, together with \Cref{thrm:NonBacktrackingInfinityMixing},  imply that 
for any $\Lambda\subset V$ and $\tau\in \{\pm 1\}^{\Lambda}$, the pairwise influence 
matrix $\infmatrix^{\Lambda,\tau}_{G}$, induced by $\mu^{\Lambda,\tau}_G$, satisfies that 
\begin{align}\nonumber 
\spradius(\infmatrix^{\Lambda,\tau}_{G})\leq \widehat{C} \cdot \varepsilon^{-1} \enspace,
\end{align}
for fixed number $\widehat{C}>0$. Clearly, we have $\spradius(\infmatrix^{\Lambda,\tau}_{G})\in O(1)$.

Furthermore, our assumptions about $\beta$ and $\lambda$ imply that $\mu_G$ is $b$-marginally bounded, 
for fixed $b>0$.

Considering all the above, \Cref{thrm:Ising4SPRadiusHash} follows as a corollary from \Cref{thrm:SPINLOGN}.
\hfill $\square$

\subsection{Proof of \Cref{thrm:Ising4SPRadiusAdj}}\label{sec:thrm:Ising4SPRadiusAdj}
The proof of \Cref{thrm:Ising4SPRadiusAdj} is almost identical to  \Cref{thrm:Ising4SPRadiusHash}.

For our choice of $\beta$ and $\lambda$,  \Cref{lemma:IsingInfNormBound} implies 
that the corresponding Gibbs distribution  $\mu^{\Lambda,\tau}_G$ exhibits $\left( \frac{1-\varepsilon}{\aspradius}\right)$-contraction
for all $\Lambda\subset V$ and $\tau\in \{\pm 1\}^{\Lambda}$. That is, 
\begin{align}\nonumber
\sup\nolimits_{\bf y}\norm{\nabla \logtrecur_d ({\bf y})}{\infty} &\leq (1-\varepsilon)/\aspradius & \forall d\in [\maxDeg] \enspace.
\end{align}
The above, combined with \Cref{thrm:AdjacencyInfinityMixing} imply that 
for $\Lambda\subset V$ and $\tau\in \{\pm 1\}^{\Lambda}$, the pairwise influence 
matrix $\infmatrix^{\Lambda,\tau}_{G}$, induced by $\mu^{\Lambda,\tau}_G$, satisfies that 
\begin{align}\nonumber 
\spradius(\infmatrix^{\Lambda,\tau}_{G})\leq \varepsilon^{-1}=O(1) \enspace.
\end{align}
Furthermore, our assumptions about $\beta$ and $\lambda$ imply that $\mu$ is $b$-marginally bounded, 
for fixed $b>0$. 

Considering all the above, \Cref{thrm:Ising4SPRadiusAdj} follows as a corollary from \Cref{thrm:SPINLOGN}.
\hfill $\square$

\subsection{Proof of \Cref{thrm:Ising4CCK}}
The proof of \Cref{thrm:Ising4CCK} is conceptually similar to \Cref{thrm:HC4CCK}.
We are using \Cref{prop:HSPRadCCLCC} once more in this proof.

It is straightforward to show that for any $\kz<\hat{\kz}$, we have $\UnIsing(\hat{\kz},\delta)\subset \UnIsing(\kz,\delta)$. 
Similarly,  for any $\delta<\hat{\delta}$, we have $\UnIsing(\kz,\hat{\delta})\subset \UnIsing(\kz,\delta)$. 

Given the parameters $\varepsilon\in (0,1)$ and $\cconnective_{\kk}$,
 let $\kz=\kz(\varepsilon, \cconnective_{\kk})>0$ be  such that
\begin{align}\label{eq:Base4thrm:Ising4CCK}
\UnIsing(\cconnective_{\kk}, \varepsilon)= \UnIsing((1+\kz)\cconnective_{\kk}, \varepsilon/3)\enspace. 
\end{align}
Furthermore, \Cref{prop:HSPRadCCLCC} implies that there exists $M=M(z,\cconnective_{\kk})$ such that for any $N\geq M$,
we have 
\begin{align}\label{eq:Base4thrm:Ising4CCKBB}
\nnorm{(\NBMatrix)^N}{1/N}{2} \leq \left( \cconnective_{\kk}\right)^{\frac{N+k}{N}}\leq (1+\kz)\cconnective_{\kk}\enspace. 
\end{align}
Let $\SingBound=\nnorm{(\NBMatrix)^M}{1/M}{2}$, where $M$ is defined above. The above relations imply that
\begin{align}
\UnIsing(\cconnective_{\kk}, \varepsilon)\subseteq  \UnIsing(\SingBound, \varepsilon/3)\enspace. 
\end{align}
\Cref{thrm:Ising4CCK} follows by noting that the above relation implies that for $\beta\in \UnIsing(\cconnective_{\kk}, \varepsilon)$,
we also have $\beta\in \UnIsing(\SingBound, \varepsilon/3)$. For such $\beta$ and for $\lambda>0$,  \Cref{thrm:Ising4CCK} implies that there is 
a constant $C>0$, such that Glauber dynamics on the Ising model  on $G$ has mixing time at most $C n\log n$.
%
\hfill $\square$

\spreadpoint
\section{Proof of \Cref{thrm:GoodPotentialHC}} \label{sec:thrm:GoodPotentialHC}
Recall that the ratio of Gibbs marginals $\gratio^{\Lambda, \tau}_{\ku}$, defined in 
\Cref{sec:RecursionVsSpectralIneq}, is possible to be equal to zero, or $\infty$. Typically, this 
happens if  vertex $\ku$ with respect to which we consider the ratio is pinned, i.e., $\ku \in \Lambda$, 
or has a neighbour in $\Lambda$. When we deal with the Hard-core model, 
there is a standard way to avoid these infinities and zeros in our calculations and make the derivation
much simpler.

Suppose we have the Hard-core model with fugacity $\lambda>0$ on a tree $T$, while at the 
set of vertices $\Lambda$ we have a pinning $\tau$. Then, it is elementary that 
this distribution is identical to the Hard-core model with the same fugacity on the tree (or forest) $T'$ 
which is obtained from $T$ by removing every vertex $w$ which either belongs to $\Lambda$, or 
has a neighbour in $ \Lambda$ which is pinned to 1.

From now on, for the instances we consider, assume that we have applied the above steps and
removed any pinnings.

It is useful to write down the functions that arise from the tree recursions
in \Cref{sec:RecursionVsSpectralIneq}, specifically for 
the Hard-core model with fugacity $\lambda$. 
Recall that, in this case, we have $\beta=0$ and $\gamma=1$. 
In the following definitions, we assume that there are no pinnings. 

For integer $d\geq 1$, we have  $\trecur_d:\mathbb{R}^d_{>0}\to (0, \lambda)$ such that 
for any ${\bf \kx}=({\bf \kx}_1, \ldots, {\bf \kx}_d)$, we have  
{\small 
\begin{align}\label{eq:HC-BPRecursion}
{\bf \kx}\mapsto \lambda \prod\nolimits_{\ki\in [d]}\frac{1}{{\bf \kx}_{\ki}+1} \enspace.
\end{align}
}
We also define $\trecur_{d,{\rm sym}}:\mathbb{R}_{>0}\to (0, \lambda)$ the {\em symmetric} version 
of the above function, that is
\begin{align}\label{def:SymTreeHC}
\kx&\mapsto \trecur_d(\kx,\kx,\ldots, \kx) \enspace.
\end{align}
We also have $\logtrecur_d:\mathbb{R}^d\to \mathbb{R}$ such that for any 
${\bf \kx}=({\bf \kx}_1, \ldots, {\bf \kx}_d)$, we have  
{\small
\begin{align}\label{eq:HC-DefOfH}
{\bf \kx} \mapsto \log \lambda+\sum\nolimits_{\ki\in [d]}
\log\left( \frac{1}{\exp({\bf \kx}_{\ki})+1} \right) \enspace.
\end{align}
}
For $\frac{\partial }{\partial {\bf \kx}_{\ki}}\logtrecur_d({\bf x})=\dlogtrecur({\bf \kx}_{\ki})$, 
we have $\dlogtrecur:\mathbb{R} \to \mathbb{R}$ such that 
{\small
\begin{align}\label{def:GradientRecHC}
{\bf  \kx} \mapsto -\frac{e^{{\bf \kx}_{\ki}}}{e^{{\bf \kx}_{\ki}}+1} \enspace.
\end{align}
}
Finally, the set of log-ratios $\ratiorange$, defined in \eqref{eq:DefOfRatiorange}, 
satisfies that 
\begin{align}\label{eq:RangeJ4HC}
 \ratiorange=(-\infty, \log(\lambda)) \enspace.
\end{align} 
Set $S_{\potF}$, i.e., the image of $\potF$, satisfies $S_{\potF}=(-\infty, \infty)$.

\newcommand{\easyspace}{Q_{\potF}}
\newcommand{\easyspaceX}{L_{\potF}}

 \Cref{thrm:GoodPotentialHC}  follows by showing that $\potF$ satisfies the Contraction
and the Boundedness conditions of \Cref{def:GoodPotential}, for the parameters indicated in the statement.

We start with the {\em Contraction}. For any integer $d>0$, we let 
$\cR_{d}:\mathbb{R}^d\times \mathbb{R}^d \to \mathbb{R}$ be such that 
for ${\bf m}=({\bf m}_1, \ldots, {\bf m}_d)\in \mathbb{R}^{d}_{\geq 0}$, and 
${\bf \ky}=({\bf \ky}_1, \ldots, {\bf \ky}_d) \in \mathbb{R}^d$ we have  
{\small 
\begin{align}
\cR_{d}({\bf m}, {\bf \ky})={\textstyle \xdpotF(\logtrecur_{d}({\bf \ky})) }\sum^{d}_{j=1}
\frac{ \abs{ \dlogtrecur( {\bf \ky}_j ) }}{ \xdpotF( {\bf \ky}_j )} 
\times {\bf m}_j \enspace, \nonumber
\end{align}
}
where recall that $\xdpotF=\potF'$.

The following result implies that $\potF$ satisfies the contraction condition.
\begin{proposition}[Contraction]\label{prop:PotContracts}
For $\delta$ and $s$ as defined in \Cref{thrm:GoodPotentialHC}, the following is true: 
for any integer $d>0$, for ${\bf m}\in \mathbb{R}^{d}_{\geq 0}$, we have 
\begin{align}
\sup\nolimits_{{\bf \ky}\in (\easyspace)^d }\{ \cR_{d}({\bf m}, {\bf y}) \} \leq \delta^{\frac{1}{s}} \cdot \norm{ {\bf m}}{s} \enspace,
\end{align}
where $\easyspace \subseteq \mathbb{R}$ contains every $y \in \mathbb{R}$ such that there is 
$\tilde{y} \in S_{\potF} $ for which we have ${y}=\potF^{-1}(\tilde{y})$. 
\end{proposition}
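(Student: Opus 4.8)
The statement \Cref{prop:PotContracts} is the Hard-core specialisation of the contraction condition in \Cref{def:GoodPotential}, so the natural strategy is to reduce the multivariate supremum to a one-variable optimisation, exactly as in the analogous arguments of \cite{ConnectiveConst,VigodaSpectralInd}. First I would observe that, for the Hard-core model, $\dlogtrecur(\ky_j)=-\frac{e^{\ky_j}}{e^{\ky_j}+1}$ and $\xdpotF(\ky_j)=\sqrt{\frac{e^{\ky_j}}{1+e^{\ky_j}}}$, so the ratio $\frac{|\dlogtrecur(\ky_j)|}{\xdpotF(\ky_j)}=\sqrt{\frac{e^{\ky_j}}{1+e^{\ky_j}}}=\xdpotF(\ky_j)$; this is the key algebraic simplification that makes the potential $\potF$ work. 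Hence $\cR_d({\bf m},{\bf \ky})=\xdpotF(\logtrecur_d({\bf \ky}))\sum_{j=1}^d \xdpotF(\ky_j)\,{\bf m}_j$. Applying H\"older's inequality with exponents $s$ and its conjugate $r$ (where $r^{-1}+s^{-1}=1$) to the sum $\sum_j \xdpotF(\ky_j){\bf m}_j$ gives $\sum_j \xdpotF(\ky_j){\bf m}_j\le \big(\sum_j \xdpotF(\ky_j)^r\big)^{1/r}\norm{{\bf m}}{s}$, so it suffices to show
\begin{align}\nonumber
\sup_{{\bf \ky}\in(\easyspace)^d}\ \xdpotF\big(\logtrecur_d({\bf \ky})\big)\cdot\Big(\sum\nolimits_{j=1}^d \xdpotF(\ky_j)^r\Big)^{1/r}\ \le\ \delta^{1/s}.
\end{align}

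\textbf{Reduction to the symmetric case and one-variable analysis.} The next step is to argue that the supremum over ${\bf \ky}\in(\easyspace)^d$ is attained (or approached) at a symmetric point ${\bf \ky}=(\ky,\ldots,\ky)$. This follows because $\logtrecur_d$ depends on ${\bf \ky}$ only through $\sum_j \log\frac{1}{e^{\ky_j}+1}$, which is Schur-concave in the appropriate sense, while $\sum_j \xdpotF(\ky_j)^r$ is maximised — for fixed value of $\logtrecur_d({\bf \ky})$ — when all coordinates are equal; this is a standard Lagrange-multiplier / symmetrisation argument and I would carry it out as in \cite{ConnectiveConst}. At a symmetric point, writing $\trecur_{d,{\rm sym}}(\ky)=\lambda/(e^{\ky}+1)^d$ for the symmetric recursion, the quantity to bound becomes $d^{1/r}\,\xdpotF\big(\log\trecur_{d,{\rm sym}}(\ky)\big)\,\xdpotF(\ky)$. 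Substituting the explicit form of $\xdpotF$ and setting $x=e^{\ky}>0$, this is $d^{1/r}\sqrt{\frac{\trecur}{1+\trecur}}\sqrt{\frac{x}{1+x}}$ with $\trecur=\lambda/(1+x)^d$; after simplification one must show this is at most $\delta^{1/s}=(1/\dcritical)^{1/s}$ for all $x>0$ and all integers $d\ge 1$. Taking logarithms and differentiating in $x$ (for fixed $d$) locates the worst $x$; one then optimises over $d$, and the extremal configuration should be $d=\dcritical-1$, $x=\frac{1}{\dcritical-1}$, reproducing exactly the tree-uniqueness threshold and the stated value of $s$, namely $s^{-1}=1-\frac{\dcritical-1}{2}\log(1+\frac{1}{\dcritical-1})$.

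\textbf{The boundedness condition.} For the Boundedness part of \Cref{def:GoodPotential} I would directly evaluate $\max_{\kz,\kx\in\ratiorange}\xdpotF(\kz)\cdot\frac{|\dlogtrecur(\kx)|}{\xdpotF(\kx)}$. Using again $\frac{|\dlogtrecur(\kx)|}{\xdpotF(\kx)}=\xdpotF(\kx)$ and $\ratiorange=(-\infty,\log\lambda)$, this maximum equals $\big(\sup_{\kx<\log\lambda}\xdpotF(\kx)\big)^2=\sup_{\kx<\log\lambda}\frac{e^{\kx}}{1+e^{\kx}}=\frac{\lambda}{1+\lambda}$, since $t\mapsto \frac{t}{1+t}$ is increasing; this gives $c\le\frac{\lambda}{1+\lambda}$ as claimed.

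\textbf{Main obstacle.} The routine part is the explicit differentiation; the genuinely delicate step is the symmetrisation argument — showing rigorously that the supremum of $\xdpotF(\logtrecur_d({\bf \ky}))\cdot\norm{\xdpotF({\bf \ky})}{r}$ over a box is attained at a diagonal point, and handling the boundary behaviour as some $\ky_j\to\pm\infty$ (equivalently $x\to 0$ or $x\to\infty$, corresponding to pinned or forced vertices). I expect this to require a careful KKT-type analysis of the constrained optimisation together with the convexity/concavity structure of $\log\xdpotF$ and of $\log\trecur_d$, essentially the same technical core as in \cite{ConnectiveConst}; once that is in place, the one-dimensional optimisation and the identification of the extremal $(d,x)$ with the uniqueness threshold is a standard computation.
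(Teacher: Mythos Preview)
Your approach is essentially the paper's: both defer the hard step to \cite{ConnectiveConst}. The paper first rewrites $\cR_d$ via the equivalent potential $\dpotF(y)=\tfrac12\sqrt{1/(y(1+y))}$ (your identity $|\dlogtrecur|/\xdpotF=\xdpotF$ is exactly this change of variables), and then cites two black-box lemmas from \cite{ConnectiveConst}: one delivering $\cR_d({\bf m},{\bf y})\le \Xi(s,k,\bar x)^{1/s}\norm{{\bf m}}{s}$ for some $0\le k\le d$ and $\bar x>0$, where $\Xi(s,k,x)^{1/s}$ is precisely your post-H\"older quantity at the symmetric $k$-dimensional point $x$ (so H\"older and the symmetrisation are bundled into one cited step), and one bounding $\Xi(s,k,x)\le 1/\dcritical$ uniformly in $k,x$ for the stated $s$.

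One caution on your symmetrisation claim: the assertion that ``$\sum_j\xdpotF(y_j)^r$ is maximised, for fixed $\logtrecur_d({\bf y})$, at equal coordinates'' is false as stated. Writing $u_j=e^{y_j}/(1+e^{y_j})$, you are maximising $\sum_j u_j^{r/2}$ subject to $\sum_j\log(1-u_j)=\text{const}$; since always $r>2$ here (because $s^{-1}=1-\tfrac{\dcritical-1}{2}\log(1+\tfrac{1}{\dcritical-1})>\tfrac12$), the map $u\mapsto u^{r/2}$ is convex and a boundary configuration with all but one $u_j\to 0$ can beat the symmetric point. The correct reduction --- and what \cite{ConnectiveConst} actually proves --- is to a symmetric point in a possibly \emph{lower} dimension $k\le d$; you rightly flag this step as the main obstacle, and the paper likewise does not reprove it. Two minor points: the Boundedness paragraph is extraneous to this proposition (it is a separate lemma in the paper), and the extremal symmetric configuration is $d=\dcritical$, $x=1/(\dcritical-1)$, not $d=\dcritical-1$.
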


The proof of \Cref{prop:PotContracts} appears in \Cref{sec:prop:PotContracts}.

 We now focus on establishing the {\em Boundedness} property of $\potF$.

\begin{lemma}[Boundedness]\label{lemma:PotBooundedness}
We have that $\textstyle \max_{\ky_1, \ky_2\in \ratiorange }\left\{ \xdpotF(\ky_2) \cdot \frac{|\dlogtrecur(\ky_1)|}{\xdpotF(\ky_1)} \right\} 
\leq \frac{\lambda}{1+\lambda}$.
\end{lemma}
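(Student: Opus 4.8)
The statement to prove is \Cref{lemma:PotBooundedness}, namely that
\[
\max_{\ky_1,\ky_2\in \ratiorange}\left\{ \xdpotF(\ky_2)\cdot \frac{\abs{\dlogtrecur(\ky_1)}}{\xdpotF(\ky_1)} \right\}\leq \frac{\lambda}{1+\lambda},
\]
where for the Hard-core model we have $\xdpotF(y)=\sqrt{e^y/(1+e^y)}$, $\dlogtrecur(y)=-e^y/(e^y+1)$, and $\ratiorange=(-\infty,\log\lambda)$ by \eqref{eq:RangeJ4HC}. Since the maximum over $\ky_1$ and $\ky_2$ decouples, the plan is to bound the two factors separately: first compute $\sup_{\ky_2\in\ratiorange}\xdpotF(\ky_2)$, then compute $\sup_{\ky_1\in\ratiorange}\abs{\dlogtrecur(\ky_1)}/\xdpotF(\ky_1)$, and finally multiply the two bounds and simplify.

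For the first factor, note that $\xdpotF(y)=\sqrt{e^y/(1+e^y)}$ is strictly increasing in $y$, so over the interval $(-\infty,\log\lambda)$ its supremum is attained in the limit $y\to\log\lambda$, giving $\sup_{\ky_2\in\ratiorange}\xdpotF(\ky_2)=\sqrt{\lambda/(1+\lambda)}$. For the second factor, substitute the formulas to get
\[
\frac{\abs{\dlogtrecur(y)}}{\xdpotF(y)}=\frac{e^y/(e^y+1)}{\sqrt{e^y/(1+e^y)}}=\sqrt{\frac{e^y}{1+e^y}},
\]
which is exactly $\xdpotF(y)$ again; hence this factor is also increasing in $y$ and its supremum over $(-\infty,\log\lambda)$ is likewise $\sqrt{\lambda/(1+\lambda)}$. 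Multiplying, the product is bounded by $\sqrt{\lambda/(1+\lambda)}\cdot\sqrt{\lambda/(1+\lambda)}=\lambda/(1+\lambda)$, which is precisely the claimed bound.

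There is essentially no obstacle here: the lemma reduces to the elementary observation that, for the specific potential function $\potF$ of \eqref{def:XPotentialFunction} applied to the Hard-core recursion, both $\xdpotF(y)$ and $\abs{\dlogtrecur(y)}/\xdpotF(y)$ coincide with the increasing function $\sqrt{e^y/(1+e^y)}$, so the maximum over the half-line $\ratiorange=(-\infty,\log\lambda)$ is controlled by its value at the right endpoint $y=\log\lambda$. The only minor point to handle carefully is that the endpoint $\log\lambda$ is not included in the open interval $\ratiorange$, but since the function is continuous and increasing the supremum over the open interval still equals the value at $\log\lambda$, and the inequality in the lemma is non-strict, so this causes no difficulty. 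One should also record that $\xdpotF$ is well-defined and strictly positive on all of $\mathbb{R}$ (as already noted after \eqref{def:XPotentialFunction}), so the quotient $\abs{\dlogtrecur(y)}/\xdpotF(y)$ makes sense throughout $\ratiorange$.
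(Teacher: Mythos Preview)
Your proof is correct and follows essentially the same approach as the paper. Both arguments hinge on the identity $\xdpotF(y)=\sqrt{\abs{\dlogtrecur(y)}}$, which makes the quantity inside the maximum equal to $\sqrt{\abs{\dlogtrecur(\ky_1)}\,\abs{\dlogtrecur(\ky_2)}}$, and then use that $e^y/(1+e^y)$ is increasing on $\ratiorange=(-\infty,\log\lambda)$; you present this by bounding the two factors separately, while the paper combines them into a single product, but the content is the same.
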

\begin{proof}
Using the definitions of the functions $\xdpotF$ and $\dlogtrecur$ from \eqref{def:PotentialFunction} and 
\eqref{def:GradientRecHC}, respectively, we have
{\small 
\begin{align}
 \max_{\ky_1,\ky_2\in \ratiorange }\left\{ \xdpotF(\ky_2) \cdot \frac{|\dlogtrecur(\ky_1)|}{\xdpotF(\ky_1)} \right\} &=
 \max_{\ky_1,\ky_2\in \ratiorange} \left\{\sqrt{\dlogtrecur(\ky_1)\dlogtrecur(\ky_2) } \right\} 
 \ =\ \max_{\ky_1,\ky_2\in \ratiorange} \left\{\sqrt{ \frac{e^{\ky_1}}{1+e^{\ky_1}}\frac{e^{\ky_2}}{1+e^{\ky_2}} } \right\} 
\ =\ { \frac{\lambda}{1+\lambda}} \enspace. \nonumber 
\end{align}
}
The last inequality follows from the observation that the function $g(\kx)=\frac{e^{\kx}}{1+e^{\kx}}$ is increasing in $\kx$, 
while, from \eqref{eq:RangeJ4HC}, we have that $e^{\ky_1},e^{\ky_2}\leq \lambda$. 
The claim follows. 
\end{proof}

In light of \Cref{prop:PotContracts} and \Cref{lemma:PotBooundedness}, \Cref{thrm:GoodPotentialHC} follows. 
 \hfill $\square$

\newcommand{\HCFactor}{{\Xi}}

\subsection{Proof of \Cref{prop:PotContracts}}\label{sec:prop:PotContracts}

The proposition follows by using results from \cite{ConnectiveConst}. However, in order to apply these results, we need
to bring $\cR_{d}({\bf m}, {\bf y})$ into an appropriate form.

For any $d>0$, we let $\cJ_{d}:\mathbb{R}^d_{\geq 0}\times \mathbb{R}^d_{\geq 0}\to \mathbb{R}$ be such that 
for ${\bf m}=({\bf m}_1, \ldots, {\bf m}_d)\in \mathbb{R}^{d}_{\geq 0}$ and ${\bf \kz}=({\bf \kz}_1,\ldots, {\bf \kz}_d) \in \mathbb{R}^{d}_{\geq 0}$, we have
{\small 
\begin{align}
\cJ_{d}({\bf m}, {\bf \kz}) &= \xdpotF(\log \trecur_{d}({\bf \kz}) ) \sum\nolimits_{j\in [d]}
\frac{\abs{ \dlogtrecur(\log {\bf \kz}_j )}}{ \xdpotF( \log {\bf \kz}_j )} 
\times {\bf m}_j \enspace. \nonumber
\end{align}
}
Using the definitions in \eqref{eq:HC-BPRecursion} and \eqref{eq:HC-DefOfH}, it is elementary to verify that for any $d>0$, for any 
${\bf m}\in \mathbb{R}^d_{\geq 0}$, ${\bf z}\in \mathbb{R}^d_{>0}$ and ${\bf y}\in \mathbb{R}^{d}$ such that ${\bf z}_j=e^{{\bf y}_j}$,
we have that
\begin{align}\nonumber
\cJ_{d}({\bf m}, {\bf z})= \cR_{d}({\bf m}, {\bf y}) \enspace.
\end{align}
In light of the above, the proposition follows by showing that 
\begin{align}\label{prop:Target4PotContractsA}
\sup\nolimits_{{\bf z}\in \mathbb{R}^d_{>0}} \{ \cJ_{d}({\bf m}, {\bf z}) \} \leq \delta^{{1}/{\pfs}} \cdot \norm{ {\bf m}}{\pfs} \enspace,
\end{align}
for $\pfs$ and $\delta$ indicated in \Cref{thrm:GoodPotentialHC}. 

In order to prove \eqref{prop:Target4PotContractsA}, we let $\dpotF: \mathbb{R}_{>0}\to \mathbb{R}$ be such that
{\small 
\begin{align}\label{def:PotentialFunction}
 y& \mapsto \frac12\sqrt{\frac{1}{y(1+y)}} \enspace.
\end{align}
}

\begin{claim}\label{claim:OnePotVsOtherPot}
For any ${\bf m}=({\bf m}_1, \ldots, {\bf m}_d)\in \mathbb{R}^{d}_{ \geq 0}$ and 
${\bf \kz}=({\bf \kz}_1, \ldots, {\bf \kz}_d)\in \mathbb{R}^{d}_{ > 0}$,
we have 
{
\begin{align}\label{eq:SubCondVsPotentials}
\cJ_{d}({\bf m}, {\bf \kz})&= 
\dpotF (\trecur_{d}({\bf \kz})) \times \sum\nolimits_{i\in [d]} \frac{{\bf m}_i }{\dpotF({\bf \kz}_i)} \left . 
\left| \frac{\partial }{\partial {\bf \kx}_i}\trecur_{d}({\bf \kx}) \right|_{{\bf \kx} ={\bf \kz}}\right| \enspace, 
\end{align}
}
where $\trecur_{d}$ and $\dpotF$ are defined
in \eqref{eq:HC-BPRecursion} and \eqref{def:PotentialFunction}, respectively. 
\end{claim}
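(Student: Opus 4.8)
The goal is the identity \eqref{eq:SubCondVsPotentials}, which rewrites $\cJ_{d}({\bf m}, {\bf z})$ — a quantity expressed via the log-scale potential $\potF$ and the log-recursion $\logtrecur_d$ — in terms of the ``linear-scale'' potential $\dpotF$ and the linear-scale recursion $\trecur_d$. The plan is to unwind both sides using the chain rule, since $\logtrecur_d = \log\circ\, \trecur_d\circ\exp$ and correspondingly $\potF$ should be related to an antiderivative of $\dpotF$ after the same change of variables. First I would record the elementary relations between the two scales: for the recursion we have $\frac{\partial}{\partial{\bf x}_i}\logtrecur_d({\bf x}) = \dlogtrecur({\bf x}_i)$ with ${\bf x}_i = \log{\bf z}_i$, and by the chain rule $\dlogtrecur(\log{\bf z}_i) = \frac{{\bf z}_i}{\trecur_d({\bf z})}\cdot\frac{\partial}{\partial{\bf z}_i}\trecur_d({\bf z})$ (using $\trecur_d(e^{{\bf x}_1},\dots,e^{{\bf x}_d}) = e^{\logtrecur_d({\bf x})}$ and differentiating). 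In particular $\frac{\abs{\dlogtrecur(\log{\bf z}_i)}}{1} = \frac{{\bf z}_i}{\trecur_d({\bf z})}\,\abs{\frac{\partial}{\partial{\bf x}_i}\trecur_d({\bf x})\big|_{{\bf x}={\bf z}}}$, since $\trecur_d>0$.

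Next I would pin down the relation between $\xdpotF = \potF'$ and $\dpotF$. The definitions give $\xdpotF(y) = \sqrt{\frac{e^y}{1+e^y}}$ and $\dpotF(y) = \frac12\sqrt{\frac{1}{y(1+y)}}$. A direct substitution shows $\xdpotF(\log z) = \sqrt{\frac{z}{1+z}}$, so that $\xdpotF(\log z) = 2z\cdot\dpotF(z)$, i.e. $\xdpotF(\log z) = \frac{\xdpotF(\log z)}{1} $ and, more usefully, $\frac{1}{\xdpotF(\log{\bf z}_i)} = \frac{1}{2{\bf z}_i\,\dpotF({\bf z}_i)}$ and $\xdpotF(\log\trecur_d({\bf z})) = 2\,\trecur_d({\bf z})\,\dpotF(\trecur_d({\bf z}))$. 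Substituting these two facts, together with the chain-rule identity from the first paragraph, into the definition
\begin{align}\nonumber
\cJ_{d}({\bf m},{\bf z}) = \xdpotF(\log\trecur_d({\bf z}))\sum_{i\in[d]}\frac{\abs{\dlogtrecur(\log{\bf z}_i)}}{\xdpotF(\log{\bf z}_i)}\,{\bf m}_i,
\end{align}
the factor $2\,\trecur_d({\bf z})$ from $\xdpotF(\log\trecur_d({\bf z}))$ pairs with the $\frac{1}{\trecur_d({\bf z})}$ hidden inside $\abs{\dlogtrecur(\log{\bf z}_i)}$, while the $\frac{1}{2{\bf z}_i}$ from $\frac{1}{\xdpotF(\log{\bf z}_i)}$ pairs with the ${\bf z}_i$ from $\abs{\dlogtrecur(\log{\bf z}_i)}$; all the factors of $2$ and all the ${\bf z}_i$'s cancel, leaving exactly $\dpotF(\trecur_d({\bf z}))\sum_{i\in[d]}\frac{{\bf m}_i}{\dpotF({\bf z}_i)}\abs{\frac{\partial}{\partial{\bf x}_i}\trecur_d({\bf x})\big|_{{\bf x}={\bf z}}}$, which is the right-hand side of \eqref{eq:SubCondVsPotentials}.

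The computation is essentially bookkeeping, so there is no real obstacle; the one point requiring a little care is verifying $\xdpotF(\log z) = 2z\,\dpotF(z)$ — that is, checking that the $\potF$ defined implicitly via $\potF' = \xdpotF$, $\potF(0)=0$, is consistent with $\Psi = (\text{antiderivative of }\dpotF)\circ\exp$ up to the constant — which amounts to the elementary algebraic identity $\sqrt{\tfrac{e^y}{1+e^y}} = 2e^y\cdot\tfrac12\sqrt{\tfrac{1}{e^y(1+e^y)}}$, valid for all $y\in\mathbb{R}$. I would also note at the outset that the domain issue is harmless: $\trecur_d({\bf z})\in(0,\lambda)\subset\mathbb{R}_{>0}$ and each ${\bf z}_i>0$, so $\dpotF(\trecur_d({\bf z}))$ and each $\dpotF({\bf z}_i)$ are well-defined and finite, and likewise $\xdpotF$ is evaluated only at finite arguments. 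With \Cref{claim:OnePotVsOtherPot} in hand, \eqref{prop:Target4PotContractsA} — hence \Cref{prop:PotContracts} — will follow by quoting the sharp potential-function estimates for the hard-core recursion from \cite{ConnectiveConst} applied to $\dpotF$ and $\trecur_d$, with $\pfs$ and $\delta$ as in \Cref{thrm:GoodPotentialHC}; but that is the next step, not part of the present claim.
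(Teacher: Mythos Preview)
Your proposal is correct and follows essentially the same approach as the paper: a direct algebraic verification by substituting the explicit formulas for $\xdpotF$, $\dlogtrecur$, and $\dpotF$. The paper expands everything out in terms of square roots first and then regroups, while you organize the computation around the change-of-variables identities $\xdpotF(\log z) = 2z\,\dpotF(z)$ and $\dlogtrecur(\log{\bf z}_i) = \frac{{\bf z}_i}{\trecur_d({\bf z})}\,\partial_i\trecur_d({\bf z})$; this is the same calculation presented slightly more conceptually.
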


\begin{proof}
The claim follows by using simple rearrangements. We have  
\begin{align}
\cJ_{d}({\bf m}, {\bf \kz}) &= 
 \xdpotF\left(\log \trecur_{d}({\bf \kz}) \right) \sum\nolimits_{j\in [d]}
\frac{ \abs{ \dlogtrecur(\log {\bf \kz}_j )}}{ \xdpotF\left( \log {\bf \kz}_j \right)} 
\times {\bf m}_j \nonumber\\
&= \sqrt{\frac{\trecur_{d}({\bf \kz})}{1+\trecur_{d}(\bf \kz)}}
\sum^{d}_{j=1} \sqrt{\frac{{\bf \kz}_j}{1+{\bf \kz}_j}}\times {\bf m}_j \label{eq:AppOfFX}\\
&= \sqrt{\frac{1}{\trecur_{d}({\bf \kz})(1+\trecur_{d}(\bf \kz))}}
\sum^{d}_{j=1} \sqrt{{\bf \kz}_j(1+{\bf \kz}_j) }\times\frac{ \trecur_{d}(\bf \kz)}{1+{\bf \kz}_j} \times {\bf m}_j \enspace. \nonumber
\end{align}
In \eqref{eq:AppOfFX}, we substitute $\xdpotF$ and $\dlogtrecur$ according to 
\eqref{def:XPotentialFunction} and \eqref{def:GradientRecHC}, respectively.
Using the definition of $\dpotF $ from \eqref{def:PotentialFunction}, we get that 
\begin{align}
\cJ_{d}({\bf m}, {\bf \kz}) &= \dpotF(\trecur_{d}({\bf \kz})) 
\sum\nolimits_{j\in [d]}\frac{1}{ \dpotF({\bf \kz}_j)} \times \frac{ \trecur_{d}(\bf \kz)}{1+{\bf \kz}_i} \times {\bf m}_j \enspace. \nonumber
\end{align}
The above implies \eqref{eq:SubCondVsPotentials}, i.e.,  since 
$\left| \frac{\partial }{\partial {\bf \kx}_i}\trecur_{d}({\bf \kx}) \right|=\frac{\trecur_{d}({\bf \kx})}{1+{\bf \kx}_i}$, 
for any $i\in [d]$. 
The claim follows.
\end{proof}

In light of \Cref{claim:OnePotVsOtherPot}, \eqref{prop:Target4PotContractsA} follows by showing that
for any ${\bf m}=({\bf m}_1, \ldots, {\bf m}_d) \in \mathbb{R}^{d}_{ \geq 0}$ and 
${\bf \kz}=({\bf \kz}_1, \ldots, {\bf \kz}_d)\in \mathbb{R}^{d}_{>0}$ we have 
\begin{align}\label{prop:Target4PotContractsB}
\dpotF (\trecur_{d}({\bf \kz})) \times \sum\nolimits_{i\in [d]} \frac{{\bf m}_i }{\dpotF({\bf \kz}_i)} \left . 
\left| \frac{\partial }{\partial {\bf \kx}_i}\trecur_{d}({\bf \kx}) \right|_{{\bf \kx} ={\bf \kz}}\right|
& \leq \delta^{\frac{1}{s}} \cdot \norm{ {\bf m}}{\pfs} \enspace. 
\end{align}

\noindent 
From our choice of the potential function, the above follows by using standard results form \cite{ConnectiveConst}. 
For any $s\geq 1, d>0$ and $x\geq 0$, we let the function
\begin{align} \nonumber 
\HCFactor(s, d, x)=d^{-1}\cdot \left( \frac{\dpotF(\trecur_{d,{\rm sym}}(x))}{\dpotF(x)} \trecur'_{d,{\rm sym}}(x) \right)^s \enspace,
\end{align}
where the functions $\trecur_{d,{\rm sym}}$, $\dpotF$ are defined in \eqref{def:SymTreeHC} and \eqref{def:PotentialFunction}, 
respectively. 

\begin{lemma}[\cite{ConnectiveConst}]\label{lemma:InsteadOfHolder}
For any $\lambda>0$, for integer $d\geq 1$, for $s\geq 1$, for ${\bf \kx}=({\bf \kx}_1, \ldots, {\bf \kx}_d) \in \mathbb{R}^d_{>0}$ 
and ${\bf m}=({\bf m}_1, \ldots, {\bf m}_d)\in \mathbb{R}^d_{\geq 0}$, the following holds: 
there exists $\bar{\kx}>0$ and integer $0\leq \kk\leq d$ such that 
\begin{align}\nonumber
\dpotF (\trecur_d({\bf \kx})) \times \sum\nolimits_{\ki\in [d]} \frac{{\bf m}_i }{\dpotF({\bf \kx}_i)} 
\left| \left . \frac{\partial }{\partial \kz_i}\trecur_d({\bf \kz}) \right|_{{\bf \kz} ={\bf \kx}} \right|
& \leq \left( \HCFactor(s, k, \bar{x}) \right)^{1/s} \times \norm{ {\bf m}}{s} \enspace.
\end{align}
\end{lemma}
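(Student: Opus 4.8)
\textbf{Proof proposal for Lemma~\ref{lemma:InsteadOfHolder}.}

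The plan is to reduce the general (asymmetric) contraction estimate to the symmetric one-variable quantity $\HCFactor(s,k,\bar{x})$ by a two-stage argument that is standard in the potential-function literature (and is exactly the device used in \cite{ConnectiveConst}). First I would fix $\lambda>0$, $d\geq 1$, $s\geq 1$, and the vectors ${\bf \kx}\in\mathbb{R}^d_{>0}$, ${\bf m}\in\mathbb{R}^d_{\geq 0}$. For the Hard-core model, $\trecur_d({\bf \kx})=\lambda\prod_{\ki\in[d]}(1+{\bf \kx}_\ki)^{-1}$, so $\left|\frac{\partial}{\partial\kz_\ki}\trecur_d({\bf \kz})\right|_{{\bf \kz}={\bf \kx}}=\frac{\trecur_d({\bf \kx})}{1+{\bf \kx}_\ki}$, and the left-hand side of the claimed inequality becomes
\begin{align}\nonumber
\dpotF(\trecur_d({\bf \kx}))\cdot \trecur_d({\bf \kx})\cdot \sum\nolimits_{\ki\in[d]}\frac{{\bf m}_\ki}{\dpotF({\bf \kx}_\ki)(1+{\bf \kx}_\ki)}\enspace.
\end{align}
The key structural observation is that each summand depends on ${\bf \kx}_\ki$ only through the scalar functions $\dpotF$ and $(1+\cdot)^{-1}$, while the prefactor depends on the ${\bf \kx}_\ki$ only through the product $\prod(1+{\bf \kx}_\ki)$. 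Hence, holding $\sum_\ki {\bf m}_\ki^{?}$-type quantities fixed, the expression is maximised at a configuration with special structure.

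The two stages I would carry out are: (i) \emph{symmetrisation among the active coordinates}. Among the coordinates with ${\bf m}_\ki>0$, I would argue — as in \cite{ConnectiveConst}, using a Lagrange-multiplier / rearrangement argument together with the log-convexity properties of $\dpotF$ built into the choice \eqref{def:PotentialFunction} — that for the purpose of the supremum we may assume all active ${\bf \kx}_\ki$ are equal to a common value $\bar{x}>0$; the inactive coordinates (${\bf m}_\ki=0$) can be sent to $0$, which only decreases $\trecur_d$ on the relevant monotone branch and is handled by relabelling so that the active coordinates form an initial segment of size $k\leq d$. After this step the left-hand side is bounded by
\begin{align}\nonumber
\dpotF(\trecur_{k,\mathrm{sym}}(\bar{x}))\cdot \frac{1}{\dpotF(\bar{x})}\cdot \left|\trecur'_{k,\mathrm{sym}}(\bar{x})\right|\cdot \frac{1}{k}\sum\nolimits_{\ki\in[k]}{\bf m}_\ki\enspace,
\end{align}
using that $\left|\frac{\partial}{\partial\kz_\ki}\trecur_k\right|$ evaluated at the symmetric point equals $\tfrac1k|\trecur'_{k,\mathrm{sym}}(\bar x)|$. (ii) \emph{H\"older on the remaining sum}. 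I would then bound $\frac1k\sum_{\ki\in[k]}{\bf m}_\ki\leq k^{-1/s}\norm{{\bf m}}{s}$ by H\"older's inequality with exponents $s$ and its conjugate applied to the all-ones vector of length $k$. Combining (i) and (ii) gives the left-hand side bounded by $k^{-1/s}\big(\frac{\dpotF(\trecur_{k,\mathrm{sym}}(\bar x))}{\dpotF(\bar x)}\trecur'_{k,\mathrm{sym}}(\bar x)\big)\cdot\norm{{\bf m}}{s}=\big(\HCFactor(s,k,\bar x)\big)^{1/s}\norm{{\bf m}}{s}$, which is exactly the assertion.

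The main obstacle is step (i): justifying rigorously that the supremum over ${\bf \kx}\in\mathbb{R}^d_{>0}$ is attained (in the limit) at a symmetric configuration of the active coordinates. This is not a one-line convexity argument because the objective is neither jointly convex nor concave in ${\bf \kx}$; the correct proof exploits the specific algebraic form of $\trecur_d$ for the Hard-core model together with the defining ODE $\dpotF(y)=\tfrac12(y(1+y))^{-1/2}$, which makes the ratio $\dpotF(\trecur_d)/\dpotF({\bf \kx}_\ki)$ interact favourably with $\partial_\ki\trecur_d$. Since the statement is quoted as Lemma~\ref{lemma:InsteadOfHolder} directly from \cite{ConnectiveConst}, I would cite that reference for the symmetrisation lemma rather than reprove it, and present here only the reduction from $\cR_d$ / the general multivariate form to the hypotheses of that lemma, deferring the scalar optimisation to the cited work. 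The boundedness of $\HCFactor$ itself — i.e.\ that the resulting one-variable supremum is at most $\delta^{1/s}$ for the stated $\pfs,\delta$ — is a separate elementary calculation carried out in the proof of Proposition~\ref{prop:PotContracts}.
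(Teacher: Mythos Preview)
The paper does not prove this lemma at all: it is quoted directly from \cite{ConnectiveConst} and used as a black box, together with the companion Lemma~\ref{lemma:XiBoundAtFixPoint}. Your proposal to ultimately cite \cite{ConnectiveConst} is therefore exactly what the paper does, and on that level there is nothing to compare.

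That said, the heuristic you sketch around the citation has two inaccuracies worth flagging. First, sending the inactive coordinates $x_j\to 0$ \emph{increases} $\trecur_d$ (each factor $(1+x_j)^{-1}$ rises to $1$), not decreases it; the reduction to $k$ active coordinates still goes through, but because $\dpotF(y)\,y=\tfrac12\sqrt{y/(1+y)}$ is increasing, so the prefactor and hence the whole left-hand side only go up. Second, and more substantively, your order ``symmetrise over ${\bf x}$, then H\"older'' does not work as stated: for non-uniform ${\bf m}$ the left-hand side is \emph{not} symmetric in the $x_i$, so there is no reason the supremum over ${\bf x}$ (with ${\bf m}$ fixed) should sit at a symmetric point. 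The argument in \cite{ConnectiveConst} runs the other way: one first applies H\"older to strip off $\|{\bf m}\|_s$, leaving a purely ${\bf x}$-dependent quantity (in the variables $u_i=\sqrt{x_i/(1+x_i)}$ this is $\sqrt{F_d/(1+F_d)}\cdot\|u\|_{s'}$), and only then carries out the symmetrisation/optimisation to produce the pair $(\bar x,k)$. Since you defer the hard step to the reference anyway this does not affect correctness, but the surrounding narrative should be adjusted.
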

In light of the above lemma, our proposition follows as a corollary from the following result.

\begin{lemma}[\cite{ConnectiveConst}] \label{lemma:XiBoundAtFixPoint}
For $\lambda>0$, consider $\dcritical=\dcritical(\lambda)$ and $\trecur_{\dcritical, {\rm sym}}$ with fugacity $\lambda$.
Let $\pfs\geq 1$ be such that
{\small 
\begin{align}\nonumber
\pfs^{-1}&= 1-{\textstyle \frac{\dcritical-1}{2}}\cdot \log\left(1+{\textstyle \frac{1}{\dcritical-1}} \right) \enspace.
\end{align}
}
For any $x> 0$, $d> 0$, we have that $\HCFactor(\pfs,d,x) \leq (\dcritical)^{-1}$.
\end{lemma}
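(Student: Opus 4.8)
The statement to prove is \Cref{lemma:XiBoundAtFixPoint}: for the Hard-core model with fugacity $\lambda$, with $\dcritical=\dcritical(\lambda)$ the critical arity and $\pfs$ defined through $\pfs^{-1}= 1-\tfrac{\dcritical-1}{2}\log(1+\tfrac{1}{\dcritical-1})$, we have $\HCFactor(\pfs,d,x)\le (\dcritical)^{-1}$ for all $x>0$, $d>0$. Since the claim is attributed to \cite{ConnectiveConst}, my plan is not to reprove it from scratch but to reduce it to the explicit computations already carried out there and verify the bookkeeping matches our normalization. First I would write $\HCFactor$ out concretely. Using $\trecur_{d,\mathrm{sym}}(x)=\lambda/(1+x)^d$ and $\dpotF(y)=\tfrac12\sqrt{1/(y(1+y))}$ from \eqref{def:PotentialFunction}, one computes $\trecur_{d,\mathrm{sym}}'(x)=-d\,\trecur_{d,\mathrm{sym}}(x)/(1+x)$, so that
\begin{align}\nonumber
\HCFactor(\pfs,d,x)&=d^{-1}\left(\frac{\dpotF(\trecur_{d,\mathrm{sym}}(x))}{\dpotF(x)}\cdot d\cdot \frac{\trecur_{d,\mathrm{sym}}(x)}{1+x}\right)^{\pfs}
= d^{\,\pfs-1}\left(\sqrt{\frac{x(1+x)}{\trecur(1+\trecur)}}\cdot \frac{\trecur}{1+x}\right)^{\pfs},
\end{align}
where I abbreviate $\trecur=\trecur_{d,\mathrm{sym}}(x)$. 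Simplifying the bracket gives $\sqrt{\tfrac{x}{1+x}}\sqrt{\tfrac{\trecur}{1+\trecur}}$, i.e. $\HCFactor(\pfs,d,x)=d^{\pfs-1}\big(\tfrac{x}{1+x}\cdot\tfrac{\trecur}{1+\trecur}\big)^{\pfs/2}$.

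\textbf{Key steps.} The second step is to maximize over $x>0$ for fixed $d$. The function $x\mapsto \tfrac{x}{1+x}$ is increasing while $x\mapsto \tfrac{\trecur}{1+\trecur}$ with $\trecur=\lambda/(1+x)^d$ is decreasing in $x$; so the product has an interior critical point, and a logarithmic-derivative computation (set $\tfrac{d}{dx}\log(\tfrac{x}{1+x}\cdot\tfrac{\trecur}{1+\trecur})=0$) shows the maximizer $\bar x=\bar x(d)$ satisfies the fixed-point-type relation $\tfrac{1}{x(1+x)}=\tfrac{d}{(1+x)(1+\trecur)}\cdot\tfrac{1}{1+x}$, which simplifies — this is exactly the symmetric Hard-core fixed point condition — to $\trecur_{d,\mathrm{sym}}(\bar x)=\bar x$ being replaced by the appropriate stationarity; one checks the optimum occurs where the "worst-case" tree recursion is at its fixed point. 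This is precisely where \cite{ConnectiveConst} does the work: they show the supremum over $x$ of the relevant contraction factor, for the $\pfs$-norm potential, is attained at the fixed point of $\trecur_{\dcritical,\mathrm{sym}}$ and equals $(\dcritical)^{-1}$ when $\pfs$ is chosen as in the statement. The third step is to handle the dependence on $d$: show that among all integers $d>0$ the bound $(\dcritical)^{-1}$ is uniform, which follows because $\dcritical$ is the exact threshold — for $d=\dcritical$ (treating it as real) the bound is tight, and for the discrete values the factor $d^{\pfs-1}$ combined with the shrinking of the fixed point keeps $\HCFactor\le(\dcritical)^{-1}$; again this monotonicity-in-$d$ argument is the content of the cited lemma, so I would invoke it after checking our $\HCFactor$ agrees with theirs up to the substitution $\kz=e^{\ky}$ already recorded in \Cref{claim:OnePotVsOtherPot}.

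\textbf{Main obstacle.} The genuine difficulty is the calculus optimization: proving that the two-variable supremum $\sup_{x>0,\,d\ge 1}\HCFactor(\pfs,d,x)$ is exactly $(\dcritical)^{-1}$ and not merely $O(1)$ — this requires the precise choice of $\pfs$ and a careful analysis of the fixed-point equation $z^z/(z-1)^{z+1}=\lambda$, including verifying that the critical point of the $x$-maximization lands on the correct branch. Since the excerpt explicitly attributes \Cref{lemma:XiBoundAtFixPoint} (and the auxiliary \Cref{lemma:InsteadOfHolder}) to \cite{ConnectiveConst}, the honest path is: (i) derive the closed form of $\HCFactor$ above; (ii) match it with the quantity optimized in \cite{ConnectiveConst} via the change of variables $\kz_i=e^{\ky_i}$ and the potential identity of \Cref{claim:OnePotVsOtherPot}; (iii) quote their optimization result to conclude $\HCFactor(\pfs,d,x)\le(\dcritical)^{-1}$ for all $x>0$, $d>0$. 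I would present (i)–(ii) in full and cite (iii), which keeps the proof self-contained modulo the acknowledged external input. \hfill $\square$
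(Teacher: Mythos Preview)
Your approach is essentially the same as the paper's: the paper does not prove \Cref{lemma:XiBoundAtFixPoint} at all but simply states it and attributes it to \cite{ConnectiveConst}, and you correctly recognize this and plan to cite the external result after matching notation. Your closed-form computation $\HCFactor(\pfs,d,x)=d^{\pfs-1}\big(\tfrac{x}{1+x}\cdot\tfrac{\trecur}{1+\trecur}\big)^{\pfs/2}$ is correct and a useful addition; one small slip in your sketch is the stationarity equation, which should read $\tfrac{1}{x(1+x)}=\tfrac{d}{(1+x)(1+\trecur)}$ (equivalently $1+\trecur=dx$) without the extra $1/(1+x)$ factor, but since you ultimately defer the optimization to \cite{ConnectiveConst} this does not affect the plan.
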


By combining \Cref{lemma:InsteadOfHolder,lemma:XiBoundAtFixPoint} we get 
\eqref{prop:Target4PotContractsB} for $\pfs$ as indicated in \Cref{lemma:XiBoundAtFixPoint} and  $\delta=(\dcritical)^{-1}$. This concludes the proof of \Cref{prop:PotContracts}. 
\hfill $\square$

\spreadpoint

\newcommand{\blambda}{\mathbold{\lambda}}

\spreadpoint
\section{Proof of \Cref{thrm:HarndessTransHCNBM,thrm:HarndessTransIsingNBM}}\label{sec:thrm:HarndessTrans}

The proofs of \Cref{thrm:HarndessTransHCNBM,thrm:HarndessTransIsingNBM} are almost identical. 
We only show how to prove \Cref{thrm:HarndessTransHCNBM} and we omit the proof of \Cref{thrm:HarndessTransIsingNBM}. 

For a graph $G=(V,E)$ and $\lambda>0$, consider the Hard-core model with fugacity $\lambda$.
We define $\PartFun(G,\lambda)$, the {\em partition function} of this distribution by
\begin{align}
\PartFun(G,\lambda)&=\sum\nolimits_{\sigma}\lambda^{|\sigma|}\enspace,
\end{align}
where $\sigma$ varies over all the independent sets of $G$. 
For the proof of \Cref{thrm:HarndessTransHCNBM} we use the following result. 

\begin{theorem}[\cite{GSV16Hardness,SS14}]\label{thrm:HarndessCounting}
Unless ${\rm NP} = {\rm RP}$, for the Hard-core model, for all $\maxDeg\geq 3$ for all 
$\lambda>\lcritical(\maxDeg-1)$ 
there does not exist an FPRAS for the partition function with fugacity $\lambda$ for graphs of 
maximum degree at most $\maxDeg$.
\end{theorem}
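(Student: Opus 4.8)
The plan is to follow the program initiated by Sly~\cite{Sly10} and completed by Sly--Sun~\cite{SS14} and Galanis--\v Stefankovi\v c--Vigoda~\cite{GSV16Hardness}: reduce a known NP-hard optimization problem to approximating $\PartFun(G,\lambda)$ through a gadget built out of random $\maxDeg$-regular bipartite graphs. The conceptual starting point is that for $\lambda>\lcritical(\maxDeg-1)$ the hard-core model on a random $\maxDeg$-regular bipartite graph $G=(L\cup R,E)$ on $n$ vertices exhibits \emph{phase coexistence}: with high probability over $G$ the Gibbs measure decomposes, up to exponentially small mass, into an ``$L$-phase'' (where the occupied set lies overwhelmingly in $L$) and a symmetric ``$R$-phase''. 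First I would make this precise by computing, via the first- and second-moment methods together with small-subgraph conditioning, the exponential asymptotics of the partition function restricted to configurations with prescribed imbalance $\alpha=(|\sigma\cap L|-|\sigma\cap R|)/n$; the dominant contributions come from $\alpha=\pm\alpha^\star$ for some $\alpha^\star=\alpha^\star(\maxDeg,\lambda)>0$ precisely when $\lambda$ exceeds the tree uniqueness threshold $\lcritical(\maxDeg-1)$, and one shows the relevant ratio of partition functions (restricted vs.\ total) is concentrated.

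Next I would set up the gadget reduction. Fix a source of NP-hardness robust on sparse instances --- the standard choice in~\cite{SS14,GSV16Hardness} is a suitable (weighted) variant of \textsc{Max-Cut}/\textsc{Max-Bisection}. Given an instance, build a graph $G$ of maximum degree $\maxDeg$ by taking one ``phase gadget'' (a large random $\maxDeg$-regular bipartite graph, mildly sparsified to free up a small constant fraction of the degree for external connections) per variable, and wire the gadgets together according to the instance so that the $\pm$ phase label of each gadget acts as a Boolean variable and the wiring rewards label assignments corresponding to good solutions. The core lemma is that $\log\PartFun(G,\lambda)$ equals the optimum value of the combinatorial instance plus lower-order terms, with a $\Theta(n)$ gap in this quantity between ``yes'' and ``no'' instances. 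An FPRAS, which estimates $\PartFun$ within a factor $1\pm\varepsilon$ (hence $\log\PartFun$ within additive $\approx\varepsilon$) with probability at least $3/4$, would therefore decide the hard problem in randomized polynomial time, forcing $\mathrm{NP}=\mathrm{RP}$.

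The main obstacle, and the bulk of the work, is the probabilistic analysis of the phase gadget. One needs (i) a sharp second-moment computation showing that the number of independent sets of each imbalance concentrates around its mean on the exponential scale; this reduces to a two-dimensional variational problem whose maximiser must be verified to be the ``symmetric'' (diagonal) one throughout the coexistence regime --- this is exactly where $\lambda>\lcritical(\maxDeg-1)$ enters, and where the small cases $\maxDeg\in\{3,4,5\}$ and $\lambda$ just above threshold are most delicate; and (ii) small-subgraph conditioning, controlling the joint distribution of short cycle counts, to upgrade concentration of the ratio $\PartFun_{+\alpha^\star}/\PartFun$ from ``with high probability'' to the quantitative form the gadget composition requires. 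A secondary technical point is that the wiring must not destroy the phase structure, which is handled by keeping the external degree a small constant fraction of $\maxDeg$ and running a further second-moment argument on the wired graph, or, following~\cite{SS14}, by recasting everything in terms of a broadcasting/recovery problem on the gadget. Finally, since $\lcritical(\cdot)$ is decreasing it suffices to establish hardness at $\lambda$ arbitrarily close to $\lcritical(\maxDeg-1)$ from above; attaching pendant structure of bounded degree then propagates the result to all larger $\lambda$, and the construction is carried out uniformly over $\maxDeg\ge 3$.
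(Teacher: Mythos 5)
This statement is not proved in the paper at all: it is a black-box citation of the hardness results of Sly, Sly--Sun~\cite{SS14} and Galanis--\v Stefankovi\v c--Vigoda~\cite{GSV16Hardness}, stated as \Cref{thrm:HarndessCounting} purely so that it can be invoked later in the proof of \Cref{thrm:HarndessTransHCNBM}. There is therefore no ``paper's own proof'' to compare against. Your sketch is a faithful and reasonably accurate high-level account of how the cited works establish the result --- phase coexistence on random $\maxDeg$-regular bipartite graphs via second moment plus small-subgraph conditioning, then a gadget reduction from a sparse NP-hard optimization problem so that the two phase labels encode Boolean variables and $\log\PartFun$ tracks the optimum with a $\Theta(n)$ yes/no gap. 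That is indeed the right picture; the genuinely hard part, as you acknowledge, is item (i), the two-dimensional variational maximiser verification near threshold and for small $\maxDeg$, which in~\cite{SS14,GSV16Hardness} occupies the bulk of the analysis. For the purposes of this paper, though, none of that work is reproduced or needed: the theorem is simply cited and used.
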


\Cref{thrm:HC4SPRadiusHash} yields (via \cite{JVV86,SVV09}) an FPRAS for estimating 
the partition function $\PartFun(G,\lambda)$ with running time $O^*(n^2)$, 
for any $\lambda\leq (1-\varepsilon)\lcritical(\SingBound)$. Recall that $\SingBound=\nnorm{ (\NBMatrix)^N}{1/N}{2}$.
Also, recall that $O^*()$ hides multiplicative $\log n$-factors.


The negative part, i.e., the non-existence of FPRAS for $\lambda\geq (1+\varepsilon)\lcritical(\lceil \SingBound\rceil)$, follows 
from \Cref{thrm:HarndessCounting}. For graph $G$, such that $\nnorm{ (\NBMatrix)^N}{1/N}{2}=\SingBound$, it is standard that 
$\maxDeg\geq \lceil \SingBound \rceil +1$. Then, 
it suffices to apply \Cref{thrm:HarndessCounting} for graphs $G$ with $\nnorm{ (\NBMatrix)^N}{1/N}{2}=\SingBound$ and
  maximum degree  at most $\maxDeg=\lceil \SingBound \rceil +1$.
 
 The above concludes the proof of \Cref{thrm:HarndessTransHCNBM}.

\spreadpoint

\newcommand{\EMTw}{\byzantine{T}}
\newcommand{\EMTPw}{\deeppink{T_{\kP}}}

\section{Proof of \Cref{thrm:InflVsExtInfl} \LastReviewG{2025-03-05}}

For the material of this section, the reader needs to be familiar  with the results in \Cref{sec:BasicPropExt}.

\subsection{Proof of \Cref{thrm:InflVsExtInfl} }\label{sec:thrm:InflVsExtInfl}
For $G=(V,E)$, the Gibbs distribution $\mu^{\Lambda,\tau}_G$ and integer $\kk\geq 1$, 
we introduce the $\ExtV_{\Lambda,\kk}\times (V\setminus \Lambda)$ matrix
 $\SemExtdInfMatrix^{\Lambda,\tau}_{G,\kk}$ such that, for any $\kP \in \ExtV_{\Lambda, \kk}$ 
 and $\ku\in V\setminus \Lambda$, the entry $\SemExtdInfMatrix^{\Lambda, \tau}_G(\kP, \ku) $ is 
 defined as follows:

Letting $\kw$ be the starting vertex of walk $\kP$, we have 
\begin{align}\label{def:EdgeInflMatrixA}
\SemExtdInfMatrix^{\Lambda, \tau}_{G,\kk}(\kP, \ku) &=0 \enspace, 
\end{align}
if vertices $\ku$ and $\kw$ are at distance $<2\kk$.

Consider $\ku$ and $\kw$ which are at distance $\geq 2\kk$. Let $\upzeta^{\kP}$ be the $\kP$-extension of $\mu^{\Lambda,\tau}_G$.
We have
\begin{align}\label{def:EdgeInflMatrixB}
\SemExtdInfMatrix^{\Lambda, \tau}_{G,\kk}(\kP, \ku) &=
\upzeta^{\kP}_{\ku}(+1\ |\ (\kw,+1)) - \upzeta^{\kP}_{\ku}(+1 \ |\ (\kw,-1)) \enspace, 
\end{align}
where $\upzeta^{\kP}_{\ku}$ is the marginal of $\upzeta^{\kP}$ at vertex $\ku$.

Letting $\infmatrix^{\kP}$ be the influence matrix  induced by $\upzeta^{\kP}$, \eqref{def:EdgeInflMatrixB}
implies 
\begin{align}\label{eq:SemExtIndVsInfM}
\SemExtdInfMatrix^{\Lambda, \tau}_{G,\kk}(\kP, \ku) &=\infmatrix^{\kP}(\kw,\ku)\enspace.
\end{align}
In light of all the above, \Cref{thrm:InflVsExtInfl} follows as a corollary from the following two results.

\begin{lemma}\label{lemma:FromCalI2CalF}
There is an $\ExtV_{\Lambda,\kk}\times (V\setminus \Lambda)$ matrix $\SymWeightMA$ such that 
\begin{align}\label{eq:lemma:FromCalI2CalF}
\infmatrix^{\Lambda,\tau}_G = \VToEdge_{\kk} \cdot\left(\SymWeightMA \circ \SemExtdInfMatrix^{\Lambda,\tau}_{G,\kk} \right)+
 \infmatrix^{\Lambda,\tau}_{G, <2\kk} \enspace.
\end{align} 
Furthermore, there is a fixed number $C_A>0$, which is independent of $\Lambda$ and $\tau$, such that, for any
 $\kP\in \ExtV_{\Lambda,\kk}$ and any $\kv\in V\setminus \Lambda$, we have $|\SymWeightMA(\kP,\kv)|\leq C_A$. 
\end{lemma}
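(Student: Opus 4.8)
\textbf{Proof proposal for \Cref{lemma:FromCalI2CalF}.}

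The plan is to express each entry $\infmatrix^{\Lambda,\tau}_G(\kw,\ku)$ with $\dist_G(\kw,\ku)\geq 2\kk$ as a sum over self-avoiding walks $\kP$ of length $\kk$ emanating from $\kw$, where the $\kP$-summand is (up to a bounded correction factor) the entry $\SemExtdInfMatrix^{\Lambda,\tau}_{G,\kk}(\kP,\ku)=\infmatrix^{\kP}(\kw,\ku)$; the short-range part, $\dist_G(\kw,\ku)<2\kk$, is absorbed into $\infmatrix^{\Lambda,\tau}_{G,<2\kk}$ by definition. The mechanism is the $\Tsaw$-construction: by \Cref{prop:Inf2TreeRedaux}, $\infmatrix^{\Lambda,\tau}_G(\kw,\ku)=\sum_{\kP}\prod_{\ke\in\kP}\infweight(\ke)$ over all root-to-$\cp(\ku)$ paths in $T=\Tsaw(G,\kw)$. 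Any such path of length $\geq 2\kk$ starts with a genuine self-avoiding prefix of length $\kk$ in $G$ (the first $\kk+1$ vertices of a path in $T$ are always distinct, as $T$ encodes walks whose only possible repetition is a copy of an earlier vertex at distance $\geq 3$, hence $\geq \kk$ once we restrict to the first $\kk$ steps) which does not meet $\Lambda$ since its vertices are unpinned ancestors of a vertex contributing nonzero influence. So the set of such paths is partitioned by their length-$\kk$ prefix $\kP\in\ExtV_{\Lambda,\kk}(\kw)$.

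First I would fix $\kw\in V\setminus\Lambda$ and $\kP\in\ExtV_{\Lambda,\kk}$ emanating from $\kw$, and invoke \Cref{lemma:TsawPVsTsaw} to identify the subtree $\STw_{\kP}\subseteq T$ with $\XTPw=\Tsaw(\gext{G}{\kP},\kw)$, together with \Cref{lemma:GibbsWeightsTpVsP}, which says the $\mu^{\Lambda,\tau}_G$-induced weights $\infweight(\ke)$ and the $\kP$-extension-induced weights $\infweight_{\kP}(\ke)$ agree on all edges at depth $\geq\kk$ of $\XTPw$. The paths from the root of $T$ to $\cp(\ku)$ of length $\geq 2\kk$ whose prefix is $\kP$ correspond exactly to the root-to-$\cp(\ku)$ paths in $\XTPw$ (since $\ku$ is at distance $\geq 2\kk$ from $\kw$, every such path reaches $\ku$ only after leaving the zone where the two trees differ), and along each such path the first $\kk$ edges may have weights that differ between $\infweight$ and $\infweight_{\kP}$, while the remaining edges have matching weights. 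Thus
\begin{align}\nonumber
\sum_{\kP\in\ExtV_{\Lambda,\kk}(\kw)}\ \sum_{\substack{\kW: \text{root}\to\cp(\ku)\\ \text{in }\XTPw}}\ \prod_{\ke\in\kW}\infweight_{\kP}(\ke)
\ =\ \sum_{\kP}\SemExtdInfMatrix^{\Lambda,\tau}_{G,\kk}(\kP,\ku)\cdot \Bigl(\text{prefix-correction}_{\kP,\ku}\Bigr),
\end{align}
where the correction factor is the ratio of the first-$\kk$-edge weight products under $\infweight$ versus $\infweight_{\kP}$ along the (combinatorially unique per $\kP$) initial segment $\kP$. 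This defines $\SymWeightMA(\kP,\ku)$ as precisely that ratio when the denominator is nonzero, and $0$ otherwise; then $\VToEdge_{\kk}\cdot(\SymWeightMA\circ\SemExtdInfMatrix^{\Lambda,\tau}_{G,\kk})(\kw,\ku)$ equals the long-range part of $\infmatrix^{\Lambda,\tau}_G(\kw,\ku)$, which is exactly \eqref{eq:lemma:FromCalI2CalF}. (One must also check that the short-range entries of $\SemExtdInfMatrix^{\Lambda,\tau}_{G,\kk}$ vanish by \eqref{def:EdgeInflMatrixA}, so $\infmatrix^{\Lambda,\tau}_{G,<2\kk}$ catches exactly the complementary part; this is immediate.)

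The bound $|\SymWeightMA(\kP,\kv)|\leq C_A$ is the step I expect to require the most care. The numerator $\prod_{i=1}^{\kk}|\infweight(\ke_i)|$ over the $\kk$ initial edges is at most $1$ (each single-edge influence has absolute value $\leq 1$), so it suffices to lower-bound the denominator $\prod_{i=1}^{\kk}|\infweight_{\kP}(\ke_i)|$ away from $0$ by a constant depending only on $\maxDeg,\kk,b$ — and to argue it is nonzero whenever the corresponding numerator term is needed. Here the $b$-marginal boundedness of $\mu_G$ is essential: it passes to the $\kP$-extension $\upzeta^{\kP}$ (same parameters, extra pinnings), so every marginal appearing in the single-edge influences $\infweight_{\kP}(\ke_i)=\dlogtrecur(\log\gratio)$ is bounded in $[b,1-b]$, which forces $|\infweight_{\kP}(\ke_i)|$ to lie in $[b',1-b']$ for a constant $b'=b'(b)>0$ — provided the edge is not ``killed'' by a hard-constraint pinning, and on the initial segment $\kP$ of $\XTPw=\Tsaw(\gext{G}{\kP},\kw)$ no such killing occurs because the split-vertices of $\gext{G}{\kP}$ hang off $\kP$ as leaves and do not pin the internal vertices of $\kP$ itself. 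Hence each $|\infweight_{\kP}(\ke_i)|\geq b'$, the denominator is $\geq (b')^{\kk}$, and we may take $C_A=(b')^{-\kk}$, a constant independent of $\Lambda,\tau$. Assembling these pieces over all $\kw,\ku$ gives the lemma; the only genuinely delicate points are (i) verifying the clean prefix-partition of the $\Tsaw$-paths and (ii) the uniform lower bound on the denominator via marginal boundedness.
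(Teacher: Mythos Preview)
Your approach matches the paper's: partition the root-to-$\cp(\ku)$ paths in $\Tsaw(G,\kw)$ of length $\geq 2\kk$ by their length-$\kk$ prefix $\kP$, invoke \Cref{lemma:TsawPVsTsaw,lemma:GibbsWeightsTpVsP} so that beyond depth $\kk$ each $\kP$-block coincides with the corresponding block in $\Tsaw(\gext{G}{\kP},\kw)$, and set $\SymWeightMA(\kP,\ku)$ to the ratio of the two prefix weight-products. The bound via $b$-marginal boundedness is also the paper's route.

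There is one genuine gap. Your assertion that on the initial segment no hard-constraint killing occurs, because ``the split-vertices of $\gext{G}{\kP}$ hang off $\kP$ as leaves and do not pin the internal vertices of $\kP$ itself'', is not correct in general. If $\kP=\kx_0,\ldots,\kx_{\kk}$ has a chord $\{\kx_j,\kx_i\}\in E(G)$ with $j\le i-2$, then the split-vertex $\kx_j\kx_i$ created when processing $\kx_j$ is attached to $\kx_i$ in $\gext{G}{\kP}$ and therefore appears as a \emph{pinned child} of the copy of $\kx_i$ on the initial segment $R_{\kP}$ of $\Tsaw(\gext{G}{\kP},\kw)$. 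For the Hard-core model, if that split-vertex is pinned to $+1$ the ratio at $\kx_i$ collapses to $0$ and $\infweight_{\kP}(\ke_i)=0$; your denominator can vanish, and your uniform lower bound $(b')^{\kk}$ fails.

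The paper does not try to rule this case out. It instead sets
\[
\SymWeightMA(\kP,\ku)=\prod_{\ke\in R_{\kP}}\frac{\infweight(\ke)}{\infweight_{\kP}(\ke)+\Ind\{\infweight_{\kP}(\ke)=0\}}
\]
and observes that whenever $\infweight_{\kP}(\ke)=0$ on the prefix one also has $\infweight(\ke)=0$: the split-vertex pinning in $\gext{G}{\kP}$ agrees with the $\Tsaw$ pinning at the matching cycle-closing leaf in $G$ (this is exactly what the proof of \Cref{lemma:GibbsWeightsTpVsP} establishes about the pinnings on all of $\XTPw$), so the same child forces both ratios to $0$. Then both the $\kP$-contribution to $\infmatrix^{\Lambda,\tau}_G(\kw,\ku)$ and $\SemExtdInfMatrix^{\Lambda,\tau}_{G,\kk}(\kP,\ku)$ vanish, the identity holds trivially for that $\kP$, and the bound $|\SymWeightMA|\le C_A$ follows from the dichotomy ``$\infweight_{\kP}(\ke)=0$ or $|\infweight_{\kP}(\ke)|\ge\upc$'' that $b$-boundedness supplies. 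With this correction your argument is the paper's.
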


In \eqref{eq:lemma:FromCalI2CalF}, the matrices $ \infmatrix^{\Lambda,\tau}_{G, <2{\kk}}$ and $\VToEdge_{\kk}$ are defined in 
\eqref{eq:DefOfCISmallerK} and \eqref{def:Vertex2EdgeMatrixEntryBB}, respectively. 
Also, $\SymWeightMA \circ \SemExtdInfMatrix^{\Lambda,\tau}_{G,k}$ is the Hadamard product of the two matrices.
The proof of \Cref{lemma:FromCalI2CalF} appears in \Cref{sec:lemma:FromCalI2CalF}.

\begin{lemma}\label{lemma:FromCalF2CalH}
There exists an $\spset_{\Lambda, k}\times \spset_{\Lambda, k}$ matrix $\SymWeightMB$  such that 
\begin{align}\label{eq:lemma:FromCalF2CalH}
\SemExtdInfMatrix^{\Lambda,\tau}_{G,\kk} =\left( \ExtdInfMatrixF \circ \SymWeightMB \right)\cdot \EdgeToV_{\kk}
\enspace.
\end{align}
Furthermore, there is a fixed number $C_B>0$, which is independent of $\Lambda$ and $\tau$,  such that 
 for any $\kQ,\kP\in \ExtV_{\Lambda,\kk}$, we have $\abs{\SymWeightMB(\kP,\kQ)} \leq C_B$. 
\end{lemma}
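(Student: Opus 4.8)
\textbf{Proof proposal for \Cref{lemma:FromCalF2CalH}.}

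The plan is to unpack both sides of \eqref{eq:lemma:FromCalF2CalH} entry-wise and show that for each $\kP,\kQ\in \ExtV_{\Lambda,\kk}$ the entry $\SemExtdInfMatrix^{\Lambda,\tau}_{G,\kk}(\kP,\kQ')$ — where $\kQ'$ is the common last vertex of walks contributing to column $\kQ'$ of $\EdgeToV_{\kk}$ — can be written as $\ExtdInfMatrixF(\kP,\kQ)$ times a bounded correction factor $\SymWeightMB(\kP,\kQ)$. Recall from \eqref{eq:SemExtIndVsInfM} that $\SemExtdInfMatrix^{\Lambda,\tau}_{G,\kk}(\kP,\ku)=\infmatrix^{\kP}(\kw,\ku)$, the influence in the $\kP$-extension $\upzeta^{\kP}$, while from \eqref{def:EdgeInflMatrixEquivalent} we have $\ExtdInfMatrixF(\kP,\kQ)=\infmatrix^{\kP,\kQ}(\kw,\ku)$, the influence in the $\{\kP,\kQ\}$-extension $\upzeta^{\kP,\kQ}$. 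Since $\gext{G}{\kP,\kQ}$ is obtained from $\gext{G}{\kP}$ by the further local surgery that detaches the non-$\kQ$ neighbours of the interior vertices of $\kQ$ and pins the resulting split-vertices, the two influences differ only by how information is allowed to flow into vertex $\ku$ through paths that leave the walk $\kQ$. Matrix $\EdgeToV_{\kk}$ then sums over the walks $\kQ$ ending at a given vertex $\ku$, so that $\bigl((\ExtdInfMatrixF\circ\SymWeightMB)\cdot\EdgeToV_{\kk}\bigr)(\kP,\ku)=\sum_{\kQ : \text{ends at }\ku}\SymWeightMB(\kP,\kQ)\,\ExtdInfMatrixF(\kP,\kQ)$, and we must match this against $\infmatrix^{\kP}(\kw,\ku)$.

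The key step is a $\Tsaw$-decomposition argument, parallel to the one used in \Cref{sec:BasicPropExt} via \Cref{lemma:TsawPVsTsaw,lemma:GibbsWeightsTpVsP}. First I would fix $\kP$ and consider $\XTPw=\Tsaw(\gext{G}{\kP},\kw)$ with the weights $\{\infweight_{\kP}(\ke)\}$ induced by $\upzeta^{\kP}$; by \Cref{prop:Inf2TreeRedaux}, $\infmatrix^{\kP}(\kw,\ku)$ is the sum over all root-to-$\cp(\ku)$ paths of the product of these weights. Next, partition the copies of $\ku$ in $\XTPw$ according to the self-avoiding walk $\kQ$ of length $\kk$ in $G$ whose reverse is traced by the last $\kk+1$ vertices of the corresponding root-to-$\ku$ walk (a copy of $\ku$ that is reached without such a terminal $\kk$-segment, i.e. at distance $<2\kk$ — ruled out by compatibility since $\dist_G(\kw,\ku)\ge 2\kk$ forces every such walk to have a well-defined terminal $\kk$-segment, or one that repeats a vertex — contributes zero by the pinning in the $\Tsaw$ construction). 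For each compatible $\kQ$, the sum of path-weights over this block equals, up to the boundary effect at the last $\kk$ edges, the quantity $\infmatrix^{\kP,\kQ}(\kw,\ku)=\ExtdInfMatrixF(\kP,\kQ)$: indeed, further extending by $\kQ$ (applying $\{\kz : z\text{ a split-vertex of }\kQ\}$-detachment and pinning) only truncates the subtrees hanging off the interior of the terminal $\kQ$-segment, and by an analogue of \Cref{lemma:GibbsWeightsTpVsP} the edge weights away from that segment are unchanged. The ratio between the block-sum in $\XTPw$ and $\ExtdInfMatrixF(\kP,\kQ)$ is then exactly the product of the $\kk$ edge-weights along the terminal $\kQ$-segment in $\XTPw$ (relative to what $\upzeta^{\kP,\kQ}$ would assign there), which is a product of at most $\kk$ factors, each a bounded real number by the $b$-marginal boundedness hypothesis; I define $\SymWeightMB(\kP,\kQ)$ to be this ratio (and $0$ when $\kP,\kQ$ incompatible). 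Setting $C_B$ to be a suitable power of the bound on individual influence weights gives $\abs{\SymWeightMB(\kP,\kQ)}\le C_B$ uniformly in $\Lambda,\tau$, since $b$ is fixed.

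The main obstacle I anticipate is the careful bookkeeping at the terminal $\kk$-segment: one must verify that summing over the block of $\ku$-copies indexed by a fixed $\kQ$ in $\XTPw$ really does factor as (a fixed product of $\kk$ weights along $\kQ$) times (the full $\Tsaw$-sum computing $\infmatrix^{\kP,\kQ}(\kw,\ku)$), and in particular that the pinnings of the split-vertices created when passing from $\gext{G}{\kP}$ to $\gext{G}{\kP,\kQ}$ correspond precisely to truncating exactly the subtrees that the $\{\kP,\kQ\}$-extension removes — neither more nor fewer. This is where the compatibility condition $\dist_G(\kw,\ku)\ge 2\kk$ is essential: it guarantees the $\kP$-segment and the $\kQ$-segment do not interact, so the two rounds of extension commute and the terminal segment is a genuine self-avoiding walk disjoint from $\kP$. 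Once this identification is pinned down, the bound on $\SymWeightMB$ and the matrix identity \eqref{eq:lemma:FromCalF2CalH} follow by assembling the block decomposition; $b$-marginal boundedness is used exactly once, to keep each of the $\kk$ terminal edge-weights bounded away from $0$ and $\infty$.
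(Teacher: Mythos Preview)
Your approach has a genuine gap at the step where you invoke an ``analogue of \Cref{lemma:GibbsWeightsTpVsP}'' to claim that the edge weights $\infweight_{\kP}(\ke)$ and $\infweight_{\kP,\kQ}(\ke)$ agree away from the terminal $\kQ$-segment. This is false. Recall from \eqref{def:OfInfluenceWeights} that the weight of an edge $\ke=\{\kx,\kz\}$ (with $\kz$ the child) equals $\dlogtrecur(\log \gratio_{\kz})$, where $\gratio_{\kz}$ depends on the \emph{entire subtree below $\kz$} together with its pinnings. \Cref{lemma:GibbsWeightsTpVsP} succeeds precisely because the $\kP$-extension modifies $\Tsaw(\gext{G}{\kP},\kw)$ only near the \emph{root}, so subtrees below edges at depth $\geq \kk$ are untouched. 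In your setting the $\kQ$-extension modifies the tree near the \emph{terminal} segment, i.e.\ deep in the tree; for an edge $\ke_i$ close to the root, the subtree below it \emph{does} contain copies of $\kQ$-vertices and their newly pinned split-vertices, so the ratio $\gratio_{\kz_i}$ changes and the weights differ all along the path, not only on the last $\kk$ edges. The block-sum therefore does not factor as (a fixed $\kk$-edge correction)$\times\ExtdInfMatrixF(\kP,\kQ)$; even the correction on the last $\kk$ edges is not a function of $(\kP,\kQ)$ alone, since different copies of $\ku$ with the same terminal $\kQ$-segment sit above different subtrees.

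The paper circumvents this root/leaf asymmetry with a \emph{reversal trick}. Using \Cref{claim:InfSymmetrisation} one writes $\SemExtdInfMatrix(\kP,\ku)=\infmatrix^{\kP}(\kw,\ku)=\cW\cdot\infmatrix^{\kP}(\ku,\kw)$ for a bounded marginal-ratio factor $\cW$. Now one works in $\Tsaw(\gext{G}{\kP},\ku)$ rooted at $\ku$, where the further $\kQ$-extension \emph{is} at the root, and the argument of \Cref{lemma:FromCalI2CalF} applies verbatim to give $\infmatrix^{\kP}(\ku,\kw)=\sum_{\kQ}\widehat{\SymWeightM}_A(\kQ,\kw)\,\infmatrix^{\kP,\kQ}(\ku,\kw)$ with bounded $\widehat{\SymWeightM}_A$. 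A second application of \Cref{claim:InfSymmetrisation} reverses $\infmatrix^{\kP,\kQ}(\ku,\kw)$ back to $\ExtdInfMatrixF(\kP,\kQ)$ up to another bounded factor $\widehat{\cW}$, and one sets $\SymWeightMB(\kP,\kQ)=\cW\cdot\widehat{\SymWeightM}_A(\kQ,\kw)\cdot\widehat{\cW}$. This double reversal is the idea you are missing.
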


In \eqref{eq:lemma:FromCalF2CalH}, matrix $\EdgeToV_k$ is defined in \eqref{def:Edge2VertexMatrixEntryBB}.
The proof of \Cref{lemma:FromCalF2CalH} appears in \Cref{sec:lemma:FromCalF2CalH}.

 \Cref{thrm:InflVsExtInfl} follows as a corollary from \Cref{lemma:FromCalI2CalF,lemma:FromCalF2CalH}. 
Specifically, 
$\SymWeightM$ is given by the Hadamard product $\SymWeightM=\SymWeightM_C \circ\SymWeightMB$,
where $\SymWeightMB$ is from \Cref{lemma:FromCalF2CalH} and $\SymWeightM_C$ is the $\ExtV_{\Lambda,k}\times \ExtV_{\Lambda,\kk}$ matrix such that 
for any $\kP,\kQ\in \ExtV_{\Lambda,\kk}$ we have $\SymWeightM_C(\kP,\kQ)=\SymWeightMA(\kP,\ku)$,
where $\ku$ is the starting vertex of $\kQ$. Matrix $\SymWeightMA$ is from  \Cref{lemma:FromCalI2CalF}.

We conclude that \Cref{thrm:InflVsExtInfl} is true for $\widehat{\kC}=C_A\cdot C_B$. 
\hfill $\square$

\subsection{Proof of \Cref{lemma:FromCalI2CalF}}\label{sec:lemma:FromCalI2CalF}
In this proof,  we use the properties of extensions from \Cref{lemma:TsawPVsTsaw,lemma:GibbsWeightsTpVsP}. 
Also, we abbreviate $\VToEdge_{\kk}$, $\SemExtdInfMatrix^{\Lambda,\tau}_{G,k}$,
 $\infmatrix^{\Lambda,\tau}_{G, <2\kk}$ and $\infmatrix^{\Lambda,\tau}_{G}$ to 
 $ \VToEdge$, $\SemExtdInfMatrix$, $\infmatrix_{<2\kk}$ and $\infmatrix$, respectively. 

Firstly, we note that the matrices at the two sides of the equation in \eqref{eq:lemma:FromCalI2CalF} are of the same dimension, 
i.e., they are $(V\setminus \Lambda)\times (V\setminus \Lambda)$. 
We show that their corresponding entries are equal to each other.

For any $\kv\in V\setminus \Lambda$, we have
\begin{align}\label{eq:FirstEqInlemma:FromCalI2CalF}
\infmatrix_{<2{\kk}}(\kv,\kv)&=1& \textrm{and}&& \left(\VToEdge\cdot \left(\SymWeightMA \circ\SemExtdInfMatrix \right)\right)(\kv,\kv)&=0 \enspace. 
\end{align}
The equality on the right is true because $\left(\VToEdge\cdot \left(\SymWeightMA \circ\SemExtdInfMatrix \right)\right)(\kv,\kv)=\sum_{\kP}\VToEdge(\kv,\kP)\cdot\SymWeightMA(\kP,\kv)\cdot \SemExtdInfMatrix(\kP,\kv)$
and the variable $\kP\in \ExtV_{\Lambda,\kk}$ runs over walks that emanate from $\kv$. For such walks $\SemExtdInfMatrix(\kP,\kv)=0$. Hence, 
for any matrix $\SymWeightMA$ the above is true. 

From the  definition of $\infmatrix$ and \eqref{eq:FirstEqInlemma:FromCalI2CalF} we conclude that $\infmatrix$ and 
$ \infmatrix_{<2\kk}+\VToEdge\cdot\left(\SymWeightMA \circ \SemExtdInfMatrix \right)$
have ones at their diagonal. 
Now, we focus on the off-diagonal elements. Let 
\begin{align}\label{eq:DefOfILarge2k}
\infmatrix_{\geq 2\kk} &=\infmatrix-\infmatrix_{<2\kk}\enspace.
\end{align}
It suffices to show that there exists an $\ExtV_{\Lambda,\kk}\times V\setminus \Lambda$ matrix $\SymWeightMA$, 
with its entries bounded as specified in the statement of \Cref{lemma:FromCalI2CalF},   
such that for any $\ku,\kw\in V\setminus \Lambda$, which are different from each other, 
we have 
\begin{align}\label{eq:Target4lemma:FromCalI2CalF}
\infmatrix_{\geq 2\kk}(\kw,\ku) &= \sum\nolimits_{\kP} \SymWeightMA(\kP,\ku) \cdot \SemExtdInfMatrix(\kP,\ku)
\enspace,
\end{align}
where $\kP$ varies in $\ExtV_{\Lambda, \kk}$, such that its first vertex is $\kw$.

For vertices $\kw,\ku\in V\setminus \Lambda$ at distance $< 2\kk$,  The definition of  $\SemExtdInfMatrix$ implies that 
for  $\kP$ emanating from $\kw$, we have $ \SemExtdInfMatrix(\kP,\ku)=0$. Also, we have $\infmatrix_{\geq 2\kk}(\kw,\ku)=0$.
We conclude that for pairs of vertices  $\kw,\ku\in V\setminus \Lambda$ at distance $< 2\kk$,
 \eqref{eq:Target4lemma:FromCalI2CalF} is true for any choice of matrix $\SymWeightMA$.

We consider  $\kw$ and $\ku$ at distance $\geq 2\kk$ from each other. Let $\EMTw=T_{\saw}(G, \kw)$.  Also, let $\{\infweight(\ke)\}$ 
be the collection of weights over the edges of $\EMTw$ we obtain  by applying the $\Tsaw$-construction on the Gibbs 
distribution $\mu^{\Lambda,\tau}_G$.  Then, \Cref{prop:Inf2TreeRedaux} implies 
\begin{align}\label{eq:IfVsWeightscFVsVi}
\infmatrix_{\geq 2\kk}(\kw,\ku) &=\sum\nolimits_{\kell\geq 2\kk}
\sum\nolimits_{\kW=(\ke_1, \ldots, \ke_{\kell}) \in \EMTw}
\prod\nolimits_{1 \leq \ki\leq \kell}\infweight(\ke_{\ki}) \enspace,
\end{align}
where $\kW$ varies over the length-$\kell$ paths in $\EMTw$ from the root of the tree to the copies of vertex $\ku$.

For each $\kP\in \ExtV_{\Lambda,k}$, which emanates from vertex $\kw$, consider $\EMTPw=\Tsaw(\gext{G}{\kP},\kw)$. 
Using \Cref{lemma:TsawPVsTsaw},   we  identify each $\EMTPw$ as a subtree of $\EMTw$, while note that both trees
have the same root. Recall also the weights $\{\infweight(\ke)\}$ on the edges of $\EMTw$ that are induced by the 
$\Tsaw$-construction on $\mu^{\Lambda,\tau}_G$. Since $\EMTPw$ is a subtree of $\EMTw$, we endow 
$\EMTPw$ with the edge-weights $\{\infweight(\ke)\}_{\ke\in \EMTPw}$. 

Each path $\kW$ in $\EMTw$ that emanates from the root and is of length $\geq 2\kk$ belongs to exactly one of the subtrees 
$\EMTPw$ of $\EMTw$ we define above. Using this observation, we write \eqref{eq:IfVsWeightscFVsVi} as
follows:
\begin{align}\label{eq:IfVsWeightscFVsViBB}
\infmatrix_{\geq 2\kk}(\kw,\ku)&=\sum\nolimits_{\kP} \sum\nolimits_{\kell \geq 2\kk}
\sum\nolimits_{\kW=(\ke_1,\ldots, \ke_{\kell}) \in \EMTPw} 
\prod\nolimits_{1\leq i\leq \kell }\infweight(\ke_i) \enspace,
\end{align}
where variable $\kP$ varies over the walks in $\ExtV_{\Lambda,\kk}$ that emanate from vertex $\kw$ and
variable $\kW$ varies over the length-$\kell$ paths in $\EMTPw=\Tsaw(\gext{G}{\kP},\kw)$ from the root of the 
tree to the copies of vertex $\ku$.

For each $\kP\in \ExtV_{\Lambda, \kk}$ whose first vertex is $\kw$, 
and $\EMTPw=\Tsaw(\gext{G}{\kP},\kw)$, we let $\{\infweight_{\kP}(\ke)\}_{\ke\in \EMTPw}$ be the 
collection of weights over the edges of $\EMTPw$ specified with respect to the $\kP$-extension of $\mu^{\Lambda,\tau}_G$. 

The definition of matrix $\SemExtdInfMatrix$ in \eqref{def:EdgeInflMatrix} and \Cref{prop:Inf2TreeRedaux} imply 
\begin{align}\label{eq:FVsITs}
\SemExtdInfMatrix(\kP, \ku)&=\sum\nolimits_{\kell \geq 2\kk}\sum\nolimits_{\kW=(\ke_1,\ldots, \ke_{\kell}) \in \EMTPw} 
\prod\nolimits_{1\leq \ki\leq \kell} \infweight_{\kP}(\ke_{\ki}) \enspace,
\end{align}
where $\kW$ varies over the length-$\kell$ paths from the root to the copies of $\ku$ in $\EMTPw$.

Letting $\kP=\kx_0,\ldots,\kx_k$, 
there is a path  $R_{\kP}=\ku_0,\ldots, \ku_{\kk}$ in tree $\Tw$, which emanates from the root  such that 
$\ku_{\ki}$ is a copy of $\kx_{\ki}$, for all $0\leq \ki \leq \kk $. 
We have that $R_{\kP}$ is in $ \EMTPw$, too. Furthermore, 
all  paths $\kW\in \EMTPw$ of length $\kell\geq 2\kk$ that emanate from the root of $\EMTPw$ overlap with $R_{\kP}$ in  their first $\kk$ edges. 

The above observation implies that fixing $\kell\geq 2\kk$ and path $\kW=(\ke_1,\ldots, \ke_{\kell})$, the corresponding summand in \eqref{eq:FVsITs} trivially satisfies
\begin{align}\label{eq:ProdWeigtPTPBreak}
\prod\nolimits_{1\leq \ki\leq \kell} \infweight_{\kP}(\ke_{\ki}) &= \UpK \times \prod\nolimits_{\kk< \ki\leq \kell } \infweight_{\kP}(\ke_{\ki}) \enspace,
\end{align}
where $\UpK=\prod\nolimits_{\ke \in R_{\kP}} \infweight_{\kP}(\ke)$. 

In a previous step, we endowed each $\EMTPw$ with the edge weights $\{\infweight(\ke)\}_{\ke\in \EMTPw}$ since it is a subtree of
$\EMTw$. Overall now, for each $\EMTPw$ we have two sets of edges weights i.e., we have $\{\infweight(\ke)\}_{\ke\in \EMTPw}$ and
$\{\infweight_{\kP}(\ke)\}_{\ke\in \EMTPw}$.
\Cref{lemma:GibbsWeightsTpVsP} implies that for any edge $\ke\in \EMTPw$, which is at distance $\geq k$ from
the root of $\EMTPw$, we have
\begin{align}\label{eq:IfVsWeightscFVsViCC}
\infweight_{\kP}(\ke)&=\infweight(\ke) \enspace. 
\end{align}
Combining \eqref{eq:IfVsWeightscFVsViCC} and \eqref{eq:ProdWeigtPTPBreak} we get the following:
fixing $\kell\geq 2\kk$ and path $\kW=(\ke_1,\ldots, \ke_{\kell})$, the corresponding summand in \eqref{eq:FVsITs} satisfies
\begin{align}\label{eq:ProdWeigtPTPBreakNew}
\prod\nolimits_{1\leq \ki\leq \kell} \infweight_{\kP}(\ke_{\ki}) &= \UpK \times \prod\nolimits_{\kk< \ki\leq \kell } \infweight(\ke_{\ki}) \enspace.
\end{align}
We define the $\ExtV_{\Lambda,\kk}\times V\setminus \Lambda$ matrix $\SymWeightMA$ 
such that for each $\kP\in \ExtV_{\Lambda,\kk}$ and $\ku\in V\setminus \Lambda$, we have
\begin{align}
\SymWeightMA(\kP,\ku) = \left \{
\begin{array}{lcl} 
1 &\quad &\textrm{if distance from $\ku$ to the starting vertex of $\kP$ is $<2\kk$}\\
\prod_{\ke\in R_{\kP}}\frac{\infweight(\ke)}{\infweight_{\kP}(\ke)+\Ind\{\infweight_{\kP}(\ke)=0\}} &&\textrm{otherwise}\enspace,
\end{array}
\right . \nonumber
\end{align}
where $R_{\kP}$ is the  path in $\EMTPw$ defined earlier. 
The above definition of $\SymWeightMA$ implies 
\begin{align}\label{eq:IfVsWeightscFVsViDD}
 \SymWeightMA(\kP,\ku)\cdot \SemExtdInfMatrix(\kP, \ku)&= \sum\nolimits_{\kell\geq 2\kk} \sum\nolimits_{\kW=(\ke_1,\ldots, \ke_{\kell})\in \EMTPw} 
\prod\nolimits_{1\leq \ki\leq \kell }\infweight(\ke_{\ki}) \enspace,
\end{align}
where $\kW$ varies over the paths in $\EMTPw$ from the root to the copies of vertex $\ku$ at distance $\kell$
from the root.

Verifying the above equality is a matter of simple calculations when $\infweight_{\kP}(\ke)\neq 0$ 
for all $\ke\in R_{\kP}$.  That is, it follows from \eqref{eq:FVsITs}, \eqref{eq:ProdWeigtPTPBreakNew} 
and the definition of $\SymWeightMA$. 
When there is at least one $\ke\in R_{\kP}$ such that $\infweight_{\kP}(\ke)=0$, then it is not hard to 
see that we also have $\infweight(\ke)=0$; hence both sides in \eqref{eq:IfVsWeightscFVsViDD} are 
equal to zero.  All the above imply that \eqref{eq:IfVsWeightscFVsViDD} is true.

Then, plugging \eqref{eq:IfVsWeightscFVsViDD} into \eqref{eq:IfVsWeightscFVsViBB}, we obtain 
\eqref{eq:Target4lemma:FromCalI2CalF}. Hence, we conclude that \eqref{eq:lemma:FromCalI2CalF} is 
true.

The assumption that $\mu^{\Lambda,\tau}_G$ is $b$-marginally bounded implies that 
there is $\upc \in (0,1)$, bounded away from zero such that we either have $\infweight_{\kP}(\ke)=0$, or 
$\upc\leq \abs{\infweight_{\kP}(\ke)} \leq 1$ and similarly for $\infweight(\ke)$. 
Hence, for any $\kP\in \ExtV_{\Lambda,k}$ and any 
vertex $\ku\in V\setminus \Lambda$, we have $| \SymWeightMA(\kP, \ku)| \leq \max\{1,\upc^{-\kk}\}$. 

We conclude that \Cref{lemma:FromCalI2CalF} is true for $C_A=\max\{1,\upc^{- \kk}\}$.
\hfill $\square$

\subsection{Proof of \Cref{lemma:FromCalF2CalH}} \label{sec:lemma:FromCalF2CalH}
In what follows, we abbreviate $\SemExtdInfMatrix^{\Lambda,\tau}_{G,\kk}$ 
and $\ExtdInfMatrixF$ to $\SemExtdInfMatrix$ and $\ExtdInfMatrix$, respectively.

Firstly, we note that the matrices at the two sides of the equation in \eqref{eq:lemma:FromCalF2CalH} are 
of the same dimension, i.e., they are $\ExtV_{\Lambda, \kk}\times (V\setminus \Lambda)$.

\Cref{lemma:FromCalF2CalH} follows by showing that there is an $\ExtV_{\Lambda, \kk}\times \ExtV_{\Lambda,\kk}$ matrix 
$\SymWeightMB$, with entries bounded as specified in the statement of \Cref{lemma:FromCalF2CalH}, 
such that for any $\ku\in V\setminus \Lambda$ and $\kP\in \ExtV_{\Lambda, \kk}$, we have
\begin{align}\label{eq:Target4lemma:FromCalF2CalH}
\SemExtdInfMatrix(\kP, \ku) &= \sum\nolimits_{\kQ} 
\SymWeightMB(\kP, \kQ)\cdot \ExtdInfMatrix(\kP, \kQ)
\enspace,
\end{align}
where $\kQ$ varies over all walks in $ \ExtV_{\Lambda,\kk}$ that emanate from vertex $\ku$.

Consider vertices $\kw, \ku\in V\setminus \Lambda$ at distance $\geq 2k$
and $\kP\in \ExtV_{\Lambda,\kk}$ which emanates from $\kw$. We start by showing that 
for such $\kP$ and $\kw, \ku$, \eqref{eq:Target4lemma:FromCalF2CalH} is true.

Let $\upzeta^{\kP}$ be the $\kP$-extension of $\mu^{\Lambda,\tau}_{G}$. Let $\infmatrix^{\kP}$ 
be the influence matrix induced by $\upzeta^{\kP}$. 
From \eqref{eq:SemExtIndVsInfM}, we have 
\begin{align}\label{eq:CFReverse}
\SemExtdInfMatrix(\kP, \ku) &=\infmatrix^{\kP}(\kw,\ku) =\cW \cdot \infmatrix^{\kP}(\ku,\kw) \enspace, 
\end{align}
where $\cW=\cW(\kP, \ku)=\frac {\upzeta^{\kP}_{\ku}(+1)\cdot \upzeta^{\kP}_{\ku}(-1 )} {\upzeta^{\kP}_{\kw}(+1)\cdot \upzeta^{\kP}_{\kw}(-1)+\Ind\{\upzeta^{\kP}_{\kw}(+1)\in \{0,1\}\}}$.

To verify the second equality in \eqref{eq:CFReverse} note the following:
when $\upzeta^{\kP}_{\ku}(+1), \upzeta^{\kP}_{\kw}(+1)\notin \{0,1\}$, the second equality in \eqref{eq:CFReverse} 
is true due to \Cref{claim:InfSymmetrisation}.
When at least one of $\upzeta^{\kP}_{\ku}(1), \upzeta^{\kP}_{\kw}(1)$ is in $\{0,1\}$, 
the definitions of $\SemExtdInfMatrix$ and the influence matrix $\infmatrix^{\kP}$
imply 
$\SemExtdInfMatrix(\kP, \ku)=0$ and $\infmatrix^{\kP}(\ku,\kw)=0$. 
Hence, the second equality in \eqref{eq:CFReverse} is  true.

The $b$-marginal boundedness assumption implies 
\begin{align}\label{eq:Bound4CN1}
0\leq \cW(\kP,\ku) & \leq b^{-2} \enspace. 
\end{align}

\noindent
For $\kQ\in \ExtV_{\Lambda,\kk}$ that emanates from $\ku$, let $\upzeta^{\kP,\kQ}$ be the $\{\kP,\kQ\}$-extension of 
$\mu^{\Lambda,\tau}_{G}$. Note that $\kP$ and $\kQ$ are compatible because we assume that $\ku$ and $\kw$ 
are at distance $\geq 2\kk$; hence, $\upzeta^{\kP,\kQ}$ is well-defined. 

In what follows, denote $\infmatrix^{\kP,\kQ}$ the influence matrix induced by $\upzeta^{\kP,\kQ}$.
%
%
\begin{claim}\label{claim:CFReverseSplit}
There is an $\ExtV_{\Lambda,\kk}\times (V\setminus \Lambda)$ matrix $\widehat{\SymWeightM}_A$ such that 
for $\kw, \ku\in V\setminus \Lambda$ with $\dist_{G}(\kw,\ku)\geq 2\kk$ and any $\kP\in \ExtV_{\Lambda,\kk}$ that emanates from $\kw$, 
we have
\begin{align}\label{eq:CFReverseSplit}
\infmatrix^{\kP}(\ku,\kw) &=\sum\nolimits_{\kQ} 
\widehat{\SymWeightM}_A(\kQ, \kw) \cdot \infmatrix^{\kP,\kQ} (\ku,\kw)
\enspace,
\end{align}
where $\kQ$ varies over the walks in $\ExtV_{\Lambda,\kk}$, which emanate from $\ku$. 
Furthermore, we have
 \begin{align}\label{eq:CFReverseSplitBB}
\max\nolimits_{\kQ,\kz} \{ |\widehat{\SymWeightM}_A(\kQ,\kz)| \} & \leq C_A\enspace,
\end{align}
 where $C_A$ is specified in 
the statement of \Cref{lemma:FromCalI2CalF}.
\end{claim}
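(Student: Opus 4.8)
\textbf{Proof plan for Claim~\ref{claim:CFReverseSplit}.}
The plan is to mirror exactly the structure of \Cref{lemma:FromCalI2CalF}, but now starting the tree of self-avoiding walks from $\ku$ rather than from $\kw$, and applying the $\kQ$-extension at the far endpoint. First I would set $\EMTw = \Tsaw(\gext{G}{\kP}, \ku)$ and let $\{\infweight_{\kP}(\ke)\}$ be the edge weights induced by $\upzeta^{\kP}$, the $\kP$-extension of $\mu^{\Lambda,\tau}_G$. By \Cref{prop:Inf2TreeRedaux} applied to $\infmatrix^{\kP}$, the entry $\infmatrix^{\kP}(\ku,\kw)$ is the weighted sum over all paths in $\EMTw$ from the root to copies of $\kw$; since $\dist_G(\ku,\kw)\geq 2\kk$, every such path has length $\geq 2\kk$. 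Each such path emanates from the root along a length-$\kk$ self-avoiding initial segment, and that segment corresponds to a unique walk $\kQ \in \ExtV_{\Lambda,\kk}$ emanating from $\ku$; this is the $\kQ$ that, together with $\kP$, is compatible (again using $\dist_G(\ku,\kw)\geq 2\kk$). Thus I partition the paths contributing to $\infmatrix^{\kP}(\ku,\kw)$ according to their initial $\kk$-segment $\kQ$.

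Next I would invoke a $\Tsaw$-identification analogous to \Cref{lemma:TsawPVsTsaw}/\Cref{lemma:GibbsWeightsTpVsP} but for the extension applied at $\ku$ (the setting of \Cref{sec:NonSoBasicExtensions}, in particular \Cref{lemma:Subtree4TSawU} and \Cref{lemma:BoundaryTVsBoundaryTPFromU}): the tree $\Tsaw(\gext{G}{\kP,\kQ},\ku)$ sits inside $\EMTw=\Tsaw(\gext{G}{\kP},\ku)$ as the subtree of paths whose initial segment is $\kQ$, and the weights induced by $\upzeta^{\kP,\kQ}$ agree with those induced by $\upzeta^{\kP}$ on all edges at distance $\geq \kk$ from the root. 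Hence for a fixed $\kQ$, the portion of the weighted sum for $\infmatrix^{\kP}(\ku,\kw)$ coming from paths with initial segment $\kQ$ equals a constant factor $\UpK_{\kQ} = \prod_{\ke \in R_{\kQ}}\infweight_{\kP}(\ke)$ (the product of weights along the $\kk$-edge initial segment, which is the copy of $\kQ$ in the tree) times the weighted sum defining $\infmatrix^{\kP,\kQ}(\ku,\kw)$, whose initial $\kk$-segment weights were computed under $\upzeta^{\kP,\kQ}$. This is where I must be careful: the weights of the first $\kk$ edges differ between $\upzeta^{\kP}$ and $\upzeta^{\kP,\kQ}$, so I would define
\begin{align}\nonumber
\widehat{\SymWeightM}_A(\kQ,\kw) = \prod\nolimits_{\ke\in R_{\kQ}} \frac{\infweight_{\kP}(\ke)}{\infweight_{\kP,\kQ}(\ke) + \Ind\{\infweight_{\kP,\kQ}(\ke)=0\}}
\end{align}
when $\dist_G(\ku,\kw)\geq 2\kk$ and set it to $0$ (or any convenient value, since the corresponding $\infmatrix^{\kP,\kQ}(\ku,\kw)$ vanishes) otherwise; here $R_{\kQ}$ is the copy of $\kQ$ emanating from the root of the relevant tree, and $\infweight_{\kP,\kQ}$ denotes the weights induced by $\upzeta^{\kP,\kQ}$. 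As in \Cref{lemma:FromCalI2CalF}, when some $\infweight_{\kP,\kQ}(\ke)=0$ one checks the corresponding $\infweight_{\kP}(\ke)=0$ too, so both sides of the per-$\kQ$ contribution vanish and the identity still holds termwise; summing over $\kQ$ yields \eqref{eq:CFReverseSplit}.

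Finally, for the bound \eqref{eq:CFReverseSplitBB}: the $b$-marginal boundedness of $\mu^{\Lambda,\tau}_G$ is inherited by all its extensions $\upzeta^{\kP}$ and $\upzeta^{\kP,\kQ}$ (they have the same parameters $(\beta,\gamma,\lambda)$ and only differ by additional pinnings), so by the same elementary argument as in \Cref{lemma:FromCalI2CalF} there is a constant $\upc \in (0,1)$ bounded away from zero such that every nonzero influence weight lies in $[\upc,1]$. Hence each factor in $\widehat{\SymWeightM}_A(\kQ,\kw)$ has absolute value at most $\upc^{-1}$, and since $R_{\kQ}$ has exactly $\kk$ edges, $|\widehat{\SymWeightM}_A(\kQ,\kw)| \leq \max\{1,\upc^{-\kk}\} = C_A$, the same constant as in \Cref{lemma:FromCalI2CalF}. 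The main obstacle is purely bookkeeping: making precise the identification of $\Tsaw(\gext{G}{\kP,\kQ},\ku)$ as a subtree of $\Tsaw(\gext{G}{\kP},\ku)$ and verifying the weights agree away from the root, but this is exactly what \Cref{lemma:Subtree4TSawU} and \Cref{lemma:BoundaryTVsBoundaryTPFromU} are set up to deliver, so no genuinely new idea is needed beyond reusing those lemmas with the roles of the two endpoints swapped relative to \Cref{lemma:FromCalI2CalF}.
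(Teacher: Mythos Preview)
Your proposal is correct and follows essentially the same approach as the paper: both proofs simply replay the argument of \Cref{lemma:FromCalI2CalF} with the base graph $\gext{G}{\kP}$ and base distribution $\upzeta^{\kP}$ in place of $G$ and $\mu^{\Lambda,\tau}_G$, rooting the tree at $\ku$ and taking $\kQ$-extensions. One minor correction: since the walk $\kQ$ emanates from the root $\ku$ of your tree $\Tsaw(\gext{G}{\kP},\ku)$, the relevant identification lemmas are \Cref{lemma:TsawPVsTsaw} and \Cref{lemma:GibbsWeightsTpVsP} applied directly (with $G\to\gext{G}{\kP}$, $\kw\to\ku$, $\kP\to\kQ$), not the results of \Cref{sec:NonSoBasicExtensions}, which handle the different situation where the extension walk starts far from the root.
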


The definition of matrix $\ExtdInfMatrix$ and in particular \eqref{def:EdgeInflMatrixEquivalent}, imply\footnote{The definition of the $\{P,Q\}$-extension of 
$\mu^{\Lambda,\tau}_G$ imply that $\infmatrix^{\kQ,\kP}=\infmatrix^{\kP,\kQ}$.} that, 
for any $\kP, \kQ\in \ExtV_{\Lambda,\kk}$ that emanate from the vertices $\kw$ and $\ku$, respectively, we have
\begin{align}\label{eq:CalEllAsInfluence}
\infmatrix^{\kP,\kQ} (\ku, \kw)= \ExtdInfMatrix(\kQ,\kP)\enspace.
\end{align}
Combining \eqref{eq:CalEllAsInfluence} and \eqref{eq:CFReverseSplit}, we obtain 
\begin{align}\label{eq:CFReverseSplitAA}
\infmatrix^{\kP}(\ku,\kw) &=
\sum\nolimits_{\kQ} \widehat{\SymWeightM}_A(\kQ, \kw) \cdot \ExtdInfMatrix(\kQ,\kP)\enspace,
\end{align}
where recall that $\kQ$ varies over the walks in $\ExtV_{\Lambda,\kk}$ which emanate from $\ku$.

Working as in \eqref{eq:CFReverse}, we use \eqref{eq:CalEllAsInfluence} to obtain that 
for any $\kQ\in \ExtV_{\Lambda,\kk}$ that emanates from $\ku$, we have
\begin{align}\label{eq:CHReverse} 
\ExtdInfMatrix(\kQ, \kP) &=
\widehat{\cW} \cdot \ExtdInfMatrix(\kP,\kQ)\enspace, 
\end{align}
where $\widehat{\cW}=\widehat{W}(\kP,\kQ)=\frac{\upzeta^{\kP,\kQ}_{\kw}(+1)\cdot \upzeta^{\kP,\kQ}_{\kw}(-1)} 
{\upzeta^{\kP,\kQ}_{\ku}(+1)\cdot \upzeta^{\kP,\kQ}_{\ku}(-1 )+\Ind\{\upzeta^{\kP,\kQ}_{\ku}(+1)\in \{0,1\}\}}$.

The $b$-marginal boundedness assumption implies 
\begin{align}\label{eq:CalN2Bound}
0\leq \widehat{\cW}(\kP,\kQ) \leq b^{-2}\enspace.
\end{align}

We get \eqref{eq:Target4lemma:FromCalF2CalH} by 
plugging \eqref{eq:CFReverseSplitAA} and \eqref{eq:CHReverse} into \eqref{eq:CFReverse} while setting 
\begin{align}\label{eq:CLVsCIPQ}
\SymWeightMB(\kP, \kQ) &=\cW(\kP,\ku) \cdot \widehat{\SymWeightM}_A(\kQ,\kw) \cdot \widehat{\cW}(\kP, \kQ)\enspace.
\end{align}
Recall that $\ku$ is the first vertex of path $\kQ$. 
Plugging \eqref{eq:Bound4CN1}, \eqref{eq:CFReverseSplitBB} and \eqref{eq:CalN2Bound}
into \eqref{eq:CLVsCIPQ} we obtain that $|\SymWeightMB(\kP, \kQ)| \leq b^{-4}\cdot C_A$.

We consider \eqref{eq:Target4lemma:FromCalF2CalH} for $\ku,\kw\in V\setminus \Lambda$ which are at distance $< 2\kk$
and $\kP\in \ExtV_{\Lambda,k}$ that emanates from $\kw$.
For such $\ku,\kw$ and $\kP$, we have
\begin{align}\label{eq:LNCVsFNonCompatiblePQ}
\left(\left( \ExtdInfMatrix \circ \SymWeightMB \right)\cdot \EdgeToV_{\kk}\right)(\kP,\ku)&=0 &
\textrm{and} &&\SemExtdInfMatrix(\kP,\ku)&=0\enspace. 
\end{align}
The equality on the left is true because $\left(\left( \ExtdInfMatrix \circ \SymWeightMB \right)\cdot \EdgeToV_{\kk}\right)(\kP,\ku)=
\sum_{\kQ} \ExtdInfMatrix(\kP,\kQ) \cdot \SymWeightMB(\kP,\kQ) \cdot \EdgeToV_{\kk}(\kQ,\ku)$
and the variable $\kQ\in \ExtV_{\Lambda,k}$ varies over the walks that emanate from $\ku$. In this case, we have 
$\ExtdInfMatrix(\kP,\kQ)=0$ since  $\kP$ and $\kQ$ are not compatible. The equality on the right follows from the 
definition of matrix $\SemExtdInfMatrix$.  Hence, for any choice of matrix $\SymWeightMB$, the above is true. 

For matrix $\SymWeightMB$ to be well defined, for $\ku,\kw$ at distance $<2\kk$ and any $\kP, \kQ\in \ExtV_{\Lambda,k}$ 
that emanate from $\kw$ and $\ku$, respectively, 
we let $\SymWeightMB(\kP,\kQ)=1$.
This choice of $\SymWeightMB$ and \eqref{eq:LNCVsFNonCompatiblePQ} imply that \eqref{eq:Target4lemma:FromCalF2CalH} is also true
for any $\ku,\kw\in V\setminus \Lambda$ which are at distance $< 2\kk$ and $\kP$ that emanates from $\kw$.

Hence  \eqref{eq:Target4lemma:FromCalF2CalH} is true for any $\ku,\kw\in V\setminus \Lambda$. 
We conclude that  \Cref{lemma:FromCalF2CalH} is true for $C_B=\max\{1, b^{-4}\cdot C_A\}$.
\hfill $\square$

\begin{proof}[Proof of \Cref{claim:CFReverseSplit}]
The claim follows by utilising the equality in \eqref{eq:Target4lemma:FromCalI2CalF} from the proof of \Cref{lemma:FromCalI2CalF}.
We apply this equality with respect to the influence matrices $\infmatrix^{\kP}$, $\infmatrix^{\kP,\kQ}$, i.e., instead of matrices $\infmatrix^{\Lambda,\tau}_G$ 
and $\infmatrix^{\kP}$ that are considered there. 
More specifically, we have 
\begin{align}\label{eq:Obs4lemma:FromCalF2CalH}
\infmatrix^{\kP}(\ku,\kw) &=\sum\nolimits_{\kQ} \widehat{\SymWeightM}_A(\kQ, \kw) \cdot \infmatrix^{\kP,\kQ}(\ku,\kw) \enspace,
\end{align}
where $\kQ$ varies over the walks in $\ExtV_{\Lambda,k}$ that emanate from vertex $\ku$ and matrix
 $\widehat{\SymWeightM}_A$ is defined as in \Cref{lemma:FromCalI2CalF}, with respect to the Gibbs distributions
 $\upzeta^{\kP}$ and $\upzeta^{\kP,\kQ}$.

Applying the same arguments as in the proof of \Cref{lemma:FromCalI2CalF}, 
we obtain  $\max_{\kQ,\kz}\{ |\widehat{\SymWeightM}_A(\kQ,\kz)\}| \leq C_A$, where $C_A$ is specified in 
the statement of \Cref{lemma:FromCalI2CalF}. 
Note that the bound $C_A$  depends only on the parameters $\kk$ and $b$ for the
marginal boundedness. 
\Cref{claim:CFReverseSplit} follows. 
\end{proof}

\spreadpoint

\newcommand{\sawpaths}{\magenta{\Omega}}
\newcommand{\ABijection}{\cadmiumorange{\uph}}

\section{Remaining Proofs \LastReviewG{2025-03-11}}

\subsection{Proofs of \Cref{lemma:TsawPVsTsaw,lemma:GibbsWeightsTpVsP}}
\label{sec:lemma:TsawPVsTsaw}\label{sec:lemma:GibbsWeightsTpVsP}
We introduce a few useful notions  for the proofs of \Cref{lemma:TsawPVsTsaw,lemma:GibbsWeightsTpVsP}.
Recall   walk $\kP\in \ExtV_{\Lambda,\kk}$ that emanates from vertex $\kw$.

We let set $\sawpaths$ consist of all walks $\kW$ in $G$   that correspond to the vertices in  $\STw_{\kP}$. Similarly, we define set 
$\sawpaths_{\kP}$ that consists of all walks that correspond to the vertices in $\XTPw$.

Since each of $\STw_{\kP}$ and $\XTPw$ is a (part of)  tree of self-avoiding walks starting from vertex $\kw$, 
the $\Tsaw$-construction implies that each walk $\kW=\kv_0, \ldots, \kv_{\kr}$ in $\sawpaths$ or $\sawpaths_{\kP}$ 
is such that $\kv_0=\kw$, while it is one  of the following two types:
\begin{description}
\item[Type I] $\kv_0, \ldots, \kv_{\kr}$ is a self-avoiding walk, \label{TsawTypeAWalk}
\item[Type II] $\kv_0, \ldots, \kv_{\kr-1}$ is a self-avoiding walk, while there is $j\leq \kr-3$ such that $\kv_{\kr}=\kv_{j}$. \label{TsawTypeBWalk}
\end{description}
 
Let set $\sawpaths^{(a)}\subseteq \sawpaths$ consist of all the elements in $\sawpaths$ which are 
self-avoiding walks,  i.e., they are of Type I.  Let set $\sawpaths^{(b)}\subseteq \sawpaths$ consist of  all 
walks $\kW=\kv_0, \ldots, \kv_{\kr} \in \sawpaths$, which  are of Type II such that $\kv_{\kr}\in \kP$.  Finally, 
let set $\sawpaths^{(c)}\subseteq \sawpaths$ consist of all walks $\kW=\kv_0, \ldots, \kv_{\kr} \in \sawpaths$ 
which  is of Type II, while  $\kv_{\kr}\notin \kP$. It is immediate that $\sawpaths^{(a)}$, $\sawpaths^{(b)}$ 
and $\sawpaths^{(c)}$ partition  set $\sawpaths$.

Similarly, let set $\sawpaths^{(a)}_{\kP} \subseteq \sawpaths_{\kP}$ consist of all the elements in 
$\sawpaths_{\kP}$,  which are of Type I,  and do not use any of the split-vertices in $\ssplit_{\kP}$. Let 
set  $\sawpaths^{(b)}_{\kP}\subseteq \sawpaths_{\kP}$ consist of all the elements in $\sawpaths_{\kP}$, 
which  are of Type I, such that the last vertex is in $\ssplit_{\kP}$. Finally, let set 
$\sawpaths^{(c)}_{\kP}\subseteq \sawpaths_{\kP}$ consist of all walks $\kW$  of Type II.
Recall that the split-vertices in $\ssplit_{\kP}$ are of degree one; hence,  there can be no path  
$\kW=\kv_0, \ldots, \kv_{\kr} \in  \sawpaths^{(c)}_{\kP}$  such that $\kv_{\kr}\in \ssplit_{\kP}$.
It is immediate  that  $\sawpaths^{(a)}_{\kP}$, $\sawpaths^{(b)}_{\kP}$ and $\sawpaths^{(c)}_{\kP}$ 
partition set $\sawpaths_{\kP}$.

\begin{proof}[Proof of \Cref{lemma:TsawPVsTsaw}]
Our result follows by showing that there is an appropriate bijection $\ABijection$ from  the vertex set of 
$\STw_{\kP}$ to that  of $\XTPw$ that allows us to identify $\STw_{\kP}$ with $\XTPw$.

Since the vertices in $\STw_{\kP}$ correspond to the walks in $\sawpaths$ and the vertices in $\XTPw$ 
correspond to the  walks in $\sawpaths_{\kP}$, we equivalently establish the bijection $\ABijection$ from 
set $\sawpaths$ to $\sawpaths_{\kP}$.

For concreteness, let $\kP=\kx_0, \ldots, \kx_{\kk}$, while  $\kx_0=\kw$. Let set $\cO$ consist of each 
edge $\ke$ in $G$, which does not connect two consecutive vertices in $\kP$ but has at least one end in the set 
$\{\kx_{0}, \ldots, \kx_{\kk-1}\}$.  Similarly, let set  $\cO_{\kP}$ consist of each  edge in $\gext{G}{\kP}$ 
which is between a split-vertex in set $\ssplit_{\kP}$ and another  vertex of $\gext{G}{\kP}$.

The only edges in $G$ that do not appear in $\gext{G}{\kP}$ are exactly those in 
set $\cO$. Similarly, the only edges in $\gext{G}{\kP}$ that do not appear in $G$ are exactly those in set 
$\cO_{\kP}$. This implies that  the elements in $\sawpaths \cap \sawpaths_{\kP}$ correspond  to the walks 
that do not use edges from sets $\cO$ and $\cO_{\kP}$.

The definitions of the two trees $\STw_{\kP}$ and $\XTPw$ imply that  
\begin{align}\nonumber
\sawpaths^{(a)}&=\sawpaths^{(a)}_{\kP} & \textrm{and} &&\sawpaths^{(c)}&=\sawpaths^{(c)}_{\kP} \enspace.
\end{align}
To see this, recall that we have $\kP=\kx_0, \ldots, \kx_{\kk}$. Then each  $\kW=\kv_{0}, \ldots, \kv_{\kr} \in \sawpaths^{(a)}$ 
corresponds to a self-avoiding walk  that emanates from vertex $\kw$, while  $\kv_{\ki}=\kx_{\ki}$, for $\ki=0,\ldots, \min\{\kk,\kr \}$.
It is immediate to check that each such walk $\kW$ also belongs to $\sawpaths^{(a)}_{\kP}$ and vice versa. This implies that
$\sawpaths^{(a)}=\sawpaths^{(a)}_{\kP}$.   With a very similar argument,  we also get  $\sawpaths^{(c)}=\sawpaths^{(c)}_{\kP}$.

It is only the walks in $\sawpaths^{(b)}$ and $\sawpaths^{(b)}_{\kP}$ that use edges from $\cO$ and 
$\cO_{\kP}$,  respectively.  For each walk $\kW=\kv_0, \ldots, \kv_{\kr} \in \sawpaths^{(b)}$, there is a 
walk $\kW_{\kP}=\kz_0, \ldots, \kz_{\kr} \in \sawpaths^{(b)}_{\kP}$  such that $\kv_{\ki}=\kz_{\ki}$ for $\ki=0,\ldots, \kr-1$, 
while $z_{\kr}$ is a split-vertex that is generated by vertex $v_{\kr}$ and  vice versa.  This holds since 
the definition of  $\gext{G}{\kP}$ implies that when constructing $\gext{G}{\kP}$, we substitute  the edge  
$\{\kv_{\kr-1}, \kv_{\kr}\}\in\cO$ with the edge $\{\kz_{\kr-1}, \kz_{\kr}\}\in \cO_{\kP}$.

Considering all the above, we specify $\ABijection$ that
 \begin{align}\label{def:OfUPHBijection}
\ABijection(\kW) &= \left\{
 \begin{array}{lcl}
 \kW && \textrm{if } \kW\in \sawpaths^{(\alpha)}\cup \sawpaths^{(c)}\enspace, \\
 \kW_{\kP} && \textrm{if } \kW \in \sawpaths^{(b)} \enspace, \\
 \end{array}
 \right .
 \end{align}
where $\kW_{\kP}$ is as described in the  paragraph above \eqref{def:OfUPHBijection}.

It is easy to check that  for each $\kW_{\kP}\in \sawpaths^{(b)}_{\kP}$ there is a unique $\kW \in \sawpaths^{(b)}$
such that  $\ABijection(\kW)=\kW_{\kP}$.  This observation and \eqref{def:OfUPHBijection} imply 
that  $\ABijection$ is a bijection from $\sawpaths$ to $\sawpaths_{\kP}$. Using the correspondence 
between sets $\sawpaths$, $\sawpaths_{\kP}$ and the sets of vertices of trees  $\STw_{\kP}$ and $\XTPw$, 
respectively, we identify $\ABijection$  as a bijection from the set of vertices in $\STw_{\kP}$ 
to those in $\XTPw$.

For vertex $\ku$ in $\STw_{\kP}$ that corresponds to walk $\kW\in \sawpaths$,   we  have that  $\ABijection(\ku)$ is 
a vertex in   $\XTPw$ that  corresponds to  walk  $\ABijection(\kW)\in \sawpaths_{\kP}$. From \eqref{def:OfUPHBijection}, 
it is easy to verify that if $\ku$ is a copy of $\kv$ in $G$,  then $\ABijection(\ku)$ must be a copy of $\kv$ or a split-vertex in 
$\gext{G}{\kP}$ generated by $\kv$.

We now show that $\ABijection$ preserves adjacencies. That is, for two vertices $\kx$, $\kz$ in $\STw_{\kP}$
which are adjacent, we have that $\ABijection(\kx)$ and $\ABijection(\kz)$ are two  adjacent vertices in $\XTPw$. 
Suppose  $\kx$ and $\kz$ correspond to walks $\kW, \overline{\kW}\in \sawpaths$, respectively. W.l.o.g., 
suppose  walk $ \kW$ extends walk $\overline{\kW}$ by  one vertex. Then, the vertices  $\ABijection(\kx)$ 
and $\ABijection(\kz)$ correspond to walks $\ABijection(\kW)$ and $\ABijection(\overline{\kW})$, respectively. 
The definition of $\ABijection$ implies that $\ABijection(\kW)$ extends  $\ABijection(\overline{\kW})$ by one 
vertex.  To see this,  note that in this setting we have $\overline{\kW}\in \sawpaths^{(a)}$ while $\kW\in \sawpaths$, 
hence we have $\ABijection(\overline{\kW})=\overline{\kW}$. If $\kW\in \sawpaths^{(a)}\cup \sawpaths^{(c)}$,  then 
$\ABijection(\kW)=\kW$ implying that $\kW$ extends $\overline{\kW}$. If on the other hand
we have $\kW\in \sawpaths^{(b)}$ then  $\ABijection(\kW)=\kW_{\kP}$, where $\kW_{\kP}$ is defined as in
the paragraph above \eqref{def:OfUPHBijection}. It is easy to verify  that $\kW_{\kP}$ extends $\overline{\kW}$ in this case too. 
Hence, $\ABijection(\kx)$ and $\ABijection(\kz)$ are  two  adjacent vertices in  $\XTPw$. This implies that  
$\ABijection$ preserves adjacencies between the two trees,   $\STw_{\kP}$ and $\XTPw$.

From all the above, we conclude that $\ABijection$ has all the necessary properties to allow us to identify 
$\STw_{\kP}$ with $\XTPw$.   \Cref{lemma:TsawPVsTsaw} follows. 
\end{proof}

\begin{proof}[Proof of \Cref{lemma:GibbsWeightsTpVsP}]

Let $\mu^{M,\sigma}$ be the Gibbs distribution on $T=\Tsaw(G,\kw)$ induced by the $\Tsaw$-construction on 
$\mu^{\Lambda,\tau}_G$. Similarly, let the Gibbs distribution $\mu^{\kK,\eta}_P$ on $\XTPw=\Tsaw(\gext{G}{\kP},w)$ 
be induced by the same construction  with respect to the $\kP$-extension of $\mu^{\Lambda,\tau}_G$.

Consider the edge-weights $\{ \infweight(\ke)\}$ and $\{ \infweight_{\kP}(\ke)\}$ induced by the two Gibbs distributions 
 $\mu^{M,\sigma}$ and $\mu^{\kK,\eta}_{\kP}$ respectively. Note that the edge-weights $\{ \infweight(\ke)\}$ are on
 the edges of $T$, while the edge-weights $\{ \infweight_{\kP}(\ke)\}$ are on the edges of $\XTPw$. 
Using \Cref{lemma:TsawPVsTsaw}, we identify $\XTPw$ as a subtree of $T$,  and we endow the
edges of $\XTPw$ with the edge-weights $\{\infweight(\ke)\}_{\ke\in \XTPw}$.

Fix an edge $\ke=\{\kx,\kz\}$ in $\XTPw$ such that $\kz$ is the child of $\kx$.  Suppose  $\ke$ is at distance $\geq \kk$ 
from the root. \Cref{lemma:GibbsWeightsTpVsP} follows by showing that $\infweight(\ke)=\infweight_{\kP}(\ke)$.

The edge weights $\infweight(\ke)$ and $\infweight_{\kP}(\ke)$ only depend on the
parameters of the initial Gibbs distribution $\mu^{\Lambda,\tau}_G$ and the pinnings $(M,\sigma)$ and 
$(\kK,\eta)$ that the two Gibbs distributions $\mu^{M,\sigma}$, $\mu^{\kK,\eta}_{\kP}$ impose, respectively,
at the vertices of $T_{\kz}$.  As per standard notation, $T_{\kz}$ is the subtree of $\XTPw$ that consists 
of vertex $\kz$ and its descendants. Since $\kz$  is also a vertex in $T$, it is not hard to verify that 
the subtree of $T$ which consists of $\kz$ and its  descendants is exactly the same as $T_{\kz}$.
This is due to the assumption that edge $\ke$ is at distance $\geq k$ from the root of $T$.

Since, by definition, both $\mu^{M,\sigma}$ and $\mu^{\kK,\eta}_{\kP}$ have the same parameters,
it suffices to show  the pinnings $(M\cap \XTPw,\sigma)$ and 
$(\kK,\eta)$ are identical. Here,  we let $(M\cap \XTPw,\sigma)$ stand for the restriction of pinning 
$(M,\sigma)$ to the vertices in $\XTPw$.

Let set $\sawpaths^{\Lambda}\subseteq \sawpaths^{(\alpha)}$ consist of the self-avoiding walks in  $\sawpaths^{(\alpha)}$ 
whose last vertex is  in set $\Lambda$. Similarly, we define $\sawpaths^{\Lambda}_{\kP} \subseteq \sawpaths^{(\alpha)}_{\kP}$. 
The $\Tsaw$-construction implies that set $M\cap \XTPw$ consists of the vertices that  correspond to the walks in 
$ \sawpaths^{\Lambda} \cup \sawpaths^{(b)}\cup \sawpaths^{(c)}$. Similarly, set $\kK$ in $\mu^{K,\eta}_{\kP}$ consists of the 
vertices that correspond to the walks in   $ \sawpaths^{\Lambda}_{\kP}\cup \sawpaths^{(b)}_{\kP}\cup \sawpaths^{(c)}_{\kP}$.

We start by showing that $M\cap \XTPw \subseteq \kK$ while for every $\kv\in M\cap \XTPw$, we have $\sigma(\kv)=\eta(\kv)$. 
Since we identify  $\XTPw$ as a subtree of $T$, each vertex in $\XTPw$ corresponds to two walks, one from set 
$\sawpaths$ and one from set $\sawpaths_{\kP}$. Particularly, for vertex $\kv\in \XTPw$ which corresponds to  walk 
$\kW\in \sawpaths$, it also corresponds to walk $\ABijection(\kW)\in \sawpaths_{\kP}$, where $\ABijection$ is the  bijection between 
$\sawpaths$ and $\sawpaths_{\kP}$ described in \eqref{def:OfUPHBijection}, i.e.,  the proof of \Cref{lemma:TsawPVsTsaw}.

Each vertex $\kv\in M\cap \XTPw$, which corresponds to a walk $\kW\in \sawpaths^{\Lambda}\subseteq \sawpaths^{(\alpha)}$,  also corresponds to 
the same  walk $\kW\in \sawpaths^{\Lambda}_P$ since \eqref{def:OfUPHBijection} implies $\ABijection(\kW)=\kW\in \sawpaths^{\Lambda}_{\kP}$.
We then obtain that vertex $\kv\in \kK$. Furthermore,  the pinning rules of the $\Tsaw$-construction 
imply  $\eta(\kv)=\sigma(\kv)=\tau(\kv)$. Recall that $\tau$ is from the pinning of the initial Gibbs distribution $\mu^{\Lambda,\tau}_G$.

Arguing as above,  for each vertex $\kv\in M\cap \XTPw$, which corresponds to walk $\kW\in \sawpaths^{(c)}$, 
we have $\ABijection(\kW)=\kW\in \sawpaths^{(c)}_{\kP}$. Hence, we obtain that $\kv\in \kK$.  For such vertex $\kv$, 
the pinning rules of the $\Tsaw$-construction imply that $\eta(\kv)=\sigma(\kv)$.

Each vertex $\kv\in M\cap \XTPw$ which corresponds to walk $\kW=\kv_0, \ldots, \kv_{\kr} \in \sawpaths^{(b)}$  also 
corresponds to walk $\kW_{\kP}=\kz_0, \ldots, \kz_{\kr}$ such that $\kv_{\ki}=\kz_{\ki}$ for $0\leq \ki \leq \kr-1$,  while 
$\kz_{\kr}$ is a split-vertex in $\ssplit_{\kP} $generated by vertex $\kv_{\kr}$. That is, we have $\ABijection(\kW)=\kW_{\kP}\in \sawpaths^{(b)}$.
This  implies that $\kv\in \kK$.

For vertex $\kv$ in the paragraph above, the pinning $\sigma(\kv)$ is induced by the $\Tsaw$-construction on $\mu^{\Lambda,\tau}_G$, while
the pinning $\eta(\kv)$ is because we consider the $\kP$-extension of $\mu^{\Lambda,\tau}_G$,  which pins the split-vertices in $\ssplit_{\kP}$. 
From the description of the $\Tsaw$-construction and the definition of the $\kP$-extension of $\mu^{\Lambda,\tau}_{G}$ it not hard to verify that 
$\eta(\kv)=\sigma(\kv)$, e.g., see \eqref{eq:DefOfTau_x}.

All the above imply that $M\cap \XTPw \subseteq \kK$, while for each $\kv\in M\cap \XTPw$, we have $\sigma(\kv)=\eta(\kv)$. 
With a very similar line of arguments, we also deduce that $\kK \subseteq M\cap \XTPw$, while for each $\kv\in \kK$,
 we have  $\sigma(\kv)=\eta(\kv)$. We omit these arguments because they are almost identical to those we presented  above.

We conclude that the two pinnings $(M\cap \XTPw,\sigma)$ and $(\kK,\eta)$ are identical. Consequently, we have that
$\infweight(\ke)=\infweight_{\kP}(\ke)$ for any edge $\ke$ at distance $\geq k$ from the root of  $\XTPw$. 

\Cref{lemma:GibbsWeightsTpVsP} follows. 
\end{proof}

\newcommand{\BBijection}{\iris{\upphi}}

\subsection{Proof of \Cref{lemma:Subtree4TSawU}}\label{sec:lemma:Subtree4TSawU}

Let $\sawpaths$ be the set of walks in $G$ which correspond to the vertices in $\SMTu=\Tsaw(G,\ku)$. 
Also, let $\sawpaths_{\kP}$ be the set of walks in $\gext{G}{\kP}$ which correspond to the vertices in 
$\SMTPu=\Tsaw(\gext{G}{\kP},\ku)$.

For concreteness, let $\kP=\kx_0, \ldots, \kx_{\kk}$.  Recall that $\kx_0=\kw$ and $\ku\notin \kP$.
Let set $\cO$ consist of each edge $\ke$ in $G$ 
which does not connect two consecutive vertices in $\kP$ but has at least one end in the set of vertices 
$\{\kx_{0}, \ldots, \kx_{\kk-1}\}$. Let set $\cO_{\kP}$ consist of the set of edges in $\gext{G}{\kP}$ which 
are between a split-vertex in set  $\ssplit_{\kP}$ and another vertex of $\gext{G}{\kP}$.

The only edges in $G$ that do not appear in $\gext{G}{\kP}$ are exactly those in set $\cO$.
Similarly, the only edges in $\gext{G}{\kP}$ that do not appear in $G$ are exactly those in set $\cO_{\kP}$.
Hence, the elements in $\sawpaths \cap \sawpaths_{\kP}$ correspond to paths that do not use edges from
set $\cO$ and $\cO_{\kP}$.

We let the map $\BBijection:\sawpaths_{\kP}\to \sawpaths$ be defined as follows: 
\begin{enumerate}[(a)]
\item For each   $\kW=\kv_0, \ldots, \kv_{\kr}\in \sawpaths_{\kP}$ such that $\kv_{\kr}\notin \ssplit_{\kP}$
 we set $\BBijection(\kW)=\kW$. 
 \item For each $\kW=\kv_0, \ldots, \kv_{\kr}\in \sawpaths_{\kP}$ such that $\kv_{\kr}\in \ssplit_{\kP}$, we set
$\BBijection(\kW)=\kW_{\kP}$, where $\kW_{\kP}=\kz_0, \ldots, \kz_{\kr}$ is such that $\kz_{\ki}=\kv_{\ki}$ for $\ki=0, \ldots, \kr-1$,  
while the split-vertex $\kv_{\kr}$ is generated from $\kz_{\kr}$. 
\end{enumerate}
We proceed to show that $\BBijection$ is well-defined.

Since each vertex from $\ssplit_{\kP}$ is of degree one,  it can only appear in a walk in $\gext{G}{\kP}$ 
as a first or as a last vertex. Hence, each $\kW \in \sawpaths_{\kP}$ such that $\kv_{\kr}\notin \ssplit_{\kP}$ 
does not use edges from $\cO_{\kP}$,  implying $\kW\in \sawpaths$. Hence,  item (a)  for map
$\BBijection$  is well-defined.

For each walk $\kW=\kv_0, \ldots, \kv_{\kr} \in \sawpaths_{\kP}$ such that $\kv_{\kr}\in \ssplit_{\kP}$, there 
is walk $\kW_{\kP}=\kz_0, \ldots, \kz_{\kr} \in \sawpaths$  such that $\kv_{\ki}=\kz_{\ki}$ for $\ki=0,\ldots, \kr-1$, 
while $\kz_{\kr}$  is a split-vertex that is generated by vertex $\kv_{\kr}$.  This holds since the definition of  
$\gext{G}{\kP}$ implies that when constructing $\gext{G}{\kP}$, we substitute  the edge  $\{\kz_{\kr-1}, \kz_{\kr}\}\in \cO$ 
with the edge $\{\kv_{\kr-1}, \kv_{\kr}\}\in \cO_{\kP}$.  Hence,  item (b) for $\BBijection$ is well-defined.

We let $\widehat{\sawpaths} \subseteq \sawpaths$ be the image set of map $\BBijection$. The discussion in the 
previous paragraph  implies that for each $\kW\in \widehat{\sawpaths}$,  there is a (unique) well-defined walk 
$\BBijection^{-1}(\kW)\in \sawpaths_{\kP}$.  Furthermore, the definition of $\widehat{\sawpaths}$  implies that 
$\BBijection$ is a surjective map from $\sawpaths_{\kP}$ to $\widehat{\sawpaths}$.  We conclude that 
$\BBijection$ is a bijection from set $\sawpaths_{\kP}$ to set $\widehat{\sawpaths}$.

Recalling that $\SMTu=\Tsaw(G,\ku)$,  let $\widehat{T}$ be the subtree (perhaps forest) of $\SMTu$ 
induced by the vertices that  correspond to the walks in $\widehat{\sawpaths}$.  We extend the definition 
of $\BBijection$ and identify  it as a map from the set of vertices  in $\SMTPu$ to that in $\widehat{T}$. 
Particularly,   $\BBijection$  is bijection between the  sets of vertices of the two trees.

We show that $\BBijection$ preserves adjacencies. That is, for two vertices $\kx$, $\kz$ in $\SMTPu$
which are adjacent, we have that $\BBijection(\kx)$ and $\BBijection(\kz)$ are two adjacent vertices in 
$\widehat{T}$.  Suppose that $\kx$ and $\kz$ correspond to walks $\kW, \overline{\kW}\in \sawpaths_{\kP}$, 
respectively. W.l.o.g., suppose  walk $ \kW$ extends walk $\overline{\kW}$ by  one vertex. Then, the 
vertices  $\BBijection(\kx)$ and $\BBijection(\kz)$ correspond to walks $\BBijection(\kW)$ and 
$\BBijection(\overline{\kW})$, respectively. In what follows we show that  $\BBijection(\kW)$ 
extends $\BBijection(\overline{\kW})$.

Let $\overline{\kW}=\kv_0, \ldots, \kv_{\kr}$ and $\kW=\kv_0, \ldots, \kv_{\kr+1}$. First note that  we cannot  have 
$\kv_{\kr}\in \ssplit_{\kP}$ because this would imply that walk $\overline{\kW}$ cannot be extended as $\kv_{\kr}$  is of 
degree 1. Hence we always have  $\kv_{\kr}\notin \ssplit_{\kP}$. We consider cases for $\kv_{\kr+1}$. If 
$\kv_{\kr+1}\notin \ssplit_{\kP}$, then we have  $\BBijection(\overline{\kW})=\overline{\kW}$ and 
 $\BBijection(\kW)=\kW$.
Clearly,  in this case, $\BBijection(\kW)$ extends path  $\BBijection(\overline{\kW})$.  If $ \kv_{\kr+1}\in \ssplit_{\kP}$, 
then  we have $\BBijection(\kW)=\kW_{\kP}$, where  $\kW_{\kP}=\kz_0, \ldots, \kz_{\kr+1}$ 
is such that  $\kz_{\ki}=\kv_{\ki}$ for $\ki=0, \ldots, \kr$,  while the split-vertex $\kv_{\kr+1}$ is generated from 
$\kz_{\kr+1}$.  Clearly we have that  $\kW_{\kP}$   extends walk $\overline{\kW}$.  The above implies that 
$\BBijection(\kW)$  extends $\BBijection(\overline{\kW})$.

The conclusion of the previous paragraph implies that   $\BBijection(\kx)$ and $\BBijection(\kz)$ are  two adjacent 
vertices in $\widehat{T}$.   Hence,   $\BBijection$ preserves adjacencies between the two trees $\widehat{T}$ and 
$\SMTPu$.

The definition of $\BBijection$ also implies the following:  if vertex $\kv$ in $\SMTPu$ is a copy of vertex $\overline{\kv}$ in 
$\gext{G}{\kP}$ and $\overline{\kv}$ is not a split-vertex, then  $\BBijection(\kv)$ is also a copy of $\overline{\kv}$. If $\kv$ is a copy 
of the split-vertex  which is generated by vertex $\overline{\kv} \in \kP$, then  $\BBijection(\kv)$ is a copy of $\overline{\kv}$.

All the above imply that we can identify $\SMTPu$ with the tree $\widehat{T}$. Consequently, we can identify 
$\SMTPu$ as a subtree of $\SMTu$. The two trees $\SMTPu$ and $\SMTu$ have the same root, since they both 
consider the walks  starting from vertex $\ku$.

We now consider the set of vertices $\kv$ in $\SMTu\setminus \SMTPu$. 
The vertiex $\kv $corresponda to a walk  $\kW\in \sawpaths\setminus \widehat{\sawpaths}$. 
Recall that we set $\kP=\kx_0, \ldots, \kx_{\kk}$.  Then, for   walk $\kW=\kv_0, \ldots, \kv_{\kr}\in \sawpaths\setminus \widehat{\sawpaths}$
 there exists  $\kj<\kr$ such that $\kv_{\kj}$ is a  copy of a vertex in $\kP \setminus\{ \kx_{\kk}\}$ while $\kv_{j+1}$ is a copy of a vertex 
outside $\kP$.  This implies that  vertex $\kv$ in $\SMTu$ is a descendant of a copy of $\kv_{\kj}$ in   $\cpT$.  Recall
that set $\cpT$ consist of each vertex in $\SMTu$ which is a copy of a vertex in $\kP$.

All the above conclude the proof of \Cref{lemma:Subtree4TSawU}.
\hfill $\square$

\subsection{Proof of \Cref{lemma:BoundaryTVsBoundaryTPFromU}}\label{sec:lemma:BoundaryTVsBoundaryTPFromU}

The setting in this proof is the same as  in the proof of \Cref{lemma:Subtree4TSawU}.   Recall that we have vertices
$\kw,\ku$ in graph $G$ such that $\dist_{G}(\ku,\kw)\geq 2\kk$ and a self-avoiding path $\kP$ of length $\kk$ that 
emanates from vertex $\kw$; hence,  we have $\ku\notin \kP$.

Let $\sawpaths$ be the set of walks in  $G$ which correspond to the vertices in $\SMTu=\Tsaw(G,\ku)$.  Also, let 
$\sawpaths_{\kP}$ be the set of walks in $\gext{G}{\kP}$ which  correspond to the vertices in 
$\SMTPu=\Tsaw(\gext{G}{\kP},\ku)$.

We use the same map $\BBijection:\sawpaths_{\kP}\to \sawpaths$ as in the proof of \Cref{lemma:Subtree4TSawU}. 
That is, 
\begin{enumerate}[(a)]
\item For each $\kW=\kv_0, \ldots, \kv_{\kr}\in \sawpaths_{\kP}$ such that $\kv_{\kr}\notin \ssplit_{\kP}$
 we set $\BBijection(\kW)=\kW$. 
 \item For each $\kW=\kv_0, \ldots, \kv_{\kr}\in \sawpaths_{\kP}$ such that $\kv_{\kr}\in \ssplit_{\kP}$, we set
$\BBijection(\kW)=\kW_{\kP}$, where $\kW_{\kP}=\kz_0, \ldots, \kz_{\kr}$ is such that $\kz_{\ki}=\kv_{\ki}$ for $\ki=0, \ldots, \kr-1$,  
while the split-vertex $\kv_{\kr}$ is generated from $\kz_{\kr}$. 
\end{enumerate}
We let $\widehat{\sawpaths} \subseteq \sawpaths$ be the image set of  $\BBijection$.  Let $\widehat{T}$ be the 
subtree  of $\SMTu$ induced by the vertices that correspond to the walks in $\widehat{\sawpaths}$.  We showed in 
the proof of \Cref{lemma:Subtree4TSawU} that we can identify $\SMTPu$ as a subtree of $\SMTu$,  in particular,  we identify 
$\SMTPu$ as the subtree $\widehat{T}$ of $\SMTu$,  using  $\BBijection$.  Every vertex in $\SMTPu$ corresponds 
to two walks, one walk in $\gext{G}{\kP}$ and another  in $G$, e.g.,  the walk $\kW\in \sawpaths_{\kP}$ and the 
walk $\BBijection(\kW)\in \widehat{\sawpaths}\subseteq \sawpaths$.

Furthermore,  we apply the $\Tsaw$-construction on $\mu^{\Lambda,\tau}_G$ and obtain the Gibbs distribution $\mu^{M,\sigma}_{\SMTu}$ 
for tree $\SMTu=\Tsaw(G,\ku)$. We proceed to identify the walks in $\sawpaths$ that correspond to the vertices in $M$. 
Specifically, consider the following subsets of $\sawpaths$,
\begin{enumerate}[(i)]
\item set $\sawpaths^{\Lambda}\subseteq \sawpaths$ consists of all self-avoiding walks whose last vertex is in $\Lambda$,
\item set $\sawpaths^{(b)}\subseteq \sawpaths$ consists of all walks $\kW=\kv_0, \ldots, \kv_{\kr} \in \sawpaths$ such that $\kv_0, \ldots, \kv_{\kr-1}$ is self-avoiding and
$\kv_{\kr}$  repeats, while  $\kW\cap \kP\neq \emptyset$,  
\item set $\sawpaths^{(c)}\subseteq \sawpaths$ consists of all walks $\kW=\kv_0, \ldots, \kv_{\kr} \in \sawpaths$ such that $\kv_0, \ldots, \kv_{\kr-1}$ is self-avoiding and
$\kv_{\kr}$  repeats, while  $\kW\cap \kP= \emptyset$. 
\end{enumerate}
The $\Tsaw$-construction implies that set $M$ consists of the vertices that correspond to  the walks in the three aforementioned categories. 

Also,   we use the $\Tsaw$-construction for the $\kP$-extension of $\mu^{\Lambda,\tau}_G$  and get  the Gibbs distribution 
$\mu^{\kL,\eta}_{\SMTPu}$ for tree $\SMTPu=\Tsaw(\gext{G}{\kP},\ku)$. 
We proceed to identify the walks in $\sawpaths_{\kP}$ that correspond to the vertices in $\kL$. 
Consider the following subsets of $\sawpaths_{\kP}$, 
\begin{enumerate}[(i)]
\setcounter{enumi}{3}
\item set $\sawpaths^{\Lambda}_{\kP}\subseteq \sawpaths_{\kP}$ consist of all self-avoiding walks whose last vertex is in $\Lambda$,
\item set $\sawpaths^{(b)}_{\kP}\subseteq \sawpaths_{\kP}$ consists of all self-avoiding walks whose last vertex is in $\ssplit_{\kP}$, 
\item set $\sawpaths^{(c)}_{\kP}\subseteq \sawpaths_{\kP}$ consists of all walks $\kW=\kv_0, \ldots, \kv_{\kr} \in \sawpaths$ 
such that $\kv_0, \ldots, \kv_{\kr-1}$ is self-avoiding and
$\kv_{\kr}$  repeats, while  $\kW\cap \kP\neq \emptyset$. 
\end{enumerate}
It is not hard to verify that set $\kL$ consists of the vertices that correspond to the walks of the above three categories.

We use all the above to prove statements \ref{statement:lemma:BoundaryTVsBoundaryTPFromUA}-\ref{statement:lemma:BoundaryTVsBoundaryTPFromUD}
of \Cref{lemma:BoundaryTVsBoundaryTPFromU}.

We start with statement \ref{statement:lemma:BoundaryTVsBoundaryTPFromUA}.   For each vertex 
$\kv\in M\cap \kL$, there are two  alternatives. In the first one, vertex $\kv$ corresponds to a walk 
$\kW\in \sawpaths^{\Lambda}_{\kP}$ and $\BBijection(\kW)\in \sawpaths^{\Lambda}$. In the second 
alternative, vertex $\kv$ corresponds to a walk $\kW\in \sawpaths^{(c)}_{\kP}$ and  $\BBijection(\kW)\in \sawpaths^{(c)}$.

To see  the above,  note that for each walk $\kW\in \sawpaths^{\Lambda}_{\kP}\cup\sawpaths^{(c)}_{\kP}$, the definition
of  $\BBijection$ implies that we should have  $\BBijection(\kW)\in \sawpaths^{\Lambda}\cup \sawpaths^{(c)}$.  
On the other hand, for each walk $\kW\in \sawpaths^{(b)}_{\kP}$ we have 
$\BBijection(\kW)\notin (\sawpaths^{\Lambda}\cup \sawpaths^{(b)} \cup\sawpaths^{(c)})$.  This is because 
$\BBijection(\kW)$ is a self-avoiding walk in $G$ from $\ku$ to a vertex in $\kP$. Hence, every
vertex $\kv$ which corresponds  to walk $\kW\in \sawpaths^{(b)}_{\kP}$ does not belong to $\kM$. 
Also, for every $\kW\in \sawpaths^{(b)}$,   we have that $\BBijection^{-1}(\kW)$ is not defined, 
implying  that the vertex $\kv$ which corresponds  to walk $\kW$ does not belong to $\SMTPu$;
hence it does not belong  to $\kL$.

Statement \ref{statement:lemma:BoundaryTVsBoundaryTPFromUA} follows by noting that 
the definition of the $\Tsaw$-construction implies that all vertices in $M\cap \kL$ take on the same 
 pinning, i.e.,   there are  no copies of split-vertices in this intersection.

As far as statement \ref{statement:lemma:BoundaryTVsBoundaryTPFromUB} is concerned, note 
that each vertex $\kv\in \cpT \cap \SMTPu$ can only
correspond to a self-avoiding walk $\kW\in \sawpaths$ which ends at a vertex in $\kP$. The definition of 
$\BBijection$ implies that $\BBijection^{-1}(\kW)$ is well-defined and corresponds to  a self-avoiding walk in 
$\sawpaths_{\kP}$ whose last vertex is  either in $\kP$ or in $\ssplit_{\kP}$. Hence, it follows from the
definition of set $\cpTP$ that $\kv\in \cpTP$. This proves \ref{statement:lemma:BoundaryTVsBoundaryTPFromUB}.

We proceed to prove statement \ref{stat:C-lemma:BoundaryTVsBoundaryTPFromU}. The previous discussion
shows that  each vertex  $\kv\in \kL\setminus M$  corresponds to a walk $\kW\in \sawpaths^{(b)}_{\kP}$. 
Furthermore, the definition of $\BBijection$ implies that the last vertex of walk $\BBijection(\kW)$ is in $\kP$. Hence, we have 
$\kv\in \cpT$. We conclude that $\kL\setminus M\subseteq \cpT$.

If vertex $\kv\in M\cap \SMTPu$ corresponds to a walk $\kW\in \sawpaths^{\Lambda}$, then the definition 
of $\BBijection$ implies that $\BBijection^{-1}(\kW)\in \sawpaths^{\Lambda}_{\kP}$; hence $\kv\in \kL$. Similarly, if vertex
$\kv\in M\cap \SMTPu$ corresponds to a walk $\kW\in \sawpaths^{(c)}$, then it also corresponds to walk 
$\BBijection^{-1}(\kW)\in \sawpaths^{(c)}_{\kP}$; hence $\kv\in \kL$. 
We conclude that $M\cap \SMTPu\subseteq \kL$ 
by noting that there is no vertex $\kv\in M\cap \SMTPu$ which corresponds to a walk $\kW\in \sawpaths^{(b)}$
(this was also noted in  the proof of \ref{statement:lemma:BoundaryTVsBoundaryTPFromUA}).

The two conclusions that $\kL\setminus M\subseteq \cpT$ and that $M\cap \SMTPu\subseteq \kL$
imply that statement \ref{stat:C-lemma:BoundaryTVsBoundaryTPFromU} is true.

We proceed to prove statement \ref{statement:lemma:BoundaryTVsBoundaryTPFromUD}. This follows directly by recalling 
that all vertices in $T\setminus \SMTPu$ are descendants of the vertices in $\cpT$. Hence, after pinning the vertices in $\kL\cup \cpT$, 
the vertices in $T\setminus \SMTPu$ have no effect on the configurations of the vertices in $T\cap \SMTPu$.

All the above conclude the proof of \Cref{lemma:BoundaryTVsBoundaryTPFromU}. \hfill $\square$

\subsection{{Proof of \Cref{proposition:CINorm2VsCLNorm2}}}\label{sec:proposition:CINorm2VsCLNorm2}

Before proving \Cref{proposition:CINorm2VsCLNorm2}, we present two useful results.

\begin{claim}\label{claim:2NormBoundforVVsEdge-AAA}
We have $\norm{ \VToEdge_{\kk}}{2} \leq \maxDeg^{\kk/2}$ and $ \norm{ \EdgeToV_{\kk}}{2}\leq \maxDeg^{\kk/2}$.
\end{claim}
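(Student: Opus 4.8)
\textbf{Proof plan for Claim~\ref{claim:2NormBoundforVVsEdge-AAA}.}

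The plan is to bound each operator norm by controlling the row and column sums of the (nonnegative, zero-one) matrices $\VToEdge_{\kk}$ and $\EdgeToV_{\kk}$, and then to invoke the standard fact that for any matrix $\UpM$ one has $\norm{\UpM}{2}\le \sqrt{\norm{\UpM}{1}\cdot\norm{\UpM}{\infty}}$, where $\norm{\UpM}{1}$ is the maximum absolute column sum and $\norm{\UpM}{\infty}$ is the maximum absolute row sum. Recall from \eqref{def:Vertex2EdgeMatrixEntryBB} that $\VToEdge_{\kk}(\kr,\kP)=\Ind\{\text{$\kr$ is the starting vertex of $\kP$}\}$, and from \eqref{def:Edge2VertexMatrixEntryBB} that $\EdgeToV_{\kk}(\kP,\kr)=\Ind\{\text{$\kr$ is the last vertex of $\kP$}\}$. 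Since $\EdgeToV_{\kk}$ is (up to the walk-reversal relabelling $\kP\mapsto\kP^{-1}$) the transpose of $\VToEdge_{\kk}$, and $\norm{\UpM}{2}=\norm{\MTR{\UpM}}{2}$, it suffices to analyse one of them, say $\VToEdge_{\kk}$, and the bound for the other follows.

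First I would compute the row sums of $\VToEdge_{\kk}$. The row indexed by a vertex $\kr\in V\setminus\Lambda$ has a $1$ in column $\kP$ exactly when $\kP$ is a self-avoiding walk of length $\kk$ in $G$ starting at $\kr$; the number of such walks is at most $\maxDeg\cdot(\maxDeg-1)^{\kk-1}\le \maxDeg^{\kk}$, since there are at most $\maxDeg$ choices for the first step and at most $\maxDeg-1$ for each subsequent step (no backtracking along a self-avoiding walk). Hence $\norm{\VToEdge_{\kk}}{\infty}\le \maxDeg^{\kk}$. Next, the column indexed by a walk $\kP\in\ExtV_{\Lambda,\kk}$ has a $1$ in exactly one row, namely the starting vertex of $\kP$; thus every column sum is exactly $1$, so $\norm{\VToEdge_{\kk}}{1}=1$. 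Combining, $\norm{\VToEdge_{\kk}}{2}\le\sqrt{\norm{\VToEdge_{\kk}}{1}\cdot\norm{\VToEdge_{\kk}}{\infty}}\le\sqrt{1\cdot\maxDeg^{\kk}}=\maxDeg^{\kk/2}$. For $\EdgeToV_{\kk}$, the roles of rows and columns swap: each row (indexed by a walk $\kP$) has exactly one $1$ (at the last vertex of $\kP$), so $\norm{\EdgeToV_{\kk}}{\infty}=1$; each column (indexed by a vertex $\kr$) has a $1$ for each length-$\kk$ self-avoiding walk ending at $\kr$, of which there are at most $\maxDeg^{\kk}$, so $\norm{\EdgeToV_{\kk}}{1}\le\maxDeg^{\kk}$. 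Therefore $\norm{\EdgeToV_{\kk}}{2}\le\maxDeg^{\kk/2}$ as well.

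There is no real obstacle here — the only point requiring a word of care is the elementary inequality $\norm{\UpM}{2}\le\sqrt{\norm{\UpM}{1}\norm{\UpM}{\infty}}$ (interpolation between the $\ell_1\to\ell_1$ and $\ell_\infty\to\ell_\infty$ operator norms, or equivalently Schur's test with weight vector $\mathbf{1}$), which is standard and could be cited from \cite{MatrixAnalysis}. One should also note that the crude bound "number of length-$\kk$ self-avoiding walks from a fixed vertex is $\le\maxDeg^{\kk}$" is more than adequate, so no sharper combinatorics is needed; this concludes the proof. \hfill$\square$
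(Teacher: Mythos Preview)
Your proof is correct, but it takes a slightly different route from the paper's. The paper computes the Gram matrix $\VToEdge_{\kk}\cdot\MTR{\VToEdge}_{\kk}$ explicitly and observes that it is diagonal (because each column of $\VToEdge_{\kk}$ has a single nonzero entry), with diagonal entries equal to the number of length-$\kk$ self-avoiding walks starting at each vertex, hence bounded by $\maxDeg^{\kk}$; this gives $\norm{\VToEdge_{\kk}}{2}^{2}=\norm{\VToEdge_{\kk}\cdot\MTR{\VToEdge}_{\kk}}{2}\le\maxDeg^{\kk}$ directly. Your argument instead uses the interpolation bound $\norm{\UpM}{2}\le\sqrt{\norm{\UpM}{1}\norm{\UpM}{\infty}}$ and the same row/column-sum bounds. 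Both are elementary and equally short; the paper's version is marginally more self-contained (no auxiliary inequality to cite), while yours is a nice illustration that the structural fact ``each column has a unique nonzero entry'' already forces $\norm{\UpM}{1}=1$, from which the bound drops out without ever forming the Gram matrix.
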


\begin{claim}\label{claim:2Norm4BoundedNoOfIL-AAA}
We have $\norm{\infmatrix^{\Lambda,\tau}_{G, <2\kk}}{2} \leq \frac{\maxDeg^{2\kk+1}-1}{\maxDeg-1}$.
\end{claim}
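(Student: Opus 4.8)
\textbf{Proof plan for \Cref{claim:2Norm4BoundedNoOfIL-AAA}.}

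The statement to prove is the bound $\norm{\infmatrix^{\Lambda,\tau}_{G, <2\kk}}{2} \leq \frac{\maxDeg^{2\kk+1}-1}{\maxDeg-1}$ on the spectral norm of the truncated influence matrix. Recall from \eqref{eq:DefOfCISmallerK} that $\infmatrix^{\Lambda,\tau}_{G, <2\kk}$ agrees with $\infmatrix^{\Lambda,\tau}_{G}$ on entries $(\ku,\kw)$ with $\dist_G(\ku,\kw)<2\kk$ and is zero elsewhere. The key structural observation is that this is a very sparse matrix: each row has at most $\frac{\maxDeg^{2\kk+1}-1}{\maxDeg-1}$ nonzero entries, since the number of vertices within distance $2\kk-1$ of a fixed vertex $\kw$ in a graph of maximum degree $\maxDeg$ is at most $1+\maxDeg+\maxDeg^2+\cdots+\maxDeg^{2\kk-1} = \frac{\maxDeg^{2\kk}-1}{\maxDeg-1}$, and in fact counting up to distance $2\kk$ gives the stated bound with room to spare. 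Moreover every entry of an influence matrix is bounded in absolute value by $1$, because each entry is a difference of two probabilities (see \eqref{def:InfluenceMatrix}).

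The plan is therefore to bound the spectral norm using the elementary inequality $\norm{\UpM}{2} \le \sqrt{\norm{\UpM}{1}\cdot \norm{\UpM}{\infty}}$, or even more simply just to use that $\norm{\UpM}{2}\le \norm{\UpM}{\infty}$ when the matrix is symmetric, combined with the interlacing/monotonicity facts. Actually the cleanest route: first, $\norm{\infmatrix^{\Lambda,\tau}_{G, <2\kk}}{2} \le \norm{\abs{\infmatrix^{\Lambda,\tau}_{G, <2\kk}}}{2}$ is not generally true, so instead I would argue directly. By \Cref{claim:InfSymmetrisation}, the matrix $\UpM\cdot\infmatrix^{\Lambda,\tau}_G\cdot\UpM^{-1}$ is symmetric, but truncation may break this, so it is simpler to bound $\norm{\cdot}{\infty}$ and $\norm{\cdot}{1}$ separately. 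For any $\kw\in V\setminus\Lambda$, the absolute row sum is $\sum_{\ku}\abs{\infmatrix^{\Lambda,\tau}_{G,<2\kk}(\kw,\ku)} = \sum_{\ku:\dist_G(\kw,\ku)<2\kk}\abs{\infmatrix^{\Lambda,\tau}_G(\kw,\ku)} \le \abs{\{\ku : \dist_G(\kw,\ku)<2\kk\}}\cdot 1 \le \frac{\maxDeg^{2\kk+1}-1}{\maxDeg-1}$, using $\abs{\infmatrix^{\Lambda,\tau}_G(\kw,\ku)}\le 1$. Hence $\norm{\infmatrix^{\Lambda,\tau}_{G,<2\kk}}{\infty}\le \frac{\maxDeg^{2\kk+1}-1}{\maxDeg-1}$, and since the relation $\dist_G(\kw,\ku)<2\kk$ is symmetric in $\kw,\ku$, the same bound holds for $\norm{\cdot}{1}$ (the absolute column sums). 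Then $\norm{\infmatrix^{\Lambda,\tau}_{G,<2\kk}}{2}\le \sqrt{\norm{\cdot}{1}\norm{\cdot}{\infty}} \le \frac{\maxDeg^{2\kk+1}-1}{\maxDeg-1}$.

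I do not expect any real obstacle here; the only points requiring a line of justification are: (i) that $\abs{\infmatrix^{\Lambda,\tau}_G(\kw,\ku)}\le 1$, which is immediate from the definition as a difference of two conditional marginal probabilities each lying in $[0,1]$; (ii) the ball-size count $\abs{\{\ku:\dist_G(\kw,\ku)\le 2\kk\}}\le \sum_{j=0}^{2\kk}\maxDeg^j = \frac{\maxDeg^{2\kk+1}-1}{\maxDeg-1}$ for a graph of maximum degree $\maxDeg$ (and in particular the strictly-smaller ball used above is no larger); and (iii) the standard interpolation bound $\norm{\UpM}{2}\le\sqrt{\norm{\UpM}{1}\norm{\UpM}{\infty}}$ for the induced operator norms, which is a textbook fact (e.g.\ via the Riesz--Thorin interpolation theorem, or directly from $\norm{\UpM}{2}^2 = \spradius(\MTR{\UpM}\UpM)\le\norm{\MTR{\UpM}\UpM}{\infty}\le\norm{\MTR{\UpM}}{\infty}\norm{\UpM}{\infty}=\norm{\UpM}{1}\norm{\UpM}{\infty}$). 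Assembling these three observations gives the claim directly.
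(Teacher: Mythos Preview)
Your proof is correct and follows essentially the same approach as the paper: bound entries by $1$, count the ball of radius $\le 2\kk$, and convert a row-sum bound into a $\norm{\cdot}{2}$ bound. The paper does the last step slightly differently, dominating $\abs{\infmatrix^{\Lambda,\tau}_{G,<2\kk}}$ entrywise by the symmetric $0/1$ matrix $\Adjacency^{(2\kk)}(\kw,\ku)=\Ind\{\dist_G(\kw,\ku)\le 2\kk\}$ and then using $\norm{\cdot}{2}\le\norm{\cdot}{\infty}$ for symmetric matrices, whereas you use $\norm{\UpM}{2}\le\sqrt{\norm{\UpM}{1}\norm{\UpM}{\infty}}$; both work equally well here. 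One small correction: your aside that $\norm{\UpM}{2}\le\norm{\abs{\UpM}}{2}$ ``is not generally true'' is mistaken --- it \emph{is} always true (the paper records it as \Cref{lemma:AbsoluteVs2Norm}), and indeed the paper's argument uses exactly this step; your detour around it is harmless but unnecessary.
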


\noindent
In what follows, we abbreviate $\infmatrix^{\Lambda,\tau}_{G}$, $\infmatrix^{\Lambda,\tau}_{G, <2\kk}$, $\ExtdInfMatrixF$,
 $\VToEdge_{\kk}$ and $\EdgeToV_{\kk}$ to 
$\infmatrix$, $\infmatrix_{<2\kk}$, $\ExtdInfMatrix$, $\VToEdge$ and $\EdgeToV$, respectively.
From \Cref{thrm:InflVsExtInfl} we have that 
\begin{align}
\norm{\infmatrix }{2} & = \norm{\VToEdge \cdot \left( \ExtdInfMatrix\circ \SymWeightM \right )\cdot \EdgeToV+ \infmatrix_{<2\kk} }{2} \nonumber \\
&\leq \norm{ \VToEdge}{2} \cdot \norm{ \EdgeToV}{2} \cdot \norm{\ExtdInfMatrix\circ \SymWeightM }{2}
+ \norm{ \infmatrix_{<2\kk}}{2} \nonumber \\
&\leq \SymWeightM_{\rm max}\cdot \norm{ \VToEdge }{2} \cdot \norm{ \EdgeToV}{2} \cdot \norm{ \abs{ \ExtdInfMatrix } }{2}
+ \norm{ \infmatrix_{<2\kk}}{2} \label{eq:prop:IVsJ2NormStepAA-AAA} \enspace, 
\end{align}
where $\SymWeightM_{\rm max}=\max_{\kP,\kQ}\{\abs{\SymWeightM(\kP,\kQ )}\}$. \Cref{thrm:InflVsExtInfl} implies that $\SymWeightM_{\rm max}$
is a constant.

\Cref{proposition:CINorm2VsCLNorm2} follows from \eqref{eq:prop:IVsJ2NormStepAA-AAA} and
\Cref{claim:2NormBoundforVVsEdge-AAA,claim:2Norm4BoundedNoOfIL-AAA}, while we set 
$C_1=\frac{\maxDeg^{2\kk+1}-1}{\maxDeg-1}$
and $C_2=\SymWeightM_{\rm max}\cdot \maxDeg^{\kk}$.
\hfill $\square$

\begin{proof}[Proof of \Cref{claim:2NormBoundforVVsEdge-AAA}]
We start with $\norm{ \VToEdge_{\kk}}{2}$. 
Consider the product $\VToEdge_{\kk} \cdot \MTR{\VToEdge}_{\kk}$, where $\MTR{\VToEdge}_{\kk}$ is the matrix transpose of $\VToEdge_{\kk}$. 
Matrix $\VToEdge_{\kk} \cdot \MTR{\VToEdge}_{\kk}$ is $(V\setminus \Lambda)\times (V\setminus \Lambda)$, while for  any $\ku,\kw\in V\setminus \Lambda$, we have 
\begin{align}
\left (\VToEdge_{\kk} \cdot \MTR{\VToEdge}_{\kk} \right)(\ku,\kw) 
& = \sum\nolimits_{\kP \in \ExtV_{\Lambda,\kk}} \VToEdge_{\kk}(\ku, \kP)\times \MTR{\VToEdge}_{\kk}(\kP, \kw) \nonumber\\
& = \sum\nolimits_{\kP \in \ExtV_{\Lambda,\kk}} \Ind\{\textrm{$\ku$ is the starting vertex in $\kP$} \} \times \Ind\{\textrm{$\kw$ is the starting vertex in $\kP$} \} \nonumber\\
& = \Ind\{ \kw=\ku\} \times \sum\nolimits_{\kP\in \ExtV_{\Lambda,\kk}} \Ind\{\textrm{$\ku$ is the starting vertex in $\kP$} \} \nonumber\\
&\leq \Ind\{ \kw=\ku\} \times \maxDeg^{\kk} \enspace. \nonumber
\end{align}
We conclude that $\VToEdge_{\kk} \cdot \MTR{\VToEdge}_{\kk}$ is a non-negative and diagonal matrix. For $\kw\in V\setminus \Lambda$, we have 
$(\VToEdge_{\kk} \cdot \MTR{\VToEdge}_{\kk} )(\kw,\kw) \leq \maxDeg^{\kk}$. 
Hence, we have $ \norm{ \VToEdge_{\kk} \cdot \MTR{\VToEdge}_{\kk}}{2} \leq \maxDeg^{\kk}$, which implies that  $\norm{ \VToEdge_{\kk}}{2} \leq \maxDeg^{\kk/2}$.
Working similarly, we get $ \norm{ \EdgeToV_{\kk}}{2}\leq \maxDeg^{\kk/2}$.
 \Cref{claim:2NormBoundforVVsEdge-AAA} follows.
\end{proof}

\begin{proof}[Proof of \Cref{claim:2Norm4BoundedNoOfIL-AAA}]
Let $\Adjacency^{(2\kk)}$ be a  $(V\setminus \Lambda)\times (V\setminus \Lambda)$,  zero-one matrix  such that 
for any $\kw,\ku\in V\setminus \Lambda$,  we have
\begin{align}
\Adjacency^{(2\kk)}(\kw,\ku)&=\Ind\{\dist_{G}(\kw,\ku)\leq 2\kk\}\enspace. 
\end{align}
Noting that each entry in $\infmatrix^{\Lambda,\tau}_{G, <2\kk}$ is in the interval $[-1,1]$, we have
\begin{align}\label{eq:AbsInfLes2kVsAGPow2k}
\abs{\infmatrix^{\Lambda,\tau}_{G, <2\kk}} &\leq \Adjacency^{(2\kk)}\enspace,
\end{align}
where the inequality is entrywise. We have
\begin{align}\label{eq:FinalEq2Norm4BoundedNoOfIL-AAA}
\norm{\infmatrix^{\Lambda,\tau}_{G, <2\kk}}{2}&\leq \norm{ \abs{\infmatrix^{\Lambda,\tau}_{G, <2\kk}} }{2} \leq 
\norm{\Adjacency^{(2\kk)}}{2} \leq \norm{\Adjacency^{(2\kk)}}{\infty}\enspace.  
\end{align}
The first inequality above is elementary, e.g. see \Cref{lemma:AbsoluteVs2Norm}. The second inequality follows from 
\eqref{eq:AbsInfLes2kVsAGPow2k}.  The last inequality follows by noting that matrix $\Adjacency^{(2\kk)}$ is symmetric. 

Since $G$ is of maximum degree $\maxDeg$, we have that $\norm{\Adjacency^{(2\kk)}}{\infty}\leq \frac{\maxDeg^{2\kk+1}-1}{\maxDeg-1}$. 
The claim follows by plugging the bound on $\norm{\Adjacency^{(2\kk)}}{\infty}$ into \eqref{eq:FinalEq2Norm4BoundedNoOfIL-AAA}.
\end{proof}

\spreadpoint
\newpage

\spreadpoint
\newpage
\appendix

\section{Some Standard Proofs}

\subsection{Proof of \Cref{claim:InfSymmetrisation}}\label{sec:claim:InfSymmetrisation}

Here we restate \Cref{claim:InfSymmetrisation} and provide its proof.

\Isymsym*

\begin{proof}
Let $\CovMatrix^{\Lambda,\tau}_G$ be the $(V\times \Lambda)\times (V\times \Lambda)$ covariance matrix defined with respect to $\mu^{\Lambda, \tau}$. That is, for any $u,v\in V\setminus \Lambda$ we have 
\begin{align}\nonumber
\CovMatrix^{\Lambda,\tau}_G(\kv,\ku) &=\mu^{\Lambda, \tau}_{(\kv,\ku)}((+1,+1))-\mu^{\Lambda, \tau}_{\kv}(+1)\cdot \mu^{\Lambda, \tau}_{\ku}(+1)\enspace. 
\end{align}
A straightforward observation is that $\CovMatrix^{\Lambda,\tau}_G$ is symmetric. W.l.o.g. in this proof, assume that 
the diagonal entries of $\CovMatrix^{\Lambda,\tau}_G$ are non-zero. 
Note that for any $\kv\in V\setminus \Lambda$ we have that
\begin{align}\label{eq:UpMVsCovMat}
\UpM(\kv,\kv)= \sqrt{\CovMatrix^{\Lambda,\tau}_G(\kv,\kv)}\enspace. 
\end{align}
Furthermore, it is standard to show, e.g. see \cite{VigodaSpectralInd}, that $\CovMatrix^{\Lambda,\tau}_G$ and 
$\infmatrix^{\Lambda,\tau}_G$ satisfy that 
$\infmatrix^{\Lambda,\tau}_G(\ku,\kv)= \frac{\CovMatrix^{\Lambda,\tau}_G(\ku,\kv)}{\CovMatrix^{\Lambda,\tau}_G(\ku,\ku)}.$ 
Since $\CovMatrix^{\Lambda,\tau}_G$ is symmetric, this relation implies that
\begin{align}\nonumber
\CovMatrix^{\Lambda,\tau}_G(\ku,\ku) \cdot \infmatrix^{\Lambda,\tau}_G(\ku,\kv) &
=\CovMatrix^{\Lambda,\tau}_G(\kv,\kv) \cdot \infmatrix^{\Lambda,\tau}_G(\kv,\ku) \enspace. 
\end{align}
Then, a simple rearrangement and \eqref{eq:UpMVsCovMat} imply that 
\begin{align}\nonumber
\textstyle \frac{\UpM(\ku,\ku)}{\UpM(\kv,\kv) } \cdot \infmatrix^{\Lambda,\tau}_G(\ku,\kv)& 
\textstyle = \frac{\UpM(\kv,\kv)}{\UpM(\ku,\ku) } \cdot \infmatrix^{\Lambda,\tau}_G(\kv,\ku) \enspace. 
\end{align}
The l.h.s. in the above equation corresponds to the entry 
$(\UpM\cdot \infmatrix^{\Lambda,\tau}_G \cdot \UpM^{-1})(\ku,\kv)$, while the r.h.s. corresponds to 
$(\UpM\cdot \infmatrix^{\Lambda,\tau}_G \cdot \UpM^{-1})(\kv,\ku)$.
The claim follows. 
\end{proof}

\subsection{Proof of \Cref{claim:SigmaLVsPathNumber}}\label{sec:claim:SigmaLVsPathNumber}

We restate \Cref{claim:SigmaLVsPathNumber} before presenting its proof.
\hsingpathbound*

\begin{proof}
In what follows, we abbreviate $\NBMatrix$ to $\NBMatrixE$.

Let the vector ${\bf \kx} \in \mathbb{R}^{\ExtV_k}$ be such that for any $\kR\in \ExtV_k$ we have 
\begin{align}
{\bf \kx}(\kR)&=\frac{1}{\sqrt{2}}\left( \Ind\{\kR=\kP\} + \Ind\{\kR=\kQ^{-1}\} \right)\enspace.  
\end{align}
We have that 
\begin{align} \nonumber
{\bf \kx}^T (\NBMatrixE^{\kell} \cdot \Invol){\bf \kx}&=\frac{1}{2}\left((\NBMatrixE^{\kell} \cdot \Invol)(\kP,\kQ^{-1}) 
+(\NBMatrixE^{\kell} \cdot \Invol)(\kQ^{-1}, \kQ^{-1})+(\NBMatrixE^{\kell} \cdot \Invol)(\kP,\kP)+
 (\NBMatrixE^{\kell} \cdot \Invol)(\kQ^{-1}, \kP) \right)\enspace. 
\end{align}
Recalling the definition of matrix $\NBMatrixE^{\kell}\cdot \Invol$, we have 
\begin{align} \nonumber
\NBMatrixE^{\kell}(\kP,\kQ) =(\NBMatrixE^{\kell}\cdot \Invol)(\kP,\kQ^{-1})=(\NBMatrixE^{\kell} \cdot \Invol)(\kQ^{-1},\kP) \enspace,
\end{align}
the last equation follows since $\NBMatrixE^{\kell} \cdot \Invol$ is symmetric. 
All the above imply that
\begin{align} \nonumber
\NBMatrixE^{\kell}(\kP,\kQ) \leq {\bf \kx}^T \cdot (\NBMatrixE^{\kell} \cdot \Invol) \cdot {\bf \kx} 
\leq \norm{ \NBMatrixE^{\kell} \cdot \Invol}{2}=\sigma_{\kk,\kell} \enspace.
\end{align}
The last inequality above follows from the fact that $\NBMatrixE^{\kell}\cdot \Invol$ is symmetric, with maximum
eigenvalue $\sigma_{\kk,\kell}$. 

The claim follows. 
\end{proof}

\subsection{Proof of \Cref{lemma:SingSequenConv}}\label{sec:lemma:SingSequenConv}

We restate \Cref{lemma:SingSequenConv} before presenting its proof.
\hsingseqnondec*

\begin{proof}
In what follows, we abbreviate $\NBMatrix$ to $\NBMatrixE$

Let the integers $t=\lfloor \frac{\kell}{N} \rfloor$ and $\kr\geq 0$ be such that $\kell= N\cdot t+\kr$.
We have 
\begin{align}\label{eq:Base4lemma:SingSequenConv}
\left( \hsingular_{\kk,\kell} \right)^{1/\kell}& = \nnorm{  \NBMatrixE^{\kell}}{\frac{1}{\kell}}{2} \leq 
\nnorm{  \NBMatrixE^{\kell-\kr}}{\frac{1}{\kell}}{2} \cdot \nnorm{  \NBMatrixE^{\kr}}{\frac{1}{\kell}}{2} \leq 
\nnorm{  \NBMatrixE^{\kell-\kr}}{\frac{1}{\kell-\kr}}{2} \cdot \nnorm{ \NBMatrixE^{\kr}}{\frac{1}{\kell}}{2} \enspace. 
\end{align}
Furthermore, noting that $\kell-\kr=N\cdot t$, we get 
\begin{align}
\nnorm{  \NBMatrixE^{\kell-\kr}}{\frac{1}{\kell-\kr}}{2} &= \nnorm{ \NBMatrixE^{N\cdot t}}{\frac{1}{N\cdot t}}{2} \leq 
\nnorm{ \NBMatrixE^N}{\frac{1}{N}}{2} = (\hsingular_{\kk,N})^{\frac{1}{N}}\enspace. 
\end{align}
Plugging the above into \eqref{eq:Base4lemma:SingSequenConv} we have
\begin{align}\nonumber
\left( \hsingular_{\kk,\kell} \right)^{1/\kell}& \leq (\hsingular_{\kk,N})^{\frac{1}{N}} 
\cdot \nnorm{  \NBMatrixE^{\kr}}{\frac{1}{\kell}}{2} \enspace.
\end{align}
Noting that $\nnorm{ \NBMatrixE^{\kr}}{\frac{1}{\kell}}{2}\leq (\maxDeg-1)^{\frac{\kr}{\kell}}$
and recalling that $\maxDeg$ is bounded, there is fixed $\ell_0$ such that 
for $\kell>\ell_0$ we have $\nnorm{ \NBMatrixE^r}{\frac{1}{\kell}}{2}\leq (1+\kz)$.
\Cref{lemma:SingSequenConv} follows.
\end{proof}

\subsection{{Proof of \Cref{lemma:IsingInfNormBound}}}\label{sec:lemma:IsingInfNormBound}
We restate \Cref{lemma:IsingInfNormBound} and provide its proof.

\IsingInfNormBound*

\begin{proof}
It suffices to show that any $d>0$ and any 
${\bf y}\in [-\infty, +\infty]^d$ we have 
\begin{align}\label{eq:Target4thrm:SI4Ising}
\norm{\nabla \logtrecur_d ({\bf y})}{\infty}
&\leq \frac{|\beta-1|}{\beta +1} \enspace.
\end{align}
Before showing that \eqref{eq:Target4thrm:SI4Ising} is true, let us show how it implies 
\eqref{eq:lemma:IsingInfNormBound}. That is, we show that for any $\beta \in \UnIsing(R,\varepsilon)$,
we have that $\frac{|\beta-1|}{\beta +1}\leq \frac{1-\zeta}{R}$.

Consider the function $f(\kx)=\frac{|\kx-1|}{\kx+1}$ defined on the closed interval 
$\left[\frac{R-1}{R+1}, \frac{R+1}{R-1} \right]$. 
Taking derivatives, it is elementary to verify that $f(\kx)$ is increasing in the interval 
$1< \kx < \frac{R+1}{R-1}$, while it is decreasing in the interval $\frac{R-1}{R+1}< \kx <1$. 
Furthermore, noting that $f(1)=0$, it is direct that 
{\small
\begin{align}\nonumber
\sup\nolimits_{\beta \in \UnIsing(R,\varepsilon)}f(\beta)= 
f\left(\frac{R-1+\varepsilon}{R+1-\varepsilon}\right)= f\left(\frac{R+1-\varepsilon}{R-1+\varepsilon}\right)
=\frac{1-\varepsilon}{R} \enspace.
\end{align}
}
We conclude that \eqref{eq:Target4thrm:SI4Ising} implies \eqref{eq:lemma:IsingInfNormBound}.
It remains to show that \eqref{eq:Target4thrm:SI4Ising} is true.

Since $\frac{\partial }{\partial \kx_{\ki}}\logtrecur_d(\kx_1, \ldots, \kx_d)=\dlogtrecur(\kx_{\ki})$, 
it suffices to show that for any $\kx\in [-\infty, +\infty]$ we have that
{\small 
\begin{align}\label{eq:TargetB4thrm:SI4Ising}
\abs{ \dlogtrecur(\kx)} &\leq \frac{\abs{1-\beta } }{1+\beta} \enspace.
\end{align}
}
For the distribution we consider here, the function $\dlogtrecur(\cdot)$ is given by 
\begin{align}
\dlogtrecur(\kx) & =-\frac{(1-\beta^2) \exp(\kx)}{(\beta \exp(\kx)+1)(\exp(\kx)+\beta)} \enspace.
\end{align}
From the above we get that
{\small
\begin{align}
\abs{\dlogtrecur(\kx)} & = \frac{\abs{1-\beta^2}\exp(\kx)}{(b \exp(\kx) +1)(b+ \exp(\kx))} \ =
\ \frac{\abs{1-\beta^2 }}{\beta^2+1 +\beta (\exp(-\kx)+\exp(\kx) ) } \enspace. \nonumber
\end{align}
}
It is straightforward to verify that $\phi(\kx)=e^{-\kx}+e^{\kx}$ is convex and for any $\kx\in [-\infty, +\infty]$
and attains its minimum at $\kx=0$, i.e., we have that $\phi(\kx)\geq 2$. Consequently, we get 
{\small 
\begin{align} \nonumber 
\abs{\dlogtrecur(\kx)} & \leq \frac{\abs{1-\beta^2}}{\beta^2+1 +2\beta } \ = \   \frac{\abs{1-\beta}}{1+\beta} \enspace,
\end{align}
}
for any $\kx\in [-\infty, +\infty]$.
The above proves that \eqref{eq:TargetB4thrm:SI4Ising} is true and concludes our proof.
\end{proof}

\subsection{{Proof of \Cref{claim:InterpretGoodPotentialHC}}} \label{sec:claim:InterpretGoodPotentialHC}
We restate \Cref{claim:InterpretGoodPotentialHC} and provide its proof.

\ElaborateGoodPotenHC*

\begin{proof}
It elementary to verify that the function $\dcritical(x)$ is decreasing in $x$. This implies that for any $\lambda\leq (1-\varepsilon)\lcritical(R)$, 
we have $\dcritical(\lambda)\geq \dcritical(\lcritical(R))=R$. Particularly, this implies that there is $0<z<1$, which only depends on 
$\varepsilon$ such that $\dcritical(\lambda) \geq \frac{R}{(1-z)}$.
This proves the leftmost inequality in \eqref{eq:claim:InterpretGoodPotentialHC}.

As far as the rightmost inequality is concerned, we have 
\begin{align}\label{eq:Base4Righteq:claim:InterpretGoodPotentialHC}
\textstyle \frac{\lambda}{1+\lambda} \leq \lambda < \lcritical(R) \enspace.
\end{align}
The first inequality follows since $\lambda>0$, while the second follows since $\lambda<\lcritical(R)$.
 From the definition of $\lcritical(\cdot)$, we have 
{\small 
\begin{align}
\lcritical(R)& =\frac{R^{R}}{(R-1)^{(R+1)}}\ = \ 
\frac{1}{R}\left( 1+\frac{1}{R-1}\right)^{R+1} \ \leq \ \frac{1}{R}\exp\left(\frac{R+1}{R-1}\right) 
\ \leq \ \frac{e^3}{R} \enspace. 
\end{align}
}
For the one before the last inequality we use that $1+x\leq e^{x}$. For the last inequality we note that $\frac{R+1}{R-1}$ is decreasing in 
$R$, hence, for $R\geq 2$, we have that $\frac{R+1}{R-1}\leq 3$. 

Plugging the above bound into \eqref{eq:Base4Righteq:claim:InterpretGoodPotentialHC}, gives the rightmost inequality in
\eqref{eq:claim:InterpretGoodPotentialHC}. 
The claim follows. 
\end{proof}

\section{Standard linear algebra}\label{sec:PerronFrobeniusThrm}

For the matrix $\UpL \in \mathbb{R}^{N\times N}$ we follow the convention to call it non-negative, if all its entries are non-negative numbers, i.e., every entry $\UpL_{i,j}\geq 0$.

For an $N\times N$ matrix $\UpL$, we let $\eigenval_i(\UpL)$, for $i=1,\ldots, N$, denote the eigenvalues of $\UpL$ such that
$\eigenval_1(\UpL)\geq \eigenval_2(\UpL) \geq \ldots \geq \eigenval_N(\UpL)$.
Also, we let $\spectrum(\UpL)$ denote the set of distinct eigenvalues of $\UpL$. We also refer to $\spectrum(\UpL)$ 
as the {\em spectrum} of $\UpL$.

We define the {\em spectral radius} of $\UpL$, denoted as $\spradius(\UpL)$, to be the real number
such that 
\begin{align}\nonumber 
\spradius(\UpL)=\max_{}\{ |\eigenval| \ : \ \eigenval\in \spectrum(\UpL) \} \enspace.
\end{align}

It is a well-known result that the spectral radius of $\UpL$ is the greatest lower bound for all 
of its matrix norms, e.g. see Theorem 6.5.9 in \cite{MatrixAnalysis}. 
That is, letting $\enorm{\cdot} $ be  a matrix norm on $N\times N$ matrices, we have that
\begin{align}\label{eq:SpRadiusVsMNorm}
\spradius(\UpL) \leq \enorm{ \UpL}  \enspace.
\end{align}
Furthermore, we let $\MTR{\UpL}$ be the standard matrix transpose of $\UpL$. 
It is useful to mention that when $\UpL$ is symmetric, i.e., $\MTR{\UpL}=\UpL$,
we have that $\spradius(\UpL) = \norm{\UpL}{2}$.

For $\UpD, \UpB, \UpC \in \mathbb{R}^{ N \times N}$, we let $|\UpD|$ denote the matrix having entries $|\UpD_{i,j}|$.
 For the matrices $\UpB, \UpC$ we define $\UpB\leq \UpC$ to mean that $\UpB_{i,j}\leq \UpC_{i,j}$ for each $i$ and $j$.
The following is a folklore result e.g. see \cite{SIAM-LAlg,MatrixAnalysis}.

\begin{lemma}\label{lemma:MonotoneVsSRad}
For integer $N>0$, let $\UpD, \UpB\in \mathbb{R}^{N\times N}$. If $\abs{\UpD}\leq \UpB$, then 
$\spradius(\UpD) \leq \spradius (\abs{\UpD}) \leq \spradius(\UpB)$.
\end{lemma}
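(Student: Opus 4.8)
\textbf{Proof proposal for \Cref{lemma:MonotoneVsSRad}.}

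The plan is to establish the two inequalities $\spradius(\UpD)\le \spradius(\abs{\UpD})$ and $\spradius(\abs{\UpD})\le \spradius(\UpB)$ separately, both via Gelfand's formula for the spectral radius, $\spradius(\UpM)=\lim_{\kell\to\infty}\nnorm{\UpM^{\kell}}{1/\kell}{\infty}$, which holds for any square matrix (see \cite{SIAM-LAlg,MatrixAnalysis}). The key elementary facts I would use are: (i) for any two matrices $\UpM,\UpN$ of the same dimensions, $\abs{\UpM\UpN}\le \abs{\UpM}\abs{\UpN}$ entrywise, which follows from the triangle inequality applied to each entry $\abs{\sum_j \UpM_{i,j}\UpN_{j,k}}\le \sum_j \abs{\UpM_{i,j}}\abs{\UpN_{j,k}}$; (ii) iterating (i) gives $\abs{\UpM^{\kell}}\le \abs{\UpM}^{\kell}$ entrywise for all $\kell\ge 1$; (iii) if $\UpP\le \UpQ$ entrywise with both non-negative, then $\norm{\UpP}{\infty}\le\norm{\UpQ}{\infty}$, since the induced $\ell_\infty$ norm of a non-negative matrix is its maximum absolute row sum, and the row sums of $\UpP$ are dominated by those of $\UpQ$; (iv) $\norm{\UpM}{\infty}\le\norm{\abs{\UpM}}{\infty}$ trivially, as $\norm{\UpM}{\infty}$ is the maximum row sum of $\abs{\UpM}$.

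First I would prove $\spradius(\UpD)\le\spradius(\abs{\UpD})$. For each $\kell\ge 1$, combining (ii) and (iv),
\begin{align}\nonumber
\norm{\UpD^{\kell}}{\infty}\le\norm{\abs{\UpD^{\kell}}}{\infty}\le\norm{\abs{\UpD}^{\kell}}{\infty}\enspace,
\end{align}
where the last step uses (ii) together with (iii) (both $\abs{\UpD^{\kell}}$ and $\abs{\UpD}^{\kell}$ are non-negative and $\abs{\UpD^{\kell}}\le\abs{\UpD}^{\kell}$). Taking $\kell$-th roots and letting $\kell\to\infty$, Gelfand's formula gives $\spradius(\UpD)\le\spradius(\abs{\UpD})$. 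Next, for $\spradius(\abs{\UpD})\le\spradius(\UpB)$: since $\abs{\UpD}\le\UpB$ entrywise and both are non-negative, an easy induction shows $\abs{\UpD}^{\kell}\le\UpB^{\kell}$ entrywise for all $\kell\ge 1$ (the inductive step multiplies the two entrywise inequalities $\abs{\UpD}^{\kell-1}\le\UpB^{\kell-1}$ and $\abs{\UpD}\le\UpB$, using non-negativity to preserve the ordering under matrix multiplication). Then by (iii), $\norm{\abs{\UpD}^{\kell}}{\infty}\le\norm{\UpB^{\kell}}{\infty}$, and Gelfand's formula again yields $\spradius(\abs{\UpD})\le\spradius(\UpB)$. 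Chaining the two bounds completes the proof.

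I do not expect any serious obstacle here; the only point requiring a little care is the preservation of entrywise inequalities under repeated matrix multiplication, which rests crucially on all matrices involved being non-negative (so that no sign cancellations can reverse an inequality). Alternatively, one could cite the Perron–Frobenius theory directly — for a non-negative matrix $\UpB$ dominating a non-negative matrix $\abs{\UpD}$ entrywise, it is a standard consequence that $\spradius(\abs{\UpD})\le\spradius(\UpB)$ — but I prefer the self-contained Gelfand-formula argument since it also cleanly handles the first inequality involving the possibly-signed matrix $\UpD$, and it is the approach consistent with the references already cited in the excerpt.
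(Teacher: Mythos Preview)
Your proof is correct. The paper, however, does not actually prove this lemma: it simply states it as ``a folklore result'' with citations to \cite{SIAM-LAlg,MatrixAnalysis} and gives no argument. So there is nothing to compare against on the paper's side; your Gelfand-formula argument is a standard and fully self-contained way to establish the result, and it is consistent with the references the paper cites.
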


\begin{lemma}\label{lemma:AbsoluteVs2Norm}
For integer $N>0$, let $\UpD\in \mathbb{R}^{N\times N}$. We have that $\norm {\UpD}{2}\leq \norm{ \abs{ \UpD}}{2}$.
\end{lemma}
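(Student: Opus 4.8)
\textbf{Proof plan for Lemma~\ref{lemma:AbsoluteVs2Norm}.}
The statement is the standard fact that the spectral ($\ell_2 \to \ell_2$) norm of a matrix is dominated by the spectral norm of its entrywise absolute value. The plan is to exploit the variational characterisation $\norm{\UpD}{2} = \max_{\norm{{\bf x}}{2}=1} \norm{\UpD{\bf x}}{2}$ together with the triangle inequality applied entrywise. First I would fix an arbitrary vector ${\bf x} \in \mathbb{R}^N$ and let $\abs{{\bf x}}$ denote the vector with entries $\abs{{\bf x}_j}$, noting that $\norm{\abs{{\bf x}}}{2} = \norm{{\bf x}}{2}$. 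Then for each coordinate $i$,
\begin{align}\nonumber
\abs{(\UpD{\bf x})_i} = \abs{\sum\nolimits_{j} \UpD_{i,j}\, {\bf x}_j} \leq \sum\nolimits_{j} \abs{\UpD_{i,j}} \cdot \abs{{\bf x}_j} = (\abs{\UpD}\abs{{\bf x}})_i \enspace,
\end{align}
so the vector $\abs{\UpD{\bf x}}$ is dominated entrywise by $\abs{\UpD}\abs{{\bf x}}$, and since both are non-negative this gives $\norm{\UpD{\bf x}}{2} \leq \norm{\abs{\UpD}\abs{{\bf x}}}{2}$.

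Taking the supremum over all ${\bf x}$ with $\norm{{\bf x}}{2}=1$ and using that $\abs{{\bf x}}$ also has unit $\ell_2$-norm, I would conclude
\begin{align}\nonumber
\norm{\UpD}{2} = \max_{\norm{{\bf x}}{2}=1} \norm{\UpD{\bf x}}{2} \leq \max_{\norm{{\bf x}}{2}=1} \norm{\abs{\UpD}\abs{{\bf x}}}{2} \leq \max_{\norm{{\bf y}}{2}=1} \norm{\abs{\UpD}{\bf y}}{2} = \norm{\abs{\UpD}}{2} \enspace,
\end{align}
which is exactly the claim. This is a short and entirely routine argument; there is no real obstacle. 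The only point requiring a word of care is the passage from ``${\bf x}$ ranging over unit vectors'' to ``${\bf y}$ ranging over unit vectors'': since $\abs{{\bf x}}$ is a non-negative unit vector whenever ${\bf x}$ is a unit vector, the set $\{\abs{{\bf x}} : \norm{{\bf x}}{2}=1\}$ is contained in $\{{\bf y} : \norm{{\bf y}}{2}=1\}$, so the supremum can only increase, which is all that is needed. Alternatively, one could observe directly that the map ${\bf x} \mapsto \abs{{\bf x}}$ is norm-preserving and note that it suffices to test $\norm{\abs{\UpD}}{2}$ on non-negative vectors anyway (by the same entrywise triangle inequality applied to $\abs{\UpD}$ itself).
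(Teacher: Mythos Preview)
Your proof is correct. The paper takes a different route: it first observes that for symmetric matrices the claim follows from the identity $\spradius(\cdot)=\norm{\cdot}{2}$ together with \Cref{lemma:MonotoneVsSRad}, and then reduces the general case to the symmetric one via
\[
\nnorm{\UpD}{2}{2}=\norm{\UpD\cdot\MTR{\UpD}}{2}\leq\norm{\abs{\UpD\cdot\MTR{\UpD}}}{2}\leq\norm{\abs{\UpD}\cdot\abs{\MTR{\UpD}}}{2}=\nnorm{\abs{\UpD}}{2}{2}\enspace.
\]
Your argument is more self-contained: it works directly from the variational characterisation of $\norm{\cdot}{2}$ and the entrywise triangle inequality, without invoking spectral radii or the auxiliary \Cref{lemma:MonotoneVsSRad}. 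The paper's approach has the minor advantage of reusing an already-stated lemma, but your version is arguably the cleaner standalone proof.
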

\begin{proof}
If $\UpD$ is symmetric, then $\spradius(\UpD)=\norm{ \UpD}{2}$ and $\spradius(\abs{\UpD}) = \norm{ \abs{ \UpD}}{2}$.
Hence, \Cref{lemma:AbsoluteVs2Norm} follows as a corollary from \Cref{lemma:MonotoneVsSRad}. 

Let us consider the general case, i.e., where $\UpD$ is not necessarily symmetric.
Let $\overline{\UpD}$ be the standard matrix transpose of $\UpD$. We have that
\begin{align}
\nnorm{ \UpD}{2}{2} = \norm{ \UpD\cdot \overline{\UpD}}{2} &\leq \norm{ \abs{ \UpD\cdot \overline{\UpD}}}{2} \leq 
\norm{ \abs{\UpD} \cdot \abs{\overline{\UpD}}}{2}=\nnorm{ \abs{\UpD}}{2}{2}\enspace.
\end{align}
The first inequality follows by noting that $\UpD\cdot \overline{\UpD}$ is a symmetric matrix. The second inequality follows by noting that
$\abs{ \UpD\cdot \overline{\UpD}}\leq \abs{ \UpD} \cdot \abs{\overline{\UpD}}$. 

\Cref{lemma:AbsoluteVs2Norm} follows. 
\end{proof}

\begin{lemma}\label{lemma:MonotonicityVs2Norm}
For integer $N>0$, let $\UpD, \UpB\in \mathbb{R}^{N\times N}_{\geq 0}$. If $\UpD\leq \UpB$, then $\norm{ \UpD}{2}\leq \norm{ \UpB}{2}$.
\end{lemma}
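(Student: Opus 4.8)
\textbf{Proof proposal for \Cref{lemma:MonotonicityVs2Norm}.}

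The plan is to exploit the fact that both $\UpD$ and $\UpB$ are non-negative matrices satisfying $\UpD\leq \UpB$ entrywise, and to reduce the claim about the operator $2$-norm to the (already available) monotonicity of the spectral radius under entrywise domination, namely \Cref{lemma:MonotoneVsSRad}. The obstacle is that $\UpD$ and $\UpB$ need not be symmetric, so one cannot directly identify $\norm{\cdot}{2}$ with the spectral radius. The standard device is to pass to the Gram matrices $\UpD\cdot\MTR{\UpD}$ and $\UpB\cdot\MTR{\UpB}$, which are symmetric, and for which $\norm{\UpD}{2}^2=\spradius(\UpD\cdot\MTR{\UpD})$ and likewise for $\UpB$.

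First I would record that, since $\UpD$ and $\UpB$ are non-negative and $\UpD\leq \UpB$ entrywise, the transposes satisfy $\MTR{\UpD}\leq \MTR{\UpB}$ entrywise, and both are non-negative. Then the products satisfy $0\leq \UpD\cdot\MTR{\UpD}\leq \UpB\cdot\MTR{\UpB}$ entrywise; this is the key step, and it uses crucially that all four matrices involved have non-negative entries, so that multiplying the inequalities $\UpD_{i,k}\leq\UpB_{i,k}$ and $\MTR{\UpD}_{k,j}\leq\MTR{\UpB}_{k,j}$ and summing over $k$ preserves the inequality. Next, $\UpD\cdot\MTR{\UpD}$ is symmetric, hence $\abs{\UpD\cdot\MTR{\UpD}}=\UpD\cdot\MTR{\UpD}$, and similarly for $\UpB$, so the hypothesis of \Cref{lemma:MonotoneVsSRad} is met with the roles $\abs{\UpD\cdot\MTR{\UpD}}\leq \UpB\cdot\MTR{\UpB}$.

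Applying \Cref{lemma:MonotoneVsSRad} then gives
\begin{align}\nonumber
\spradius\left(\UpD\cdot\MTR{\UpD}\right)\leq \spradius\left(\UpB\cdot\MTR{\UpB}\right)\enspace.
\end{align}
Finally, since $\UpD\cdot\MTR{\UpD}$ and $\UpB\cdot\MTR{\UpB}$ are symmetric, their spectral radii coincide with their $2$-norms, so
\begin{align}\nonumber
\nnorm{\UpD}{2}{2}=\norm{\UpD\cdot\MTR{\UpD}}{2}=\spradius\left(\UpD\cdot\MTR{\UpD}\right)\leq \spradius\left(\UpB\cdot\MTR{\UpB}\right)=\norm{\UpB\cdot\MTR{\UpB}}{2}=\nnorm{\UpB}{2}{2}\enspace,
\end{align}
and taking square roots yields $\norm{\UpD}{2}\leq\norm{\UpB}{2}$. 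I expect no real difficulty here beyond making sure the non-negativity hypotheses are invoked at exactly the point where the entrywise inequality is multiplied through; the argument is essentially the same as the one already used in the proof of \Cref{lemma:AbsoluteVs2Norm} in the excerpt, with $\abs{\UpD}$ there playing the role of $\UpB$ here.
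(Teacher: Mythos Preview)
Your proof is correct, but it takes a different route from the paper's own argument. The paper argues directly from the variational characterization $\norm{\UpD}{2}=\sup_{\norm{{\bf z}}{2}=1}\norm{\UpD{\bf z}}{2}$: since $\UpD$ has non-negative entries, the supremum is attained at some non-negative unit vector ${\bf x}$, and then the entrywise inequalities $0\leq \UpD\leq \UpB$ give $0\leq \UpD{\bf x}\leq \UpB{\bf x}$ componentwise, hence $\norm{\UpD}{2}=\norm{\UpD{\bf x}}{2}\leq\norm{\UpB{\bf x}}{2}\leq\norm{\UpB}{2}$. Your approach instead passes to the Gram matrices $\UpD\cdot\MTR{\UpD}\leq \UpB\cdot\MTR{\UpB}$ and invokes \Cref{lemma:MonotoneVsSRad}; this is exactly the device used in the proof of \Cref{lemma:AbsoluteVs2Norm}, as you observe. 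Both arguments are short and clean; the paper's is slightly more self-contained (it does not appeal to \Cref{lemma:MonotoneVsSRad}), while yours has the virtue of reusing machinery already in place and making the parallel with \Cref{lemma:AbsoluteVs2Norm} explicit.
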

\begin{proof}
We have that $\norm{ \UpD}2=\sup_{{\bf z}\in \mathbb{R}^N: \norm{ {\bf z}}{2}=1}{\norm{ \UpD {\bf z}}{2}}$.
Hence, we have that
\begin{align}
\norm{ \UpD}{2}=\sup\nolimits_{{\bf z}\in \mathbb{R}^N: \norm{ {\bf z}}{2}=1}\sqrt{\sum\nolimits_{\ki=1,\ldots,N} 
\sum\nolimits_{j=1,\ldots, N} \UpD(\ki,j)\cdot{\bf z}(j)} \enspace.
\end{align}
Since the entries of $\UpD$ are assumed to be non-negative, it is elementary to verify that supremum, above, is achieved for 
 ${\bf z}$ whose entries are non-negative. 

Let ${\bf x}\in \mathbb{R}^{N}$ be such that $\norm{ {\bf x}}{2}=1$ and $\norm{ \UpD \cdot {\bf x}}{2}=\norm{ \UpD}{2} $, while
let ${\bf v}=\UpD{\bf x}$ and ${\bf w}=\UpB{\bf x}$. Since the entries of $\UpB$ are assumed to be non-negative, while $\UpD\leq \UpB$, 
for any $i\in [N]$ we have that ${\bf w}(i)\geq {\bf v}(i)$. Clearly, this implies that
\begin{align}
\norm{ \UpD}{2}=\norm{ {\bf v}}{2} \leq \norm{ {\bf w}}{2} \leq \norm{ \UpB}{2}\enspace. 
\end{align}
The above concludes the proof of \Cref{lemma:MonotonicityVs2Norm}.
\end{proof}

\end{document}